\documentclass{lmcs}
\pdfoutput=1

\usepackage{lastpage}
\lmcsdoi{21}{1}{1}
\lmcsheading{}{\pageref{LastPage}}{}{}%
{Jan.~16,~2023}{Jan.~03,~2025}{}

\usepackage[utf8]{inputenc}
\usepackage{xr-hyper}

\usepackage{hyperref}

\usepackage{booktabs}   
\usepackage{subcaption} 

\usepackage{amsmath}
\usepackage{amsthm}
\usepackage{amsfonts}
\usepackage{amssymb}
\usepackage{cmll}

\usepackage{tikz}
\usetikzlibrary{matrix}
\usetikzlibrary{cd}
\usepackage{pgfplots}
\usepackage{stmaryrd}
\usepackage{ebproof}
\usepackage{url}

\newcommand\CMLLPAR{
\usepackage{cmll}
\newcommand\IPar{\mathord{\parr}}
}

%


\CMLLPAR
\newcommand\Etc{\textit{etc.}}
\newcommand\Ie{\textit{i.e.}}
\newcommand\Eg{\textit{e.g.}}

\makeatletter
\newcommand*{\inlineeq}[2][]{%
  \begingroup
    \refstepcounter{equation}%
    \ifx\\#1\\%
    \else
      \label{#1}%
    \fi
    \relpenalty=10000 %
    \binoppenalty=10000 %
    \ensuremath{%
      #2%
    }%
    ~\@eqnnum
  \endgroup
}
\makeatother



\newenvironment{Axicond}[1]
{\smallbreak\noindent{#1}\,}
{\smallbreak}






\newcommand\Proofcase{\smallbreak\noindent$\blacktriangleright$\ }

\newcommand{\Endproof}{
  \ifmmode 
  \else \leavevmode\unskip\penalty9999 \hbox{}\nobreak\hfill
  \fi
  \quad\hbox{$\Box$}
  \par\medskip}

\newcommand\Eqref[1]{(\ref{#1})}


\renewcommand{\phi}{\varphi}
\renewcommand\epsilon{\varepsilon}

\newcommand{\Implies}{\Rightarrow}

\newcommand\Equiv{\Leftrightarrow}
\newcommand{\St}{\mid}

\newcommand{\Sbot}{{\mathord{\perp}}}
\newcommand{\Top}{\top}

\newcommand\Seqempty{\Tuple{}}

\newcommand\cC{\mathcal{C}}

\newcommand\cF{\mathcal{F}}

\newcommand\cI{\mathcal{I}}

\newcommand\cK{\mathcal{K}}

\newcommand\cM{\mathcal{M}}
\newcommand\cN{\mathcal{N}}
\newcommand\cO{\mathcal{O}}
\newcommand\cP{\mathcal{P}}
\newcommand\cQ{\mathcal{Q}}
\newcommand\cR{\mathcal{R}}
\newcommand\cS{\mathcal{S}}
\newcommand\cT{\mathcal{T}}

\newcommand\cX{\mathcal{X}}
\newcommand\cY{\mathcal{Y}}
\newcommand\cZ{\mathcal{Z}}

\newcommand\Fini{{\mathrm{fin}}}

\newcommand\Part[1]{{\mathcal P}\left({#1}\right)}

\newcommand\Union{\bigcup}

\newcommand{\Linarrow}{\multimap}

\newcommand\Myleft{}
\newcommand\Myright{}

\newcommand\Web[1]{\Myleft|{#1}\Myright|}

\newcommand\Supp[1]{\operatorname{\mathsf{supp}}({#1})}

\newcommand\Emptytuple{(\,)}

\newcommand\ITens{\mathop\otimes}
\newcommand\Tens[2]{{#1}\ITens{#2}}
\newcommand\Tensp[2]{({#1}\ITens{#2})}

\newcommand\IWith{\mathrel{\&}}
\newcommand\With[2]{{#1}\IWith{#2}}

\newcommand\Withp[2]{\left({#1}\IWith{#2}\right)}
\newcommand\IPlus{\oplus}
\newcommand\Plus[2]{{#1}\IPlus{#2}}
\newcommand\Orth[2][]{#2^{\mathord\perp_{#1}}}

\newcommand\Bwith{\mathop{\&}}

\newcommand\Bplus{\mathop\oplus}

\newcommand\Mtinj[2]{\mathsf{in}_{#1}(#2)}

\newcommand\Biorth[1]{#1^{\bot\bot}}

\newcommand\Triorth[1]{{#1}^{\bot\bot\bot}}

\newcommand\One{1}








\newcommand\LL{\hbox{\textsf{LL}}}

\newcommand\Card[1]{\#{#1}}

\newcommand\Locun[1]{1^J}

\newcommand\Isom\simeq

\newcommand\Comp{\mathrel\circ}

\newcommand\Funinv[1]{{#1}^{-1}}

\newcommand\SET{\mathbf{Set}}

\newcommand\Limpl[2]{{#1}\Linarrow{#2}}
\newcommand\Limplp[2]{\left({#1}\Linarrow{#2}\right)}

\newcommand\Nat{{\mathbb{N}}}

\newcommand\Natnz{{\Nat^+}}

\newcommand\Biind[2]{\genfrac{}{}{0pt}{1}{#1}{#2}}

\newcommand\Snat{\mathsf N}

\newcommand\Zero{0}

\newcommand\List[3]{#1_{#2},\dots,#1_{#3}}

\newcommand\Kronecker[2]{\delta_{{#1},{#2}}}

\newcommand\Subst[3]{{#1}\left[{#2}/{#3}\right]}

\newcommand\Substbis[2]{{#1}[{#2}]}

\newcommand\Factor[1]{{#1}!}
\newcommand\Binom[2]{\genfrac{(}{)}{0pt}{}{#1}{#2}}

\newcommand\Real{\mathbb{R}}
\newcommand\Realp{\mathbb{R}_{\geq 0}}
\newcommand\Realpto[1]{(\Realp)^{#1}}

\newcommand\Realpc{\overline{\Realp}}
\newcommand\Realpcto[1]{\Realpc^{#1}}

\newcommand\Intercc[2]{[#1,#2]}

\newcommand\Interco[2]{[#1,#2)}

\newcommand\Mfin[1]{\mathcal M_\Fini({#1})}

\newcommand\Ev{\operatorname{\mathsf{Ev}}}

\newcommand\Norm[1]{\left\|{#1}\right\|}

\newcommand\Redst[1]{\mathop{\mathsf{Red}}}

\newcommand\Symgrp[1]{\mathfrak S_{#1}}

\newcommand\Tuple[1]{\langle{#1}\rangle}
\newcommand\Cotuple[1]{\left[{#1}\right]}

\newcommand\Msetofsubst[1]{\bar F}

\newcommand\Inv[1]{#1^{-1}}
\newcommand\Invp[1]{({#1})^{-1}}

\newcommand\Pcoh[1]{\mathsf P{#1}}
\newcommand\Pcohp[1]{\Pcoh{(#1)}}
\newcommand\Pcohcn{\mathsf{ic}}
\newcommand\Pcohc[1]{\Pcohcn(#1)}

\newcommand\Base[1]{\mathsf e(#1)}

\newcommand\PCOH{\mathbf{Pcoh}}

\newcommand\Leftu{\lambda}
\newcommand\Rightu{\rho}
\newcommand\Assoc{\alpha}
\newcommand\Sym{\gamma}

\newcommand\Retri\zeta
\newcommand\Retrp\rho

\newcommand\Impl[2]{{#1}\Rightarrow{#2}}
\newcommand\Implp[2]{({#1}\Rightarrow{#2})}

\newcommand\Tsem[1]{\llbracket{#1}\rrbracket}

\newcommand\Psem[2]{\llbracket{#1}\rrbracket_{#2}}

\newcommand\Tnat\iota

\newcommand\Num[1]{\underline{#1}}
\newcommand\Loop\Omega

\newcommand\Tseq[3]{{#1}\vdash{#2}:{#3}}

\newcommand\Timpl\Impl
\newcommand\Timplp\Implp

\newcommand\Simpls[2]{#1\Rightarrow_{\mathsf s}#2}
\newcommand\Simpla[2]{#1\Rightarrow_{\mathsf a}#2}

\newcommand\Der[1]{\operatorname{\mathsf{der}}_{#1}}

\newcommand\Digg[1]{\operatorname{\mathsf{dig}}_{#1}}

\newcommand\Lfunn{\operatorname{\mathsf{fun}}}
\newcommand\Lfun[1]{\operatorname{\Lfunn}(#1)}
\newcommand\Fun[1]{\widehat{#1}}

\newcommand\Id{\operatorname{\mathsf{Id}}}

\newcommand\Proj[1]{\mathsf{pr}_{#1}}
\newcommand\Inj[1]{\mathsf{in}_{#1}}

\newcommand\Excl[1]{\oc{#1}}

\newcommand\Exclp[1]{\oc({#1})}

\newcommand\Prom[1]{#1^!}
\newcommand\Proms[1]{#1^{\mathord\oc_{\mathsf s}}}
\newcommand\Proma[1]{#1^{\mathord\oc_{\mathsf a}}}
\newcommand\Promana[1]{#1^{\mathord\oc_{\mathsf a}}}
\newcommand\Promms[1]{#1^{\mathord\oc_{\mathsf s}\mathord\oc_{\mathsf s}}}
\newcommand\Promma[1]{#1^{\mathord\oc_{\mathsf a}\mathord\oc_{\mathsf a}}}

\newcommand\Kleisli[2]{{#1}_{#2}}

\newcommand\Relincl\eta
\newcommand\Relrestr\rho

\newcommand\Seely{\mathsf m}
\newcommand\Seelyz{\Seely^0}
\newcommand\Seelyt{\Seely^2}

\newcommand\Compl{\,}
\newcommand\Curlin{\operatorname{\mathsf{cur}}}

\newcommand\Op[1]{{#1}^{\mathsf{op}}}
\newcommand\Kl[1]{{#1}_\oc}
\newcommand\Em[1]{{#1}^\oc}

\newcommand\Eval[2]{\langle#1,#2\rangle}

\newcommand\Vect[1]{\overrightarrow{#1}}

\newcommand\Bnfeq{\mathrel{\mathord:\mathord=}}
\newcommand\Bnfor{\,\,\mathord|\,\,}






\newcommand\Cuball[1]{\mathcal B#1}
\newcommand\Cuballp[1]{\mathcal B(#1)}





\newcommand\Cantor{\cC}

\newcommand\Eset[1]{\{#1\}}




\newcommand\Matappa[2]{{#1}\cdot{#2}}

\newcommand\Sone{\One}
\newcommand\Sonelem{\ast}

\newcommand\Intcc[2]{[#1,#2]}

\newcommand\Adj{\mathrel{\dashv}}

\newcommand\Textsep{\hspace{4em}}




\newcommand\Stop{\Top}






\newcommand{\xref}{\nameref}
\makeatletter
\newcommand\xlabel[2][]{\phantomsection\def\@currentlabelname{#1}\label{#2}}
\makeatother

\newcommand\Pcsimpll{(\textbf{Cancel})\xlabel[(\textbf{Cancel})]{ax:c1}}
\newcommand\Pcsimplr{\xref{ax:c1}}

\newcommand\Pcposl{(\textbf{Pos})\xlabel[(\textbf{Pos})]{ax:c2}}
\newcommand\Pcposr{\xref{ax:c2}}

\newcommand\Cnormzl{(\textbf{Normz})\xlabel[(\textbf{Normz})]{ax:c3}}
\newcommand\Cnormzr{\xref{ax:c3}}

\newcommand\Cnormhl{(\textbf{Normh})\xlabel[(\textbf{Normh})]{ax:c4}}
\newcommand\Cnormhr{\xref{ax:c4}}

\newcommand\Cnormtl{(\textbf{Normt})\xlabel[(\textbf{Normt})]{ax:c5}}
\newcommand\Cnormtr{\xref{ax:c5}}

\newcommand\Cnormpl{(\textbf{Normp})\xlabel[(\textbf{Normp})]{ax:c6}}
\newcommand\Cnormpr{\xref{ax:c6}}

\newcommand\Cnormcl{(\textbf{Normc})\xlabel[(\textbf{Normc})]{ax:c7}}
\newcommand\Cnormcr{\xref{ax:c7}}

\newcommand\Msmesl{(\textbf{Msmeas})\xlabel[(\textbf{Msmeas})]{ax:c8}}
\newcommand\Msmesr{\xref{ax:c8}}

\newcommand\Mscompl{(\textbf{Mscomp})\xlabel[(\textbf{Mscomp})]{ax:c9}}
\newcommand\Mscompr{\xref{ax:c9}}

\newcommand\Mssepl{(\textbf{Mssep})\xlabel[(\textbf{Mssep})]{ax:c10}}
\newcommand\Mssepr{\xref{ax:c10}}

\newcommand\Msnorml{(\textbf{Msnorm})\xlabel[(\textbf{Msnorm})]{ax:c11}}
\newcommand\Msnormr{\xref{ax:c11}}

\newcommand\ARCAT{\mathbf{Ar}}

\newcommand\MEAS{\mathbf{Meas}}

\newcommand\Measterm{0}

\newcommand\Cdual[1]{{#1}'}

\newcommand\Absm[2]{\boldsymbol\lambda #1\cdot#2}

\newcommand\Mcca[1]{\underline{#1}}
\newcommand\Mcms[1]{\cM^{#1}}

\newcommand\CONES{\mathbf{Cones}}
\newcommand\MCONES{\mathbf{MCones}}
\newcommand\ICONES{\mathbf{ICones}}

\newcommand\Mtpath[2]{#1\mathrel\triangleright #2}
\newcommand\Mtlfun[2]{#1\mathrel\triangleright #2}
\newcommand\Mtfun[2]{#1\mathrel\triangleright #2}
\newcommand\Mtlfunm[2]{#1\mathrel\triangleright #2}

\newcommand\Cpath[2]{\mathsf{Path}(#1,#2)}
\newcommand\Cpathm[2]{\mathsf{Path}(#1,#2)}
\newcommand\Cpathf{\mathsf{Path}}

\newcommand\Pt[1]{\widehat{#1}}

\newcommand\Limplm{\Limpl}

\newcommand\Tensor{\tau}

\newcommand\Dirac[1]{\boldsymbol\delta^{#1}}

\newcommand\Charfun[1]{\chi_{#1}}

\newcommand\Cmeas{\mathsf{FMeas}}
\newcommand\Cmeast{\Cmeas^{\mathord\otimes}}
\newcommand\Sigalg[1]{\sigma_{#1}}
\newcommand\Emeas[1]{\widetilde{#1}}

\newcommand\Mcint[1]{\cI^{#1}}
\newcommand\Mcinti[1]{\cK^{#1}}
\newcommand\Pushf[1]{{#1}_\ast}

\newcommand\Measprod[2]{{#1}\times{#2}}
\newcommand\Flpath{\mathsf{fl}}

\newcommand\Mcofic[1]{{#1}}
\newcommand\Kernto{\leadsto}

\newcommand\Limpli[2]{#1\Linarrow#2}
\newcommand\Swlinpath{\mathsf{sw}}
\newcommand\Swlinlin{\mathsf{sw}}

\newcommand\Cloc[2]{{#1}_{#2}}

\newcommand\Cat[1]{\mathbf{#1}}

\newcommand\Npset[1]{\cP^-(#1)}
\newcommand\Ppset[1]{\cP^+(#1)}
\newcommand\Epset[2]{\cP^{#1}(#2)}

\newcommand\STAB{\mathbf{SCones}}
\newcommand\STA{\mathbf{SCones}}
\newcommand\SCONES{\STAB}
\newcommand\ANA{\mathbf{ACones}}
\newcommand\ACONES{\ANA}

\newcommand\Mlsym[3]{\mathbf{Sym}_{#1}(#2,#3)}
\newcommand\Monom[3]{\underline{\mathbf{Hpol}}_{#1}(#2,#3)}
\newcommand\Monomic[3]{\mathbf{Hpol}_{#1}(#2,#3)}

\newcommand\Linan[1]{\mathsf D^{(#1)}_0}
\newcommand\Linhp[1]{\mathsf L_{#1}}
\newcommand\Hpan[1]{\mathsf P_{#1}}
\newcommand\Mlmon[1]{\mathsf M_{#1}}
\newcommand\Intset[1]{[#1]}
\newcommand\Intfun[2]{\mathsf L(#1,#2)}

\newcommand\Derfun{\operatorname{\mathsf{Der}}}
\newcommand\Derfuns{\operatorname{\mathsf{Der}^{\mathsf s}}}
\newcommand\Derfuna{\operatorname{\mathsf{Der}^{\mathsf a}}}
\newcommand\Estab{\operatorname{\mathsf E}^{\mathsf s}}
\newcommand\Eana{\operatorname{\mathsf E}^{\mathsf a}}

\newcommand\Excls{\oc^{\mathsf s}}
\newcommand\Exclls{\oc^{\mathsf s}\oc^{\mathsf s}}
\newcommand\Excla{\oc^{\mathsf a}}
\newcommand\Exclana{\oc^{\mathsf a}}
\newcommand\Ders{\mathsf{der}^{\mathsf s}}
\newcommand\Diggs{\mathsf{dig}^{\mathsf s}}

\newcommand\Expadjs{\Theta^{\mathsf s}}
\newcommand\Expadja{\Theta^{\mathsf a}}
\newcommand\Unistab{\mathsf{st}}

\newcommand\Coalgs[1]{\mathsf h_{#1}}
\newcommand\Coalga[1]{\mathsf h^{\mathsf a}_{#1}}

\newcommand\ContinuousPart{\mathsf{cont}}

\newcommand\Rep[2]{\overline{#1}^{{#2}}}

\newcommand\Evreal[2]{{#1}\mid_{#2}}

\newcommand\Fdiffp[2]{\Delta^{+}#1(#2)}
\newcommand\Fdiffn[2]{\Delta^{-}#1(#2)}
\newcommand\Fdiffs[3]{\Delta^{#1}#2(#3)}
\newcommand\Fdiff[2]{\Delta#1(#2)}
\newcommand\Fdiffvar[3]{\Fdiff{#1}{#3}(#2)}

\newcommand\Csum[2]{\mathsf S^{#1}#2}

\newcommand\Inset[1]{\mathsf{inj}_{#1}}

\newcommand\SKERN{\mathbf{Skern}}

\newcommand\Funpushf{\Cmeas}

\newcommand\Tenspcs[2]{{#1}\mathrel{\overline{\mathord\otimes}}{#2}}

\newcommand\Cantbase[1]{\mathord\uparrow#1}
\newcommand\Cantopenbase[1]{\mathord\downarrow#1}

\newcommand\Opens[1]{\cO(#1)}
\newcommand\Cantormeas{\mathsf{meas}}
\newcommand\Cantorvect{\mathsf{rep}}

\newcommand\Sklin{\mathsf{Klin}}

\newcommand\Transfwd[1]{{#1}_\ast}

\newcommand\Pl{\mathord+}
\newcommand\Mn{\mathord-}

\newcommand\Ttreeo{\ast}
\newcommand\Ttreeb[2]{\langle#1,#2\rangle}
\newcommand\Ttrees[1]{\cT_{#1}}
\newcommand\Ttreet[1]{#1^{\otimes}}
\newenvironment{theorem}{\begin{thm}}{\end{thm}}
\newenvironment{lemma}{\begin{lem}}{\end{lem}}

\newenvironment{proposition}{\begin{prop}}{\end{prop}}

\newenvironment{definition}{\begin{defi}}{\end{defi}}

\newenvironment{remark}{\begin{rem}}{\end{rem}}
\newenvironment{example}{\begin{exa}}{\end{exa}}

\newcounter{examplectr}

\numberwithin{examplectr}{section}

\setcounter{tocdepth}{2}

\pgfplotsset{compat=1.18}

\begin{document}

\title{Integration in cones}
\author[T.~Ehrhard]{Thomas Ehrhard\lmcsorcid{0000-0001-5231-5504}}[a]
\author[G.~Geoffroy]{Guillaume Geoffroy\lmcsorcid{0009-0005-7102-3378}}[b]
\address{Université Paris Cité, CNRS, Inria, IRIF, F-75013, Paris, France}
\email{ehrhard@irif.fr}
\address{Université Paris Cité, CNRS, IRIF, F-75013, Paris, France}
\email{guillaume.geoffroy@irif.fr}

\begin{abstract}
 Measurable cones, with linear and measurable functions as morphisms,
are a model of intuitionistic linear logic and of call-by-name
probabilistic PCF which accommodates ``continuous data types'' such as
the real line. So far however, they lacked a major feature to make
them a model of more general probabilistic programming languages
(notably call-by-value and call-by-push-value languages): a theory of
integration for functions whose codomain is a cone, which is the key
ingredient for interpreting the sampling programming primitives. The
goal of this paper is to develop such a theory: our definition of
integrals is an adaptation to cones of Pettis integrals in topological
vector spaces. We prove that such integrable cones, with
integral-preserving linear maps as morphisms, form a model of Linear
Logic for which we develop two exponential comonads: the first based
on a notion of stable and measurable functions introduced in earlier
work and the second based on a new notion of integrable analytic
function on cones.


\end{abstract}

\maketitle

\section*{Introduction}

There are several approaches in the denotational semantics of
functional probabilistic programming languages that we can summarize
as follows:
\begin{itemize}
\item quasi-Borel spaces (QBSs)~\cite{KammarStatonVakar19} which are,
  roughly speaking, separated presheaves on the cartesian category of
  measurable spaces and measurable functions (or on a full cartesian
  sub-category thereof),
  and the considered category of QBSs must be given together with a
  well behaved probability monad (\emph{à la} Giry);
\item probabilistic games~\cite{DanosHarmer00} which are similar to
  deterministic games apart that now strategies are probability
  distributions on plays;
\item models based on categories of domains, possibly equipped with a
  probabilistic monad, and where morphisms are Scott continuous functions;
\item probabilistic coherence spaces~\cite{DanosEhrhard08} (PCSs) which
  are a refinement of the relational model of Linear Logic (\(\LL\)).
  In the PCS model, an object is a set equipped with a collection of
  ``valuations'', which are functions%
  \footnote{For objects corresponding to ground types, these
    valuations are the subprobability distributions.} %
  from this set to \(\Realp\), and a morphism is a linear functions on
  these valuations, or analytic functions in the CCC used for
  interpreting the programming languages.
  This approach can be understood as extending to higher types the
  basic idea of~\cite{Kozen81} which is to interpret programs as
  probability distribution transformers.
\end{itemize}

\paragraph{Main motivation.}
Modern probabilistic programming languages deal with probability
distributions on continuous data-types such as the real line, and PCSs
are not able to represent such types: PCSs are fundamentally of a
discrete nature.
On the other hand, QBS-based models accept continuous data-types by
construction, and give rise to cartesian closed categories for a very
general reason ---~they are essentially categories of
presheaves~---.
This also means that these models are not very informative about
morphisms: they are essentially only required to satisfy a hereditary
measurability condition and, accordingly, they have in general no
clear underlying linear structure (in the sense of the categorical
semantics of Linear Logic).
The benefit of such a linear structure is that it allows to take into
account in a modular way the various options in the design of a
programming language, and in particular the choice of operational
semantics (call-by-name or call-by-value, typically).
Also the linear structure provides tools ---~versions of the Taylor
expansion of analytic functions~--- allowing to analyze the resource
usage of programs.
In contrast to QBSs, PCSs are natively a model of \(\LL\) whose associated
cartesian closed category can be used as a model of probabilistic
functional languages.
In this CCC the morphisms are quite regular: they are analytic
functions described by generalized power series with nonnegative
coefficients.
This feature allowed the first author to prove, for instance, two full
abstraction results~\cite{EhrhardPaganiTasson18b,EhrhardTasson19}
wrt.~the PCS semantics.

The main purpose of the model presented in this paper is to extend to
the continuous probability setting these two main features of PCSs: the
model has a linear underlying structure and the programs are
interpreted as functions which are analytic in some generalized sense.
One essential feature of our semantics is that a functional program
\(M\) of type \(\rho\) (the type of real numbers) with only one variable
\(x\) of type \(\rho\) will be interpreted as a function \(f\) from
the set \(\cR\) of subprobability measures on \(\Real\) to \(\cR\).
With this intuition in mind, it is easier to understand what linearity
can mean for such a function (very roughly: commutation with existing
linear combinations of measures), and also what analyticity can mean:
the function \(g\) which maps a subprobability distribution \(\mu\) on
\(\Real\) to \(\mu\ast\mu\) (convolution product of measures) is
clearly not linear, but it is polynomial of degree \(2\).
More precisely, the addition program on real numbers will typically be
represented as a function \(a:\cR\times\cR\to\cR\) which will be
bilinear: it maps a pair \((\mu,\nu)\in\cR^2\) of subprobability
measures to \(\Pushf\alpha(\mu\times\nu)\) where
\(\alpha:\Real\times\Real\to\Real\) is the addition function,
\(\mu\times\nu\) is the usual product of \(\mu\) and \(\nu\), which is
a subprobability measure on \(\Real\times\Real\), and \(\Pushf\alpha\)
is the push-forward operation on measures associated with the
measurable function \(\alpha\).
The function \(g\) is polynomial of degree \(2\) because
\(g(\mu)=a(\mu,\mu)\).

\paragraph{Types as cones.}
In recent works~\cite{EhrhardPaganiTasson18,Ehrhard20} we have
developed such a continuous extension of the PCS semantics, using
quite a suitable notion of \emph{positive cone} introduced by Selinger
in~\cite{Selinger04} (we will often drop the adjective ``positive'').
Cones are similar to real Banach spaces, with the difference that, in
a cone, ``everything is positive''; for instance the coefficients are
taken in \(\Realp\) and not in \(\Real\) and \(x+y=0\) is possible
only if \(x=y=0\).
For that reason cones are naturally ordered and are required to
satisfy a completeness property expressed \emph{à la} Scott, in terms
of the norm and of this order relation.
This notion of completeness is very different from the standard
Cauchy-completeness of ordinary Banach spaces.
It has the benefit of making the interpretation of recursive programs
quite straightforward (no need for contractivity assumptions).

In this setting, the ground type \(\rho\) of real numbers of our
programming language is interpreted as the set \(\cR\) of finite
nonnegative measures on the real line equipped with its Borel
\(\sigma\)-algebra, this set \(\cR\) has indeed an obvious structure
of cone.
Cones are naturally equipped with a notion of linear morphisms, which
are also assumed to be Scott continuous, and with a notion of
non-linear morphism introduced in~\cite{EhrhardPaganiTasson18}, called
stable functions and characterized by a \emph{total monotonicity}
condition (plus Scott continuity) which allow to define a cartesian
closed category where fixpoint operators are available at all
types. With these morphisms, cones are a conservative extension of the
category of PCSs and analytic functions as shown in~\cite{Crubille18}.

\paragraph{Integration and sampling.}
The most essential feature of a probabilistic programming language
is the possibility of \emph{sampling} a value according to a given
probability distribution.
In our semantical setting and in the presence of continuous data-types
this requires some form of integration and therefore the morphisms
(here, the linear or the stable functions between cones) must satisfy
a suitable measurability condition.
Consider indeed a functional program \(M\) such that %
\(\Tseq{x:\rho}{M}{\sigma}\) for some type \(\sigma\), where we recall
that \(\rho\) is the type of real numbers, and a program %
\(N\) such that \(\Tseq{}N{\rho}\).
Then \(N\) will be interpreted as an element \(\mu\) of \(\cR\) (a
subprobability measure on \(\Real\) actually) and \(M\) as an analytic
function %
\(g:\cR\to P\) where \(P\) is the cone interpreting the type
\(\sigma\).
Then we typically would like to write a program
\(R=\mathtt{sample}(x,N,M)\) which should satisfy \(\Tseq{}R\sigma\).
The semantics of \(R\) should then be
\begin{align*}
  \int g(\Dirac{}(r))\mu(dr)
\end{align*}
because the Dirac probability measure at \(r\in\Real\),
\(\Dirac{}(r)\in\cR\), is the representation in our semantics of the
real number \(r\).
For instance if \(M=x+x\) (so that \(A=\rho\)) then the
semantics of \(\Subst NMx=M+M\) is \(g(\mu)=\mu\ast\mu\) and the
semantics \(\nu\in\cR\) of \(R=\mathtt{sample}(x,N,x+x)\) is
\begin{align*}
  \int\Dirac{}(2r)\mu(dr)=\Pushf\beta(\mu)
\end{align*}
where \(\beta:\Real\to\Real\) is defined by \(\beta(r)=2r\).
In Section~\ref{sec:stable-exp-meas-coalg}, we will understand that
this sampling operation is simply a \(\texttt{let}\) construct,
exactly as in the discrete PCS setting of~\cite{EhrhardTasson19}.
See Example~\ref{ex:sampling-programming} for a more developed
explanation.

In~\cite{EhrhardPaganiTasson18,Ehrhard20} the cones were accordingly
equipped with a \emph{measurability structure} defined in reference to
a collection of basic measurable spaces (such a collection can be
simply \(\Eset\Real\), what we assume in this introduction for
simplicity).
Given a cone \(P\) equipped with such a measurability structure
\(\cM\) it is then possible to define a class of bounded%
\footnote{With respect to the norm of \(P\).} %
functions \(\Real\to P\) that we call the \emph{measurable paths} of
\(P\).
And then a (linear or stable) function \(P\to Q\) is \emph{measurable}
from \((P,\cM)\) to \((Q,\cN)\) if its pre-composition with each
\(\cM\)-measurable path of \(P\) gives a \(\cN\)-measurable path of
\(Q\).
Equipped with their measurability paths, these measurable cones (more
precisely, their unit balls) can be considered as QBSs, and the
condition above of measurable path preservation is exactly the same as
the definition of a morphism of QBSs (however notions such as
linearity, stability or analyticity, which are crucial for us, do not
arise naturally in the framework of QBSs).

These measurable cones were sufficient in~\cite{EhrhardPaganiTasson18}
to allow sampling over the type \(\rho\) in a probabilistic extension
of PCF because all types in such a language can be written %
\(\Timpl{\sigma_1}{\Timpl\cdots{\Timpl{\sigma_n}{\rho}}}\) and hence
integrability for paths valued in such a type boils down to the
integrability of \(\cR\)-valued paths (with additional parameters in
\(\List \sigma 1n\)) which is possible by our measurability
assumptions.
But if we want to interpret a call-by-value (or even
call-by-push-value) language then we face the problem of integrating
functions valued in more general cones such as for instance
\(\Excl\cR\) (in the sense of \(\LL\), \(\cR\) being the cone of
finite measures on \(\Real\)).
So we must deal with cones where measurable paths can be
integrated. Fortunately it turns out that, thanks to the properties of
the measurability structure \(\cM\) of a cone \(P\), it is easy to
define the integral of a \(P\)-valued path \(\gamma:\Real\to P\)
wrt.~a finite measure \(\mu\) on \(\Real\): it is an \(x\in P\) such
that, for each measurability test \(m\) on \(P\), the real number
\(m(x)\) is equal to the standard Lebesgue integral
\(\int m(\gamma(r))\mu(dr)\) which is well defined and belongs to
\(\Realp\) since \(m\Comp\gamma\) is measurable and bounded, and
\(\mu\) is finite.
And when such an \(x\) exists it is unique by our assumptions that the
measurability tests associated with a cone separate it.
So we can define a cone to be integrable if such integrals always
exist, whatever be the choices of \(\gamma\) and \(\mu\).

In that way we are able to define a category of \emph{integrable
  cones} and \emph{linear and integrable maps}, that is, linear and
measurable maps of cones which moreover commute with all integrals, a
property which can be understood as a strong form of linearity.
Such linear maps will sometimes be called integrable.
It is rather easy to prove that this locally small category is
complete, has a cogenerator and is well-powered so that we know by the
special adjoint functor theorem that each continuous functor from this
category to any other locally small category has a left adjoint.
This allows first to equip our category with a tensor product: given
two integrable cones \(B,C\) (we keep the measurability structures
implicit), we can form the integrable cone \(\Limpl BC\) whose
elements are the linear integrable maps from \(B\) to \(C\), addition
is defined pointwise and the norm is defined by
\(\Norm f=\sup_{\Norm x\leq 1}\Norm{f(x)}\).
Then the functor \(\Limpl B\_\) is easily seen to preserve all limits
and hence has a left adjoint \(\Tens\_B\).
And we can prove that one defines in that way a tensor product
\(\Tens\_\_\) which makes our category symmetric monoidal closed%
\footnote{In~\cite{Ehrhard20} we used the fact that PCSs are dense in
  cones to prove this result but this is actually not necessary,
  thanks to a slightly stronger assumption on the measurability
  structure of cones.}.

There is a
faithful functor from the category of measurable spaces and
sub-probability kernels to the category of measurable cones which maps
a measurable space \(X\) to the cone \(\Cmeas(X)\) of finite
non-negative measures on \(X\).
As already explained in~\cite{Geoffroy22} (in a slightly different
context) the integral preservation property that we enforce on linear
morphisms on cones has the major benefit of making this functor not
only faithful but also full.

\paragraph{Nonlinear functions: stability and analyticity.}
In a second part of the paper we define two cartesian closed
categories of integrable cones and non-linear morphisms which are
Scott continuous and measurable.
We also develop the associated notions of exponential comonad (in the
sense of the semantics of \(\LL\), see for instance~\cite{Mellies09})
applying the special adjoint functor theorem to the continuous
inclusion functor from the category of integrable cones and integrable
linear functions to the non-linear category.
\begin{itemize}
\item In the first case the non-linear morphisms between integrable
  cones are the \emph{measurable and stable functions} that were
  introduced in~\cite{EhrhardPaganiTasson18}.
  These morphisms are Scott continuous functions satisfying a ``total
  monotonicity'' condition, which is an iterated form of monotonicity
  (plus preservation of measurable paths by post-composition of
  course).
  A peculiarity of this construction is that apparently no integral
  preservation condition is imposed on these morphisms%
  \footnote{Notice that it is not possible to expect that non-linear
    morphisms will preserve integrals but one could expect that they
    satisfy a weakened version of this condition.}. %
\item This fact can be considered as an issue for which we propose a
  solution by defining a notion of \emph{analytic morphism} as the
  bounded limits of polynomial functions which are themselves
  described as finite sums of functions of shape
  \(x\mapsto f(x,\dots,x)\) where \(f\) is an \(n\)-linear symmetric
  integrable and measurable function.
  These analytic functions are of course stable
  and measurable
  but not all stable
  and measurable
  functions are analytic because this latter notion is based on
  integrable linearity%
  \footnote{An \(n\)-ary integrable multilinear function
    is 
    a function with \(n\)-arguments which is linear and integrable in
    each parameter.}.
\end{itemize}
For each measurable space \(X\), we show that for both exponential
comonads \(\Excl\_\) described above, the integrable cone \(\Cmeas(X)\) has a
canonical structure of coalgebra, which means that this cone can be
considered as a \emph{data-type} in the sense of~\cite{Krivine90a} or
in the sense of the \emph{positive formulas} of Polarized Linear
Logic~\cite{Girard91a,LaurentRegnier03,Ehrhard16}.
It is very important to observe that this construction uses
integration in a crucial way:
as already explained above, the associated \(\mathtt{let}\) operator
can also be understood as a sampling construct, it is interpreted
using this coalgebra structure which is defined using integration in
the integrable cone \(\Excl{\Cmeas(X)}\).
Combined with the fact that the Kleisli categories of these comonads
are cartesian closed and \(\omega\)-cpo enriched, this means that
integrable cones provide a semantics for a large number of functional
programming languages with continuous data types and basic probability
features.

\paragraph{Convex QBSs.}
Besides measurable cones, one major source of inspiration of this work
is~\cite{Geoffroy22}, which introduces the notion of \emph{convex QBS},
which are a particular class of algebras on the Giry-Panangaden monad of
sub-probability measures in the category of QBSs.
In other words, a convex QBS is a QBS equipped with an abstract,
algebraic operation of ``integration'' from which all elementary
operations of a cone can be derived.
As in the present work, linear morphisms are required to commute with
integration, \Ie~to be morphisms of algebras on the sub-probability monad.
The main differences with respect to the present setting are, first,
that linear negation in convex QBSs is involutive (because they are
defined as dual pairs) whereas we strongly conjecture that this is not
true for integrable cones; and second, that measurability in convex
QBSs is axiomatized in the QBS manner, by equipping each object with a
collection of ``measurable paths'' from \(\Real\) to this object,
satisfying sheaf-like conditions\footnote{In fact, these two
  differences are closely linked: negation in convex QBSs can be
  involutive precisely because their measurability is axiomatized in
  the QBS manner, without restrictions on the QBS-structure.}.
In integrable cones, following~\cite{EhrhardPaganiTasson18},
measurability is axiomatized by means of a ``measurability
structure'', \Ie~a collection of ``test functions'' that map a real
number and an element of the cone to a non-negative real number,
measurably with respect to the first variable, and linearly and
continuously with respect to the second.
In turn, this measurability structure induces a class of measurable
maps from \(\Real\) to the cone, turning the latter into a QBS: a map
from \(\Real\) to the cone is measurable if and only if its
composition with each test function is a measurable map from
\(\Real\times\Real\) to \(\Realp\) (by composition, we mean that the
second argument of the test function is replaced by the map, and the
first argument is left alone).
A map between integrable cones is measurable when it is a morphism of
QBSs.
This means that, from the point of view of measurability alone (\Ie~if
we forget the algebraic structure), integrable cones can be seen as a
particular class of QBSs whose QBS structure can be defined as the
``dual'' of a set of test functions.
This restriction has the pleasant consequence of making the theory of
measurability and integration in cones quite easy, reducing it to
standard Lebesgue integration by means of post-composition with tests.

Similarly defined integrals of functions ranging in topological vector
spaces separated by their topological duals have been introduced by
Pettis a long time ago~\cite{Pettis38}, and are also known as
\emph{weak integrals} or \emph{Gelfand-Pettis integrals}.
The transposition of this definition in our positive cone setting
turns out to be quite suitable, thanks to its compatibility with
categorical limits.


\tableofcontents

\section{Preliminaries}
\subsection{Notations}
In the whole paper, we say that a set is countable if it is finite or
has the same cardinality as \(\Nat\).

We use notations borrowed from the lambda-calculus to denote
mathematical functions: if \(e\) is a mathematical expression for an
element of \(B\) depending on a parameter \(x\in A\), we use
\(\Absm{x\in A}{e}\) for the corresponding function \(A\to B\).

\subsubsection{Categorical notations borrowed from \(\LL\)}
We also borrow notations from intuitionistic \(\LL\) for denoting
objects of our categories and construction on these objects.
These notations are quite coherent although they somehow depart from the
categorical traditions.
In what follows, the word ``linear'' has to be understood in an
intuitive way: as explained in the Introduction, our constructions are
based on notions of linear morphisms which will be defined precisely
later.
\begin{itemize}
\item We use \(\Limpl{E}{F}\) to denote a space of linear morphisms
  from \(E\) to \(F\);
\item we use \(\Tens GE\) to denote the tensor product of \(G\) and
  \(E\), such that a linear morphism from \(\Tens GE\) to \(F\) is the
  same thing as linear morphism from \(G\) to \(\Limpl EF\);
\item we use \(\Sone\) for the unit of \(\ITens\) (instead of the more
  traditional \(I\));
\item we use \(\IWith\) (instead of the more traditional \(\times\)
  that we use for denoting the standard cartesian product of sets) for
  the categorical product (aka.~direct product) and \(\Stop\) (instead
  of the more traditional \(1\)) for the associated unit, which is the
  terminal object;
\item we use \(\IPlus\) for the coproduct (aka.~direct sum) and
  \(\Zero\) for the associated unit which is the initial object;
\item we use \(\Excl E\) for the linear logic exponential, which is
  not a symmetric tensor algebra but rather a symmetric tensor
  coalgebra.
\end{itemize}
Even if in our categories \(\Zero\) and \(\Stop\) are the same object
(just as in the category of vector spaces), we prefer to keep
distinct notations because we have in mind a refinement of our model
where these objects are distinct, and we use the two notations
depending on the context.
Similarly, in some context where \(\Sone\) is considered as a
dualizing object, we denote it as \(\Sbot\), again in accordance with
the tradition of \(\LL\).

\subsubsection{Measure theory and other notations}
We use \(\MEAS\) for the category of measurable spaces and measurable
functions.

If \(X\) and \(Y\) are measurable spaces, recall that a
\emph{kernel} from \(X\) to \(Y\) is a map
\(\kappa : X \times \Sigalg{Y} \to \Realpc\) (where
\(\Sigalg{Y}\) denotes the \(\sigma\)-algebra of \(Y\)) such that:
\begin{itemize}
\item for all \(x \in X\), the map \(\Absm {U} {\kappa(x, U)}\) is a
  measure on \(Y\),
\item for all \(U \in \Sigalg{Y}\), the map
  \(\Absm {x} {\kappa(x, U)}\) is measurable.
\end{itemize}
We write \(\kappa:X\Kernto Y\) for ``\(\kappa\) is a kernel from
\(X\) to \(Y\)''.
We say that \(\kappa\) is \emph{bounded} if the set
\(\{ \kappa(x,Y) \St x \in X \}\) has a finite upper bound.

If \(X\) is a measurable space, \(\mu\) a non-negative measure on
\(X\) and \(f:X\to\Realp\) a non-negative measurable function, we
use
\begin{align*}
  \int f(r)\mu(dr)
\end{align*}
for the integral, which belongs to \(\Realpc\), rather than the more
usual \(\int f(r)d\mu(r)\).
The reason of this choice is that it is much more convenient when the
measure arises as the image of a kernel \(\kappa:Y\Kernto X\) in
which case we can use the non ambiguous notation
\(\int f(r)\kappa(s,dr)\).
This notation is also intuitively compelling if we see \(dr\) as
representing metaphorically an ``infinitesimal'' measurable subset of
\(X\).

If \(a\) is an element and \(n\in\Nat\) we use \(\Rep an\) for the
\(n\)-tuple \((a,\dots,a)\).

We use \(\Natnz\) for \(\Nat\setminus\Eset 0\).

If \(n\in\Nat\) we set \(\Intset n=\Eset{1,\dots,n}\). 

If \(I\) is a set, we use \(\Mfin I\) for the set of all finite
multisets of elements of \(I\), which are the functions \(m:I\to\Nat\)
such that the set \(\Supp m=\{i\in I\St m(i)\not=0\}\) is finite.

\subsection{Categories}
The following is an easy consequence of the Yoneda lemma which gives a
simple tool for proving that two functors are naturally isomorphic by
checking that two associated indexed classes of homsets are in natural
bijective correspondence.
\begin{lemma}\label{lemma:functor-yoneda-iso}
  Let $\Cat C$ and $\Cat D$ be categories, $F,G:\Cat C\to\Cat D$ be
  functors and let $\psi_{C,D}:\Cat D(F(C),D)\to\Cat D(G(C),D)$ be a
  natural bijection.
  Then the family of morphisms
  $\eta_C=\psi_{C,F(C)}(\Id_{F(C)})\in\Cat D(G(C),F(C))$ is a natural
  isomorphism whose inverse is the family of morphisms
  $\theta_C=\Funinv{\psi_{C,G(C)}}(\Id_{G(C)})\in\Cat D(F(C),G(C))$.
\end{lemma}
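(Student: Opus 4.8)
The plan is to run the classical Yoneda argument, using that $\psi$ is natural in \emph{both} variables: since $\psi_{C,D}$ carries two subscripts I read the hypothesis as saying that $\psi$ is a natural transformation between the two hom-functors $(C,D)\mapsto\Cat D(F(C),D)$ and $(C,D)\mapsto\Cat D(G(C),D)$, covariant in $D$ and contravariant in $C$. First I would unfold the two resulting naturality squares. The covariant square in $D$ says that for $g:D\to D'$ one has $\psi_{C,D'}(g\circ h)=g\circ\psi_{C,D}(h)$ for every $h:F(C)\to D$; the contravariant square in $C$ says that for $u:C\to C'$ one has $\psi_{C,D}(h\circ F(u))=\psi_{C',D}(h)\circ G(u)$ for every $h:F(C')\to D$. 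The same squares hold for the inverse bijection $\Funinv{\psi}$, since the pointwise inverse of a natural isomorphism is again natural.

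From the covariant square I would extract the explicit form of $\psi$. Taking $D=F(C)$ and $h=\Id_{F(C)}$ gives $\psi_{C,F(C)}(\Id_{F(C)})=\eta_C$, and then for any $g:F(C)\to D'$,
\[
  \psi_{C,D'}(g)=\psi_{C,D'}(g\circ\Id_{F(C)})=g\circ\eta_C.
\]
Applying the same computation to $\Funinv{\psi}$ yields $\Funinv{\psi_{C,D'}}(k)=k\circ\theta_C$ for every $k:G(C)\to D'$, where $\theta_C=\Funinv{\psi_{C,G(C)}}(\Id_{G(C)})$.

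Next I would prove that $\eta_C$ and $\theta_C$ are mutually inverse by playing these two formulas against the fact that $\psi_{C,D'}$ and $\Funinv{\psi_{C,D'}}$ are inverse bijections. By definition $\Funinv{\psi_{C,F(C)}}(\eta_C)=\Id_{F(C)}$, while the second formula with $k=\eta_C$ gives $\Funinv{\psi_{C,F(C)}}(\eta_C)=\eta_C\circ\theta_C$, hence $\eta_C\circ\theta_C=\Id_{F(C)}$. Symmetrically, $\psi_{C,G(C)}(\theta_C)=\Id_{G(C)}$ by definition, while the first formula with $g=\theta_C$ gives $\psi_{C,G(C)}(\theta_C)=\theta_C\circ\eta_C$, hence $\theta_C\circ\eta_C=\Id_{G(C)}$. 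Thus each $\eta_C$ is an isomorphism $G(C)\to F(C)$ with inverse $\theta_C$.

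Finally, the naturality of $(\eta_C)_C$ as a transformation $G\Rightarrow F$ comes from the contravariant square. For $u:C\to C'$, taking $D=F(C')$ and $h=\Id_{F(C')}$ gives $\psi_{C,F(C')}(F(u))=\eta_{C'}\circ G(u)$, whereas the explicit formula from the covariant square (with $g=F(u)$) gives $\psi_{C,F(C')}(F(u))=F(u)\circ\eta_C$; equating them yields $F(u)\circ\eta_C=\eta_{C'}\circ G(u)$, which is precisely the naturality square. Since each component is invertible, $\eta$ is a natural isomorphism, and its pointwise inverse $\theta$ is automatically natural, giving the inverse natural isomorphism as claimed. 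I expect the only delicate point to be the variance bookkeeping---keeping the covariant-in-$D$ and contravariant-in-$C$ squares straight and applying each at the right identity---rather than any genuine difficulty, since everything else is formal manipulation.
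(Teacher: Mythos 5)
Your proof is correct and is precisely the standard Yoneda-style argument that the paper alludes to (the paper in fact omits the proof entirely, merely remarking that the lemma is ``an easy consequence of the Yoneda lemma''). The variance bookkeeping, the extraction of $\psi_{C,D'}(g)=g\circ\eta_C$ and $\Funinv{\psi_{C,D'}}(k)=k\circ\theta_C$ from the covariant squares, the mutual-inverse check, and the naturality square obtained from the contravariant square are all handled correctly.
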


\section{Cones}
\label{sec:algebraic-cones}

Cones are the basic objects of our model.
They are algebraic structures with numerical features (the non-negative real
half line acts on them) as well as domain theoretic features.
The algebraic and numerical aspects will be essential to account for
the probabilistic aspects of the model and the domain theoretic
aspects will be crucial to give to our model a suitable computational
expressive power, allowing to interpret arbitrary recursive definitions.

The purpose of the present section is to introduce this basic algebraic and
numerical infrastructure and give its basic properties.
Our definition of cones is borrowed without major modifications
from~\cite{Selinger04}.
As explained in that paper, they are close to the domain theoretic
treatment of positive cones developed in~\cite{Tix00}, with the
difference that Selinger's cones are equipped with a norm and that
their order-theoretic completeness is deeply related to this norm.

The notion of positive cone itself is pervasive in functional analysis
and it would be a very difficult task to describe its genealogy and
many avatars in the literature.
Our (and Selinger's) cones seem very similar to normal cones in Banach
spaces, and it seems actually possible, given one or our cones \(P\),
to define an enveloping Banach space of which \(P\) is a normal
positive cone.
However, the linear morphisms that we consider between our cones are
assumed to be continuous in a domain theoretic sense, and this seems
to be a stronger property than continuity wrt.~the topology induced by
the norm (when the linear morphism is extended to the associated
Banach space).
This difference in the definition of morphisms seems to be a major
drift wrt.~the standard uses of cones in analysis.

Both Selinger and Tix assume that their cones are continuous (in the
domain-theoretic sense) which makes it possible to prove a separation
property similar to a Hahn-Banach theorem.
This is an assumption that we cannot afford here because we will need
our category of cones and linear maps to be complete and continuity
does not seem to be preserved by equalizers in general.
We will see that dropping this assumption is essentially harmless in
the setting of this paper: our measurability structures of
Section~\ref{sec:measurable-cones} will provide us the required
separation properties.

Another difference between Selinger's cones and ours is that we
do not assume order-theoretic completeness wrt.~arbitrary norm-bounded
directed sets, as it is usual in domain theory, but only
wrt.~norm-bounded \(\omega\)-increasing sequences (or, equivalently,
to countable directed sets).
This assumption is sufficient for computing arbitrary fixpoints, see
Section~\ref{sec:ccc-fix}, and cannot be significantly strengthened
because of our constant use of the monotone convergence theorem.

Measurability notions for cones will be necessary as well to deal with
probabilities on arbitrary measurable spaces such as the real line;
this will be done in Section~\ref{sec:measurable-cones}.

\subsection{Basic definitions}

A \emph{precone} is a \(\Realp\)-semimodule \(P\) which satisfies %
\begin{Axicond}\Pcsimpll{}
  \(\forall x_1,x_2,x\in P\ x_1+x=x_2+x\Implies x_1=x_2\)
\end{Axicond}
\begin{Axicond}\Pcposl{}
  \(\forall x_1,x_2\in P\ x_1+x_2=0\Implies x_1=0\)
\end{Axicond}
Given \(x_1,x_2\in P\), we stipulate that \(x_1\leq x_2\) if %
\(\exists x\in P\ x_2=x_1+x\).
By \Pcsimplr{} and \Pcposr{} this defines a partial order relation on
\(P\): \emph{the cone order} of \(P\).
Moreover when \(x_1\leq x_2\) there is exactly one \(x\in P\) such
that \(x_2=x_1+x\), that we denote as \(x_2-x_1\).
Notice that this subtraction between elements of \(P\) is only
partially defined, and that it satisfies all the usual laws of
subtraction.

A \emph{cone} is a precone \(P\) equipped with a function %
\(\Norm\__P:P\to\Realp\) (or simply \(\Norm\_\)), called the
\emph{norm of \(P\)}, which satisfies the following properties.
\begin{Axicond}\Cnormhl{}
  \(\forall \lambda\in\Realp\forall x\in P\ \Norm{\lambda x}=\lambda\Norm x\)
\end{Axicond}
\begin{Axicond}\Cnormzl{}
  \(\forall x\in P\ \Norm x=0\Implies x=0\)
\end{Axicond}
\begin{Axicond}\Cnormtl{}
  \(\forall x_1,x_2\in P\ \Norm{x_1+x_2}\leq\Norm{x_1}+\Norm{x_2}\)
\end{Axicond}
\begin{Axicond}\Cnormpl{}
  \(\forall x_1,x_2\in P\ \Norm{x_1}\leq\Norm{x_1+x_2}\) or, equivalently %
  \(\forall x_1,x_2\in P\ x_1\leq x_2\Implies\Norm{x_1}\leq\Norm{x_2}\).
\end{Axicond}
Condition~\Cnormpr{} expresses the positiveness of \(P\) and implies
\Pcposr{}, but it is seems more sensible to require \Pcposr{} at the
beginning because of its purely algebraic nature, and because this
allows to define the useful notion of precone.
\begin{Axicond}\Cnormcl{}
  Each sequence \((x_n)_{n\in\Nat}\) of elements of \(P\) which is
  increasing%
  \footnote{A reader acquainted with domain-theory might expect here a
    stronger completeness requirement using arbitrary directed sets
    instead of \(\omega\)-chains (or, equivalently, countable directed
    sets).
    It is absolutely crucial to use this restricted definition because
    we will often have to use the monotone convergence theorem to
    prove this property, and this theorem is valid only for countable
    families.} %
  (for the cone order relation of \(P\)) and satisfies
  \(\forall n\in\Nat\ \Norm{x_n}\leq 1\) has a lub %
  \(x=\sup_{n\in\Nat}x_n\) in \(P\) which satisfies \(\Norm x\leq 1\).
\end{Axicond}
\noindent 
A subset \(A\) of \(P\) is
\begin{itemize}
\item \emph{bounded} if
  \(\exists\lambda\in\Realp\forall x\in A\ \Norm x\leq\lambda\).
  We set %
  \(\Cuball P=\{x\in P\St\Norm x\leq 1\}\) and call this set the
  \emph{unit ball} of \(P\) (\emph{unit tip} might be more appropriate
  but seems less standard).
  With this notation, \(A\) is bounded iff %
  \(\exists\lambda\in\Realp\ A\subseteq\lambda\Cuball P\).
\item \emph{\(\leq\)-bounded} if there is \(y\in P\) such that %
  \(\forall x\in A\ x\leq y\).
  This implies that \(A\) is bounded (but the converse is not true).
\item \emph{\(\omega\)-closed}
  if %
  \(\forall x_1,x_2\in P\ (x_1\leq x_2\text{ and }x_2\in A)\Implies
  x_1\in A\) and %
  for each bounded increasing sequence \((x_n)_{n\in\Nat}\) of elements
  of \(A\) one has \(\sup_{n\in\Nat}x_n\in A\).
\end{itemize}
\noindent 
Notice that \(P\) and \(\Cuball P\) are \(\omega\)-closed subsets of \(P\).

\begin{definition}
  Let \(S\) be a set and \(P\) be a cone.
  A function \(f:S\to P\) is \emph{bounded} if \(f(S)\) is bounded in
  \(P\).
\end{definition}

\begin{definition}
  \label{def:monotone-scott}
Let \(P\) and \(Q\) be cones, let \(A\subseteq P\) be \(\omega\)-closed and
let \(f:A\to Q\) be a function.
\begin{itemize}
\item \(f\) is \emph{increasing} if %
  \(\forall x_1,x_2\in A\ x_1\leq x_2\Implies f(x_1)\leq
  f(x_2)\).
  Notice that if \(f\) is increasing and \((x_n)_{n\in\Nat}\) is a
  bounded and increasing sequence in \(A\) then %
  the sequence \((f(x_n))_{n\in\Nat}\) is bounded by %
  \(\Norm{f(\sup_{n\in\Nat}x_n)}\) in \(Q\), by \Cnormpr{} and
  monotonicity of \(f\).
\item \(f\) is \emph{\(\omega\)-continuous}, or simply continuous (no
  other notion of continuity will be considered in this paper), if
  \(f\) is monotonic and for each bounded increasing sequence %
  \((x_n)_{n\in\Nat}\) of elements of \(A\), one has %
  \(f(\sup_{n\in\Nat}x_n)=\sup_{n\in\Nat}f(x_n)\), that is %
  \(f(\sup_{n\in\Nat}x_n)\leq\sup_{n\in\Nat}f(x_n)\) since the
  converse holds by monotonicity of \(f\).
\item \(f\) is \emph{linear} if \(A=P\), \(f(\lambda x)=\lambda f(x)\) and %
  \(f(x_1+x_2)=f(x_1)+f(x_2)\), for all \(\lambda\in\Realp\) and %
  \(x,x_1,x_2\in P\).
  Notice that if \(f\) is linear then \(f\) is increasing because, given
  \(x_1,x_2\in P\), if \(x_1\leq x_2\) then
  \(f(x_2-x_1)+f(x_1)=f(x_2)\), and moreover we have
  \(f(x_2-x_1)=f(x_2)-f(x_1)\).
  One says that \(f\) is linear and continuous if it is linear and
  \(\omega\)-continuous.
\item If \(f:P\to Q\) is linear, one says that \(f\) is \emph{bounded}
  if its restriction to \(\Cuball P\) is a bounded function.
\end{itemize}  
\end{definition}
\noindent 
One major interest of this kind of continuity is the fact that
separate continuity implies continuity (see
Lemma~\ref{lemma:seprate-cont-implies-cont}), a property that usual
topological continuity does not satisfy.

There are plenty of examples of cones:
\begin{example}
  \label{ex:cones-measurable-functions}
  Let \(X\) be a measurable space. The space of all bounded measurable maps
  from \(X\) to \(\Realp\) forms a cone: the operations are defined pointwise,
  and the norm is given by the supremum.
\end{example}
  
\begin{example}
  \label{ex:cones-analytic-functions}
  Section~\ref{sec:basic-cone-finite-meas} describes the cone of
  finite measures on a measurable space which provides one of the main
  motivations for this work.
  All the objects of the probabilistic coherence space model of
  \(\LL\) can be seen as cones; the interested reader can have a look
  at the beginning of Section~\ref{sec:pcs-integrable} to see more
  about them.
  Here are some instances of this particular class.
  \begin{itemize}
  \item The cone \(\Snat\) whose elements are the \(u\in\Realpto\Nat\)
    such that \(\sum_{n\in\Nat}u_n<\infty\), with algebraic
    operations defined pointwise, and \(\Norm u=\sum_{n\in\Nat}u_n\).
    This is also a special case of the cones of
    Section~\ref{sec:basic-cone-finite-meas} where the measurable
    space is \(\Nat\) with the discrete \(\sigma\)-algebra.
  \item The dual of \(\Snat\) (in the sense of
    Definition~\ref{def:cone-dual}) which can be described as the cone
    \(\Orth N\) of bounded families \(u\in\Realpto\Nat\) with norm
    defined by \(\Norm u=\sup_{n\in\Nat}u_n\).
  \item The cone \(\Timpl P\Sone\) where
    \(P\in\{\Snat,\Orth\Snat\}\), whose elements are the families
    \(t\in\Realpto{\Mfin\Snat}\) such that there is
    \(\lambda\in\Realp\) such that
    \begin{align*}
      \forall u\in P\quad
      \Norm u\leq 1\Implies\sum_{m\in\Mfin\Nat}t_mu^m\leq\lambda
    \end{align*}
    where \(u^m=\prod_{n\in\Nat}u_n^{m(n)}\), with algebraic
    operations defined componentwise and norm defined by %
    \(\Norm t=\sup_{u\in\Cuball P}\Fun t(u)\) where
    \(\Fun t(u)=\sum_{m\in\Mfin\Nat}t_mu^m\).
    In both cases \(P=\Snat\) and \(P=\Orth\Snat\), \(\Fun t\) can
    be seen as a bounded function %
    \(\Cuball P\to\Realp\).
    The set of these families \(t\) equipped with that norm is easily
    seen to be a cone.
    An element of \(\Timpl{\Orth\Snat}\Sone\) can be seen as a power series
    with infinitely many parameters, defining a function
    \(\Cuball{\Orth\Snat}=\Intercc01^\Nat\to\Realp\).
    An element of \(\Timpl\Snat\Sone\) is an analytic function on the
    subprobability distributions on the natural numbers, we give an
    example of such a function.
    Given two such distributions \(u\) and \(v\), we can define their
    convolution product \(u\ast v\in\Snat\) by
    \((u\ast v)_n=\sum_{i=0}^nu_iv_{n-i}\) which is again a
    subprobability distribution (the push-forward of addition).
    If \(u\) (resp.~\(v\)) is the probability distribution of a
    \(\Nat\)-valued random variable \(X\) (resp.~\(Y\)) and \(X\) and
    \(Y\) are independent, then \(u\ast v\) is the probability
    distribution associated with \(X+Y\).
    Given a family \((a_n\in\Realp)_{n\in\Nat}\) such that
    \(\sum_{n\in\Nat}a_n=1\), a non trivial example of element of
    \(\Timpl\Snat\Sone\) is \(t\) given by
    \begin{align*}
      \Fun t(u)=\sum_{n=0}^\infty a_n
      \overbrace{u\ast\cdots\ast u}^n\,.
    \end{align*}
    whose coefficients are
    \begin{align*}
      t_m=\frac{\Factor{\Card m}}{\Factor m}\,a_{\Card m+1}
    \end{align*}
    where \(\Card m\) is the number of elements of \(m\) (taking
    multiplicities into account) and
    \(\Factor m=\prod_{i\in\Nat}\Factor{m(i)}\).
    Such power series are typical examples of the analytic functions
    that we will meet in Section~\ref{sec:analytic-functions-exp} in
    the general setting of integrable cones.
  \end{itemize}
\end{example}

\begin{remark}
  \label{rk:analytic-dedf-on-balls}
  We can already observe that our analytic functions will be defined
  in general only on the unit ball of their domain.
  The reason is that the analytic functions which interpret programs
  will in general be characterized by recursive equations.
  For instance, it is quite easy to define a probabilistic program (of
  type \(\texttt{unit}\to\texttt{unit}\)) whose interpretation is a
  function \(f:\Intercc01\to\Intercc01\) such that
  \(f(u)=\frac12u+\frac12 f(u)^2\), so that \(f(u)=1-\sqrt{1-u}\):
  this is the only solution of the quadratic equation which gives a
  power series with nonnegative coefficients \((t_n)_{n\in\Nat}\) such
  that \(f(u)=\sum_{n\in\Nat}t_nu^n\) for all \(u\in\Intercc01\), but
  the series diverges for \(u>1\).
\end{remark}

\begin{remark}
  As we have seen in the basic definitions and in the first examples,
  all the real numbers we consider are non-negative.
  A natural question is whether this restriction could be dropped
  and we argue that this issue is more tricky than it might seem at
  first sight.
  Consider an analytic function (in the sense described above)
  \(f:\Intercc01^2\to\Intercc01\), given by a family \(t\in\Realp^2\),
  so that \(f(u,v)=\sum_{n,p\in\Nat}t_{n,p}u^nv^p\).
  Then for each \(u\in\Intercc01\), the function
  \(f_u:\Intercc01\to\Intercc01\) such that \(f_u(v)=f(u,v)\) has a
  least fixpoint \(g(u)\), and we shall see in Section
  \ref{sec:ccc-fix} that the function \(g\) is itself analytic, that
  is \(g(u)=\sum_{n\in\Nat}s_nu^n\) for some
  \((s_n\in\Realp)_{n\in\Nat}\) such that
  \(\sum_{n\in\Nat}s_n\leq 1\).
  If we relax this positivity requirement, then our function \(f\)
  could be \(f(u,v)=1-(1-u)(1-v)=u+v-uv\) (mentioned in particular
  in~\cite{EscardoHofmannStericher04}).
  It is still true that \(f_u\) has a least fixpoint \(g(u)\), but one
  checks easily that \(g(0)=0\) and \(g(u)=1\) if \(u>0\), which shows
  that \(g\) cannot be analytic, even with possibly negative
  coefficients.
\end{remark}

\subsection{An archetypal example: the cone of finite measures}
\label{sec:basic-cone-finite-meas}

Let \(X\) be a measurable space.
The set \(\Mcca {\Cmeas(X)}\) of all \emph{finite} (non-negative,
real-valued) measures on \(X\) is naturally equipped with the
structure of a cone:
\begin{itemize}
\item the algebraic operations of \(\Mcca {\Cmeas(X)}\) are defined
  pointwise (\Eg~\((\mu_1+\mu_2)(U)=\mu_1(U)+\mu_2(U)\) for all
  \(U\in\Sigalg X\));
\item the norm is given by \(\Norm\mu=\mu(X)\) (this is the
  total variation norm of \(\mu\) since \(\mu\) is non-negative);
\item observing that \(\mu_1\leq\mu_2\) means %
  \(\forall U\in\Sigalg X\ \mu_1(U)\leq\mu_2(U)\), it is clear that
  each increasing sequence \((\mu_n)_{n\in\Nat}\) in \(\Cuball P\) has
  a least upper bound \(\mu\in\Cuball P\) which is computed pointwise:
  \(\mu(U)=\sup_{n\in\Nat}\mu_n(U)\).
\end{itemize}
The set \(\Mcca {\Cmeas(X)}\) itself can be equipped with a
\(\sigma\)-algebra, in the spirit of the Giry monad.

Let \(\kappa : X \Kernto Y\) be a bounded kernel.
Then the map
\(\widehat \kappa : \Mcca {\Cmeas(X)} \Kernto \Mcca {\Cmeas(Y)}\)
defined by
\(\widehat \kappa(\mu)(V) = \int_{x \in X} \kappa(x,V) \, \mu(dx) \)
is linear, continuous and measurable.

In fact, this map \(\widehat \kappa\) has a stronger property: it
preserves (S-)finite integrals.
Namely, whenever \(\mu\) is a finite measure on \(\Cmeas(X)\) (or
more generally an S-finite%
\footnote{Recall that a measure is S-finite
  if it is a countable sum of finite measures.
  This is a weaker property than \(\sigma\)-finiteness.} %
measure such that
\(\int_{\nu \in \Cmeas(X)} \nu(X) ~ \mu(d\nu) < \infty\)), we have
\[
  \widehat \kappa\Big(\int_{\nu \in \Cmeas(X)} \nu\,\mu(d\nu)\Big) =
  \int_{\nu \in \Cmeas(X)} \widehat \kappa(\nu)\,\mu(d\nu)\,,
\]
where the integrals are defined pointwise (recall that a measure on
\(X\) is in particular a map from \(\Sigalg X\) to \(\Realpc\)).
Conversely, if a map
\( f : \Mcca {\Cmeas(X)} \Kernto \Mcca {\Cmeas(Y)} \) preserves
S-finite integrals, then there exists a unique bounded kernel
\(\kappa : X \Kernto Y\) such that \( f = \widehat \kappa \).
It is given by \(\kappa(x) = f(\Dirac {X}(x))\), where
\(\Dirac {X}(x)\) denotes the Dirac measure at \(x\) on \(X\).
If we think of S-finite measures as a generalization of formal linear
combinations, then commutation with S-finite integrals is simply a
generalization of linearity.

Preservation of S-finite integrals implies continuity and linearity,
but the converse does not hold in general as illustrated in
Remark~\ref{rk:continuous-part-measure}.

\begin{remark}
  \label{rk:continuous-part-measure}
  Consider the map
  \(\ContinuousPart:\Mcca{\Cmeas(\Real)}\to\Mcca{\Cmeas(\Real)}\)
  defined by
  \begin{align*}
  	\ContinuousPart(\mu) = \frac{d\mu}{d\lambda}\,\lambda
  \end{align*}
  where \(\lambda\) is the Lebesgue measure on \(\Real\) and
  \(\frac{d\mu}{d\lambda}\) is the Radon-Nikodym derivative of \(\mu\)
  with respect to \(\lambda\).
  In other words, this map extracts the continuous part of the measures
  on \(\Real\).
  
  This map is linear, \(\omega\)-continuous and measurable (because the map
  \(\mu \mapsto \frac{d\mu}{d\lambda}\) is measurable~\cite[Theorem
    1.28]{kallenberg17}).
  On the other hand,
  \[
    \ContinuousPart\Big(\int \Dirac {\Real}(r)
      \lambda_{[0,1]}(dr)\Big) =
    \ContinuousPart\left(\lambda_{[0,1]}\right) = \lambda_{[0,1]}\,,
  \]
  (where \(\lambda_{[0,1]}\) is the Lebesgue measure on \([0,1]\)),
  while
  \[
    \int \ContinuousPart(\Dirac {\Real}(r)) \lambda_{[0,1]}(dr)
    = \int 0 \lambda_{[0,1]}(dr) = 0\,.
  \]
\noindent 
  Therefore, there exists no kernel \(\kappa : \Real \Kernto \Real\)
  such that \(\ContinuousPart = \widehat \kappa\).
  As we shall see, avoiding this kind of situation is one of our main
  motivations for introducing integrability.
\end{remark}

\subsection{Basic properties}

The following means that the notion of continuity we consider for
linear maps behaves in an essentially algebraic way.
\begin{lemma}
  \label{lemma:linear-inverse}
  Let \(P\) and \(Q\) be cones and let \(f:P\to Q\) be linear and
  continuous.
  If \(f\) is bijective then \(\Inv f\) is linear and continuous.
\end{lemma}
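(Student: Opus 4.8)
The plan is to set $g=\Inv f$ and to establish linearity first, then continuity, using the bijectivity of $f$ at each stage. For linearity I would argue purely algebraically. Given $y\in Q$ and $\lambda\in\Realp$, write $x=g(y)$ so that $f(x)=y$; then $f(\lambda x)=\lambda f(x)=\lambda y$ by linearity of $f$, and injectivity of $f$ forces $g(\lambda y)=\lambda x=\lambda g(y)$. Likewise, for $y_1,y_2\in Q$ with $x_i=g(y_i)$, the identity $f(x_1+x_2)=f(x_1)+f(x_2)=y_1+y_2$ together with injectivity gives $g(y_1+y_2)=g(y_1)+g(y_2)$. Thus $g$ is linear, and by the observation in Definition~\ref{def:monotone-scott} every linear map is increasing, so $g$ is in particular monotonic.

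It then remains to prove that $g$ is $\omega$-continuous. I would start from a bounded increasing sequence $(y_n)_{n\in\Nat}$ in $Q$ and let $y=\sup_{n\in\Nat}y_n$, which exists by \Cnormcr{} (applied after rescaling using \Cnormhr{}). Since $g$ is increasing, the sequence $(g(y_n))_{n\in\Nat}$ is increasing, and from $y_n\leq y$ we get $g(y_n)\leq g(y)$, so $(g(y_n))_{n\in\Nat}$ is $\leq$-bounded and hence bounded; therefore its least upper bound $x^\ast=\sup_{n\in\Nat}g(y_n)$ exists in $P$.

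The key step is then to identify $x^\ast$ with $g(y)$ by pushing the computation forward through $f$ rather than reasoning about $g$ directly. Applying the continuity of $f$ to the bounded increasing sequence $(g(y_n))_{n\in\Nat}$ yields $f(x^\ast)=f(\sup_{n}g(y_n))=\sup_{n}f(g(y_n))=\sup_{n}y_n=y$, using $f\circ g=\Id$. Since also $f(g(y))=y$, injectivity of $f$ gives $x^\ast=g(y)$, that is $g(\sup_{n}y_n)=\sup_{n}g(y_n)$, which is precisely $\omega$-continuity. The only genuine subtlety — and what I would treat as the main obstacle — is the bookkeeping of existence of suprema before manipulating them: one must first check that $(g(y_n))_{n\in\Nat}$ is bounded (handled by monotonicity of $g$ and its $\leq$-bound $g(y)$) so that $x^\ast$ is available for the continuity of $f$ to act on. Once that existence is secured, the conclusion is a short manipulation resting on the interplay between continuity of $f$ and its injectivity.
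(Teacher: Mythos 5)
Your proof is correct and follows essentially the same route as the paper's: linearity of $\Inv f$ is extracted from the linearity and injectivity of $f$, and $\omega$-continuity is obtained by noting that $(\Inv f(y_n))_n$ is increasing and $\leq$-bounded by $\Inv f(\sup_n y_n)$, then pushing its supremum forward through the continuous $f$ and concluding by injectivity. The only cosmetic difference is that the paper normalises the sequence into the unit ball whereas you rescale via \Cnormhr{}; the substance is identical.
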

\begin{proof}
  Linearity follows from the injectivity of \(f\): let
  \(y_1,y_2\in Q\), \(x_1=\Inv f(y_1+y_2)\) and
  \(x_2=\Inv f(y_1)+\Inv f(y_2)\), we have \(f(x_1)=y_1+y_2\) and
  \(f(x_2)=y_1+y_2\) by linearity of \(f\), hence \(x_1=x_2\).
  Scalar multiplication is dealt with similarly. Since \(\Inv f\) is
  linear, it is increasing.

  Let \((y_n\in\Cuball Q)_{n=1}^\infty\) be an increasing sequence and
  let \(y\in\Cuball Q\) be its lub.
  The sequence %
  \((\Inv f(y_n)\in P)_{n=1}^\infty\) is increasing and upper bounded
  by %
  \(\Inv f(y)\) and hence bounded in norm by \(\Norm{\Inv f(y)}_P\),
  so it has a lub \(x\in P\) such that \(x\leq \Inv f(y)\).
  By continuity of \(f\) we have %
  \(f(x)=f(\sup_{n=1}^\infty\Inv f(y_n))=\sup_{n=1}^\infty y_n=y\) and
  hence \(x=\Inv f(y)\) which shows that \(\Inv f\) is continuous.
\end{proof}
\noindent 
Using the notations of \(\LL\) for the multiplicative constants, there
is a cone \(\Sone=\Sbot\) whose set of elements is \(\Realp\) and
\(\Norm x=x\): the \(1\)-dimensional cone.
And using the notations of \(\LL\) for the additive constants, there
is also a cone \(\Zero=\Top\) whose only element is \(0\): the
\(0\)-dimensional cone.
\begin{lemma}
  \label{lemma:cones-cone-add-scal}
  Let \(P\) be a cone.
  Addition \(P\times P\to P\) and scalar multiplication
  \(\Sone\times P\to P\) are increasing and \(\omega\)-continuous.
\end{lemma}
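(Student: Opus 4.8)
The plan is to regard $\With PP$ and $\With\Sone P$ as cones with pointwise algebraic operations, componentwise order, and norm $\Norm{(a,b)}=\max(\Norm a,\Norm b)$, so that Definition~\ref{def:monotone-scott} applies with the whole product as the $\omega$-closed domain. The one preliminary fact I would record is that least upper bounds of bounded increasing sequences in these products are computed componentwise, i.e. $\sup_n(x_n,y_n)=(\sup_n x_n,\sup_n y_n)$: indeed $\Norm{x_n},\Norm{y_n}\leq\Norm{(x_n,y_n)}\leq 1$, so each component sequence has a lub of norm $\leq 1$ by \Cnormcr{}, and the pair of these is clearly the lub for the componentwise order. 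In the case $\With\Sone P$ the first component lives in $\Realp$, whose cone order is the usual order of reals.

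Monotonicity is the routine part. For addition, from $x_2=x_1+a$ and $y_2=y_1+b$ one gets $x_2+y_2=(x_1+y_1)+(a+b)$, hence $x_1+y_1\leq x_2+y_2$. For scalar multiplication I would factor $\lambda_1 x_1\leq\lambda_1 x_2\leq\lambda_2 x_2$, using $\lambda_1 x_2=\lambda_1 x_1+\lambda_1(x_2-x_1)$ for the first inequality (both subtractions exist since $\lambda_1\leq\lambda_2$ and $x_1\leq x_2$) and $\lambda_2 x_2=\lambda_1 x_2+(\lambda_2-\lambda_1)x_2$ for the second.

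For continuity I would reduce the joint statement to three separate-continuity facts together with a diagonal argument. First, for fixed $a\in P$ the map $z\mapsto a+z$ is continuous: writing $u=\sup_n(a+y_n)\leq a+y$ for $y=\sup_n y_n$, from $a+y_n\leq u$ we get $u\geq a$ and $y_n\leq u-a$, whence $y\leq u-a$ by the lub property and so $a+y\leq u$; this is purely order-theoretic. Second, for fixed $\lambda\in\Realp$ the map $x\mapsto\lambda x$ is continuous: the case $\lambda=0$ is trivial, and for $\lambda>0$, from $\lambda x_n\leq t:=\sup_n\lambda x_n$ the monotonicity just established gives $x_n=\lambda^{-1}(\lambda x_n)\leq\lambda^{-1}t$, so $x\leq\lambda^{-1}t$ and $\lambda x\leq t$. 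Third, for fixed $x\in P$ the map $\lambda\mapsto\lambda x$ is continuous: with $t=\sup_n\lambda_n x\leq\lambda x$ and $\lambda_n$ increasing to $\lambda$ in $\Realp$, one has $\lambda x-t\leq\lambda x-\lambda_n x=(\lambda-\lambda_n)x$, so by \Cnormpr{} and \Cnormhr{}, $\Norm{\lambda x-t}\leq(\lambda-\lambda_n)\Norm x\to 0$, and \Cnormzr{} forces $\lambda x=t$.

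Finally I would combine these using the standard observation that, for a doubly indexed family increasing in each index, the supremum over the diagonal equals the supremum over all pairs. For addition, $\sup_n(x_n+y_n)=\sup_{m,n}(x_m+y_n)=\sup_m\big(\sup_n(x_m+y_n)\big)=\sup_m(x_m+y)=x+y$, invoking the first fact for the inner sup and its symmetric version (via commutativity of $+$) for the outer one. For scalar multiplication, $\sup_n\lambda_n x_n=\sup_{m,n}\lambda_m x_n=\sup_m(\lambda_m x)=\lambda x$, using the second fact for the inner sup and the third for the outer one. The only genuinely delicate point is this third separate-continuity fact: there, order-theoretic reasoning alone does not close the gap, and I would have to exploit that $\lambda_n\to\lambda$ as real numbers together with the compatibility of the norm with the scalar action; everything else is bookkeeping with the partial subtraction and the lub property.
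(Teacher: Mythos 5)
Your proof is correct. The paper itself does not give an argument here: it simply defers to Selinger's paper, so there is no in-text proof to compare against. Your reconstruction is a legitimate self-contained route: monotonicity by direct witness manipulation, then separate continuity in each argument, then a diagonal/cofinality argument to pass from separate to joint continuity over increasing sequences. Two remarks. First, you correctly isolate the only step that is not purely order-theoretic, namely continuity of $\lambda\mapsto\lambda x$ for fixed $x$, which genuinely needs \Cnormpr{}, \Cnormhr{} and \Cnormzr{} via the estimate $\Norm{\lambda x-t}\leq(\lambda-\lambda_n)\Norm x\to 0$; the other two separate-continuity facts go through with only the partial subtraction and the lub property, exactly as you say. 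Second, the paper later records Lemma~\ref{lemma:seprate-cont-implies-cont} (separate $\omega$-continuity implies joint $\omega$-continuity for increasing maps on products), which would have packaged your final diagonal step; since that lemma appears after the present one and its proof is independent, your decision to carry out the cofinality argument by hand is the cleaner choice and avoids any appearance of circularity. The only implicit point worth making explicit is that each of the suprema you form exists because the relevant increasing sequence is $\leq$-bounded (e.g.\ by $a+y$, by $\lambda x$, or by $x+y$), hence norm-bounded by \Cnormpr{}, so \Cnormcr{} applies after rescaling; this is routine and consistent with how the paper uses \Cnormcr{} elsewhere.
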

\begin{proof}
  See~\cite{Selinger04}.
\end{proof}

\noindent 
The following lemma will be quite useful to prove that the difference
between two linear and continuous functions is also linear and
continuous, when it exists.
\begin{lemma} %
  \label{lemma:fun-diff-Scott}
  Let \(P\) and \(Q\) be cones, let \(A\subseteq P\) be \(\omega\)-closed
  and let \(f,g:A\to Q\) be functions such that %
  \(f\) is increasing, \(g\) is \(\omega\)-continuous,
  \(\forall x\in P\ f(x)\leq g(x)\) and the function
  \(g-f=\Absm{x\in P}{(g(x)-f(x))}\) is increasing.
  Then \(g-f\) is \(\omega\)-continuous.
\end{lemma}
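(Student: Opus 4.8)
The plan is to verify the one outstanding clause in the definition of $\omega$-continuity for $h:=g-f$. Since $h$ is already assumed increasing, I only need to check that for every bounded increasing sequence $(x_n)_{n\in\Nat}$ in $A$, with lub $x=\sup_{n\in\Nat}x_n$ (which lies in $A$ because $A$ is $\omega$-closed, so that $f(x)$, $g(x)$, $h(x)$ all make sense), one has $h(x)\leq\sup_{n\in\Nat}h(x_n)$; the reverse inequality is automatic from monotonicity of $h$.

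First I would record the defining identity of the cone subtraction, $g(x_n)=f(x_n)+h(x_n)$ for every $n$. The three sequences $(f(x_n))$, $(h(x_n))$, $(g(x_n))$ are increasing and bounded above by $f(x)$, $h(x)$, $g(x)$ respectively, hence norm-bounded by \Cnormpr{}; by completeness \Cnormcr{} (after rescaling into the unit ball) each therefore has a least upper bound, and I write $f^\ast=\sup_n f(x_n)$ and $s=\sup_n h(x_n)$. I would then combine two continuity facts: $\omega$-continuity of $g$ gives $g(x)=\sup_n g(x_n)$, and $\omega$-continuity of addition (Lemma~\ref{lemma:cones-cone-add-scal}), applied to the bounded increasing sequence $(f(x_n),h(x_n))$ in $Q\times Q$ whose lub is $(f^\ast,s)$, gives $\sup_n g(x_n)=\sup_n(f(x_n)+h(x_n))=f^\ast+s$. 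Hence $g(x)=f^\ast+s$.

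The main obstacle is that $f$ is only assumed increasing, not $\omega$-continuous, so I cannot conclude $f^\ast=f(x)$ and thereby read off $s=h(x)$ directly: a priori there may be a gap between $f^\ast$ and $f(x)$. The resolution is purely algebraic, exploiting the precone axioms. Since $f^\ast\leq f(x)$ and $s\leq h(x)$, I set $a=f(x)-f^\ast$ and $b=h(x)-s$, both well defined and nonnegative by the partial subtraction, and compute
\[
  g(x)=f(x)+h(x)=(f^\ast+a)+(s+b)=(f^\ast+s)+(a+b)=g(x)+(a+b),
\]
using $g(x)=f^\ast+s$ at the final step. Cancellation \Pcsimplr{} then gives $a+b=0$, and positivity \Pcposr{} forces $a=b=0$; in particular $h(x)=s=\sup_n h(x_n)$, which is exactly the inequality sought. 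This establishes that $g-f$ is $\omega$-continuous.
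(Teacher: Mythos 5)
Your proof is correct. The main technical ingredients are the same as in the paper's proof (namely $\omega$-continuity of $g$ together with $\omega$-continuity of addition, Lemma~\ref{lemma:cones-cone-add-scal}, applied to a sum of two increasing bounded sequences), but the way you close the argument is genuinely different. The paper replaces $f(x_n)$ by the upper bound $f(x)$ at the outset, works with the $\leq$-bounded increasing sequence $\bigl(f(x)+g(x_n)-f(x_n)\bigr)_n$, and deduces the single inequality $g(x)\leq f(x)+\sup_n\bigl(g(x_n)-f(x_n)\bigr)$, from which $g(x)-f(x)\leq\sup_n(g-f)(x_n)$ follows (implicitly via \Pcsimplr{}). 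You instead keep the exact decomposition $g(x_n)=f(x_n)+h(x_n)$, obtain the identity $g(x)=f^\ast+s$, and then use \Pcsimplr{} followed by \Pcposr{} to force both gaps $f(x)-f^\ast$ and $h(x)-s$ to vanish simultaneously. Your route is marginally longer but buys a stronger conclusion for free: under the stated hypotheses, $f$ itself is also $\omega$-continuous (since $a=0$ gives $f(x)=\sup_n f(x_n)$), a symmetric fact the paper's one-sided estimate does not exhibit. The only minor points worth keeping an eye on are ones you already handled: the rescaling needed to invoke \Cnormcr{} for norm-bounded (rather than unit-ball) sequences, and the fact that lubs in the product cone are computed componentwise so that Lemma~\ref{lemma:cones-cone-add-scal} applies to the pair sequence.
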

\begin{proof}
  Let \((x_n)_{n\in\Nat}\) be a bounded increasing sequence in \(A\) and
  let %
  \(x=\sup_{n\in\Nat}x_n\).
  For all \(n\in\Nat\) we have \(f(x_n)\leq f(x)\) and hence %
  \(g(x_n)\leq f(x)+g(x_n)-f(x_n)\).
  The sequence \((f(x)+g(x_n)-f(x_n))_{n\in\Nat}\) is increasing by our
  assumption that \(g-f\) is increasing, and it is \(\leq\)-bounded by
  \(f(x)+g(x)\).
  We have 
  \begin{align*}
    g(x)&=g(\sup_{n\in\Nat}x_n)\\
        &=\sup_{n\in\Nat}g(x_n)
          \text{\quad since }g\text{ is \(\omega\)-continuous}\\
        &\leq\sup_{n\in\Nat}(f(x)+g(x_n)-f(x_n))\\
        &=f(x)+\sup_{n\in\Nat}(g(x_n)-f(x_n))
    \text{\quad by Lemma~\ref{lemma:cones-cone-add-scal}}
  \end{align*}
  and hence \(g(x)-f(x)\leq\sup_{n\in\Nat}(g(x_n)-f(x_n))\).
  Since \(g-f\) is increasing, if follows that \(g-f\) is
  \(\omega\)-continuous.
\end{proof}

\begin{lemma} %
  \label{lemma:line-cont-bounded}
  If \(f:P\to Q\) is linear then
  \(f(\Cuball P)\) is bounded.
  We set
  \[
    \Norm f=\sup_{x\in\Cuball P}\Norm{f(x)}\in\Realp\,.
  \]
\end{lemma}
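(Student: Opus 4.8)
The plan is to argue by contradiction, exploiting the interplay between positivity of the norm (\Cnormpr{}) and the $\omega$-completeness axiom \Cnormcr{}. The only structural feature of $f$ I will use is that linearity already forces $f$ to be increasing, as recorded in Definition~\ref{def:monotone-scott}; in particular no continuity hypothesis on $f$ is needed, which is what makes the statement interesting --- it is a cone-theoretic automatic-boundedness phenomenon that fails in ordinary Banach spaces.

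Suppose, towards a contradiction, that $f(\Cuball P)$ is \emph{not} bounded, so that $\sup_{x\in\Cuball P}\Norm{f(x)}=\infty$. Then for each $n\in\Nat$ I can pick $x_n\in\Cuball P$ with $\Norm{f(x_n)}\geq 4^n$. The idea is to pack all these witnesses into a single element of the unit ball by means of a geometric ``gliding hump''. I set $y_n=\sum_{k=1}^n 2^{-k}x_k$. Using \Cnormhr{} and the triangle inequality \Cnormtr{} iterated over finite sums, $\Norm{y_n}\leq\sum_{k=1}^n 2^{-k}\Norm{x_k}\leq\sum_{k=1}^\infty 2^{-k}=1$, and $(y_n)_{n\in\Nat}$ is increasing since $y_{n+1}=y_n+2^{-(n+1)}x_{n+1}$ with $2^{-(n+1)}x_{n+1}\geq 0$. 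By \Cnormcr{} the lub $y=\sup_{n\in\Nat}y_n$ therefore exists in $P$ and satisfies $\Norm y\leq 1$.

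The crux is the absence of subtraction: since every $f(x_k)$ lies in the positive cone $Q$, linearity gives $f(y_n)=\sum_{k=1}^n 2^{-k}f(x_k)$, and dropping all summands except the last yields $f(y_n)\geq 2^{-n}f(x_n)$, whence by \Cnormpr{} and \Cnormhr{} we get $\Norm{f(y_n)}\geq 2^{-n}\Norm{f(x_n)}\geq 2^{-n}4^n=2^n$. On the other hand $y_n\leq y$ and $f$ is increasing, so $f(y_n)\leq f(y)$, and \Cnormpr{} gives $\Norm{f(y)}\geq\Norm{f(y_n)}\geq 2^n$ for every $n$. This contradicts $\Norm{f(y)}\in\Realp$. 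Hence $f(\Cuball P)$ is bounded, and $\Norm f=\sup_{x\in\Cuball P}\Norm{f(x)}$ is a well-defined element of $\Realp$.

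The main obstacle here is conceptual rather than computational: one must resist the Banach-space instinct that unbounded linear maps can be manufactured, and instead recognise that positivity forbids the cancellation which would otherwise prevent the aggregated element $y$ from simultaneously dominating each $2^{-n}f(x_n)$. Once this is seen, the only genuine care needed is to tune the geometric weights $2^{-k}$ and the growth rate $4^n$ so that the partial sums $y_n$ stay inside a fixed ball while $2^{-n}\Norm{f(x_n)}$ still diverges.
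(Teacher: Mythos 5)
Your proof is correct and is essentially identical to the paper's own argument: the same contradiction via a sequence with $\Norm{f(x_n)}\geq 4^n$, the same aggregated element $y_n=\sum_{k=1}^n 2^{-k}x_k$ whose lub exists by \Cnormcr{}, and the same use of positivity and monotonicity to force $\Norm{f(y)}\geq 2^n$ for all $n$. Your observation that no continuity hypothesis is needed matches the remark the paper makes immediately after its proof.
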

\begin{proof}
  See~\cite{Selinger04}, we give the proof because it is short and
  interesting.
  If the lemma does not hold there is a sequence \((x_n)_{n\in\Nat}\)
  such that \(\forall n\in\Nat\ \Norm{x_n}\leq 1\) and
  \(\forall n\in\Nat\ \Norm{f(x_n)}\geq 4^n\). Then let %
  \(y_n=\sum_{k=1}^n\frac1{2^k}x_k\in P\), we have %
  \(\Norm{y_n}\leq \sum_{k=1}^n\frac1{2^k}\Norm{x_n}\leq 1\) and %
  \((y_n)_{n\in\Nat}\) is an increasing sequence which therefore has a
  lub \(y\in\Cuball P\), and we have %
  \(\Norm{f(y)}\geq \Norm{f(y_n)} \geq\frac1{2^n}\Norm{f(x_n)}\geq
  2^n\) %
  by \Cnormpr{} and linearity of \(f\).
  Since this holds for all \(n\in\Nat\) we have a contradiction.
\end{proof}
\begin{remark}
  We have obtained this property without even requiring \(f\) to be
  \(\omega\)-continuous.
\end{remark}

\subsection{The category of cones and linear and continuous maps}
Given cones \(P\) and \(Q\) the set \(\Limpl PQ\) of all linear and
continuous maps from \(P\) to \(Q\), equipped with obvious pointwise
defined algebraic operations, is a precone. Notice that
\(f_1\leq f_2\) simply means that %
\(\forall x\in P\ f_1(x)\leq f_2(x)\) %
and then the difference is given by %
\((f_2-f_1)(x)=f_2(x)-f_1(x)\): it suffices to check that this latter
map is linear which is obvious, and that it is continuous which
results from Lemma~\ref{lemma:fun-diff-Scott}.
\begin{lemma} %
  \label{lemma:limpl-cone}
  Equipped with the norm defined in
  Lemma~\ref{lemma:line-cont-bounded}, the precone \(\Limpl PQ\) is a
  cone.
\end{lemma}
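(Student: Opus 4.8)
The plan is to check the four norm axioms \Cnormhr{}, \Cnormzr{}, \Cnormtr{}, \Cnormpr{} and the completeness axiom \Cnormcr{} for $\Limpl PQ$, knowing already that it is a precone and that $\Norm f=\sup_{x\in\Cuball P}\Norm{f(x)}$ is a well-defined element of $\Realp$ by Lemma~\ref{lemma:line-cont-bounded}. The homogeneity and triangle-inequality axioms reduce immediately to the corresponding axioms for $Q$ together with elementary properties of suprema in $\Realp$: \Cnormhr{} follows by pulling the scalar out of each $\Norm{f(x)}$; \Cnormtr{} follows from $\Norm{f_1(x)+f_2(x)}\leq\Norm{f_1(x)}+\Norm{f_2(x)}$ and subadditivity of $\sup$; and \Cnormzr{} follows because $\Norm f=0$ forces $f(x)=0$ on $\Cuball P$, hence on all of $P$ since every nonzero $x$ is a positive scalar multiple of an element of $\Cuball P$. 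For \Cnormpr{} I would note $(f_1+f_2)(x)\geq f_1(x)$ in $Q$, so $\Norm{f_1(x)}\leq\Norm{(f_1+f_2)(x)}$ by \Cnormpr{} for $Q$, and take the supremum over $\Cuball P$.

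The real content is the completeness axiom. Given an increasing sequence $(f_n)_{n\in\Nat}$ in $\Limpl PQ$ with $\Norm{f_n}\leq 1$, I would build the candidate lub $f$ pointwise: for each $x\in P$ the sequence $(f_n(x))_{n\in\Nat}$ is increasing in $Q$ and norm-bounded, since $\Norm{f_n(x)}\leq\Norm{f_n}\,\Norm x\leq\Norm x$, so by \Cnormcr{} for $Q$ (applied after rescaling into $\Cuball Q$) it has a lub, which I call $f(x)$. It then remains to show that $f$ is linear and continuous, that it is the least upper bound of the $f_n$ in $\Limpl PQ$, and that $\Norm f\leq 1$. Linearity of $f$ is where Lemma~\ref{lemma:cones-cone-add-scal} is needed: $\omega$-continuity of addition and scalar multiplication in $Q$ lets me commute these operations with the defining suprema, e.g.\ $f(x_1+x_2)=\sup_n(f_n(x_1)+f_n(x_2))=\sup_n f_n(x_1)+\sup_n f_n(x_2)=f(x_1)+f(x_2)$, and similarly for scalars.

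The main obstacle is the continuity of $f$, which requires exchanging two suprema. For a bounded increasing sequence $(x_k)$ in $P$ with lub $x$, the elements $a_{n,k}=f_n(x_k)$ form a doubly increasing, uniformly norm-bounded family (every $\Norm{a_{n,k}}\leq\Norm x$), so all the relevant iterated suprema exist in $Q$ by \Cnormcr{}. Using continuity of each $f_n$ I get $\sup_k a_{n,k}=f_n(x)$, hence $\sup_n\sup_k a_{n,k}=f(x)$, while by definition $\sup_n a_{n,k}=f(x_k)$, hence $\sup_k\sup_n a_{n,k}=\sup_k f(x_k)$. I would then establish the interchange $\sup_n\sup_k a_{n,k}=\sup_k\sup_n a_{n,k}$ purely order-theoretically: the left-hand side is an upper bound of every $a_{n,k}$, hence dominates each $\sup_n a_{n,k}$ and so their supremum over $k$, and symmetrically for the other direction; this yields $f(x)=\sup_k f(x_k)$. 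Finally, $f$ is the lub of $(f_n)$ because suprema are computed pointwise and the pointwise order is the cone order of $\Limpl PQ$, and $\Norm f\leq 1$ because for $x\in\Cuball P$ the sequence $(f_n(x))_n$ lies in $\Cuball Q$, so its lub $f(x)$ does too by \Cnormcr{} for $Q$, whence $\Norm{f(x)}\leq 1$ and $\Norm f=\sup_{x\in\Cuball P}\Norm{f(x)}\leq 1$.
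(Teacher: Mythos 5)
Your proof is correct and follows the standard route the paper has in mind: the paper itself only remarks that ``the proof is easy'' and omits the details, having already established in the preceding paragraph that $\Limpl PQ$ is a precone whose order is the pointwise one. Your verification of the norm axioms and, in particular, your pointwise construction of the lub together with the double-supremum interchange for continuity is exactly the intended (and correct) filling-in of those omitted details.
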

\noindent 
The proof is easy.
By definition of \(\Norm f\) and by \Cnormhr{} we have %
\begin{align*}
  \forall x\in P\quad \Norm{f(x)}\leq\Norm f\Norm x\,.
\end{align*}

\begin{definition}
  \label{def:cone-dual}
  We set \(\Cdual P=(\Limpl P\Sbot)\). %
  If \(x\in P\) and \(x'\in\Cdual P\) we write
  \(\Eval x{x'}=x'(x)\in\Realp\).
\end{definition}
\noindent 
Notice that, with these notations,
\(\Norm{x'}=\sup_{x\in\Cuball P}\Eval{x}{x'}\).

\begin{remark}
  The cone \(P'\) should be understood as an analog of the topological
  dual of a normed vector space; for instance one can define a linear
  and continuous morphism \(\eta:P\to P''\) which corresponds to the
  usual embedding of a vector space into its bidual.
  However, in the case of cones, this morphism \(\eta\) is not
  necessarily injective: see the counter-example in
  Remark~\ref{rk:cone-non-separe}.
  As already mentioned we know from~\cite{Selinger04} that an
  additional requirement on morphisms ---~namely, continuity, in the
  domain-theoretic sense~--- could guarantee the injectivity of this
  morphism; however continuity is too strong a requirement for what we
  are trying to accomplish.
  For this reason, the structures we define below (measurable, and
  later, integrable cones) will contain the axiom \(\Mssepr\) which
  states precisely that this morphism is injective.
\end{remark}

\begin{remark}[A cone whose dual is zero]
  \label{rk:cone-non-separe}
  The following construction provides a non-trivial cone \(P\) whose
  dual \(P'\) contains only \(0\). This construction was suggested to
  us by one of the reviewers: many thanks to her/him. Consider the
  quotient set \(P=P_0/\mathord\sim\), where \(P_0\) is the set of all
  bounded measurable maps from \([0,1]\) to \(\Realp\) (Example
  \ref{ex:cones-measurable-functions}), and \(\sim\) is the following
  equivalence relation: \(f\sim g\) if and only if \(f\) and \(g\)
  differ only on a meager\footnote{Recall that a subset of a
    topological space is meager if it is contained in a countable
    union of closed sets whose interiors are empty. Conversely, a
    subset is co-meager if it contains a countable intersection of
    dense open subsets.} subset of \([0,1]\).

  The set \(P_0\) inherits the structure of a cone from \(P\), with
  \(\Norm{[f]}\) defined as \(\inf\{\Norm g \St g \sim f\}\) (where
  \([f]\) denotes the equivalence class of \(f\) and
  \(\Norm g=\sup_{x\in[0,1]}g(x)\)). This is indeed a cone: the least
  obvious part is the fact that \(\Norm{[f]} = 0\) implies
  \([f] = 0\). Indeed, assume \(\Norm{[f]} = 0\). Then for all
  \(n > 0\), there exists \(f_n \sim f\) such that
  \(\Norm{f_n} \leq \frac{1}{n}\). Thus, there exists a co-meager set
  \(X_n\) such that \(f(x) \leq \frac{1}{n}\) for all \(x \in
  X_n\). The intersection of all the \(X_n\) is itself co-meager, and
  therefore \(f \sim 0\).

  We shall prove that \(P' = \{0\}\). Let \(\alpha \in P'\). The map
  \(f \mapsto \alpha([f])\) (from \(P_0\) to \(\Realp\)) must commute
  with countable sums, therefore there must exist a finite measure
  \(\mu\) on \([0,1]\) such that
  \(\alpha([f]) = \int_{x\in[0,1]} f(x) \mu(dx)\) for all
  \(f\). Moreover, we have that \(\mu(Y) = 0\) for all meager sets
  \(Y\). On the other hand, there exists a co-meager set \(X\) such
  that \(\mu(X) = 0\). Indeed, let \(D\) be a dense countable subset
  of \([0,1]\) such that \(\mu(D) = 0\) (such a set must exist because
  \(\mu\) has at most countably many atoms). For all \(n > 0\), there
  exists an open set \(X_n\) that contains \(D\) (and is therefore
  dense) and such that \(\mu(X_n) \leq \frac{1}{n}\). The intersection
  \(X\) of all the \(X_n\) is co-meager, and its measure is
  \(0\). Therefore,
  \(\mu([0,1]) = \mu(X) + \mu([0,1] \setminus X) = 0\), and thus
  \(\alpha = 0\).
\end{remark}

\begin{definition}
  The category \(\CONES\) has the cones as objects, and
  \(\CONES(P,Q)\) is the set of all linear and continuous \(f:P\to Q\)
  such that \(\Norm f\leq 1\).
\end{definition}

\begin{theorem}
  \label{th:cone-product}
  The category \(\CONES\) has all small products. Given a family \((P_i)_{i\in I}\) of cones (with
  no cardinality restrictions on \(I\)), their categorical product
  \((\Bwith_{i\in I}P_i,(\Proj i)_{i\in I})\) is defined as follows:
  \begin{itemize}
  \item \(\Bwith_{i\in I}P_i\) is the set of all
    \(\Vect x=(x_i)_{i\in I}\in\prod_{i\in I}P_i\) such that the
    family \((\Norm{x_i})_{i\in I}\) is bounded, equipped with the
    obvious algebraic operations defined componentwise
  \item and \(\Norm{\Vect x}=\sup_{i\in I}\Norm{x_i}\).
  \end{itemize}
  In particular, the terminal object (corresponding to the case where
  \(I=\emptyset\)) is \(\Stop\).
  
  The projections are the standard projections of the cartesian
  product in \(\SET\).
  Given a family of morphisms \((f_i\in\CONES(Q,P_i))_{i\in I}\), the
  associated morphism %
  \(f=\Tuple{f_i}_{i\in I}\in\CONES(Q,\Bwith_{i\in I}P_i)\) is
  characterized by \(f(y)=(f_i(y))_{i\in I}\).

  The cone order of \(\Bwith_{i\in I}P_i\) is the product of the cone
  orders of the \(P_i\)'s and the lubs of bounded sequences of
  elements of \(\Bwith_{i\in I}P_i\) are computed componentwise.
\end{theorem}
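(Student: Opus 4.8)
The plan is to verify, in order, that $\Bwith_{i\in I}P_i$ is a cone, that the projections belong to $\CONES$, and that the universal property holds; all the quantitative estimates will come directly from the $\sup$-definition of the norm.

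First I would check that $\Bwith_{i\in I}P_i$ is a cone. It is a $\Realp$-semimodule under the componentwise operations, and both \Pcsimplr{} and \Pcposr{} hold because they hold in each factor $P_i$. The norm $\Norm{\Vect x}=\sup_{i\in I}\Norm{x_i}$ is finite precisely because we restrict to families for which $(\Norm{x_i})_{i\in I}$ is bounded. Axioms \Cnormhr{}, \Cnormzr{} and \Cnormpr{} reduce immediately to the corresponding axioms in the $P_i$'s together with elementary facts about suprema, and \Cnormtr{} follows from $\Norm{x_i+y_i}\leq\Norm{x_i}+\Norm{y_i}\leq\Norm{\Vect x}+\Norm{\Vect y}$ for each $i$.

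The crux is the completeness axiom \Cnormcr{}, which I would prove simultaneously with the claim that lubs are computed componentwise. The definition of the cone order already yields the easy implication $\Vect x\leq\Vect y\Rightarrow\forall i\,x_i\leq y_i$, since a witness $\Vect z$ with $\Vect y=\Vect x+\Vect z$ decomposes componentwise; this is enough to see that for an increasing sequence $(\Vect{x}^{(n)})_{n\in\Nat}$ in $\Cuballp{\Bwith_{i\in I}P_i}$ each component sequence $(x_i^{(n)})_{n\in\Nat}$ is increasing in $P_i$, and it is norm-bounded by $1$. By \Cnormcr{} in $P_i$ it therefore has a lub $x_i\in P_i$ with $\Norm{x_i}\leq 1$, so $\Vect x=(x_i)_{i\in I}$ satisfies $\Norm{\Vect x}=\sup_i\Norm{x_i}\leq 1$ and lies in the ball. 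That $\Vect x$ is the \emph{least} upper bound is checked componentwise: it dominates each $\Vect{x}^{(n)}$, and any upper bound $\Vect y$ satisfies $x_i^{(n)}\leq y_i$ for all $n$, hence $x_i\leq y_i$. The converse implication $\forall i\,x_i\leq y_i\Rightarrow\Vect x\leq\Vect y$, needed to identify the cone order with the product order, uses the witness $z_i=y_i-x_i$: by \Cnormpr{} we have $\Norm{z_i}\leq\Norm{y_i}\leq\Norm{\Vect y}$, so $(\Norm{z_i})_i$ is bounded and $\Vect z$ genuinely lies in $\Bwith_{i\in I}P_i$.

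The remaining verifications are routine. Each $\Proj i$ is linear, satisfies $\Norm{\Proj i(\Vect x)}=\Norm{x_i}\leq\Norm{\Vect x}$, hence has norm $\leq 1$, and is $\omega$-continuous precisely because lubs are componentwise. Conversely, given $(f_i\in\CONES(Q,P_i))_{i\in I}$, the map $f(y)=(f_i(y))_{i\in I}$ lands in $\Bwith_{i\in I}P_i$ because $\Norm{f_i(y)}\leq\Norm{f_i}\Norm y\leq\Norm y$ by Lemma~\ref{lemma:line-cont-bounded}, whence $\Norm{f(y)}\leq\Norm y$ and $\Norm f\leq 1$; linearity and continuity of $f$ follow componentwise from those of the $f_i$'s, $\Proj i\circ f=f_i$ holds by construction, and uniqueness is forced since any competitor must agree with $f$ in each component. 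The terminal object arises when $I=\emptyset$, the empty family being the unique element, giving the one-point cone $\Stop$. I do not anticipate a genuine obstacle: the single delicate point is that the $\sup$-norm keeps the componentwise lub of a norm-bounded increasing sequence inside the unit ball, which is exactly where \Cnormcr{} for the factors is invoked.
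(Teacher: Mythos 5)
Your proof is correct. The paper itself gives no argument for this theorem (it simply cites Selinger), so there is nothing to diverge from; your direct verification is the standard one and, importantly, you isolate and correctly handle the only genuinely delicate point, namely that the cone order on \(\Bwith_{i\in I}P_i\) coincides with the product of the cone orders: the witness family \(\Vect z\) with \(z_i=y_i-x_i\) must itself be shown to have bounded norms, which you get from \Cnormpr{} via \(\Norm{z_i}\leq\Norm{y_i}\leq\Norm{\Vect y}\). With that identification in hand, the componentwise computation of lubs, the completeness axiom \Cnormcr{}, the norm estimates for the projections and the tupling, and the uniqueness clause all reduce to the corresponding facts in the factors exactly as you describe.
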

See~\cite{Selinger04}.
There is a clear similarity with the \(\ell^\infty\) construct of
Banach spaces.

As announced, we can check now that separate continuity implies
continuity.
\begin{lemma}
  \label{lemma:seprate-cont-implies-cont}
  Let \(P\), \(Q\) and \(R\) be cones, let \(A\subseteq P\) and
  \(B\subseteq Q\) be \(\omega\)-closed, so that \(A\times B\) is an
  \(\omega\)-closed subset of the product cone $\With PQ$, and let
  \(f:A\times B\to R\) be separately \(\omega\)-continuous (that is,
  for all \(y\in B\) the function \(\Absm{x\in A}f(x,y)\) is
  \(\omega\)-continuous and for all \(x\in A\) the function
  \(\Absm{y\in B}f(x,y)\) is \(\omega\)-continuous).
  Then \(f\) is \(\omega\)-continuous.
\end{lemma}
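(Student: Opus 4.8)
The plan is to unfold Definition~\ref{def:monotone-scott}: proving that $f$ is $\omega$-continuous amounts to proving that $f$ is increasing and that it preserves least upper bounds of bounded increasing sequences in $A\times B$. First I would dispatch monotonicity. Since $f$ is separately $\omega$-continuous, each partial function is in particular increasing. Given $(x_1,y_1)\leq(x_2,y_2)$ in $\With PQ$ --- which, by the description of the product cone in Theorem~\ref{th:cone-product}, means exactly $x_1\leq x_2$ and $y_1\leq y_2$ --- I would factor the comparison through one coordinate at a time, writing $f(x_1,y_1)\leq f(x_2,y_1)\leq f(x_2,y_2)$ by separate monotonicity in the first and then the second argument. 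Hence $f$ is increasing.

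Next I would take a bounded increasing sequence $(z_n)_{n\in\Nat}$ in $A\times B$ and write $z_n=(x_n,y_n)$. By Theorem~\ref{th:cone-product} the cone order and the lubs of bounded sequences in $\With PQ$ are computed componentwise, so $(x_n)$ and $(y_n)$ are bounded increasing sequences in $A$ and $B$, and their sups $x=\sup_n x_n$ and $y=\sup_n y_n$ lie in $A$ and $B$ by $\omega$-closedness, with $z=\sup_n z_n=(x,y)$. Note also that every off-diagonal pair $(x_m,y_n)$ lies in $A\times B$, so $f(x_m,y_n)$ is defined. Now I would use separate continuity twice: keeping $y$ fixed and applying continuity of $\Absm{x'\in A}{f(x',y)}$ gives $f(x,y)=\sup_m f(x_m,y)$, and keeping each $x_m$ fixed and applying continuity of $\Absm{y'\in B}{f(x_m,y')}$ gives $f(x_m,y)=\sup_n f(x_m,y_n)$. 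Each intermediate sequence is increasing and bounded above by $f(x,y)$ (via the monotonicity just established), so these suprema exist by the completeness axiom~\Cnormcr{}. Combining the two identities yields $f(x,y)=\sup_m\sup_n f(x_m,y_n)$.

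The crux is then to collapse this iterated supremum onto the diagonal. I would set $w=\sup_k f(x_k,y_k)$, which exists by~\Cnormcr{} since $(f(x_k,y_k))_k$ is increasing by joint monotonicity and bounded above by $f(x,y)$. For arbitrary $m,n$, monotonicity gives $f(x_m,y_n)\leq f(x_{\max(m,n)},y_{\max(m,n)})\leq w$, because $x_m\leq x_{\max(m,n)}$ and $y_n\leq y_{\max(m,n)}$. Thus $w$ is an upper bound of the doubly-indexed family, so $\sup_m\sup_n f(x_m,y_n)\leq w$, \Ie~$f(x,y)\leq w$; the reverse inequality $w\leq f(x,y)$ is immediate from monotonicity. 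Therefore $f(x,y)=w=\sup_n f(z_n)$, which is exactly preservation of the lub. I expect the genuinely load-bearing step to be the diagonal comparison $f(x_m,y_n)\leq f(x_{\max(m,n)},y_{\max(m,n)})$; everything else is bookkeeping --- checking that each intermediate sequence is bounded and increasing so that~\Cnormcr{} applies, and that the product sup is componentwise by Theorem~\ref{th:cone-product} --- and should present no real difficulty.
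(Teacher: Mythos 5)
Your proof is correct and follows essentially the same route as the paper's: write the lub of the product sequence componentwise, use separate continuity to get the iterated supremum $\sup_m\sup_n f(x_m,y_n)$, and collapse it onto the diagonal by monotonicity. You merely spell out the details the paper leaves implicit (deriving joint monotonicity from separate monotonicity and the $\max(m,n)$ comparison).
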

\begin{proof}
  Let \(((x_n,y_n)\in A\times B)_{n\in\Nat}\) be an increasing bounded
  sequence in \(A\times B\) so that \((x_n)_{n\in\Nat}\) is increasing
  and bounded in \(A\), and \((y_n)_{n\in\Nat}\) is increasing and
  bounded in \(B\), and %
  \(\sup_{n\in\Nat}(x_n,y_n)=(\sup_{n\in\Nat}x_n,\sup_{n\in\Nat}y_n)\).
  We have
  \begin{align*}
    f(\sup_{n\in\Nat}(x_n,y_n))
    &=\sup_{n\in\Nat}\sup_{k\in\Nat}f(x_n,y_k)
      \text{\quad by separate continuity}\\
    &=\sup_{n\in\Nat}f(x_n,y_n)
  \end{align*}
  because \(f\) is increasing.
\end{proof}

\begin{theorem}
  \label{th:alg-cones-equalizers}
  The category \(\CONES\) has all binary equalizers and therefore is
  complete.
  Moreover, if \((E,e\in\CONES(E,P))\) is the equalizer of
  \(f,g\in\CONES(P,Q)\) then \(e\) reflects the order relation: if
  \(x,y\in E\) satisfy \(e(x)\leq_P e(y)\) then \(x\leq_E y\).
\end{theorem}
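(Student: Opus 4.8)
The plan is to realize the equalizer as the evident subobject $E=\{x\in P\St f(x)=g(x)\}$, equipped with the operations, norm and order inherited from $P$, and to take $e$ to be the inclusion. First I would check that $E$ contains $0$ and is closed under addition and scalar multiplication, so that it is a sub-$\Realp$-semimodule of $P$; this uses only the linearity of $f$ and $g$. Since \Pcsimplr{}, \Pcposr{} and all the norm axioms except \Cnormcr{} are universally quantified closure properties, they are automatically inherited by the restriction of the norm of $P$ to $E$. What remains to be established for $E$ to be a cone is the completeness axiom \Cnormcr{}, and this is exactly where the real work --- and the ``moreover'' part of the statement --- lives.

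The heart of the argument is the claim that the cone order of $E$ coincides with the restriction of the cone order of $P$; equivalently, that $e$ reflects the order. So suppose $x_1,x_2\in E$ with $x_1\leq_P x_2$, and let $x=x_2-x_1$ be the unique element of $P$ with $x_2=x_1+x$. Using linearity of $f$ (Definition~\ref{def:monotone-scott}, which gives $f(x_2-x_1)=f(x_2)-f(x_1)$) together with $f(x_i)=g(x_i)$, I would compute $f(x)=f(x_2)-f(x_1)=g(x_2)-g(x_1)=g(x)$, so that $x\in E$ and hence $x_1\leq_E x_2$. This simultaneously proves the order-reflection statement and shows that the partial subtraction of $P$ restricts to $E$, so that $\leq_E$ is exactly $\leq_P$ restricted to $E\times E$.

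With order-reflection in hand, \Cnormcr{} for $E$ is straightforward: an increasing sequence $(x_n)$ in $E$ bounded by $1$ in norm is increasing in $P$, hence has a lub $x=\sup_n x_n$ in $P$ with $\Norm x\leq 1$; by $\omega$-continuity of $f$ and $g$ one gets $f(x)=\sup_n f(x_n)=\sup_n g(x_n)=g(x)$, so $x\in E$, and order-reflection ensures that $x$ is also the lub of $(x_n)$ computed in $E$. In particular lubs in $E$ agree with lubs in $P$, which makes the inclusion $e$ linear and $\omega$-continuous with $\Norm e\leq 1$, and clearly $f\Comp e=g\Comp e$.

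Finally I would verify the universal property. Given $h\in\CONES(R,P)$ with $f\Comp h=g\Comp h$, every value $h(z)$ satisfies $f(h(z))=g(h(z))$, so $h$ corestricts to a map $\tilde h:R\to E$ with $e\Comp\tilde h=h$; it is linear with $\Norm{\tilde h}\leq 1$ because the norm of $E$ is inherited, and $\omega$-continuous because lubs in $E$ are computed as in $P$. Uniqueness is immediate from the injectivity of $e$. Thus $\CONES$ has all binary equalizers, and combined with the small products of Theorem~\ref{th:cone-product} this yields completeness by the standard construction of limits from products and equalizers. The main obstacle is the order-reflection step: it is what guarantees that the domain-theoretic completeness of $P$ actually transfers to $E$ (rather than merely that $E$ happens to be closed under $P$-lubs), and without it one could not even be sure that $E$ is a cone in our sense.
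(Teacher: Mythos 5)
Your proposal is correct and follows essentially the same route as the paper's proof: the same subset $E=\{x\in P\St f(x)=g(x)\}$ with inherited operations and norm, the same order-reflection argument via $f(x_2-x_1)=f(x_2)-f(x_1)$, the same use of $\omega$-continuity of $f,g$ to establish \Cnormcr{}, and the same verification of the universal property. No substantive differences.
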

\begin{proof}
  Let \(f,g\in\CONES(P,Q)\).
  Let \(E=\{x\in P\St f(x)=g(x)\}\).
  We equip \(E\) with the algebraic operations of \(P\) which makes
  sense since \(f\) and \(g\) are linear: if \(x_1,x_2\in E\) then
  \(f(x_1+x_2)=f(x_1)+f(x_2)=g(x_1)+g(x_2)=g(x_1+x_2)\) so that
  \(x_1+x_2\in E\) and similarly for scalar multiplication.
  Next, for \(x\in E\), we set \(\Norm x_E=\Norm x_P\) which easily
  satisfies \Cnormhr{}, \Cnormzr{}, \Cnormtr{} and \Cnormpr{}.
  Let \(x_1,x_2\in E\).
  It is obvious that \(x_1\leq_E x_2\Implies x_1\leq_P x_2\).
  Conversely assume that \(x_1\leq_Px_2\) so that \(x_2-x_1\) exists
  in \(P\), we have %
  \(f(x_2-x_1)=f(x_2)-f(x_2)\) by linearity of \(f\) and similarly
  \(g(x_2-x_1)=g(x_2)-g(x_2)\) and hence \(x_2-x_1\in E\) so that
  \(x_1\leq_E x_2\).
  We have proven that \(x_1\leq_Ex_2\Equiv x_1\leq_Px_2\).

  We prove that the norm of \(E\) satisfies \Cnormcr{}, so let
  \((x_n)_{n\in\Nat}\) be a sequence which is increasing in \(E\) and
  such that \(\forall n\in\Nat\ \Norm{x_n}_E\leq 1\).
  Then this sequence is increasing in \(P\) and satisfies
  \(\forall n\in\Nat\ \Norm{x_n}_P\leq 1\) and hence it has a supremum
  \(x\in P\) such that \(\Norm x_P\leq 1\).
  Moreover by continuity of \(f\) and \(g\) we have \(x\in E\).
  We have \(\forall n\in\Nat\ x_n\leq_P x\) and hence
  \(\forall n\in\Nat\ x_n\leq_E x\).
  Let \(y\in E\) be such that \(\forall n\in\Nat\ x_n\leq_E y\), we
  have \(\forall n\in\Nat\ x_n\leq_P y\) and hence \(x\leq_P y\) which
  implies \(x\leq_E y\).
  This shows that \(x\) is the supremum of the \(x_n\)'s in \(E\), and
  since \(\Norm x_E=\Norm x_P\leq 1\), we have proven \Cnormcr{} and
  hence \(E\) is a cone.
  Let \(e:E\to P\) be the inclusion of \(E\) into \(P\), it is clear
  that \(e\in\CONES(E,P)\).

  Finally, if \(h\in\CONES(Q,P)\) satisfies \(f\Compl h=g\Compl h\),
  we have that \(h(u)\in E\) for all \(u\in Q\) by definition of
  \(E\).
  So \(h=e\Compl h'\) where \(h':Q\to E\) is defined exactly as \(h\)
  (the only difference is the codomain), and we have
  \(h'\in\CONES(Q,E)\) since the operations in \(E\) are defined as in
  \(P\).
  The uniqueness of \(h'\) results from the injectivity of \(e\).
  So \((E,e)\) is the equalizer of \(f\) and \(g\).

  It follows that \(\CONES\) is complete since it has also all small
  products.
\end{proof}

\begin{lemma}
  \label{lemma:norm-inv-iso-cones}
  Let \(P\) and \(Q\) be cones and assume that \(P\not=\Zero\).
  Let \(f:P\to Q\) be linear and continuous, and bijective.
  Then \(\Norm f\not=0\) and \(\Norm{\Inv f}\geq\Inv{\Norm f}\).
\end{lemma}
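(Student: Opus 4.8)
The plan is to prove the two assertions separately, both resting on the elementary norm inequality $\Norm{f(x)}\leq\Norm f\,\Norm x$ recorded just after Lemma~\ref{lemma:limpl-cone}, together with the injectivity of $f$ and the hypothesis $P\neq\Zero$.

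First I would establish $\Norm f\neq 0$. Suppose toward a contradiction that $\Norm f=0$. Since $\Norm f=\sup_{x\in\Cuball P}\Norm{f(x)}$, this forces $\Norm{f(x)}=0$ for every $x\in\Cuball P$, whence $f(x)=0$ by \Cnormzr{}. By linearity (scaling an arbitrary element into the unit ball) it follows that $f(x)=0$ for all $x\in P$, i.e.\ $f$ is the zero map. But $P\neq\Zero$ provides some $x_0\neq 0$, and then $f(x_0)=0=f(0)$ contradicts injectivity of $f$. Hence $\Norm f\neq 0$.

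For the inequality on $\Norm{\Inv f}$, Lemma~\ref{lemma:linear-inverse} guarantees that $\Inv f$ is itself linear and continuous, so the same norm inequality applies to it, giving $\Norm{\Inv f(y)}\leq\Norm{\Inv f}\,\Norm y$ for all $y\in Q$. Specializing to $y=f(x)$ and using $\Inv f(f(x))=x$ yields
\[
  \Norm x \;\leq\; \Norm{\Inv f}\,\Norm{f(x)} \;\leq\; \Norm{\Inv f}\,\Norm f\,\Norm x
\]
for every $x\in P$. Picking any $x\in P$ with $\Norm x>0$ (such an $x$ exists because $P\neq\Zero$, again via \Cnormzr{}) and dividing by $\Norm x$ gives $1\leq\Norm{\Inv f}\,\Norm f$; since $\Norm f\neq 0$ this rearranges to $\Norm{\Inv f}\geq\Inv{\Norm f}$.

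I do not expect a genuine obstacle here: the argument is the direct cone-theoretic transcription of the classical fact that a bounded linear bijection satisfies $\Norm f\,\Norm{\Inv f}\geq 1$. The only steps needing mild care are the two appeals to the standing hypotheses: injectivity is what rules out $\Norm f=0$, and $P\neq\Zero$ is what supplies an element of strictly positive norm to divide by. Both are essential precisely because, in contrast with the Banach space setting, nothing in the axioms prevents $\Norm f$ from collapsing to $0$ for a degenerate map on a trivial cone.
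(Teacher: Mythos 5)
Your proof is correct and follows essentially the same route as the paper's: both reduce to the classical inequality $\Norm{\Inv f}\,\Norm f\geq 1$ obtained from the operator-norm bounds for $f$ and $\Inv f$ (the latter being legitimate thanks to Lemma~\ref{lemma:linear-inverse}), together with the existence of a nonzero element of $P$. The only cosmetic differences are that the paper rules out $\Norm f=0$ via surjectivity (the image would be $\{0\}\neq Q$) where you use injectivity, and that it runs an $\epsilon$-argument on the ratio $\Norm{f(x)}/\Norm x$ where your direct composition of the two bounds is, if anything, a little cleaner.
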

\begin{proof}
  By assumption there is \(x\in P\) such that \(x\not=0\) and since
  \(f\) is bijective and \(f(0)=0\), we have that \(Q\not=0\).
  If follows that \(\Norm f\not=0\) and similarly \(\Norm{\Inv f}\not=0\).
  Let \(\epsilon>0\), we can find \(x\in P\) such that
  \(\Norm{f}\geq\frac{\Norm{f(x)}}{\Norm x}-\epsilon\) hence %
  \(\frac{\Norm x}{\Norm{f(x)}}\geq\frac 1{\Norm f+\epsilon}\).
  Setting \(y=f(x)\), we have %
  \(\frac{\Norm{\Inv f(y)}}{\Norm y}\geq\frac 1{\Norm f+\epsilon}\) and
  hence %
  \(\Norm{\Inv f}\geq\frac 1{\Norm f+\epsilon}\) and since this holds
  for all \(\epsilon>0\) we get \(\Norm{\Inv f}\geq\Inv{\Norm f}\).
\end{proof}

\begin{proposition}
  If \(f\in\CONES(P,Q)\) is an iso and \(P\not=\Zero\), then
  \(\Norm f=1\).
\end{proposition}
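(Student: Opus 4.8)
The plan is to read the result off from Lemma~\ref{lemma:norm-inv-iso-cones}, which already packages all the analytic content. First I would unwind what ``iso'' means in \(\CONES\): an isomorphism \(f\) comes equipped with a two-sided inverse \(g\in\CONES(Q,P)\), so that \(g\Compl f=\Id_P\) and \(f\Compl g=\Id_Q\) as morphisms, hence also as set-theoretic maps. In particular \(f\) is a linear and continuous bijection with \(\Inv f=g\), and both \(\Norm f\leq 1\) and \(\Norm{\Inv f}=\Norm g\leq 1\) hold simply by the definition of the morphisms of \(\CONES\).

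Since \(\Norm f\leq 1\) is already half of the claim, only \(\Norm f\geq 1\) remains, and this is where I would invoke Lemma~\ref{lemma:norm-inv-iso-cones}. As \(P\neq\Zero\) and \(f\) is a linear, continuous bijection, that lemma yields \(\Norm f\neq 0\) (so that the reciprocal is meaningful) together with the reverse estimate \(\Norm{\Inv f}\geq\Inv{\Norm f}\). Combining this with \(\Norm{\Inv f}\leq 1\) gives \(\Inv{\Norm f}\leq\Norm{\Inv f}\leq 1\), that is \(\Norm f\geq 1\); together with \(\Norm f\leq 1\) this forces \(\Norm f=1\).

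There is essentially no obstacle at this level: the genuine work —~exhibiting a vector on which \(f\) nearly attains its norm and transporting it back through \(\Inv f\)~— has already been done in the proof of Lemma~\ref{lemma:norm-inv-iso-cones}, and the hypothesis \(P\neq\Zero\) is precisely what guarantees \(\Norm f\neq 0\) there and thereby licenses the division. The only thing to verify carefully is the harmless categorical translation that an iso of \(\CONES\) really is a linear continuous bijection whose functional inverse lies in \(\CONES\), so that the lemma applies verbatim.
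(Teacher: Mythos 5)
Your proof is correct and follows essentially the same route as the paper: both rest on Lemma~\ref{lemma:norm-inv-iso-cones} together with the fact that \(f\) and \(\Inv f\), being morphisms of \(\CONES\), have norm \(\leq 1\). The only (harmless) difference is that you combine the lemma's inequality with \(\Norm{\Inv f}\leq 1\) to get \(\Norm f\geq 1\) in one step, whereas the paper first derives \(\Norm{\Inv f}=1\) from \(\Norm f\leq 1\) and then invokes the symmetric argument for \(\Inv f\).
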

\begin{proof}
  By Lemma~\ref{lemma:norm-inv-iso-cones} we have
  \(\Norm{\Inv f}\geq\Inv{\Norm f}\) and since \(f\) is a morphism
  \(\Norm f\leq 1\) so \(\Norm{\Inv f}\geq 1\).
  Hence \(\Norm{\Inv f}=1\) since \(\Inv f\) is a morphism.
  Applying this to \(\Inv f\) we get \(\Norm f=1\).
\end{proof}
\noindent 
The following technical lemma will be useful for proving that our
category of integrable cones is well-powered in
Theorem~\ref{th:icones-conditions-saft}, a crucial property for being
able to apply the special adjoint functor theorem.
\begin{lemma}
  \label{lemma:alg-cone-transport}
  Let \(P\) be a cone, \(S\) be a set and \(f:P\to S\) be a bijective
  function.
  There is exactly one cone structure on \(S\) for which \(f\) becomes
  an iso in \(\CONES\).
\end{lemma}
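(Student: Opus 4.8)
The approach is transport of structure along the bijection $f$. For $s_1,s_2\in S$ and $\lambda\in\Realp$ I would define
\begin{align*}
  s_1+s_2&=f(\Inv f(s_1)+\Inv f(s_2)), & \lambda s&=f(\lambda\,\Inv f(s)),\\
  0_S&=f(0), & \Norm{s}_S&=\Norm{\Inv f(s)}\,.
\end{align*}
By construction $f$ then commutes with the algebraic operations, sends $0$ to $0_S$, and is norm-preserving, so that $\Norm{f(x)}_S=\Norm x$ for every $x\in P$; this is visibly the \emph{only} candidate once one knows that $f$ must be an iso, as the uniqueness argument below confirms.

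For existence I would first check that $S$, with this structure, is a cone. Each precone axiom (\Pcsimplr{}, \Pcposr{}) and each of the norm axioms \Cnormhr{}, \Cnormzr{}, \Cnormtr{}, \Cnormpr{} is a universally quantified first-order property of the operations and the norm, and therefore transports verbatim from $P$ to $S$ through the bijection $f$. For the completeness axiom \Cnormcr{} I would note that, since addition transports, the cone order of $S$ is the image of that of $P$: $s_1\leq s_2$ in $S$ iff $\Inv f(s_1)\leq\Inv f(s_2)$ in $P$. Hence $f$ and $\Inv f$ are order isomorphisms, and $f$ carries the lub of a norm-bounded increasing sequence to the lub of its image, with the norm bound preserved; this is exactly \Cnormcr{} for $S$. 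Once $S$ is a cone, $f$ is linear and norm-preserving by construction and $\omega$-continuous because it preserves these lubs, so $\Norm f\leq 1$; by Lemma~\ref{lemma:linear-inverse} its inverse is also linear and continuous, and $\Inv f$ is norm-preserving as well, so $\Norm{\Inv f}\leq 1$. Both $f$ and $\Inv f$ are therefore morphisms of $\CONES$, and $f$ is an iso.

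For uniqueness, suppose $S$ carries any cone structure making $f$ an iso. Linearity of $f$ forces the algebraic data: writing $s_i=f(x_i)$, we get $s_1+s_2=f(x_1)+f(x_2)=f(x_1+x_2)=f(\Inv f(s_1)+\Inv f(s_2))$, and likewise $\lambda s=f(\lambda\,\Inv f(s))$ and $0_S=f(0)$, so the operations coincide with the transported ones. For the norm, both $f$ and $\Inv f$ are morphisms, so $\Norm{f(x)}_S\leq\Norm x$ and $\Norm{\Inv f(s)}\leq\Norm s_S$; substituting $s=f(x)$ in the second inequality yields $\Norm x\leq\Norm{f(x)}_S$, whence $\Norm{f(x)}_S=\Norm x$ and the norm is forced too.

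The only delicate point is this last step: it is precisely the requirement that \emph{both} $f$ and $\Inv f$ be norm-nonincreasing (norm $\leq 1$) that pins down the norm of $S$, the weaker fact that an iso has norm $1$ being insufficient by itself. The degenerate case $P=\Zero$ is immediate, since then $S$ is a singleton carrying a unique cone structure.
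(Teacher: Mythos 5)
Your proposal is correct and follows exactly the paper's approach: transport of structure along $f$, with the same definitions of the operations and norm on $S$; the paper simply declares the verification and uniqueness "straightforward"/"obvious" where you spell them out. The extra details you supply (in particular that uniqueness of the norm hinges on \emph{both} $f$ and $\Inv f$ being norm-nonincreasing) are accurate and consistent with the paper's intent.
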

\begin{proof}
  Given \(s_1,s_2\in S\), we set
  \(s_1+s_2=f(\Inv f(s_1)+\Inv f(s_2))\) and similarly
  \(\lambda s=f(\lambda\Inv f(s))\) for \(s\in S\) and
  \(\lambda\in\Realp\).
  And we set \(\Norm s_S=\Norm{\Inv f(s)}_P\).
  It is straightforward that one defines a cone in that way, and that
  \(f\) is an iso.
  It is also obvious that this structure of cone one \(S\) is the
  only one such that \(f\) is an iso.
\end{proof}

\section{Measurable cones}
\label{sec:measurable-cones}
Let \(\ARCAT\) be a \emph{small}%
\footnote{This assumption is crucial for making the use of the special
  adjoint functor theorem possible.} %
full subcategory of \(\MEAS\) (the category of measurable spaces and
measurable functions) which is closed under cartesian products and
contains the terminal object \(\Measterm\) which is the one point
measurable space (we use this notation because the one point
measurable space is geometrically \(0\)-dimensional).
We also assume all the objects of \(\ARCAT\) to be non-empty
measurable spaces.
We also use \(\ARCAT\) for the set of all objects of the category
\(\ARCAT\).

\begin{remark}\label{rmk:ar-main-example}
  In most situations we could assume that \(\ARCAT\) has all finite
  products \(\Real^n\) as objects, and measurable maps as morphisms.
  We could even assume that \(\ARCAT\) has \(\Real\) as single object,
  or more precisely, two objects: \(\Real\) and \(\Measterm\) since
  all the \(\Real^n\) are isomorphic%
  \footnote{Such isomorphisms involve however non canonical encoding
    methods so we prefer to avoid using this property explicitly.} %
  to \(\Real\) in \(\MEAS\) for \(n>0\). %
\end{remark}

\begin{definition}
  \label{def:meas-structure}
  A \emph{measurability structure} on a cone \(P\) is a family %
  \(\cM=(\cM_X)_{X\in\ARCAT}\) with %
  \(\cM_X\subseteq(\Cdual P)^{X}\) (where we recall that
  \(\Cdual P=\Limplp P\Sbot\) is the dual of the cone \(P\)) %
  satisfying the four next conditions %
  \Msmesr{}, \Mscompr{}, \Mssepr{} and \Msnormr{}.
  When \(X=\Measterm\) we consider \(m\in\cM_X\) as an element of %
  \(\Cdual P\).
  \begin{Axicond}\Msmesl{}
    For each \(m\in\cM_X\) and \(x\in \Cuball P\), one has
    \(\Absm{r\in X}{m(r,x)}\in\MEAS(X,\Intcc01)\) where
    \(\Intcc01\subseteq\Real\) is equipped with its standard Borel
    \(\sigma\)-algebra.
    This implies in particular that if \(r\in X\), then %
    \(\Absm{x\in P}{m(r,x)}\in\CONES(P,\Sone)\).
  \end{Axicond}
  \begin{Axicond}\Mscompl{}
    For each \(m\in\cM_X\) and \(\phi\in\ARCAT(Y,X)\) one has %
    \( \Absm{(s,x)\in(Y\times P)}{m(\phi(s),x)}
    =m\Comp(\phi\times P) \in\cM_Y\).
  \end{Axicond}
  \noindent
  In particular, since \(\Measterm\) is the terminal object of
  \(\ARCAT\), each element \(m\in\cM_\Measterm\) induces an element %
  \(\Absm{(r,x)\in(X\times P)}{m(x)}\in\cM_X\) and in this way we
  consider \(\cM_\Measterm\) as a subset of \(\cM_X\) for each
  \(X\in\ARCAT\).
  \begin{Axicond}\Mssepl{}
    If \(x_1,x_2\in P\) satisfy %
    \(\forall m\in\cM_0\ m(x_1)= m(x_2)\) then \(x_1= x_2\).
  \end{Axicond}
\begin{Axicond}\Msnorml{}
  For all \(x\in P\), one has %
  \[
    \Norm x=\sup\Big\{\frac{m(x)}{\Norm m}\St m\in\cM_0\text{ and }m\not=0\Big\}
  \]
  or, equivalently,
  \(\Norm x\leq\sup\{m(x)/\Norm m\St m\in\cM_0\text{ and
  }m\not=0\}\). %
\end{Axicond}
\end{definition}
Indeed, for each \(x'\in\Cdual P\setminus\Eset 0\) and \(x\in P\) one
has %
\(\Norm x\geq\frac{\Eval x{x'}}{\Norm{x'}}\).
\begin{remark}
  The condition \Msnormr{} can also be formulated as follows: for
  each \(x\in P\setminus\Eset 0\) and for each \(\epsilon>0\) there
  exists %
  \(m\in\cM_0\setminus\Eset 0\) such that
  \(\Norm x\leq\frac{m(x)}{\Norm m}+\epsilon\).
  The condition that \(x\not=0\) is required because we could possibly
  have \(\cM_0=\Eset 0\), but in that situation, by \Mssepr{} we must
  have \(P=\Eset 0\).
  The condition \Msnormr{} was absent in~\cite{Ehrhard20} which made it
  much more difficult to prove that the category of cones and linear
  measurable cones is symmetric monoidal: we had to use a property of
  density of the category of PCSs.
  \Msnormr{} makes the whole theory much better behaved.
\end{remark}

\begin{remark}
  \label{rk:small-test-set}
  We do not require \(\cM_0\) to be the whole unit ball of the dual
  \(P'\), but only a subset of it, sufficiently large for satisfying
  our requirements \Mscompr{}, \Mssepr{} and \Msnormr{}.
  As we will see in various constructions, \(\cM_0\) will often be a
  very small part of this unit ball.
\end{remark}

\begin{remark}
  Instead of~\Mssepr{} we could also consider the following stronger
  separation condition: %
  if for all \(m\in\cM_0\) one has \(m(x)\leq m(y)\) then \(x\leq y\).
  However this would complicate the definition of the measurability
  structures of the spaces of stable and measurable functions in
  Section~\ref{sec:icone-stable-fun} and of analytic functions in
  Section~\ref{sec:analytic-functions-exp}.
  This stronger separability does not seem to be necessary (at least
  for the purpose of what we do in this paper) but one should keep in
  mind that all our constructions could be performed within this
  restricted class.
\end{remark}

\begin{definition}[Measurable cone]
  A \emph{measurable cone} is a pair %
  \(C=(\Mcca C,\Mcms C)\) where \(\Mcca C\) is a cone and %
  \(\Mcms C\) is a measurability structure on \(\Mcca C\).
\end{definition}

\noindent 
The main purpose of the measurability structure of a measurable cone
is to equip the underlying cone with a structure of QBS by defining a
collection of paths ranging in the cone.

\begin{definition}[Measurable path]
  Let \(X\in\ARCAT\) and let \(C\) be a measurable cone.
  A \emph{(measurable) path} of arity \(X\) is a function
  \(\gamma:X\to\Mcca C\) which is bounded and such that, for each %
  \(Y\in\ARCAT\) and \(m\in\Mcms C_{Y}\), the function %
  \(\Absm {(s,r)\in{Y\times X}}{m(s,\gamma(r))}: {Y\times
    X}\to\Realp\) is measurable.
  We use \(\Mcca{\Cpath XC}\) for the set of measurable paths of arity
  \(X\) of the measurable cone \(C\).
\end{definition}

\begin{remark}
  Measurable paths should be thought of as a generalization of finite
  kernels, and in the case where \(C\) is the measurable cone of
  finite measures on a measurable space \(Y\), each bounded kernel from
  \(X\) to \(Y\) is a measurable path from \(X\) to \(C\).
  One of the purposes of introducing integrals is to make the converse
  true.
\end{remark}

\begin{lemma} %
  \label{lemma:cst-path}
  Let \(x\in\Mcca C\) and \(\gamma=\Absm{r\in X}{x}:X\to\Mcca C\) be
  the constant function.
  Then \(\gamma\) is a measurable path.
\end{lemma}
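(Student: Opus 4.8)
The plan is to verify directly the two clauses in the definition of a measurable path for the constant function $\gamma=\Absm{r\in X}{x}$: boundedness, and measurability of $(s,r)\mapsto m(s,\gamma(r))$ for every arity $Y\in\ARCAT$ and every test $m\in\Mcms C_{Y}$. Boundedness is immediate, since $\gamma(X)=\{x\}$ is a singleton and is therefore bounded by $\Norm x$.

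For the measurability clause, fix $Y\in\ARCAT$ and $m\in\Mcms C_{Y}$. The key observation is that $(s,r)\mapsto m(s,\gamma(r))=m(s,x)$ does not depend on the second variable $r$, so it factors as $\Absm{s\in Y}{m(s,x)}\Comp\pi$, where $\pi:Y\times X\to Y$ is the first projection. Since $\ARCAT$ is closed under products and projections are measurable in $\MEAS$, it suffices to prove that the single-variable map $\Absm{s\in Y}{m(s,x)}:Y\to\Realp$ is measurable.

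When $x\in\Cuball P$ this is exactly the content of axiom \Msmesr{}. For a general $x\in\Mcca C$ I would reduce to this case by rescaling. If $x=0$, then $m(s,0)=0$ for all $s$, since each $m(s)\in\Cdual P$ is linear, and a constant map is measurable. If $x\neq 0$, then $\frac{1}{\Norm x}x\in\Cuball P$ by \Cnormhr{}, and by linearity of the functional $m(s)$ we have $m(s,x)=\Norm x\,m(s,\frac{1}{\Norm x}x)$. The map $\Absm{s\in Y}{m(s,\frac{1}{\Norm x}x)}$ is measurable by \Msmesr{}, and multiplication by the constant $\Norm x$ preserves measurability, so $\Absm{s\in Y}{m(s,x)}$ is measurable as required. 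This establishes that $\gamma\in\Mcca{\Cpath XC}$.

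There is no genuine obstacle here; the only point requiring attention is that \Msmesr{} is stated only for elements of the unit ball, so the argument must pass through the normalized vector $\frac{1}{\Norm x}x$ in order to cover arbitrary $x\in\Mcca C$. Everything else is a routine factorization through the projection and an appeal to closure of $\ARCAT$ under products.
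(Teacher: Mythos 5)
Your proof is correct and matches the paper's intent exactly: the paper dismisses this lemma with ``this immediately results from the definitions,'' and your argument is just that direct verification spelled out, with the rescaling through $\frac{1}{\Norm x}x$ being the right way to handle the fact that \Msmesr{} is stated only on the unit ball.
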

\noindent 
This immediately results from the definitions.

\begin{lemma} %
  \label{lemma:precomp-path}
  Let \(\gamma:X\to\Mcca C\) be a measurable path and let %
  \(\phi\in\ARCAT(Y,X)\) for some \(Y\in\ARCAT\). Then %
  \(\gamma\Comp\phi:Y\to\Mcca C\) is also a measurable path.
\end{lemma}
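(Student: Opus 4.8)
The plan is to verify the two defining properties of a measurable path directly for $\gamma\Comp\phi$, viewed as a candidate path of arity $Y$: boundedness, and the measurability of all its test composites. Morally the statement is the functoriality of the measurable-path construction under precomposition by a morphism of $\ARCAT$, so I expect the argument to be short, with the only real content being a factorization through a product map.

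First I would dispatch boundedness. Since $(\gamma\Comp\phi)(Y)=\gamma(\phi(Y))\subseteq\gamma(X)$ and $\gamma$ is bounded by hypothesis, the image of $\gamma\Comp\phi$ lies in a bounded subset of $\Mcca C$, hence $\gamma\Comp\phi$ is bounded.

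For the measurability condition, I would fix an arbitrary $Z\in\ARCAT$ and $m\in\Mcms C_Z$ and show that $\Absm{(s,t)\in Z\times Y}{m(s,(\gamma\Comp\phi)(t))}$ is measurable. The key step is the factorization $m(s,\gamma(\phi(t)))=g(s,\phi(t))$, where $g=\Absm{(s,r)\in Z\times X}{m(s,\gamma(r))}:Z\times X\to\Realp$ is measurable precisely because $\gamma$ is a measurable path. Writing $\Id_Z\times\phi$ for the map $Z\times Y\to Z\times X$ sending $(s,t)$ to $(s,\phi(t))$, this factorization reads as the identity of functions $\Absm{(s,t)\in Z\times Y}{m(s,(\gamma\Comp\phi)(t))}=g\Comp(\Id_Z\times\phi)$.

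The only point needing (entirely routine) justification is that $\Id_Z\times\phi$ is measurable for the product $\sigma$-algebras, which is immediate from the measurability of $\phi$ together with the universal property of the product $\sigma$-algebra. The desired test composite is then a composition of two measurable maps, hence measurable, and since $Z$ and $m$ were arbitrary this yields that $\gamma\Comp\phi$ is a measurable path of arity $Y$. I do not anticipate any genuine obstacle here: the substance of the proof is exactly the observation that precomposition with $\phi$ on the source corresponds, after applying a test, to precomposition with $\Id_Z\times\phi$ on the product, which stays inside the measurable category.
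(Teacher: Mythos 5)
Your proof is correct and follows essentially the same route as the paper's: boundedness is immediate from $\gamma(\phi(Y))\subseteq\gamma(X)$, and the test composite is factored as $(\Absm{(s,r)\in Z\times X}{m(s,\gamma(r))})\Comp(\Id_Z\times\phi)$, a composition of measurable maps. The paper's proof is exactly this factorization (it does not even bother to mention boundedness, which you rightly dispatch in one line).
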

\begin{proof}
  Let \(Y'\in\ARCAT\) and \(m\in\Mcms C_{Y'}\), we have %
  \(\Absm{(s',s)\in{Y'\times Y}}{m(s',\gamma(\phi(s)))}
  =(\Absm{(s',r)\in{Y'\times X}}{m(s',\gamma(r))})
  \Comp({Y'}\times\phi)\)
  which is measurable as the composition of two measurable maps.
\end{proof}

\noindent 
We turn the cone \(\Sone=\Sbot\) into a measurable cone by defining %
\(\Mcms\Sone_X\) as the set of all functions mapping each element
\(r\in X\) to \(\Id:\Realp\to\Realp\) for all \(X\in\ARCAT\). 

\begin{proposition} %
  \label{th:norm-dual}
  Let \(B\) be a measurable cone and let \(x\in\Mcca B\).
  Then
  \begin{align*}
    \Norm x=\sup_{x'\in\Cuball{\Cdual{\Mcca B}}}\Eval x{x'}\,.
  \end{align*}
\end{proposition}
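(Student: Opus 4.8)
The plan is to prove the two inequalities separately, reading the statement as a strengthening of axiom~\Msnormr{} in which the fixed test family $\Mcms B_0$ is replaced by the full unit ball $\Cuball{\Cdual{\Mcca B}}$ of the dual cone.

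For the inequality $\sup_{x'\in\Cuball{\Cdual{\Mcca B}}}\Eval x{x'}\leq\Norm x$ I would argue termwise. Given any $x'\in\Cuball{\Cdual{\Mcca B}}$, we have $\Eval x{x'}=x'(x)$, and applying the bound $\Norm{f(x)}\leq\Norm f\Norm x$ recorded just after Lemma~\ref{lemma:limpl-cone} to the linear continuous map $f=x'$ (and using that the norm of $\Sbot$ is the identity, so that $\Norm{x'(x)}=x'(x)$) gives $\Eval x{x'}\leq\Norm{x'}\Norm x\leq\Norm x$, since $\Norm{x'}\leq 1$. Taking the supremum over $x'$ yields this direction.

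For the reverse inequality I would invoke axiom~\Msnormr{}, which already gives $\Norm x=\sup\{m(x)/\Norm m\St m\in\Mcms B_0\text{ and }m\neq 0\}$. The crucial observation is that axiom~\Msmesr{} forces $\Mcms B_0\subseteq\Cuball{\Cdual{\Mcca B}}$: for $m\in\Mcms B_0$ and $x\in\Cuball{\Mcca B}$ one has $m(x)\in\Intcc01$, whence $\Norm m=\sup_{x\in\Cuball{\Mcca B}}m(x)\leq 1$. Consequently, for each nonzero $m\in\Mcms B_0$ the normalized functional $m/\Norm m$ again lies in $\Cuball{\Cdual{\Mcca B}}$ and satisfies $\Eval x{m/\Norm m}=m(x)/\Norm m$. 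Hence every term of the supremum in~\Msnormr{} has the form $\Eval x{x'}$ with $x'\in\Cuball{\Cdual{\Mcca B}}$, so $\Norm x\leq\sup_{x'\in\Cuball{\Cdual{\Mcca B}}}\Eval x{x'}$.

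It remains to dispatch the degenerate cases: if $x=0$ both sides vanish, because $0\in\Cuball{\Cdual{\Mcca B}}$ and $\Eval 0{x'}=0$ for every $x'$; and if $\Mcms B_0=\Eset 0$ then axiom~\Mssepr{} forces $\Mcca B=\Eset 0$, so again both sides are $0$. The only point requiring care---and thus the (mild) main obstacle---is the inclusion $\Mcms B_0\subseteq\Cuball{\Cdual{\Mcca B}}$ coming from~\Msmesr{}; once it is in hand, the normalized tests $m/\Norm m$ are genuine elements of the dual unit ball and the rest of the argument is routine.
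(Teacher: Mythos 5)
Your proof is correct and follows essentially the same route as the paper: the upper bound comes from the definition of the dual norm, and the lower bound from normalizing the tests supplied by \Msnormr{} into elements of $\Cuball{\Cdual{\Mcca B}}$. The only remark is that the inclusion $\Mcms B_0\subseteq\Cuball{\Cdual{\Mcca B}}$ you flag as the crucial point is not actually needed, since $\Norm{m/\Norm m}=1$ holds automatically by homogeneity of the dual norm, which is all the argument requires.
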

\begin{proof}
  By definition of the norm in \(\Cdual{{\Mcca B}}\) we have %
  \(\Norm x\geq\Eval x{x'}\) for all \(x'\in\Cuball{\Cdual{\Mcca B}}\).
  We can assume that \(x\not=0\) since otherwise the announced
  equation trivially holds.
  Let \(\epsilon>0\) and \(m\in\Mcms B_\Measterm\setminus\Eset 0\) %
  be such that %
  \(\Norm x\leq\frac{m(x)}{\Norm m}+\epsilon\).
  Let \(x'=m/\Norm m\), we have \(x'\in\Cuball{\Cdual{\Mcca B}}\).
  Since \(\Norm x\leq\Eval{x}{x'}+\epsilon\) our contention is proven.
\end{proof}

\begin{remark}
  One main purpose of the condition \Msnormr{} is to get the above
  highly desirable property.
  We could have expected to get \Mssepr{} and \Msnormr{} for free by
  means of a Hahn Banach theorem for cones as in~\cite{Selinger04}.
  However, the counter-example of Remark~\ref{rk:cone-non-separe},
  suggested to us by one of the reviewers of this paper, shows that such a
  separation property does not hold in our setting.
  The very nice Hahn Banach theorem proven in~\cite{Selinger04} relies
  on the assumption that cones are continuous domains, an assumption
  that we cannot afford here because we need our cones to define a
  complete category in order to apply the special adjoint functor
  theorem which is our main tool for equipping \(\ICONES\) with a
  tensor product and with an exponential.
  Fortunately, we can take this Hahn Banach separation property as one
  of our axioms on the measurability tests, and proving that this
  property is preserved by all limits does not induce noticeable
  technical difficulties.
\end{remark}

\subsection{The category of measurable cones and linear, continuous and measurable maps}
We can now define our first main category of interest.
\begin{definition}
  The category \(\MCONES\) has measurable cones as objects and an
  element of \(\MCONES(B,C)\) is an \(f\in\CONES(\Mcca B,\Mcca C)\)
  such that for each \(X\in\ARCAT\) and each measurable path %
  \(\beta:X\to\Mcca B\) the function \(f\Comp\beta\) is a measurable
  path.
  Equivalently %
  \begin{align*}
    \forall Y\in\ARCAT\ \forall m\in\Mcms C_Y
    \quad\Absm{(s,r)\in{X\times Y}}{m(s,f(\beta(r)))}
    \text{ is measurable.}
  \end{align*}
\end{definition}

\begin{remark} %
  \label{rk:mcones_iso}
  An isomorphism \(f\in\MCONES(B,C)\) is a bijection %
  \(f:\Mcca B\to\Mcca C\) which is linear and continuous, satisfies
  \(\forall x\in\Mcca B\ \Norm{f(x)}_C=\Norm x_B\) and, for each
  \(X\in\ARCAT\) and each function \(\beta:X\to\Mcca B\), one
  has
  \(\beta\in\Mcca{\Cpath XB}\Equiv f\Comp\beta\in\Mcca{\Cpath XC}\).
  This means that \(B\) and \(C\) can be isomorphic even if the
  measurability structures \(\Mcms C\) and \(\Mcms D\) are quite
  different: it suffices that they induce the same measurable paths.
\end{remark}

\begin{definition}%
  \label{def:mes-cone-homothetie}  
  Let \(B\) be a measurable cone and \(\alpha\in\Real\) with
  \(\alpha>0\).
  Then \(\alpha B\) is the measurable cone which is defined exactly as
  \(B\) apart for the norm which is given by %
  \(\Norm x_{\Mcca{\alpha B}}=\Inv\alpha\Norm x_{\Mcca B}\).
\end{definition}
\noindent 
Notice that
\(\Cuball{(\Mcca{\alpha B})}=\alpha\Cuball{\Mcca B}=\Eset{x\in\Mcca
  B\St\Norm x_{\Mcca B}\leq\alpha}\).

\subsection{Examples: the measurable cones of measures and paths}
\label{sec:examples-of-cones}
We introduce two important examples of measurable cones.
The measurable cone of finite measures on an object of \(\ARCAT\) will
allow us to understand \(\ARCAT\) as our category of basic data-types.

As noticed by one of the reviewers, it would not be strictly necessary
to introduce the measurable cone of measurable paths since we will see
that, in the setting of integrable cones, the cone of measurable paths
can be described as an internal linear hom, see
Theorem~\ref{th:meas-path-equiv}.
Our motivations for presenting this construction are:
\begin{itemize}
\item it illustrates for the first time the reason why our tests have
  parameters in objects of \(\ARCAT\);
\item it is a simple and natural construction, quite different from
  the cone of finite measures (and somehow dual to it), and completely
  independent from our integrability assumptions.
\end{itemize}

\subsubsection{The measurable cone of finite measures} %
\label{sec:cone-finite-meas}

Let \(X\) be a measurable space (not necessarily in \(\ARCAT\)).
Recall that in Section~\ref{sec:basic-cone-finite-meas} we defined the
cone \(\Mcca {\Cmeas(X)}\) of all finite measures on \(X\).

For all \(Y \in \ARCAT\) and all \(U \in \Sigalg{X}\) we define
\(\Emeas U:Y \times \Mcca {\Cmeas(X)}\to\Realp\) by
\(\Emeas U(s, \mu)=\mu(U)\).
Then we define \(\cM_Y=\Eset{\Emeas U\St U \in \Sigalg{X}}\), and
\(\Cmeas(Y)=(\Mcca {\Cmeas(X)},(\cM_Y)_{Y\in\ARCAT})\) is clearly a
measurable cone.

\begin{remark}
  \label{rk:measure-cone-two-ms}
  We could have taken another measurability structure as follows.
  For all \(Y \in \ARCAT\) and all \(W \in \Sigalg{Y\times X}\) we
  define \(\Emeas W:Y \times \Mcca {\Cmeas(X)}\to\Realp\) by
  \(\Emeas W(s, \mu)=\mu(\{r \St (s, r) \in W\})\).
  Then we define
  \(\cM_Y=\Eset{\Emeas W\St W \in \Sigalg{Y\times X}}\).
  Then \(\Cmeas(X)=(\Mcca {\Cmeas(X)},(\cM_Y)_{Y\in\ARCAT})\) is a
  measurable cone.
  As easily checked, these two measurability structures define exactly
  the same measurable paths on \(\Cmeas(X)\).
  This example shows that a given cone (namely \(\Mcca {\Cmeas(X)}\))
  can be given two distinct measurability structures.
  This is also an example of the situation mentioned in
  Remark~\ref{rk:mcones_iso}: the measurability cones defined by these
  two measurability structures are isomorphic in the category \(\MCONES\).
\end{remark}

Notice that if \(Y\in\ARCAT\) then a measurable path
\(\gamma:Y\to\Mcca{\Cmeas(X)}\) is the same thing as a bounded
kernel from \(Y\) to \(X\).

Let \(\phi\in\MEAS(X,Y)\) (remember that this means that \(\phi\) is
a measurable function \(X\to Y\)), then given %
\(\mu\in\Mcca{\Cmeas(X)}\) we can define %
\(\nu={\Pushf\phi(\mu)}\in\Mcca{\Cmeas(Y)}\) by %
\(\nu(V)=\mu(\Inv\phi(V))\) for each \(V\in\Sigalg Y\) (the push-forward
of \(\mu\) along \(\phi\)).
\begin{lemma}
  \label{lemma:pushf-measurable}
  We have \(\Pushf\phi\in\MCONES(\Cmeas(X),\Cmeas(Y))\).
  The operation \(\Cmeas\) on measurable cones extends to a functor %
  \(\Funpushf:\ARCAT\to\MCONES\), acting on morphisms by measure
  push-forward: \(\Cmeas(\phi)=\Pushf\phi\).
\end{lemma}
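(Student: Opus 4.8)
The plan is to check the four morphism conditions for $\Pushf\phi$ (linearity, $\omega$-continuity, norm bound, path preservation) and then the two functor laws. First I would exploit that every structural datum of $\Mcca{\Cmeas(X)}$ is computed pointwise on measurable sets (Section~\ref{sec:basic-cone-finite-meas}): fixing $V\in\Sigalg Y$ and writing $U=\Inv\phi(V)$, the identity $\Pushf\phi(\mu)(V)=\mu(U)$ turns linearity into $\Pushf\phi(\mu_1+\lambda\mu_2)(V)=\mu_1(U)+\lambda\mu_2(U)$ and continuity into $\Pushf\phi(\sup_n\mu_n)(V)=\sup_n\mu_n(U)$, both of which are instances of the pointwise definitions. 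Taking $V=Y$, so that $U=\Inv\phi(Y)=X$, gives $\Norm{\Pushf\phi(\mu)}=\Pushf\phi(\mu)(Y)=\mu(X)=\Norm\mu$; hence $\Pushf\phi$ is norm-preserving and $\Norm{\Pushf\phi}=1\leq 1$, so $\Pushf\phi\in\CONES(\Mcca{\Cmeas(X)},\Mcca{\Cmeas(Y)})$.

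Next I would verify path preservation, which is the only clause with genuine content. Let $Z\in\ARCAT$ and let $\gamma:Z\to\Mcca{\Cmeas(X)}$ be a measurable path; boundedness of $\Pushf\phi\Comp\gamma$ is immediate since $\Pushf\phi$ preserves the norm. The key point is that a target test pulls back to a source test: because $\phi$ is measurable, $\Inv\phi(V)\in\Sigalg X$ for every $V\in\Sigalg Y$, and
\[
  \Emeas V(s,\Pushf\phi(\gamma(r)))
  =\Pushf\phi(\gamma(r))(V)
  =\gamma(r)(\Inv\phi(V))
  =\Emeas{\Inv\phi(V)}(s,\gamma(r))\,.
\]
Since $\Emeas{\Inv\phi(V)}\in\Mcms{\Cmeas(X)}_{W}$ and $\gamma$ is a path, the right-hand side is measurable in $(s,r)\in W\times Z$ for each $W\in\ARCAT$; as every $m\in\Mcms{\Cmeas(Y)}_{W}$ is of the form $\Emeas V$, this yields $\Pushf\phi\Comp\gamma\in\Mcca{\Cpath Z{\Cmeas(Y)}}$, and therefore $\Pushf\phi\in\MCONES(\Cmeas(X),\Cmeas(Y))$.

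Finally, functoriality follows from the contravariance of inverse images. For identities, $\Inv{(\Id_X)}(U)=U$ gives $\Pushf{\Id_X}=\Id$. For composition, given $\phi\in\ARCAT(X,Y)$ and $\psi\in\ARCAT(Y,Z)$, the relation $\Inv{(\psi\Comp\phi)}=\Inv\phi\Comp\Inv\psi$ yields $\Pushf{\psi\Comp\phi}(\mu)(W)=\mu(\Inv\phi(\Inv\psi(W)))=\Pushf\psi(\Pushf\phi(\mu))(W)$ for every measurable $W$, so $\Cmeas(\psi\Comp\phi)=\Cmeas(\psi)\Comp\Cmeas(\phi)$. Hence $\Funpushf$ is a functor $\ARCAT\to\MCONES$.

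I do not expect a real obstacle: each clause is a one-line consequence of the pointwise description of the measure cone and of the definition of its measurability structure through the evaluation tests $\Emeas U$. The only step beyond bookkeeping is path preservation, whose entire content is the identity $\Emeas V\Comp(\Id\times\Pushf\phi)=\Emeas{\Inv\phi(V)}$, which trades a test on $\Cmeas(Y)$ for a test on $\Cmeas(X)$ via measurability of $\phi$; everything else reduces to the contravariance $\Inv{(\psi\Comp\phi)}=\Inv\phi\Comp\Inv\psi$.
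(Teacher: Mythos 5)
Your proof is correct and follows essentially the same route as the paper: the structural clauses are dispatched by the pointwise definitions on $\Sigalg Y$, and path preservation is reduced to the same key identity $\Emeas V(s'',\Pushf\phi(\gamma(r)))=\gamma(r)(\Inv\phi(V))$, which trades a test on $\Cmeas(Y)$ for the test $\Emeas{\Inv\phi(V)}$ on $\Cmeas(X)$ via measurability of $\phi$. The only difference is that you spell out the functor laws and the norm computation, which the paper declares obvious.
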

\begin{proof}
  Linearity and continuity being obvious, as well as the fact that %
  \(\Norm{\Cmeas(f)}\leq 1\), we only have to check measurability.
  Let \(\kappa : Y'\to\Mcca{\Cmeas(X)}\) be a measurable
  path.
  We must prove that \(\kappa'=\Cmeas(\phi)\Comp\kappa\) is a
  measurable path.
  Let \(p\in\Mcms{\Cmeas(Y)}_{Y''}\) for some \(Y''\in\ARCAT\), that
  is \(p=\Emeas V\) for some \(V\in\Sigalg Y\).
  For \((s'',s')\in Y''\times Y'\) we have
  \(p(s'',\kappa'(s'))=\kappa'(s')(V)=\kappa(s')(\Inv\phi(V))\) and
  hence \(\Absm{(s'',s')\in Y''\times Y'}{p(s'',\kappa'(s'))}\) is
  measurable because \(\kappa\) is a kernel and \(\phi\) is
  measurable.
  Functoriality of \(\Cmeas\) is obvious.
\end{proof}

\subsubsection{The measurable cone of paths}
\label{sec:meas-cone-of-paths}
Let \(C\) be an object of \(\MCONES\) and \(X\in\ARCAT\). %
Let \(P\) be the set of all measurable paths %
\(\gamma:X\to\Mcca C\).
We turn \(P\) into a precone by defining the algebraic laws in the
obvious pointwise manner.
For instance let \(\gamma_1,\gamma_2\in P\), we define
\(\gamma=\gamma_1+\gamma_2\) by \(\gamma(r)=\gamma_1(r)+\gamma_2(r)\)
which is bounded by \Cnormtr.
To check measurability, take \(m\in\Mcms C_Y\), we have %
\( \Absm {(s,r)\in Y\times X}{m(s,\gamma(r))} =\Absm {(s,r)\in Y\times
  X}{m(s,{\gamma_1}(r))} +\Absm {(s,r)\in Y\times
  X}{m(s,{\gamma_2}(r))} \) (pointwise addition) by linearity of \(m\)
in its second parameter, which is measurable in \(r\) by continuity of
addition on \(\Realp\).

Then we have \(\gamma_1\leq\gamma_2\) iff %
\(\forall r\in X\ \gamma_1(r)\leq\gamma_2(r)\): it suffices to check
that, when this latter condition holds, the map
\(\Absm{r\in X}{(\gamma_2(r)-\gamma_1(r))}\) is a path which results
from the continuity (and hence measurability) of subtraction of real
numbers.

Given \(\gamma\in P\) we set
\begin{align*}
  \Norm\gamma=\sup_{r\in X}\Norm{\gamma(r)}\in\Realp
\end{align*}
which is well defined by our assumption that \(\gamma\) is
bounded.
This satisfies all the required conditions for turning \(P\) into a
cone, the only non obvious one being \(\Cnormcr\).
So let \((\gamma_n)_{n\in\Nat}\) be an increasing sequence of elements of
\(P\) such that %
\(\forall n\in\Nat\,\forall r\in X\ \Norm{\gamma_n(r)}\leq
1\).
We define \(\gamma:X\to P\) by %
\(\gamma(r)=\sup_{n\in\Nat}\gamma_n(r)\in\Cuball{\Mcca C}\) which is
well defined since for each \(r\in X\) the sequence
\((\gamma_n(r))_{n\in\Nat}\) is increasing in \(\Cuball{\Mcca C}\).
It suffices to check that \(\gamma\) satisfies the measurability
condition, so let \(Y\in\ARCAT\) and \(m\in\Mcms C_Y\), we have by
\(\omega\)-continuity of \(m\) in its second argument
\begin{align*}
  \Absm{(s,r)\in Y\times X}{m(s,\gamma(r))}=
  \Absm{(s,r)\in Y\times X}{\sup_{n\in\Nat}m(s,\gamma_n(r))}
\end{align*}
which is measurable by the monotone convergence theorem of measure
theory (observing that
\((\Absm{(s,r)\in Y\times X}{m(s,\gamma_n(r))})_{n\in\Nat}\) is an increasing
sequence of measurable functions \(Y\times X\to\Intcc 01\)).

\begin{remark}
  Remember that it is precisely for being able to prove this kind of
  properties that we assume the unit balls of cones to be complete
  only for increasing chains and not for arbitrary directed sets.
\end{remark}
\noindent 
So far we have equipped \(P\) (the set of measurable paths from \(X\)
to \(C\)) with a structure of cone in the algebraic sense of
Section~\ref{sec:algebraic-cones}.
We equip now this cone with a measurability structure.
This definition will illustrate, for the first time in this paper, the
usefulness of the ``additional'' parameter of tests, spanning
measurable spaces taken in \(\ARCAT\).

Let \(Y\in\ARCAT\), \(\phi\in\ARCAT(Y,X)\) and
\(m\in\Mcms C_Y\), we define
\begin{align*}
  \Mtpath\phi m: Y\times P&\to\Realp\\
  (s,\gamma)&\mapsto m(s,\gamma(\phi(s)))\,.
\end{align*}
Observe first that for each \(s\in Y\), the map %
\(\Absm{\gamma\in P}{(\Mtpath\phi m)}(s,\gamma)\) is
linear and continuous by linearity and continuity of \(m\) in its
second argument.
We check that the family %
\((\cM_Y\subseteq {P'}^{Y})_{Y\in\ARCAT}\) %
defined by %
\( \cM_Y =\{\Mtpath\phi m\St \phi\in\ARCAT(Y,X)\text{ and
}m\in\Mcms{C}_Y\} \) is a measurability structure on \(P\).

\Proofcase\Msmesr{}.
Let \(p\in\cM_Y\) and \(\gamma\in P\), so that \(p=\Mtpath\phi m\) for
some \(\phi\in\ARCAT(Y,X)\) and \(m\in\Mcms C_Y\), then let %
\( \theta=\Absm{s\in Y}{p(s,\gamma)} =\Absm{s\in
  Y}{m(s,\gamma(\phi(s)))}\). %
We know that %
\(\psi=\Absm{(s,r)\in Y\times X}{m(s,\gamma(r))}\) is measurable %
\(Y\times X\to\Intcc 01\) and hence %
\(\theta=\psi\Comp\Tuple{Y,\phi}\) is measurable %
\(Y\to\Intcc01\) since \(\phi\) is measurable.

\Proofcase\Mscompr{}.
Let \(p\in\cM_Y\) and \(\psi\in\ARCAT(Y',Y)\).
We have %
\(p=\Mtpath\phi m\) for some \(\phi\in\ARCAT(Y,X)\) and %
\(m\in\Mcms C_Y\).
Then we have %
\(
p\Comp(\psi\times P)
=\Mtpath{(\phi\Comp\psi)}{(m\Comp(\psi\times\Mcca C))}
\in\cM_{Y'}
\).

\Proofcase\Mssepr{}.
Let \(\gamma_1,\gamma_2\in P\) and assume that %
\(\forall p\in\cM_0\ p(\gamma_1)=p(\gamma_2)\). %
Let \(r\in X\) that we consider as an element of %
\(\ARCAT(\Measterm,X)\).
Let \(m\in\Mcms C_\Measterm\), by our assumption we have %
\((\Mtpath rm)(\gamma_1)=(\Mtpath rm)(\gamma_2)\), that is %
\(m(\gamma_1(r))=m(\gamma_2(r))\) and since this holds for all %
\(m\in\Mcms C_\Measterm\) we have \(\gamma_1(r)=\gamma_2(r)\) by \Mssepr{}
in \(C\).

\Proofcase\Msnormr{}.
Let \(\gamma\in P\setminus\Eset 0\) and \(\epsilon>0\).
We can find \(r\in X\) such that \(\gamma(r)\not=0\) and
\(\Norm\gamma\leq\Norm{\gamma(r)}+\frac\epsilon 2\). %
By \Msnormr{} holding in \(C\) we can find
\(m\in\Mcms C_\Measterm\setminus\Eset 0\) such that %
\(\Norm{\gamma(r)}\leq\frac{m(\gamma(r))}{\Norm m}+\frac\epsilon
2\).
Remember that \(\Mtpath rm\in\cM_\Measterm\)
and notice that %
\(\Norm{\Mtpath rm}=\sup\{m(\delta(r))\St\delta\in\Cuball P\}=\Norm
m\) %
by Lemma~\ref{lemma:cst-path}.
So we have %
\begin{align*}
  \Norm\gamma
  \leq\Norm{\gamma(r)}+\frac\epsilon 2
  \leq\frac{(\Mtpath rm)(\gamma)}{\Norm{\Mtpath rm}}+\epsilon
\end{align*}
and hence %
\(\Norm\gamma=\sup\{\frac{p(\gamma)}{\Norm p}\St p\in\cM_0\text{ and
}p\not=0\}\) as required since
\(\cM_\Measterm
=\{\Mtpath rm\St r\in X\text{ and }m\in\Mcms C_\Measterm\}\).

We use \(\Cpathm XC\) for the measurable cone \((P,\cM)\) defined above.
We end this section with the following lemma which will be useful when
dealing with the tensor product of measurable cones, and in particular
for proving Theorem~\ref{th:fmeas-prod-tensor}.
\begin{lemma} %
  \label{lemma:meas-path-flat}
  Let \(B\) be a cone and \(X,Y\in\ARCAT\).
  There is an iso %
  \[
    \Flpath_{X,Y}\in\MCONES(\Cpathm X{\Cpathm Y B},\Cpathm{X\times Y}B)
  \]
  which ``flattens'' %
  \(\eta\in\Mcca{\Cpathm X{\Cpathm Y B}}\) into %
  \(\Flpath_{X,Y}(\eta)=\Absm{(r,s)\in{X\times Y}}{\eta(r)(s)}\). %
  As a consequence %
  \[
    \Inv{\Flpath_{Y,X}}\Compl\Flpath_{X,Y}\in
    \MCONES(\Cpathm X{\Cpathm Y B},\Cpathm Y{\Cpathm X B})\,,
  \]
  the function which swaps the parameters of a path of paths, is an
  iso in \(\MCONES\).
\end{lemma}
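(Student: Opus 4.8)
The plan is to exhibit $\Flpath_{X,Y}$ together with the evident ``un-flattening'' (currying) map as its set-theoretic inverse, and then to check that both are morphisms of $\MCONES$, so that $\Flpath_{X,Y}$ is an iso. Write $C=\Cpathm Y B$. An element of $\Mcca{\Cpathm X C}$ is a bounded $\eta:X\to\Mcca C$ such that for every $W\in\ARCAT$, $\psi\in\ARCAT(W,Y)$ and $m\in\Mcms B_W$ the function $\Absm{(w,r)\in W\times X}{m(w,\eta(r)(\psi(w)))}$ is measurable — this is exactly the $\eta$-condition against the tests $\Mtpath\psi m\in\Mcms C_W$ of the inner path cone. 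The candidate inverse sends a path $\delta\in\Mcca{\Cpathm{X\times Y}B}$ to $\Absm{r\in X}{\Absm{s\in Y}{\delta(r,s)}}$.

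First I would check well-definedness in both directions, which is where closure of $\ARCAT$ under products is used. For flattening: given $\eta$ as above and a test $m\in\Mcms B_Z$, pull $m$ back along the projection $\pi_Z:Z\times Y\to Z$ using \Mscompr{} to get $m'=m\Comp(\pi_Z\times\Mcca B)\in\Mcms B_{Z\times Y}$, and apply the $\eta$-condition with $W=Z\times Y$ and $\psi=\pi_Y$. Since $(\Mtpath{\pi_Y}{m'})((z,y),\eta(r))=m(z,\eta(r)(y))$, the map $\Absm{((z,y),r)}{m(z,\eta(r)(y))}$ is measurable on $(Z\times Y)\times X$, and permuting the measurable factors shows $\Absm{(z,(r,s))}{m(z,\eta(r)(s))}$ is measurable, i.e.\ $\Flpath_{X,Y}(\eta)$ is a path of arity $X\times Y$. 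For un-flattening: each section $\Absm{s}{\delta(r,s)}$ is a path because precomposing the measurable $\Absm{(z,(r,s))}{m(z,\delta(r,s))}$ with the measurable $\Absm{(z,s)}{(z,(r,s))}$ (constant $r$) stays measurable; and the whole map is a path into $C$ because precomposing that same function with $\Absm{(w,r)}{(w,(r,\psi(w)))}$ yields precisely $\Absm{(w,r)}{m(w,\delta(r,\psi(w)))}$. The two constructions are visibly mutually inverse.

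Next, $\Flpath_{X,Y}$ is linear and norm-preserving since both are defined pointwise and $\Norm\eta=\sup_r\sup_s\Norm{\eta(r)(s)}=\Norm{\Flpath_{X,Y}(\eta)}$, and it is continuous because lubs of bounded increasing sequences are computed pointwise in every path cone. The crucial remaining point — which I expect to be the real content — is that $\Flpath_{X,Y}$ and its inverse preserve measurable paths; this follows from a bijection between tests. A test $\Mtpath\phi{(\Mtpath\psi m)}$ of $\Cpathm X C$ (with $\phi\in\ARCAT(W,X)$, $\psi\in\ARCAT(W,Y)$, $m\in\Mcms B_W$) satisfies $(\Mtpath\phi{(\Mtpath\psi m)})(w,\eta)=m(w,\eta(\phi(w))(\psi(w)))=(\Mtpath{\Tuple{\phi,\psi}}m)(w,\Flpath_{X,Y}(\eta))$, where $\Tuple{\phi,\psi}\in\ARCAT(W,X\times Y)$ is the pairing; conversely every test $\Mtpath\chi m$ of $\Cpathm{X\times Y}B$ arises this way with $\phi=\pi_X\Comp\chi$ and $\psi=\pi_Y\Comp\chi$. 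Hence a map $\beta$ into $\Cpathm X C$ is a measurable path iff $\Flpath_{X,Y}\Comp\beta$ is one, and by Remark~\ref{rk:mcones_iso} $\Flpath_{X,Y}$ is an iso of $\MCONES$.

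Finally, for the consequence, since both $\Flpath_{X,Y}$ and $\Flpath_{Y,X}$ are isos, so is $\Inv{\Flpath_{Y,X}}\Compl\Flpath_{X,Y}$, read up to the canonical iso $\Cpathm{X\times Y}B\cong\Cpathm{Y\times X}B$ induced by precomposing paths with the swap of the product (a path map by Lemma~\ref{lemma:precomp-path}); a one-line computation then shows the composite sends $\eta$ to $\Absm{s\in Y}{\Absm{r\in X}{\eta(r)(s)}}$, i.e.\ it swaps the two path parameters. The main obstacle throughout is the bookkeeping of the auxiliary test-parameter spaces and the systematic use of products in $\ARCAT$ — projections for the \Mscompr{} pullbacks and pairings for assembling $\Tuple{\phi,\psi}$ — to match tests across $\Flpath_{X,Y}$; once the test correspondence is set up cleanly, everything else is pointwise routine.
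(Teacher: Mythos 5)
Your proof is correct and follows essentially the same route as the paper's: well-definedness in both directions via pulling tests back along projections (using that \(\ARCAT\) is closed under products), and the morphism property via the bijection between tests \(\Mtpath\phi{(\Mtpath\psi m)}\) of \(\Cpathm X{\Cpathm YB}\) and \(\Mtpath{\Tuple{\phi,\psi}}m\) of \(\Cpathm{X\times Y}B\), which the paper also invokes ("the obvious bijection \dots{} induced by the fact that \(\ARCAT\) is cartesian"). Your packaging of the measurability check through the test correspondence plus Remark~\ref{rk:mcones_iso}, and your explicit mention of the swap iso \(\Cpathm{X\times Y}B\cong\Cpathm{Y\times X}B\) needed for the final composite, are minor presentational improvements rather than a different argument.
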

\begin{proof}
  Let \(\eta\in\Mcca{\Cpathm X{\Cpathm Y B}}\), we need first to prove
  that %
  \(\eta'=\Flpath(\eta)\in\Mcca{\Cpathm{X\times Y}B}\) so let %
  \(Y'\in\ARCAT\) and let \(m\in\Mcms B_{Y'}\), we must prove that
  \begin{align*}
    \phi=\Absm{(s',r,s)\in{Y'\times X\times Y}}{m(s',\eta'(r,s))}
    =\Absm{(s',r,s)\in{Y'\times X\times Y}}{m(s',\eta(r)(s))}
  \end{align*}
  is measurable.
  Let %
  \(m'=m\Comp(\Proj 1\times\Mcca B)\in\Mcms B_{Y'\times Y}\) (that is
  \(m'(s',s,x)=m(s',x)\)) so that %
  \(\Mtpath{\Proj 2}{m'}\in\Mcms{\Cpathm YB}_{Y'\times Y}\), we know that %
  \(\Absm{(s',s,r)\in{Y'\times Y\times X}}{(\Mtpath{\Proj
      2}{m'})(s',s,\eta(r))}
  =\Absm{(s',s,r)\in{Y'\times Y\times X}}{m(s',\eta(r)(s))}\) %
  is measurable from which it follows that \(\phi\) is measurable. %
  Moreover it is clear that %
  \(\eta'({X\times Y})\subseteq\Norm\eta\Cuball{\Mcca B}\) is bounded
  in %
  \(\Mcca B\) and hence \(\eta'\in\Mcca{\Cpathm{X\times Y}B}\) as announced.

  The linearity and \(\omega\)-continuity of \(\Flpath\) are clear so we
  check its measurability.
  Let \(Y'\in\ARCAT\) and let %
  \(\eta\in\Mcca{\Cpathm{Y'}{\Cpathm{X}{\Cpathm YB}}}\), we must prove
  that %
  \[
    \Flpath\Comp\eta\in\Mcca{\Cpathm{Y'}{\Cpathm{X\times Y}B}}\,.
  \]
  So let %
  \(Y''\in\ARCAT\) and let %
  \(p\in\Mcms{\Cpathm{X\times Y}B}_{Y''}\).
  Let \(\phi'=\Tuple{\phi,\psi}\in\ARCAT(Y'',X\times Y)\) and %
  \(m\in\Mcms{B}_{Y''}\) be such that %
  \(p=\Mtpath{\phi'}m\), we have that %
  \begin{align*}
    \phi''&=\Absm{(s'',s')\in{Y''\times Y'}}{p(s'',\Flpath(\eta(s')))}\\
    &=\Absm{(s'',s')\in{Y''\times Y'}}{m(s'',\Flpath(\eta(s'))
      (\phi(s''),\psi(s'')))}\\
    &=\Absm{(s'',s')\in{Y''\times Y'}}{m(s'',\eta(s')
      (\phi(s''))(\psi(s'')))}
  \end{align*}
  is measurable because %
  \begin{align*}
    \phi''
    =\Absm{(s'',s')\in{Y''\times Y'}}
    {(\Mtpath{\phi}{(\Mtpath{\psi}{m})})
    (s'',\eta(s'))}
  \end{align*}
  and by our assumption about \(\eta\).
  Last notice that %
  \(\Norm{\Flpath(\eta)}=\Norm{\eta}\) which shows that %
  \(\Flpath\in\MCONES(\Cpathm X{\Cpathm Y B},\Cpathm{X\times Y}B)\).

  As to the converse direction, given %
  \(\eta\in\Mcca{\Cpathm{X\times Y}B}\) let %
  \(\Flpath'(\eta)=\Absm{r\in X}{\Absm{s\in Y}{\eta(r,s)}}\), %
  we must first prove that %
  \(\Flpath'(\eta)\in\Mcca{\Cpathm X{\Cpathm Y B}}\), we just check
  measurability, boundedness being obvious. Let %
  \(p\in\Mcms{\Cpathm YB}_{Y'}\) for some \(Y'\in\ARCAT\).
  Let \(\phi\in\ARCAT(Y',Y)\) and \(m\in\Mcms B_{Y'}\) be such that
  \(p=\Mtpath{\phi}{m}\), we must prove that %
  \(\psi=\Absm{(s',r)\in{Y'\times X}}{p(s',\Flpath'(\eta)(r))}
  =\Absm{(s',r)\in{Y'\times X}}{m(s',\eta(r,\phi(s')))}\) is
  measurable.
  This follows from the fact that \(\phi\) and %
  \(\Absm{(s',r,s)\in{Y'\times X\times Y}}{m(s',\eta(r,s))}\) are
  measurable, the latter by our assumption about \(\eta\).

  Checking that \(\Flpath'\) is a morphism in \(\MCONES\) follows
  exactly the same pattern as for \(\Flpath\), using the obvious
  bijection between %
  \(\Mcms{\Cpathm{X}{\Cpathm YB}}_{Y'}\) and %
  \(\Mcms{\Cpathm{X\times Y}{B}}_{Y'}\) induced by the fact that %
  \(\ARCAT\) is cartesian.
  Finally the observation that %
  \(\Flpath'=\Inv\Flpath\) shows that \(\Flpath\) is an iso in
  \(\MCONES\).
\end{proof}

\begin{remark}
  \label{rk:test-are-paramerized}
  So a test on the space of \(C\)-valued and \(X\)-parameterized paths
  is provided by a test \(m\in\Mcms C_Y\) ---~itself parameterized by
  a space \(Y\in\ARCAT\)~--- and a ``variable argument'' which is a
  measurable function \(\phi\) from \(Y\) to the space \(X\).
  When \(\phi\) is not a constant function, the value of
  \((\Mtpath\phi m)(s,\gamma)=m(s,\gamma(\phi(s)))\) depends in
  general on \(s\) when \(\gamma\) is not a constant path, \emph{even
    if the function \(m:Y\times\Mcca C\to\Realp\) does not depend on
    its first argument}.
  This definition of tests in the cones of paths plays a crucial role
  in the proof of %
  Lemma~\ref{lemma:meas-path-flat}.

  Imagine that we want to use a simpler notion of tests, with
  \(\Mcms C\subseteq\Cdual{\Mcca C}\), that is, assume that our tests
  do not have the further parameter taken in a \(Y\in\ARCAT\).
  The first difficulty we face consists in finding a suitable
  definition for \(\Mcms{\Cpathm XC}\).
  The simplest option consists in taking all the %
  \(\Mtpath rm\) where \(r\in X\) and \(m\in\Mcms C\), defined by %
  \((\Mtpath rm)(\beta)=m(\beta(r))\).
  This choice fulfills all the expected separation properties.
  With this definition, an element of \(\Mcca{\Cpathm Y{\Cpathm XC}}\)
  is the same thing as a bounded function
  \(\gamma:Y\times X\to\Mcca B\) such that, for each \(m\in\Mcms C\),
  the function %
  \(\Absm{s\in Y}{m(\beta(s,r))}:Y\to\Realp\) is measurable for all
  \(r\in X\) and the function %
  \(\Absm{r\in X}{m(\beta(s,r))}:X\to\Realp\) is measurable for all
  \(s\in Y\).
  Let us assume that \(C=\Cmeas(Z)\) for some \(Z\in\ARCAT\), let
  \(\phi:Y\times X\to Z\) be a function, and let
  \(\gamma:Y\times X\to\Mcca C\) be given by
  \(\gamma(s,r)=\Dirac Z(\phi(s,r))\).
  Saying that \(\gamma\in\Mcca{\Cpathm Y{\Cpathm XC}}\) means that the
  function \(\phi\) is separately measurable in both arguments, a
  condition which is strictly weaker than measurability on
  \(Y\times X\) in general.
  On the other hand, with our definition of measurability tests for
  \(\Cpathm XC\), Lemma~\ref{lemma:meas-path-flat} tells us that
  \(\gamma\in\Mcca{\Cpathm Y{\Cpathm XC}}\) iff \(\phi\) is measurable
  \(Y\times X\to Z\) for the very simple reason that
  \(Y\times X\in\ARCAT\).
  Notice that we have equipped \(P\), the cone of measurable paths
  from \(X\) to \(C\), with two different measurability structures:
  the original one made of all tests \(\Mtpath\phi m\) where
  \(\phi\in\ARCAT(Y,X)\) and \(m\in\Mcms C_Y\), and the simplified
  one, made of tests \(\Mtpath\phi m\) where \(\phi\in\ARCAT(Y,X)\) is
  a \emph{constant} function and \(m\in\Mcms C_Y\).
  The two measurable cones obtained in that way are not isomorphic%
  \footnote{More precisely, the identity function between these two
    cones is not an isomorphism.} %
  in \(\MCONES\) since the associated measurable paths are distinct as
  we have seen.
  This complements Remarks~\ref{rk:measure-cone-two-ms}
  and~\ref{rk:mcones_iso}.

  We will meet a completely similar definition of tests for the space
  \(\Limpl BC\) of linear, continuous and integrable functions from
  \(B\) to \(C\) in Section~\ref{sec:lin-hom-tensor}.
\end{remark}

\section{Integrable cones} %
\label{sec:int-cones}
We now introduce the main novelties of this paper, which are the
definition of the integral of a measurable path wrt.~a finite measure,
the notion of integrable cone, and the notion of linear, continuous,
measurable and integral preserving functions between integrable cones.

The following definition is quite similar to Definition~2.1
in~\cite{Pettis38} of the integral of a function valued in a
topological vector space.
Our integrals are valued in cones instead of vector spaces.
\begin{definition}
  Let \(B\) be a measurable cone, \(X\in\ARCAT\),
  \(\beta\in\Mcca{\Cpath XB}\) and \(\mu\in\Mcca{\Cmeas(X)}\).
  An \emph{integral of \(\beta\) over \(\mu\)} is an element \(x\) of
  \(\Mcca B\) such that, for all \(m\in\Mcms B_\Measterm\), one has
  \begin{align*}
    m(x)=\int m(\beta(r))\mu(dr)\,.
  \end{align*}
\end{definition}
\noindent 
Notice indeed that \(m\Comp\beta:X\to\Realp\) is a bounded
measurable function so that the integral above is well defined and
belongs to \(\Realp\) (remember that the measure \(\mu\) is finite).
Notice also that by \Mssepr{} if such an integral \(x\) exists, it is
unique, so we can introduce a notation for it, we write
\begin{align*}
  x=\int\beta(r)\mu(dr)\,.
\end{align*}
When we want to stress the cone \(B\) where this integral is computed
we denote it as \(\int^B\beta(r)\mu(dr)\) and when we want to insist
on the measurable space on which the integral is computed we write %
\(\int_{X}\beta(r)\mu(dr)\) or \(\int_{r\in X}\beta(r)\mu(dr)\).

\begin{lemma} %
  \label{lemma:integral-bounded}
  If \(\beta\in\Mcca{\Cpathm XB}\) is integrable over
  \(\mu\in\Cmeas(X)\) then
  \begin{align*}
    \Norm{\int_X\beta(r)\mu(dr)}_B
    \leq\Norm\beta_{\Cpath XB}\Norm\mu_{\Cmeas(X)}\,.
  \end{align*}
\end{lemma}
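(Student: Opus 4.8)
The plan is to reduce the cone-valued inequality to a family of scalar estimates indexed by the measurability tests, exploiting the norm axiom \Msnormr{} which expresses $\Norm x_B$ as a supremum over tests.

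First I would write $x=\int_X\beta(r)\mu(dr)$ and fix an arbitrary test $m\in\Mcms B_\Measterm$ with $m\neq 0$. By the defining property of the integral, $m(x)=\int m(\beta(r))\mu(dr)$. Since $m$, viewed as an element of $\Cdual{\Mcca B}$, is linear and continuous, the bound $\Norm{f(y)}\leq\Norm f\,\Norm y$ recorded just after Lemma~\ref{lemma:limpl-cone} (applied with $f=m$ and $y=\beta(r)$) gives, for every $r\in X$,
\[
  m(\beta(r))\leq\Norm m\,\Norm{\beta(r)}\leq\Norm m\,\Norm\beta_{\Cpath XB},
\]
where the second inequality uses $\Norm{\beta(r)}\leq\sup_{s\in X}\Norm{\beta(s)}=\Norm\beta_{\Cpath XB}$ (the path $\beta$ being bounded, so this supremum is its norm).

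Next I would integrate this pointwise bound against $\mu$. As $r\mapsto m(\beta(r))$ is a non-negative bounded measurable function and $\mu$ is finite, monotonicity of the Lebesgue integral yields
\[
  m(x)=\int m(\beta(r))\mu(dr)
  \leq\int\Norm m\,\Norm\beta_{\Cpath XB}\,\mu(dr)
  =\Norm m\,\Norm\beta_{\Cpath XB}\,\mu(X)
  =\Norm m\,\Norm\beta_{\Cpath XB}\,\Norm\mu_{\Cmeas(X)},
\]
using $\mu(X)=\Norm\mu_{\Cmeas(X)}$ (the norm of a finite measure, Section~\ref{sec:basic-cone-finite-meas}). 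Dividing by $\Norm m>0$ and taking the supremum over all nonzero $m\in\Mcms B_\Measterm$, axiom \Msnormr{} delivers $\Norm x_B\leq\Norm\beta_{\Cpath XB}\,\Norm\mu_{\Cmeas(X)}$, which is exactly the claim.

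There is no serious obstacle here; the argument is a routine estimate. The only points requiring a little care are that the scalar bound $m(\beta(r))\leq\Norm m\,\Norm\beta_{\Cpath XB}$ holds for \emph{every} $r$ (so no almost-everywhere subtleties arise), that finiteness of $\mu$ keeps every integral in $\Realp$, and that one must quantify over the test set $\Mcms B_\Measterm$ rather than the full dual unit ball. This last point is why \Msnormr{} — and not merely Proposition~\ref{th:norm-dual} — is the right tool: the defining equation for the integral only constrains $m(x)$ when $m$ is a test, so the norm must be recovered from tests alone.
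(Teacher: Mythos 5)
Your argument is correct and is essentially the paper's own proof: both reduce the claim to the scalar bound \(m(\beta(r))\leq\Norm m\,\Norm\beta\) for each test \(m\), integrate against the finite measure \(\mu\), and recover \(\Norm x\) from the tests via \Msnormr{} (the paper phrases the final step with an \(\epsilon\)-approximation of the supremum rather than taking the supremum directly, which is an immaterial difference). No gaps.
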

\begin{proof}
  Let \(x=\int\beta(r)\mu(dr)\). If \(x=0\) there is nothing to prove
  so assume that \(x\not=0\). %
  Let \(\epsilon>0\) and let %
  \(m\in\Mcms B_0\setminus\Eset 0\) be such that %
  \(\Norm x\leq\epsilon+\frac{m(x)}{\Norm m}\), that is
  \begin{align*}
    \Norm x&\leq \epsilon+\frac1{\Norm m}\int m(\beta(r))\mu(dr)\,.
  \end{align*}
  For each \(r\in X\) we have %
  \(m(\beta(r))\leq\Norm m\Norm{\beta(r)}\leq\Norm m\Norm \beta\). Our
  contention follows from %
  \(\Norm\mu=\mu(X)=\int\mu(dr)\).
\end{proof}

\begin{definition}
  \label{def:integral-in-cone}
  A measurable cone is \emph{integrable} if, for all \(X\in\ARCAT\),
  each \(\beta\in\Mcca{\Cpath XB}\) has an integral in \(\Mcca B\)
  over each measure \(\mu\in\Mcca{\Cmeas(X)}\).
  When this is the case we use \(\Mcint B_X\) for the uniquely defined
  function %
  \(\Mcca{\Cpath XB}\times\Mcca{\Cmeas(X)}\to\Mcca B\) such that %
  \(\Mcint B_X(\beta,\mu)=\int\beta(r)\mu(dr)\).
\end{definition}

\begin{remark}
  \label{rk:exist-cones-non-integ}
  A very natural question is whether there are measurable cones which
  are not integrable.
  We strongly conjecture that such cones do exist but we have not yet
  tried to exhibit some.
\end{remark}
\noindent 
The fundamental example of an integrable cone is the measurable cone
of finite measures described in Section~\ref{sec:cone-finite-meas}.
\begin{theorem}
  For each measurable space \(X\), %
  the measurable cone \(\Cmeas(X)\) is integrable.
\end{theorem}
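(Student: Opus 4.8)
The plan is to write the integral down explicitly and check it directly against the tests, exploiting that the tests of $\Cmeas(X)$ are very simple. Fix $Y\in\ARCAT$, a path $\beta\in\Mcca{\Cpath Y{\Cmeas(X)}}$ and a finite measure $\mu\in\Mcca{\Cmeas(Y)}$. Since $\Measterm$ is terminal, the tests $\Mcms{\Cmeas(X)}_\Measterm$ are exactly the maps $\Emeas U\colon\nu\mapsto\nu(U)$ for $U\in\Sigalg X$, so the defining equation of an integral $x=\int\beta(r)\,\mu(dr)$ reduces to asking for a finite measure $\nu$ on $X$ with $\nu(U)=\int\beta(r)(U)\,\mu(dr)$ for every $U\in\Sigalg X$. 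This forces the candidate
\[
  \nu(U)=\int_Y\beta(r)(U)\,\mu(dr)\,,
\]
and the whole argument amounts to proving that this formula defines an element of $\Mcca{\Cmeas(X)}$.

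First I would check that the formula makes sense. Instantiating the measurability condition for $\beta$ with test-parameter $Y'=\Measterm$ and $m=\Emeas U$ shows that $\Absm{r\in Y}{\beta(r)(U)}$ is measurable, while boundedness of $\beta$ provides $\lambda\in\Realp$ with $\beta(r)(U)\leq\beta(r)(X)\leq\lambda$ for all $r$ and $U$; as $\mu$ is finite, the integral above is then a well-defined element of $\Realp$, bounded by $\lambda\Norm\mu$. The core step is to verify that $\nu$ is a countably additive measure. That $\nu(\emptyset)=0$ is immediate, and for countable additivity the point is to commute a countable sum past the integral: given pairwise disjoint $(U_n)_{n\in\Nat}$, the partial sums $\sum_{k\leq n}\beta(r)(U_k)=\beta(r)(\bigcup_{k\leq n}U_k)$ form an increasing sequence of non-negative measurable functions of $r$ converging pointwise to $\beta(r)(\bigcup_k U_k)$, so the monotone convergence theorem yields $\nu(\bigcup_k U_k)=\sum_k\nu(U_k)$. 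Finiteness follows from $\nu(X)\leq\lambda\Norm\mu<\infty$, so indeed $\nu\in\Mcca{\Cmeas(X)}$.

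It then remains only to observe that $\nu$ is the integral: by construction $\Emeas U(\nu)=\int\Emeas U(\beta(r))\,\mu(dr)$ for every $U\in\Sigalg X$, which is exactly the defining property since every test in $\Mcms{\Cmeas(X)}_\Measterm$ is of this shape, and uniqueness is automatic from \Mssepr{}. I expect the only genuine work to be this appeal to monotone convergence for countable additivity; everything else is routine bookkeeping. It is worth noting that this is precisely the place where the restriction to countable (rather than arbitrary directed) completeness emphasised throughout the paper is what makes the argument go through.
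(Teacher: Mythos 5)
Your proposal is correct and follows essentially the same route as the paper: the paper also defines the candidate measure by $\mu(U)=\int\kappa(s,U)\,\nu(ds)$, notes that this is a finite measure by standard measure theory (the monotone-convergence argument you spell out), and observes that it satisfies the test condition by the very definition of $\Mcms{\Cmeas(X)}_\Measterm$. Your write-up merely makes explicit the "completely standard" countable-additivity step that the paper leaves to the reader.
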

\noindent 
This is just a reformulation of the standard integration of a kernel.
\begin{proof}
  Let \(Y\in\ARCAT\), \(\kappa\in\Mcca{\Cpathm Y{\Cmeas(X)}}\), which
  means that %
  \(\kappa\) is a bounded kernel \(Y\Kernto X\), and let %
  \(\nu\in\Mcca{\Cmeas(Y)}\), which means that \(\nu\) is a finite
  measure.
  We define \(\mu:\Sigalg X\to\Realp\)  by
  \begin{align*}
    \forall U\in\Sigalg X\quad
    \mu(U)=\int\kappa(s,U)\nu(ds)\in\Realp\,.
  \end{align*}
  The fact that \(\mu\) defined in that way is a finite measure is
  completely standard in measure theory and \(\mu\) is the integral of
  \(\kappa\) by the very definition of \(\Mcms{\Cmeas(X)}_\Measterm\).
\end{proof}

\noindent 
In the sequel we assume that \(B\) is an integrable cone. We state and
prove some basic expected properties of integration.

\begin{lemma} %
  \label{lemma:integral-measurable}
  Let \(\phi:Y\times X\to\Realp\) be measurable and bounded and
  let %
  \(\kappa:Y\to\Mcca{\Cmeas(X)}\) be a bounded kernel.
  Then the function %
  \(\Absm{s\in Y}{\int\phi(s,r)\kappa(s,dr)}\) is measurable.
\end{lemma}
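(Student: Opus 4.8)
The plan is to reduce the statement to the case of an indicator integrand and then to settle that case by a Dynkin-class (monotone-class) argument; the whole point will be to exploit that $\kappa$ takes values in \emph{finite} measures.

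First I would carry out the reduction to indicators. Since $\phi$ is non-negative and measurable, I would write it as a pointwise increasing limit $\phi=\sup_{n\in\Nat}\phi_n$ of non-negative simple measurable functions $\phi_n=\sum_i c_{n,i}\Charfun{W_{n,i}}$ on $Y\times X$, with $c_{n,i}\in\Realp$ and $W_{n,i}\in\Sigalg{Y\times X}$. As the integral is additive and non-negatively homogeneous in its integrand, and finite sums and non-negative scalar multiples preserve measurability in $s$, it suffices to prove the claim for $\phi=\Charfun W$ with $W\in\Sigalg{Y\times X}$. Granting this, the monotone convergence theorem gives, for each $s\in Y$,
\[
\int\phi(s,r)\kappa(s,dr)=\sup_{n\in\Nat}\int\phi_n(s,r)\kappa(s,dr),
\]
and a countable pointwise supremum of measurable functions is measurable.

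Next I would treat the indicator case. For $W\in\Sigalg{Y\times X}$ and $s\in Y$ let $W_s=\{r\in X\St(s,r)\in W\}$ be the section; then $\int\Charfun W(s,r)\kappa(s,dr)=\kappa(s,W_s)$, so the goal becomes measurability of $\Absm{s\in Y}{\kappa(s,W_s)}$. Let $\mathcal D$ be the collection of all $W\in\Sigalg{Y\times X}$ for which this function is measurable. For a measurable rectangle $W=A\times B$ one has $\kappa(s,W_s)=\Charfun A(s)\,\kappa(s,B)$, which is measurable since $\Absm{s}{\kappa(s,B)}$ is measurable by definition of a kernel; hence $\mathcal D$ contains the $\pi$-system of measurable rectangles, which generates $\Sigalg{Y\times X}$.

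The main step --- and the place where I expect the only real subtlety --- is checking that $\mathcal D$ is a Dynkin system, which is exactly where finiteness of the measures $\kappa(s,\cdot)$ enters. Clearly $Y\times X\in\mathcal D$ since $\kappa(\cdot,X)$ is measurable. If $A,B\in\mathcal D$ with $A\subseteq B$, then $(B\setminus A)_s=B_s\setminus A_s$ with $A_s\subseteq B_s$, so $\kappa(s,(B\setminus A)_s)=\kappa(s,B_s)-\kappa(s,A_s)$ because $\kappa(s,\cdot)$ is finite; this difference of measurable functions is measurable, so $B\setminus A\in\mathcal D$. Finally, for an increasing sequence $(W_n)_{n\in\Nat}$ in $\mathcal D$ with union $W$, we have $W_s=\bigcup_n(W_n)_s$ and monotone convergence gives $\kappa(s,W_s)=\sup_n\kappa(s,(W_n)_s)$, again measurable, so $W\in\mathcal D$. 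By Dynkin's $\pi$--$\lambda$ theorem $\mathcal D=\Sigalg{Y\times X}$, settling the indicator case. The hard part is really just this finiteness: without it the subtraction $\kappa(s,B_s)-\kappa(s,A_s)$ could be an ill-defined $\infty-\infty$ and the proper-difference closure of $\mathcal D$ would break down.
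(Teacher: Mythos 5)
Your proof is correct and takes essentially the same route as the paper's, which reduces to simple functions and concludes by the monotone convergence theorem. The only difference is that the paper declares the simple-function case ``obvious'', whereas you supply the standard justification --- the $\pi$--$\lambda$ argument showing $s\mapsto\kappa(s,W_s)$ is measurable, with the finiteness of each $\kappa(s,\cdot)$ used exactly where you say it is.
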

\begin{proof}
  The property is obvious when \(\phi\) is simple%
  \footnote{A \(\Real\)-valued measurable function is simple iff it
    ranges in a finite subset of \(\Real\).}, %
  and the result follows from the monotone convergence theorem by the
  fact that each \(\Realp\)-measurable function is the lub of a
  increasing sequence of simple functions.
\end{proof}

\begin{lemma} %
  \label{lemma:int-mesurable}
  For each \(X\in\ARCAT\), the map \(\Mcint B_X\) is bilinear,
  continuous and measurable.
  This means that
  \(\Mcint B_X:\With{\Mcca{\Cpath XB}}{\Mcca{\Cmeas(X)}}\to\Mcca B\)
  is continuous, separately linear in each of its two arguments and
  that for each \(Y\in\ARCAT\), %
  \(\eta\in\Mcca{\Cpathm{Y}{\Cpathm XB}}\) and %
  \(\kappa\in\Mcca{\Cpathm Y{\Cmeas(X)}}\), %
  the function %
  \(\beta=\Mcint B_X\Comp\Tuple{\eta,\kappa}:Y\to\Mcca B\) is a
  measurable path.
\end{lemma}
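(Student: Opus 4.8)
The plan is to reduce every clause to ordinary Lebesgue integration by testing against the scalar functionals \(m\in\Mcms B_\Measterm\) and invoking the separation axiom \Mssepr{}, which guarantees that an element of \(\Mcca B\) is determined by the values \(m(\cdot)\), \(m\in\Mcms B_\Measterm\). For separate linearity, fix such an \(m\). Since \(m\in\Cdual{\Mcca B}\) is linear we have \(m((\gamma_1+\gamma_2)(r))=m(\gamma_1(r))+m(\gamma_2(r))\) pointwise, and linearity of the scalar integral in its integrand gives \(\int m((\gamma_1+\gamma_2)(r))\mu(dr)=\int m(\gamma_1(r))\mu(dr)+\int m(\gamma_2(r))\mu(dr)\). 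Rewriting both sides through the defining property of the cone integral yields \(m(\int(\gamma_1+\gamma_2)\mu)=m(\int\gamma_1\mu+\int\gamma_2\mu)\), whence equality by \Mssepr{}. Linearity in the measure variable is identical, using additivity of the integral in the measure, \(\int f\,d(\mu_1+\mu_2)=\int f\,d\mu_1+\int f\,d\mu_2\); the two scalar-multiplication laws are handled the same way.

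For continuity I would establish separate \(\omega\)-continuity in each argument and then invoke Lemma~\ref{lemma:seprate-cont-implies-cont} to obtain joint continuity. Let \(\gamma_n\uparrow\gamma\) be a bounded increasing sequence in \(\Mcca{\Cpath XB}\) (the sup being computed pointwise). The images \(\int\gamma_n\mu\) form an increasing sequence, by the separate linearity just proved \(\Mcint B_X\) is increasing, and this sequence is norm-bounded by \(\Norm\gamma\Norm\mu\) thanks to Lemma~\ref{lemma:integral-bounded}, so its lub \(x\) exists by \Cnormcr{}. Testing with \(m\) and combining \(\omega\)-continuity of \(m\) with the monotone convergence theorem gives \(m(x)=\sup_n\int m(\gamma_n(r))\mu(dr)=\int\sup_n m(\gamma_n(r))\mu(dr)=\int m(\gamma(r))\mu(dr)=m(\int\gamma\mu)\), so \(x=\int\gamma\mu\) by \Mssepr{}. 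Continuity in \(\mu\) is the same computation, this time reading \(\sup_n\int f\,d\mu_n=\int f\,d\mu\) (for \(f=m\Comp\gamma\) non-negative bounded measurable and \(\mu_n\uparrow\mu\)) as an instance of monotone convergence, after writing \(\mu\) as the countable sum \(\mu_1+\sum_n(\mu_{n+1}-\mu_n)\) of finite measures.

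The real work is the measurability clause. Given \(\eta\in\Mcca{\Cpathm Y{\Cpathm XB}}\) and \(\kappa\in\Mcca{\Cpathm Y{\Cmeas(X)}}\) — the latter being exactly a bounded kernel \(Y\Kernto X\) — the candidate path \(\beta(s)=\int\eta(s)(r)\kappa(s,dr)\) is bounded by \(\Norm\eta\Norm\kappa\) via Lemma~\ref{lemma:integral-bounded}. To verify the measurability condition, fix \(Y'\in\ARCAT\) and \(m\in\Mcms B_{Y'}\). The key observation is that for each fixed \(s'\in Y'\), viewing \(s'\) as a morphism \(\Measterm\to Y'\), axiom \Mscompr{} shows \(\Absm{x}{m(s',x)}\in\Mcms B_\Measterm\); hence the defining property of the integral applies pointwise and yields \(m(s',\beta(s))=\int m(s',\eta(s)(r))\,\kappa(s,dr)\).

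It then remains to prove that \((s',s)\mapsto\int\phi(s',s,r)\,\kappa(s,dr)\) is measurable, where \(\phi(s',s,r)=m(s',\eta(s)(r))\). That \(\phi\) is jointly measurable and bounded is obtained precisely by the device of Lemma~\ref{lemma:meas-path-flat}: taking \(m'=m\Comp(\Proj 1\times\Mcca B)\in\Mcms B_{Y'\times X}\) and the test \(\Mtpath{\Proj 2}{m'}\in\Mcms{\Cpathm XB}_{Y'\times X}\), the measurability of the path of paths \(\eta\) says exactly that \(((s',r),s)\mapsto(\Mtpath{\Proj 2}{m'})((s',r),\eta(s))=m(s',\eta(s)(r))=\phi(s',s,r)\) is measurable. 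Finally, applying Lemma~\ref{lemma:integral-measurable} over the parameter space \(Y'\times Y\), with the bounded kernel \((s',s)\mapsto\kappa(s)\) obtained by precomposing \(\kappa\) with \(\Proj 2\) (Lemma~\ref{lemma:precomp-path}), gives measurability of \((s',s)\mapsto\int\phi(s',s,r)\kappa(s,dr)=m(s',\beta(s))\), which is exactly the measurable-path condition for \(\beta\). I expect this measurability step — in particular choosing the test \(\Mtpath{\Proj 2}{m'}\) on the path space so that Lemma~\ref{lemma:integral-measurable} becomes directly applicable — to be the only genuinely delicate point; the linearity and continuity clauses are routine once one tests against \(\Mcms B_\Measterm\).
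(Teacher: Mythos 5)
Your proof is correct and follows essentially the same route as the paper's: separate linearity and continuity are obtained by testing against \(\Mcms B_\Measterm\) and invoking \Mssepr{} together with the monotone convergence theorem and Lemma~\ref{lemma:seprate-cont-implies-cont}, and the measurability clause reduces to Lemma~\ref{lemma:integral-measurable} after establishing joint measurability of \((s',s,r)\mapsto m(s',\eta(s)(r))\) via the test \(\Mtpath{\Proj 2}{(m\Comp(\Proj1\times\Mcca B))}\) — which is exactly the device of Lemma~\ref{lemma:meas-path-flat} that the paper invokes implicitly. Your only departures are cosmetic: you justify \(\sup_n\int f\,d\mu_n=\int f\,d\mu\) by decomposing \(\mu\) as a countable sum of finite measures where the paper approximates \(f\) by simple functions, and you spell out (via \Mscompr{} applied to \(s'\in\ARCAT(\Measterm,Y')\)) why the parameterized test \(m(s',\cdot)\) may be fed to the defining property of the integral, a step the paper leaves tacit.
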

\begin{proof}
  Separate linearity in both argumets results from the linearity of
  integration and from \Mssepr{} satisfied by \(B\), let us prove
  separate continuity (which implies continuity by
  Lemma~\ref{lemma:seprate-cont-implies-cont}).
  Let \((\beta_n)_{n\in\Nat}\) be an increasing sequence in
  \(\Cuball{\Mcca{\Cpathm XB}}\) and let \(\mu\in\Mcca{\Cmeas(X)}\).
  The sequence %
  \((\Mcint B_X(\beta_n,\mu)\in\Mcca B)_{n\in\Nat}\) is increasing by
  linearity of %
  \(\Mcint B_X\) and for all \(n\in\Nat\) we have %
  \(\Norm{\Mcint
    B_X(\beta_n,\mu)}\leq\Norm{\beta_n}\Norm\mu\leq\Norm\mu\) %
  so that \(\sup_{n\in\Nat}\Mcint B_X(\beta_n,\mu)\in\Mcca B\) exists.
  Let \(\beta=\sup_{n\in\Nat}\beta_n\in\Cuball{\Mcca{\Cpathm XB}}\),
  that is %
  \(\forall r\in X\ \beta(r)=\sup_{n\in\Nat}\beta_n(r)\).
  Let \(m\in\Mcms B_\Measterm\), since \((m\Comp\beta_n)_{n\in\Nat}\)
  is an increasing sequence of measurable functions by linearity of
  \(m\) and since %
  \(m\Comp\beta=\sup_{n\in\Nat}m\Comp\beta_n\) (pointwise) by
  continuity of \(m\), we have
  \begin{align*}
    \int m(\beta(r))\mu(dr)=\sup_{n\in\Nat}\int m(\beta_n(r))\mu(dr)
  \end{align*}
  by the monotone convergence theorem.
  That is %
  \(m(\Mcint B_X(\beta,\mu)) =\sup_{n\in\Nat}m(\Mcint
  B_X(\beta_n,\mu)) =m(\sup_{n\in\Nat}\Mcint B_X(\beta_n,\mu))\) by
  continuity of \(m\). %
  By \Mssepr{} we get %
  \(\Mcint B_X(\beta,\mu)=\sup_{n\in\Nat}\Mcint B_X(\beta_n,\mu)\) as
  required.

  Let \(\beta\in\Mcca{\Cpathm XB}\) and let %
  \((\mu_n\in\Cuball{\Mcca{\Cmeas(X)}})_{n\in\Nat}\) be an increasing
  sequence with lub \(\mu\).
  %
  %
  It is a standard fact that for each measurable and bounded
  \(\phi:X\to\Realp\) the sequence %
  \((\int\phi(r)\mu_n(dr))_{n\in\Nat}\) is increasing and has %
  \(\int\phi(r)\mu(dr)\) as lub: this is due to the fact that %
  \(\int\phi(r)\mu(dr)=\sup_{k\in\Nat}\int\phi_k(r)\mu(dr)\) %
  where \((\phi_k\leq\phi)_{k\in\Nat}\) is an increasing family of
  simple functions whose pointwise lub is \(\phi\), and to the fact
  that \(\int\psi(r)\mu(dr)=\sup_{n\in\Nat}\int\psi(r)\mu_n(dr)\)
  holds trivially when \(\psi\) is simple.
  As above the sequence \((\Mcint B_X(\beta,\mu_n))_{n\in\Nat}\) %
  is increasing with %
  \(\forall n\in\Nat\ \Norm{\Mcint B_X(\beta,\mu_n)}
  \leq\Norm\beta\Norm\mu\) and therefore has a lub %
  \(\sup_{n\in\Nat}\Mcint B_X(\beta,\mu_n)\in\Mcca B\).
  Let \(m\in\Mcms B_\Measterm\), we have %
  \begin{align*}
    m(\sup_{n\in\Nat}\Mcint B_X(\beta,\mu_n))
    &=\sup_{n\in\Nat}m(\Mcint B_X(\beta,\mu_n))\\
    &=\sup_{n\in\Nat}\int m(\beta(r))\mu_n(dr)\\
    &=\int m(\beta(r))\mu(dr)\\
    &=m(\Mcint B_X(\beta,\mu))
  \end{align*}
  and the announced continuity follows by \Mssepr{} in \(B\).

  Now we prove measurability, so let \(Y\in\ARCAT\), %
  \(\eta\in\Mcca{\Cpathm{Y}{\Cpathm XB}}\) and let %
  \(\kappa\in\Mcca{\Cpathm Y{\Cmeas(X)}}\), we prove that the
  function %
  \(\beta=\Mcint B_X\Comp\Tuple{\eta,\kappa}:Y\to\Mcca B\) %
  belongs to \(\Mcca{\Cpathm YB}\).
  The fact that \(\beta(X)\) is bounded results from
  Lemma~\ref{lemma:integral-bounded}.
  Let \(Y'\in\ARCAT\) and \(m\in\Mcms B_{Y'}\), we have %
  \begin{align*}
    \Absm{(s',s)\in{Y'\times Y}}{m(s',\beta(s))}
    &=\Absm{(s',s)\in{Y'\times Y}}{m(s',\Mcint B_X(\eta(s),\kappa(s)))}\\
    &=\Absm{(s',s)\in{Y'\times Y}}{\int m(s',\eta(s,r))\kappa(s,dr)}
  \end{align*}
  and this function is measurable by
  Lemma~\ref{lemma:integral-measurable} and by our assumption about
  \(\eta\).
\end{proof}

\begin{lemma}[Change of variable]
  Let \(X,Y\in\ARCAT\), \(\beta\in\Mcca{\Cpathm XB}\),
  \(\nu\in\Mcca{\Cmeas(Y)}\) and %
  \(\phi\in\ARCAT(Y,X)\).
  We have
  \begin{align*}
    \int_{s\in Y}\beta(\phi(s))\nu(ds)
    =\int_{r\in X}\beta(r)\Pushf\phi(\nu)(dr)\,.
  \end{align*}
  In other words \(\Mcint B_X\) is extranatural in \(X\).
\end{lemma}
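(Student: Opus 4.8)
The plan is to reduce the cone-valued identity to the classical change-of-variables (push-forward) formula for ordinary Lebesgue integration, by testing both sides against the elements of the measurability structure. First I would check that both sides are well defined. By Lemma~\ref{lemma:precomp-path} the function $\beta\Comp\phi=\Absm{s\in Y}{\beta(\phi(s))}$ is a measurable path of arity $Y$ on $B$, so since $B$ is integrable its integral against $\nu$ exists in $\Mcca B$; on the other hand $\Pushf\phi(\nu)\in\Mcca{\Cmeas(X)}$ is again a finite measure, so the integral of $\beta$ against it exists too. Write $x$ for the left-hand side and $y$ for the right-hand side of the claimed equation.

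By the separation axiom \Mssepr{} satisfied by $B$, it suffices to show $m(x)=m(y)$ for every $m\in\Mcms B_\Measterm$. Unfolding the defining property of the cone integral on each side gives
\[ m(x)=\int_{s\in Y}m(\beta(\phi(s)))\,\nu(ds)\quad\text{and}\quad m(y)=\int_{r\in X}m(\beta(r))\,\Pushf\phi(\nu)(dr)\,, \]
where both integrands are now ordinary nonnegative real-valued functions. They are bounded (because $\beta$ is a bounded path and $m$ is a bounded linear map) and measurable (by the path condition on $\beta$ together with measurability of $\phi$), so the two scalar integrals are finite and well behaved.

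The key step is then the standard push-forward identity of measure theory: for any bounded measurable $f:X\to\Realp$ one has $\int_{r\in X} f(r)\,\Pushf\phi(\nu)(dr)=\int_{s\in Y} f(\phi(s))\,\nu(ds)$. Applying this with $f=\Absm{r\in X}{m(\beta(r))}$ transforms the expression for $m(y)$ into the expression for $m(x)$, so $m(x)=m(y)$. As $m\in\Mcms B_\Measterm$ was arbitrary, \Mssepr{} yields $x=y$, which is precisely the claimed equation. Rephrased categorically, this is exactly the extranaturality of the family $\Mcint B_X$ in $X$: contravariant through precomposition of paths by $\phi$ and covariant through push-forward of measures along $\phi$.

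I do not expect any serious obstacle: the entire content of the cone-valued statement collapses, via the tests $m$, onto the scalar push-forward formula. The only points requiring care are the two well-definedness checks (invoking Lemma~\ref{lemma:precomp-path} for the left-hand integrand and finiteness of $\Pushf\phi(\nu)$ for the right-hand one) and the boundedness and measurability of $m\Comp\beta$ that guarantee the intermediate Lebesgue integrals are finite; once these are in place the equality is immediate from \Mssepr{}.
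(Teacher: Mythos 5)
Your proof is correct and follows exactly the route the paper takes: test both sides against the measurability tests \(m\in\Mcms B_\Measterm\), reduce to the classical push-forward formula for scalar integrals, and conclude by \Mssepr{}. The paper's own proof is a one-line version of this argument; your additional well-definedness checks (via Lemma~\ref{lemma:precomp-path} and finiteness of \(\Pushf\phi(\nu)\)) are the details it leaves implicit.
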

\begin{proof}
  By the usual change of variable formula, through the use of
  measurability tests \(m\in\Mcms B_\Measterm\) and \Mssepr{} for
  \(B\).
\end{proof}

\begin{lemma}
  If \(B\) is an integrable cone and \(\alpha\in\Real\) is such that
  \(\alpha>0\) then the measurable cone \(\alpha B\) is integrable,
  and has the same integrals as \(B\).
\end{lemma}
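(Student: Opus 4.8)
The plan is to exploit the fact that, by Definition~\ref{def:mes-cone-homothetie}, the measurable cone \(\alpha B\) shares with \(B\) the same underlying cone \(\Mcca{\alpha B}=\Mcca B\) (same set, same algebraic operations, and hence the same cone order, since the order is defined purely algebraically), the same measurability structure \(\Mcms{\alpha B}=\Mcms B\), and differs only in that \(\Norm x_{\Mcca{\alpha B}}=\Inv\alpha\Norm x_{\Mcca B}\). Since \(\alpha>0\), this rescaling is harmless for every ingredient entering the definition of the integral. I would therefore argue that the entire defining datum of ``integral of a path over a measure'' is norm-independent, so that integrals in \(\alpha B\) and in \(B\) coincide term by term.

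First I would check that the two measurable cones have exactly the same measurable paths of each arity \(X\in\ARCAT\), that is \(\Mcca{\Cpathm X{\alpha B}}=\Mcca{\Cpathm XB}\). A function \(\gamma:X\to\Mcca B\) is bounded for \(\alpha B\) iff it is bounded for \(B\), because the two norms differ by the positive factor \(\Inv\alpha\); and the measurability condition defining a path refers only to the test functions \(m\in\Mcms B_Y\), through the real-valued maps \(\Absm{(s,r)}{m(s,\gamma(r))}\), which are unchanged. Next I would note that the cone \(\Cmeas(X)\) of finite measures is fixed independently of \(B\), so the admissible measures \(\mu\in\Mcca{\Cmeas(X)}\) are literally the same in both cases.

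The key observation is then that the defining condition for \(x\) to be an integral of \(\beta\) over \(\mu\), namely \(m(x)=\int m(\beta(r))\mu(dr)\) for all \(m\in\Mcms B_\Measterm\), involves only the set \(\Mcca B\), the tests \(\Mcms B_\Measterm\), and the measure \(\mu\), none of which is altered by the rescaling. Hence an element \(x\in\Mcca B\) is an integral of \(\beta\) over \(\mu\) in \(\alpha B\) if and only if it is one in \(B\). Since \(B\) is integrable, such an \(x\) always exists (and is unique by \Mssepr{}); consequently it always exists in \(\alpha B\) and is the same element, so \(\alpha B\) is integrable with the same integrals as \(B\).

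There is essentially no obstacle here: the only point requiring a moment's care is the preservation of boundedness of paths under the norm rescaling, which is immediate because \(\alpha>0\). In particular one need not re-verify the measurability axioms for \(\alpha B\), since Definition~\ref{def:mes-cone-homothetie} already asserts that \(\alpha B\) is a measurable cone.
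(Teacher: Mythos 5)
Your proof is correct and matches the paper's intent: the paper states this lemma without any proof, treating it as immediate for exactly the reasons you spell out (the underlying set, algebraic structure, order, tests, measurable paths, and the defining equation \(m(x)=\int m(\beta(r))\mu(dr)\) are all unaffected by rescaling the norm by \(\alpha>0\)). Your write-up is simply an explicit version of that omitted argument.
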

\noindent 
We can define now the category which is at the core of the present study.
\begin{definition}
  \label{def:icones-category}
  The category \(\ICONES\) has integrable cones as objects and an
  element of \(\ICONES(B,C)\) is an
  \(f\in\MCONES(\Mcofic B,\Mcofic C)\) such that, for all %
  \(X\in\ARCAT\) and all %
  \(\beta\in\Mcca{\Cpathm X{\Mcofic B}}\) and \(\mu\in\Mcca{\Cmeas(X)}\) %
  one has %
  \begin{align*}
    f\Big(\int\beta(r)\mu(dr)\Big)=\int f(\beta(r))\mu(dr)\,.
  \end{align*}
  This property of \(f\) will be called \emph{integral preservation}
  and when it holds we often simply say that \(f\) is
  \emph{integrable}.
\end{definition}
Notice that the right hand term of the above equation is well defined
because %
\(f\Comp\beta\in\Mcca{\Cpathm X{\Mcofic C}}\) by our assumption on
\(f\). It is obvious that we define a category in that way.

\begin{lemma}
  \label{lemma:pushf-functor-icones}
  The functor \(\Funpushf:\ARCAT\to\MCONES\) introduced in
  Lemma~\ref{lemma:pushf-measurable} is a functor %
  \(\ARCAT\to\ICONES\).
\end{lemma}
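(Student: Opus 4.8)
The plan is to leverage everything already in place: by Lemma~\ref{lemma:pushf-measurable} the operation $\Funpushf$ is already a functor $\ARCAT\to\MCONES$, and the theorem establishing that each $\Cmeas(X)$ is integrable gives us that the objects land in $\ICONES$. Hence, to promote $\Funpushf$ to a functor $\ARCAT\to\ICONES$, the only thing left to verify is that for every $\phi\in\ARCAT(X,Y)$ the $\MCONES$-morphism $\Cmeas(\phi)=\Pushf\phi$ is \emph{integral preserving} in the sense of Definition~\ref{def:icones-category}. Functoriality itself (preservation of identities and composition) is then inherited verbatim from Lemma~\ref{lemma:pushf-measurable}, since $\ICONES$ has the same underlying composition as $\MCONES$.

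So I would fix $\phi\in\ARCAT(X,Y)$, an arity $Z\in\ARCAT$, a path $\beta\in\Mcca{\Cpathm Z{\Cmeas(X)}}$ (equivalently a bounded kernel $Z\Kernto X$) and a finite measure $\mu\in\Mcca{\Cmeas(Z)}$. Write $x=\int\beta(r)\mu(dr)$, which exists since $\Cmeas(X)$ is integrable. First I would check that the right-hand side $y=\int\Pushf\phi(\beta(r))\mu(dr)$ is also well defined: as $\Pushf\phi\in\MCONES(\Cmeas(X),\Cmeas(Y))$, the composite $\Pushf\phi\Comp\beta$ is again a measurable path, so its integral over $\mu$ exists by integrability of $\Cmeas(Y)$. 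The goal then reduces to the single identity $\Pushf\phi(x)=y$.

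The key step is to compare these two elements of $\Cmeas(Y)$ through the separating family of tests at arity $\Measterm$, namely $\Emeas V(\nu)=\nu(V)$ for $V\in\Sigalg Y$, and to invoke \Mssepr{} in $\Cmeas(Y)$. The crucial small computation is the interaction of a test with a push-forward: by definition of $\Pushf\phi$ one has $\Emeas V\Comp\Pushf\phi=\Emeas{\Inv\phi(V)}$, where $\Inv\phi(V)\in\Sigalg X$ precisely because $\phi$ is measurable. Using the defining test-equation of the integral twice, I would then compute, for each $V\in\Sigalg Y$,
\begin{align*}
  \Emeas V(\Pushf\phi(x))
  &=\Emeas{\Inv\phi(V)}(x)
   =\int \Emeas{\Inv\phi(V)}(\beta(r))\,\mu(dr)\\
  &=\int \Emeas V(\Pushf\phi(\beta(r)))\,\mu(dr)
   =\Emeas V(y)\,,
\end{align*}
the last equality being again the defining property of the integral $y$. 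Since this holds for all $V\in\Sigalg Y$, \Mssepr{} in $\Cmeas(Y)$ yields $\Pushf\phi(x)=y$, which is exactly integral preservation.

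I do not expect a genuine obstacle: the statement is essentially the functoriality of push-forward together with the standard measure-theoretic facts already packaged into the integrability of $\Cmeas(X)$ and $\Cmeas(Y)$. The only points needing care are ensuring that the right-hand integral exists (handled by the $\MCONES$-morphism property of $\Pushf\phi$ plus integrability of the target) and keeping $\Inv\phi(V)$ inside $\Sigalg X$, which is just measurability of $\phi$. Everything else is a direct unwinding of the definitions of the tests $\Emeas{\cdot}$ and of the integral via tests.
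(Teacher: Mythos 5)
Your proposal is correct and follows essentially the same route as the paper: both reduce integral preservation to testing against the maps $\Emeas V$ for $V\in\Sigalg Y$, use $\Emeas V\Comp\Pushf\phi=\Emeas{\Inv\phi(V)}$ together with the defining test-equation of the integral in $\Cmeas(X)$ and in $\Cmeas(Y)$, and conclude by separation. Your extra remarks (well-definedness of the right-hand integral, inheritance of functoriality from the $\MCONES$ case) are correct points of care that the paper leaves implicit.
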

\begin{proof}
  Let \(\phi\in\ARCAT(X,Y)\) and %
  \(\kappa\in\Mcca{\Cpathm{Y'}{\Cmeas(X)}}\) be a bounded kernel.
  Given \(\mu'\in\Mcca{\Cmeas(Y')}\) and \(V\in\Sigalg Y\) we have %
  \begin{align*}
    \Pushf\phi\Big(\int\kappa(s')&\mu'(ds')\Big)(V)
    =\Big(\int\kappa(s')\mu'(ds')\Big)(\Inv\phi(V))\\
    &=\int\kappa(s',\Inv\phi(V))\mu'(ds')
      \text{\quad by def.~of integration in }\Cmeas(X)\\
    &=\int\Pushf\phi(\kappa(s'))(V)\mu'(ds')\\
    &=\Big(\int\Pushf\phi(\kappa(s'))\mu'(ds')\Big)(V)
      \text{\quad by def.~of integration in }\Cmeas(Y)
  \end{align*}
  so that \(\Pushf\phi\) preserves integrals.
\end{proof}

\subsection{Integrable cones as quasi-Borel spaces with additional structure}
\label{sec:icones-qbs}

In this section, we assume, as in Remark \ref{rmk:ar-main-example}, that
\(\ARCAT\) has only two objects \(\Real\) and \(\Measterm\).

Then every integrable cone \(C\) can be seen as a QBS
\cite{KammarStatonVakar19} by letting \(M_C\) (which is by definition
the set of all QBS-morphisms from \(\Real\) to \(C\)) be the set
of all maps \(\alpha : \Real \to \Mcca C\) such that for all
\(m \in \Mcms C_\Real\), the map \(\Absm {(r,s)} {m(\alpha(r),s)}\)
from \(\Real\times\Real\) to \(\Real\) is measurable.

There is a well-defined notion of S-finite measure (respectively:
probability measure, sub-probability measure) on QBSs.
The operation that maps a QBS \(Q\) to the set of all S-finite
(respectively: probability, sub-probability) measures on \(Q\) defines
a commutative strong monad on the category of QBSs \cite[end of \S
2]{heunen2018} (this is similar to the Giry monad on the category of
measurable spaces and measurable maps).
For each S-finite measure \(\mu\) on an integrable cone \(C\), there
exists at most one element \(y \in\Mcca C\) such that for all
\(m \in \Mcms C_\Measterm\),
\(m(y) = \int_{x\in\Mcca C} m(x)\,\mu(dx)\).
If this element exists (which is always the case if \(\mu\) is
finite and has a bounded support), we will denote it by
\(\int_{x \in\Mcca C} x\,\mu(dx)\).
One can check that for each integrable cone \(C\), this construction makes the unit ball
\(\Cuball{C}\) into an algebra over the monad of sub-probability measures on QBSs.

In this situation a map \(f : \Mcca C \to \Mcca B\) between two
integrable cones is a morphism in \(\ICONES\) if and only if:
\begin{itemize}
\item it is a morphism of QBSs,
\item it preserves S-finite integrals: for each S-finite measure
  \(\mu\) on \(C\), if \(\int_{x \in \Mcca C} x\,\mu(dx)\) exists (as
  an element of \(\Mcca C\)), then
  \[\int_{x \in \Mcca C} f(x)\,\mu(dx) = f\Big(\int_{x \in \Mcca C}
    x\,\mu(dx)\Big),\]
\item and it is non-expansive: for all \(x \in \Mcca C\),
  \(\Norm{f(x)} \leq \Norm{x}\).
\end{itemize}
\noindent 
In particular, for each morphism
\(f \in \ICONES(B,C)\), the restriction of \(f\) to
\(\Cuball{\Mcca B}\) is a morphism of algebras:
this was one of our main guidelines in the design of integrable cones.

However, it is not clear whether or not each morphism of algebras from
\(\Cuball{\Mcca B}\) to \(\Cuball{\Mcca C}\) is the restriction of
some morphism in \(\ICONES(B,C)\) (which would make \(\ICONES\) a full
subcategory of the category of algebras over the monad of
sub-probability measures on QBSs).
It is also not clear whether the construction of Section
\ref{sec:ccc-fix} (which will define a fixpoint operator on integrable
cones) can be replicated in the category of algebras over the monad of
sub-probability measures on QBSs.
Indeed, we conjecture that in both cases the answer is no. On the
other hand, integrable cones are not quite algebras over the monad of
S-finite measures, because the would-be monad multiplication (namely,
S-finite integration) is only partially defined. We do not know
whether there exists a monad on (a full subcategory of) QBSs such that
\(\ICONES\) is equivalent to the category of algebras over this monad.

\subsection{The integrable cone of paths and a Fubini theorem for cones} %
\label{sec:int-cpath}
\begin{theorem}
  For each \(X\in\ARCAT\) and each integrable cone \(B\), the measurable
  cone \(\Cpathm X{\Mcofic B}\) is integrable.
\end{theorem}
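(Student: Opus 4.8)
The plan is to construct the integral \emph{pathwise}. Fix $Z\in\ARCAT$, a ``path of paths'' $\Gamma\in\Mcca{\Cpathm Z{\Cpathm X B}}$ and a finite measure $\mu\in\Mcca{\Cmeas(Z)}$; I must exhibit an element of $\Mcca{\Cpathm X B}$ realizing $\int\Gamma(s)\mu(ds)$. The first step is to flatten $\Gamma$ by Lemma~\ref{lemma:meas-path-flat}: set $\tilde\Gamma=\Flpath_{Z,X}(\Gamma)\in\Mcca{\Cpathm{Z\times X}B}$, so that $\tilde\Gamma(s,r)=\Gamma(s)(r)$ and, crucially, $\tilde\Gamma$ is \emph{jointly} measurable in $(s,r)$ against every test of $B$. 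For each fixed $r\in X$ the map $\beta_r=\tilde\Gamma\Comp\Tuple{\Id_Z,\mathrm{const}_r}:Z\to\Mcca B$ is then a measurable path of arity $Z$: indeed $\Tuple{\Id_Z,\mathrm{const}_r}\in\ARCAT(Z,Z\times X)$ (here $\mathrm{const}_r$ factors through $\Measterm$ and $\ARCAT$ is cartesian), so this follows from Lemma~\ref{lemma:precomp-path}.

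Since $B$ is integrable, $\delta(r):=\int\beta_r(s)\mu(ds)=\int\tilde\Gamma(s,r)\mu(ds)$ exists in $\Mcca B$ for every $r$, defining a function $\delta:X\to\Mcca B$. Its boundedness is immediate from Lemma~\ref{lemma:integral-bounded}: $\Norm{\delta(r)}\leq\Norm{\beta_r}\Norm\mu\leq\Norm{\tilde\Gamma}\Norm\mu=\Norm\Gamma\Norm\mu$, uniformly in $r$, since $\Flpath$ preserves the norm.

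The main work is to prove $\delta\in\Mcca{\Cpathm X B}$, i.e.\ its measurability. Fix $Y\in\ARCAT$ and $m\in\Mcms B_Y$. For each point $s'\in Y$ the functional $\Absm{x}{m(s',x)}$ lies in $\Mcms B_\Measterm$ by \Mscompr{} (it is $m$ precomposed with the point $s':\Measterm\to Y$), so the defining equation of the integral in $B$ gives $m(s',\delta(r))=\int m(s',\tilde\Gamma(s,r))\mu(ds)$. The integrand $((s',r),s)\mapsto m(s',\tilde\Gamma(s,r))$ is bounded and measurable on $(Y\times X)\times Z$ --- this is exactly the path condition for $\tilde\Gamma\in\Mcca{\Cpathm{Z\times X}B}$, read through the measurable iso reordering the coordinates --- so, taking the constant kernel $\Absm{(s',r)}{\mu}$ (a bounded kernel $Y\times X\to\Cmeas(Z)$ by Lemma~\ref{lemma:cst-path}), Lemma~\ref{lemma:integral-measurable} yields that $\Absm{(s',r)}{m(s',\delta(r))}$ is measurable, as required.

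Finally I check that $\delta$ realizes the integral. Every test in $\Mcms{\Cpathm X B}_\Measterm$ has the form $\Mtpath r m$ with $r\in X$ and $m\in\Mcms B_\Measterm$, and $(\Mtpath r m)(\gamma)=m(\gamma(r))$. Using the defining property of $\delta(r)$ in $B$ once more, $(\Mtpath r m)(\delta)=m(\delta(r))=\int m(\tilde\Gamma(s,r))\mu(ds)=\int m(\Gamma(s)(r))\mu(ds)=\int(\Mtpath r m)(\Gamma(s))\mu(ds)$, the outer integrands being bounded and measurable because $\Gamma$ is a path. By \Mssepr{} on $\Cpathm X B$ this identifies $\delta$ with $\int\Gamma(s)\mu(ds)$, so $\Cpathm X{\Mcofic B}$ is integrable. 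The only delicate point is the measurability argument of the previous paragraph: it is where both the flattening lemma (to secure joint measurability) and Lemma~\ref{lemma:integral-measurable} (to push measurability through the integral with a constant kernel) are genuinely used, together with the observation via \Mscompr{} that fixing a test's parameter produces an unparameterized test; everything else is routine bookkeeping with the separation axiom.
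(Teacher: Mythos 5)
Your proof is correct and follows essentially the same route as the paper's: the integral of a path of paths is defined pointwise, boundedness comes from Lemma~\ref{lemma:integral-bounded}, and the defining equation is checked against the tests \(\Mtpath rm\) exactly as in the text. The only difference is that where the paper simply cites Lemma~\ref{lemma:int-mesurable} for the measurability of \(r\mapsto\int\Gamma(s)(r)\mu(ds)\), you re-derive it by flattening \(\Gamma\) via Lemma~\ref{lemma:meas-path-flat} and applying Lemma~\ref{lemma:integral-measurable} with a constant kernel, which is precisely the argument used inside the proof of that lemma.
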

\begin{proof}
  Let \(Y\in\ARCAT\), \(\eta\in\Mcca{\Cpathm Y{\Cpathm XB}}\) and %
  \(\nu\in\Mcca{\Cmeas(Y)}\), we define %
  \(\beta:X\to\Mcca B\) by %
  \(\beta(r) %
  =\int\eta(s)(r)\nu(ds)\), in other words the integral of a path of
  paths is defined pointwise.
  For each \(r\in X\) we have
  \begin{align*}
    \Norm{\beta(r)}
    &=\Norm{\int\eta(s)(r)\nu(ds)}\\
    &\leq\Norm{\Absm{s\in Y}{\eta(s)(r)}}\Norm{\nu}
      \text{\quad by Lemma~\ref{lemma:integral-bounded}}\\
    &\leq\Norm\eta\Norm\nu
  \end{align*}
  so \(\beta\) is a bounded function.
  This function is a measurable path by
  Lemma~\ref{lemma:int-mesurable} so %
  \(\beta\) belongs to \(\Mcca{\Cpathm X{\Mcofic B}}\). %
  Let \(p\in\Mcms{\Cpathm X{\Mcofic B}}_\Measterm\) so that %
  \(p=\Mtpath rm\) for some %
  \(r\in X\) and \(m\in\Mcms B_\Measterm\), we have %
  \begin{align*}
    p(\beta)
    &=m(\beta(r))\\
    &=m\Big(\int\eta(s)(r)\nu(ds)\Big)\\
    &=\int m(\eta(s)(r))\nu(ds)
    \\
    &=\int p(\eta(s))\nu(ds)\text{\quad by definition of }p
  \end{align*}
  and hence \(\beta=\int\eta(s)\nu(ds)\).
\end{proof}

\begin{theorem}
  The operation \(\Cpathf\), extended to morphisms by %
  pre- and post-composition, is a functor
  \(\Op\ARCAT\times\ICONES\to\ICONES\).  In other words, given
  \(f\in\ICONES(B,C)\) and \(\phi\in\ARCAT(Y,X)\), we have %
  \[
    \Cpath\phi f
    =\Absm{\beta\in\Mcca{\Cpath XB}}{(f\Comp\beta\Comp\phi)}
    \in\ICONES({\Cpath XB},{\Cpath YC})\,.
  \]
\end{theorem}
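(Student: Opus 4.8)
The plan is to show that, for $\phi\in\ARCAT(Y,X)$ and $f\in\ICONES(B,C)$, the assignment $g=\Cpath\phi f=\Absm{\beta\in\Mcca{\Cpath XB}}{(f\Comp\beta\Comp\phi)}$ is a morphism of $\ICONES$; once this is established, the two functoriality equations follow by a direct computation on representatives. I would first dispatch the elementary points. Given $\beta\in\Mcca{\Cpath XB}$, Lemma~\ref{lemma:precomp-path} gives $\beta\Comp\phi\in\Mcca{\Cpath YB}$, and since $f\in\MCONES(B,C)$ we get $g(\beta)=f\Comp\beta\Comp\phi\in\Mcca{\Cpath YC}$, which is bounded because $\Norm{f(\beta(\phi(s)))}\leq\Norm f\,\Norm\beta$; the same estimate yields $\Norm g\leq\Norm f\leq 1$. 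Linearity and continuity of $g$ are immediate, since the algebraic operations and the lubs of the path cones are computed pointwise and $f$ is linear and $\omega$-continuous (so, e.g., $g(\sup_n\beta_n)(s)=f(\sup_n\beta_n(\phi(s)))=\sup_n f(\beta_n(\phi(s)))=\sup_n g(\beta_n)(s)$).

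The main obstacle is the measurability of $g$ as a morphism of $\MCONES$. Here I would fix $Z\in\ARCAT$ and $\Theta\in\Mcca{\Cpath Z{\Cpath XB}}$ and prove that $g\Comp\Theta\in\Mcca{\Cpath Z{\Cpath YC}}$ by testing against an arbitrary $p=\Mtpath\psi m\in\Mcms{\Cpath YC}_{Y'}$, with $\psi\in\ARCAT(Y',Y)$ and $m\in\Mcms C_{Y'}$. Unfolding the definitions, $p(s',(g\Comp\Theta)(z))=m(s',f(\Theta(z)(\phi(\psi(s')))))$. The key device is the flattening iso of Lemma~\ref{lemma:meas-path-flat}: set $\widehat\Theta=\Flpath_{Z,X}(\Theta)\in\Mcca{\Cpath{Z\times X}B}$, so $\widehat\Theta(z,r)=\Theta(z)(r)$, and observe that $\chi=\Absm{(s',z)\in Y'\times Z}{(z,\phi(\psi(s')))}$ is a morphism of $\ARCAT(Y'\times Z,Z\times X)$, because $\ARCAT$ is a full subcategory of $\MEAS$ closed under products and $\phi,\psi$ are measurable. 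By Lemma~\ref{lemma:precomp-path} and the measurability of $f$, the map $\zeta=f\Comp\widehat\Theta\Comp\chi$ is a measurable path of arity $Y'\times Z$ with $\zeta(s',z)=f(\Theta(z)(\phi(\psi(s'))))$. Applying the test $m\in\Mcms C_{Y'}$ to $\zeta$ shows that $\Absm{(t,(s',z))\in Y'\times(Y'\times Z)}{m(t,\zeta(s',z))}$ is measurable, and precomposing with the measurable diagonal map $(s',z)\mapsto(s',(s',z))$ recovers exactly $\Absm{(s',z)}{p(s',(g\Comp\Theta)(z))}$, which is therefore measurable. The delicate feature is that the test parameter $s'$ occurs \emph{both} as the first argument of $m$ and inside the point $\phi(\psi(s'))$ at which the inner path is sampled; the flattening decouples these into the honest path $\zeta$ of arity $Y'\times Z$, after which the diagonal identifies the two occurrences.

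It remains to check integral preservation, for which I would use that integrals in a cone of paths are computed pointwise. Fix $\Theta\in\Mcca{\Cpath Z{\Cpath XB}}$, $\nu\in\Mcca{\Cmeas(Z)}$ and $s\in Y$. For this fixed $s$ the map $\Absm{z\in Z}{\Theta(z)(\phi(s))}$ is a measurable path of arity $Z$ in $B$ (again by flattening and Lemma~\ref{lemma:precomp-path}), so integral preservation of $f$ applies and gives
\begin{align*}
  g\Big(\int\Theta(z)\,\nu(dz)\Big)(s)
  &=f\Big(\int\Theta(z)(\phi(s))\,\nu(dz)\Big)\\
  &=\int f(\Theta(z)(\phi(s)))\,\nu(dz)\\
  &=\Big(\int g(\Theta(z))\,\nu(dz)\Big)(s)\,,
\end{align*}
where the first and last equalities use that path integrals are taken pointwise. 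Since two paths coincide as soon as they agree pointwise, $g$ preserves integrals, and therefore $g\in\ICONES(\Cpath XB,\Cpath YC)$.

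Finally, the functoriality laws are routine: $\Cpath{\Id_X}{\Id_B}=\Id$, and for $\phi'\in\ARCAT(Z,Y)$ and $h\in\ICONES(C,D)$ one computes on a representative $\beta$ that $\big(\Cpath{\phi'}h\Comp\Cpath\phi f\big)(\beta)=(h\Comp f)\Comp\beta\Comp(\phi\Comp\phi')=\Cpath{\phi\Comp\phi'}{h\Comp f}(\beta)$, which matches the composite of $(\phi,f)$ and $(\phi',h)$ in $\Op\ARCAT\times\ICONES$, the contravariance in the $\ARCAT$-factor turning $\phi,\phi'$ into $\phi\Comp\phi'$.
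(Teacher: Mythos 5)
Your proposal is correct and follows essentially the same route as the paper: the measurability of \(\Cpath\phi f\) is reduced, via the flattening isomorphism of Lemma~\ref{lemma:meas-path-flat} and precomposition with a measurable map built from \(\phi\) and the test's parameter function, to the measurability of a test applied to a path of product arity, and integral preservation follows from the pointwise definition of integrals in path cones together with integral preservation of \(f\). Your explicit handling of the double occurrence of the test parameter via the diagonal map is a point the paper leaves implicit, but it is the same argument.
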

\begin{proof}
  Functoriality is obvious.
  We check first measurability of %
  \(\Cpath\phi f\) so let \(Y'\in\ARCAT\) and let %
  \(\eta\in\Mcca{\Cpath{Y'}{\Cpath XB}}\), we must check that %
  \(\Cpath\phi f\Comp\eta\in\Mcca{\Cpath{Y'}{\Cpath YC}}\).
  Let \(Y''\in\ARCAT\) and \(p\in\Mcms{\Cpath YC}_{Y''}\), we check
  that %
  \(\psi=\Absm{(s'',s')\in{Y''\times Y'}} {p(s'',\Cpath\phi
    f(\eta(s')))}\) is measurable.
  So let %
  \(\rho\in\ARCAT(Y'',Y)\) and \(m\in\Mcms{C}_{Y''}\) be such that %
  \(p=\Mtpath\rho m\).
  Give \(s''\in Y''\) and \(s'\in Y'\), we set
  \begin{align*}
    \psi(s'',s')
    &=
    {m(s'',\Cpath\phi f(\eta(s'))(\rho(s'')))}\\
    &=
      {m(s'',f(\eta(s')(\phi(\rho(s'')))))}\\
    &=
      {m(s'',f(\Flpath(\eta)(s',\phi(\rho(s'')))))}
  \end{align*}
  and the map \(\psi\) is measurable by
  Lemma~\ref{lemma:meas-path-flat} because %
  \(f\Comp\Flpath(\eta)\in\Mcca{\Cpath{Y'\times X}{C}}\) and %
  \(\phi\Comp\rho\) is measurable.
  We prove that %
  \(\Cpath\phi f\) preserves integrals.
  Let \(Y'\in\ARCAT\), %
  \(\eta\in\Mcca{\Cpath{Y'}{\Cpath XB}}\), %
  \(\nu'\in\Mcca{\Cmeas(Y')}\) and let \(s\in Y\).
  We have
  \begin{align*}
    \Cpath\phi f\Big(&\int^{\Cpath XB}_{s'\in Y'}\eta(s')
    \nu'(ds')\Big)(s)
      =f\Big(\Big(\int^{\Cpath XB}_{s'\in Y'}\eta(s')\nu'(ds')\Big)
      (\phi(s))\Big)\\
    &=f\Big(\int^B_{s'\in Y'}\eta(s')(\phi(s))\nu'(ds')\Big)
      \text{\quad by def.~of integration in }\Cpath XB\\
    &=\int^C_{s'\in Y'} f(\eta(s')(\phi(s)))\nu'(ds')
      \text{\quad since }f\text{ preserves integrals}\\
    &=\Big(\int^{\Cpath YC}_{s'\in Y'}\Cpath\phi f(\eta(s'))\nu'(ds')\Big)(s)
  \end{align*}
  which proves our contention.
\end{proof}

\begin{lemma} %
  \label{lemma:lemma:int-path-flat}
  The bijection \(\Flpath_{X,Y}\) defined in
  Lemma~\ref{lemma:meas-path-flat}, as well as its inverse, preserve
  integrals and hence
  \begin{align*}
    \Flpath_{X,Y}\in\ICONES(\Cpath X{\Cpath Y B},\Cpath{X\times Y}B)
  \end{align*}
  is an iso in \(\ICONES\).
\end{lemma}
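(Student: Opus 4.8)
The plan is to exploit the fact, established in Lemma~\ref{lemma:meas-path-flat}, that $\Flpath_{X,Y}$ is already an isomorphism in $\MCONES$, so that only integral preservation remains to be checked. First I would record that all three cones in sight are integrable: since $B$ is integrable, the theorem above (integrability of $\Cpathm X{\Mcofic B}$ for integrable $B$) gives that $\Cpathm Y B$ is integrable, hence so is $\Cpathm X{\Cpathm Y B}$, while $\Cpathm{X\times Y}B$ is integrable by the same theorem; thus every integral appearing below exists. Moreover it suffices to prove that $\Flpath_{X,Y}$ itself preserves integrals: given a path $\beta\in\Mcca{\Cpathm Z{\Cpathm{X\times Y}B}}$ and a measure $\mu\in\Mcca{\Cmeas(Z)}$, the path $\Inv{\Flpath_{X,Y}}\Comp\beta$ lands in the integrable cone $\Cpathm X{\Cpathm Y B}$, so $\int\Inv{\Flpath_{X,Y}}(\beta(t))\mu(dt)$ exists; applying the bijection $\Flpath_{X,Y}$ to $\Inv{\Flpath_{X,Y}}(\int\beta(t)\mu(dt))$ and to $\int\Inv{\Flpath_{X,Y}}(\beta(t))\mu(dt)$ yields $\int\beta(t)\mu(dt)$ in both cases (using integral preservation of $\Flpath_{X,Y}$), and injectivity of $\Flpath_{X,Y}$ then forces $\Inv{\Flpath_{X,Y}}$ to preserve integrals as well.

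For the main direction, the key tool is that integrals in cones of paths are computed \emph{pointwise}, as exhibited in the proof of the integrability theorem for $\Cpathm X{\Mcofic B}$. Fix $Z\in\ARCAT$, $\zeta\in\Mcca{\Cpathm Z{\Cpathm X{\Cpathm Y B}}}$ and $\mu\in\Mcca{\Cmeas(Z)}$, and abbreviate $F=\Flpath_{X,Y}$. On one side, $\int F(\zeta(t))\mu(dt)$ is a function $X\times Y\to\Mcca B$ whose value at $(r,s)$ is, by pointwiseness in $\Cpathm{X\times Y}B$, equal to $\int F(\zeta(t))(r,s)\mu(dt)=\int\zeta(t)(r)(s)\mu(dt)$, an integral in $B$. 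On the other side, applying pointwiseness twice (first in $\Cpathm X{\Cpathm Y B}$, then in $\Cpathm Y B$) gives
\begin{align*}
  F\Big(\int\zeta(t)\mu(dt)\Big)(r,s)
  &=\Big(\int\zeta(t)\mu(dt)\Big)(r)(s)\\
  &=\Big(\int\zeta(t)(r)\mu(dt)\Big)(s)\\
  &=\int\zeta(t)(r)(s)\mu(dt)\,.
\end{align*}
Since both elements of $\Mcca{\Cpathm{X\times Y}B}$ agree at every $(r,s)\in X\times Y$, and such elements are just functions on $X\times Y$, they are equal, which is exactly integral preservation for $F$.

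The computation is essentially bookkeeping, so I do not expect a genuine obstacle; the one point requiring care is the triple-nested use of the pointwise description of integrals (in $\Cpathm X{\Cpathm Y B}$, in $\Cpathm Y B$, and in $\Cpathm{X\times Y}B$), together with the preliminary check that all three cones are integrable so that each pointwise formula is legitimately available. Once these are in place the equality is immediate from the defining formula $F(\eta)=\Absm{(r,s)\in X\times Y}{\eta(r)(s)}$ of the flattening map.
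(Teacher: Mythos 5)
Your proposal is correct and follows exactly the route the paper intends: the paper's proof is the one-line remark that the claim ``results straightforwardly from the pointwise definition of integration in the cones of paths,'' and your computation is precisely the expansion of that remark (pointwise evaluation of both sides at $(r,s)$, using that elements of path cones are genuine functions so pointwise equality suffices). Your additional reduction of the inverse case to injectivity of $\Flpath_{X,Y}$ plus integral preservation of $\Flpath_{X,Y}$ is a clean way to handle the second half of the statement and is entirely consistent with the paper's argument.
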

\begin{proof}
  Results straightforwardly from the ``pointwise'' definition of
  integration in the cones of paths.
\end{proof}

\begin{theorem}[Fubini]
  \label{th:paths-Fubini}
  Let \(X,Y\in\ARCAT\), \(\eta\in\Mcca{\Cpath{X}{\Cpath YB}}\), %
  \(\mu\in\Mcca{\Cmeas(X)}\) and \(\nu\in\Mcca{\Cmeas(Y)}\).
  We have %
  \begin{align*}
    \int_Y\Big(\int_X\eta(r)\mu(dr)\Big)(s)\nu(ds)
    =\int_{X\times Y}\Flpath(\eta)(t)(\Measprod\mu\nu)(dt)
  \end{align*}
\end{theorem}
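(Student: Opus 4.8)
The plan is to reduce the asserted equality, which lives in the cone $\Mcca B$, to a family of scalar equalities by testing against every $m\in\Mcms B_\Measterm$ and then invoking the separation axiom \Mssepr{}. Writing $x$ for the left-hand side and $y$ for the right-hand side, it will suffice to establish $m(x)=m(y)$ for all such $m$, after which $x=y$ follows at once from \Mssepr{} in $B$.

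First I would unfold the left-hand side. By the theorem above stating that $\Cpath Y{\Mcofic B}$ is integrable, with its integrals computed pointwise, the element $\gamma=\int_X\eta(r)\mu(dr)\in\Mcca{\Cpath YB}$ is the path given by $\gamma(s)=\int_X\eta(r)(s)\mu(dr)$. Hence $x=\int_Y\gamma(s)\nu(ds)$, and applying the defining property of the integral in $B$ with the test $m$, and then the same property one level down, yields $m(x)=\int_Y m(\gamma(s))\nu(ds)=\int_Y\big(\int_X m(\eta(r)(s))\mu(dr)\big)\nu(ds)$, where the inner exchange $m\big(\int_X\eta(r)(s)\mu(dr)\big)=\int_X m(\eta(r)(s))\mu(dr)$ is again just the definition of the cone integral.

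For the right-hand side I would use the description of $\Flpath(\eta)\in\Mcca{\Cpath{X\times Y}B}$ from Lemma~\ref{lemma:meas-path-flat}, namely $\Flpath(\eta)(r,s)=\eta(r)(s)$. Applying the defining property of the integral with $m$ gives $m(y)=\int_{X\times Y}m(\eta(r)(s))(\Measprod\mu\nu)(d(r,s))$. At this stage both $m(x)$ and $m(y)$ are expressed entirely through the single scalar function $h=\Absm{(r,s)\in X\times Y}{m(\eta(r)(s))}$: the quantity $m(x)$ is its iterated integral, while $m(y)$ is its integral against the product measure.

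The heart of the matter is then the classical Fubini--Tonelli theorem. I would observe that $h$ is measurable because $m\Comp\Flpath(\eta)$ is measurable by the very definition of a measurable path (since $\Flpath(\eta)\in\Mcca{\Cpath{X\times Y}B}$ and $m\in\Mcms B_\Measterm$), and that $h$ is non-negative and bounded (by $\Norm m\,\Norm\eta$), while $\mu$ and $\nu$ are finite. Under these hypotheses Tonelli's theorem applies with no integrability side-condition and gives $\int_{X\times Y}h\,(\Measprod\mu\nu)(d(r,s))=\int_Y\big(\int_X h(r,s)\mu(dr)\big)\nu(ds)$, that is $m(y)=m(x)$. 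As this holds for every $m\in\Mcms B_\Measterm$, the conclusion follows from \Mssepr{}. The only delicate points are bookkeeping: invoking the pointwise formula for integration in $\Cpath YB$ correctly so that each test $m$ commutes past a cone integral, and checking measurability and boundedness of $h$ so that the scalar Fubini theorem is genuinely available; there is no analytic obstacle, since positivity makes the application of Tonelli unconditional.
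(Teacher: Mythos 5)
Your proposal is correct and follows essentially the same route as the paper's proof: reduce the equality in $\Mcca B$ to scalar equalities by testing against each $m\in\Mcms B_\Measterm$ and invoking \Mssepr{}, then apply the classical Fubini--Tonelli theorem to the bounded non-negative measurable function $m\Comp\Flpath(\eta)$ with the finite measures $\mu$ and $\nu$. Your version merely spells out more explicitly the pointwise unfolding of the integral in $\Cpath YB$, which the paper leaves implicit.
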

\begin{proof}
  Denoting by \(x_1\) and \(x_2\) these two elements of \(\Mcca B\) it
  suffices to prove that for each \(m\in\Mcms B_\Measterm\) one has
  \(m(x_1)=m(x_2)\).
  Setting \(\eta'=\Flpath(\eta)\) we have
  \begin{align*}
    x_1=\int_Y\Big(\int_X \eta'(r,s)\mu(dr)\Big)\nu(ds)
    \Textsep
    x_2=\int_{X\times Y}\eta'(t)(\Measprod\mu\nu)(dt)
  \end{align*}
  and the equation follows by application of the usual Fubini theorem
  to the bounded non-negative measurable function \(m\Comp\eta'\) and
  to the finite measures \(\mu\) and \(\nu\).
  Notice that in the expression of \(x_2\) the variable \(t\) ranges
  over pairs.
\end{proof}

\subsection{The category of integrable cones}

We start with proving that the category \(\ICONES\) of
Definition~\ref{def:icones-category} has all (projective) limits.
This is not only a very pleasant property of the probabilistic model
of \LL{} that we are defining%
\footnote{Which is not so common among models of \LL{}; there is
  however a price to pay, it is very likely that the category
  \(\ICONES\) has no \(\ast\)-autonomous structure.}, %
it will play a crucial role in our definition of the tensor product
and of the exponentials.

\begin{theorem} %
  \label{th:mcones-complete}
  The category \(\ICONES\) is complete.
\end{theorem}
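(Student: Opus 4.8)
The plan is to show that $\ICONES$ has all small products and all binary equalizers, which together yield completeness. In both cases I would let the underlying cone be the corresponding limit in the complete category $\CONES$ (Theorem~\ref{th:alg-cones-equalizers}), equip it with a canonical measurability structure, and then check that the resulting measurable cone is integrable and carries the universal property in $\ICONES$. The recurring principle is that the integral in the limit is computed ``at the level of $\CONES$'' (coordinatewise for products, inherited for equalizers), so that the work splits into verifying the four measurability axioms and checking integral preservation of the structural maps.

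For products, given a family $(B_i)_{i\in I}$ I would take the product cone $\Bwith_{i\in I}\Mcca{B_i}$ of Theorem~\ref{th:cone-product} and define its measurability structure by the \emph{projected tests}: for $X\in\ARCAT$, $\cM_X=\{m\Comp(X\times\Proj i)\St i\in I,\ m\in\Mcms{B_i}_X\}$. Then \Msmesr{} and \Mscompr{} are inherited coordinatewise, \Mssepr{} follows from separation in each $B_i$, and \Msnormr{} holds because each base projected test satisfies $\Norm{m\Comp\Proj i}=\Norm m$ while $\Norm{\Vect x}=\sup_i\Norm{x_i}$, so the two defining suprema coincide. Integrability is obtained coordinatewise: a path $\beta$ into the product has each $\Proj i\Comp\beta$ a path in $B_i$ (test it against $m\Comp(X\times\Proj i)$), its integral $x_i$ exists, the family $(x_i)_{i\in I}$ is bounded by Lemma~\ref{lemma:integral-bounded} and hence lies in the product cone, and $(x_i)_{i\in I}$ is the integral of $\beta$ by evaluating the projected base tests. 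The projections and the tupling $\Tuple{f_i}_{i\in I}$ are then readily linear, continuous, measurable and integral-preserving, so this is the product in $\ICONES$, with uniqueness inherited from $\CONES$.

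For equalizers of $f,g\in\ICONES(B,C)$, I would take the $\CONES$-equalizer $E=\{x\in\Mcca B\St f(x)=g(x)\}$ with its inclusion $e$ (Theorem~\ref{th:alg-cones-equalizers}) and equip $E$ with the restricted tests $\cM_X=\{m|_E\St m\in\Mcms B_X\}$. Here \Msmesr{}, \Mscompr{} and \Mssepr{} are again inherited; the delicate axiom is \Msnormr{}, where one must account for the fact that restriction can only shrink a test's norm, $\Norm{m|_E}\leq\Norm m$, so that $m(x)/\Norm m\leq m|_E(x)/\Norm{m|_E}$ for $x\in E$ and the supremum over restricted tests still recovers $\Norm x_E=\Norm x_B$. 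The crucial point is integrability: a path $\beta$ in $E$ is, via $e$, also a path in $B$, so its integral $x$ exists in $\Mcca B$; and because $f$ and $g$ preserve integrals (Definition~\ref{def:icones-category}) and agree on every $\beta(r)\in E$, we obtain $f(x)=\int f(e(\beta(r)))\mu(dr)=\int g(e(\beta(r)))\mu(dr)=g(x)$, whence $x\in E$. This is exactly where integral preservation of the morphisms being equalized is used; one then checks $x=\int\beta(r)\mu(dr)$ in $E$ against the restricted base tests.

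Finally, $e$ is integral-preserving by construction, and any $h\in\ICONES(D,B)$ with $f\Comp h=g\Comp h$ factors as $h=e\Comp h'$; measurability and integral preservation of $h'$ transfer from $h$ through the injective, integral-preserving $e$ (applying $e$ to $h'(\int\beta(r)\mu(dr))$ and to $\int h'(\beta(r))\mu(dr)$ gives equal elements of $\Mcca B$, hence equality by injectivity), and uniqueness of $h'$ follows from injectivity of $e$. Having produced all small products and binary equalizers, I conclude that $\ICONES$ is complete. The main obstacle I anticipate is not a single step but the bookkeeping around the measurability structures—specifically verifying \Msnormr{} for the equalizer despite the possible drop in test norms, and confirming that the coordinatewise and inherited integrals genuinely satisfy the defining test equations.
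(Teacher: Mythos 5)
Your proposal is correct and follows essentially the same route as the paper: products are built on the product cone of Theorem~\ref{th:cone-product} with projected tests and coordinatewise integration, equalizers on the $\CONES$-equalizer with restricted tests, and the key step is exactly the one you identify — integral preservation of $f$ and $g$ forces the integral of an $E$-valued path to land back in $E$. The only place the paper is slightly more careful is in verifying \Msnormr{} for the equalizer, where it first shrinks $\epsilon$ below $\Norm x$ to guarantee $m(x)\neq 0$ and hence that the restricted test has nonzero norm before dividing by it; your inequality is the right one but tacitly assumes that denominator is positive.
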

There is a faithful forgetful functor %
\(\ICONES\to\SET\) which maps each integrable cone \(C\) to
\(\Mcca C\), considered as a set, and each morphism to itself; we will
see that this functor actually creates all the small limits in
\(\ICONES\).
\begin{proof}
  We prove first that \(\ICONES\) has all small products. We use
  implicitly Theorem~\ref{th:cone-product} at several places.
  Let \((C_i)_{i\in I}\) be a collection of integrable cones and let %
  \(P=\Bwith_{i\in I}\Mcca{C_i}\) which is the product of the %
  \(\Mcca{C_i}\)'s in \(\CONES\).
  Given \(X\in\ARCAT\), \(i\in I\)
  and \(m\in\Mcms{C_i}_X\) we define %
  \(\Mtinj im:X\times P\to\Realp\) by %
  \(\Mtinj im(r,\Vect x)=m(r,x_i)\).
  We set \(\cM=(\cM_X)_{X\in\ARCAT}\) where %
  \(\cM_X=\{\Mtinj im\St i\in I\text{ and }m\in\Mcms{C_i}_X\}\). %
  With the notations above, given \(\Vect x\in P\), the function %
  \(\Absm{r\in X}{\Mtinj im(r,\Vect x)}
  =\Absm{r\in X}{m(r,x_i)}\) is measurable
  since \(\Mcms{C_i}\) satisfies \Msmesr.

  Let \(\phi\in\ARCAT(Y,X)\), we have %
  \(\Mtinj im\Comp(\phi\times P) =\Mtinj
  i{m\Comp(\phi\times\Mcca{C_i})}\in\cM_Y\) since %
  \(m\Comp(\phi\times\Mcca{C_i})\in\Mcms{C_i}_Y\) by \Mscompr{} in
  \(C_i\).

  Let \(\Vect{x(1)},\Vect{x(2)}\in P\) be such that %
  \(\forall p\in\cM_\Measterm\ p(\Vect{x(1)})= p(\Vect{x(2)})\).
  Then for each \(i\in I\) we have \(x(1)_i= x(2)_i\) by %
  \Mssepr{} holding in \(C_i\) and hence %
  \(\Vect{x(1)}=\Vect{x(2)}\).

  Let \(\Vect x\in P\setminus\Eset 0\) and \(\epsilon>0\). %
  Since \(\Norm{\Vect x}=\sup_{i\in I}\Norm{x_i}\) there is %
  \(i\in I\) such that %
  \(\Norm{\Vect x}\leq\Norm{x_i}+\epsilon/2\) and \(x_i\not=0\).
  We can find %
  \(m\in\Mcms{C_i}_0\setminus\Eset \Measterm\) such that %
  \(\Norm{x_i}\leq m(x_i)/\Norm m+\epsilon/2\).
  Let \(p=\Mtinj im\in\cM_0\), notice that %
  \(\Norm p=\Norm{m_i}\) since for each %
  \(x\in\Cuball{\Mcca{C_i}}\) the family \(\Vect y\) defined by %
  \(y_i=x\) and \(y_j=0\) if \(j\not=i\) satisfies %
  \(\Vect y\in\Cuball P\).
  So we have %
  \(\Norm{\Vect x}\leq p(\Vect x)/\Norm p+\epsilon\) which shows that
  \(\cM\) satisfies \Msnormr.

  So the pair \((P,\cM)\) is a measurable cone
  \(C=\Bwith_{i\in I}C_i\), we prove that it is integrable.
  An element of \(\Mcca{\Cpathm X{C}}\) is a family %
  \((\gamma_i\in\Mcca{\Cpathm X{C_i}})_{i\in I}\) such that %
  \((\Norm{\gamma_i})_{i\in I}\) is bounded and, given
  \(\mu\in\Mcca{\Cmeas(X)}\), the family
  \begin{align*}
    \Vect x=\Big(\int\gamma_i(r)\mu(dr)\Big)_{i\in I}
  \end{align*}
  is in \(P\) by Lemma~\ref{lemma:integral-bounded} and is the
  integral of \(\gamma\) over \(\mu\) by definition of \(\Mcms C\).

  With the same notations as above, for each \(i\in I\), the map %
  \(\Proj i\Comp\gamma\) is a measurable path since, given %
  \(Y\in\ARCAT\) and \(m\in\Mcms{C_i}_Y\) one has
  \(\Absm{(s,r)\in{Y\times X}}{\Mtinj im(s,\gamma(r))}
  =\Absm{(s,r)\in{Y\times X}}{m(s,\Proj i(\gamma(r))}\).
  The fact that \(\Proj i\in\ICONES(C,C_i)\) results from the
  definition of integration in \(C\). %
  
  Let %
  \((f_i\in\ICONES(D,C_i))_{i\in I}\), then we know that %
  \(f=\Tuple{f_i}_{i\in I}\in\CONES(\Mcca D,\Mcca C)\). %
  Let \(\delta\in\Mcca{\Cpath XD}\), we prove that
  \(f\Comp\delta\in\Mcca{\Cpath XC}\) so let \(i\in I\) and
  \(m\in\Mcms{C_i}_Y\).
  We have %
  \(\Absm{(s,r)\in{Y\times X}}{\Mtinj im(s,f(\delta(r)))}
  =\Absm{(s,r)\in{Y\times X}}{m(s,f_i(\delta(r)))}\) and this latter
  map is measurable for each \(i\in I\) thus proving that
  \(f\Comp\delta\) is measurable.
  Using the same notations, let furthermore
  \(\mu\in\Mcca{\Cmeas(X)}\), we have
  \begin{align*}
    f\Big(\int\delta(r)\mu(dr)\Big)
    &=\Big(f_i\Big(\int\delta(r)\mu(dr)\Big)\Big)_{i\in I}\\
    &=\Big(\int f_i(\delta(r))\mu(dr))\Big)_{i\in I}
      \text{\quad since each }f_i\text{ preserves integrals}\\
    &=\int f(\delta(r))\mu(dr)
  \end{align*}
  which shows that \(f\in\ICONES(D,C)\) as required.
  This proves that \(\ICONES\) has all small products.

  We prove now that \(\ICONES\) has equalizers, so let %
  \(f,g\in\ICONES(C,D)\).
  Let \((P,e\in\CONES(P,\Mcca C))\) be the equalizer of \(f\) and \(g\)
  in \(\CONES\), see Theorem~\ref{th:alg-cones-equalizers}.
  Remember that if \(x,y\in P\) satisfy \(x\leq_{\Mcca C}y\) then
  \(y-x\in P\).

  We define \(\cM_X\) as the set of
  all 
  \(p:X\times P\to\Realp\) such that there is %
  \(m\in\Mcms C_X\) satisfying %
  \(\forall x\in P\,\forall r\in X\ p(r,x)=m(r,x)\).
  Then it is clear that \(p\in(\Cdual P)^{X}\) and we actually
  identify \(\cM_X\) with \(\Mcms C_X\) although several elements of
  the latter can induce the same element of the former.
  We prove that \((\cM_X)_{X\in\ARCAT}\) defines a measurability
  structure on \(P\), the only non trivial property being \Msnormr{}.
  Let \(x\in P\setminus\Eset 0\) and \(\epsilon>0\).
  Let \(\epsilon'>0\) be such that %
  \(\epsilon'\leq\epsilon\) and \(\epsilon'<\Norm x\) (remember that
  we have assumed that \(x\not=0\) and hence \(\Norm x>0\)).
  Applying \Msnormr{} in \(C\) we can find
  \(m\in\Mcms C_\Measterm\setminus\Eset 0\) such that %
  \(\Norm x=\Norm x_C\) satisfies
  \(\Norm x\leq m(x)/\Norm m^{\Mcca C}+\epsilon'\) where we have added
  the superscript to \(\Norm m\) to insist on the fact that it is
  computed in \(\Mcca C\), that is
  \(\Norm m^{\Mcca C}=\sup_{y\in\Cuball{\Mcca C}}m(y)\).
  By our assumption that \(\epsilon'<\Norm x\) we must have
  \(m(x)\not=0\).
  By definition of \(\Norm\__P\) we have
  \(\Cuball P=P\cap\Cuball{\Mcca C}\subseteq\Cuball{\Mcca C}\) and
  hence %
  \(\Norm m^P= %
  \sup_{y\in\Cuball P}m(y) %
  \leq\sup_{y\in\Cuball{\Mcca C}}m(y) %
  =\Norm m^{\Mcca C}\) %
  (and \(\Norm m^P\not=0\) since \(m(x)\not=0\) and \(x\in P\)) and
  hence
  \begin{align*}
    \Norm x_P=\Norm x_C
    \leq\frac{m(x)}{\Norm m^{\Mcca C}}+\epsilon'
    \leq\frac{m(x)}{\Norm m^{P}}+\epsilon
  \end{align*}
  since \(\Norm m^P\leq\Norm m^{\Mcca C}\) and
  \(\epsilon'\leq\epsilon\), and since this holds for all
  \(\epsilon>0\), it follows that \(P\) satisfies \Msnormr.

  So we have defined a measurable cone \(E=(P,\cM)\), we check that it
  is integrable.
  Let \(X\in\ARCAT\), %
  \(\beta\in\Mcca{\Cpathm X{E}}\) and \(\mu\in\Mcca{\Cmeas(X)}\), we
  have
  \begin{align*}
    f\Big(\int\beta(r)\mu(dr)\Big)
    =\int f(\beta(r))\mu(dr)=\int g(\beta(r))\mu(dr)
    =g\Big(\int\beta(r)\mu(dr)\Big)
  \end{align*}
  since \(\beta\) ranges in \(\Mcca E=P\) and \(f\) and \(g\) preserve
  integrals.
  Hence %
  \(\int\beta(r)\mu(dr)\in\Mcca E\) and this element of \(\Mcca E\) is
  the integral of \(\beta\) over \(\mu\) by definition of \(\Mcms E\).

  We check now that \((E,e)\) is the equalizer of \(f,g\) in
  \(\ICONES\).
  The inclusion \(e\in\CONES(\Mcca E,\Mcca C)\) is measurable
  \(E\to C\) by definition of the measurability structure of \(E\)
  which is essentially the same as that of \(C\) and preserves
  integrals because the integral in \(E\) is defined as in \(C\).

  We already know that \(f\Compl e=g\Compl e\).
  Let \(H\) be 
  an integrable cone and \(h\in\ICONES(H,C)\) be such that
  \(f\Compl h=g\Compl h\).
  Let \(h_0\) be the unique element of \(\CONES(\Mcca H,\Mcca E)\)
  such that \(h=e\Compl h_0\).
  Let \(X\in\ARCAT\) and %
  \(\gamma\in\Mcca{\Cpath XH}\) be a measurable path of \(H\).
  Let %
  \(Y\in\ARCAT\) and \(m\in\Mcms E_Y\) so that actually %
  \(m\in\Mcms C_Y\).
  We have %
  \(\Absm{(s,r)\in{Y\times X}}{m(s,h_0(\gamma(r)))}
  =\Absm{(s,r)\in{Y\times X}}{m(s,h(\gamma(r)))}\) which is
  measurable since \(h\) is.
  With the same notation, taking also %
  \(\mu\) in \(\Mcca{\Cmeas(X)}\), we have
  \begin{align*}
    h_0\Big(\int^H\gamma(r)\mu(dr)\Big)
    &=h\Big(\int^H\gamma(r)\mu(dr)\Big)
      \text{\quad by definition of }h_0\\
    &=\int^C h(\gamma(r))\mu(dr)
      \text{\quad since }h\text{ preserves integrals}\\
    &=\int^E h_0(\gamma(r))\mu(dr)
  \end{align*}
  and hence \(h_0\in\ICONES(H,E)\).
  Since \(h=e\Compl h_0\) and is unique with this property in
  \(\CONES\), it has the same properties in \(\ICONES\).
  
  This shows that \(\ICONES\) has all small limits.
\end{proof}

\begin{lemma}
  \label{lemma:int-cone-transp-iso}
  Let \(C\) be an integrable cone, \(S\) be a set, and let
  \(f:\Mcca C\to S\) be a bijection.
  There is an integrable cone structure on \(S\) such that \(f\)
  is an iso in \(\ICONES\).
\end{lemma}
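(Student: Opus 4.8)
The plan is to transport the whole structure of $C$ along $f$, exactly as in the proof of Lemma~\ref{lemma:alg-cone-transport}, and then to check that integrability survives the transport. First I would invoke Lemma~\ref{lemma:alg-cone-transport} to equip $S$ with the (unique) cone structure for which $f$ becomes an iso in $\CONES$; concretely $s_1+s_2=f(\Inv f(s_1)+\Inv f(s_2))$, $\lambda s=f(\lambda\Inv f(s))$ and $\Norm s_S=\Norm{\Inv f(s)}_{\Mcca C}$. By Lemma~\ref{lemma:linear-inverse} the map $\Inv f$ is itself linear and continuous, and it is a norm-preserving bijection $S\to\Mcca C$. I would then transport the measurability tests: for $X\in\ARCAT$ and $m\in\Mcms C_X$ set $\widehat m(r,s)=m(r,\Inv f(s))$ and define $\Mcms{C'}_X=\{\widehat m\St m\in\Mcms C_X\}$. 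Since $\Inv f$ is linear and continuous, $\Absm{s\in S}{\widehat m(r,s)}\in\Cdual S$ for each $r$, and $\widehat m$ is measurable in its first argument by \Msmesr\ in $C$ (note $\Inv f(s)\in\Cuball{\Mcca C}$ when $s\in\Cuball S$).

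Next I would verify that $\cM'=(\Mcms{C'}_X)_{X\in\ARCAT}$ is a measurability structure. \Msmesr\ is the observation just made, and \Mscompr\ follows from the identity $\widehat m\Comp(\phi\times S)=\widehat{m\Comp(\phi\times\Mcca C)}$ together with \Mscompr\ in $C$. For \Mssepr, if $\widehat m(s_1)=\widehat m(s_2)$ for all $m\in\Mcms C_\Measterm$ then $m(\Inv f(s_1))=m(\Inv f(s_2))$ for all such $m$, so $\Inv f(s_1)=\Inv f(s_2)$ by \Mssepr\ in $C$, whence $s_1=s_2$. For \Msnormr\ I use that $\Inv f$ maps $\Cuball S$ bijectively onto $\Cuball{\Mcca C}$, so $\Norm{\widehat m}=\Norm m$ for $m\in\Mcms C_\Measterm$; since moreover $\Norm s_S=\Norm{\Inv f(s)}_{\Mcca C}$, the supremum computing $\Norm{\Inv f(s)}$ in $C$ is literally the supremum computing $\Norm s_S$ through the tests $\widehat m$. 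This makes $C'=(S,\cM')$ a measurable cone. A function $\gamma:X\to\Mcca C$ is a measurable path iff $f\Comp\gamma$ is one in $C'$, because $\widehat m(s,f(\gamma(r)))=m(s,\gamma(r))$; hence $f$ and $\Inv f$ both preserve measurable paths, and being a norm-preserving linear continuous bijection, $f$ is an iso in $\MCONES$ by Remark~\ref{rk:mcones_iso}.

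Finally I would establish integrability. Given $X\in\ARCAT$, $\beta\in\Mcca{\Cpath X{C'}}$ and $\mu\in\Mcca{\Cmeas(X)}$, the path $\Inv f\Comp\beta$ lies in $\Mcca{\Cpath XC}$, so by integrability of $C$ the element $x=f\bigl(\int(\Inv f\Comp\beta)(r)\,\mu(dr)\bigr)$ is defined in $S$. For every $m\in\Mcms C_\Measterm$, using the definition of $\widehat m$, of $x$, and of the integral in $C$,
\[
  \widehat m(x)=m\Bigl(\int\Inv f(\beta(r))\,\mu(dr)\Bigr)
  =\int m(\Inv f(\beta(r)))\,\mu(dr)=\int\widehat m(\beta(r))\,\mu(dr),
\]
so by \Mssepr\ in $C'$ the element $x$ is the integral of $\beta$ over $\mu$; thus $C'$ is integrable. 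The same computation applied to $\beta=f\Comp\gamma$ with $\gamma\in\Mcca{\Cpath XC}$ gives $f(\int\gamma(r)\,\mu(dr))=\int f(\gamma(r))\,\mu(dr)$, i.e.\ $f$ preserves integrals, and symmetrically so does $\Inv f$. Hence $f$ is an iso in $\ICONES$.

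I do not expect a genuine obstacle: every verification reduces to the fact that $\Inv f$ is a linear, continuous, norm-preserving and (once integrability is in place) integral-preserving bijection, so the content is purely the bookkeeping of pushing tests, paths and integrals back and forth through $f$ and $\Inv f$. The one point deserving care is that the statement asks only for \emph{existence} of an integrable structure, not uniqueness: unlike the cone structure provided by Lemma~\ref{lemma:alg-cone-transport}, the transported measurability structure need not be the only one turning $f$ into an iso, in accordance with Remarks~\ref{rk:measure-cone-two-ms} and~\ref{rk:mcones_iso}.
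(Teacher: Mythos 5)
Your proposal is correct and follows essentially the same route as the paper: transport the cone structure via Lemma~\ref{lemma:alg-cone-transport}, push the tests forward by precomposition with \(\Inv f\), and verify integrability by the same computation through the transported tests (your \(\widehat m\) is exactly the paper's \(\Transfwd f(m)\)). Your closing remark about non-uniqueness of the measurability structure also matches the comment the paper makes immediately after the statement.
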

\noindent 
This structure is not unique \emph{a priori} (other choices for the
measurability structure are possible in general), but this is not an
issue for the use that we will make of this lemma.
\begin{proof}
  By Lemma~\ref{lemma:alg-cone-transport} we can equip \(S\) with a
  cone structure such that \(f\in\CONES(\Mcca C,S)\) (we use \(S\) for
  the cone obtained by equipping \(S\) with this structure), and hence
  \(f\) is an iso in \(\CONES\) since \(f\) is a bijection.
  Let \(X\in\ARCAT\) and let \(m\in\Mcms C_X\).
  We define \(\Transfwd f(m):X\times S\to\Realp\) by %
  \(\Transfwd f(m)(r,z)=m(r,\Inv f(z))\).
  We set \(\cM_X=\{\Transfwd f(m)\St m\in\Mcms C_X\}\).
  In view of the definition of the algebraic structure and of the norm
  of \(S\), it is clear that \((\cM_X)_{X\in\ARCAT}\) is a
  measurability structure on \(S\), we still use \(S\) for denoting
  this measurable cone and we observe that \(f\) is an iso from \(C\)
  to \(S\) in \(\MCONES\).
  It is also easy to check that
  \(\Cpath XS=\{f\Comp\gamma\St \gamma\in\Cpath XC\}\).
  This cone \(S\) is integrable: %
  given \(\mu\in\Cmeas(X)\) and \(\gamma\in\Cpath XS\), we have %
  \(\Inv f\Comp\gamma\in\Cpath XC\) and hence the integral %
  \(x=\int_C\Inv f(\gamma(r))\mu(dr)\in\Mcca C\) exists.
  Then for each \(p=\Transfwd f(m)\in\Mcms S_\Measterm\) where
  \(m\in\Mcms C_\Measterm\), we have %
  \(p(f(x))
  =m(x)
  =\int_{\Realp} m(\Inv f(\gamma(r)))\mu(dr)
  =\int_{\Realp} p(\gamma(r))\mu(dr)\) %
  which shows that \(\int_S\gamma(r)\mu(dr)\) exists and is \(f(x)\).
  It follows also trivially that \(f\) preserves integrals.
\end{proof}

\begin{theorem}
  \label{th:icones-conditions-saft}
  In the category \(\ICONES\) the object \(\Sone\) is a coseparator
  and a separator%
  \footnote{In the literature one also finds the words
    \emph{generator} and \emph{cogenerator} for such objects.} %
  and \(\ICONES\) is well-powered.
\end{theorem}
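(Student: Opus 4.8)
The plan is to treat the three claims separately, in each case leaning on an explicit description of the morphisms into and out of $\Sone$. First I would record that $\ICONES(\Sone,B)$ is in bijection with the unit ball $\Cuball{\Mcca B}$ via $h\mapsto h(1)$: a linear map $\Realp=\Mcca\Sone\to\Mcca B$ is exactly $\lambda\mapsto\lambda x$ for a unique $x=h(1)$, the condition $\Norm h\leq 1$ amounts to $x\in\Cuball{\Mcca B}$, and such a map is automatically continuous, measurable and integral preserving. Measurability holds because for a path $\beta$ of $\Sone$ (a bounded measurable $X\to\Realp$) and a test $m\in\Mcms B_Y$ one has $m(s,\beta(r)\,x)=\beta(r)\,m(s,x)$, a product of measurable functions; integral preservation holds since, writing $c=\int\beta(r)\mu(dr)\in\Realp$ for the integral computed in $\Sone$, every $m\in\Mcms B_\Measterm$ gives $m(c\,x)=c\,m(x)=\int\beta(r)\,m(x)\,\mu(dr)=\int m(\beta(r)\,x)\mu(dr)$, so \Mssepr{} in $B$ yields $h(c)=\int h(\beta(r))\mu(dr)$. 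Granting this, the separator property is immediate: if $f\neq g$ in $\ICONES(B,C)$ there is $x\in\Mcca B$ with $f(x)\neq g(x)$, and after rescaling $x$ into $\Cuball{\Mcca B}$ (using linearity of $f,g$) the associated $h\in\ICONES(\Sone,B)$ satisfies $f\Comp h\neq g\Comp h$.

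For the coseparator property I would show that every test $m\in\Mcms C_\Measterm$ is itself a morphism $C\to\Sone=\Sbot$ in $\ICONES$. It is linear and continuous by the definition of a measurability structure; \Msmesr{} applied at $X=\Measterm$ gives $m(x)\in\Intcc01$ for all $x\in\Cuball{\Mcca C}$, whence $\Norm m\leq 1$; measurability of $m$ as a map of cones is the special case $Y=\Measterm$ of the defining property of a measurable path of $C$; and integral preservation is nothing but the defining equation $m\big(\int\beta(r)\mu(dr)\big)=\int m(\beta(r))\mu(dr)$ of the integral in $C$, the right-hand side being the integral in $\Sone$ of $m\Comp\beta$. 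Now if $f\neq g$ in $\ICONES(B,C)$, pick $x$ with $f(x)\neq g(x)$; by \Mssepr{} in $C$ there is $m\in\Mcms C_\Measterm$ with $m(f(x))\neq m(g(x))$, and this $m$ witnesses $m\Comp f\neq m\Comp g$. Hence $\Sone$ is a coseparator.

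Finally, for well-poweredness I would first use the separator property to see that a morphism $e\in\ICONES(D,C)$ is monic if and only if it is injective on underlying sets: injective morphisms are clearly monic, while a non-injective $e$ gives $x\neq y$ in $\Mcca D$ with $e(x)=e(y)$, which after rescaling into $\Cuball{\Mcca D}$ produces two distinct morphisms $\Sone\to D$ equalized by $e$. Given a monic $e:D\to C$, the bijection $e:\Mcca D\to e(\Mcca D)\subseteq\Mcca C$ lets me transport, by Lemma~\ref{lemma:int-cone-transp-iso}, the integrable cone structure of $D$ onto the subset $e(\Mcca D)$, so that $e$ factors as an iso followed by the inclusion; thus every subobject of $C$ is represented by a monic inclusion whose domain is carried by a subset of $\Mcca C$. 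It then remains to bound such data: the underlying subset ranges over $\mathcal P(\Mcca C)$; on a fixed subset the algebraic operations are forced (the inclusion is linear, so they are the restrictions of those of $\Mcca C$), while the norm ranges over $\Realp^{A}$ and the measurability structure $(\cM_X\subseteq(\Cdual A)^{X})_{X\in\ARCAT}$ ranges over a further set. This last point is the one genuinely delicate step, and it is exactly where the standing smallness of $\ARCAT$ is used: it makes the index family a set, hence the product of the relevant power sets a set. Collecting these bounds shows the class of subobjects of $C$ injects into a set, so $\ICONES$ is well-powered; the separator and coseparator claims, by contrast, are essentially unwindings of the definitions once the hom-sets into and out of $\Sone$ are identified.
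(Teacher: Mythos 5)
Your proposal is correct and follows essentially the same route as the paper: the separator claim via the morphisms $\lambda\mapsto\lambda x$ (the paper's $\Pt x$), the coseparator claim by observing that each test $m\in\Mcms C_\Measterm$ is a morphism $C\to\Sone$ combined with \Mssepr{}, and well-poweredness by reducing every mono to an inclusion via Lemma~\ref{lemma:int-cone-transp-iso} and then bounding the possible cone and measurability structures on subsets of $\Mcca C$, using the smallness of $\ARCAT$ at exactly the same point. The only cosmetic difference is that you note the algebraic operations on the image are forced by linearity of the inclusion, whereas the paper simply includes all candidate operations in the bounding set; both suffice.
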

\begin{proof}
  Let \(f\not=g\in\ICONES(C,D)\) and let \(x\in\Mcca C\) be such
  that %
  \(f(x)\not=g(x)\). By \Mssepr{} there is \(m\in\Mcms C_\Measterm\)
  such that %
  \(m(f(x))\not=m(g(x))\) and since \(m\in\ICONES(C,\Sone)\) (using
  the definition of integrals) this shows that \(\Sone\) is
  a coseparator.

  Given \(x\in\Mcca C\) we check that the function
  \(\Pt x:\Realp\to\Mcca C\) defined by %
  \(\Pt x(\lambda)=\lambda x\) belongs to \(\ICONES(\Sone,C)\).
  It is clearly linear and continuous.
  Let %
  \(\beta\in\Mcca{\Cpath X\Sone}\) for some \(X\in\ARCAT\).
  This simply means that \(\beta\) is a measurable and bounded
  function \(X\to\Realp\), we must check that %
  \(\Pt x\Comp\beta\in\Mcca{\Cpath XC}\) so let %
  \(Y\in\ARCAT\) and \(m\in\Mcms C_Y\), we have %
  \begin{align*}
    \Absm{(s,r)\in Y\times X}{m(s,\Pt x(\beta(r)))}
    &=\Absm{(s,r)\in Y\times X}{m(s,\beta(r)x)}\\
    &=\Absm{(s,r)\in Y\times X}{\beta(r)m(s,x)}
  \end{align*}
  which is measurable by measurability of multiplication.
  With the same notations and using moreover some
  \(\mu\in\Mcca{\Cmeas(X)}\) we must prove that
  \begin{align*}
    \Pt x\Big(\int^\Sone\beta(r)\mu(dr)\Big)
    =\int^C\Pt x(\beta(r))\mu(dr)\,.
  \end{align*}
  Remember that the second member of this equation is well defined
  since we have shown that \(\Pt x\Comp\beta\in\Mcca{\Cpath XC}\). To
  check the equation, let %
  \(m\in\Mcms C_\Measterm\), we have
  \begin{align*}
    m\Big(\Pt x\Big(\int^\Sone\beta(r)\mu(dr)\Big)\Big)
    &=m\Big(\Big(\int^\Sone\beta(r)\mu(dr)\Big)x\Big)\\
    &=\Big(\int^\Sone\beta(r)\mu(dr)\Big)m(x)\\
    &=\int^\Sone\beta(r)m(x)\mu(dr)\\
    &=\int^\Sone m(\Pt x(\beta(r)))\mu(dr)\\
    &=m\Big(\int^C\Pt x(\beta(r))\mu(dr)\Big)\,.
  \end{align*}
  So \(\Pt x\in\ICONES(\Sone,C)\) as contended. Since \(\Pt x(1)=x\)
  this shows that \(\Sone\) is a separator.

  Let \(D\) be a subobject of \(C\); more precisely let %
  \(h\in\ICONES(D,C)\) be a mono.
  This implies that \(h\) is injective because \(\Sone\) is
  a separator.
  Let \(S=h(\Mcca D)\subseteq\Mcca C\).
  By Lemma~\ref{lemma:int-cone-transp-iso}, there is an integrable
  cone \(D'\) such that \(\Mcca{D'}=S\) (as sets) and \(f\), the
  corestriction of \(h\) to \(S\), is an iso in \(\ICONES\) from \(D\)
  to \(D'\).
  Moreover the inclusion \(e\) of \(S\) into the set \(\Mcca C\)
  satisfies \(e=h\Comp\Inv f\) and hence \(e\in\ICONES(D',C)\).

  We have proven that, in the slice category \(\ICONES/C\), each subobject
  \((D,h)\) of \(C\) is isomorphic to a subobject \((D',h')\) of \(C\)
  such that \(h'\) is an inclusion (that is
  \(\forall y\in\Mcca{D'}\ h'(y)=y\)).
  Notice finally that that the class of subobjects \((D',h')\) of \(C\) such
  that \(h'\) is an inclusion is a set because \(\ARCAT\) is a set%
  \footnote{It is only here that we use this assumption but it is
    essential.}.
  Consider indeed a subset \(S\) of \(\Mcca C\).
  The class of all structures of measurable cones \(D'\) whose underlying set is \(S\) is contained in
  \begin{align*}
    &S^{S\times S}
      \text{\quad (contains all possible additions)}\\
    \times
    &S^{\Realp\times S}
      \text{\quad (contains all possible scalar multiplications)}\\
    \times
    & \Realp^S
      \text{\quad (contains all possible norms)}\\
    \times
    &\prod_{X\in\ARCAT}\Part{\Realp^{X\times S}}
      \text{\quad (contains all possible measurability structures)}
  \end{align*}
  which is a set \(\cF(S)\) \emph{because
    \(\ARCAT\) is small}.
  Now the class of all subobjects \((D',h')\) of \(C\) such that
  \(h'\) is an inclusion is
  contained in %
  \(\{(S,F)\St S\subseteq\Mcca C\text{ and }F\in\cF(S)\}\), which is a
  set.
  This shows that the class of subobjects of \(C\) is essentially
  small, that is \(\ICONES\) is well-powered.
\end{proof}

\begin{theorem} %
  \label{th:Icones-adjoint-functor}
  If \(\cC\) is a locally small category and %
  \(R:\ICONES\to\cC\) is a functor which preserves all limits, then
  \(R\) has a left adjoint.
\end{theorem}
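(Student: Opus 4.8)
The plan is to invoke the Special Adjoint Functor Theorem (SAFT), whose hypotheses are exactly the structural properties of $\ICONES$ accumulated in the preceding results. Recall the statement of SAFT: if $\cA$ is a locally small, complete, well-powered category possessing a coseparator, then every functor $R:\cA\to\cC$ into a locally small category $\cC$ which preserves all small limits admits a left adjoint. It therefore suffices to verify that $\ICONES$ meets each of these conditions, after which the theorem applies with $\cA=\ICONES$.

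First I would observe that $\ICONES$ is locally small: for integrable cones $B$ and $C$, the homset $\ICONES(B,C)$ is by definition a collection of functions $\Mcca B\to\Mcca C$, hence a set. Next, $\ICONES$ is complete by Theorem~\ref{th:mcones-complete}. It is well-powered by Theorem~\ref{th:icones-conditions-saft} (this is precisely the step that exploits the smallness of $\ARCAT$). Finally, by the same Theorem~\ref{th:icones-conditions-saft}, the object $\Sone$ is a coseparator in $\ICONES$. Together with the standing hypotheses that $R$ preserves all limits and that $\cC$ is locally small, every requirement of SAFT is fulfilled, and we conclude that $R$ has a left adjoint.

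There is essentially no remaining obstacle, since every substantive ingredient --- completeness, well-poweredness, and the coseparator $\Sone$ --- has already been supplied by the earlier theorems; the argument here is the assembly of these pieces into the hypotheses of a single classical theorem. The only point that deserves a moment of care is the bookkeeping of the dual notions: the version of SAFT that produces a \emph{left} adjoint of a \emph{limit}-preserving functor requires a \emph{coseparator} (cogenerator), and this is exactly the role played by $\Sone$. The fact, also recorded in Theorem~\ref{th:icones-conditions-saft}, that $\Sone$ is simultaneously a separator is not used in the present statement; it becomes relevant elsewhere, for instance in the construction of the tensor product and of the exponential comonads.
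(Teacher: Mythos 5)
Your proof is correct and takes exactly the same route as the paper, which simply says ``Apply the special adjoint functor theorem'': you have merely spelled out the verification that $\ICONES$ is locally small, complete (Theorem~\ref{th:mcones-complete}), well-powered, and has the coseparator $\Sone$ (Theorem~\ref{th:icones-conditions-saft}). The bookkeeping remark about needing a coseparator (not a separator) for a left adjoint of a limit-preserving functor is accurate.
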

\begin{proof}
  Apply the special adjoint functor theorem.
\end{proof}

\begin{remark}
  This implies in particular that the forgetful functor
  \(\ICONES\to\MCONES\) (which obviously preserves all limits) has a
  left adjoint, meaning that each measurable cone can be ``completed
  with integrals''.
\end{remark}

\subsection{Colimits and coproducts}
\label{sec:icones-coproducts}
\begin{theorem} %
  \label{th:Icones-all-colimits}
  The category \(\ICONES\) has all small colimits.
\end{theorem}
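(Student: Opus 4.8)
The plan is to deduce cocompleteness by applying the special adjoint functor theorem, already packaged as Theorem~\ref{th:Icones-adjoint-functor}, to the constant-diagram functor. It suffices to show that for every small category $\cJ$ the category $\ICONES$ has all colimits of shape $\cJ$; letting $\cJ$ range over all small categories then yields all small colimits. So fix a small category $\cJ$ and consider the diagonal functor $\Delta:\ICONES\to\ICONES^{\cJ}$ sending an integrable cone $C$ to the functor constant at $C$, and a morphism to the corresponding constant natural transformation. Recall that a colimit of a diagram $D:\cJ\to\ICONES$ is precisely a value of a left adjoint to $\Delta$ at $D$, so it is enough to produce a left adjoint to $\Delta$.

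First I would record that $\ICONES^{\cJ}$ is locally small: since $\cJ$ is small and $\ICONES$ is locally small, a natural transformation between two fixed functors $\cJ\to\ICONES$ is determined by a family indexed by the small set of objects of $\cJ$, each component lying in a hom-set of $\ICONES$, so these natural transformations form a set. Next I would verify that $\Delta$ preserves all limits. This is the one routine check: limits in the functor category $\ICONES^{\cJ}$ are computed pointwise, which uses only that $\ICONES$ is complete (Theorem~\ref{th:mcones-complete}); and for any small diagram $E:\cI\to\ICONES$ the constant diagram $\Delta(\lim E)$ takes the value $\lim E$ at every object of $\cJ$, with identity transition maps, which is exactly the pointwise limit of $\Delta\Comp E$. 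Hence $\Delta(\lim E)=\lim(\Delta\Comp E)$, so $\Delta$ preserves all small limits.

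Applying Theorem~\ref{th:Icones-adjoint-functor} with $\cC=\ICONES^{\cJ}$ and $R=\Delta$ then yields a left adjoint $L:\ICONES^{\cJ}\to\ICONES$ to $\Delta$. For each diagram $D:\cJ\to\ICONES$, the object $L(D)$ together with the unit of the adjunction at $D$ is a colimiting cocone for $D$, so $\cJ$-shaped colimits exist in $\ICONES$. Since $\cJ$ was an arbitrary small category, $\ICONES$ has all small colimits.

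I do not expect any genuine obstacle here. All the substantive work—completeness, the coseparator $\Sone$, and well-poweredness—has already been carried out in Theorems~\ref{th:mcones-complete} and~\ref{th:icones-conditions-saft} and absorbed into Theorem~\ref{th:Icones-adjoint-functor}; the only content remaining is the standard observation that a left adjoint to the diagonal is the colimit functor, together with the two easy verifications (local smallness of $\ICONES^{\cJ}$ and limit-preservation by $\Delta$) described above.
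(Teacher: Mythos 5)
Your proposal is correct and follows exactly the paper's own argument: the paper also forms the diagonal functor $\Delta:\ICONES\to\ICONES^{\cJ}$ into the (locally small, since $\cJ$ is small) functor category, notes that $\Delta$ preserves all limits, and invokes Theorem~\ref{th:Icones-adjoint-functor} to obtain a left adjoint computing colimits. Your write-up just spells out the pointwise-limit verification slightly more explicitly than the paper does.
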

\begin{proof}
  Let \(I\) be a small category, we use \(\ICONES^I\) for the category
  whose objects are the functors \(I\to\ICONES\) and the morphisms are
  the natural transformations, which is locally small since \(I\) is
  small.
  Then we have a ``diagonal'' functor \(\Delta:\ICONES\to\ICONES^I\)
  which maps each object of \(\ICONES\) to the corresponding constant
  functor and each morphism to the identity natural transformation.
  It is easily checked that \(\Delta\) preserves all limits and hence
  it has a left adjoint by Theorem~\ref{th:Icones-adjoint-functor}.
  By definition of an adjunction, this functor maps each functor
  \(I\to\ICONES\) to its colimit which shows that \(\ICONES\) is
  cocomplete.
\end{proof}
\noindent 
This theorem does not give any insight on the structure of these
colimits%
\footnote{In particular it would be interesting to have a more
  explicit description of coequalizers.}, %
so it is reasonable to have at least a closer look at coproducts.

\paragraph{Coproducts of cones}
Let \(I\) be a set, without any restrictions on its cardinality for
the time being.
Let \((P_i)_{i\in I}\) be a family of cones. Let \(P\) be the set of
all families %
\(\Vect x=(x_i)_{i\in I}\in\prod_{i\in I}P_i\) such that %
\(\sum_{i\in I}\Norm{x_i}<\infty\).
Notice that for such a family \(\Vect x\), the set
\(\{i\in I\St x_i\not=0\}\) is countable.
We turn \(P\) into a cone by defining the operations componentwise and
by setting \(\Norm{\Vect x}=\sum_{i\in I}\Norm{x_i}\).
The induced cone order relation on \(P\) is the pointwise order
and \(\omega\)-completeness is easily proven (by commutations of lubs with
sums in \(\Realpc\)).
In \(\CONES\), this cone \(P\) is the coproduct of
the \(P_i\)'s with obvious injections \(\Inj i\in\CONES(P_i,P)\)
mapping \(x\) to the family \(\Vect x\) such that \(x_i=x\) and
\(x_j=0\) for \(j\not=i\).
Given a family \((f_i\in\CONES(P_i,Q))_{i\in I}\) the unique map %
\(\Cotuple{f_i}_{i\in I}\in\CONES(P,Q)\) such that %
\(\forall j\in I\ \Cotuple{f_i}_{i\in I}\Compl\Inj j=f_j\) %
is given by
\begin{align*}
  \Cotuple{f_i}_{i\in I}(\Vect x)
  =\sum_{i\in I}f_i(x_i)\,.
\end{align*}
This sum converges because for each finite \(J\subseteq I\) one has
\begin{align*}
  \Norm{\sum_{i\in J}f_i(x_i)}
  \leq\sum_{i\in J}\Norm{f_i(x_i)}
  \leq\sum_{i\in J}\Norm{x_i}=\Norm x
\end{align*}
and this map \(\Cotuple{f_i}_{i\in I}\) is easily seen to be linear
and continuous.
We use \(\Bplus_{i\in I}P_i\) for the cone \(P\) defined in that way.

\begin{lemma}
  For each cone \(Q\) the cones %
  \(\Limpl{(\Bplus_{i\in I}P_i)}Q\) and %
  \(\Bwith_{i\in I}\Limplp{P_i}Q\) are isomorphic in \(\CONES\).
\end{lemma}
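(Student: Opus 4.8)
The plan is to exhibit the isomorphism explicitly by the two mutually inverse ``restrict-to-components'' and ``cotuple'' maps, and to check that both are norm-preserving morphisms of $\CONES$. Write $P=\Bplus_{i\in I}P_i$ with the injections $\Inj i\in\CONES(P_i,P)$ of the coproduct constructed just above. First I would define $\Phi:\Limplp PQ\to\Bwith_{i\in I}\Limplp{P_i}Q$ by $\Phi(f)=(f\Comp\Inj i)_{i\in I}$. Each $f\Comp\Inj i$ is linear and $\omega$-continuous, and since $\Norm{\Inj i(x)}=\Norm x$ we get $\Norm{f\Comp\Inj i}\leq\Norm f$, so the family $(\Norm{f\Comp\Inj i})_{i\in I}$ is bounded and $\Phi(f)$ indeed lands in the product cone $\Bwith_{i\in I}\Limplp{P_i}Q$. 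The map $\Phi$ is plainly linear, and it is $\omega$-continuous because order, lubs and application are all computed pointwise both in $\Limplp PQ$ and in the product. In the other direction I would set $\Psi((g_i)_{i\in I})=\Cotuple{g_i}_{i\in I}$, i.e.\ $\Psi((g_i))(\Vect x)=\sum_{i\in I}g_i(x_i)$; this sum converges absolutely since $\sum_i\Norm{g_i(x_i)}\leq(\sup_i\Norm{g_i})\sum_i\Norm{x_i}=\Norm{(g_i)_{i\in I}}\,\Norm{\Vect x}$, and the resulting map is linear and $\omega$-continuous exactly as in the cotupling computation already carried out above.

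Next I would establish the two norm estimates $\Norm{\Phi(f)}=\Norm f$ and $\Norm{\Psi((g_i)_{i\in I})}\leq\Norm{(g_i)_{i\in I}}$. The inequality $\Norm{\Phi(f)}=\sup_i\Norm{f\Comp\Inj i}\leq\Norm f$ is immediate from the previous paragraph, and the converse follows from the key identity $f(\Vect x)=\sum_{i\in I}f(\Inj i(x_i))$ (justified below): for $\Vect x\in\Cuball P$ this gives $\Norm{f(\Vect x)}\leq\sum_i\Norm{f\Comp\Inj i}\Norm{x_i}\leq\sup_j\Norm{f\Comp\Inj j}$, whence $\Norm f\leq\Norm{\Phi(f)}$ by the operator-norm description of Lemma~\ref{lemma:line-cont-bounded}. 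The bound for $\Psi$ is the convergence estimate above. In particular $\Phi$ preserves the norm of every element, so both $\Phi$ and its candidate inverse $\Psi$ have operator norm at most $1$ and are therefore morphisms of $\CONES$.

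Then I would verify that $\Phi$ and $\Psi$ are mutually inverse. Computing $\Phi(\Psi((g_i)_{i\in I}))_j=\Psi((g_i)_{i\in I})\Comp\Inj j$ returns $g_j$ at once, since $(\Inj j(x))_i=x$ for $i=j$ and $0$ otherwise and each $g_i$ is linear (so $g_i(0)=0$). For the other composite, $\Psi(\Phi(f))(\Vect x)=\sum_{i\in I}f(\Inj i(x_i))$, so everything reduces to the identity $f(\Vect x)=\sum_{i\in I}f(\Inj i(x_i))$. This is the one genuinely analytic point, and it is where I would spend care. Since the support of $\Vect x$ is countable, enumerate it as $i_1,i_2,\dots$ and put $\Vect x^{(n)}=\sum_{k=1}^n\Inj{i_k}(x_{i_k})$; these form an increasing sequence, bounded in norm by $\Norm{\Vect x}$, whose componentwise---hence, by $\omega$-completeness of the coproduct, cone-theoretic---lub is $\Vect x$. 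By $\omega$-continuity of $f$ and of addition (Lemma~\ref{lemma:cones-cone-add-scal}) we get $f(\Vect x)=\sup_n f(\Vect x^{(n)})=\sup_n\sum_{k=1}^n f(\Inj{i_k}(x_{i_k}))=\sum_{i\in I}f(\Inj i(x_i))$.

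Finally I would conclude: $\Phi$ is a bijective morphism of $\CONES$ with inverse $\Psi$, itself a morphism, and $\Phi$ preserves norms, so it is an isomorphism $\Limplp PQ\Isom\Bwith_{i\in I}\Limplp{P_i}Q$ in $\CONES$. The main obstacle is precisely the interchange $f\bigl(\sum_i\Inj i(x_i)\bigr)=\sum_i f(\Inj i(x_i))$, which has to be justified through $\omega$-continuity and the countability of the support rather than by ordinary topological continuity; all the rest is the routine bookkeeping of linearity, the operator-norm definition, and the pointwise descriptions of the product and of the hom-cone.
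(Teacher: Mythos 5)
Your proof is correct and follows essentially the same route as the paper: the paper simply cites the universal property of the coproduct (established just above the lemma) to get the bijection $f\mapsto(f\Compl\Inj i)_{i\in I}$ on unit balls and notes it is linear and continuous, whereas you unfold that universal property explicitly, the key identity $f(\Vect x)=\sum_{i\in I}f(\Inj i(x_i))$ being exactly the uniqueness half of the coproduct's defining property, justified as you do via $\omega$-continuity and countability of the support. The only step you gloss over is the $\omega$-continuity of the cotupling map $\Psi$ itself (an interchange of lubs with countable sums), but this is harmless since it also follows from Lemma~\ref{lemma:linear-inverse} once $\Phi$ is known to be a continuous linear bijection.
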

\begin{proof}
  The fact that \(\Bplus_{i\in I}P_i\) is the coproduct of the
  \(P_i\)'s means that the function %
  \[\Cuballp{\Limpl{(\Bplus_{i\in I}P_i)}Q}
    \to\Cuball{(\Bwith_{i\in I}\Limplp{P_i}Q)}\] which maps %
  \(f\) to \((f\Compl\Inj i)_{i\in I}\) is a bijection. It is linear
  and continuous by linearity and continuity of composition of
  morphisms. So this bijection is an isomorphism.
\end{proof}
\noindent 
In particular \((\Bplus_{i\in I}P_i)'\Isom\Bwith_{i\in
  I}{P_i'}\). Given %
\(\Vect{x'}\in\Bwith_{i\in I}P_i'\) the associated linear and
continuous form \(\Lfun{\Vect{x'}}\) on \(\Bplus_{i\in I}P_i\) is
given by
\begin{align*}
  \Lfun{\Vect{x'}}(\Vect x)
  =\Eval{\Vect x}{\Vect{x'}}=\sum_{i\in I}\Eval{x_i}{x'_i}
  \leq\Norm{\Vect x}\Norm{\Vect{x'}}\,.
\end{align*}
We use these observations in the sequel.

\paragraph{Coproduct of measurable cones}
Let \((C_i)_{i\in I}\) be a family of measurable cones. Let
\(P=\Bplus_{i\in I}\Mcca{C_i}\).
Let \(\cM=(\cM_X)_{X\in\ARCAT}\) where \(\cM_X\) is the set of all
\(p\in(P')^{X}\) such that there is a family of coefficients
\((\lambda_i\in\Realp)_{i\in I}\) with %
\(\Absm{r\in X}{\lambda_ip(r)_i}\in\Mcms{C_i}_X\), identifying
\(P'\) %
with \(\Bwith_{i\in I}\Mcca{C_i}'\) as explained above.
In other words %
\(p\in\cM_X\) means that there are families %
\(\Vect m=(m_i\in\Mcms{C_i}_X)_{i\in I}\) and %
\(\Vect\lambda=(\lambda_i\in\Realp)_{i\in I}\) such that, for all
\(r\in X\), the family %
\((\lambda_i\Norm{m_i(r)})_{i\in I}\) is bounded by \(1\), and we have
\(p(r)=\Lfun{\Vect\lambda\Vect m(r)}\) (where
\(\Vect\lambda\Vect x=(\lambda_ix_i)_{i\in I}\)).
Remember indeed from \Msmesr{} that, for each measurable cone \(C\), it
is assumed that each \(m\in\Mcms C_X\) satisfies that
\(m(r,x)\in\Intercc 01\) for all \(r\in X\) and
\(x\in\Cuball{\Mcca C}\).

\begin{remark}
  These coefficients \(\lambda_i\in\Realp\) are necessary because in the
  definition of measurable cones, we make very weak assumptions about
  the sets of measurability tests, in particular we do not assume that
  they are closed under multiplication by nonnegative coefficients
  \(\leq1\).
  Such assumptions ---~and stronger ones, for instance, as suggested
  by one of the reviewers, we could require these sets of tests to be
  cones with operations defined pointwise~--- would be quite
  meaningful, but would require to check additional conditions in the
  proofs, for artificial reasons.
  The sets of measurability tests of a cone \(C\) should be understood
  as a kind of ``predual'' of the cone of paths \(\Cmeas(X,C)\), in
  the sense that the criterion for a bounded map \(X\to\Mcca C\) to
  belong to this cone is the measurability of the (suitably defined)
  composition of this map with all measurability tests.
\end{remark}

We prove that \(\cM\) is a measurability structure on \(P\).
Given \(p\in\cM_X\) and \(\Vect x\in P\) the map %
\(\Absm{r\in X}{p(r)(\Vect x)}\) is measurable by the monotone
convergence theorem because the set \(\Eset{i\in I\St x_i\not=0}\) is
countable so the condition \Msmesr{} holds.
The conditions \Mscompr{} and \Mssepr{} obviously hold, let us check
\Msnormr{}. Let \(\Vect x\in P\setminus\Eset 0\) and let
\(\epsilon>0\).
Let \(J=\Eset{i\in I\St x_i\not=0}\) which is countable and let
\((i(n))_{n\in\Nat}\) be an enumeration of this set (assuming that it
is infinite; the case where it is finite is simpler).
For each \(n\in\Nat\) let \(m_n\in\Mcms{C_{i(n)}}_X\) be such that
\(m_n\not=0\) and
\begin{align*}
  \Norm{x_{i(n)}}_{C_{i(n)}}
  \leq \frac{m_n(x_{i(n)})}{\Norm{m_n}}+\frac\epsilon{2^{n+1}}\,.
\end{align*}
Let \(p\in\cM_\Measterm\) be given by
\(p(\Vect y)=\sum_{n\in\Nat}\frac{m_n(y_{i(n)})}{\Norm{m_n}}\).
We have \(p(y)\leq\sum_{n\in\Nat}\Norm{y_{i(n)}}\leq 1\) and hence
\(0<\Norm p\leq1\).
So we have
\begin{align*}
  \Norm{\Vect x}
  =\sum_{n\in\Nat}\Norm{x_{i(n)}}
  \leq\sum_{n\in\Nat}\frac{m_n(x_{i(n)})}{\Norm{m_n}}
  +\sum_{n\in\Nat}\frac\epsilon{2^{n+1}}
  =p(\Vect x)+\epsilon
  \leq\frac{p(\Vect x)}{\Norm p}+\epsilon
\end{align*}
proving our contention. We have shown that \(C=(P,\cM)\) is a
measurable cone that we denote as \(\Bplus_{i\in I}C_i\).

\begin{theorem}
  For each \(j\in I\) one has
  \((\Inj j\in\MCONES(C_j,\Bplus_{i\in I}C_i))_{i\in I}\). %
  
  If \(I\) is countable then
  \((\Bplus_{i\in I}C_i,(\Inj i)_{i\in I})\) is the coproduct of the
  \(C_i\)'s in \(\MCONES\).
  If moreover the \(C_i\)'s are integrable then so is
  \(\Bplus_{i\in I}C_i\), the \(\Inj i\)'s preserve integrals and
  \((\Bplus_{i\in I}C_i,(\Inj i)_{i\in I}\) is the coproduct of the
  \(C_i\)'s in \(\ICONES\).
\end{theorem}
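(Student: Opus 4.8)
The plan is to build everything on top of the facts for $\CONES$ established just above: $P=\Bplus_{i\in I}\Mcca{C_i}$ is already the coproduct of the $\Mcca{C_i}$ in $\CONES$, with injections $\Inj i$ and, for a family $(f_i)_{i\in I}$, the copairing $\Cotuple{f_i}_{i\in I}(\Vect x)=\sum_{i\in I}f_i(x_i)$. So in each of the three statements the only genuinely new content is a measurability check (and, in the last part, the behaviour of integrals); the algebraic and order-theoretic parts are inherited. The single technical tool I would isolate first is that \emph{projections of paths are paths}: for $\gamma\in\Mcca{\Cpathm X{\Bplus_{i\in I}C_i}}$ and $j\in I$, the map $r\mapsto\Proj j(\gamma(r))$ lies in $\Mcca{\Cpathm X{C_j}}$. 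Boundedness is clear since $\Norm{\Proj j(\gamma(r))}\leq\Norm{\gamma(r)}$, and measurability follows by testing against the single-component test $\Mtinj jm$ (given by $\Mtinj jm(s,\Vect x)=m(s,x_j)$ for $m\in\Mcms{C_j}_Y$, which lies in $\cM_Y$ on taking $\lambda_j=1$ and all other coefficients $0$): indeed $\Mtinj jm(s,\gamma(r))=m(s,\Proj j(\gamma(r)))$ is measurable because $\gamma$ is a path. This at once yields the first statement, since $\Mtinj jm(s,\Inj j(\beta(r)))=m(s,\beta(r))$ shows $\Inj j\Comp\beta$ is a path whenever $\beta$ is, while linearity, continuity and $\Norm{\Inj j}\leq 1$ come from the $\CONES$ computation.

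For the second statement, with $I$ countable, the mediating morphism must be the $\CONES$ copairing $f=\Cotuple{f_i}_{i\in I}$, and its uniqueness in $\MCONES$ follows from uniqueness in $\CONES$; what remains is measurability of $f$. Fixing an enumeration $(i(n))_{n\in\Nat}$ of $I$ and a test $m\in\Mcms D_Y$, I would use continuity of $f$ to write $f(\gamma(r))$ as the lub of the increasing sequence of partial sums $\sum_{n\leq N}f_{i(n)}(\Proj{i(n)}(\gamma(r)))$, and then $\omega$-continuity of $m$ in its second argument to obtain
\[
m(s,f(\gamma(r)))=\sup_{N\in\Nat}\sum_{n=0}^{N}m(s,f_{i(n)}(\Proj{i(n)}(\gamma(r)))).
\]
Each summand is measurable in $(s,r)$: by the previous paragraph $r\mapsto\Proj{i(n)}(\gamma(r))$ is a path in $C_{i(n)}$, and $f_{i(n)}\in\MCONES(C_{i(n)},D)$ carries it to a path in $D$. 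Hence $m\Comp f\Comp\gamma$ is a pointwise supremum of an increasing sequence of measurable functions, so it is measurable; this is exactly where countability of $I$ enters. Thus $f\in\MCONES(P,D)$ and $(\Bplus_{i\in I}C_i,(\Inj i)_{i\in I})$ is the coproduct in $\MCONES$.

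For the last statement I would first check integrability of $\Bplus_{i\in I}C_i$ by defining the integral componentwise, $x=\big(\int\Proj i(\gamma(r))\mu(dr)\big)_{i\in I}$, each component existing because $C_i$ is integrable and $\Proj i\Comp\gamma$ is a path. The delicate point — and what I expect to be the main obstacle — is verifying $x\in P$, i.e.\ $\sum_{i\in I}\Norm{x_i}<\infty$: for finite $J\subseteq I$ and $\epsilon>0$, choosing for each $i\in J$ a test $m_i\in\Mcms{C_i}_\Measterm$ with $\Norm{m_i}\leq1$ realizing $\Norm{x_i}$ up to errors that sum to at most $\epsilon$ over $J$ (via $\Msnormr$), and using $m_i(x_i)=\int m_i(\Proj i(\gamma(r)))\mu(dr)$ together with $\sum_{i\in J}m_i(\Proj i(\gamma(r)))\leq\Norm{\gamma(r)}\leq\Norm\gamma$, one gets $\sum_{i\in J}\Norm{x_i}\leq\Norm\gamma\Norm\mu+\epsilon$, whence $\sum_{i\in I}\Norm{x_i}\leq\Norm\gamma\Norm\mu$ by Lemma~\ref{lemma:integral-bounded}. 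That this $x$ is genuinely the integral then follows by evaluating an arbitrary $p\in\cM_\Measterm$ and interchanging the countable sum with the integral by Tonelli.

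Preservation of integrals by each $\Inj j$ is immediate from the componentwise definition (the $j$-th component is $\int\gamma(r)\mu(dr)$ and all others vanish). For the copairing $f$, I would apply a test $m\in\Mcms D_\Measterm$ and chain: $\omega$-continuity of $m$ to move it inside the sum, integral preservation of each $f_i$, Tonelli to exchange $\sum_i$ and $\int$, and finally $\Mssepr$ in $D$ to remove $m$ and conclude $f(\int\gamma(r)\mu(dr))=\int f(\gamma(r))\mu(dr)$; here the right-hand integral exists since $D$ is integrable and $f\Comp\gamma$ is a path by the second statement. Combined with the coproduct property in $\MCONES$, this shows $(\Bplus_{i\in I}C_i,(\Inj i)_{i\in I})$ is the coproduct in $\ICONES$, as required.
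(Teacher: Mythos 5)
Your proof is correct and follows essentially the same route as the paper's: inherit the coproduct structure from \(\CONES\), reduce measurability of the copairing to a monotone-convergence argument over the countable index set against component tests, and define integrals componentwise, with finiteness of \(\sum_{i\in I}\Norm{x_i}\) obtained by testing against (finite or countable) combinations of component tests exactly as in the paper's verification of \Msnormr{} for the coproduct. The only cosmetic slip is in that finiteness step: \Msnormr{} yields \(\Norm{x_i}\leq m_i(x_i)/\Norm{m_i}+\epsilon_i\), so you should sum the normalized quantities \(m_i(\gamma(r)_i)/\Norm{m_i}\leq\Norm{\gamma(r)_i}\) rather than \(m_i(\gamma(r)_i)\) — the bound \(\sum_{i\in J}\Norm{x_i}\leq\Norm\gamma\,\Norm\mu+\epsilon\) then goes through unchanged.
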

\begin{proof}
  The measurability of the \(\Inj i\)'s is easy to prove.

  We assume that \(I\) is countable.
  Let %
  \((f_i\in\MCONES(C_i,D))_{i\in I}\), we have already defined %
  \(f=\Cotuple{f_i}_{i\in I}\in\CONES(\Bplus_{i\in
    I}{\Mcca{C_i}},\Mcca D)\) and we must prove that this function is
  measurable.
  Let \(X\in\ARCAT\) and %
  \(\gamma\in\Mcca{\Cpath X{\Bplus_{i\in I}C_i}}\), we prove that %
  \(f\Comp\gamma\in\Mcca{\Cpath XD}\) so let \(Y\in\ARCAT\) and %
  \(q\in\Mcms D_Y\), we have
  \begin{align*}
    \Absm{(s,r)\in{Y\times X}}{q(s,f(\gamma(r)))}
    &=\Absm{(s,r)\in{Y\times X}}{q\big(s,\sum_{i\in I}f_i(\gamma(r)_i)\big)}\\
    &=\Absm{(s,r)\in{Y\times X}}
      {\sum_{i\in I}q(s,f_i(\gamma(r)_i))}
  \end{align*}
  which is measurable by the monotone convergence theorem since \(I\)
  is countable.

  Assume moreover that the \(C_i\)'s are integrable and let
  \(\mu\in\Mcca{\Cmeas(X)}\).
  For each \(i\in I\) we have %
  \(\Absm{r\in X}{\gamma(r)_i}\in\Mcca{\Cpath X{C_i}}\)
  because, for each \(Y\in\ARCAT\) and \(m\in\Mcms{C_i}_Y\) we know
  that %
  \(\Absm{(s,r)\in{Y\times X}}{p(s,\gamma(r))}\) %
  is measurable, where \(p=\Lfun{\Vect m}\) with \(m_j=m\) if \(i=j\)
  and \(m_j=0\) otherwise%
  \footnote{It is harmless to assume that \(0\in\Mcms B_X\) for each
    measurable cone \(B\) and \(X\in\ARCAT\): if \(B\) is a measurable
    cone and \(C=(\Mcca B,\cM)\) where \(\cM_X=\Mcms{B}_X\cup\Eset 0\)
    then \(C\) is a measurable cone and \(B\) and \(C\) are isomorphic
    in \(\MCONES\), see Remark~\ref{rk:mcones_iso}.
    And similarly in the category of integrable cones.}. %
  Therefore we can define %
  \(\Vect x\in\prod_{i\in I}\Mcca{C_i}\) by %
  \(x_i=\int\gamma(r)_i\mu(dr)\).

  Given \(p=\Lfun{\Vect\lambda\Vect m}\in\Mcms C_0\), the map
  \(p\Comp\gamma: X\to\Realp\) is bounded and measurable, and
  we have
  \begin{align*}
    \int p(\gamma(r))\mu(dr)
    &=\int\big(\sum_{i\in I}\lambda_im_i(\gamma(r)_i)\big)\mu(dr)\\
    &=\sum_{i\in I}\int \lambda_im_i(\gamma(r)_i)\mu(dr)
      \text{\quad by the monotone convergence theorem,}\\
    &\Textsep\text{since }I\text{ is countable}\\
    &=\sum_{i\in I}\lambda_im_i\Big(\int\gamma(r)_i\mu(dr)\Big)
      \text{\quad by definition of integrals}\\
    &=p(\Vect x)\,.
  \end{align*}
  By \Msnormr{} holding in \(C\) as shown above, this proves that
  \(\Norm{\Vect x}<\infty\) and hence \(\Vect x\in\Mcca C\) and the
  computation above shows also that \(\Vect x=\int\gamma(r)\mu(dr)\)
  and hence the cone \(C\) is integrable. The proof that it is the
  coproduct of the \(C_i\)'s in \(\ICONES\) is routine.
\end{proof}
\noindent 
Even if \(I\) is not countable we know that \((C_i)_{i\in I}\) has a
coproduct in \(\ICONES\) by Theorem~\ref{th:Icones-all-colimits}, but
we don't know yet how to describe it concretely.

\section{Internal linear hom and the tensor product}
\label{sec:lin-hom-tensor}
The main goal of this section is to define a tensor product of two
integrable cones, to prove that this operation is functorial and that
\(\ICONES\) can be equipped with a structure of symmetric monoidal
category (SMC) which is closed (SMCC).

\begin{remark}
  \label{rk:tensor-concrete-pres}
  Of course we first tried to define the tensor product concretely as
  one usually does in algebra, using some quotient.
  However the complicated interaction between the algebraic and the
  order theoretic properties of cones made the resulting description
  ineffective for proving basic properties expected from a tensor
  product.
  Now that we know that the tensor product exists for abstract
  reasons, and has the required structures and properties, the quest
  for a reasonably simple concrete description can be undertaken with
  a more relaxed mind.
\end{remark}

We define first the integrable cone \(\Limpl CD\) of
linear, continuous, measurable and integrable morphisms \(C\to D\).
There are two good reasons for doing so.
\begin{itemize}
\item The definition of this object is easy and natural.
\item Building this object will be necessary for proving that the SMC
  we define is closed.
\end{itemize}
Moreover, it is easy to prove that the associated internal hom functor
\(\Limpl C\_:\ICONES\to\ICONES\) preserves all limits.
Then, thanks to Theorem~\ref{th:Icones-adjoint-functor} this functor
has a left adjoint \(\Tens \_C:\ICONES\to\ICONES\), and this operation
is also functorial wrt.~\(C\) because \(\Limpl\_\_\) is a functor
\(\Op\ICONES\times\ICONES\to\ICONES\) (thanks to a standard result in
category theory), and we have a natural bijection of sets %
\(\ICONES(\Tens BC,D)\to\ICONES(B,\Limpl CD)\).

Last we prove that this natural bijection is actually an isomorphism
\((\Limpl{\Tens BC}D)\to(\Limpl B{(\Limpl CD)})\) in \(\ICONES\) and
we show how to derive the SMC structure of \(\ICONES\) from this
property.

We are convinced that this method is exactly the one described
axiomatically in~\cite{EilenberKelly65}.
However we do not prove explicitly that \(\ICONES\) is closed in the
sense of that paper and do not apply explicitly its results, first for
the sake of self-containedness and also because, due to the concrete
features of our category (basically: our morphisms are functions) the
direct approach remains tractable.

Another approach, equivalent but conceptually more
elegant, would have been to describe first
\(\ICONES\) as a multicategory, introducing from the beginning a
notion of multilinear morphism on \(\ICONES\) in a completely standard
way.
Then the tensor product would have been defined by a familiar
universal property wrt.~bilinear morphisms.

\subsection{The cone of linear morphisms} %
\label{sec:cone-linear-fun}
Let \(C\) and \(D\) be objects of \(\ICONES\) and let \(P\) be the set
of all \(f:\Mcca C\to\Mcca D\) such that, for some \(\epsilon>0\), one
has \(\epsilon f\in\ICONES(C,D)\), equipped with the same algebraic
structure as %
\(\Limpl{\Mcca C}{\Mcca D}\) (see Lemma~\ref{lemma:limpl-cone}).
This makes sense since the algebraic laws of the cone
\(\Limpl{\Mcca C}{\Mcca D}\) preserve measurability and since
integration is linear.
Moreover given an increasing sequence \((f_n)_{n\in\Nat}\) of measurable
and integral preserving elements of \(\Limpl{\Mcca C}{\Mcca D}\) such
that \(\Norm{f_n}\leq 1\) (remember that
\(\Norm f=\sup_{x\in\Cuball{\Mcca C}}\Norm{f(x)}\)), the linear and
continuous map \(f=\sup_{n\in\Nat}f_n\) is measurable and preserves
integrals by the monotone convergence theorem, as we show now.

Let \(\gamma\in\Mcca{\Cpath XC}\) be a measurable path and let
\(m\in\Mcms D_{Y}\) for some \(Y\in\ARCAT\).
The function %
\(\phi %
=\Absm{(s,r)\in{Y\times X}}{m(s,f(\gamma(r))} %
:{Y\times X}\to\Intercc 01\) satisfies %
\(\phi(s,r)=\sup_{n\in\Nat}\phi_n\) where %
\((\phi_n=\Absm{(s,r)\in{Y\times X}}{m(s,f_n(\gamma(r))})_{n\in\Nat}\) %
is an increasing sequence of measurable functions by measurability of
the %
\(f_n\)'s and linearity and continuity of %
\(m\) in its second parameter, so \(\phi\) is measurable which shows
that \(f\) is measurable.
Next, with the same notations and taking moreover some
\(\mu\in\Mcca{\Cmeas(X)}\) we have, for each \(m\in\Mcms D_\Measterm\),
\begin{align*}
  m\Big(f\Big(\int^C\gamma(r)\mu(dr))\Big)\Big)
  &=m\Big(\sup_{n\in\Nat}f_n\Big(\int^C\gamma(r)\mu(dr)\Big)\Big)\\
  &=m\Big(\sup_{n\in\Nat}\int^D f_n(\gamma(r))\mu(dr)\Big)\\
  &=\sup_{n\in\Nat}m\Big(\int^D f_n(\gamma(r))\mu(dr)\Big)
    \text{\quad by cont.~of }m\\
  &=\sup_{n\in\Nat}\int^\Sone m(f_n(\gamma(r)))\mu(dr)
    \text{\quad by def.~of integration in }D\\
  &=\int^\Sone \sup_{n\in\Nat} m(f_n(\gamma(r)))\mu(dr)
    \text{\quad by the monotone conv.~th.}\\
  &=\int^\Sone m(f(\gamma(r)))\mu(dr)
    \text{\quad by continuity of }m\\
  &=m\Big(\int^D f(\gamma(r))\mu(dr)\Big)
    \text{\quad by def.~of integration in }D
\end{align*}
and hence \(f\) preserves integrals by \Mssepr.

Given \(\gamma\in\Mcca{\Cpath XC}\) and %
\(m\in\Mcms D_X\) we define %
\(\Mtlfun\gamma m =\Absm{(r,f)\in X\times P}{m(r,f(\gamma(r)))}
:X\times P\to\Realp\).
For each \(r\in X\) the function %
\(l=(\Mtlfun\gamma m)(r,\_):\Mcca{\Limpl CD}\to\Realp\) is linear and
continuous by linearity and continuity of \(m\) in its second
argument.
We define %
\(\cM_X =\{\Mtlfun\gamma m\St\gamma\in\Mcca{\Cpath XC}\text{ and
}m\in\Mcms D_X\}\).
\begin{remark}
  We use the same notations for measurability tests on the cone of
  linear, continuous and integrable morphisms as for the cone of
  measurable paths in Section~\ref{sec:meas-cone-of-paths} for two reasons.
  The first one is that these tests are defined in a very similar way,
  the second one is Theorem~\ref{th:meas-path-equiv} which allows to
  see measurable paths as linear, continuous and integrable maps.
\end{remark}

\noindent 
We check that the family \((\cM_X)_{}\) is a measurability structure
on the cone \(P=\Limplp{\Mcca C}{\Mcca D}\).

\Proofcase\Msmesr{}
Let \(f\in\Cuball P\), %
\(\gamma\in\Mcca{\Cpath XC}\) and \(m\in\Mcms D_X\), then the map %
\(\phi=\Absm{r\in X}{m(r,f(\gamma(r)))}\) is measurable because
\(f\Comp\gamma\in\Mcca{\Cpath XD}\) by measurability
of %
\(f\) and hence
\(\Absm{(s,r)\in{X\times X}}{m(s,f(\gamma(r)))}: {X\times
  X}\to\Intcc01\) is measurable from which follows the measurability
of \(\phi\).
The fact that \(\phi\) ranges in \(\Intercc 01\)
results from the assumption that \(\Norm f\leq 1\).

\Proofcase\Mscompr{}
Let \(\gamma\in\Mcca{\Cpath XC}\) and \(m\in\Mcms D_X\), and let %
\(\phi\in\ARCAT(Y,X)\) for some \(Y\in\ARCAT\).
Then we have %
\((\Mtlfun\gamma m)\Comp(\phi\times P) =\Absm{(s,f)\in Y\times
  P}{m(\phi(s),f(\gamma(\phi(s)))))}
=\Mtlfun{(\gamma\Comp\phi)}{(m\Comp(\phi\times\Mcca D))}\) and since
\(\gamma\Comp\phi\in\Mcca{\Cpath YC}\) by
Lemma~\ref{lemma:precomp-path} and %
\(m\Comp(\phi\times\Mcca D)\in\Mcms D_Y\) by property \Mscompr{}
satisfied in \(D\), we have %
\((\Mtlfun\gamma m)\Comp(\phi\times P)\in\cM_Y\).

\Proofcase\Mssepr{}
Let \(f_1,f_2\in P\) and assume that for all %
\(x\in\Mcca C\) and \(m\in\Mcms D_\Measterm\) one has %
\((\Mtlfun xm)(f_1)=(\Mtlfun xm)(f_2)\), that is %
\(m(f_1(x))=m(f_2(x))\).
By \Mssepr{} in \(D\) we have %
\(f_1(x)=f_2(x)\), and since this holds for all \(x\in\Mcca C\) %
we have \(f_1=f_2\).

\Proofcase\Msnormr{}
Let \(f\in P\setminus\Eset 0\).
Let \(\epsilon>0\), we can assume without loss of generality that
\(\epsilon<2\Norm f\).
By definition of \(\Norm f\) there is \(x\in\Cuball{\Mcca C}\) such
that \(\Norm{f}\leq \Norm{f(x)}+\epsilon/3\) and hence
\(\Norm{f}<\Norm{f(x)}+\epsilon/2\).
This implies in particular that \(f(x)\not=0\) by our assumption that
\(\epsilon<2\Norm f\).
By \Msnormr{} in \(D\) there is
\(m\in\Mcms D_\Measterm\setminus\Eset 0\) such that %
\[
  \Norm{f(x)}\leq m(f(x))/\Norm m+\min(\epsilon/2,\Norm
  f-\epsilon/2)\,.
\] %
If \(m(f(x))=0\) we have \(\Norm{f(x)}\leq\Norm f-\epsilon/2\) which
is not possible since \(\Norm{f}<\Norm{f(x)}+\epsilon/2\), so
\((\Mtlfun xm)(f)=m(f(x))\not=0\).
This implies in particular that \(\Norm{\Mtlfun xm}\not=0\).
We have %
\(\Norm{\Mtlfun xm}=\sup_{g\in\Cuball P} m(g(x)) \leq\Norm
m\sup_{g\in\Cuball P}\Norm{g(x)}\leq\Norm m\) since
\(x\in\Cuball{\Mcca C}\), and hence %
\begin{align*}
  \Norm f\leq \Norm{f(x)}+\epsilon/2
  &\leq \frac{m(f(x))}{\Norm m}+\epsilon
  =\frac{(\Mtlfun xm)(f)}{\Norm m}+\epsilon\\
  &\leq\frac{(\Mtlfun xm)(f)}{\Norm{\Mtlfun xm}}+\epsilon
    \text{\quad since }0<\Norm{\Mtlfun xm}\leq \Norm m\,.
\end{align*}
So we have defined a measurable cone %
that we denote as %
\(\Limpl CD\).

We will need a technical lemma whose intuitive meaning is interesting
\emph{per se}: a linear morphisms valued in a cone of paths is the
same thing as a path valued in the measurable cone of linear morphisms
just defined.
This lemma will be essential for proving that the measurable cone
\(\Limpl CD\) is integrable.

\begin{lemma} %
  \label{lemma:swap-lin-path}
  There is an argument swapping isomorphism %
  \[
    \Swlinpath\in\MCONES(\Limpli C{\Cpath XD},\Cpathm X{\Limpli CD}
  \] %
  which maps \(f\) to %
  \(\Absm{r\in X}{\Absm{x\in\Mcca C}{f(x)(r)}}\).
\end{lemma}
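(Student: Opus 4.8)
The plan is to produce an explicit inverse and then show that both the map and its inverse are morphisms of $\MCONES$; being mutually inverse, they are then isos (one may invoke Remark~\ref{rk:mcones_iso}, or simply conclude directly). Note first that $\Cpath XD$ is integrable by the theorem of Section~\ref{sec:int-cpath}, so $\Limpl C{\Cpath XD}$ is a bona fide measurable cone. I would define the candidate inverse $\Swlinpath'$ on a path $\gamma\in\Mcca{\Cpathm X{\Limpl CD}}$ by $\Swlinpath'(\gamma)=\Absm{x\in\Mcca C}{\Absm{r\in X}{\gamma(r)(x)}}$, i.e.\ $\Swlinpath'$ swaps the two arguments back. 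Since the algebraic operations, the suprema of $\omega$-chains, and the integrals in a cone of paths $\Cpath XD$ are all computed \emph{pointwise} in $r$, and the operations and suprema in $\Limpl CD$ are computed pointwise in $x$, the identities $\Swlinpath'\Comp\Swlinpath=\Id$ and $\Swlinpath\Comp\Swlinpath'=\Id$, together with the linearity and $\omega$-continuity of both maps, are immediate. The norm equality $\Norm{\Swlinpath(f)}=\Norm f$ also drops out at once, since both sides unfold to $\sup\{\Norm{f(x)(r)}_D\St r\in X,\ x\in\Cuball{\Mcca C}\}$; in particular $\Norm\Swlinpath\leq1$. So the real content is the well-definedness of the two maps (that they land in the stated cones) and the preservation of measurable paths.

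For well-definedness of $\Swlinpath$, fix $f\in\Mcca{\Limpl C{\Cpath XD}}$ and $r\in X$ and set $g_r=\Swlinpath(f)(r)=\Absm xf(x)(r)$; linearity and $\omega$-continuity of $g_r$ are inherited from $f$ through the pointwise structure of $\Cpath XD$. To see $g_r\in\Mcca{\Limpl CD}$ I would test $f$ against the constant-at-$r$ tests: for a path $\beta$ of $C$ and $m\in\Mcms D_Y$, applying the measurable path $f\Comp\beta$ to $\Mtpath rm\in\Mcms{\Cpath XD}_Y$ (with $r$ the constant map $Y\to X$) gives measurability of $(s,z)\mapsto m(s,g_r(\beta(z)))$, while integral preservation of $f$ together with the pointwise description of the integral in $\Cpath XD$ gives $g_r(\int\beta(z)\mu(dz))=\int g_r(\beta(z))\mu(dz)$. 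Boundedness of the path $r\mapsto g_r$ follows from $\Norm{g_r}\leq\Norm f$. The delicate point, which I expect to be the main obstacle, is its \emph{measurability}: for a test $\Mtlfun\delta m\in\Mcms{\Limpl CD}_Y$ I must show $(s,r)\mapsto m(s,f(\delta(s))(r))$ is measurable on $Y\times X$. I would deduce this by applying the path property of the measurable map $f\Comp\delta\colon Y\to\Mcca{\Cpath XD}$ to the test $\Mtpath{\Proj2}{(m\Comp(\Proj1\times\Mcca D))}\in\Mcms{\Cpath XD}_{Y\times X}$, whose value at $((s,r),\eta)$ is $m(s,\eta(r))$, and then precomposing the resulting measurable function on $(Y\times X)\times Y$ with the measurable ``diagonal'' $(s,r)\mapsto((s,r),s)$. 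This manoeuvre of diagonalising against a test with an enlarged parameter space is exactly the device already used in Lemma~\ref{lemma:meas-path-flat}. The well-definedness of $\Swlinpath'$ is entirely symmetric, using the constant path $\Absm sx$ (Lemma~\ref{lemma:cst-path}) to reach $\gamma(r)(x)$ through the tests of $\Limpl CD$, and using that each $\gamma(r)\in\Mcca{\Limpl CD}$ is itself integral preserving.

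Finally I would check that $\Swlinpath$ and $\Swlinpath'$ preserve measurable paths. The key observation is that the two relevant families of nested tests coincide numerically: for $\delta\in\Mcca{\Cpath YC}$, $\phi\in\ARCAT(Y,X)$ and $m\in\Mcms D_Y$, the test $\Mtlfun\delta{(\Mtpath\phi m)}$ on $\Limpl C{\Cpath XD}$ and the test $\Mtpath\phi{(\Mtlfun\delta m)}$ on $\Cpath X{\Limpl CD}$ both send a parameter $s$ and an object $h$ (resp.\ $\eta$) to $m(s,h(\delta(s))(\phi(s)))$, and unwinding the definitions shows that these are \emph{all} the tests of the respective cones. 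Hence, for a path $\beta$ of $\Limpl C{\Cpath XD}$, measurability of $(s,z)\mapsto\Mtlfun\delta{(\Mtpath\phi m)}(s,\beta(z))$ is literally the same assertion as measurability of $(s,z)\mapsto\Mtpath\phi{(\Mtlfun\delta m)}(s,\Swlinpath(\beta(z)))$, so $\Swlinpath$ preserves measurable paths with no extra computation; the same identity handles $\Swlinpath'$. Thus both $\Swlinpath$ and $\Swlinpath'$ are morphisms of $\MCONES$, and since they are mutually inverse, $\Swlinpath$ is the desired isomorphism.
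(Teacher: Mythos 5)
Your proposal is correct and follows essentially the same route as the paper's proof: constant tests $\Mtpath rm$ to show each $\Swlinpath(f)(r)$ lies in $\Mcca{\Limpl CD}$, a test with enlarged parameter space $Y\times X$ restricted along a diagonal to get measurability of the resulting path, the pointwise definition of operations and integrals in $\Cpath XD$ for the algebraic and integral-preservation claims, and the bijective correspondence $\Mtlfun\gamma{(\Mtpath\phi m)}\leftrightarrow\Mtpath\phi{(\Mtlfun\gamma m)}$ between the two families of tests for the measurability of $\Swlinpath$ and $\Swlinpath'$ themselves. The paper spells out the "entirely symmetric" converse direction in full, but the devices it uses there are exactly the ones you have already exhibited.
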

This iso preserves integrals, but this property is not required for
what follows.
\begin{proof}
  Let \(f\in\Mcca{\Limpli C{\Cpath XD}}\).
  If \(r\in X\), the map %
  \(g=\Absm{x\in\Mcca C}{f(x)(r)}:\Mcca C\to\Mcca D\) is linear and
  continuous because \(f\) is, and the algebraic operations and the
  lubs are computed pointwise in \(\Mcca{\Cpath XD}\), we prove that
  \(g\) is measurable.
  Let \(Y,Y'\in\ARCAT\),  \(\gamma\in\Mcca{\Cpath YC}\) and %
  \(m\in\Mcms D_{Y'}\), we set %
  \(\phi=\Absm{(s',s)\in{Y'\times Y}}{m(s',g(\gamma(s)))}
  =\Absm{(s',s)\in{Y'\times Y}}{m(s',f(\gamma(s))(r))}\).
  Notice that, identifying \(r\) with the constant \(r\)-valued
  measurable function \(Y'\to X\), we have
  \(\Mtpath rm\in\Mcms{\Cpath XD}_{Y'}\) and %
  \(\phi=\Absm{(s',s)\in Y'\times Y}{(\Mtpath rm)(s',f(\gamma(s)))}\)
  which is measurable because %
  \(f\Comp\gamma\in\Mcca{\Cpath{Y}{\Cpath XD}}\) since %
  \(f\in\Mcca{\Limpli C{\Cpath XD}}\).
  This shows that \(g\) is measurable, we prove that \(g\) preserves
  integrals so let moreover \(\nu\in\Mcca{\Cmeas(Y)}\), we have %
  \begin{align*}
    g\Big(\int_{s\in Y}^{C}\gamma(s)\nu(ds)\Big)
    &=f\Big(\int_{s\in Y}^C\gamma(s)\nu(ds)\Big)(r)\\
    &=\Big(\int_{s\in Y}^{\Cpath XD}f(\gamma(s))\nu(ds)\Big)(r)
      \text{\quad since }f\text{ preserves integrals.}\\
    &=\int_{s\in Y}^D f(\gamma(s))(r)\nu(ds)\\
    &=\int_{s\in Y}^D g(\gamma(s))\nu(ds)\,.
  \end{align*}
  since integrals in \(\Cpath XD\) are computed pointwise.
  This shows that \(g=\Swlinpath(f)(r)\in\Mcca{\Limpl CD}\) for all
  \(r\in X\).

  We prove next that \(\eta=\Swlinpath(f)\) belongs to %
  \(\Mcca{\Cpathm X{\Limpli CD}}\) so let \(Y\in\ARCAT\) and %
  \(p\in\Mcms{\Limpli CD}_Y\).
  Let \(\gamma\in\Mcca{\Cpath YC}\) and %
  \(m\in\Mcms D_Y\) be such that \(p=\Mtlfun\gamma m\).
  The function %
  \(\phi=\Absm{(s,r)\in Y\times X}{p(s,\eta(r))}\) satisfies
  \begin{align*}
    \phi
    &=\Absm{(s,r)\in Y\times X}{m(s,\eta(r)(\gamma(s)))}\\
    &=\Absm{(s,r)\in Y\times X}{m(s,f(\gamma(s))(r))}\,.
  \end{align*}
  We know that %
  \(\delta=f\Comp\gamma\Comp\Proj1\in\Mcca{\Cpath{Y\times X}{\Cpath XD}}\)
  because \(\gamma\Comp\Proj 1\in\Mcca{\Cpath{Y\times X}{C}}\)
  and \(f\in\MCONES(C,\Cpath XD)\).
  Let \(m'\in\Mcms D_{Y\times X}\) be defined by %
  \(m'(s,r,y)=m(s,y)\), we have %
  \(\Mtpath{\Proj2}{m'}\in\Mcms{\Cpath XD}_{Y\times X}\) and hence %
  \(\phi'
  =\Absm{(s,r)\in Y\times X}{(\Mtpath{\Proj2}{m'})(s,r,\delta(s,r))}\) %
  is measurable.
  But
  \begin{align*}
    \phi'(s,r)
    ={m'(s,r,\delta(s,r)(\Proj2(s,r)))}
    ={m(s,f(\gamma(s))(r))}=\phi(s,r)
  \end{align*}
  so that \(\phi\) is measurable, this shows that %
  \(\Swlinpath(f)\in\Mcca{\Cpathm X{\Limpli CD}}\).
  The linearity and continuity of \(\Swlinpath\) are obvious (the
  algebraic operations and lubs are defined pointwise) as well as the
  fact that \(\Norm{\Swlinpath}\leq 1\).
  Its measurability relies on the obvious bijection between %
  \(\Mcms{\Limpli C{\Cpath XD}}_{Y'}\) and %
  \(\Mcms{\Cpathm X{\Limpli CD}}_{Y'}\) which maps %
  \(\Mtlfun{\gamma}{(\Mtpath\phi m)}\) to %
  \(\Mtpath\phi{(\Mtlfun\gamma m)}\) for all \(Y'\in\ARCAT\) (with
  \(\gamma\in\Mcca{\Cpath{Y'}{C}}\), \(\phi\in\ARCAT(Y',X)\) %
  and \(m\in\Mcms D_{Y'}\)).
  We have proven that \(\Swlinpath\) is a morphism in \(\MCONES\).

  Conversely given \(\eta\in\Mcca{\Cpathm X{\Limpli CD}}\) we define %
  \(f=\Swlinpath'(\eta) =\Absm{x\in\Mcca C}{\Absm{r\in
      X}{\eta(r)(x)}}\) %
  and prove first that %
  \(f\in\Mcca{\Limpli C{\Cpath XD}}\).
  Let \(x\in\Mcca C\) and \(\delta=f(x):X\to\Mcca D\).
  If \(r\in X\) we have \(\eta(r)\leq\Norm\eta\) and hence %
  \(\Norm{\delta(r)}=\Norm{\eta(r)(x)}\leq\Norm\eta\Norm x\) which
  shows that the function \(\delta\) is bounded.
  Let \(Y\in\ARCAT\) and \(m\in\Mcms D_Y\), we set %
  \(\phi=\Absm{(s,r)\in Y\times X}{m(s,\delta(r))}\). 
  For \(s\in Y\) and \(r\in X\), we have
  \begin{align*}
    \phi(s,r)
    ={m(s,\delta(r))}
    ={m(s,\eta(r)(x))}
    ={(\Mtlfun xm)}(s,\eta(r))
  \end{align*}
  where we identify \(x\) with the path \(\gamma\in\Mcca{\Cpath YC}\)
  such that \(\gamma(s)=x\), so that %
  \(\Mtlfun xm\in\Mcms{\Limpl CD}_Y\).
  It follows that \(\phi\) is measurable and hence %
  \(\delta\in\Mcca{\Cpath XD}\).

  Linearity of \(f\) is obvious and continuity results from the fact that lubs
  in \(\Mcca{\Cpath XD}\) are computed pointwise.
  Let \(\gamma\in\Mcca{\Cpath YC}\) for some \(Y\in\ARCAT\), we must
  prove next that %
  \(f\Comp\gamma\in\Mcca{\Cpath{Y}{\Cpath XD}}\).
  Let \(Y'\in\ARCAT\) and \(p\in\Mcms{\Cpath XD}_{Y'}\), we must
  prove that %
  \(\psi=\Absm{(s',s)\in Y'\times Y}{p(s',f(\gamma(s)))}\) %
  is measurable.
  Let \(\phi\in\ARCAT(Y',X)\) and \(m\in\Mcms D_{Y'}\) be such that %
  \(p=\Mtpath{\phi}{m}\).
  For \(s'\in Y'\) and \(s\in Y\) we have
  \begin{align*}
    \psi(s',s)
    &={m(s',f(\gamma(s))(\phi(s'))}\\
    &={m(s',\eta(\phi(s'))(\gamma(s)))}\\
    &={(\Mtpath{(\gamma\Comp\Proj2)}{(m\Comp(\Proj1\times\Mcca D))})
      (s',s,\eta\Comp\phi)}
  \end{align*}
  and hence \(\psi\) is measurable since \(\eta\Comp\phi\) is a
  measurable path.
  This shows that \(f\) is measurable, we prove last that \(f\)
  preserves integrals.
  So let \(\gamma\in\Mcca{\Cpath YC}\) and \(\mu\in\Mcca{\Cmeas(Y)}\).
  Given \(r\in X\) we have
  \begin{align*}
    f\Big(\int^C_{s\in Y}\gamma(s)\mu(ds)\Big)(r)
    &=\eta(r)\Big(\int^C_{s\in Y}\gamma(s)\mu(ds)\Big)\\
    &=\int^D_{s\in Y}\eta(r)(\gamma(s))\mu(ds)
      \text{\quad since }\eta(r)\in\Mcca{\Limpl CD}\\
    &=\int^D_{s\in Y}f(\gamma(s))(r)\mu(ds)
      \text{\quad by definition of }f\\
    &=\Big(\int^{\Cpath XD}_{s\in Y} f(\gamma(s))\mu(ds)\Big)(r)
  \end{align*}
since integrals in \(\Cpath XD\) are computed pointwise.

The proof that \(\Swlinpath'\) is a morphism in %
\(\MCONES\) follows the same pattern as for \(\Swlinpath\).
\end{proof}

\begin{lemma}
  The measurable cone \(\Limpl CD\) is integrable.
\end{lemma}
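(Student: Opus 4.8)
The plan is to produce the integral of an arbitrary path $\beta\in\Mcca{\Cpath X{\Limpl CD}}$ over a finite measure $\mu\in\Mcca{\Cmeas(X)}$ as the pointwise integral
$g=\Absm{x\in\Mcca C}{\int_X\beta(r)(x)\mu(dr)}$, and the real work is to check that this $g$ is again an element of $\Mcca{\Limpl CD}$, i.e.\ a scalar multiple of a morphism of $\ICONES$. Rather than verifying linearity, continuity, measurability and integral preservation of $g$ by hand — measurability being the delicate point, since the defining property of an integral only lets one commute it with tests in $\Mcms D_\Measterm$ — I would obtain all of these at once by exhibiting $g$ as a composite of two morphisms of $\ICONES$, the bridge being the swap isomorphism of Lemma~\ref{lemma:swap-lin-path}.

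First I apply its inverse $\Swlinpath'$ to $\beta$, obtaining $F=\Swlinpath'(\beta)\in\Mcca{\Limpl C{\Cpath XD}}$, so that $F$ is (a scalar multiple of) a morphism $C\to\Cpath XD$ with $F(x)(r)=\beta(r)(x)$; here $\Cpath XD$ is an object of $\ICONES$ by the earlier integrability theorem for cones of paths. Second I consider the partial integration map $I_\mu=\Mcint D_X(\_,\mu)\colon\Cpath XD\to D$. It is linear, continuous and measurable by Lemma~\ref{lemma:int-mesurable} (applied with $\kappa$ the constant path at $\mu$, which is measurable by Lemma~\ref{lemma:cst-path}), and norm-bounded by $\Norm\mu$ via Lemma~\ref{lemma:integral-bounded}. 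Crucially, $I_\mu$ preserves integrals: for $\eta\in\Mcca{\Cpath Z{\Cpath XD}}$ and $\nu\in\Mcca{\Cmeas(Z)}$, the two sides of the integral-preservation equation are exactly the two iterated integrals of $(z,r)\mapsto\eta(z)(r)$, which coincide by the Fubini theorem for cones (Theorem~\ref{th:paths-Fubini}), using that integration in $\Cpath XD$ is computed pointwise. Hence $I_\mu$ too is a scalar multiple of a morphism of $\ICONES$.

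I then set $g=I_\mu\Comp F$. Choosing $\epsilon_1,\epsilon_2>0$ with $\epsilon_1F,\epsilon_2I_\mu\in\ICONES$ and using closure of $\ICONES$ under composition, $\epsilon_1\epsilon_2\,g=(\epsilon_2I_\mu)\Comp(\epsilon_1F)$ is a morphism, so $g\in\Mcca{\Limpl CD}$, with $g(x)=\int_X\beta(r)(x)\mu(dr)$. It remains to check that $g$ is genuinely the integral of $\beta$: every test of $\Limpl CD$ at arity $\Measterm$ has the form $\Mtlfun xm$ with $x\in\Mcca C$ and $m\in\Mcms D_\Measterm$, and
\begin{align*}
  (\Mtlfun xm)(g)&=m(g(x))=m\Big(\int_X\beta(r)(x)\mu(dr)\Big)\\
  &=\int_X m(\beta(r)(x))\mu(dr)=\int_X(\Mtlfun xm)(\beta(r))\mu(dr),
\end{align*}
where the third equality is the defining property of the integral in the integrable cone $D$. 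By \Mssepr{} for $\Limpl CD$ this forces $g=\int_X\beta(r)\mu(dr)$, so every path has an integral and $\Limpl CD$ is integrable.

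The main obstacle I expect is the integral preservation of $I_\mu$: getting the correct matching of the spaces $X,Z$ and of the two measures $\mu,\nu$ in Theorem~\ref{th:paths-Fubini}, together with the observation that integration in $\Cpath XD$ is pointwise (so that applying $I_\mu$ and evaluating a path at a point commute). Once this is in place, everything else follows routinely from the swap isomorphism and from closure of $\ICONES$ under composition, which is precisely why routing through $I_\mu\Comp F$ is preferable to a direct hands-on verification of measurability of $g$.
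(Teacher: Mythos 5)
Your proposal is correct and follows essentially the same route as the paper: the candidate integral is the same pointwise formula, justified via the swap isomorphism of Lemma~\ref{lemma:swap-lin-path}, with measurability coming from Lemma~\ref{lemma:int-mesurable} (constant kernel at \(\mu\)), integral preservation from the Fubini theorem (Theorem~\ref{th:paths-Fubini}), and the final identification via tests \(\Mtlfun xm\) and \Mssepr{} exactly as in the paper. The only difference is presentational — you package the verification as the composite \(I_\mu\Comp\Swlinpath'(\beta)\) of two \(\ICONES\)-morphisms and invoke closure under composition, whereas the paper checks linearity, continuity, measurability and integral preservation of the pointwise-defined map by hand — and this repackaging is sound since all the lemmas you cite are established before this point.
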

\begin{proof}
  Let \(X\in\ARCAT\), %
  \(\eta\in\Cpath X{\Limpli CD}\) and %
  \(\mu\in\Mcca{\Cmeas(X)}\).
  Let
\begin{align*}
  f=\Absm{x\in\Mcca C}{\int^D\eta(r)(x)\mu(dr)}
  =\Absm{x\in\Mcca C}{\int^D\Swlinpath(\eta)(x)(r)\mu(dr)}\,.
\end{align*}
This function is well defined since for each \(x\in\Mcca C\) one has %
\(\Swlinpath(\eta)(x)\in\Mcca{\Cpath XD}\) by
Lemma~\ref{lemma:swap-lin-path} so that the integral %
\(\int\Swlinpath(\eta)(x)(r)\mu(dr)\in\Mcca D\) is well defined.
The fact that \(f:\Mcca C\to\Mcca D\) is linear and continuous results
from the linearity of integration and from the monotone convergence
theorem.
Let us check that \(f\) is measurable so let \(Y\in\ARCAT\)
and let \(\gamma\in\Mcca{\Cpath YC}\), we must prove that
\begin{align*}
  \Absm{s\in Y}{\int^D\Swlinpath(\eta)(\gamma(s))(r)\mu(dr)}
  \in\Mcca{\Cpath YD}
\end{align*}
so let \(Y'\in\ARCAT\) and \(m\in\Mcms D_{Y'}\), we must check that
the function
\begin{align*}
  \psi
  &=\Absm{(s',s)\in Y'\times Y}
  {m\Big(s',\int^D\Swlinpath(\eta)(\gamma(s))(r)\mu(dr)\Big)}\\
  &=\Absm{(s',s)\in Y'\times Y}
  {\int^{\Realp} m(s',\Swlinpath(\eta)(\gamma(s))(r))\mu(dr)}
\end{align*}
is measurable.
We know that the function %
\(\Absm{(s',s,r)\in Y'\times Y\times X}
{m(s',\Swlinpath(\eta)(\gamma(s))(r))}\) is measurable and bounded
because %
\(\Swlinpath(\eta)\Comp\gamma\in\Mcca{\Cpath Y{\Cpath XD}}\) %
by Lemma~\ref{lemma:swap-lin-path} and we get the announced
measurability by Lemma~\ref{lemma:int-mesurable} (in the special case
where \(\kappa\) is the kernel constantly equal to \(\mu\)). %
Next we prove that \(f\) preserves integrals, so let moreover
\(\nu\in\Mcca{\Cmeas(Y)}\), we have
\begin{align*}
  f\Big(\int^C_{s\in Y}\gamma(s)\nu(ds)\Big)
  &=\int^D_{r\in X} \eta(r)\Big(\int^C_{s\in Y}\gamma(s)\nu(ds)\Big)\mu(dr)\\
  &=\int^D_{r\in X}\Big(\int^D_{s\in Y}\eta(r)(\gamma(s))\nu(ds)\Big)\mu(dr)
  \text{\quad since }\eta(r)\in\Mcca{\Limpli CD}\\
  &=\int^D_{r\in X}\Big(\int^D_{s\in Y}
    \Swlinpath(\eta)(\gamma(s))(r)\nu(ds)\Big)\mu(dr)\\
  &=\int^D_{s\in Y}\Big(\int^D_{r\in X}
    \Swlinpath(\eta)(\gamma(s))(r)\mu(dr)\Big)\nu(ds)\\
  &\hspace{3em}\text{by Th.~\ref{th:paths-Fubini} (Fubini), since }
    \Swlinpath(\eta)\Comp\gamma\in\Mcca{\Cpath Y{\Cpath XD}}\\
  &=\int^D_{s\in Y} f(\gamma(s))\nu(ds)\,.
\end{align*}
This completes the proof that %
\(f\in\Mcca{\Limpli CD}\) as contended.

Let \(p\in\Mcms{\Limpli CD}_\Measterm\).
Let %
\(x\in\Mcca C\) and \(m\in\Mcms D_\Measterm\) be such that %
\(p=\Mtlfun xm\), we have %
\begin{align*}
  p(f)
  &=m\Big(\int^D\eta(r)(x)\mu(dr)\Big)\\
  &=\int m(\eta(r)(x))\mu(dr)\\
  &=\int p(\eta(r))\mu(dr)\,,
\end{align*}
so \(\eta\) is integrable over \(\mu\), and
\(\int^{\Limpl CD}\eta(r)\mu(dr)=f\).
\end{proof}

\noindent 
This is the right place to insert a lemma very similar to
Lemma~\ref{lemma:swap-lin-path} which will be useful for exhibiting
the symmetry of our tensor product.
\begin{lemma} %
  \label{lemma:swap-lin-lin}
  There is an argument swapping natural isomorphism %
  \[
    \Swlinlin\in
    \ICONES(\Limpli{B_1}{(\Limpli{B_2}C)},\Limpli{B_2}{(\Limpli{B_1}C)})
  \] %
  which maps \(f\) to %
  \(\Absm{x_1\in\Mcca{B_1}}{\Absm{_2\in\Mcca{B_2}}{f(x_1)(x_2)}}\).
\end{lemma}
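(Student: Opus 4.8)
The plan is to follow closely the pattern of Lemma~\ref{lemma:swap-lin-path}, whose argument-swapping technique adapts almost verbatim. The one structural fact I would isolate first is that in every internal hom cone $\Limpl EF$ the algebraic operations, lubs, and (by the preceding lemma asserting the integrability of $\Limpl EF$) integrals are all computed pointwise, so that $(\int^{\Limpl EF}\eta(r)\mu(dr))(x)=\int^F\eta(r)(x)\mu(dr)$. With this in hand I would define $g=\Swlinlin(f)$ by $g(x_2)(x_1)=f(x_1)(x_2)$ and reduce the whole statement to the single claim that $\Swlinlin$ is a well-defined morphism of $\ICONES$: the symmetric swap $\Swlinlin'$ from $\Limpli{B_2}{(\Limpli{B_1}C)}$ to $\Limpli{B_1}{(\Limpli{B_2}C)}$ is then a morphism by the same proof, and since a double swap is the identity we get $\Swlinlin'\Comp\Swlinlin=\Id$ and $\Swlinlin\Comp\Swlinlin'=\Id$, so $\Swlinlin$ is an iso with inverse $\Swlinlin'$; naturality in $B_1,B_2,C$ is then routine, as $\Swlinlin$ merely reorders two evaluation arguments and hence commutes with pre- and post-composition.

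Next I would check that $g$ takes values in $\Limpl{B_1}C$, i.e.\ that each $g(x_2)=\Absm{x_1\in\Mcca{B_1}}{f(x_1)(x_2)}$ is linear, continuous, measurable, and integral preserving. Linearity and continuity are immediate from the pointwise structure of $\Limpl{B_2}C$. For measurability, given $\gamma\in\Mcca{\Cpath Y{B_1}}$ and $m\in\Mcms C_{Y'}$, I would test the path $f\Comp\gamma\in\Mcca{\Cpath Y{\Limpl{B_2}C}}$ (a path because $f\in\MCONES(B_1,\Limpl{B_2}C)$) against $\Mtlfun{x_2}m\in\Mcms{\Limpl{B_2}C}_{Y'}$, where $x_2$ is the constant path of Lemma~\ref{lemma:cst-path}; this yields exactly $\Absm{(s',s)}{m(s',f(\gamma(s))(x_2))}$, which is therefore measurable. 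Integral preservation of $g(x_2)$ follows by evaluating the integral preservation of $f$ at $x_2$ and using pointwise integration in $\Limpl{B_2}C$.

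The main obstacle is showing that $g$ itself lies in $\Limpl{B_2}{(\Limpl{B_1}C)}$, and concretely its measurability; linearity, continuity and (by the pointwise argument, through \Mssepr{}) integral preservation are straightforward. For measurability I must show, for $\delta\in\Mcca{\Cpath Y{B_2}}$ and a test $\Mtlfun\gamma m\in\Mcms{\Limpl{B_1}C}_{Y'}$, that $\psi=\Absm{(s',s)\in Y'\times Y}{m(s',f(\gamma(s'))(\delta(s)))}$ is measurable; the difficulty is that here both swapped arguments genuinely vary, so the constant-path trick of the previous paragraph no longer suffices. I would resolve this exactly as in Lemma~\ref{lemma:swap-lin-path}: form the path $f\Comp\gamma\Comp\Proj1\in\Mcca{\Cpath{Y'\times Y}{\Limpl{B_2}C}}$ (using Lemma~\ref{lemma:precomp-path} and that $f$ preserves paths) and the test $\Mtlfun{\delta\Comp\Proj2}{(m\Comp(\Proj1\times\Mcca C))}\in\Mcms{\Limpl{B_2}C}_{Y'\times Y}$, whose value on $h$ reads $m(s',h(\delta(s)))$. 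Applying the path condition gives a measurable function on $(Y'\times Y)\times(Y'\times Y)$, and precomposing with the diagonal $\Absm{(s',s)}{((s',s),(s',s))}$ — a morphism of $\ARCAT$ because $\ARCAT$ is closed under products — recovers $\psi$. This product-space-plus-diagonal step is the crux of the argument.

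Finally I would verify that $\Swlinlin$ is a morphism of $\ICONES$. It is linear and continuous pointwise, and norm preserving since $\Norm{\Swlinlin(f)}=\sup_{\Norm{x_1}\le1,\Norm{x_2}\le1}\Norm{f(x_1)(x_2)}=\Norm f$, so in particular $\Norm{\Swlinlin}\le1$. Its measurability rests on the evident bijection between $\Mcms{\Limpli{B_1}{(\Limpli{B_2}C)}}_{Z}$ and $\Mcms{\Limpli{B_2}{(\Limpli{B_1}C)}}_{Z}$ sending $\Mtlfun{\gamma_1}{(\Mtlfun{\gamma_2}m)}$ to $\Mtlfun{\gamma_2}{(\Mtlfun{\gamma_1}m)}$, under which testing $\Swlinlin\Comp\eta$ coincides with testing $\eta$; and integral preservation of $\Swlinlin$ follows once more from pointwise integration in the hom cones together with \Mssepr{}. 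This establishes the claim of the first paragraph and hence the lemma.
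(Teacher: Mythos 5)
Your proposal is correct and follows essentially the same route as the paper, whose own proof is only a two-line sketch deferring the $\MCONES$ part to the proof of Lemma~\ref{lemma:swap-lin-path} and settling integral preservation by pointwise integration in the hom cones (plus Fubini, which is already baked into the lemma establishing that integrals in $\Limpl CD$ are computed pointwise). Your product-space-plus-diagonal step for measurability is exactly the mechanism used in Lemma~\ref{lemma:swap-lin-path}, and your pointwise-evaluation-plus-\Mssepr{} argument for integral preservation is a correct unpacking of the paper's appeal to pointwise integration.
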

\begin{proof}
  One checks that
  \[
    \Swlinlin\in
    \MCONES(\Limpli{B_1}{(\Limpli{B_2}C)},\Limpli{B_2}{(\Limpli{B_1}C)})
  \] %
  as in the proof of~\ref{lemma:swap-lin-path}, and this morphism
  preserves integrals because integrals are computed pointwise in
  \(\Limpl DE\) for all integrable cones \(D\) and \(E\) and by the
  Fubini theorem because all the considered measures are finite.
\end{proof}

\subsection{Bilinear maps} %
\label{sec:bilin-measurable}
After the linear function space \(\Limpl CD\) that we have just
defined, the next concept deeply related to the tensor product is of
course the concept of bilinear map that we introduce now.

\begin{definition}
  \label{def:bilinear}
  Let \(C_1,C_2,D\) be integrable cones, we define formally
  \begin{align*}
    \Limplm{C_1,C_2}{D}=\Limpl{C_1}{\Limplp{C_2}{D}}
  \end{align*}
  and call this integrable cone \emph{the cone of integrable bilinear and
  continuous maps %
  \(C_1,C_2\to D\)}. %
\end{definition}
Indeed, thanks to Lemma~\ref{lemma:seprate-cont-implies-cont}, an
element of \(\Mcca{\Limplm{C_1,C_2}{D}}\) can be seen as a function %
\(f:\Mcca{C_1}\IWith\Mcca{C_2}\to\Mcca D\) which is separately linear
and \(\omega\)-continuous.  
Measurability of \(f\) is expressed equivalently by saying that given
\((X_i\in\ARCAT)_{i=1,2}\) and %
\((\gamma_i\in\Mcca{\Cpath{X_i}{C_i}})_{i=1,2}\) the map %
\(\Absm{(r_1,r_2)\in X_1\times X_2}{f(\gamma_1(r_1),\gamma_2(r_2))}:
X_1\times X_2\to\Mcca D\) is a measurable path
or that, given \(X\in\ARCAT\) and %
\((\gamma_i\in\Mcca{\Cpath X{C_i}})_{i=1,2}\), the map %
\(\Absm{r\in X}{f(\gamma_1(r),\gamma_2(r))}\) is a measurable path.
Preservation of integrals means that, given moreover %
\((\mu_i\in\Mcca{\Cmeas(X_i)})_{i=1,2}\), we have %
\begin{align*}
  f\Big(\int^{C_1}\gamma_1(r_1)\mu_1(dr_1),
  \int^{C_2}\gamma_2(r_2)\mu_2(dr_2)\Big)
  =\iint^D f(\gamma_1(r_1),\gamma_2(r_2))\mu_1(dr_1)\mu_2(dr_2)
\end{align*}
where we can use the double integral symbol by
Theorem~\ref{th:paths-Fubini}.

Continuing to spell out the definition above of the integrable cone %
\(\Limplm{C_1,C_2}{D}\), we see that, given \(X\in\ARCAT\), %
an element of \(\Mcms{\Limplm{C_1,C_2}{D}}_X\) is a %
\begin{align*}
  \Mtlfunm{\gamma_1,\gamma_2}{m}
  =\Absm{(r,f)\in X\times\Mcca{(\Limplm{C_1,C_2}{D})}}
  {m(r,f(\gamma_1(r),\gamma_2(r))}
\end{align*}
where %
\((\gamma_i\in\Mcca{\Cpath X{C_i}})_{i=1,2}\) and %
\(m\in\Mcms D_X\).
Last the integral of a measurable path %
\(\eta\in\Mcca{\Cpath X{(\Limplm{C_1,C_2}{D})}}\) %
over \(\mu\in\Mcca{\Cmeas(X)}\) is characterized by %
\begin{align*}
  \Big(\int\eta(r)\mu(dr)\Big)(x_1,x_2)
  =\int\eta(r)(x_1,x_2)\mu(dr)\,.
\end{align*}

\subsection{The linear hom functor}
\label{sec:lin-hom-functor}
In order to define the tensor product as a left adjoint, we need to
consider the operation \(\Limpl{}{}\) as an operation on morphisms of
\(\ICONES\), not only on objects.
We define this operation and prove that it preserves all limits in its
second argument.

\begin{definition}
  Let \(g\in\ICONES(D_1,D_2)\) and \(h\in\ICONES(C_2,C_1)\).
  The function
  \(\Limpl hg:\Mcca{\Limpl{C_1}{D_1}}\to \Mcca{\Limpl{C_2}{D_2}}\) is
  defined by \((\Limpl hg)(f)=g\Compl f\Compl h\).
\end{definition}

\begin{proposition}
  If \(g\in\ICONES(D_1,D_2)\) and \(h\in\ICONES(C_2,C_1)\) then %
  \(\Limpl hg\in\ICONES(\Limpl{C_1}{D_1},\Limpl{C_2}{D_2})\).
\end{proposition}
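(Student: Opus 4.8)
The plan is to check the four defining properties of an $\ICONES$-morphism for $\Limpl hg$, namely that it takes values in $\Mcca{\Limpl{C_2}{D_2}}$, is linear and $\omega$-continuous with $\Norm{\Limpl hg}\leq 1$, is measurable, and preserves integrals. The first three are routine. For well-definedness, fix $f\in\Mcca{\Limpl{C_1}{D_1}}$ and choose $\epsilon>0$ with $\epsilon f\in\ICONES(C_1,D_1)$; since $\ICONES$ is a category and $g,h$ are morphisms, $g\Comp(\epsilon f)\Comp h=\epsilon(g\Comp f\Comp h)\in\ICONES(C_2,D_2)$, so $g\Comp f\Comp h\in\Mcca{\Limpl{C_2}{D_2}}$. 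Linearity is immediate from the pointwise operations on linear maps together with linearity of $g$ and $h$; $\omega$-continuity follows because lubs in $\Limpl{C_2}{D_2}$ are pointwise and $g$ is continuous, so $(\Limpl hg)(\sup_n f_n)(x)=g((\sup_n f_n)(h(x)))=\sup_n g(f_n(h(x)))$; and $\Norm{g(f(h(x)))}\leq\Norm{f(h(x))}\leq\Norm f\,\Norm x$ (using $\Norm g,\Norm h\leq 1$) gives $\Norm{\Limpl hg}\leq 1$.

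The substantial part is measurability. Here I would first flag the obstacle: one cannot simply pull the codomain test back along $g$, because the test family $\Mcms{D_1}$ need not contain $m(s,g(\cdot))$; instead one must exploit that $g$ preserves measurable paths. So let $X\in\ARCAT$ and $\eta\in\Mcca{\Cpath X{\Limpl{C_1}{D_1}}}$; I must show $(\Limpl hg)\Comp\eta\in\Mcca{\Cpath X{\Limpl{C_2}{D_2}}}$. Fixing a test $\Mtlfun\gamma m\in\Mcms{\Limpl{C_2}{D_2}}_Y$ (with $\gamma\in\Mcca{\Cpath Y{C_2}}$ and $m\in\Mcms{D_2}_Y$), the function to be shown measurable is
\[
  \psi=\Absm{(s,r)\in Y\times X}{m(s,g(\eta(r)(h(\gamma(s)))))}.
\]

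The key step is to exhibit $\Phi=\Absm{(s,r)\in Y\times X}{\eta(r)(h(\gamma(s)))}$ as an element of $\Mcca{\Cpath{Y\times X}{D_1}}$. Boundedness is clear. For measurability, given $n\in\Mcms{D_1}_Z$ I would form $\delta=(h\Comp\gamma)\Comp\Proj2\in\Mcca{\Cpath{Z\times Y}{C_1}}$ (using that $h$ is measurable, so $h\Comp\gamma$ is a path, and Lemma~\ref{lemma:precomp-path}) and $n'=n\Comp(\Proj1\times\Mcca{D_1})\in\Mcms{D_1}_{Z\times Y}$ (by \Mscompr{} in $D_1$); then $\Mtlfun\delta{n'}$ is a test on $\Limpl{C_1}{D_1}$ whose evaluation along the path $\eta$ is precisely $\Absm{(t,s,r)}{n(t,\Phi(s,r))}$, measurable because $\eta$ is a path (up to the evident measurable reassociation of $(Z\times Y)\times X$ and $Z\times(Y\times X)$). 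Applying the measurable map $g$ yields $g\Comp\Phi\in\Mcca{\Cpath{Y\times X}{D_2}}$, and the path condition for $g\Comp\Phi$ against the arity-$Y$ test $m$ makes $\Absm{(s_0,(s,r))}{m(s_0,g(\Phi(s,r)))}$ measurable on $Y\times(Y\times X)$; precomposing with the measurable diagonal $(s,r)\mapsto(s,(s,r))$ recovers $\psi$. This establishes measurability.

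Finally, for integral preservation, let $\eta\in\Mcca{\Cpath X{\Limpl{C_1}{D_1}}}$ and $\mu\in\Mcca{\Cmeas(X)}$. Since integration in a linear-hom cone is computed pointwise (as in the proof that $\Limpl CD$ is integrable), it suffices to evaluate both sides at an arbitrary $x\in\Mcca{C_2}$. Writing $y=h(x)$, the map $\Absm r{\eta(r)(y)}$ lies in $\Mcca{\Cpath X{D_1}}$ (apply the path condition of $\eta$ to $\Mtlfun{\delta_y}{m}$, with $\delta_y$ the constant path at $y$, Lemma~\ref{lemma:cst-path}), so
\[
  (\Limpl hg)\Big(\int\eta(r)\mu(dr)\Big)(x)=g\Big(\int\eta(r)(y)\mu(dr)\Big)=\int g(\eta(r)(y))\mu(dr)=\Big(\int(\Limpl hg)(\eta(r))\mu(dr)\Big)(x),
\]
using pointwise integration and that $g$ preserves integrals. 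As this holds for all $x$, the two morphisms coincide. The main obstacle throughout is the measurability argument: the naive transport of $m$ backwards through $g$ fails, forcing one to repackage $\eta(r)(h(\gamma(s)))$ as a genuine $D_1$-valued path of arity $Y\times X$, invoke path-preservation by $g$, and only then recover $\psi$ by a diagonal substitution.
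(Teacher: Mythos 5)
Your proof is correct and follows essentially the same route as the paper's: both arguments reduce measurability to exhibiting \((s,r)\mapsto\eta(r)(h(\gamma(s)))\) as a measurable \(D_1\)-valued path of arity \(Y\times X\), pushing it forward along the measurable map \(g\), and then evaluating the test \(m\) on the diagonal, while integral preservation is checked pointwise using that \(g\) preserves integrals and that integration in the hom-cone is computed pointwise. The only cosmetic difference is that the paper produces this path by invoking Lemmas~\ref{lemma:swap-lin-path} and~\ref{lemma:meas-path-flat}, whereas you re-derive the same fact directly from the parameterized tests \(\Mtlfun\delta{n'}\) on \(\Limpl{C_1}{D_1}\) --- both justifications are valid.
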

\begin{proof}
  The linearity and continuity of \(\Limpl hg\) result from the same
  properties satisfied by \(g\) and \(h\).
  The fact that %
  \(\Norm{\Limpl hg}\leq 1\) results from the fact that %
  \(\Norm g,\Norm h\leq 1\), so let us check that \(\Limpl hg\) is
  measurable. %
  Let \(\eta_1\in\Mcca{\Cpath X{\Limpl{C_1}{D_1}}}\) for some
  \(X\in\ARCAT\).
  We must prove that %
  \((\Limpl hg)\Comp\eta_1\in\Mcca{\Cpath X{\Limpl{C_2}{D_2}}}\) %
  so let %
  \(p\in\Mcms{\Limpl{C_2}{D_2}}_Y\) for some \(Y\in\ARCAT\), we must
  prove that
  \begin{align*}
    \phi=\Absm{(s,r)\in Y\times X}{p(s,(\Limpl hg)(\eta_1(r)))}
  \end{align*}
  is measurable.
  Let \(\gamma\in\Mcca{\Cpath Y{C_2}}\) and \(m\in\Mcms{D_2}_Y\) %
  be such that %
  \(p=\Mtlfun\gamma m\).
  For \(s\in Y\) and \(r\in X\) we have
  \begin{align*}
    \phi(s,r)
    ={m(s,g(\eta_1(r)(h(\gamma(s))))}
    ={m(s,g(\delta_1(s)(r)))}
    ={m(s,g(\Flpath(\delta_1)(s,r)))}
  \end{align*}
  where %
  \(\delta_1=\Inv\Swlinpath(\eta_1)\Comp h\Comp\gamma \in\Mcca{\Cpath
    Y{\Cpath X{D_1}}}\) by Lemma~\ref{lemma:swap-lin-path} and hence %
  \(g\Comp\Flpath(\delta_1)\in\Mcca{\Cpath{Y\times X}{D_2}}\) by
  Lemma~\ref{lemma:meas-path-flat} so that \(\phi\) is measurable.
  We need last to prove that %
  \(\Limpl hg\) preserves integrals so let moreover %
  \(\mu\in\Mcca{\Cmeas(X)}\), we have %
  \begin{align*}
    (\Limpl hg)\Big(\int^{\Limpl{C_1}{D_1}}\eta_1(r)\mu(dr)\Big)
    &=\Absm{x\in\Mcca{C_2}}
      {g\Big(\Big(\int^{\Limpl{C_1}{D_1}}\eta_1(r)\mu(dr)\Big)(h(x))\Big)}\\
    &=\Absm{x\in\Mcca{C_2}}{g\Big(\int^{D_1}\eta_1(r)(h(x))\mu(dr)\Big)}\\
    &=\Absm{x\in\Mcca{C_2}}{\int^{D_2} g\Big(\eta_1(r)(h(x))\Big)\mu(dr)}\\
    &=\Absm{x\in\Mcca{C_2}}{
      \int^{D_2}(\Limpl hg)(\eta_1(r))(x)\mu(dr)}\\
    &=\int^{\Limpl{C_2}{D_2}}(\Limpl hg)(\eta_1(r))\mu(dr)\,.
      \qedhere
  \end{align*}
\end{proof}

\noindent 
So we have defined a functor %
\(\Limpl\_\_:\Op\ICONES\times\ICONES\to\ICONES\). We identify
\(\Limpl \Sone\_\) with the identity functor: we make no distinction %
between \(x\in\Mcca C\) and the function %
\(\Pt x\in\Mcca{\Limpl\Sone C}\) (this notation is introduced in the
proof of Theorem~\ref{th:icones-conditions-saft}).

\begin{theorem} %
  \label{th:limpl-has-left-adj}
  For each integrable cone \(C\), the functor \(\Limpl C\_\) has a
  left adjoint.
\end{theorem}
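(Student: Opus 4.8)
The plan is to deduce the existence of the left adjoint from the special adjoint functor theorem via Theorem~\ref{th:Icones-adjoint-functor}. Since $\Limpl C\_$ is the restriction of the functor $\Limpl\_\_\colon\Op\ICONES\times\ICONES\to\ICONES$ to the fixed first argument $C$, it is a functor $\ICONES\to\ICONES$, and $\ICONES$ is locally small; hence by Theorem~\ref{th:Icones-adjoint-functor} it suffices to prove that $\Limpl C\_$ preserves all small limits. Because $\ICONES$ is complete (Theorem~\ref{th:mcones-complete}) and every small limit is constructed from small products and binary equalizers, I only need to verify that $\Limpl C\_$ preserves small products and equalizers, whose concrete descriptions are recorded in Theorems~\ref{th:cone-product} and~\ref{th:alg-cones-equalizers} and in the proof of Theorem~\ref{th:mcones-complete}.

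For products, given a family $(D_i)_{i\in I}$ with product $\Bwith_{i\in I}D_i$ and projections $\Proj i$, I would show that the map sending $f\in\Mcca{\Limpl C{\Bwith_{i\in I}D_i}}$ to the family $(\Proj i\Comp f)_{i\in I}$ is an isomorphism $\Limpl C{\Bwith_{i\in I}D_i}\to\Bwith_{i\in I}\Limpl C{D_i}$ in $\ICONES$, with inverse $(f_i)_{i\in I}\mapsto\Absm{x\in\Mcca C}{(f_i(x))_{i\in I}}$. Each $\Proj i\Comp f$ is integrable since $\Proj i\in\ICONES(\Bwith_{i\in I}D_i,D_i)$, the two assignments are mutually inverse and linear, and the sup-norm of the product (Theorem~\ref{th:cone-product}) yields $\Norm f=\sup_{x\in\Cuball{\Mcca C}}\sup_{i\in I}\Norm{(\Proj i\Comp f)(x)}=\sup_{i\in I}\Norm{\Proj i\Comp f}$, so the bijection preserves norms. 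The measurability structures correspond because $\Mcms{\Bwith_{i\in I}D_i}_X$ consists of the tests $\Mtinj im$, so that the test $\Mtlfun\gamma{\Mtinj im}$ on the left matches $\Mtinj i{\Mtlfun\gamma m}$ on the right, and integrals on both sides are computed componentwise. It then follows that $\Limpl C{\Bwith_{i\in I}D_i}$ together with the morphisms $\Limpl C{\Proj i}$ is a product of the $\Limpl C{D_i}$.

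For equalizers, let $g,h\in\ICONES(D_1,D_2)$ have equalizer $(E,e)$, where $\Mcca E=\{y\in\Mcca{D_1}\St g(y)=h(y)\}$ and $e$ is the inclusion. An element $f\in\Mcca{\Limpl C{D_1}}$ satisfies $(\Limpl Cg)(f)=(\Limpl Ch)(f)$, that is $g\Comp f=h\Comp f$, exactly when $f(\Mcca C)\subseteq\Mcca E$; in that case the corestriction of $f$ to a map $C\to E$ is again linear, continuous, measurable and integral-preserving, because the cone structure, the measurability structure and the integrals of $E$ are precisely those inherited from $D_1$ (proof of Theorem~\ref{th:mcones-complete}). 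Hence $(\Limpl CE,\Limpl Ce)$ is the equalizer of $\Limpl Cg$ and $\Limpl Ch$, and $\Limpl C\_$ preserves equalizers. Having established preservation of products and equalizers, $\Limpl C\_$ preserves all small limits, and Theorem~\ref{th:Icones-adjoint-functor} produces the desired left adjoint.

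The main obstacle is that the usual conceptual shortcut ---~an internal hom preserves limits because it has a left adjoint $\Tens\_C$~--- is unavailable here, since that adjoint is exactly what we are constructing; the limit preservation must therefore be verified by hand. The genuinely delicate point is to check that the evident set-theoretic bijections above are isomorphisms in $\ICONES$, and not merely in $\CONES$: one must confirm that they carry the tests $\Mtlfun\gamma m$ to the corresponding tests on the limit cone (cf.~Remark~\ref{rk:mcones_iso}) and that they commute with integration, both of which rest on the fact, from Theorem~\ref{th:mcones-complete}, that the forgetful functor to $\SET$ creates the measurability and integrability structure of limits.
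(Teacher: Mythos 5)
Your proposal is correct and follows essentially the same route as the paper: invoke Theorem~\ref{th:Icones-adjoint-functor} and verify by hand that \(\Limpl C\_\) preserves small products and equalizers, using the concrete descriptions of limits in \(\ICONES\). The only (inessential) difference is at the equalizer step, where you identify the equalizer of \(\Limpl Cg\) and \(\Limpl Ch\) directly as the subcone \(\Limpl CE\) of \(\Limpl C{D_1}\), whereas the paper verifies the universal property by uncurrying a test morphism \(h\in\ICONES(H,\Limpl C{D_1})\) to a bilinear map ranging in \(\Mcca E\); both arguments rest on the same fact that \(E\) inherits its norm, tests and integrals from \(D_1\).
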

\begin{proof}
  By Theorem~\ref{th:Icones-adjoint-functor} it suffices to prove that %
  \(\Limpl C\_\) preserves all limits.

  \Proofcase{} Products.
  Let \((D_i)_{i\in I}\) be a family of measurable cones and let %
  \(D=\Bwith_{i\in I}D_i\) as described in the proof of
  Theorem~\ref{th:mcones-complete}.
  We have a morphism
  \begin{align*}
    k=\Tuple{\Limpl C{\Proj i}}_{i\in I}
    \in\ICONES(\Limpl CD,\Bwith_{i\in I}\Limplp C{D_i})
  \end{align*}
  and we must prove that \(k\) is an iso.
  It is clearly injective, to prove surjectivity, let %
  \(\Vect f=(f_i\in\Mcca{\Limpl C{D_i}})_{i\in I}
  \in\Mcca{\Bwith_{i\in I}{\Limplp C{D_i}}}\) so that %
  \((\Norm{f_i})_{i\in I}\) is bounded in \(\Realp\) and hence for
  each \(x\in\Mcca C\) the family \((\Norm{f_i(x)})_{i\in I}\) is
  bounded.
  So we can define a function %
  \(f:\Mcca C\to\Mcca D\) by \(f(x)=(f_i(x))_{i\in I}\).
  This function is clearly linear and continuous.
  To prove measurability, take %
  \(\gamma\in\Mcca{\Cpath XC}\) for some \(X\in\ARCAT\) and %
  \(p\in\Mcms D_Y\) for some \(Y\in\ARCAT\).
  This means that %
  \(p=\Mtinj im\) for some \(i\in I\) and \(m\in\Mcms{D_i}_Y\).
  Then %
  \(\Absm{(s,r)\in Y\times X}{p(s,f(\gamma(r)))}
  =\Absm{(s,r)\in Y\times X}{m(s,f_i(\gamma(r)))}\) %
  is measurable because \(f_i\) is.
  Last let moreover %
  \(\mu\in\Mcca{\Cmeas(X)}\), we have
  \(f(\int^C\gamma(r)\mu(dr))=\int^D f(\gamma(r))\mu(dr)\) by
  definition of \(f\) and of integration in \(D\).
  This shows that %
  \(f\in\Mcca{\Limpl CD}\) and hence that \(k\) is a bijection since
  \(f_i=\Proj i\Compl f\) for each \(i\in I\) and hence
  \(\Vect f=k(f)\).

  We prove that %
  \(\Inv k\in\ICONES(\Bwith_{i\in I}\Limplp C{D_i},\Limpl
  CD)\).
  Linearity and continuity follow from the fact that all
  operations are defined componentwise in %
  \(\Mcca{\Bwith_{i\in I}}\Limplp C{D_i}\).
  Next, given
  \(\Vect f\in\Cuball{(\Mcca{\Bwith_{i\in I}}\Limplp C{D_i})}\), we
  have %
  \begin{align*}
    \Norm{\Inv k(\Vect f)}
    &=\sup_{x\in\Cuball{\Mcca C}}\Norm{\Inv k(\Vect f)(x)}\\
    &=\sup_{x\in\Cuball{\Mcca C}}\sup_{i\in I}\Norm{f_i(x)}\\
    &=\sup_{i\in I}\sup_{x\in\Cuball{\Mcca C}}\Norm{f_i(x)}\\
    &=\sup_{i\in I}\Norm{f_i}\leq 1\,.
  \end{align*}
  Next we prove that \(\Inv k\) is measurable so let %
  \(\eta\in\Mcca{\Cpath{X}{\Bwith_{i\in I}\Limplp C{D_i}}}\) %
  for some \(X\in\ARCAT\), we must prove that %
  \(\eta'=\Inv k\Comp\eta\in\Mcca{\Cpath{X}{\Limpl CD}}\). %
  Notice that for all \(r\in X\) we can write %
  \(\eta(r)=(\eta_i(r))_{i\in I}\) where %
  \(\eta_i=\Proj i\Comp\eta\in\Mcca{\Cpath X{\Limpl C{D_i}}}\) for
  each \(i\in I\).
  Let \(Y\in\ARCAT\), \(\gamma\in\Mcca{\Cpath YC}\) and %
  \(p\in\Mcms D_Y\), so that %
  \(p=\Mtinj im\) for some \(i\in I\) and \(m\in\Mcms{D_i}_Y\).
  We have %
  \( %
  \Absm{(s,r)\in Y\times X}{(\Mtlfun{\gamma}{p})(s,\eta'(r))}
  =\Absm{(s,r)\in Y\times X}{m_i(s,\eta_i(r))} %
  \) %
  which is measurable since each %
  \(\eta_i\) is a measurable path.
  Last let moreover \(\mu\in\Mcca{\Cmeas(X)}\), we must prove that %
  \(g_1=\Inv k(\int^{\Bwith_{i\in I}(\Limpl C{D_i})}\eta(r)\mu(dr))\) %
  and %
  \(g_2=\int^{\Limpl CD}\Inv k(\eta(r))\mu(dr)\) are the same function. %
  Let \(x\in\Mcca{C}\), we have %
  \(g_1(x)=(\int^{D_i} \eta_i(r)(x)\mu(dr))_{i\in I}=g_2(x)\).
  This ends the proof that \(k\) is an iso in \(\ICONES\) and hence
  that \(\Limpl C\_\){} preserves all products.
  
  \Proofcase{} Equalizers. Let \(f,g\in\ICONES(D_1,D_2)\) and let %
  \((E,e)\) be the corresponding equalizer in \(\ICONES\), as
  described in the proof of Theorem~\ref{th:mcones-complete}.
  Then we have %
  \(\Limplp Cf\Compl\Limplp Ce=\Limplp Cg\Compl\Limplp Ce\) by
  functoriality of \(\Limpl C\_\) and it will be sufficient to prove
  that %
  \((\Limpl CE,\Limpl Ce)\) has the universal property of an
  equalizer. Let \(H\) be an integrable cone and %
  \(h\in\ICONES(H,\Limpl C{D_1})\) be such that %
  \(\Limplp Cf\Compl h=\Limplp Cg\Compl h\).
  Identifying \(h\) with its ``uncurried'' version %
  \(h'\in\Mcca{\Limplm{H,C}{D_1}}\), the integrable bilinear and
  continuous map (see Section~\ref{sec:bilin-measurable}) given by
  \(h'(z,x)=h(z)(x)\), we have %
  \(f\Comp h'=g\Comp h'\).
  In other words \(h'\) ranges in %
  \(\Mcca E\subseteq\Mcca{D_1}\), allowing to define %
  \(h'_0\in\Mcca{\Limplm{H,C}{E}}\) which is the same function as
  \(h'\) and is bilinear continuous and integrable by definition of
  \(E\) (which inherits the norm, the measurability and integrability
  structure of \(C\)).
  We use \(h_0\) for the corresponding element of %
  \(\ICONES(H,\Limpl CE)\), so that %
  \(h=(\Limpl Ce)h_0\).
  The fact that \(h_0\) is unique with this property results from the
  fact that \(\Limpl Ce\) is a mono (it is actually the inclusion of %
  \(\Mcca{\Limpl CE}\) into \(\Mcca{\Limpl C{D_1}}\) resulting from
  the inclusion \(e\) of \(\Mcca E\) into \(\Mcca{D_1}\)). %
  This shows that \((\Limpl CE,\Limpl Ce)\) is the equalizer of %
  \(\Limpl Cf\) and \(\Limpl Cg\) and ends the proof that %
  \(\Limpl C\_\) preserves all limits.
\end{proof}

\begin{lemma} %
  \label{lemma:lfun-path-swap}
  Let \(X\in\ARCAT\) and let \(B,C,D\) be measurable cones. Let %
  \(f\) be an element of \(\ICONES(B,\Cpath X{\Limpl CD})\).
  Then %
  \(f'=\Absm{(y,r,x)\in\Mcca C\times X\times\Mcca B}
  {f(x,r,y)}\) belongs to \(\ICONES(C,\Cpath X{\Limpl BD})\).
\end{lemma}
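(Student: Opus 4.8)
The plan is to avoid any direct verification of measurability and integral preservation for $f'$, and instead exhibit $f'$ as the image of $f$ under a composite of three isomorphisms of $\ICONES$ that are already available. Reading the morphism $f$ as an element of the unit ball of the cone $\Limpl B{\Cpath X{\Limpl CD}}$, the target chain of objects is
\[
  \Limpl B{\Cpath X{\Limpl CD}}\;\cong\;\Cpath X{\Limpl B{\Limpl CD}}\;\cong\;\Cpath X{\Limpl C{\Limpl BD}}\;\cong\;\Limpl C{\Cpath X{\Limpl BD}}\,.
\]
Membership of $f'$ in $\ICONES(C,\Cpath X{\Limpl BD})$ will then come for free from the fact that each arrow preserves the norm, is measurable and preserves integrals.

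First I would apply $\Swlinpath$ of Lemma~\ref{lemma:swap-lin-path}, with its source cone taken to be $B$ and its target cone taken to be $\Limpl CD$; it is an iso in $\ICONES$ by the remark following that lemma, and it sends $f$ to $a=\Absm{r\in X}{\Absm{x\in\Mcca B}{f(x)(r)}}\in\Mcca{\Cpath X{\Limpl B{\Limpl CD}}}$. Next I would apply the functor $\Cpath X{\_}:\ICONES\to\ICONES$ to the $\ICONES$-iso $\Swlinlin$ of Lemma~\ref{lemma:swap-lin-lin} (with $B_1:=B$, $B_2:=C$ and its third cone equal to $D$); since $\Cpathf$ is functorial in its $\ICONES$-variable, $\Cpath X{\Swlinlin}$ is again an iso and acts by pointwise postcomposition, $\eta\mapsto\Swlinlin\Comp\eta$. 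Finally I would apply $\Inv{\Swlinpath}$ (the map $\Swlinpath'$ built in the proof of Lemma~\ref{lemma:swap-lin-path}), now with its source cone equal to $C$ and its target cone equal to $\Limpl BD$, landing in $\Limpl C{\Cpath X{\Limpl BD}}$.

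The only genuine verification is that this composite, evaluated at $f$, is literally the function $f'$. Writing $a=\Swlinpath(f)$, $b=\Cpath X{\Swlinlin}(a)$ and $c=\Inv{\Swlinpath}(b)$ and tracking the variables through the three explicit formulas, one gets $a(r)(x)(y)=f(x)(r)(y)$, then $b(r)(y)(x)=a(r)(x)(y)$ because $\Swlinlin$ swaps its two arguments, and finally $c(y)(r)(x)=b(r)(y)(x)$, whence $c(y)(r)(x)=f(x)(r)(y)=f'(y)(r)(x)$ for all $x\in\Mcca B$, $y\in\Mcca C$ and $r\in X$. Thus $f'=c$ is the image of $f$ under a composite of $\ICONES$-isomorphisms; as each preserves the norm and $\Norm f\leq 1$, we obtain $\Norm{f'}\leq 1$ and therefore $f'\in\ICONES(C,\Cpath X{\Limpl BD})$.

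I expect the main obstacle to be purely bookkeeping: keeping the three instantiations of the two swap lemmas straight (in particular that $\Swlinpath$ and $\Swlinlin$ denote genuinely different maps despite both being written $\mathsf{sw}$), and confirming that the action of the functor $\Cpath X{\_}$ on a morphism is indeed pointwise postcomposition, which is what makes the variable-chasing above legitimate. No new analytic estimate or integration argument is required, since measurability and integral preservation are inherited from the constituent isomorphisms.
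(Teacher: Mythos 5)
Your proof is correct and is essentially the paper's own argument: the paper likewise obtains $f'$ as the image of $f$ under a chain of $\ICONES$-isomorphisms built from Lemma~\ref{lemma:swap-lin-path} and the bilinear argument swap of Lemma~\ref{lemma:swap-lin-lin}, merely routing through $\Limpl B{\Limplp C{\Cpath XD}}\Isom\Limplm{C,B}{\Cpath XD}$ (pushing $\Cpathf$ inward) instead of your $\Cpath X{\Limpl B{(\Limpl CD)}}\Isom\Cpath X{\Limpl C{(\Limpl BD)}}$ (pushing it outward). Your explicit variable-chase confirming that the composite equals $f'$ is a welcome addition that the paper leaves implicit.
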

\begin{proof}
  This results from the following sequence of isos in \(\ICONES\):
  \begin{align*}
    &\Limpl B{\Cpath X{\Limpl CD}}\\
    &\Textsep\Isom \Limpl B{\Limplp C{\Cpath XD}}
    \quad\text{by Lemma~\ref{lemma:swap-lin-path}
      and functoriality~of }\Limpl C\_\\
    &\Textsep=\Limplm{B,C}{\Cpath XD}\\
    &\Textsep\Isom\Limplm{C,B}{\Cpath XD}
      \text{\quad by Section~\ref{sec:bilin-measurable}.}\qedhere
  \end{align*}
\end{proof}

\subsection{The tensor product of integrable cones} %
\label{sec:tensor}
Let \(C\) be an integrable cone. We denote by %
\(\Tens\_ C\) the left adjoint of the functor \(\Limpl C{\_}\), %
see Theorem~\ref{th:limpl-has-left-adj}.
Because \(\Limpl\_\_\) is a functor %
\(\Op\ICONES\times\ICONES\to\ICONES\) (see
Section~\ref{sec:lin-hom-functor}), we know by the adjunction with a
parameter theorem (\cite{Maclane71}, Chapter~IV, Section~7,
Theorem~3), that the so defined operation%
\footnote{According to one of the reviewers of this paper, this tensor
  product can be understood as the adaptation to the setting of
  integrable cones of the standard projective tensor product of
  locally convex spaces.} %
\(\ITens\) can uniquely be extended in a bifunctor %
\(\ITens:\ICONES^2\to\ICONES\) in such a way that the bijection
\begin{align*}
  \Phi_{B,C,D}:\ICONES(\Tens BC,D)\to\ICONES(B,\Limpl CD)
\end{align*}
given by the adjunction for each \(C\) is natural in \(B,C,D\).
We define %
\begin{align*}
  \Tensor_{B,C}=\Phi_{B,C,\Tens BC}(\Id_{\Tens BC})
  \in\ICONES(B,\Limpl C{\Tens BC})
  =\Cuball{(\Mcca{\Limplm{B,C}{\Tens BC}})}
\end{align*}
and, for \(x\in\Mcca B\) and \(y\in\Mcca C\) we use the notation %
\(\Tens xy=\Tensor_{B,C}(x,y)\).
By naturality of \(\Phi\) we have that, for each
\(f\in\Mcca{\Limpl{\Tens BC}{D}}\),
\begin{align}
  \label{eq:tens-adj-bij-comp}
  \Phi_{B,C,D}(f)=f\Comp\Tensor_{B,C}.
\end{align}

The next lemma is the key observation for proving that the above
bijection is a cone isomorphism.
\begin{lemma} %
  \label{lemma:path-tens-to-one}
  Let \(X\in\ARCAT\) and \(B,C\) be integrable cones. %
  Let \(\eta:X\to\Mcca{\Limpl{\Tens BC}\Sone}\) be a function.
  One has \(\eta\in\Mcca{\Cpath{X}{\Limpl{\Tens BC}\Sone}}\) as soon as 
  \begin{itemize}
  \item \(\eta(X)\subseteq\Mcca{\Limpl{\Tens BC}\Sone}\) %
    is bounded
  \item and for all \(Y\in\ARCAT\), %
    \(\beta\in\Mcca{\Cpath YB}\) and \(\gamma\in\Mcca{\Cpath YC}\),
    the function %
    \(\Absm{(s,r)\in Y\times X}{\eta(r)(\Tens{\beta(s)}{\gamma(s)})}:
    Y\times X\to\Realp\) %
    is measurable.
  \end{itemize}
\end{lemma}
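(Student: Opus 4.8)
The plan is to transport the problem across the swapping isomorphism of Lemma~\ref{lemma:swap-lin-path} and then invoke the universal property of $\Tens BC$. Write $E=\Tens BC$. By Lemma~\ref{lemma:swap-lin-path} there is an iso $\Swlinpath\in\MCONES(\Limpl E{\Cpath X\Sone},\Cpath X{\Limpl E\Sone})$ sending $f$ to $\Absm{r\in X}{\Absm{w\in\Mcca E}{f(w)(r)}}$. Since this is an isomorphism of measurable cones (in particular a bijection preserving measurable paths in both directions), it suffices to exhibit an $f\in\Mcca{\Limpl E{\Cpath X\Sone}}$ with $f(w)(r)=\eta(r)(w)$ for all $w\in\Mcca E$ and $r\in X$; then $\eta=\Swlinpath(f)\in\Mcca{\Cpath X{\Limpl E\Sone}}$, as required. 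Note that this route avoids any circular appeal to the (not yet established) fact that $\Phi$ is a cone isomorphism: it uses only the already-proved Lemma~\ref{lemma:swap-lin-path} and the set-level adjunction bijection $\Phi$.

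First I would produce $f$ on simple tensors via the universal property. Define $\mathfrak b:\Mcca B\times\Mcca C\to\Realp^{X}$ by $\mathfrak b(x,y)=\Absm{r\in X}{\eta(r)(\Tens xy)}$. Applying the hypothesis to the constant paths $r\mapsto x$ and $r\mapsto y$ (measurable by Lemma~\ref{lemma:cst-path}), together with the boundedness of $\eta(X)$ and the bound $\Norm{\Tens xy}\le\Norm x\Norm y$, shows $\mathfrak b(x,y)\in\Mcca{\Cpath X\Sone}$. I then check that $\mathfrak b\in\Mcca{\Limplm{B,C}{\Cpath X\Sone}}$, i.e.\ that it is integrable bilinear, continuous and measurable: separate linearity comes from bilinearity of $\Tensor_{B,C}$ and linearity of each $\eta(r)$; continuity from continuity of $\Tensor_{B,C}$ and of $\eta(r)$, lubs in $\Cpath X\Sone$ being pointwise; measurability, spelled out as in Section~\ref{sec:bilin-measurable}, requires that for $Y\in\ARCAT$, $\beta\in\Mcca{\Cpath YB}$ and $\gamma\in\Mcca{\Cpath YC}$ the map $\Absm{(s,r)\in Y\times X}{\eta(r)(\Tens{\beta(s)}{\gamma(s)})}$ be measurable — which is exactly the hypothesis; and integral preservation follows by pushing the integrals through $\Tensor_{B,C}$ (an integrable morphism in each argument), combining them by Theorem~\ref{th:paths-Fubini}, and then through the integrable functional $\eta(r)$. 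After a harmless rescaling by some $\epsilon>0$, the adjunction bijection $\Phi_{B,C,\Cpath X\Sone}$ yields a unique $f\in\Mcca{\Limpl E{\Cpath X\Sone}}$ with $f(\Tens xy)=\mathfrak b(x,y)$, that is $f(\Tens xy)(r)=\eta(r)(\Tens xy)$.

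Finally I would upgrade this from simple tensors to all of $\Mcca E$. Fix $r\in X$. Both $\Absm{w\in\Mcca E}{f(w)(r)}$ — the composite of $f$ with evaluation at $r$, which is a morphism $\Cpath X\Sone\to\Sone$ in $\ICONES$ since integration in $\Cpath X\Sone$ is pointwise — and $\Absm{w\in\Mcca E}{\eta(r)(w)}$ are linear, continuous and integrable functionals on $E$ that agree on every simple tensor $\Tens xy$. By injectivity of the adjunction bijection (equivalently, by \Eqref{eq:tens-adj-bij-comp}, two integrable functionals on $\Tens BC$ agreeing on all simple tensors coincide) they are equal, so $f(w)(r)=\eta(r)(w)$ for all $w$ and $r$, which is what was needed. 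The main obstacle is the verification that $\mathfrak b$ lies in $\Mcca{\Limplm{B,C}{\Cpath X\Sone}}$: measurability is precisely where the ``extra parameter'' of tests (as in Remark~\ref{rk:test-are-paramerized}) must be tracked, so that the hypothesis still applies after precomposition with an arbitrary $\phi\in\ARCAT(Y'',X)$, and integral preservation is where Fubini (Theorem~\ref{th:paths-Fubini}) and the integrability of both $\Tensor_{B,C}$ and $\eta(r)$ are genuinely used.
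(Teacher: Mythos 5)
Your proof is correct and takes essentially the same route as the paper's: define the bilinear, continuous, measurable, integrable map $(x,y)\mapsto\Absm{r\in X}{\eta(r)(\Tens xy)}$ from the hypotheses, pass through $\Inv{\Phi_{B,C,\Cpath X\Sone}}$ to get a linear integrable map $\Tens BC\to\Cpath X\Sone$, swap the path parameter with the linear argument, and identify the result with $\eta$ by agreement on simple tensors. The only cosmetic difference is that you invoke Lemma~\ref{lemma:swap-lin-path} directly where the paper routes through Lemma~\ref{lemma:lfun-path-swap} (itself derived from Lemma~\ref{lemma:swap-lin-path}); your final identification step, via injectivity of the adjunction bijection, is in fact spelled out more carefully than in the paper.
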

\begin{proof}
  Let %
  \(\eta':\Mcca B\times\Mcca C\times X\to\Realp\) be defined
  by %
  \(\eta'(x,y,r)=\eta(r)(\Tens xy)\).
  We have %
  \(\eta'\in\ICONES(B,\Limplm C{\Cpath X\Sone})\) by our
  assumptions.
  Let us check that \(\eta'\) indeed preserves integrals.
  Let \(Y,Z\in\ARCAT\), \(\mu \in \Cmeas(Y)\),
  \(\nu \in \Cmeas(Z)\), \(\beta \in \Cpath{Y}{B}\),
  \(\gamma \in \Cpath{Z}{C}\), and
  \(m = \Mcms{\Cpath X\Sone}_\Measterm\) (\Ie~there is
  \(r \in X\) such that \(m(\xi) = \xi(r)\) for all
  \(\xi \in \Cpath X\Sone\)).
  Then
  \begin{align*}
    m\Big(\eta'\Big(\int \beta(s)\mu(ds),& \int \gamma(t)\nu(dt)\Big)\Big)
    =\eta'\Big(\int \beta(s)\mu(ds), \int \gamma(t)\nu(dt)\Big)(r)\\
    &=\eta(r)\Big(\Tens{\Big(\int \beta(s)\mu(ds)\Big)}
      {\Big(\int \gamma(t)\nu(dt)\Big)}\Big)\\
          &=\iint \eta(r)(\Tens{\beta(s)}{\gamma(t)})\mu(ds)\nu(dt)
    \text{\quad since }\Tensor\text{ preserves integrals}\\
    &=\iint m(\eta'(\beta(s),\gamma(t)))\mu(ds)\nu(dt)\\
    &=m\Big(\iint \eta'(\beta(s),\gamma(t))\mu(ds)\nu(dt)\Big)\,.
  \end{align*}
  Let %
  \[\eta''
    =\Inv{\Phi_{B,C,\Cpath X\Sone}}(\eta')\in\ICONES(\Tens BC,\Cpath X\Sone)
    =\ICONES(\Tens BC,\Cpath X{\Limpl\Sone\Sone})
  \] %
  up to a trivial \(\ICONES\) iso %
  and so by Lemma~\ref{lemma:lfun-path-swap} there is a %
  \(h\in\ICONES(\Sone,\Cpath X{\Limpl{\Tens BC}\Sone})\) %
  such that \(h(1)(r)(z)=\eta''(z)(r)\) for all %
  \(z\in\Mcca{\Tens BC}\) and \(r\in X\).
  So we have \(h(1)(r)(\Tens xy)=\eta(r)(\Tens xy)\) and hence %
  \(\eta(r)=h(1)(r)\) since both are elements of %
  \(\Mcca{\Limpl{\Tens BC}\Sone}\).
  Since this holds for all \(r\in X\) we have proven that %
  \(\eta=h(1)\) and hence \(\eta\in\Cpath{X}{\Limpl{\Tens BC}\Sone}\)
  as contended.
\end{proof}

\noindent 
Now we can prove the main property of our tensor product which will
allow us to prove that it has a structure of monoidal product on
\(\ICONES\).
\begin{theorem} %
  \label{th:icones-tens-limpl-isom}
  For each integrable cones \(B,C,D\), the function %
  \(\Phi_{B,C,D}\) is an isomorphism of integrable cones from %
  \(\Limpl{\Tens BC}{D}\) to %
  \(\Limpl B{\Limplp CD}=(\Limplm{B,C}D)\).
\end{theorem}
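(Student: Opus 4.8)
The plan is to realize the homset bijection $\Phi_{B,C,D}$ as the restriction to unit balls of a single linear map of the underlying cones, and then to verify, one axiom at a time, that this map is an isomorphism in $\ICONES$. Concretely, I would define $\Psi:\Mcca{\Limpl{\Tens BC}D}\to\Mcca{\Limpl B{\Limplp CD}}$ by $\Psi(f)(x)(y)=f(\Tens xy)$; by \Eqref{eq:tens-adj-bij-comp} this is $f\Comp\Tensor_{B,C}$, so $\Psi$ is linear, and it agrees with $\Phi_{B,C,D}$ on morphisms since the unit balls are exactly the homsets, $\Cuball{\Mcca{\Limpl{\Tens BC}D}}=\ICONES(\Tens BC,D)$ and likewise for the target (an $f$ of norm $\le 1$ in the cone is automatically a morphism). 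Thus proving that $\Psi$ is an $\ICONES$-iso proves the statement. The cheap part is then collected as follows: because $\Phi_{B,C,D}$ and its inverse are homogeneous bijections between the two unit balls, scaling gives $\Norm{\Psi(f)}=\Norm f$ for every $f$ (the inequality $\le$ from $\Phi_{B,C,D}$, the inequality $\ge$ from $\Inv{\Phi_{B,C,D}}$), so $\Psi$ is an isometric linear bijection; its inverse $\Theta$ is automatically linear and continuous by Lemma~\ref{lemma:linear-inverse}, and $\Psi$ itself is continuous because lubs and the evaluation $f\mapsto f(\Tens xy)$ are computed pointwise.

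By Remark~\ref{rk:mcones_iso} it remains to check that $\Psi$ preserves measurable paths in both directions and preserves integrals in both directions. For the forward measurability, a parameter-$Y$ test on $\Limpl B{\Limplp CD}$ has the shape $\Mtlfun\beta{(\Mtlfun\gamma m)}$ with $\beta\in\Mcca{\Cpath YB}$, $\gamma\in\Mcca{\Cpath YC}$ and $m\in\Mcms D_Y$, and for any $f$ it evaluates to $m(s,f(\Tens{\beta(s)}{\gamma(s)}))$. The assignment $\zeta=\Absm{s\in Y}{\Tens{\beta(s)}{\gamma(s)}}$ is a measurable path into $\Tens BC$: since $\Tensor_{B,C}$ is a morphism, $\Tensor_{B,C}\Comp\beta\in\Mcca{\Cpath Y{\Limpl C{\Tens BC}}}$, and applying a path of linear maps to a path of arguments along the diagonal yields a path, which is seen by reindexing the parametrized tests of $\Limpl C{\Tens BC}$ over $Y\times Y$ and restricting to the (measurable) diagonal of $Y\in\ARCAT$. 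Hence $\Mtlfun\zeta m$ is a test on $\Limpl{\Tens BC}D$, and applying it to a path $\eta$ reproduces exactly $m(s,\eta(r)(\Tens{\beta(s)}{\gamma(s)}))$; so $\eta\in\Mcca{\Cpath X{\Limpl{\Tens BC}D}}$ forces $\Psi\Comp\eta$ to be a path.

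The converse — that $\Psi\Comp\eta$ being a path forces $\eta$ to be one — is the crux, and here I would use the generalization of Lemma~\ref{lemma:path-tens-to-one} from codomain $\Sone$ to an arbitrary integrable cone $D$: a \emph{bounded} $\eta:X\to\Mcca{\Limpl{\Tens BC}D}$ is a measurable path as soon as $(s,r)\mapsto m(s,\eta(r)(\Tens{\beta(s)}{\gamma(s)}))$ is measurable for all $\beta\in\Mcca{\Cpath YB}$, $\gamma\in\Mcca{\Cpath YC}$ and $m\in\Mcms D_Y$. Its proof copies that of Lemma~\ref{lemma:path-tens-to-one}: the data package into an integrable bilinear map $\eta'\in\ICONES(B,\Limplm C{\Cpath XD})$ with $\eta'(x,y,r)=\eta(r)(\Tens xy)$ (integral preservation of $\eta'$ coming from that of $\Tensor_{B,C}$ and of each $\eta(r)$), one uncurries through $\Inv{\Phi}$ to $\eta''\in\ICONES(\Tens BC,\Cpath XD)$, and Lemma~\ref{lemma:lfun-path-swap} produces $h\in\ICONES(\Sone,\Cpath X{\Limpl{\Tens BC}D})$ with $h(1)(r)(z)=\eta''(z)(r)$; injectivity of $\Phi_{B,C,D}$ on homsets then upgrades the agreement on simple tensors to $h(1)(r)=\eta(r)$, so $\eta=h(1)$ is a path. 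Since the tests $\Mtlfun\beta{(\Mtlfun\gamma m)}$ applied to $\Psi\Comp\eta$ deliver precisely these measurability data, and $\Norm{\eta(r)}=\Norm{\Psi(\eta(r))}$ gives the required boundedness, $\eta$ is a path. Thus $\Psi$ is an iso in $\MCONES$.

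Finally, for integral preservation I would test $\Psi\big(\int\eta(r)\mu(dr)\big)$ and $\int\Psi(\eta(r))\mu(dr)$ against an arbitrary $\Mtlfun x{(\Mtlfun ym)}\in\Mcms{\Limpl B{\Limplp CD}}_\Measterm$; using that integration in a cone of linear maps is pointwise in the argument, together with the defining property of the integral in $D$, both sides reduce to $\int m(\eta(r)(\Tens xy))\mu(dr)$, and \Mssepr{} yields the identity. Integral preservation of $\Theta=\Inv\Psi$ is then automatic, for if $\xi$ is a path into $\Limpl B{\Limplp CD}$ then $\Inv\Psi\Comp\xi$ is a path, and applying $\Inv\Psi$ to $\Psi\big(\int\Inv\Psi(\xi(r))\mu(dr)\big)=\int\xi(r)\mu(dr)$ gives $\int\Inv\Psi(\xi(r))\mu(dr)=\Inv\Psi\big(\int\xi(r)\mu(dr)\big)$. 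Hence $\Psi$ and $\Inv\Psi$ both lie in $\ICONES$ and $\Psi$, whose restriction to unit balls is $\Phi_{B,C,D}$, is the desired isomorphism of integrable cones. The single genuine obstacle is the generalized Lemma~\ref{lemma:path-tens-to-one}: everything else is bookkeeping, but that step is exactly what makes the inverse of $\Phi_{B,C,D}$ measurable, and it is where the universal property of $\Tens BC$ — recovering a path into $\Limpl{\Tens BC}D$ from its behaviour on simple tensors — is essential.
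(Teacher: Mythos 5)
Your proposal is correct and follows essentially the same route as the paper: isolate the linear bijection underlying $\Phi_{B,C,D}$, get continuity of the inverse from Lemma~\ref{lemma:linear-inverse}, check forward measurability and integral preservation by unfolding the tests $\Mtlfun\beta{(\Mtlfun\gamma m)}$, and make the inverse measurable via Lemma~\ref{lemma:path-tens-to-one}, deducing integral preservation of the inverse from injectivity. The only divergence is at the crux you correctly identify: where you generalize Lemma~\ref{lemma:path-tens-to-one} to an arbitrary codomain $D$, the paper instead pre-composes with a test $m\in\Mcms D_Y$ so as to land in $\Cpath{Y\times X}{\Limpl{\Tens BC}\Sone}$ and apply the lemma exactly as stated — both work, and your generalization does go through by the same packaging via $\Inv{\Phi_{B,C,\Cpath XD}}$ and Lemma~\ref{lemma:lfun-path-swap}.
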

\begin{proof}
  By linearity and \(\omega\)-continuity of composition on the left, the
  function \(\Phi_{B,C,D}\) ---~characterized
  by~\Eqref{eq:tens-adj-bij-comp}~--- is a linear and continuous map %
  \(\Phi_{B,C,D}:\Mcca{\Limpl{\Tens BC}{D}}\to\Mcca{\Limplm{B,C}D}\) %
  which satisfies \(\Norm{\Phi_{B,C,D}(f)}\leq\Norm f\) for all %
  \(f\in\Mcca{\Limplm{\Tens BC}{D}}\).
  This latter property is due to the fact that if
  \(f\in\Mcca{\Limpl{\Tens BC}{D}}\) satisfies \(\Norm f\leq 1\)
  then %
  \(f\in\ICONES(\Tens BC,D)\) and hence %
  \(\Phi_{B,C,D}(f)\in\ICONES(B,\Limpl CD)\) so that %
  \(\Norm{\Phi_{B,C,D}(f)}\leq 1\) and hence for an arbitrary %
  \(f\in\Mcca{\Limpl{\Tens BC}{D}}\) such that \(f\not=0\) we have %
  \(\Norm{(1/\Norm f)f}\leq 1\) and hence %
  \(\Norm{\Phi_{B,C,D}((1/\Norm f)f)}\leq 1\) which is exactly our
  contention, which trivially also holds when \(f=0\).

Let us prove that \(\Phi_{B,C,D}\)
is measurable, so let \(X\in\ARCAT\) and %
\(\eta\in\Mcca{\Cpath X{\Limpl{\Tens BC}D}}\), we must prove that %
\(\Phi_{B,C,D}\Comp\eta\in\Mcca{\Cpath{X}{(\Limplm{B,C}D)}}\).
So let \(Y\in\ARCAT\) %
and \(p\in\Mcms{\Limpl{B,C}D}_Y\), which means that %
\(p=\Mtlfunm{\beta,\gamma}{m}\) for some \(\beta\in\Mcca{\Cpath YB}\),
\(\gamma\in\Mcca{\Cpath YC}\) and %
\(m\in\Mcms D_Y\).
We have
\begin{align*}
  \Absm{(s,r)\in Y\times X}{p(s,\Phi_{B,C,D}(\eta(r)))}
  &=\Absm{(s,r)\in Y\times X}{p(s,\eta(r)\Comp\Tensor_{B,C})}\\
  &=\Absm{(s,r)\in Y\times X}{m(s,\eta(r)(\Tens{\beta(s)}{\gamma(s)}))}
\end{align*}
which is measurable because %
\(\Tens\beta\gamma\in\Mcca{\Cpath Y{\Tens BC}}\) (defining
\(\Tens\beta\gamma\) by %
\(\Tensp\beta\gamma(s)=\Tens{\beta(s)}{\gamma(s)}\)) %
by measurability of \(\Tensor\), and by our assumption that \(\eta\)
is a measurable path.
Altogether we have proven that %
\begin{align*}
  \Phi_{B,C,D}
  \in\MCONES(\Mcofic{(\Limpl{\Tens BC}D)},\Mcofic{(\Limplm{B,C}D)})
\end{align*}
and we prove now that this morphism preserves integrals, so let
moreover \(\mu\in\Mcca{\Cmeas(X)}\), we have %
\begin{align*}
  \Phi_{B,C,D}\Big(\int\eta(r)\mu(dr)\Big)
  &=\Absm{(x,y)\in\Mcca B\times\Mcca C}
    {\Big(\int\eta(r)\mu(dr)\Big)(\Tens xy)}\\
  &=\Absm{(x,y)\in\Mcca B\times\Mcca C}
    {\Big(\int\eta(r)(\Tens xy)\mu(dr)\Big)}\\
  &=\Absm{(x,y)\in\Mcca B\times\Mcca C}
    {\Big(\int\Phi_{B,C,D}(\eta(r))(x,y)\mu(dr)\Big)}\\
  &=\int\Phi_{B,C,D}(\eta(r))\mu(dr)\,.
\end{align*}
This shows that %
\(\Phi_{B,C,D}
\in\ICONES({(\Limpl{\Tens BC}D)},{(\Limplm{B,C}D)})\)
and we show now that this morphism is an iso.

We know that this function is bijective, let us use %
\(\Psi_{B,C,D}\) for its inverse, which is linear and continuous by
Lemma~\ref{lemma:linear-inverse}.
Since %
\(\Psi_{B,C,D}:\ICONES(B,\Limpl CD)\to\ICONES(\Tens BC,D)\), we have %
\(\Norm{\Psi_{B,C,D}(g)}\leq\Norm g\) for all %
\(g\in\Mcca{B,\Limpl CD}\), using also the linearity of
\(\Psi_{B,C,D}\).
We prove that \(\Psi_{B,C,D}\) is measurable.
Let \(X\in\ARCAT\) and %
\(\eta\in\Mcca{\Cpath{X}{(\Limplm{B,C}{D})}}\), we must prove that %
\(\Psi_{B,C,D}\Comp\eta\in\Mcca{\Cpath{X}{\Limplp{\Tens BC}{D}}}\). %
Without loss of generality we assume that \(\Norm\eta\leq 1\). %
Let \(Y\in\ARCAT\) and \(p\in\Mcms{\Limpl{\Tens BC}D}_Y\), %
we must check that %
\(\Absm{(s,r)\in Y\times X}{p(s,\Psi_{B,C,D}(\eta(r)))}\) %
is measurable.
There is \(\theta\in\Cpath{Y}{\Tens BC}\) %
and \(m\in\Mcms D_Y\) such that %
\(p=\Mtlfun{\theta}{m}\), and we must check that %
\(\Absm{(s,r)\in Y\times X}{m(s,\Psi_{B,C,D}(\eta(r))(\theta(s)))}\) %
is measurable. For this, since \(\ARCAT\) is cartesian, it suffices
to prove that %
\begin{align*}
  \Absm{(s,s',r)\in Y\times Y\times X}{m(s,\Psi_{B,C,D}(\eta(r))(\theta(s')))}
\end{align*}
is measurable.
Since \(\theta\in\Mcca{\Cpath{Y}{\Tens BC}}\) it suffices to prove that %
\begin{align*}
  \eta'=\Absm{(s,r,z)\in Y\times X\times\Mcca{\Tens BC}}
  {m(s,\Psi_{B,C,D}(\eta(r))(z))}
  \in\Mcca{\Cpath{Y\times X}{\Limpl{\Tens BC}{\Sone}}}
\end{align*}
and to this end we apply Lemma~\ref{lemma:path-tens-to-one}.
The boundedness assumption is satisfied because %
\(\Norm\eta\leq 1\) %
and hence %
\(\Norm{\Psi_{B,C,D}(\eta(r))}\leq 1\) for each %
\(r\in X\).
So let %
\(Y'\in\ARCAT\), \(\beta\in\Cpath{Y'}{B}\) %
and \(\gamma\in\Cpath{Y'}{C}\).
We have
\begin{align*}
  &\Absm{(s',s,r)\in Y'\times Y\times X}
  {\eta'(s,r)(\Tens{\beta(s')}{\gamma(s')})}\\
  &\Textsep=\Absm{(s',s,r)\in Y'\times Y\times X}
    {m(s,\Psi_{B,C,D}(\eta(r))(\beta(s')\ITens\gamma(s')))}\\
  &\Textsep=\Absm{(s',s,r)\in Y'\times Y\times X}
    {m(s,\eta(r)(\beta(s'),\gamma(s')))}
\end{align*}
which is measurable by our assumption about \(\eta\).
Last we must prove that \(\Psi_{B,C,D}\) preserves integrals.
Using the same path \(\eta\) let furthermore %
\(\mu\in\Mcca{\Cmeas(X)}\), we must prove that
\begin{align*}
  g_1&=\Psi_{B,C,D}\Big(\int \eta(r)\mu(dr)\Big)
       \in\Mcca{\Limpl{\Tens BC}D}\\
  \text{and }
  g_2&=\int\Psi_{B,C,D}(\eta(r))\mu(dr)\in\Mcca{\Limpl{\Tens BC}D}
\end{align*}
are equal.
Since \(\Phi_{B,C,D}\) preserves integrals we have %
\(\Phi_{B,C,D}(g_1)=\Phi_{B,C,D}(g_2)\) %
and the required property follows from the injectivity of %
\(\Phi_{B,C,D}\).
\end{proof}

\begin{theorem}
  For each \(x\in\Mcca B\) and \(y\in\Mcca C\) we have %
  \(\Norm{\Tens xy}=\Norm x\Norm y\).
\end{theorem}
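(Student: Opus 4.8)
The plan is to prove the two inequalities $\Norm{\Tens xy}\le\Norm x\Norm y$ and $\Norm{\Tens xy}\ge\Norm x\Norm y$ separately. The first is essentially formal: since $\Tensor_{B,C}\in\ICONES(B,\Limpl C{\Tens BC})$ is a morphism, it satisfies $\Norm{\Tensor_{B,C}}\le 1$, and unfolding the two nested operator norms gives $\Norm{\Tens xy}\le 1$ whenever $\Norm x\le 1$ and $\Norm y\le 1$. The general case then follows from the bilinearity of $\Tensor_{B,C}$ together with the homogeneity axiom \Cnormhr{}: for $x,y\neq 0$ one writes $\Tens xy=\Norm x\Norm y\,\Tens{(x/\Norm x)}{(y/\Norm y)}$ (the cases $x=0$ or $y=0$ being trivial since then $\Tens xy=0$), whence $\Norm{\Tens xy}\le\Norm x\Norm y$.

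The substance is the reverse inequality, for which I would exhibit, for each pair of measurability tests, a separating linear form on $\Tens BC$. I would fix nonzero $a\in\Mcms B_\Measterm$ and $b\in\Mcms C_\Measterm$, set $x'=a/\Norm a\in\Cuball{\Cdual{\Mcca B}}$ and $y'=b/\Norm b\in\Cuball{\Cdual{\Mcca C}}$, and consider the map $\phi:\Mcca B\to\Mcca{\Limpl C\Sone}$ given by $\phi(x)=x'(x)\,y'$. After checking that $\phi\in\ICONES(B,\Limpl C\Sone)$ (linearity and continuity are clear, and $\Norm{\phi(x)}=x'(x)\Norm{y'}\le\Norm x$ gives $\Norm\phi\le 1$), I would apply the adjunction inverse to obtain $w'=\Psi_{B,C,\Sone}(\phi)\in\ICONES(\Tens BC,\Sone)$, so that $w'\in\Cuball{\Cdual{\Mcca{\Tens BC}}}$. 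By \Eqref{eq:tens-adj-bij-comp} we have $w'\Comp\Tensor_{B,C}=\Phi_{B,C,\Sone}(w')=\phi$, hence $\Eval{\Tens xy}{w'}=\phi(x)(y)=x'(x)\,y'(y)$.

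The main obstacle is verifying that $\phi$ really is a morphism of $\ICONES$, \Ie~that it is measurable and integral-preserving, and it is precisely here that the choice of $x'$ and $y'$ as \emph{normalized measurability tests} rather than arbitrary dual elements is essential. For measurability, a test of $\Limpl C\Sone$ has the form $\Mtlfun\gamma m$ with $m$ the identity on $\Sone$, so evaluating it along $\phi\Comp\beta$ returns $(s,r)\mapsto x'(\beta(r))\,y'(\gamma(s))$; each factor is measurable because $a\Comp\beta$ and $b\Comp\gamma$ are measurable (tests compose measurably with paths), and a product of separately measurable factors over a product space is measurable. For integral preservation I would use that $a$, and hence $x'$, commutes with integrals in $B$, together with the fact that integrals in $\Limpl C\Sone$ are computed pointwise, reducing the required identity to the scalar linearity of Lebesgue integration. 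Granting this, Proposition~\ref{th:norm-dual} applied to $\Tens BC$ yields $\Norm{\Tens xy}\ge\Eval{\Tens xy}{w'}=\frac{a(x)}{\Norm a}\,\frac{b(y)}{\Norm b}$; taking the supremum over all nonzero $a,b$, and using that the supremum of a product of nonnegative quantities in independent variables factors, together with \Msnormr{} for $B$ and for $C$, gives $\Norm{\Tens xy}\ge\Norm x\Norm y$ and completes the proof.
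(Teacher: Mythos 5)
Your proposal is correct and follows essentially the same route as the paper: the upper bound from $\Norm{\Tensor_{B,C}}\leq 1$, and the lower bound by building the product form $(x_0,y_0)\mapsto x'(x_0)y'(y_0)$ from (normalized) dual elements and pulling it back through $\Inv{\Phi_{B,C,\Sone}}$ to an element of $\Cuball{\Cdual{(\Mcca{\Tens BC})}}$, the paper merely phrasing this in uncurried form and via an $\epsilon$-approximation from Proposition~\ref{th:norm-dual} instead of a supremum over tests. Your explicit remark that $x'$ and $y'$ must be taken as normalized measurability tests (so that the product form is measurable and integral-preserving, i.e.\ genuinely lands in $\Mcca{\Limplm{B,C}\Sone}$) is a point the paper leaves implicit, and is a welcome clarification.
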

\begin{proof}
  Since \(\Tensor_{B,C}\in\ICONES(B,\Limpl C{\Tens BC})\) we have %
  \(\Norm{\Tens xy}\leq\Norm x\Norm y\), we just have to prove the
  converse.
  If \(x=0\) or \(y=0\) our contention trivially holds so we can
  assume without loss of generality that \(\Norm x=\Norm y=1\) and let
  \(\epsilon>0\).
  By Proposition~\ref{th:norm-dual} there is
  \(x'\in\Cuball{\Cdual{\Mcca B}}\) and %
  \(y'\in\Cuball{\Cdual{\Mcca C}}\) such
  that %
  \(\Eval{x}{x'}\geq 1-\epsilon/2\) and
  \(\Eval{y}{y'}\geq 1-\epsilon/2\).
  Let \(g:\Mcca B\times\Mcca C\to\Realp\) be defined by %
  \(g(x_0,y_0)=\Eval{x_0}{x'}\Eval{y_0}{y'}\).
  Then
  \(g\in\Mcca{\Limplm{B,C}\Sone}\) and moreover \(\Norm g\leq 1\).
  Let %
  \(z'=\Inv{\Phi_{B,C,\Sone}}(g)\in\Cuball{\Cdual{(\Mcca{\Tens BC})}}\), %
  we have %
  \begin{align*}
    \Norm{\Tens xy}
    &\geq\Eval{\Tens xy}{z'}\\
    &=\Eval{x}{x'}\Eval{y}{y'}\\
    &\geq\big(1-\frac\epsilon 2\big)^2> 1-\epsilon
  \end{align*}
  so that \(\Norm{\Tens xy}\geq 1\).
\end{proof}

\subsection{The symmetric monoidal structure of \(\ICONES\)}
\label{sec:icones-is-smcc}
We want now to exploit Theorem~\ref{th:icones-tens-limpl-isom} to show
that the category \(\ICONES\) can be endowed with a symmetric monoidal
structure whose monoidal functor is our tensor product \(\ITens\).

One very convenient tool for proving the associated coherence diagrams
will be Proposition~\ref{prop:fun-ttree-charact} which uses binary
trees given by the following syntax: \(\Ttreeo\) is a tree (a leaf)
and if \(t_1\) and \(t_2\) are trees then \(\Ttreeb{t_1}{t_2}\) is a
tree.
We use \(\Ttrees n\) for the set of trees which have \(n\) leaves (for
\(n\in\Natnz\)).

These trees are used to specify arbitrary ``tensor expressions'' as
follows.
If \(n\in\Natnz\), \(\Vect B=(B_i)_{i=1}^n\) is a sequence of objects
of \(\ICONES\) and \(t\in\Ttrees n\), we define an object
\(\Ttreet t(\Vect B)\) of \(\ICONES\) by a straightforward induction,
for instance
\(\Ttreet{\Ttreeb{\Ttreeo}{\Ttreeb{\Ttreeo}{\Ttreeo}}}(B_1,B_2,B_3)
=\Tens{B_1}{\Tensp{B_1}{B_2}}\).
In the same way, given \(\Vect x=(x_i\in\Mcca{B_i})_{i=1}^n\) one
defines \(\Ttreet t(\Vect x)\in\Mcca{\Ttreet t(\Vect B)}\), for
instance
\(\Ttreet{\Ttreeb{\Ttreeo}{\Ttreeb{\Ttreeo}{\Ttreeo}}}(x_1,x_2,x_3)
=\Tens{x_1}{\Tensp{x_1}{x_2}}\).
\begin{proposition}
  \label{prop:fun-ttree-charact}
  Let \(n\in\Natnz\), \(\List B1n,C\) be integrable cones and
  \(t\in\Ttrees n\).
  Let \(f,g\in\ICONES(\Ttreet t(\Vect B),C)\).
  If, for all \((x_i\in\Mcca{B_i})_{i=1}^n\) one has %
  \(f(\Ttreet t(\Vect x))=g(\Ttreet t(\Vect x))\), then \(f=g\).
\end{proposition}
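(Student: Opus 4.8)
The plan is to argue by induction on the structure of the tree $t$, using at each node the tensor--hom isomorphism of Theorem~\ref{th:icones-tens-limpl-isom}. It is convenient to prove the slightly more general claim in which $f$ and $g$ are allowed to be \emph{arbitrary} elements of $\Mcca{\Limpl{\Ttreet t(\Vect B)}{C}}$ (linear, continuous, measurable, integral-preserving maps of any norm), not merely morphisms of $\ICONES$. This strengthening is essentially free and is in fact necessary: after uncurrying along the adjunction one evaluates at pure tensors $\Ttreet{t_1}(\Vect{x'})$ whose norm need not be $\leq 1$, so the resulting elements of an internal hom cone will in general lie outside its unit ball. The proposition itself is the special case $\Norm f,\Norm g\leq 1$.

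The base case $t=\Ttreeo$ is immediate: then $n=1$, $\Ttreet t(\Vect B)=B_1$ and $\Ttreet t(\Vect x)=x_1$, so the assumption that $f(\Ttreet t(\Vect x))=g(\Ttreet t(\Vect x))$ for all $x_1\in\Mcca{B_1}$ says exactly that $f$ and $g$ coincide as functions, hence $f=g$.

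For the inductive step I would write $t=\Ttreeb{t_1}{t_2}$ with $t_1\in\Ttrees k$, $t_2\in\Ttrees{n-k}$, and set $B'=\Ttreet{t_1}(B_1,\dots,B_k)$ and $B''=\Ttreet{t_2}(B_{k+1},\dots,B_n)$, so that $\Ttreet t(\Vect B)=\Tens{B'}{B''}$. Since $\Phi_{B',B'',C}$ is an isomorphism of integrable cones (Theorem~\ref{th:icones-tens-limpl-isom}), it is in particular a bijection between $\Mcca{\Limpl{\Tens{B'}{B''}}{C}}$ and $\Mcca{\Limpl{B'}{\Limplp{B''}{C}}}$, so it suffices to prove $\Phi_{B',B'',C}(f)=\Phi_{B',B'',C}(g)$. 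I would then invoke the induction hypothesis for $t_1$, with target cone $\Limpl{B''}{C}$: this reduces the goal to showing that $\Phi_{B',B'',C}(f)$ and $\Phi_{B',B'',C}(g)$ agree on every pure tensor $\Ttreet{t_1}(\Vect{x'})$, where $\Vect{x'}=(x_1,\dots,x_k)$. The two values obtained lie in $\Mcca{\Limpl{B''}{C}}$, so a second application of the induction hypothesis, now for $t_2$ with target $C$, reduces the goal further to agreement on the pure tensors $\Ttreet{t_2}(\Vect{x''})$, where $\Vect{x''}=(x_{k+1},\dots,x_n)$.

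At this innermost level the identity closes the argument: by the defining equation~\Eqref{eq:tens-adj-bij-comp} of $\Phi$, together with $\Tens{x'}{x''}=\Tensor_{B',B''}(x')(x'')$, one computes
\begin{align*}
  \Phi_{B',B'',C}(f)(\Ttreet{t_1}(\Vect{x'}))(\Ttreet{t_2}(\Vect{x''}))
  =f\bigl(\Tens{\Ttreet{t_1}(\Vect{x'})}{\Ttreet{t_2}(\Vect{x''})}\bigr)
  =f(\Ttreet t(\Vect x))\,,
\end{align*}
and symmetrically for $g$; the two right-hand sides coincide by the hypothesis of the proposition. I expect the only delicate point to be the bookkeeping just outlined: keeping the unit-ball constraint of $\ICONES$-morphisms from breaking the induction (resolved by the generalization to arbitrary hom-cone elements) and correctly recording, at each of the two nested uses, which cone plays the role of the target. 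Everything else is supplied by the universal property of $\ITens$ repackaged in Theorem~\ref{th:icones-tens-limpl-isom} and by the separation axiom \Mssepr{} underlying the injectivity of $\Phi$.
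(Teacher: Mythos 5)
Your proof is correct and follows essentially the same route as the paper's: induction on the tree, reduction through the isomorphism $\Phi_{D_1,D_2,C}$ of Theorem~\ref{th:icones-tens-limpl-isom}, two nested applications of the inductive hypothesis, and the closing computation via Equation~\Eqref{eq:tens-adj-bij-comp}. The only difference is that you explicitly strengthen the statement to arbitrary elements of the hom-cone to handle the fact that evaluation at pure tensors of non-unit-ball elements may leave the unit ball — a point the paper dispatches with a parenthetical remark that the norm bound is ``not essential''; your version is the cleaner bookkeeping of the same argument.
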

\begin{proof}
  By induction on the structure of \(t\).
  The base case \(t=\Ttreeo\) being trivial, assume that
  \(t=\Ttreeb{t_1}{t_2}\) with \((t_i\in\Ttrees{n_i})_{i=1,2}\) and
  \(n_1+n_2=n\) (notice that \(n_1,n_2<n\)) so that we can write %
  \(\Vect
  B=(B_1^1,\dots,B_{n_1}^1,B_1^2,\dots,B_{n_2}^2)=(\Vect{B^1},\Vect{B^2})\)
  and we have \(\Ttreet t{(\Vect{B})} =\Tens{D_1}{D_2}\) where %
  \((D_i=\Ttreet{t_i}{(\Vect{B^i})})_{i=1,2}\).
  With these notations, we have %
  \(f,g\in\ICONES(\Tens{D_1}{D_2},C)\) and so it suffices to prove that %
  \(\Phi_{D_1,D_2,C}(f)=\Phi_{D_1,D_2,C}(g)\in\ICONES(D_1,\Limpl{D_2}C)\).
  By inductive hypothesis, it suffices to prove that for all %
  \(\Vect{x^1}=(x^1_i\in\Mcca{B^1_i})_{i=1}^{n_1}\), one has %
  \(\Phi_{D_1,D_2,C}(f)(\Ttreet{t_1}(\Vect{x^1}))
  =\Phi_{D_1,D_2,C}(g)(\Ttreet{t_1}(\Vect{x^1}))\in\ICONES(B_2,C)\)
  (the fact that both morphisms have norm \(\leq 1\) is true but not
  essential), and so, by inductive hypothesis again, it suffices to
  prove that for all %
  \(\Vect{x^2}=(x^2_i\in\Mcca{B^2_i})_{i=1}^{n_2}\), one has %
  \(\Phi_{D_1,D_2,C}(f)(\Ttreet{t_1}(\Vect{x^1}))(\Ttreet{t_2}(\Vect{x^2}))
  =\Phi_{D_1,D_2,C}(g)(\Ttreet{t_1}(\Vect{x^1}))(\Ttreet{t_2}(\Vect{x^2}))
  \in\Mcca C\) which results from %
  \(\Phi_{D_1,D_2,C}(f)(\Ttreet{t_1}(\Vect{x^1}))(\Ttreet{t_2}(\Vect{x^2}))
  =f(\Tens{\Ttreet{t_1}(\Vect{x^1})}{\Ttreet{t_2}(\Vect{x^2})})
  =f(\Ttreet t(\Vect x))\) and similarly for \(g\), and from our
  assumption about \(f\) and \(g\).
\end{proof}

\begin{theorem} %
  \label{th:icones-smcc}
  The category \(\ICONES\), equipped with the bifunctor \(\ITens\) and
  unit \(\Sone\) has a structure of symmetric monoidal category, and
  this SMC is closed.
\end{theorem}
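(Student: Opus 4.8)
The plan is to equip $\ICONES$ with the four structural natural isomorphisms — associator $\Assoc$, unitors $\Leftu,\Rightu$, and symmetry $\Sym$ — by transporting them through the hom-set bijection $\Phi$ of the adjunction $\Tens\_C\dashv\Limpl C\_$ (Theorem~\ref{th:limpl-has-left-adj}), together with the internal isomorphism of Theorem~\ref{th:icones-tens-limpl-isom} and the swapping iso $\Swlinlin$ of Lemma~\ref{lemma:swap-lin-lin}, and then to verify the coherence diagrams with Proposition~\ref{prop:fun-ttree-charact}. The main tool for producing the structural morphisms is Lemma~\ref{lemma:functor-yoneda-iso}: a natural bijection between represented functors $\ICONES(\_,D)$ determines, via Yoneda, a natural isomorphism of the representing objects.

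For the associator I would chain the bijections
\[
\ICONES(\Tens A{\Tens BC},D)
\Isom\ICONES(A,\Limpl{\Tens BC}D)
\Isom\ICONES(A,\Limpl B{\Limplp CD})
\Isom\ICONES(\Tens AB,\Limpl CD)
\Isom\ICONES(\Tens{(\Tens AB)}C,D),
\]
where the outer steps are instances of $\Phi$ (and its inverse) and the second step is post-composition with the iso $\Phi_{B,C,D}\colon\Limpl{\Tens BC}D\to\Limpl B{\Limplp CD}$ of Theorem~\ref{th:icones-tens-limpl-isom}. This composite is natural in $D$ and in $A$, so Lemma~\ref{lemma:functor-yoneda-iso} yields $\Assoc_{A,B,C}\in\ICONES(\Tens{(\Tens AB)}C,\Tens A{\Tens BC})$, and unwinding the definitions gives $\Assoc_{A,B,C}(\Tens{(\Tens xy)}z)=\Tens x{(\Tens yz)}$. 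The unitors come from the identification $\Limpl\Sone\_=\Id$ and the bijection $\ICONES(\Sone,E)\Isom\Cuball{E}$ given by $x\mapsto\Pt x$: the chain $\ICONES(\Tens\Sone C,D)\Isom\ICONES(\Sone,\Limpl CD)\Isom\ICONES(C,D)$ is natural in $D$ and produces $\Leftu_C$ with $\Leftu_C(\Tens\lambda x)=\lambda x$, and symmetrically for $\Rightu_C$. For the symmetry I would use $\Swlinlin$: the chain
\[
\ICONES(\Tens{B_1}{B_2},C)
\Isom\ICONES(B_1,\Limpl{B_2}C)
\Isom\ICONES(B_2,\Limpl{B_1}C)
\Isom\ICONES(\Tens{B_2}{B_1},C),
\]
whose middle step applies the iso $\Swlinlin$, is natural in $C$, whence $\Sym_{B_1,B_2}\in\ICONES(\Tens{B_1}{B_2},\Tens{B_2}{B_1})$ with $\Sym_{B_1,B_2}(\Tens xy)=\Tens yx$.

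With these components and their action on pure tensors recorded, the coherence conditions reduce to routine checks. Each of Mac Lane's pentagon, the triangle identity, the hexagon, and the involutivity $\Sym_{B_2,B_1}\Comp\Sym_{B_1,B_2}=\Id$ is an equality of two morphisms whose common domain is a tensor expression $\Ttreet t(\Vect B)$ for a suitable tree $t\in\Ttrees n$. Since both sides are composites of the structural isos, and each structural iso has a known effect on pure tensors $\Ttreet{t'}(\Vect x)$, both sides send $\Ttreet t(\Vect x)$ to the same element — for instance both legs of the pentagon map $\Ttreet{t}(w,x,y,z)$ to $\Tens w{(\Tens x{(\Tens yz)})}$. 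Proposition~\ref{prop:fun-ttree-charact} then forces the two morphisms to coincide. Closedness is essentially already in hand: the adjunction $\Tens\_C\dashv\Limpl C\_$ exhibits $\Limpl C\_$ as right adjoint to $\Tens\_C$, and Theorem~\ref{th:icones-tens-limpl-isom} upgrades $\Phi$ to an isomorphism $\Limpl{\Tens BC}D\Isom\Limpl B{\Limplp CD}$ of integrable cones, which is exactly the internal-hom adjunction required of an SMCC.

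I expect the main obstacle to be bookkeeping rather than conceptual difficulty: one must carefully verify that each structural iso, constructed abstractly through $\Phi$ and Lemma~\ref{lemma:functor-yoneda-iso}, really acts on pure tensors as claimed, since these actions are precisely what feed Proposition~\ref{prop:fun-ttree-charact}. Naturality of $\Assoc,\Leftu,\Rightu,\Sym$ in all of their arguments is automatic from the naturality of $\Phi$, of $\Swlinlin$, and of the isomorphism in Theorem~\ref{th:icones-tens-limpl-isom}, so no separate naturality argument is needed.
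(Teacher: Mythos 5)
Your proposal is correct and follows essentially the same route as the paper: construct $\Assoc$, $\Leftu$, $\Rightu$, $\Sym$ by chaining the adjunction bijections $\Phi$ (crucially using Theorem~\ref{th:icones-tens-limpl-isom} to make the middle step of the associator chain a post-composition with a genuine $\ICONES$-iso) and extracting the structural morphisms via Lemma~\ref{lemma:functor-yoneda-iso}, then verify coherence on pure tensors with Proposition~\ref{prop:fun-ttree-charact}. The paper's proof is the same argument with the associator chain split into the two named bijections $\Psi$ and $\Psi'$.
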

\begin{proof}
  Let us deal first with the associator.
  We have two natural bijections
  \begin{center}
    \begin{tikzcd}
      \ICONES(\Tens{\Tensp{B_1}{B_2}}{B_3},C)
      \ar[d,"\Phi_{\Tens{B_1}{B_2},B_3,C}"]\\[-0.4em]
      \ICONES(\Tens{B_1}{B_2},\Limpl{B_3}C)
      \ar[d,"\Phi_{B_1,B_2,\Limpl{B_3}C}"]\\[-0.4em]
      \ICONES(B_1,\Limpl{B_2}{\Limplp{B_3}{C}})
    \end{tikzcd}
  \end{center}
  and ---~notice that here we use Theorem~\ref{th:icones-tens-limpl-isom}
  in a crucial way~---
  \begin{center}
    \begin{tikzcd}
      \ICONES(\Tens{B_1}{\Tensp{B_2}{B_3}},C)
      \ar[d,"\Phi_{B_1,\Tens{B_2}{B_3},C}"]\\[-0.4em]
      \ICONES(B_1,\Limpl{\Tens{B_2}{B_3}}{C})
      \ar[d,"\ICONES(B_1{,}\Phi_{B_2,B_3,C})"]\\[-0.4em]
      \ICONES(B_1,\Limpl{B_2}{\Limplp{B_3}{C}})
    \end{tikzcd}
  \end{center}
  that we call respectively \(\Psi_{B_1,B_2,B_3,C}\) and
  \(\Psi'_{B_1,B_2,B_3,C}\) so that %
  \(\Invp{\Psi_{B_1,B_2,B_3,C}}\Comp\Psi'_{B_1,B_2,B_3,C}\) is a
  natural bijection %
  \(\ICONES(\Tens{B_1}{\Tensp{B_2}{B_3}},C)
  \to\ICONES(\Tens{\Tensp{B_1}{B_2}}{B_3},C)\) and hence, setting
  \(C=\Tens{B_1}{\Tensp{B_2}{B_3}}\), we know by
  Lemma~\ref{lemma:functor-yoneda-iso} that
  \begin{align*}
    \Assoc_{B_1,B_2,B_3}
    =\Invp{\Psi_{B_1,B_2,B_3,C}}(\Psi'_{B_1,B_2,B_3,C}(\Id_C))
    \in\ICONES(\Tens{\Tensp{B_1}{B_2}}{B_3},\Tens{B_1}{\Tensp{B_2}{B_3}})
  \end{align*}
  is a natural iso.
  Moreover the definition of the natural iso \(\Phi\) implies that for
  all \(x_1\in\Mcca{B_1}\), \(x_2\in\Mcca{B_2}\) and
  \(x_3\in\Mcca{B_3}\), one has
  \begin{align}
    \label{eq:associso-tens-args}
    \Assoc_{B_1,B_2,B_3}(\Tens{\Tensp{x_1}{x_2}}{x_3})
    =\Tens{x_1}{\Tensp{x_2}{x_3}}
  \end{align}
  Indeed \(\Psi'_{B_1,B_2,B_3,\Tens{B_1}{\Tensp{B_2}{B_3}}}(\Id_C)\)
  is
  \begin{align*}
    f=\Absm{x_1\in\Mcca{B_1}}{\Absm{x_2\in\Mcca{B_2}}
    {\Absm{x_3\in\Mcca{B_3}}{\Tens{x_1}{\Tensp{x_2}{x_3}}}}}
  \end{align*}
  and
  \(\Assoc_{B_1,B_2,B_3}\)
  must satisfy \(\Psi_{B_1,B_2,B_3,C}(\Assoc_{B_1,B_2,B_3})=f\) which
  is exactly Equation~\Eqref{eq:associso-tens-args}.
  
  Similarly one defines natural isos
  \(\Leftu_B\in\ICONES(\Tens\Sone B,B)\) (using the obvious natural
  bijection \(\ICONES(\Sone,\Limpl BC)\to\ICONES(B,C)\)), %
  \(\Rightu_B\in\ICONES(\Tens B\Sone,C)\) (using the obvious natural
  iso \((\Limpl\Sone C)\to C\) in \(\ICONES\)) and %
  \(\Sym_{B_1,B_2}\in\ICONES(\Tens{B_1}{B_2},\Tens{B_2}{B_1}))\)
  (using the natural iso of Lemma~\ref{lemma:swap-lin-lin}).
  These isos satisfy the following equations
  \begin{align}
    \forall x\in\Mcca B,\,\forall u\in\Realp
    \quad &\Leftu_B(\Tens ux)=ux=\Rightu_B(\Tens xu)
    \label{eq:unitiso-tens-args}\\
    \forall x_1\in\Mcca{B_1},\,\forall x_2\in\Mcca{B_2}
    \quad &\Sym_{B_1,B_2}(\Tens{x_1}{x_2})=\Tens{x_2}{x_1}\,.
            \label{eq:symiso-tens-args}
  \end{align}
  The required coherence diagrams are easily proven using
  Equations~\Eqref{eq:associso-tens-args},
  \Eqref{eq:unitiso-tens-args} and~\Eqref{eq:symiso-tens-args}
  combined with Proposition~\ref{prop:fun-ttree-charact}.
  In that way, we have endowed \(\ICONES\) with an SMC structure whose
  monoidal product is our tensor product \(\ITens\).
  The natural isomorphism \(\Phi\) tells us moreover that this SMC is
  closed.
\end{proof}

\section{Categorical properties of integration} %
\label{sec:cat-prop-integ}
From now on all the cones we consider are integrable cones, unless
otherwise specified.
We use letters \(B\), \(C\), \(D\) and \(E\), possibly with
subscripts, to denote such cones.

In Lemma~\ref{lemma:pushf-functor-icones} we have defined the
functor %
\(\Funpushf:\ARCAT\to\ICONES\) which maps each \(X\in\ARCAT\) to
the integrable cone \(\Funpushf(X)\) of finite non-negative measures
on \(X\) and acts on measurable functions by the standard
push-forward operation, \(\Funpushf(\phi)=\Pushf\phi\). %

Notice that for each \(X\in\ARCAT\) we have a specific element %
\(\Dirac X\in\Mcca{\Cpath X{\Cmeas(X)}}\) such that %
\(\Dirac X(r)\) is the Dirac mass at \(r\in X\), the measure
defined by
\begin{align*}
  \Dirac X(r)(U)=
  \begin{cases}
    1&\text{if }r\in U\\
    0&\text{otherwise.}
  \end{cases}
\end{align*}
The boundedness of \(\Dirac X\) is obvious and its measurability
results from the observation that if \(m=\Emeas U\) (for some
\(U\in\Sigalg X\)) we have \(m\Comp\Dirac X=\Charfun U\) (the
characteristic function of \(U\)) which is measurable.

\begin{theorem} %
  \label{th:meas-path-equiv}
  For each \(X\in\ARCAT\) and integrable cone \(B\), one has %
  \begin{align*}
    \Mcint B_X\in\ICONES(\Cpath XB,
    \Limpl{\Cmeas(X)}{B})
  \end{align*}
  and \(\Mcint B_X\) (this notation is introduced in
  Definition~\ref{def:integral-in-cone}) is an isomorphism which is
  natural in %
  \(X\) and in \(B\) (between functors %
  \(\Op\ARCAT\times\ICONES\to\ICONES\)).
\end{theorem}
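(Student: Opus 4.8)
The plan is to exhibit the inverse of $\Mcint B_X$ explicitly as precomposition with the Dirac path $\Dirac X$, to verify that both maps are morphisms of $\ICONES$, and then to read off naturality from identities already proved. First I would record that the curried map $\Mcint B_X\colon\Cpath XB\to\Limpl{\Cmeas(X)}B$, sending $\beta$ to the map $\mu\mapsto\int\beta(r)\mu(dr)$, really is a morphism of $\ICONES$ of norm $\le1$. Its linearity and continuity in $\beta$, and the fact that each value $\Mcint B_X(\beta)$ is linear, continuous and measurable in $\mu$, are immediate from Lemma~\ref{lemma:int-mesurable}; the bound $\Norm{\Mcint B_X}\le1$ comes from Lemma~\ref{lemma:integral-bounded}; and integral preservation in both the $\beta$-argument and the $\mu$-argument (recalling that integrals in $\Limpl{\Cmeas(X)}B$ are computed pointwise) reduces to the cone Fubini theorem, Theorem~\ref{th:paths-Fubini}.

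The candidate inverse is $\Theta\colon g\mapsto g\Comp\Dirac X$, sending $g\in\Mcca{\Limpl{\Cmeas(X)}B}$ to the function $r\mapsto g(\Dirac X(r))$. Because $\Dirac X\in\Mcca{\Cpath X{\Cmeas(X)}}$ and morphisms of $\MCONES$ preserve measurable paths, $g\Comp\Dirac X$ is a measurable path, so $\Theta$ is well defined. Two computations close the loop. Integrating a path against a Dirac mass returns its value, $\int\beta(r')\Dirac X(r)(dr')=\beta(r)$ — proved by applying an arbitrary $m\in\Mcms B_\Measterm$ and using $\Mssepr$ — which gives $\Theta\Comp\Mcint B_X=\Id$. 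And every measure is the integral of the Dirac path against itself, $\mu=\int\Dirac X(r)\mu(dr)$ in $\Cmeas(X)$, which is immediate from the definition of integration in $\Cmeas(X)$; combined with integral preservation of $g$ this yields $g(\mu)=g\bigl(\int\Dirac X(r)\mu(dr)\bigr)=\int g(\Dirac X(r))\mu(dr)$, that is $\Mcint B_X\Comp\Theta=\Id$. Hence $\Mcint B_X$ is a bijection with inverse $\Theta$.

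It remains to see that $\Theta$ is itself a morphism of $\ICONES$. It is composition with the fixed path $\Dirac X$, hence linear and $\omega$-continuous, and $\Norm{\Theta(g)}=\sup_r\Norm{g(\Dirac X(r))}\le\Norm g$ since $\Norm{\Dirac X(r)}=1$, giving $\Norm\Theta\le1$. Measurability of $\Theta$ rests on the pointwise identity of tests $(\Mtpath\phi m)(s,\Theta(g))=(\Mtlfun{(\Dirac X\Comp\phi)}m)(s,g)$, where $\Dirac X\Comp\phi\in\Mcca{\Cpath{}{\Cmeas(X)}}$ by Lemma~\ref{lemma:precomp-path}; this shows that for any measurable path $\eta$ of $\Limpl{\Cmeas(X)}B$ the composite $\Theta\Comp\eta$ is tested by $\Mtpath\phi m$ exactly as $\eta$ is tested by $\Mtlfun{(\Dirac X\Comp\phi)}m$, hence is a measurable path of $\Cpath XB$. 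Integral preservation of $\Theta$ reduces, pointwise in $r$, to that of the values taken by $g$, again because integrals in $\Cpath XB$ are pointwise. For naturality I would treat the two variables separately: the square in $B$ for $f\in\ICONES(B,C)$ is exactly the integral-preservation identity $f(\int\beta(r)\mu(dr))=\int f(\beta(r))\mu(dr)$, while the square in $X$ for $\phi\in\ARCAT(Y,X)$ is exactly the change-of-variable formula $\int\beta(r)\Pushf\phi(\nu)(dr)=\int\beta(\phi(s))\nu(ds)$ established earlier.

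The main obstacle is the inverse step, which is where the whole structure is used. The definition of $\Theta$ and the verification $\Mcint B_X\Comp\Theta=\Id$ depend both on the Dirac decomposition $\mu=\int\Dirac X(r)\mu(dr)$ and, crucially, on integral preservation by $g$: without the integral-preservation axiom a linear, continuous and measurable $g$ need not be recoverable from its restriction to Dirac masses, which is precisely the pathology exhibited by the non-integral-preserving map $\ContinuousPart$ of Remark~\ref{rk:continuous-part-measure}, so the bijection would fail. The measurability bookkeeping for $\Theta$, though routine, is the most delicate of the remaining verifications and is best organised through the test-matching above.
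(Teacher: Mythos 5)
Your proposal is correct and follows essentially the same route as the paper: the inverse is precomposition with the Dirac path, the two composite identities are checked via $\mu=\int\Dirac X(r)\mu(dr)$ together with integral preservation and via $\int\beta(r')\Dirac X(r)(dr')=\beta(r)$, and measurability of the inverse is handled by the same test-matching $(\Mtpath{(\Dirac X\Comp\phi)}m)$. The only minor imprecision is that integral preservation of $\Mcint B_X$ in its measure argument is not literally an instance of Theorem~\ref{th:paths-Fubini}; the paper instead reduces it to simple functions via the monotone convergence theorem, but this does not affect the correctness of your argument.
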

This means that \(\Mcint B_X\) is bilinear continuous and
measurable, and preserves integrals on both sides, and that,
considered as a linear morphism acting on \(\Cpath XB\), it is an iso
in \(\ICONES\).
\begin{proof}
  For the first statement we just have to prove preservation of
  integrals in both arguments since bilinearity, continuity and
  measurability have already been proven in
  Lemma~\ref{lemma:int-mesurable}.
  So let %
  \(Y\in\ARCAT\), \(\nu\in\Mcca{\Cmeas(Y)}\), %
  \(\eta\in\Mcca{\Cpath Y{\Cpath XB}}\) and %
  \(\mu\in\Mcca{\Cmeas(X)}\), we have
  \begin{align*}
    \Mcint B_{X}\Big({\int^{\Cpath XB}_{s\in Y}\eta(s)\nu(ds)},{\mu}\Big)
    &=\int^B_{r\in X}\Big(\int^{\Cpath XB}_{s\in Y}\eta(s)\nu(ds)\Big)(r)\mu(dr)\\
    &=\int^B_{r\in X}\Big(\int^B_{s\in Y}\eta(s)(r)\nu(ds)\Big)\mu(dr)\\
    &=\int^B_{s\in Y}\Big(\int^B_{r\in X}\eta(s)(r)\mu(dr)\Big)\nu(ds)
      \text{\quad by Theorem~\ref{th:paths-Fubini}}\\
    &=\int\Mcint B_{X}(\eta(s),\mu)\nu(ds)\,.
  \end{align*}
  Next let \(\beta\in\Mcca{\Cpath XB}\) and %
  \(\kappa\in\Mcca{\Cpath Y{\Cmeas(X)}}\), we have
  \begin{align*}
    \Mcint B_X\Big(\beta,\int^{\Cmeas(X)}_{s\in Y}\kappa(s)\nu(ds)\Big)
    &=\int^B_{r\in X} \beta(r)\Big(\int^{\Cmeas(X)}_{s\in Y}\kappa(s)\nu(ds)\Big)(dr)
  \end{align*}
  where one should remember that the value of the integral %
  \(\int\kappa(s)\nu(ds)\) is the finite measure on \(X\) %
  which maps \(U\in\Sigalg X\) to \(\int\kappa(s,U)\nu(ds)\in\Realp\). %
  We claim that \(x_1=x_2\) where
  \begin{align*}
    x_1=\int^B_{r\in X} \beta(r)\Big(\int^{\Cmeas(X)}_{s\in Y}\kappa(s)\nu(ds)\Big)(dr)
    \Textsep
    x_2=\int^B_{s\in Y}\Big(\int^B_{r\in X}\beta(r)\kappa(s,dr)\Big)\nu(ds)\,.
  \end{align*}
  Upon applying to both members an element of \(\Mcms B_\Measterm\)
  and using \Mssepr{} for \(B\) we can assume that \(B=\Sone\).
  By the monotone convergence theorem and the fact that each measurable
  function is the lub of a monotone sequence of simple measurable
  functions, we can assume that \(\beta\) is simple, and by linearity
  of integrals we can assume that \(\beta=\Charfun U\) for some
  \(U\in\Sigalg X\).
  Then we have %
  \(x_1=\int\kappa(s,U)\nu(ds)=x_2\).

  Now we define a function %
  \(\Mcinti B_X:\Mcca{\Limpl{\Cmeas(X)}{B}}\to\Mcca{\Cpath XB}\) by
  setting %
  \begin{align*}
    \Mcinti B_X(f)=f\Comp\Dirac X\,,
  \end{align*}
  which belongs indeed to \(\Mcca{\Cpath XB}\) because \(\Dirac X\) is
  a bounded measurable path.
  Linearity and continuity of \(\Mcinti B_X\) result from linearity
  and continuity of composition.
  We prove measurability so let \(Y\in\ARCAT\) and %
  \(\eta\in\Mcca{\Cpath Y{\Limpl{\Cmeas(X)}{B}}}\), we contend that %
  \(\Mcinti B_X\Comp\eta\in\Mcca{\Cpath Y{\Cpath XB}}\).
  So let \(Y'\in\ARCAT\) and let \(p\in\Mcms{\Cpath XB}_{Y'}\), we
  must prove that
  \begin{align*}
    \psi=\Absm{(s',s)\in Y'\times Y}{p(s',\Mcinti B_X(\eta(s)))}
  \end{align*}
  is measurable.
  Let \(\phi\in\ARCAT(Y',X)\) and \(m\in\Mcms B_{Y'}\) be such that %
  \(p=\Mtpath\phi m\), we have
  \begin{align*}
    \psi
    &=\Absm{(s',s)\in Y'\times Y}
    {m(s',\Mcinti B_X(\eta(s))(\phi(s')))}\\
    &=\Absm{(s',s)\in Y'\times Y}
    {m(s',\eta(s)(\Dirac X(\phi(s'))))}\\
    &=\Absm{(s',s)\in Y'\times Y}
    {((\Mtpath{(\Dirac X\Comp\phi)}m)(s',\eta(s))}
  \end{align*}
  which is measurable since %
  \(\Mtpath{(\Dirac X\Comp\phi)}m\in\Mcms{\Limpl{\Cmeas(X)}B}_{Y'}\) and %
  \(\eta\in\Mcca{\Cpath Y{\Limpl{\Cmeas(X)}{B}}}\).

  We prove that \(\Mcinti B_X\) preserves integrals so let moreover %
  \(\nu\in\Mcca{\Cmeas(Y)}\), we have
  \begin{align*}
    \Mcinti B_X\Big(\int^{\Limpl{\Cmeas(X)}{B}}_{s\in Y}\eta(s)\nu(ds)\Big)
    &=\Absm{r\in X}{\int^B_{s\in Y}\eta(s)(\Dirac X(r))\nu(ds)}\\
    &=\Absm{r\in X}{\int^B_{s\in Y}\Mcinti B_X(\eta(s))(r)\nu(ds)}\\
    &=\int^{\Cpath XB}_{s\in Y}\Mcinti B_X(\eta(s))\nu(ds)
  \end{align*}
  so that
  \(\Mcinti B_X\in\ICONES(\Limpl{\Cmeas(X)}{B},\Cpath XB)\).

  Let \(f\in\Mcca{\Limpl{\Cmeas(X)}{B}}\), we have
  \begin{align*}
    \Mcint B_{X}(\Mcinti B_X(f))
    &=\Absm{\mu\in\Mcca{\Cmeas(X)}}{\int^B_{r\in X} f(\Dirac X(r))\mu(dr)}\\
    &=\Absm{\mu\in\Mcca{\Cmeas(X)}}
      {f\Big(\int^{\Cmeas(X)}_{r\in X} \Dirac X(r)\mu(dr)\Big)}
      \text{\quad since }f\text{ preserves integrals}\\
    &=\Absm{\mu\in\Mcca{\Cmeas(X)}}{f(\mu)}=f
  \end{align*}
  and let \(\beta\in\Mcca{\Cpath XB}\), we have
  \[
    \Mcinti B_X(\Mcint B_{r\in X}(\beta))
    =\Absm{r\in X}{\Big(\int^{B}_{r'\in X} \beta(r')\Dirac X(r,dr')\Big)}
    =\beta\,.
  \]
  Checking naturality is routine.
\end{proof}

\begin{theorem} %
  \label{th:dirac-dense} %
  Let \(X\in\ARCAT\), \(B\) be an object of \(\ICONES\) and
  \(f_1,f_2\in\ICONES(\Cmeas(X),B)\).
  If, for all \(r\in X\), one has
  \(f_1(\Dirac X(r))=f_2(\Dirac X(r))\) then \(f_1=f_2\).
\end{theorem}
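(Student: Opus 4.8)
The plan is to exploit the elementary but crucial fact that every finite measure is \emph{recovered} as the integral of the Dirac path against itself, and then to let integral preservation do the rest.

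First I recall, as observed immediately above the statement, that $\Dirac X\in\Mcca{\Cpath X{\Cmeas(X)}}$ is a genuine measurable path of arity $X$ of the integrable cone $\Cmeas(X)$. The central observation is then that for every $\mu\in\Mcca{\Cmeas(X)}$ one has
\[
  \mu=\int_X\Dirac X(r)\,\mu(dr),
\]
the integral being computed in $\Cmeas(X)$. To check this I test both sides against an arbitrary $m=\Emeas U\in\Mcms{\Cmeas(X)}_\Measterm$ with $U\in\Sigalg X$: by the definition of the integral in $\Cmeas(X)$ the right-hand side satisfies $m\bigl(\int_X\Dirac X(r)\,\mu(dr)\bigr)=\int_X\Dirac X(r)(U)\,\mu(dr)=\int_X\Charfun U(r)\,\mu(dr)=\mu(U)=m(\mu)$, so the identity follows from the separation axiom \Mssepr{} in $\Cmeas(X)$. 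This is exactly the computation that already appears inside the proof of Theorem~\ref{th:meas-path-equiv}.

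With this representation in hand the argument is immediate. Since $f_1,f_2\in\ICONES(\Cmeas(X),B)$ are measurable, each composite $f_i\Comp\Dirac X$ is a measurable path in $\Mcca{\Cpath XB}$, so that $\int_X f_i(\Dirac X(r))\,\mu(dr)$ is a well-defined element of the integrable cone $B$. Applying integral preservation of $f_i$ to the path $\Dirac X$ and the measure $\mu$ gives
\[
  f_i(\mu)=f_i\Big(\int_X\Dirac X(r)\,\mu(dr)\Big)=\int_X f_i(\Dirac X(r))\,\mu(dr).
\]
The hypothesis $f_1(\Dirac X(r))=f_2(\Dirac X(r))$ for all $r\in X$ makes the two integrands coincide pointwise, whence $f_1(\mu)=f_2(\mu)$; as $\mu\in\Mcca{\Cmeas(X)}$ is arbitrary, $f_1=f_2$.

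There is essentially no serious obstacle: the one point requiring any care is the measure-theoretic identity $\mu=\int_X\Dirac X(r)\,\mu(dr)$, and even that reduces, via the tests $\Emeas U$ and \Mssepr{}, to the trivial computation $\int_X\Charfun U\,d\mu=\mu(U)$. The conceptual content is simply that integral preservation converts the ``density'' of Dirac masses (in the precise sense that each $\mu$ \emph{is} their integral) into the statement that an integrable morphism out of $\Cmeas(X)$ is completely determined by its restriction to the Dirac masses.
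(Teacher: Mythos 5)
Your proof is correct and is essentially the paper's argument: the paper simply cites Theorem~\ref{th:meas-path-equiv}, whose injectivity half rests on exactly the computation you spell out, namely that \(\mu=\int_X\Dirac X(r)\,\mu(dr)\) (verified against the tests \(\Emeas U\) and \Mssepr{}) so that integral preservation forces \(f_i(\mu)=\int_X f_i(\Dirac X(r))\,\mu(dr)\). You have merely unpacked the relevant fragment of that theorem's proof rather than invoking the isomorphism \(\Mcint B_X\) wholesale, which is a perfectly legitimate and self-contained way to present it.
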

\begin{proof}
  This results from Theorem~\ref{th:meas-path-equiv}.
\end{proof}

\begin{remark}
  In other words the Dirac measures \(\Dirac X(r)\) are ``dense'' in
  the integrable cone \(\Cmeas(X)\), in the sense that two \(\ICONES\)
  morphisms which take the same values on Dirac measures are equal.
  This property is one of the main benefits of integrability of cones
  and it does not hold in \(\MCONES\) as shown
  in Remark~\ref{rk:continuous-part-measure}.
\end{remark}

It is easy to check that for each \(X\in\ARCAT\) the functor %
\(\Cpath X\_:\ICONES\to\ICONES\) preserves all limits.
It follows by Theorem~\ref{th:Icones-adjoint-functor} that it has a
left adjoint.
We provide an explicit description of this adjoint.
We define the functor %
\(\Cmeast:\ARCAT\times\ICONES\to\ICONES\) by %
\(\Cmeast(X,B)=\Tens{\Cmeas(X)}{B}\) and similarly for morphisms.

\begin{theorem}
  \label{th:path-equiv-lin}
  For each \(X\in\ARCAT\) we have \(\Cmeast(X,\_)\Adj\Cpath X\_\)
\end{theorem}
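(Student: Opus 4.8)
The plan is to obtain this adjunction by composing two facts already established: the tensor--hom adjunction that defines $\ITens$, and the natural isomorphism $\Cpath X\_\Isom\Limpl{\Cmeas(X)}\_$ provided by Theorem~\ref{th:meas-path-equiv}. The key observation is that $\Cmeast(X,\_)=\Tens{\Cmeas(X)}\_$ is, up to the symmetry of the tensor product, exactly the left adjoint $\Tens\_{\Cmeas(X)}$ of $\Limpl{\Cmeas(X)}\_$; and since $\Limpl{\Cmeas(X)}\_$ is naturally isomorphic to $\Cpath X\_$, that same functor is also a left adjoint of $\Cpath X\_$. So nothing new needs to be constructed: the result is a formal consequence of the closed symmetric monoidal structure of $\ICONES$ together with Theorem~\ref{th:meas-path-equiv}.

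Concretely, I would exhibit a bijection natural in $B$ and $D$
\begin{align*}
  \ICONES(\Cmeast(X,B),D)\to\ICONES(B,\Cpath XD)
\end{align*}
as the composite of three natural bijections. First, precomposition with the symmetry iso $\Sym_{B,\Cmeas(X)}\in\ICONES(\Tens B{\Cmeas(X)},\Tens{\Cmeas(X)}B)$ of Theorem~\ref{th:icones-smcc} gives $\ICONES(\Tens{\Cmeas(X)}B,D)\Isom\ICONES(\Tens B{\Cmeas(X)},D)$. Second, the adjunction bijection $\Phi_{B,\Cmeas(X),D}$ of Theorem~\ref{th:limpl-has-left-adj}, which is natural in $B$ and $D$ by construction, gives $\ICONES(\Tens B{\Cmeas(X)},D)\Isom\ICONES(B,\Limpl{\Cmeas(X)}D)$. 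Third, postcomposition with $\Inv{(\Mcint D_X)}\in\ICONES(\Limpl{\Cmeas(X)}D,\Cpath XD)$ gives $\ICONES(B,\Limpl{\Cmeas(X)}D)\Isom\ICONES(B,\Cpath XD)$, where $\Mcint D_X$ is the iso of Theorem~\ref{th:meas-path-equiv} and its naturality in $D$ (proved there) is what makes this last step natural in $D$. Composing the three yields the desired natural bijection, which is precisely an adjunction $\Cmeast(X,\_)\Adj\Cpath X\_$; since $X$ is fixed, naturality in $X$ is not needed.

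The steps are entirely formal, so there is no genuine computation to grind through; the only thing requiring care is the bookkeeping of naturality in $B$ and $D$ across the three bijections. The symmetry $\Sym$ is a natural transformation, so precomposition with it is natural; $\Phi$ is natural because it arises from the adjunction-with-a-parameter used to build $\ITens$ in Section~\ref{sec:tensor}; and the naturality of $\Mcint D_X$ in $D$ is part of the statement of Theorem~\ref{th:meas-path-equiv}. Alternatively, and even more economically, one may simply invoke uniqueness of adjoints: $\Tens{\Cmeas(X)}\_$ is a left adjoint of $\Limpl{\Cmeas(X)}\_$ by the closed symmetric monoidal structure, and since a left adjoint is determined up to natural isomorphism by its right adjoint, the natural isomorphism $\Limpl{\Cmeas(X)}\_\Isom\Cpath X\_$ transports this to an adjunction $\Tens{\Cmeas(X)}\_\Adj\Cpath X\_$. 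Either way the main (and only) obstacle is verifying these naturality squares, all of which reduce to facts already in hand.
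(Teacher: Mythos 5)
Your proposal is correct and is essentially identical to the paper's own proof, which establishes the adjunction by exactly the same chain of natural bijections: symmetry of \(\ITens\), the SMCC closure adjunction, and the natural isomorphism \(\Limpl{\Cmeas(X)}\_\Isom\Cpath X\_\) of Theorem~\ref{th:meas-path-equiv}. The paper simply writes the composite of bijections without spelling out the naturality bookkeeping you discuss, but the argument is the same.
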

\begin{proof}
  We have the following sequence of natural bijections:
  \begin{align*}
    \ICONES(
    &\Tens{\Cmeas(X)}{B},C)\\
    &\Isom\ICONES(\Tens B{\Cmeas(X)},C)
      \text{\quad by symmetry of }\ITens\text{, Th.~\ref{th:icones-smcc}}\\
    &\Isom\ICONES(B,\Limpl{\Cmeas(X)}C)
      \text{\quad since }\ICONES
      \text{ is an SMCC, Th.~\ref{th:icones-smcc}}\\
    &\Isom\ICONES(B,\Cpath XC)
      \text{\quad by Theorem~\ref{th:meas-path-equiv}.}
      \qedhere
  \end{align*}
\end{proof}

\subsection{The category of substochastic kernels as a full
  subcategory of \(\ICONES\)}

If \(X,Y\in\ARCAT\), a substochastic kernel from \(X\) to \(Y\) is
an element of \(\SKERN(X,Y)=\Cuball{\Mcca{\Cpath X{\Cmeas(Y)}}}\): this
is an equivalent characterization of this standard measure theory and
probability notion.
Then \(\SKERN\) is the category whose objects are those or \(\ARCAT\)
and:
\begin{itemize}
\item the identity at \(X\) is \(\Dirac X\in\SKERN(X,X)\)
\item and given \(\kappa_1\in\SKERN(X_1,X_2)\) and
  \(\kappa_2\in\SKERN(X_2,X_3)\), their composite
  \(\kappa=\kappa_2\Compl\kappa_1\) is given by
  \begin{align*}
    \kappa(r_1)(U_3)=\int_{r_2\in X_2}^\Sone
    \kappa_2(r_2,U_3)\kappa_1(r_1,dr_2)
  \end{align*}
  for \(U_3\in\Sigalg{X_3}\), that is %
  \(\kappa(r_1)=\Mcint{\Cmeas(X_3)}_{X_2}(\kappa_2)(\kappa_1(r_1))\):
  this formula is a continuous generalization of the product of
  substochastic matrices.
\end{itemize}
As is well known the category of measurable spaces and substochastic
kernels can be presented as the Kleisli category of the Giry monad
(or more precisely, of the Panangaden monad since we consider
substochastic kernels instead of stochastic kernels),
but this point of view does not apply to our case because the set of
objects of our small category \(\ARCAT\) has no reason to be
stable under the action of the Panangaden monad.%

If \(\kappa\in\SKERN(X,Y)\), we set %
\(\Sklin(\kappa)
=\Mcint{\Cmeas(Y)}_X(\kappa)\in\ICONES(\Cmeas(X),\Cmeas(Y))\)
defining a functor \(\Sklin:\SKERN\to\ICONES\) which maps
\(X\in\ARCAT\) to \(\Cmeas(X)\).
Remember that we use \(\Funpushf\) for the functor
\(\ARCAT\to\ICONES\) defined on morphisms by
\(\Funpushf(\phi)=\Pushf\phi=\Sklin(\Dirac
Y\Comp\phi)\in\ICONES(\Cmeas(X),\Cmeas(Y))\) for
\(\phi\in\ARCAT(X,Y)\).

\begin{theorem}
  The functor \(\Sklin:\SKERN\to\ICONES\) is full and faithful.
\end{theorem}
\begin{proof}
  By Theorem~\ref{th:meas-path-equiv}.
\end{proof}

\begin{remark}
  So we can consider the category of measurable spaces (at least those
  sorted out by \(\ARCAT\)) and \emph{substochastic kernels} as a full
  subcategory of \(\ICONES\) and again, this is a major consequence of
  the assumption that linear morphisms must preserve integrals.
  This has to be compared with QBSs which form a cartesian closed
  category which contain the category of measurable spaces and
  \emph{measurable functions} (or a full subcategory thereof such as
  our \(\ARCAT\)) as a full subcategory through the Yoneda embedding.
  %
  %

  In Section~\ref{sec:arcat-full-subcat-EM} we will see that, under
  very reasonable assumptions about its objects, \(\ARCAT\) arises as
  a full subcategory of \(\Em\ICONES\), the category of
  \(\oc\)-coalgebras for an exponential comonad \(\oc\) based on
  stable and measurable functions, or on analytical morphisms.
\end{remark}

\begin{theorem}
  \label{th:fmeas-prod-tensor}
  There is an iso in \(\ICONES(\Cmeas(\Measterm),\Sone)\) and, given
  \(X,Y\in\ARCAT\), there is an iso in
  \[
    \ICONES(\Cmeas(X\times Y),\Tens{\Cmeas(X)}{\Cmeas(Y)})
  \]
  which is natural in \(X\) and \(Y\) on the category \(\ARCAT\).
  These isos turn \(\Cmeas\) into a strong monoidal functor
  \((\ARCAT,\mathord\times)\to(\ICONES,\mathord\otimes)\).
\end{theorem}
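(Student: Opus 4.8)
The plan is to obtain the two structural isomorphisms from the adjunction of Theorem~\ref{th:path-equiv-lin} together with the flattening isomorphism of Lemma~\ref{lemma:lemma:int-path-flat}, exploiting uniqueness of adjoints, and then to verify the strong-monoidal coherence diagrams by testing them on Dirac masses. Throughout, naturality of all the building isos is what will give naturality of the comparison maps in $X$ and $Y$.

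First, for the unit, I would observe that $\Cpath\Measterm\_\Isom\Id_{\ICONES}$: a measurable path of arity the one-point space $\Measterm$ is just an element of the cone, the measurability condition over a point being vacuous, and this identification is natural and respects integrals. By Theorem~\ref{th:path-equiv-lin} the left adjoint of $\Cpath\Measterm\_$ is $\Tens{\Cmeas(\Measterm)}{\_}$, whereas the left adjoint of $\Id$ is $\Id$; uniqueness of adjoints then yields a natural iso $\Tens{\Cmeas(\Measterm)}{\_}\Isom\Id_{\ICONES}$, and evaluating at $\Sone$ (through the unitor $\Rightu$) produces the iso $\Seelyz\in\ICONES(\Sone,\Cmeas(\Measterm))$.

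For the tensor comparison, I would compute, for every integrable cone $C$, the left adjoint of $\Cpath X{\Cpath Y\_}$ by applying the two adjunctions in turn:
\[
\ICONES(B,\Cpath X{\Cpath Y C})
\Isom\ICONES(\Tens{\Cmeas(X)}{B},\Cpath Y C)
\Isom\ICONES(\Tens{\Cmeas(Y)}{(\Tens{\Cmeas(X)}{B})},C).
\]
Thus the left adjoint of $\Cpath X{\Cpath Y\_}$ is $B\mapsto\Tens{\Cmeas(Y)}{(\Tens{\Cmeas(X)}{B})}$, which is naturally isomorphic to $\Tens{(\Tens{\Cmeas(X)}{\Cmeas(Y)})}{B}$ by associativity and symmetry (Theorem~\ref{th:icones-smcc}); meanwhile the left adjoint of $\Cpath{X\times Y}{\_}$ is $\Tens{\Cmeas(X\times Y)}{\_}$. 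Since $\Cpath X{\Cpath Y\_}\Isom\Cpath{X\times Y}{\_}$ by Lemma~\ref{lemma:lemma:int-path-flat}, naturally in the cone argument and in $X,Y$ over $\Op\ARCAT$, uniqueness of adjoints gives a natural iso $\Tens{(\Tens{\Cmeas(X)}{\Cmeas(Y)})}{\_}\Isom\Tens{\Cmeas(X\times Y)}{\_}$; evaluating at $\Sone$ produces $\Seelyt_{X,Y}\in\ICONES(\Tens{\Cmeas(X)}{\Cmeas(Y)},\Cmeas(X\times Y))$, natural in $X$ and $Y$. (Equivalently, one may chain the natural bijections $\ICONES(\Tens{\Cmeas(X)}{\Cmeas(Y)},C)\Isom\ICONES(\Sone,\Cpath X{\Cpath Y C})\Isom\ICONES(\Sone,\Cpath{X\times Y}C)\Isom\ICONES(\Cmeas(X\times Y),C)$ and conclude by Lemma~\ref{lemma:functor-yoneda-iso}.)

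The main obstacle is then pinning down the explicit action of $\Seelyt_{X,Y}$ on generators. Unwinding the identity morphism through the flattening iso and the counits of the two adjunctions — exactly as Equations~\eqref{eq:associso-tens-args}–\eqref{eq:symiso-tens-args} are extracted for $\Assoc$, $\Leftu$, $\Rightu$ and $\Sym$ — should show that $\Seelyt_{X,Y}(\Tens\mu\nu)=\Measprod\mu\nu$ (the product measure), and in particular $\Seelyt_{X,Y}(\Tens{\Dirac X(r)}{\Dirac Y(s)})=\Dirac{X\times Y}(r,s)$; getting this formula exactly right, by carefully tracing $\Id$ through the adjunction isomorphisms and Lemma~\ref{lemma:lemma:int-path-flat}, is the delicate point. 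Once it is in hand, each coherence diagram of a strong monoidal functor (the associativity square and the two unit triangles) is an equality of morphisms out of a cone of the form $\Ttreet t(\Cmeas(X),\dots)$, so by Proposition~\ref{prop:fun-ttree-charact} it suffices to test it on pure tensors $\Ttreet t(\mu,\dots)$, and then, since the codomain is some $\Cmeas(Z)$, by Theorem~\ref{th:dirac-dense} on pure tensors of Dirac masses. On these generators both sides reduce, via the formula above together with Equations~\eqref{eq:associso-tens-args}–\eqref{eq:symiso-tens-args}, to the evident identities of Dirac measures induced by the canonical associativity and unit isomorphisms of the cartesian product in $\ARCAT$, which establishes strong monoidality.
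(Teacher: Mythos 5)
Your proposal is correct and follows essentially the same route as the paper: the comparison iso is obtained from the chain of natural bijections built out of the path/linear-hom equivalence (Theorem~\ref{th:meas-path-equiv}, equivalently the adjunction of Theorem~\ref{th:path-equiv-lin}) and the flattening iso of Lemma~\ref{lemma:meas-path-flat}, one concludes by Lemma~\ref{lemma:functor-yoneda-iso}, characterizes the map on pure tensors as \(\Tens\mu\nu\mapsto\Measprod\mu\nu\), and checks coherence via Proposition~\ref{prop:fun-ttree-charact}. The only cosmetic differences are that the paper gets the unit iso directly from the observation that \(\Cmeas(\Measterm)\) is literally \(\Realp=\Sone\) rather than via uniqueness of adjoints, and that it verifies naturality using the identity \(\Pushf{(\phi_1\times\phi_2)}(\mu_1\times\mu_2)=\Pushf{\phi_1}(\mu_1)\times\Pushf{\phi_2}(\mu_2)\) where you invoke Theorem~\ref{th:dirac-dense}.
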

\begin{proof}
  Remember first that \(\Cmeas\) is a functor \(\ARCAT\to\ICONES\)
  which acts on morphisms by push-forward (see
  Lemma~\ref{lemma:pushf-functor-icones}).
  
  The first statement is obvious since a finite measure on a one
  element space is the same thing as an element of \(\Realp\) whose
  norm is its value, and in that case our definition of measurability
  and integrals coincide with the usual ones.
  
  Given an object \(B\) of \(\ICONES\) we have the following sequence
  of natural bijections between functors %
  \(\Op\ARCAT\times\Op\ARCAT\times\ICONES\to\SET\)
  \begin{align*}
    \ICONES(\Cmeas(X\times Y),B)
    &\Isom\Cuball\Mcca{\Limpl{\Cmeas(X\times Y)}{B}}\\
    &\Isom\Cuball\Mcca{\Cpath{X\times Y}{B}}
    \text{\quad by Theorem~\ref{th:meas-path-equiv}}\\
    &\Isom\Cuball\Mcca{\Cpath X{(\Cpath Y B)}}
    \text{\quad by Lemma~\ref{lemma:meas-path-flat}}\\
    &\Isom\Cuball\Mcca{\Limpl{\Cmeas(X)}{\Limplp{\Cmeas(Y)}{B}}}
    \text{\quad by Theorem~\ref{th:meas-path-equiv}}\\
    &\Isom\ICONES(\Cmeas(X),{\Limplp{\Cmeas(Y)}{B}})\\
    &\Isom\ICONES(\Tens{\Cmeas(X)}{\Cmeas(Y)},B)
  \end{align*}
  because \(\ICONES\) is an SMCC and so by
  Lemma~\ref{lemma:functor-yoneda-iso} we have a natural transformation
  \begin{align*}
    \psi_{X,Y}\in\ICONES(\Tens{\Cmeas(X)}{\Cmeas(Y)},\Cmeas(X\times Y))
  \end{align*}
  between functors \(\ARCAT\times\ARCAT\to\ICONES\) and this natural
  transformation is completely characterized by
  \(\psi_{X,Y}(\Tens\mu\nu)=\mu\times\nu\) by the definition of the
  iso \(\Flpath\) used in Lemma~\ref{lemma:meas-path-flat}.
  Using this characterization as well as
  Proposition~\ref{prop:fun-ttree-charact}, it is easy to prove that
  this natural isomorphism (together with its \(0\)-ary version) turns
  \(\Cmeas\) into a monoidal functor from the cartesian category
  \(\ARCAT\) to the monoidal category \(\ICONES\).
  The proof uses also the fact that
  \(\Pushf{(\phi_1\times\phi_2)}{(\mu_1\times\mu_2)}
  =\Pushf{(\phi_1)}{(\mu_1)}\times\Pushf{(\psi_2)}({\mu_2})\) for
  \(\mu_i\in\Mcca{\Cmeas{X_i}}\) and \(\phi_i\in\ARCAT(X_i,Y_i)\) for
  \(i=1,2\).
\end{proof}

\begin{remark}
  This is yet another highly desirable property of the tensor product
  which results from the preservation of integrals by linear
  morphisms.
  It means that an element \(\pi\) of \(\Tens{\Cmeas(X)}{\Cmeas(Y)}\)
  whose norm is \(1\) can be considered as the joint probability
  distribution of two (not necessarily independent) random variables
  valued in \(X\) and \(Y\) respectively.
  The case where \(\pi=\Tens \mu\nu\) corresponds to the situation
  where the random variables are independent, of associated
  distributions \(\mu\) and \(\nu\).
\end{remark}

\section{Stable and measurable functions} %
\label{sec:stbale-func}

We start studying the non-linear maps between integrable cones and we
will consider actually two kinds of non-linear morphisms:
\begin{itemize}
\item the stable and measurable morphisms in the present section
\item and the analytic ones in Section~\ref{sec:analytic-functions-exp}.
\end{itemize}
The first ones were introduced in~\cite{EhrhardPaganiTasson18} in a
weaker setting (no general notion of integration was considered in
that paper).
The second ones are naturally derived from the monoidal structure of
\(\ICONES\) and from the \(\omega\)-completeness of cones.
The two notions are deeply connected: analytic morphisms are in
particular stable, and some properties proven in
Section~\ref{sec:stbale-func} will be useful in
Section~\ref{sec:analytic-functions-exp}.
We will also see in Remark~\ref{rem:discrete-meas-comp} that there are
stable functions which are not analytic, the fundamental reason for
that being that the definition of stability does not refer to
integrals.
In the \emph{discrete probability} setting of probabilistic coherence
spaces (see Section~\ref{sec:pcs-integrable}), it has been proved that
the two notions are equivalent, see~\cite{Crubille18}.


Stable morphisms satisfy a strong form of monotonicity, which, in
ordinary real analysis, can be expressed in terms of derivatives:
\(\forall n\in\Nat\ f^{(n)}(x)\geq0\) (absolute monotonicity).
Because we are working in the setting of cones, we use iterated
differences instead of derivatives, following an idea first introduced
by Bernstein in~1914.

The definition of these iterated differences uses a notion of
\emph{local cone} introduced in~\cite{EhrhardPaganiTasson18} and that
we recall now.
This construction could be rephrased in terms of summability
structure~\cite{Ehrhard23}, but this is not necessary for our purpose
here.

\subsection{The local cone} %
\label{sec:local-cone}
One major feature of stable functions%
\footnote{This will be also the case of analytic morphisms in
  Section~\ref{sec:analytic-functions-exp}.} %
from an integrable cone \(B\) to an integrable cone \(C\) is that,
contrarily to linear morphisms, they will be defined, in general, only
on the closed ``unit ball'' \(\Cuball{\Mcca B}\) of \(B\), see
Remark~\ref{rk:analytic-dedf-on-balls}.
A typical example is the already mentioned function
\(f:\Cuball{\Mcca{\Sone}}=\Intercc 01\to\Mcca\Sone\) given by
\(f(x)=1-\sqrt{1-x}\), see Remark~\ref{rk:analytic-dedf-on-balls}.
To deal with such a function \(\Cuball{\Mcca B}\to\Mcca C\) and its
strong monotonicity properties at a given \(x\in\Cuball{\Mcca B}\), it
will be important to be able to consider the set \(U\) of all
\(u\in\Mcca B\) such that \(x+u\in\Cuball{\Mcca B}\), or rather of all
elements of \(\Cuball{\Mcca B}\) of shape \(\lambda u\) for such an
\(u\) and for \(\lambda\in\Realp\): we will see that \(U\) can itself
be considered as a cone, the \emph{local cone} of \(B\) at \(x\).

Let \(B\) be an integrable cone and \(x\in\Cuball{\Mcca B}\).
Let
\begin{align*}
  P=\{u\in\Mcca B\St\exists\epsilon>0\ x+\epsilon u\in\Cuball{\Mcca B}\}\,.
\end{align*}
Then \(P\) is a precone whose \(0\) element is the \(0\) of
\(\Mcca B\).
Indeed if \(u_1,u_2\in P\) there is \(\epsilon>0\) such that
\((x+\epsilon u_i\in\Cuball{\Mcca B})_{i=1,2}\) and hence
\begin{align*}
  x+\frac\epsilon 2(u_1+u_2)
  =\frac 12(x+\epsilon u_1)+\frac 12(x+\epsilon u_2)\in\Cuball{\Mcca B}
\end{align*}
so that \(u_1+u_2\in P\).
The fact that
\(u\in P\Implies\forall\lambda\in\Realp\ \lambda u\in P\) is
clear. The condition \Pcsimplr{} and \Pcposr{} result easily from the
fact that they hold in \(\Mcca B\).

We can equip \(P\) with a map (sometimes called a gauge) %
\(N:P\to\Realp\) defined by
\begin{align*}
  N(u)=\Inv{(\sup\{\lambda>0\St x+\lambda u\in\Cuball{\Mcca B}\})}
  =\inf\{\Inv\lambda\St \lambda>0\text{ and }x+\lambda u\in\Cuball{\Mcca B}\}
\end{align*}
\begin{lemma}
  \label{lemma:local-cone-is-a-subcone}
  The function \(N\) is a norm on \(P\) and, equipped with this norm,
  \(P\) is a cone whose \(0\) element and algebraic operations are
  those of \(\Mcca B\).
\end{lemma}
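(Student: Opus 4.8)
The plan is to verify the four norm axioms \Cnormhr{}, \Cnormzr{}, \Cnormtr{} and \Cnormpr{} together with the completeness axiom \Cnormcr{} for $N$, using throughout two elementary facts about $\Cuball{\Mcca B}$: it is convex (by \Cnormtr{} and \Cnormhr{} in $B$), and it contains $x$ (by hypothesis). From convexity it follows that for each $u\in P$ the set $\Lambda_u=\{\lambda>0\St x+\lambda u\in\Cuball{\Mcca B}\}$ is downward closed: if $x+\lambda u\in\Cuball{\Mcca B}$ and $0<\lambda'<\lambda$, then $x+\lambda'u=(1-\lambda'/\lambda)x+(\lambda'/\lambda)(x+\lambda u)$ is a convex combination of two points of the ball. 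Hence $\Lambda_u$ is an interval from $0$, with supremum $S_u\in\Interoc 0\infty$ (allowing $S_u=+\infty$), and $N(u)=\Inv{S_u}$, with the convention $\Inv\infty=0$.

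Homogeneity \Cnormhr{} and $N(0)=0$ are immediate: scaling $u$ by $\mu>0$ rescales $\Lambda_u$ by $\Inv\mu$, and $\Lambda_0=\Interoo 0\infty$. For \Cnormpr{} I would show $N(u_1)\leq N(u_1+u_2)$: if $x+\lambda(u_1+u_2)\in\Cuball{\Mcca B}$, then since $x+\lambda u_1\leq x+\lambda(u_1+u_2)$ in $\Mcca B$, monotonicity of the norm (\Cnormpr{} in $B$) gives $x+\lambda u_1\in\Cuball{\Mcca B}$; thus $\Lambda_{u_1+u_2}\subseteq\Lambda_{u_1}$ and $S_{u_1+u_2}\leq S_{u_1}$. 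The same monotonicity yields $\Norm{\lambda u}\leq\Norm{x+\lambda u}\leq1$ for $\lambda\in\Lambda_u$, from which \Cnormzr{} follows: $N(u)=0$ forces $S_u=\infty$, so $\Norm u\leq\Inv\lambda$ for arbitrarily large $\lambda$, giving $\Norm u=0$ and $u=0$ by \Cnormzr{} in $B$.

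The main work is the triangle inequality \Cnormtr{}, and I expect this subadditivity of the gauge to be the chief obstacle, since it is the only place where convexity is used in an essential, non-monotone way. Given $\lambda_i\in\Lambda_{u_i}$ for $i=1,2$, setting $t=\lambda_2/(\lambda_1+\lambda_2)$ the convex combination $t(x+\lambda_1u_1)+(1-t)(x+\lambda_2u_2)$ equals $x+\frac{\lambda_1\lambda_2}{\lambda_1+\lambda_2}(u_1+u_2)$ and lies in $\Cuball{\Mcca B}$ by convexity, so $S_{u_1+u_2}\geq\frac{\lambda_1\lambda_2}{\lambda_1+\lambda_2}$. Letting $\lambda_i$ increase to $S_{u_i}$ and taking reciprocals gives $N(u_1+u_2)\leq\Inv{S_{u_1}}+\Inv{S_{u_2}}=N(u_1)+N(u_2)$, the cases $S_{u_i}=\infty$ being covered by the same limit.

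For completeness \Cnormcr{} I would first observe that the cone order of $P$ is the restriction of that of $\Mcca B$: if $u_1\leq_{\Mcca B}u_2$ with $u_1,u_2\in P$, then $u_2-u_1\leq u_2$, so $x+\epsilon(u_2-u_1)\leq x+\epsilon u_2\in\Cuball{\Mcca B}$ for small $\epsilon$, whence $u_2-u_1\in P$. Now let $(u_n)$ be increasing in $P$ with $N(u_n)\leq1$, i.e. $S_{u_n}\geq1$, so $x+\lambda u_n\in\Cuball{\Mcca B}$ for every $\lambda\in\Interoo 01$; in particular $\Norm{u_n}\leq\Inv\lambda$ for all such $\lambda$, hence $\Norm{u_n}\leq1$. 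By \Cnormcr{} in $B$ the lub $u=\sup_n u_n$ exists in $\Mcca B$. For each fixed $\lambda\in\Interoo 01$ the sequence $(x+\lambda u_n)$ is increasing with supremum $x+\lambda u$ by continuity of addition and scalar multiplication (Lemma~\ref{lemma:cones-cone-add-scal}), and each term has norm $\leq1$, so $\Norm{x+\lambda u}\leq1$ by \Cnormcr{} in $B$; thus $\lambda\in\Lambda_u$ for all $\lambda<1$, giving $u\in P$ and $N(u)\leq1$. Since the order on $P$ is inherited from $\Mcca B$, this $u$ is also the lub in $P$, completing the verification that $(P,N)$ is a cone whose $0$ and algebraic operations are those of $\Mcca B$.
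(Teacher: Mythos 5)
Your proof is correct and follows essentially the same route as the paper's: the triangle inequality via convexity of $\Cuball{\Mcca B}$ using the combination $x+\frac{\lambda_1\lambda_2}{\lambda_1+\lambda_2}(u_1+u_2)$, the observation that the cone order on $P$ restricts that of $\Mcca B$, and completeness reduced to \Cnormcr{} in $B$. Your treatment is if anything slightly more careful than the paper's at the boundary of the ball (working with $\lambda<1$ in the completeness argument rather than asserting $x+u_n\in\Cuball{\Mcca B}$ outright), but the substance is identical.
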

\begin{proof}
  Assume that \(N(u)=0\), this means that %
  \(\forall\lambda\in\Realp\ \Norm{x+\lambda u}\leq 1\) and hence %
  \(\forall\lambda\in\Realp\ \lambda\Norm u\leq1\) so that \(u=0\). Let %
  \(u_1,u_2\in P\) and let \(\epsilon>0\).
  We can find \(\lambda_1,\lambda_2>0\) such that
  \(x+\lambda_i u_i\in\Cuball{\Mcca B}\) and %
  \(\Inv{\lambda_i}\leq N(u_i)+\epsilon/2\) for \(i=1,2\).
  We have %
  \begin{align*}
    \frac{\lambda_1}{\lambda_1+\lambda_2}(x+\lambda_2u_2)
    +\frac{\lambda_2}{\lambda_1+\lambda_2}(x+\lambda_1u_1)
    \in\Cuball{\Mcca B}
  \end{align*}
  so that
  \begin{align*}
    x+\frac{\lambda_1\lambda_2}{\lambda_1+\lambda_2}(u_1+u_2)
    \in\Cuball{\Mcca B}
  \end{align*}
  so that %
  \(N(u_1+u_2)
  \leq\Inv{(\frac{\lambda_1\lambda_2}{\lambda_1+\lambda_2})}
  =\Inv{\lambda_1}+\Inv{\lambda_2}\leq N(u_1)+N(u_2)+\epsilon\).
  Since this holds for all \(\epsilon>0\) we have
  \(N(u_1+u_2)\leq N(u_1)+N(u_2)\).
  The property \Cnormpr{} is easy, let us prove \Cnormcr.
  Observe first that, for \(u,v\in P\), we have %
  \(u\leq_P v\) iff \(u\leq_B v\).
  Let \((u_n)_{n\in\Nat}\) be an increasing sequence in \(P\) such that %
  \(\forall n\in\Nat\ N(u_n)\leq 1\).
  Then we have %
  \(\forall n\in\Nat\ x+u_n\in\Cuball{\Mcca B}\) and hence the
  sequence %
  \((x+u_n)_{n\in\Nat}\) is an increasing sequence in
  \(\Cuball{\Mcca B}\) and so it has a lub in \(\Cuball{\Mcca B}\) %
  which coincides with \(x+u\) where \(u\) is the lub of
  \((u_n)_{n\in\Nat}\) in \(\Mcca B\).
  Thus we have \(u\in P\) and \(N(u)\leq 1\).
  Last observe that \(u\) is \emph{a fortiori} the lub of the
  \(u_n\)'s in \(P\).
\end{proof}
We use now the standard notation \(\Norm\_\) for that norm. Notice that
\begin{align*}
  \Cuball P=\{u\in\Mcca B\St x+u\in\Cuball{\Mcca B}\}
\end{align*}
and also that \(\Norm u_B\leq\Norm u\), for all \(u\in P\).

For each \(X\in\ARCAT\) we define \(\cM_X\) as the set of all
test functions %
\(\Absm{r\in X}{\Absm{u\in P}{m(r,u)}}\) for the elements
\(m\) of \(\Mcms B_X\) (such an element of \(\cM_X\) will still be
denoted by \(m\) even if two different elements of \(\Mcms B_X\)
possibly induce the same test).
The fact that %
\(\forall r\in X,\,u\in\Cuball{P}\ m(r,u)\leq 1\) whenever
\(m\in\cM_X\) results from \(\Cuball P\subseteq\Cuball{\Mcca B}\).

Then it is straightforward to check that
\((P,(\cM_X)_{X\in\ARCAT})\) is a measurable cone, that we denote as
\(\Cloc Bx\) and call
\emph{the local cone of \(B\) at \(x\)} and it is also clear that this
measurable cone is integrable, the integral of a path in \(\Cloc Bx\)
being defined exactly as in \(B\).

It is important to keep in mind the meaning of the norm in this local
cone, which is most usefully described as follows.
\begin{lemma} %
  \label{lemma:cloc-norm-charact}
  Let \(x\in\Cuball{\Mcca B}\) and
  \(u\in\Mcca{\Cloc Bx}\setminus\Eset 0\). Then we have
  \(x+\Inv{\Norm u}_{\Cloc Bx}u\in\Cuball{\Mcca B}\) and %
  \(x+\lambda u\notin\Cuball{\Mcca B}\) for all %
  \(\lambda>\Inv{\Norm u}_{\Mcca{\Cloc Bx}}\).
\end{lemma}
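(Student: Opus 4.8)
The plan is to unfold the gauge norm of the local cone and to reduce both assertions to an elementary fact about a real supremum. Set
\begin{align*}
  S=\sup\Eset{\lambda>0\St x+\lambda u\in\Cuball{\Mcca B}}\,,
\end{align*}
which is exactly the quantity inverted in the definition of the norm of $\Cloc Bx$; thus $\Norm u_{\Cloc Bx}=\Inv S$ and $\Inv{\Norm u}_{\Cloc Bx}=S$. With this notation the first claim reads $x+Su\in\Cuball{\Mcca B}$ (the supremum is attained) and the second reads $\lambda>S\Implies x+\lambda u\notin\Cuball{\Mcca B}$.

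First I would check that $S$ is a finite positive real, so that $\Inv S$ makes sense and equals $\Norm u_{\Cloc Bx}$. The defining set is nonempty precisely because $u\in\Mcca{\Cloc Bx}$, so $S>0$. It is finite because $u\neq 0$ gives $\Norm u_B>0$ by \Cnormzr{} in $B$: from $\lambda u\leq x+\lambda u$ and \Cnormpr{} we obtain $\lambda\Norm u_B=\Norm{\lambda u}\leq\Norm{x+\lambda u}\leq 1$ whenever $x+\lambda u\in\Cuball{\Mcca B}$, hence $\lambda\leq\Inv{\Norm u_B}$ and $S\leq\Inv{\Norm u_B}<\infty$. The second claim is then immediate: if $\lambda>S=\sup\Eset{\lambda'>0\St x+\lambda'u\in\Cuball{\Mcca B}}$ then $\lambda$ does not belong to this set, which is exactly $x+\lambda u\notin\Cuball{\Mcca B}$.

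The only substantive step is the attainment of the supremum. I would pick an increasing sequence $(\lambda_n)_{n\in\Nat}$ in $\Eset{\lambda>0\St x+\lambda u\in\Cuball{\Mcca B}}$ with $\lambda_n\to S$. Since $0\leq u$ in $\Mcca B$, the sequence $(x+\lambda_n u)_{n\in\Nat}$ is increasing and lies in the $\omega$-closed set $\Cuball{\Mcca B}$, so its lub belongs to $\Cuball{\Mcca B}$ by \Cnormcr{}. By $\omega$-continuity of scalar multiplication and addition (Lemma~\ref{lemma:cones-cone-add-scal}) this lub is $\sup_{n\in\Nat}(x+\lambda_n u)=x+(\sup_{n\in\Nat}\lambda_n)u=x+Su$, which therefore lies in $\Cuball{\Mcca B}$, establishing the first claim. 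The main (and only mildly delicate) obstacle is this commutation of the supremum with the affine map $\Absm{t\in\Realp}{(x+tu)}$; it is exactly what the $\omega$-continuity of the cone operations provides, and the bound $\lambda_n\leq S\leq\Inv{\Norm u_B}$ keeps the whole sequence inside the unit ball so that the completeness axiom \Cnormcr{} applies.
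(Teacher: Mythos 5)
Your proof is correct and is exactly the paper's argument spelled out in full: the paper's one-line proof invokes precisely the definition of the gauge norm of \(\Cloc Bx\) (your supremum \(S\)) and the \(\omega\)-closedness of \(\Cuball{\Mcca B}\) (your attainment step via an increasing sequence \(\lambda_n\to S\) and Lemma~\ref{lemma:cones-cone-add-scal}). The preliminary checks that \(S\) is positive and finite, and the observation that the second claim is immediate from the definition of a supremum, are sound.
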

\begin{proof}
  By definition of the norm in \(\Cloc Bx\) and by the \(\omega\)-closedness
  of \(\Cuball{\Mcca B}\).
\end{proof}

\begin{example}
  Let \(B=\Sone\) so that \(\Mcca\Sone=\Realp\) and
  \(\Cuball{\Mcca\Sone}=\Intercc01\). If \(x\in\Intercc 01\) we have
  two cases: if \(x=1\) then \(\Cloc Bx=\Zero\) and if \(x<1\) we
  have %
  \(\Mcca{\Cloc Bx}=\Realp\). If \(u\in\Mcca B_x\setminus\Eset 0\)
  (which implies \(x<1\)) then the largest \(\lambda>0\) such that
  \(x+\lambda u\in\Intercc01\) is %
  \(\frac{1-x}u\) and hence %
  \(\Norm u_{\Cloc Bx}=\frac u{1-x}=\frac1{1-x}\Norm u_B\) so the
  local cone \(\Cloc Bx\) can be seen as an homothetic image of \(B\)
  by a factor \(\frac 1{1-x}\) which goes to \(\infty\) when \(x\)
  gets closer to the border \(1\) of the unit ball
  \(\Intercc01\). This very simple example gives an intuition of what
  happens in general, with the difference that \(B_x\) has no reason
  to be always homothetic to \(B\), think for instance of the case
  where \(B=\With\Sone\Sone\) and \(x=(1,0)\): then \(\Cloc Bx\) is
  isomorphic to \(\Sone\).
\end{example}

\subsection{The integrable cone of stable and measurable functions }%
\label{sec:icone-stable-fun}
Given \(n\in\Nat\), we use \(\Npset n\) (resp.~\(\Ppset n\)) for the
set of all subsets \(I\) of \(\Eset{1,\dots,n}\) such that
\(n-\Card I\) is odd (resp.~even).
In particular \(\{1,\dots,n\}\in\Ppset n\).

\begin{lemma} %
  \label{lemma:inset-corresp}
  Let \(n\in\Nat\), \(j\in\Eset{1,\dots,n+1}\) %
  and \(\epsilon\in\Eset{-,+}\). Given %
  \(I\in\Epset\epsilon n\), the set
  \begin{align*}
    \Inset j(I)=\Eset{i\in I\St i<j}\cup\{j\}
    \cup\Eset{i+1\St i\in I\text{ and }i\geq j}
  \end{align*}
  belongs to \(\Epset\epsilon{n+1}\) and %
  \(\Inset j\) defines a bijection between \(\Epset\epsilon n\) and
  the set of all \(J\subseteq\Eset{1,\dots,n+1}\) such that
  \(J\in\Epset\epsilon{n+1}\) and \(j\in J\).
\end{lemma}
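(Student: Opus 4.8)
The plan is to factor $\Inset j$ through the standard order-preserving ``gap insertion'' bijection and then read off both claims from a single cardinality identity, so that the parity condition comes for free.

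First I would introduce the shift map $s_j\colon\Eset{1,\dots,n}\to\Eset{1,\dots,n+1}\setminus\Eset j$ defined by $s_j(i)=i$ for $i<j$ and $s_j(i)=i+1$ for $i\geq j$. This is the order-preserving bijection that ``skips'' the value $j$, with inverse sending $i'<j$ to $i'$ and $i'>j$ to $i'-1$. With this notation the definition of $\Inset j$ reads simply $\Inset j(I)=s_j(I)\cup\Eset j$: the first piece $\Eset{i\in I\St i<j}$ of the displayed union equals $\Eset{s_j(i)\St i\in I,\ i<j}$, the third piece $\Eset{i+1\St i\in I,\ i\geq j}$ equals $\Eset{s_j(i)\St i\in I,\ i\geq j}$, and the middle piece is $\Eset j$. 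Since the image of $s_j$ is $\Eset{1,\dots,n+1}\setminus\Eset j$, we have $j\notin s_j(I)$, so this union is genuinely disjoint from $\Eset j$.

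Next I would observe that $s_j$, being a bijection, induces a cardinality-preserving bijection $\Part{\Eset{1,\dots,n}}\to\Part{\Eset{1,\dots,n+1}\setminus\Eset j}$ on power sets, and that the operation $A\mapsto A\cup\Eset j$ is a bijection from subsets of $\Eset{1,\dots,n+1}\setminus\Eset j$ onto the subsets $J\subseteq\Eset{1,\dots,n+1}$ with $j\in J$ (its inverse being $J\mapsto J\setminus\Eset j$), which increases cardinality by exactly one because $j\notin s_j(I)$. Composing these two bijections shows that $\Inset j$ is a bijection from $\Part{\Eset{1,\dots,n}}$ onto $\Eset{J\subseteq\Eset{1,\dots,n+1}\St j\in J}$, and that $\Card{\Inset j(I)}=\Card I+1$ for every $I$.

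Finally I would deduce both assertions from this identity. Since $\Card{\Inset j(I)}=\Card I+1$, we get $(n+1)-\Card{\Inset j(I)}=n-\Card I$, so $n-\Card I$ and $(n+1)-\Card{\Inset j(I)}$ have the same parity; hence $I\in\Epset\epsilon n$ if and only if $\Inset j(I)\in\Epset\epsilon{n+1}$. In particular $\Inset j$ carries $\Epset\epsilon n$ into the set of all $J$ with $J\in\Epset\epsilon{n+1}$ and $j\in J$, and this equivalence combined with the bijectivity of $\Inset j$ on all subsets shows that the restriction is itself a bijection onto that target set. There is no serious obstacle here: the only point requiring care is verifying that $\Inset j(I)=s_j(I)\cup\Eset j$ matches the three-part union in the statement and that $j\notin s_j(I)$, which is exactly what makes the cardinality increase by precisely one.
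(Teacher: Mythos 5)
Your proof is correct. The paper offers no argument for this lemma (it is dismissed with ``This is obvious''), and your factorization of \(\Inset j\) as the order-preserving shift \(s_j\) followed by adjoining \(\Eset j\), together with the cardinality identity \(\Card{\Inset j(I)}=\Card I+1\), is exactly the natural way to make the claim precise; both the parity preservation and the bijectivity onto \(\Eset{J\in\Epset\epsilon{n+1}\St j\in J}\) follow as you describe.
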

This is obvious.

\begin{definition}
  \label{def:totally-monotonic}
  Let \(P\) and \(Q\) be cones, a function \(f:\Cuball P\to Q\) is %
  \emph{totally monotonic} if for each \(n\in\Nat\) and each %
  \(x,\List u1n\in P\) such that %
  \(x+\sum_{i=1}^nu_i\in\Cuball P\) one has
  \begin{align}
    \label{eq:totally-monotone-def}
    \sum_{I\in\Npset n}f(x+\sum_{i\in I}u_i)
    \leq\sum_{I\in\Ppset n}f(x+\sum_{i\in I}u_i)\,.
  \end{align}
\end{definition}

For \(n=1\) this condition means that \(f\) is increasing.
For \(n=2\) we have \(\Npset2=\{\{1\},\{2\}\}\) and
\(\Ppset2=\{\{1,2\},\emptyset\}\), so
Condition~\Eqref{eq:totally-monotone-def} means
\begin{align*}
  f(x+u_1)+f(x+u_2)\leq f(x+u_1+u_2)+f(x)
\end{align*}
that is, assuming that \(f\) is increasing,
\begin{align*}
  f(x+u_1)-f(x)\leq f(x+u_1+u_2)-f(x+u_2)
\end{align*}
in other words, the map \(x\mapsto f(x+u_1)-f(x)\) is increasing (where
it is defined).
For \(n=3\) we have \(\Npset3=\{\{1,2\},\{2,3\},\{1,3\},\emptyset\}\)
and \(\Ppset3=\{\{1,2,3\},\{1\},\{2\},\{3\}\}\), so
Condition~\Eqref{eq:totally-monotone-def} means:
\begin{multline*}
  f(x+u_1+u_2)+f(x+u_2+u_3)+f(x+u_1+u_3)+f(x)\\
  \leq f(x+u_1+u_2+u_3)+f(x+u_1)+f(x+u_2)+f(x+u_3)\,.
\end{multline*}

\begin{remark}
  \label{rk:total-monotonicity}
  This kind of definition appears in many places in the literature in
  real analysis, differential equations, Laplace transforms \Etc{}
  The corresponding conditions, first considered by Hausdorff, are
  then usually expressed in terms of derivatives: for instance a
  function \(f\) from some open interval \(I\) of \(\Real\) to
  \(\Real\) is \emph{absolutely monotonic} (resp.~\emph{completely
    monotonic}) if it is \(C^\infty\) and satisfies
  \(f^{(n)}(x)\geq 0\) (resp.~\((-1)^nf^{(n)}(x)\geq 0\)) for all
  \(x\in I\) and \(n\in\Nat\).
  Bernstein introduces in~\cite{Bernstein1914}, in the case of real
  functions of one real parameter, iterated differences allowing to
  characterize absolutely monotonic functions ---~which in turn can be
  shown to be analytic~--- by a condition which is equivalent
  to~\Eqref{eq:totally-monotone-def}.
  We use the expression ``totally monotonic'' for this extension to
  cones of Bernstein's definition to avoid confusion with ``completely
  monotonic'' and ``absolutely monotonic''.

  This definition arose in denotational semantics during the work of
  the first author reported in~\cite{EhrhardPaganiTasson18}, when the
  authors of that paper tried to build a \emph{cartesian closed
    category} whose objects are Selinger's cones, ordered by the
  cone order (\(x\leq y\) if there is a \(z\) such that
  \(x+z=y\)).
  A careful analysis of these constraints (which are actually quite
  strong) leads unavoidably to the conclusion that the morphisms of
  the sought CCC should be totally monotonic.
  We will see in Section~\ref{sec:stable-CCC} how total monotonicity
  leads indeed to cartesian closedness.
\end{remark}

\begin{definition}
  Let \(P\) and \(Q\) be cones.
  A function \(f:\Cuball P\to Q\) is \emph{stable} if %
  \(f\) is totally monotonic, bounded and \(\omega\)-continuous (see
  Definition~\ref{def:monotone-scott}).
\end{definition}

\begin{remark}
  This terminology is motivated by the fact that stable functions (in
  the sense of Berry~\cite{Berry78}) on Girard's coherence
  spaces~\cite{Girard86} can be characterized by a property completely
  similar to total monotonicity.
  We have the intuition that our stable morphisms on cones are a
  quantitative analog of the notion of stable function introduced in
  ``qualitative'' denotational semantics.
\end{remark}

\begin{example}
  \label{ex:tot-mon-on-bool}
  Let \(B=\Cmeas{(\{0,1\})}\) so that %
  \(\Mcca B=\Realp^2\) with the norm given by \(\Norm x=x_0+x_1\).
  Then for each \(\Vect a\in\Realp^{\Nat\times\Nat}\) which satisfies
  \(
    \forall t\in\Intercc 01\quad \sum_{p,q\in\Nat}a_{p,q}t^p(1-t)^q\leq 1
  \),
  the function \(f:\Cuball{\Mcca B}\to\Realp\) defined by %
  \(f(x)=\sum_{p,q\in\Nat}a_{p,q}x_0^px_1^q\) is totally monotonic.
  An example of such a function is %
  \(f(x)=\sum_{n=1}^\infty 2^nx_0^nx_1^n\) since for
  \(x\in\Cuball{\Mcca B}\) one has \(x_0x_1\leq\frac 14\).
  One might think that \(f\) is convex since all its iterated partial
  derivatives are \(\geq0\) everywhere, but this is not true since for
  instance \(f(0,1)=f(1,0)=0\) whereas \(f(\frac 12,\frac 12)=1\).
  In general, total monotonicity of functions defined by power series
  with one or more parameters as in this example correspond to the
  fact that all the partial derivatives are everywhere $\geq 0$, which
  is related simply to convexity only in the one parameter case.
\end{example}

\begin{definition}
  Let \(C,D\) be measurable cones. A stable function %
  \(f:\Cuball{\Mcca C}\to\Mcca D\) is measurable if for each
  \(X\in\ARCAT\) and %
  \(\gamma\in\Cuball{\Mcca{\Cpath XC}}\) one has %
  \(f\Comp\gamma\in\Mcca{\Cpath XD}\).
\end{definition}

\begin{lemma}
  The set of stable and measurable functions \(C\to D\), equipped with
  algebraic operations defined pointwise, is a precone.
\end{lemma}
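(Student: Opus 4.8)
The plan is to reduce the whole statement to pointwise reasoning in the target precone $\Mcca D$ and in the path cone $\Mcca{\Cpath XD}$. The set of \emph{all} functions $\Cuball{\Mcca C}\to\Mcca D$, equipped with the pointwise operations, is a $\Realp$-semimodule because $\Mcca D$ is one, and it satisfies the precone axioms \Pcsimplr{} and \Pcposr{} pointwise: if $f_1+f=f_2+f$ then $f_1(x)+f(x)=f_2(x)+f(x)$ for every $x$, so $f_1(x)=f_2(x)$ by \Pcsimplr{} in $\Mcca D$, and $f_1+f_2=0$ forces $f_1(x)=0$ for every $x$ by \Pcposr{} in $\Mcca D$. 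Hence the only real content of the lemma is to check that the stable and measurable functions form a sub-semimodule, i.e.~that this class contains the constant function $0$ and is closed under pointwise addition and under scalar multiplication by $\lambda\in\Realp$; the two precone axioms are then inherited verbatim.

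I would then verify this closure property by property. Total monotonicity is the easiest: the defining inequality~\Eqref{eq:totally-monotone-def} is an inequality between two finite sums of values of $f$, hence it is preserved by adding the inequalities for $f_1$ and $f_2$, and by multiplying the inequality for $f$ by $\lambda\in\Realp$ (which preserves the order), while it holds trivially (as $0\leq 0$) for the constant function $0$. Boundedness of $f_1+f_2$ and of $\lambda f$ follows immediately from \Cnormtr{} and \Cnormhr{}: if $\Norm{f_i(x)}\leq\lambda_i$ on $\Cuball{\Mcca C}$, then $\Norm{(f_1+f_2)(x)}\leq\lambda_1+\lambda_2$ and $\Norm{(\lambda f)(x)}\leq\lambda\lambda_1$.

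For $\omega$-continuity I would invoke that both addition and scalar multiplication on $\Mcca D$ are $\omega$-continuous (Lemma~\ref{lemma:cones-cone-add-scal}). Given a bounded increasing sequence $(x_n)_{n\in\Nat}$ in $\Cuball{\Mcca C}$ with lub $x$, monotonicity of $f_1$ and $f_2$ (the case $n=1$ of total monotonicity) makes $(f_1(x_n))_{n\in\Nat}$ and $(f_2(x_n))_{n\in\Nat}$ increasing and bounded, so their lubs exist and commute with addition; thus $(f_1+f_2)(x)=f_1(x)+f_2(x)=\sup_n f_1(x_n)+\sup_n f_2(x_n)=\sup_n(f_1(x_n)+f_2(x_n))$, and similarly for $\lambda f$. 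Measurability is preserved because composition distributes over the pointwise operations: for $\gamma\in\Cuball{\Mcca{\Cpath XC}}$ one has $(f_1+f_2)\Comp\gamma=(f_1\Comp\gamma)+(f_2\Comp\gamma)$ and $(\lambda f)\Comp\gamma=\lambda(f\Comp\gamma)$, and since $\Mcca{\Cpath XD}$ is itself a cone (Section~\ref{sec:meas-cone-of-paths}) it is closed under these operations; the constant path $0\Comp\gamma$ is measurable by Lemma~\ref{lemma:cst-path}.

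The main (and only mildly delicate) step is the $\omega$-continuity of the sum: one must first record that the component sequences $(f_i(x_n))_n$ are increasing and bounded so that their suprema exist, and only then apply $\omega$-continuity of addition to exchange the supremum with the sum. Everything else is immediate from the corresponding structure on $\Mcca D$ and $\Mcca{\Cpath XD}$, so I do not anticipate any genuine obstacle beyond this bookkeeping.
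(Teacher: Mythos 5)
Your proof is correct and fills in exactly the routine verification the paper omits (its "proof" is just the remark that the statement is straightforward): pointwise inheritance of \Pcsimplr{} and \Pcposr{} from $\Mcca D$, plus closure of total monotonicity, boundedness, $\omega$-continuity and measurability under the pointwise operations, with the only slightly non-trivial point being the use of $\omega$-continuity of addition (Lemma~\ref{lemma:cones-cone-add-scal}) and closure of the path cone under addition. Nothing is missing.
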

\noindent 
This is straightforward, we use \(P\) for this precone.
We need first to understand the order on stable functions induced by
the addition operation of \(P\).
As usual, this order relation is simply denoted as \(\leq\) of \(\leq_P\).
\begin{lemma} %
  \label{lemma:stable-order-charact}
  Let \(f,g\in P\). One has \(f\leq g\) iff for each \(n\in\Nat\) and
  each %
  \(x,\List u1n\in\Cuball{\Mcca C}\) such that %
  \(x+\sum_{i=1}^nu_i\in\Cuball{\Mcca C}\) one has
  \begin{multline*}
    \sum_{I\in\Npset n}g(x+\sum_{i\in I}u_i)
    +\sum_{I\in\Ppset n}f(x+\sum_{i\in I}u_i)\\
    \leq
    \sum_{I\in\Ppset n}g(x+\sum_{i\in I}u_i)
    +\sum_{I\in\Npset n}f(x+\sum_{i\in I}u_i)\,.
  \end{multline*}
\end{lemma}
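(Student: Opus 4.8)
The plan is to prove both implications by relating the stated inequality, for each fixed $n$ and configuration, to the total monotonicity of the \emph{pointwise} difference $h=g-f$.

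First I would isolate the instance $n=0$. Here $\Npset 0=\emptyset$ and $\Ppset 0=\Eset\emptyset$, so the displayed inequality collapses to $f(x)\leq g(x)$ for every $x\in\Cuball{\Mcca C}$. Consequently, whenever the stated condition holds, the difference $h(z)=g(z)-f(z)$ is a well-defined element of $\Mcca D$ for all $z\in\Cuball{\Mcca C}$; conversely, if $f\leq_P g$, say $g=f+h$ with $h\in P$, then $g(z)=f(z)+h(z)$ yields both $f(z)\leq g(z)$ and $h(z)=g(z)-f(z)$. Establishing this pointwise domination \emph{before} any further manipulation is essential, since subtraction in $\Mcca D$ is only partial.

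The heart of the argument is a purely algebraic rewriting. Fix $n\in\Nat$ and $x,\List u1n\in\Cuball{\Mcca C}$ with $x+\sum_i u_i\in\Cuball{\Mcca C}$, abbreviate $x_I=x+\sum_{i\in I}u_i$, and set $S=\sum_{I\in\Npset n}f(x_I)+\sum_{I\in\Ppset n}f(x_I)$. Substituting $g(x_I)=f(x_I)+h(x_I)$ and regrouping, the left-hand side of the stated inequality equals $S+\sum_{I\in\Npset n}h(x_I)$ and its right-hand side equals $S+\sum_{I\in\Ppset n}h(x_I)$. Adding and cancelling the common summand $S$ (legitimate by \Pcsimplr{}, and compatible with $\leq$ because adding a fixed element preserves the cone order) shows that the stated inequality is equivalent to $\sum_{I\in\Npset n}h(x_I)\leq\sum_{I\in\Ppset n}h(x_I)$, which is exactly the total monotonicity condition \Eqref{eq:totally-monotone-def} for $h$. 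Thus, once $h$ is known to be defined pointwise, the whole family of stated inequalities (over all $n$ and all admissible configurations) holds if and only if $h$ is totally monotonic. Run forward, with $h\in P$ given, this already settles the forward implication.

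For the converse it remains to verify that the pointwise difference $h$ is a genuine morphism of $P$, i.e.\ stable and measurable, so that $f\leq_P g$ in the precone sense. Total monotonicity is provided by the rewriting above. Boundedness is immediate from $h(z)\leq g(z)$ and \Cnormpr{}, which give $\Norm{h(z)}\leq\Norm{g(z)}$ with $g$ bounded. For $\omega$-continuity I would invoke Lemma~\ref{lemma:fun-diff-Scott}: $f$ is increasing (total monotonicity at $n=1$), $g$ is $\omega$-continuous (being stable), $f\leq g$ pointwise, and $h=g-f$ is increasing (total monotonicity of $h$ at $n=1$), so $h$ is $\omega$-continuous on the $\omega$-closed set $\Cuball{\Mcca C}$. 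Measurability follows because, for each $X\in\ARCAT$ and $\gamma\in\Cuball{\Mcca{\Cpath XC}}$, the paths $f\Comp\gamma$ and $g\Comp\gamma$ lie in $\Mcca{\Cpath XD}$ with $f(\gamma(r))\leq g(\gamma(r))$, and the pointwise difference of two such dominated measurable paths is again a measurable path, exactly as in the verification that $\Cpath XD$ is a cone in Section~\ref{sec:meas-cone-of-paths}. Hence $h\in P$ and $g=f+h$, giving $f\leq_P g$. The only delicate point throughout is the bookkeeping in the central rewriting: one must ensure every addition and cancellation is carried out on elements of $\Mcca D$ that genuinely exist, which is precisely why the $n=0$ step comes first.
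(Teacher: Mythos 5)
Your proof is correct and follows essentially the same route as the paper: both directions reduce the stated family of inequalities to total monotonicity of the pointwise difference $h=g-f$ (with the $n=0$ instance supplying the pointwise domination needed for $h$ to exist), and the converse is completed by checking that $h$ is stable and measurable, using Lemma~\ref{lemma:fun-diff-Scott} for $\omega$-continuity and measurability of subtraction for the measurability of $h$. Your version merely spells out more explicitly the cancellation step and the boundedness of $h$, which the paper leaves implicit.
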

\begin{proof}
  Assume first that \(f\leq g\) and let \(h=g-f\).
  Notice that since addition is defined pointwise in \(P\) we have
  \(h(x)=g(x)-f(x)\) for all \(x\in\Cuball{\Mcca C}\), and by
  definition of the order relation of \(P\), this function \(h\) is
  stable.
  Given \(n\in\Nat\) and \(x,\List u1n\in\Cuball{\Mcca C}\) such that
  \(x+\sum_{i=1}^nu_i\in\Cuball{\Mcca C}\) we have %
  \begin{align*}
    \sum_{I\in\Npset n}h(x+\sum_{i\in I}u_i)
    \leq\sum_{I\in\Ppset n}h(x+\sum_{i\in I}u_i)
  \end{align*}
  and the announced inequality follows. Conversely if the property
  expressed in the lemma holds we have in particular
  \(\forall x\in\Cuball P\ f(x)\leq g(x)\) (take \(n=0\)) and so we
  can define a function \(h:\Cuball{\Mcca C}\to\Mcca D\) by
  \(h(x)=g(x)-f(x)\) and this function is totally monotonic.
  This function is \(\omega\)-continuous by
  Lemma~\ref{lemma:fun-diff-Scott} and is measurable because
  subtraction is measurable on \(\Real^2\).
\end{proof}

\begin{remark}
  Notice that if \(f\leq g\) (still for the cone order relation of
  \(P\), characterized by Lemma~\ref{lemma:stable-order-charact}) then
  \(f(x)\leq g(x)\) for all \(x\in\Cuball{\Mcca B}\), but the converse
  is not true.
  As an example take \(f(x)=x\) and \(g(x)=1\), defining two stable
  functions (for \(B=C=\Sone\)) which do not satisfy \(f\leq_P g\) but
  are such that \(f(x)\leq g(x)\) holds for all \(x\in\Intercc 01\).
  It is natural to call this order relation on stable functions the
  \emph{stable order} in reference to~\cite{Berry78,Girard86} where
  the stable order behaves in a very similar way, and admits a similar
  characterization, in terms of differences.
\end{remark}

We equip this precone \(P\) with the norm
\(\Norm f=\sup_{x\in\Cuball{\Mcca C}}\Norm{f(x)}\) which is well
defined by our assumptions that stable functions are bounded. %
\begin{lemma} %
  \label{lemma:sup-stable-fns}
  Let \((f_n\in\Cuball P)_{n\in\Nat}\) be an increasing sequence (for the
  stable order).
  Then \(f:\Cuball{\Mcca C}\to\Mcca D\) defined by
  \(f(x)=\sup_{n\in\Nat}f_n(x)\) is bounded, totally monotonic,
  \(\omega\)-continuous and measurable, that is, \(f\in P\).
  This map \(f\) is the lub of the \(f_n\)'s in
  \(P\).
\end{lemma}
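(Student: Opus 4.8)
The plan is to treat $f(x)=\sup_{n\in\Nat}f_n(x)$ as a pointwise supremum and to verify the four closure properties and then the universal property one after another, the recurring tool being that finite addition is $\omega$-continuous (Lemma~\ref{lemma:cones-cone-add-scal}), so that finite sums commute with suprema of increasing sequences. First I would check that $f$ is well defined: since $f_n\leq_P f_{n+1}$, taking $N=0$ in Lemma~\ref{lemma:stable-order-charact} gives $f_n(x)\leq f_{n+1}(x)$ for every $x\in\Cuball{\Mcca C}$, so $(f_n(x))_{n\in\Nat}$ is an increasing sequence in $\Cuball{\Mcca D}$ (each term has norm $\leq 1$ because $\Norm{f_n}\leq 1$); by~\Cnormcr{} its supremum $f(x)$ exists in $\Cuball{\Mcca D}$. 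This immediately yields that $f$ is bounded with $\Norm f\leq 1$.

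For total monotonicity I would fix $N$ and $x,\List u1N$ with $x+\sum_{i=1}^Nu_i\in\Cuball{\Mcca C}$ and write $x_I=x+\sum_{i\in I}u_i$; each $x_I$ lies in $\Cuball{\Mcca C}$ since $x_I\leq x+\sum_{i=1}^Nu_i$ and $\Cuball{\Mcca C}$ is $\omega$-closed. Then $\sum_{I\in\Npset N}f(x_I)=\sup_k\sum_{I\in\Npset N}f_k(x_I)$, and likewise for $\Ppset N$, by $\omega$-continuity of addition; as the inequality $\sum_{I\in\Npset N}f_k(x_I)\leq\sum_{I\in\Ppset N}f_k(x_I)$ holds for each $k$, it passes to the suprema. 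The $\omega$-continuity of $f$ follows from the $\omega$-continuity of each $f_k$ together with the interchange of two suprema of a sequence monotone in both indices: for an increasing bounded $(x_m)$ with lub $x$, $f(\sup_m x_m)=\sup_k\sup_m f_k(x_m)=\sup_m\sup_k f_k(x_m)=\sup_m f(x_m)$, the middle equality being justified using only $\omega$-completeness by comparing both iterated suprema with the supremum of the diagonal. Measurability is the same argument as for the $\omega$-completeness of $\Cpath X\_$: for $\gamma\in\Cuball{\Mcca{\Cpath XC}}$ and $m\in\Mcms D_Y$, $\omega$-continuity of $m$ in its second argument gives $m(s,f(\gamma(r)))=\sup_k m(s,f_k(\gamma(r)))$, an increasing pointwise supremum of measurable functions $Y\times X\to\Intcc01$ (each $f_k\Comp\gamma\in\Mcca{\Cpath XD}$), hence measurable by the monotone convergence theorem.

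For the universal property the key observation is that the stable order is characterized by Lemma~\ref{lemma:stable-order-charact} through inequalities involving only finite sums of values of $f$ and $g$ — crucially with no subtraction — so suprema of increasing sequences can again be pushed through them by $\omega$-continuity of addition. To see $f_n\leq_P f$, I would note $f_n\leq_P f_k$ for all $k\geq n$ by transitivity, write the defining inequality of Lemma~\ref{lemma:stable-order-charact} for the pair $(f_n,f_k)$, and take the supremum over $k$: each side is an increasing sequence whose lub is computed by continuity of addition and replaces the $f_k$-sums by the corresponding $f$-sums, giving exactly the inequality characterizing $f_n\leq_P f$. For leastness, given $g\in P$ with $f_n\leq_P g$ for all $n$, I would write the defining inequality of $f_n\leq_P g$ and take the supremum over $n$ (the $f_n$-sums converging to the corresponding $f$-sums, the $g$-sums being constant), obtaining the inequality characterizing $f\leq_P g$. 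Hence $f=\sup_{n\in\Nat}f_n$ in $P$.

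The step I expect to be most delicate is precisely this last one: working pointwise would tempt one to form the differences $g-f_n$ and take an infimum of a decreasing sequence, an operation not supported by the $\omega$-completeness of cones. The resolution is to never subtract, arguing entirely through the sum-only inequalities of Lemma~\ref{lemma:stable-order-charact}, where only increasing suprema and continuity of addition are needed. A secondary point requiring care is the interchange of suprema in the $\omega$-continuity argument, which must be derived from $\omega$-completeness alone (via the diagonal) rather than from completeness for arbitrary directed sets.
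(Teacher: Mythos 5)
Your proof is correct and follows essentially the same route as the paper, whose own proof is only a sketch ($\omega$-continuity of addition for total monotonicity, the monotone convergence theorem for measurability, and the pointwise definition for the lub claim). Your expansion of the lub step via the sum-only inequalities of Lemma~\ref{lemma:stable-order-charact} is the right way to fill in the paper's terse ``results from the fact that it is defined as a pointwise lub,'' since the stable order is strictly finer than the pointwise order.
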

\begin{proof}[Proof sketch]
  Total monotonicity follows from \(\omega\)-continuity of addition,
  \(\omega\)-continuity is straightforward and measurability results from
  the monotone convergence theorem.
  The fact that \(f\) is the lub of the \(f_n\)'s results from the
  fact that it is defined as a pointwise lub.
\end{proof}

So \(P\) is a cone that we equip with a measurability structure
\(\cM\) defined as in \(\Limpl CD\): a \(p\in\cM_X\) is a function
\(p=\Mtfun\gamma m\) where %
\(\gamma\in\Mcca{\Cpath XC}\) and \(m\in\Mcms D_Y\), given by
\begin{align*}
  \Mtfun\gamma m=\Absm{(r,f)\in X\times P}
  {m(r,f(\gamma(r)))}\,.
\end{align*}
Then we check that \(\cM\) satisfies the required conditions exactly
as we did for \(\Limpl CD\) in Section~\ref{sec:cone-linear-fun}.
We have defined a measurable cone \(\Simpls CD\) that we prove now to
be integrable.

Let \(X\in\ARCAT\) and \(\eta\in\Mcca{\Cpath X{\Simpls CD}}\), and
let \(\mu\in\Mcca{\Cmeas(X)}\).
We define a function \(f:\Cuball{\Mcca C}\to\Mcca D\) by
\begin{align*}
  f(x)=\int_{r\in X}\eta(r)(x)\mu(dr)\,.
\end{align*}
This integral is well defined because, for each
\(x\in\Cuball{\Mcca B}\), the function \(\Absm{r\in X}{\eta(r)(x)}\)
is measurable and bounded since \(\eta\) is a measurable path.
The function \(f\) is totally monotonic by bilinearity of integration,
\(\omega\)-continuous by the monotone convergence theorem, we check
that it is measurable.
Let \(Y\in\ARCAT\) and \(\gamma\in\Mcca{\Cpath Y C}\), we have
\begin{align*}
  f\Comp\gamma
  &=\Absm{s\in Y}{\int_{r\in X}\eta(r)(\gamma(s))\mu(dr)}
\end{align*}
and we must prove that \(f\Comp\gamma\in\Mcca{\Cpath Y{D}}\),
so let %
\(Y'\in\ARCAT\) and \(m\in\Mcms{D}_{Y'}\), we have
\begin{align*}
  \Absm{(s',s)\in Y'\times Y}{m(s',(f\Comp\gamma)(s))}
  &=\Absm{(s',s)\in Y'\times Y}
    {m\Big(s',\int_{r\in X}\eta(r)(\gamma(s))\mu(dr)\Big)}\\
  &=\Absm{(s',s)\in Y'\times Y}{\int_{r\in X} m(s',\eta(r)(\gamma(s)))\mu(dr)}
\end{align*}
which is measurable because %
\(\Absm{(s',s,r)\in Y'\times Y\times X}{m(s',\eta(r)(\gamma(s)))}\) %
is measurable by our assumption about \(\eta\) and by
Lemma~\ref{lemma:integral-measurable}.
This shows that %
\(f\in\Mcca{\Simpls CD}\).
Let \(p\in\Mcms{\Impl CD}_\Measterm\) so that %
\(p=\Mtfun xm\) for some \(x\in\Mcca C\) and
\(m\in\Mcms D_\Measterm\), we have
\begin{align*}
  p(f)
  &=m\Big(\int_{r\in X}\eta(r)(x)\mu(dr)\Big)\\
  &=\int_{r\in X} m(\eta(r)(x))\mu(dr)\\
  &=\int_{r\in X} p(\eta(r))\mu(dr)
\end{align*}
so that \(f\) is the integral of \(\eta\) over \(\mu\), this shows that
\(\Simpls CD\) is an integrable cone.

\subsection{Finite differences} %
\label{sec:finite-diffs}
We introduce a natural difference operator on totally monotonic
functions which provides an inductive characterization of total
monotonicity that we will use to prove two basic properties of
totally monotonic functions, Lemmas~\ref{lemma:fdiff-comp}
and~\ref{lemma:lin-tot-mon-is-tot-mon}, which will show useful to
establish a property which is not completely obvious: the composition
of two stable functions is stable.
In the setting of complete and absolute monotonicity in real analysis,
the corresponding property can be obtained by means of the
Faà~di~Bruno formula (higher derivative of composite of functions) as
mentioned in~\cite{LorchNewman83}, Section~7.
Our notion of total monotonicity being defined in terms of iterated
differences, we need a specific reasoning.

Given \(\Vect u\in{\Mcca B}^n\) such that %
\(\sum_{i=1}^n u_i\in\Cuball{\Mcca B}\) we use \(\Cloc B{\Vect u}\)
for the local cone \(\Cloc B{\sum_{i=1}^nu_i}\) (see
Section~\ref{sec:local-cone}).

Let \(B,C\) be cones, \(f:\Cuball{\Mcca }B\to\Mcca C\) be a function,
\(n\in\Nat\) and \(\List u1n\in\Mcca B\) such that
\(\sum_{i=1}^nu_i\in\Cuball{\Mcca B}\) we define
\begin{align*}
  \Fdiffs\epsilon f{\Vect u}:\Cuball{\Mcca{\Cloc B{\Vect u}}}
  &\to\Mcca C\\
  x&\mapsto\sum_{I\in\Epset\epsilon n}f(x+\sum_{i\in I}u_i)
\end{align*}
for \(\epsilon\in\Eset{-,+}\)
and if \(f\) is totally monotonic we set
\begin{align*}
  \Fdiff f{\Vect u}
  =\Fdiffp f{\Vect u}-\Fdiffn f{\Vect u}:
  \Cuball{\Mcca{\Cloc B{\Vect u}}}\to\Mcca C\,,
\end{align*}
the difference being computed pointwise.
Notice that for \(n=0\) %
(so that \(\Vect u=\Emptytuple\)) we have \(\Fdiff f\Emptytuple=f\)
since \(\Ppset 0=\Eset\emptyset\) and \(\Npset 0=\emptyset\).

\begin{definition}
  Let \(f\in\Cuball{\Mcca B}\to\Mcca C\) be a function and let
  \(n\in\Nat\). We say that \(f\) is %
  \emph{\(n\)-increasing from \(B\) to \(C\)} if
  \begin{itemize}
  \item \(n=0\) and \(f\) is increasing
  \item or \(n>0\), \(f\) is increasing and, for all
    \(u\in\Cuball{\Mcca B}\) the function
    \(\Fdiff fu:\Cuball{\Mcca{\Cloc Bu}}\to\Mcca C\) (which maps \(x\)
    to \(f(x+u)-f(x)\)) is \((n-1)\)-increasing from \(\Cloc Bu\) to
    \(C\).
  \end{itemize}
\end{definition}

\begin{lemma}
  \label{lemma:fdiff-inf-increasing}
  Let \(f\in\Cuball{\Mcca B}\to\Mcca C\) be a function which is
  \(n\)-increasing for all \(n\in\Nat\).
  Then for all \(u\in\Cuball{\Mcca B}\), the function %
  \(\Fdiff fu:\Cuball{\Mcca{\Cloc Bu}}\to\Mcca C\) is \(n\)-increasing
  for all \(n\in\Nat\).
\end{lemma}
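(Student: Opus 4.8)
The statement is an immediate consequence of the recursive clause in the definition of $n$-increasing, so the plan is simply to unfold that definition rather than to set up any induction of our own. Fix an arbitrary $u\in\Cuball{\Mcca B}$; what I must show is that $\Fdiff fu$ is $n$-increasing from $\Cloc Bu$ to $C$ for every $n\in\Nat$.

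First I would fix $n\in\Nat$ and apply the hypothesis to the index $n+1$: since $f$ is $n$-increasing for all indices, it is in particular $(n+1)$-increasing. Because $n+1>0$, the defining clause for positive indices asserts precisely that $f$ is increasing and that, for every $v\in\Cuball{\Mcca B}$, the difference function $\Fdiff fv:\Cuball{\Mcca{\Cloc Bv}}\to\Mcca C$ is $n$-increasing from $\Cloc Bv$ to $C$. Specializing $v$ to the fixed element $u$ yields exactly that $\Fdiff fu$ is $n$-increasing from $\Cloc Bu$ to $C$. Since $n$ was arbitrary, $\Fdiff fu$ is $n$-increasing for all $n\in\Nat$, which is the desired conclusion.

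The only point that deserves a word is well-definedness: the difference $\Fdiff fu$, whose value at $x$ is $f(x+u)-f(x)$, makes sense only when $f(x)\leq f(x+u)$, and this holds because $f$ is increasing, $f$ being in particular $1$-increasing by hypothesis. There is accordingly no genuine obstacle here. The lemma merely records that the tower of iterated difference operators underlying total monotonicity is stable under taking a single difference, and this stability is built directly into the recursive structure of the definition of $n$-increasing; the proof is a one-step reading-off from the case $n+1$.
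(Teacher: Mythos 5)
Your proof is correct and matches the paper's, which simply states that the lemma is an immediate consequence of the definition of $n$-increasing functions; your unfolding of the recursive clause at index $n+1$ and specialization of the universally quantified $v$ to the fixed $u$ is exactly that one-step reading-off.
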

\begin{proof}
  Immediate consequence of the definition of \(n\)-increasing functions.
\end{proof}

\begin{lemma}
  \label{lemma:fdiffs-expr-extended}
  Let \(f:\Mcca B\to\Mcca C\) be totally monotonic.
  For \(u,u_1,\dots,u_n\in\Mcca B\) and
  \(x\in\Cuball{\Mcca{\Cloc B{u,\Vect u}}}\), one has
  \(\Fdiffs\epsilon f{u,\Vect u}(x)=\Fdiffs\epsilon f{\Vect
    u}(x+u)+\Fdiffs{-\epsilon}f{\Vect u}(x)\) for %
  \(\epsilon\in\{\Pl,\Mn\}\).
  Moreover
  \begin{align*}
    \Fdiff f{\Vect u}(x)\leq\Fdiff f{\Vect u}(x+u)
  \text{\quad and\quad}
    \Fdiff f{u,\Vect u}=\Fdiff{(\Fdiff f{\Vect u})}u\,.
  \end{align*}
\end{lemma}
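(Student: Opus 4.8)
The plan is to prove the three assertions in order: the displayed splitting of $\Fdiffs\epsilon f{u,\Vect u}$ is a purely combinatorial rearrangement, and the inequality together with the composition identity then follow by cone algebra from it. Throughout I would first record a domain remark: if $x\in\Cuball{\Mcca{\Cloc B{u,\Vect u}}}$, i.e.\ $u+\sum_{i=1}^n u_i+x\in\Cuball{\Mcca B}$, then for every $I\subseteq\{1,\dots,n\}$ we have $\sum_{i\in I}u_i\leq\sum_{i=1}^n u_i$, so $x+u+\sum_{i\in I}u_i\leq u+\sum_{i=1}^n u_i+x\in\Cuball{\Mcca B}$, and since $\Cuball{\Mcca B}$ is $\omega$-closed, hence downward closed, $f$ is defined at every point occurring below; the same remark shows $x$ and $x+u$ lie in $\Cuball{\Mcca{\Cloc B{\Vect u}}}$.

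For the splitting identity I would index the $(n+1)$-tuple $(u,\Vect u)$ so that position $1$ carries $u$ and position $i+1$ carries $u_i$, and partition $\Epset\epsilon{n+1}$ according to whether a set contains $1$. By Lemma~\ref{lemma:inset-corresp} with $j=1$, the $J\in\Epset\epsilon{n+1}$ with $1\in J$ are exactly the $\Inset 1(I)$ for $I\in\Epset\epsilon n$, and for such $J$ the associated sum equals $u+\sum_{i\in I}u_i$, so these terms sum to $\Fdiffs\epsilon f{\Vect u}(x+u)$. The sets $J\in\Epset\epsilon{n+1}$ with $1\notin J$ are the shifts of subsets $I\subseteq\{1,\dots,n\}$, and since $(n+1)-\Card J=(n-\Card I)+1$ their parity is reversed, so they are indexed by $\Epset{-\epsilon}n$ and contribute $\Fdiffs{-\epsilon}f{\Vect u}(x)$. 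Adding the two gives the first displayed equation.

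For the inequality, abbreviate $A_\pm=\Fdiffs{\pm}f{\Vect u}(x+u)$ and $B_\pm=\Fdiffs{\pm}f{\Vect u}(x)$. Total monotonicity of $f$ for the tuple $\Vect u$ at $x+u$ and at $x$ gives $A_-\leq A_+$ and $B_-\leq B_+$, so $\Fdiff f{\Vect u}(x+u)=A_+-A_-$ and $\Fdiff f{\Vect u}(x)=B_+-B_-$ are defined; the splitting reads $\Fdiffp f{u,\Vect u}(x)=A_++B_-$ and $\Fdiffn f{u,\Vect u}(x)=A_-+B_+$, and total monotonicity for the tuple $(u,\Vect u)$ at $x$ gives $A_-+B_+\leq A_++B_-$. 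Writing $A_+=A_-+\Fdiff f{\Vect u}(x+u)$ and $B_+=B_-+\Fdiff f{\Vect u}(x)$ and substituting turns this into $(A_-+B_-)+\Fdiff f{\Vect u}(x)\leq(A_-+B_-)+\Fdiff f{\Vect u}(x+u)$, whence $\Fdiff f{\Vect u}(x)\leq\Fdiff f{\Vect u}(x+u)$ by the cancellation property \Pcsimplr{}. The same bookkeeping yields the composition identity: $\Fdiff f{u,\Vect u}(x)=(A_++B_-)-(A_-+B_+)$ rearranges, using the existing differences, to $\Fdiff f{\Vect u}(x+u)-\Fdiff f{\Vect u}(x)$; and since the inequality just proved shows $g=\Fdiff f{\Vect u}$ satisfies $g(x)\leq g(x+u)$, its first finite difference along the single vector $u$ is defined and equals $g(x+u)-g(x)$ (as $\Ppset 1=\{\{1\}\}$ and $\Npset 1=\{\emptyset\}$), i.e.\ $\Fdiff{(\Fdiff f{\Vect u})}u(x)$. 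I expect the only delicate point to be the handling of the partially defined cone subtraction: every regrouping and cancellation must be licensed by \Pcsimplr{} together with the existence of the relevant differences, which is guaranteed precisely by the total monotonicity of $f$, the combinatorics being routine once Lemma~\ref{lemma:inset-corresp} is in hand.
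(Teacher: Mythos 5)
Your proposal is correct and follows essentially the same route as the paper's proof: the same partition of $\Epset\epsilon{n+1}$ according to whether a set contains the distinguished index (via Lemma~\ref{lemma:inset-corresp} and the parity flip for the complementary sets), followed by the same algebraic derivation of the monotonicity inequality and of $\Fdiff f{u,\Vect u}=\Fdiff{(\Fdiff f{\Vect u})}u$ from the splitting identity. The only differences are presentational (your $A_\pm,B_\pm$ bookkeeping and explicit appeal to \Pcsimplr{} for cancellation in the order, where the paper rearranges directly into differences), and both handle the partial subtraction soundly.
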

\begin{proof}
  Let \(\Vect v=(u,\Vect u)\), of length \(n+1\).
  We have
  \begin{align*}
    \Fdiffs\epsilon f{\Vect v}(x)
    =\sum_{I\in\Epset\epsilon{n+1}}f(x+\sum_{i\in I}v_i)
    &=\sum_{\Biind{I\in\Epset\epsilon{n+1}}{1\in I}}f(x+\sum_{i\in I}v_i)
      +\sum_{\Biind{I\in\Epset\epsilon{n+1}}{1\notin I}}
      f(x+\sum_{i\in I}v_i)\,.
  \end{align*}
  Now observe that
  \begin{align*}
    \sum_{\Biind{I\in\Epset{\epsilon}{n+1}}{1\notin I}}
      f(x+\sum_{i\in I}v_i)
    =\sum_{I\in\Epset{-\epsilon}{n}}f(x+\sum_{i\in I}u_i)
    =\Fdiffs{-\epsilon}f{\Vect u}(x)
  \end{align*}
  by definition of \(\Vect v\) and, using
  Lemma~\ref{lemma:inset-corresp}, observe also that
  \begin{align*}
    \sum_{\Biind{I\in\Epset\epsilon{n+1}}{1\in I}}f(x+\sum_{i\in I}v_i)
    =\sum_{I\in\Epset\epsilon{n}}f(x+u+\sum_{i\in I}u_i)
    =\Fdiffs\epsilon{f}{\Vect u}(x+u)\,.
  \end{align*}
  So we have
  \(\Fdiffs\epsilon f{u,\Vect u}(x)=\Fdiffs\epsilon f{\Vect
    u}(x+u)+\Fdiffs{-\epsilon}f{\Vect u}(x)\).
  Since \(f\) is totally monotonic, we have
  \begin{align*}
    \Fdiffn f{u,\Vect u}(x)\leq\Fdiffp f{u,\Vect u}(x)
  \end{align*}
  and hence
  \begin{align*}
    \Fdiffn f{\Vect u}(x+u)-\Fdiffn f{\Vect u}(x)
    \leq
    \Fdiffp f{\Vect u}(x+u)-\Fdiffp f{\Vect u}(x)
  \end{align*}
  both subtractions being defined since \(f\) is increasing.
  Therefore \(\Fdiff f{\Vect u}(x)\leq\Fdiff f{\Vect u}(x+u)\).
  Moreover
  \begin{align*}
    \Fdiff f{u,\Vect u}(x)
    &=\Fdiffp f{u,\Vect u}(x)-\Fdiffn f{u,\Vect u}(x)\\
    &=(\Fdiffp f{\Vect u}(x+u)+\Fdiffn f{\Vect u}(x))
    -(\Fdiffn f{\Vect u}(x+u)+\Fdiffp f{\Vect u}(x))\\
    &=(\Fdiffp{f}{\Vect u}(x+u)-\Fdiffn{f}{\Vect u}(x+u))
      -(\Fdiffp{f}{\Vect u}(x)-\Fdiffn{f}{\Vect u}(x))\\
    &=\Fdiff f{\Vect u}(x+u)-\Fdiff f{\Vect u}(x)\\
    &=\Fdiff{(\Fdiff f{\Vect u})}u(x)\,.
      \qedhere
  \end{align*}
\end{proof}

\begin{lemma}
  \label{lemma:fdiff-tot-mono}
  If a function \(f\in\Cuball{\Mcca B}\to\Mcca C\) is totally
  monotonic, then for each \(u\in\Cuball{\Mcca B}\), the function
  \(\Fdiff fu:\Cuball{\Mcca{\Cloc Bu}}\to\Mcca C\) is totally
  monotonic.
\end{lemma}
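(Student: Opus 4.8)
The plan is to reduce the total monotonicity of $\Fdiff fu$ to that of $f$ on an enlarged tuple of directions, working entirely additively so as to sidestep the partiality of subtraction. Fix $u\in\Cuball{\Mcca B}$ and write $g=\Fdiff fu$, a function $\Cuball{\Mcca{\Cloc Bu}}\to\Mcca C$; since $f$ is in particular increasing (total monotonicity at $n=1$), for every admissible $y$ one has the defining relation $g(y)+f(y)=f(y+u)$. To prove $g$ totally monotonic I must show, for all $m\in\Nat$ and all $x,\Vect v=(v_1,\dots,v_m)\in\Mcca{\Cloc Bu}$ with $x+\sum_{j=1}^m v_j\in\Cuball{\Mcca{\Cloc Bu}}$, the inequality $\Fdiffn g{\Vect v}(x)\leq\Fdiffp g{\Vect v}(x)$.

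First I would record the domain bookkeeping: by the description $\Cuball{\Cloc Bu}=\{w\in\Mcca B\St u+w\in\Cuball{\Mcca B}\}$ established in Section~\ref{sec:local-cone}, the hypothesis $x+\sum_j v_j\in\Cuball{\Mcca{\Cloc Bu}}$ means exactly $u+x+\sum_j v_j\in\Cuball{\Mcca B}$; together with downward closure of $\Cuball{\Mcca B}$ this guarantees that every term $f(x+\sum_{i\in I}v_i)$, $f(u+x+\sum_{i\in I}v_i)$ and $g(x+\sum_{i\in I}v_i)$ occurring below is well defined, and that $f$ may legitimately be applied to the tuple $(u,v_1,\dots,v_m)$ at base point $x$.

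The core computation is to sum the relation $g(y)+f(y)=f(y+u)$ over the points $y=x+\sum_{i\in I}v_i$ for $I$ ranging in $\Epset\epsilon m$, which yields, for $\epsilon\in\{\Pl,\Mn\}$,
\[
  \Fdiffs\epsilon g{\Vect v}(x)+\Fdiffs\epsilon f{\Vect v}(x)
  =\Fdiffs\epsilon f{\Vect v}(x+u)\,.
\]
On the other hand, total monotonicity of $f$ applied to $(u,\Vect v)$ is the inequality $\Fdiffn f{u,\Vect v}(x)\leq\Fdiffp f{u,\Vect v}(x)$, and the first displayed identity of Lemma~\ref{lemma:fdiffs-expr-extended} (taken at $\epsilon=\Mn$ and $\epsilon=\Pl$) rewrites this as
\[
  \Fdiffn f{\Vect v}(x+u)+\Fdiffp f{\Vect v}(x)
  \leq\Fdiffp f{\Vect v}(x+u)+\Fdiffn f{\Vect v}(x)\,.
\]
Substituting the two instances of the previous display into this inequality, both sides acquire the common summand $\Fdiffp f{\Vect v}(x)+\Fdiffn f{\Vect v}(x)$; cancelling it by the cancellation axiom \Pcsimplr{} in its order form (from $a+z\leq b+z$ one gets $a\leq b$) leaves precisely $\Fdiffn g{\Vect v}(x)\leq\Fdiffp g{\Vect v}(x)$, as required.

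The only genuine subtlety, and the step I would treat most carefully, is the manipulation of these partial differences: every rearrangement must be phrased as an identity or inequality between sums of bona fide elements of $\Mcca C$, never as a free subtraction, which is exactly why I keep the auxiliary terms $\Fdiffs\epsilon f{\Vect v}$ present on both sides until the final cancellation. Once this discipline is maintained the argument is purely formal, relying only on the increasingness and total monotonicity of $f$, the combinatorial identity of Lemma~\ref{lemma:fdiffs-expr-extended}, and \Pcsimplr{}.
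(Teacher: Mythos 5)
Your proof is correct and follows essentially the same route as the paper's: both apply the total monotonicity of \(f\) to the extended tuple \((u,\Vect v)\), rewrite via the first identity of Lemma~\ref{lemma:fdiffs-expr-extended}, and cancel. The only (cosmetic) difference is that you keep the identity \(\Fdiffs\epsilon{(\Fdiff fu)}{\Vect v}(x)+\Fdiffs\epsilon f{\Vect v}(x)=\Fdiffs\epsilon f{\Vect v}(x+u)\) in additive form and invoke \Pcsimplr{} at the end, whereas the paper writes the same relation as a subtraction justified by the monotonicity of \(f\).
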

\begin{proof}
  Let \(\Vect u\in\Cuball{\Mcca{\Cloc{B}u}}\), notice that %
  \(\Cloc{(\Cloc Bu)}{\Vect u}=\Cloc B{u,\Vect u}\).
  Let \(x\in\Cuball{\Mcca{\Cloc B{u,\Vect u}}}\).
  We have
  \[
    \Fdiffs\epsilon{(\Fdiff f{u})}{\Vect u}(x)
    =\Fdiffs\epsilon{f}{\Vect u}(x+u)-\Fdiffs\epsilon{f}{\Vect u}(x)
  \]
  where the subtraction makes sense because \(f\) is increasing.
  By our assumption on \(f\) we have
  \begin{align*}
    \Fdiffn{f}{u,\Vect u}(x)\leq\Fdiffp{f}{u,\Vect u}(x)
  \end{align*}
  that is
  \begin{align*}
    \Fdiffn f{\Vect u}(x+u)+\Fdiffp f{\Vect u}(x)
    \leq\Fdiffp f{\Vect u}(x+u)+\Fdiffn f{\Vect u}(x)
  \end{align*}
  by Lemma~\ref{lemma:fdiffs-expr-extended}, and hence
  \begin{align*}
    \Fdiffn{(\Fdiff f{u})}{\Vect u}(x)&\leq\Fdiffp{(\Fdiff f{u})}{\Vect u}(x)
    \qedhere
  \end{align*}
\end{proof}

\begin{theorem}
  \label{th:induct-total-nonotone}
  A function \(f\in\Cuball{\Mcca B}\to\Mcca C\) is totally monotonic
  iff it is \(n\)-increasing for all \(n\in\Nat\).
\end{theorem}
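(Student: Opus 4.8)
The plan is to prove the two implications separately, in each case reducing the statement about $n$\nobreakdash-increasingness to the finite-difference inequalities that define total monotonicity (Definition~\ref{def:totally-monotonic}). The starting observation is purely a rephrasing: for a list $\Vect u=(u_1,\dots,u_n)$ with $\sum_i u_i\in\Cuball{\Mcca B}$ and a base point $x\in\Cuball{\Mcca{\Cloc B{\Vect u}}}$, inequality~\Eqref{eq:totally-monotone-def} for this $\Vect u$ at $x$ is literally the assertion $\Fdiffn f{\Vect u}(x)\leq\Fdiffp f{\Vect u}(x)$, i.e.\ that $\Fdiff f{\Vect u}(x)$ exists in $\Mcca C$. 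Thus $f$ is totally monotonic if and only if every iterated difference $\Fdiff f{\Vect u}$ is everywhere defined on the local cone $\Cloc B{\Vect u}$.

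For the forward implication I would assume $f$ totally monotonic and argue by induction on $n$ that $f$ is $n$\nobreakdash-increasing. A totally monotonic function is in particular increasing (the case $n=1$ of Definition~\ref{def:totally-monotonic}), which settles $n=0$ and simultaneously supplies the ``increasing'' clause needed for every $n>0$. For the inductive step, Lemma~\ref{lemma:fdiff-tot-mono} shows that for each $u\in\Cuball{\Mcca B}$ the difference $\Fdiff fu:\Cuball{\Mcca{\Cloc Bu}}\to\Mcca C$ is again totally monotonic; applying the induction hypothesis to it, viewed as a function out of the integrable cone $\Cloc Bu$, makes $\Fdiff fu$ an $(n-1)$\nobreakdash-increasing function, which is exactly the second clause in the definition of $f$ being $n$\nobreakdash-increasing. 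This direction therefore needs only a single use of Lemma~\ref{lemma:fdiff-tot-mono} per step.

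For the converse I would induct on $m$ to establish the equivalence between $m$\nobreakdash-increasingness of $f$ and the validity of~\Eqref{eq:totally-monotone-def} for all lists of length $n\leq m+1$; by the first paragraph, running this over all $m$ yields total monotonicity. The engine of the induction is the ``order shift'': writing a length-$(n+1)$ list as $(u,\Vect v)$ with $\Vect v$ a length-$n$ list in $\Cloc Bu$ (and using $\Cloc{(\Cloc Bu)}{\Vect v}=\Cloc B{u,\Vect v}$), one identifies $\Fdiff{(\Fdiff fu)}{\Vect v}$ with $\Fdiff f{u,\Vect v}$, so that the order-$n$ inequality holding for $\Fdiff fu$ for all $u$ is equivalent to the order-$(n+1)$ inequality holding for $f$. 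Combined with the recursive definition of $m$\nobreakdash-increasingness (``$f$ increasing and every $\Fdiff fu$ is $(m-1)$\nobreakdash-increasing''), this lets the orders shift up by one at each step: the orders $0,\dots,m$ for $\Fdiff fu$ match the orders $1,\dots,m+1$ for $f$, the order-$1$ inequality coming from $f$ being increasing and the order-$0$ one being vacuous.

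The delicate point, and the step I expect to be the main obstacle, is that the composition identity $\Fdiff f{u,\Vect u}=\Fdiff{(\Fdiff f{\Vect u})}u$ is stated in Lemma~\ref{lemma:fdiffs-expr-extended} under a standing total monotonicity hypothesis, whereas in the converse direction it must be invoked \emph{before} total monotonicity is known. The resolution is to rely instead on the first displayed identity of that lemma, namely $\Fdiffs\epsilon f{u,\Vect u}(x)=\Fdiffs\epsilon f{\Vect u}(x+u)+\Fdiffs{-\epsilon}f{\Vect u}(x)$, whose derivation (through the bijection of Lemma~\ref{lemma:inset-corresp}) uses no monotonicity at all. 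The passage from this to the difference form $\Fdiff f{u,\Vect u}(x)=\Fdiff f{\Vect u}(x+u)-\Fdiff f{\Vect u}(x)$ is then pure arithmetic of the partial subtraction in $\Mcca C$, legitimate at each inductive stage precisely because the induction hypothesis guarantees that all the subtractions involved (those defining $\Fdiff f{\Vect u}$ at the two points) already exist. Once this bookkeeping of domains, local cones, and partial subtraction is carried out carefully, both inductions go through routinely, so the bulk of the work is organisational rather than computational.
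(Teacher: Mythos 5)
Your proposal is correct and follows essentially the same route as the paper's proof: the forward direction is the same induction powered by Lemma~\ref{lemma:fdiff-tot-mono}, and the converse rests on the same order-shift identity $\Fdiffs\epsilon f{u,\Vect u}(x)=\Fdiffs\epsilon f{\Vect u}(x+u)+\Fdiffs{-\epsilon}f{\Vect u}(x)$ from Lemma~\ref{lemma:fdiffs-expr-extended}, merely repackaged as a level-by-level equivalence rather than an induction on list length for a fixed $f$ that is $k$-increasing for all $k$. Your observation that only the monotonicity-free first identity of Lemma~\ref{lemma:fdiffs-expr-extended} may be invoked in the converse direction is a sound precaution which the paper uses implicitly.
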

\begin{proof}
  Remember that \(f\) is totally monotonic iff for all \(n\in\Nat\),
  all \(\Vect u\in\Mcca B^n\) and \(x\in\Cloc{\Mcca{B}}{\Vect u}\) we
  have %
  \(\Fdiffn f{\Vect u}(x)\leq\Fdiffp f{\Vect u}(x)\).

  We prove first by induction on \(k\in\Nat\) that for all
  \(f\in\Cuball{\Mcca B}\to\Mcca C\), if \(f\) is totally monotonic
  then \(f\) is \(k\)-increasing.

  For \(k=0\), we have to prove that \(f\) is increasing, which results
  from total monotonicity applied with \(n=1\).

  For \(k>0\) we have to prove that \(f\) is increasing (which results
  from total monotonicity applied with \(n=1\)) and that for all
  \(u\in\Cuball{\Mcca B}\) the function
  \(\Fdiff f u:\Cuball{\Mcca{\Cloc Bu}}\to\Mcca C\) is
  \((k-1)\)-increasing, for which, by inductive hypothesis, it
  suffices to prove that \(\Fdiff fu\) is totally monotonic, and this
  property results from Lemma~\ref{lemma:fdiff-tot-mono}.

  Conversely, we prove by induction on \(n\in\Nat\) that, for each
  \(f\in\Cuball{\Mcca B}\to\Mcca C\), if \(f\) is \(k\)-increasing for
  all \(k\in\Nat\) then for all \(\Vect u\in\Cuball{\Mcca B}^n\) and
  \(x\in\Cuball{\Mcca{\Cloc B{\Vect u}}}\), one has
  \(\Fdiffn f{\Vect u}(x)\leq\Fdiffp f{\Vect u}(x)\).
  For \(n=0\) there is nothing to prove so assume that \(n>0\).
  Let \((u,\Vect u)\in\Cuball{\Mcca B}^n\) and let %
  \(x\in\Cuball{\Mcca{\Cloc B{u,\Vect u}}}\).
  Notice that \(\Vect u\in\Cuball{\Mcca B}^{n-1}\) and
  \(x\in\Cuball{\Mcca{\Cloc B{\Vect u}}}\).
  Since \(\Fdiff fu\) is \(k\)-increasing for all \(k\in\Nat\) by
  Lemma~\ref{lemma:fdiff-inf-increasing}, we know by applying the
  inductive hypothesis to \(\Fdiff fu\) that %
  \begin{align*}
    \Fdiffn{(\Fdiff fu)}{\Vect u}(x)\leq\Fdiffp{(\Fdiff fu)}{\Vect u}(x)
  \end{align*}
  that is
  \begin{align*}
    \Fdiffn f{\Vect u}(x+u)-\Fdiffn f{\Vect u}(x)
    \leq\Fdiffp f{\Vect u}(x+u)-\Fdiffp f{\Vect u}(x)
  \end{align*}
  which implies
  \begin{align*}
    \Fdiffn f{\Vect u}(x+u)+\Fdiffp f{\Vect u}(x)
    \leq\Fdiffp f{\Vect u}(x+u)+\Fdiffn f{\Vect u}(x)
  \end{align*}
  that is \(\Fdiffn f{u,\Vect u}(x)\leq\Fdiffp f{u,\Vect u}(x)\)
  by Lemma~\ref{lemma:fdiffs-expr-extended}, as expected.
\end{proof}

\begin{lemma} %
  \label{lemma:fdiff-tot-mon}
  Let \(f:\Cuball{\Mcca B}\to\Mcca C\) be totally monotonic and
  \(\Vect u\in{\Mcca B}^n\) be such that %
  \(\sum_{i=1}^n u_i\in\Cuball{\Mcca B}\). Then the functions %
  \(
  \Fdiffp f{\Vect u},\Fdiffn f{\Vect u},\Fdiff f{\Vect u}:
  \Cuball{\Mcca{\Cloc B{\Vect u}}}\to\Mcca C
  \)
  are totally monotonic.
\end{lemma}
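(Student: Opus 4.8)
The plan is to split the statement into the two ``one-sided'' operators $\Fdiffp f{\Vect u},\Fdiffn f{\Vect u}$ on the one hand and the genuine difference $\Fdiff f{\Vect u}$ on the other, because total monotonicity is stable under finite sums but not under subtraction. For the one-sided operators I would use two elementary closure properties. First, \emph{shift invariance}: for a fixed $I\subseteq\Eset{1,\dots,n}$, writing $c_I=\sum_{i\in I}u_i$, the map $f_{c_I}=\Absm{x}{f(x+c_I)}$ is totally monotonic as a function $\Cuball{\Mcca{\Cloc B{\Vect u}}}\to\Mcca C$. Indeed, if $x,v_1,\dots,v_m\in\Mcca{\Cloc B{\Vect u}}$ satisfy $x+\sum_j v_j\in\Cuball{\Mcca{\Cloc B{\Vect u}}}$, that is $\sum_i u_i+x+\sum_j v_j\in\Cuball{\Mcca B}$, then for every $J\subseteq\Eset{1,\dots,m}$ we have $c_I+x+\sum_{j\in J}v_j\le\sum_i u_i+x+\sum_j v_j\in\Cuball{\Mcca B}$, so all the relevant points lie in $\Cuball{\Mcca B}$ and the inequality required of $f_{c_I}$ is exactly Definition~\ref{def:totally-monotonic} for $f$ at base point $c_I+x$. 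Second, \emph{closure under finite sums}: if finitely many functions are totally monotonic on a common domain, so is their pointwise sum, since the defining inequalities add (addition preserves $\le$ by Lemma~\ref{lemma:cones-cone-add-scal}). As $\Fdiffp f{\Vect u}=\sum_{I\in\Ppset n}f_{c_I}$ and $\Fdiffn f{\Vect u}=\sum_{I\in\Npset n}f_{c_I}$ are such finite sums of totally monotonic shifts, both are totally monotonic.

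For $\Fdiff f{\Vect u}$ I would instead argue by induction on $n$, reusing the single-increment case already established in Lemma~\ref{lemma:fdiff-tot-mono}. The case $n=0$ is immediate since $\Fdiff f\Emptytuple=f$. For $n\ge 1$ write $\Vect u=(u_1,\Vect v)$ with $\Vect v=(u_2,\dots,u_n)$. Since $\sum_{i\ge 2}u_i\le\sum_{i\ge 1}u_i\in\Cuball{\Mcca B}$, the inductive hypothesis applies to $\Vect v$ and gives that $g=\Fdiff f{\Vect v}:\Cuball{\Mcca{\Cloc B{\Vect v}}}\to\Mcca C$ is totally monotonic. Moreover $u_1\in\Cuball{\Mcca{\Cloc B{\Vect v}}}$, because $\sum_{i\ge 2}u_i+u_1=\sum_{i\ge 1}u_i\in\Cuball{\Mcca B}$, so Lemma~\ref{lemma:fdiff-tot-mono} yields that $\Fdiff g{u_1}$ is totally monotonic. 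Finally Lemma~\ref{lemma:fdiffs-expr-extended} gives $\Fdiff g{u_1}=\Fdiff{(\Fdiff f{\Vect v})}{u_1}=\Fdiff f{u_1,\Vect v}=\Fdiff f{\Vect u}$, while the local-cone identity $\Cloc{(\Cloc B{\Vect v})}{u_1}=\Cloc B{u_1,\Vect v}=\Cloc B{\Vect u}$ (the same one used in the proof of Lemma~\ref{lemma:fdiff-tot-mono}) identifies its domain with $\Cuball{\Mcca{\Cloc B{\Vect u}}}$. This closes the induction.

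The only genuine obstacle is precisely the difference case: since total monotonicity is not preserved by subtraction, $\Fdiff f{\Vect u}$ cannot be handled through its expression as a signed sum the way $\Fdiffp f{\Vect u}$ and $\Fdiffn f{\Vect u}$ are, and one is forced to exploit its recursive factorization through single increments via Lemmas~\ref{lemma:fdiffs-expr-extended} and~\ref{lemma:fdiff-tot-mono}. Everything else is bookkeeping: at each step one checks that the auxiliary partial sums remain inside $\Cuball{\Mcca B}$, which always follows from the downward-closedness of the unit ball together with $c_I\le\sum_i u_i$, and that the nested local cones collapse correctly through $\Cloc{(\Cloc B{\Vect v})}{u_1}=\Cloc B{\Vect u}$.
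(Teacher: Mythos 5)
Your proof is correct and follows essentially the same route as the paper: the one-sided operators $\Fdiffs\epsilon f{\Vect u}$ are handled exactly as in the paper (shifts $x\mapsto f(x+c_I)$ are totally monotonic and finite sums of totally monotonic functions are totally monotonic), and the genuine difference $\Fdiff f{\Vect u}$ is obtained by iterating the single-increment case. The only cosmetic divergence is that the paper discharges the $\Fdiff f{\Vect u}$ case by citing Theorem~\ref{th:induct-total-nonotone} (the $n$-increasing characterization), whereas you unfold the same iteration explicitly as an induction on $n$ via Lemma~\ref{lemma:fdiff-tot-mono} and the factorization $\Fdiff f{u,\Vect u}=\Fdiff{(\Fdiff f{\Vect u})}u$ of Lemma~\ref{lemma:fdiffs-expr-extended}; since that theorem is itself proved by exactly this iteration, the two arguments coincide in substance.
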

\begin{proof}
  The total monotonicity of \(\Fdiffs\epsilon f{\Vect u}\) results
  from the easy observation that if
  \(g:\Cuball{\Mcca B}\to\Cuball{\Mcca C}\) and
  \(u\in\Cuball{\Mcca B}\) then the map %
  \(g_u:\Cuball{\Mcca{\Cloc Bu}}\to\Mcca C\) defined by %
  \(g_u(x)=g(x+u)\) is also totally monotonic, and each sum of totally
  monotonic functions is totally monotonic.

  The total monotonicity of \(\Fdiff f{\Vect u}\) results from
  Theorem~\ref{th:induct-total-nonotone}.
\end{proof}

\begin{lemma}
  \label{lemma:fdiff-upper}
  Let \(f:\Cuball{\Mcca B}\to\Mcca C\) be totally monotonic.
  Then for each \(\Vect u\in\Mcca B^n\) such that %
  \(\sum_{i=1}^nu_i\in\Cuball{\Mcca B}\) and \(x\in\Mcca{\Cloc B{\Vect u}}\)
  we have
  \(
  \Fdiff f{\Vect u}(x)\leq f(x+\sum_{i=1}^nu_i)
  \).
\end{lemma}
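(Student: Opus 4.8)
The plan is to argue by induction on the length $n$ of the tuple $\Vect u$, exploiting the recursive structure of finite differences recorded in Lemma~\ref{lemma:fdiffs-expr-extended}. Throughout, recall that $\Fdiff f{\Vect u}$ is defined on $\Cuball{\Mcca{\Cloc B{\Vect u}}}=\{y\in\Mcca B\St y+\sum_{i=1}^n u_i\in\Cuball{\Mcca B}\}$, so that $f(x+\sum_{i=1}^n u_i)$ makes sense for $x$ in this domain. The base case $n=0$ is immediate: then $\Vect u=\Emptytuple$, $\Fdiff f\Emptytuple=f$ and $\sum_{i=1}^0 u_i=0$, so the claimed inequality reads $f(x)\leq f(x)$.

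For the inductive step I would take $\Vect u=(u_1,\dots,u_n)$ with $n\geq 1$, set $u=u_1$ and let $\Vect v=(u_2,\dots,u_n)$ be the length-$(n-1)$ tail. Lemma~\ref{lemma:fdiffs-expr-extended} (splitting off the head $u$, i.e.\ the identity $\Fdiff f{u,\Vect v}=\Fdiff{(\Fdiff f{\Vect v})}u$ unfolded at $x$) gives $\Fdiff f{\Vect u}(x)=\Fdiff f{\Vect v}(x+u)-\Fdiff f{\Vect v}(x)$, where the difference is defined because the same lemma provides $\Fdiff f{\Vect v}(x)\leq\Fdiff f{\Vect v}(x+u)$. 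Rearranging this identity as $\Fdiff f{\Vect v}(x+u)=\Fdiff f{\Vect u}(x)+\Fdiff f{\Vect v}(x)$ shows, directly from the definition of the cone order, that $\Fdiff f{\Vect u}(x)\leq\Fdiff f{\Vect v}(x+u)$. The induction hypothesis applied to $\Vect v$ at the point $x+u$ then yields $\Fdiff f{\Vect v}(x+u)\leq f((x+u)+\sum_{i=2}^n u_i)=f(x+\sum_{i=1}^n u_i)$, and chaining the two inequalities completes the step.

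The step that requires the most care --- and which I expect to be the main (if modest) obstacle --- is checking that the induction hypothesis is legitimately applicable, i.e.\ the domain conditions. One must verify $\sum_{i=2}^n u_i\in\Cuball{\Mcca B}$ so that $\Cloc B{\Vect v}$ is defined at all; this follows from $\sum_{i=2}^n u_i\leq\sum_{i=1}^n u_i\in\Cuball{\Mcca B}$ together with the downward-closedness (indeed $\omega$-closedness) of $\Cuball{\Mcca B}$. One must also check that $x+u$ lies in $\Cuball{\Mcca{\Cloc B{\Vect v}}}$, which by the description of that unit ball amounts to $(x+u)+\sum_{i=2}^n u_i=x+\sum_{i=1}^n u_i\in\Cuball{\Mcca B}$ --- precisely the standing hypothesis that $x\in\Cuball{\Mcca{\Cloc B{\Vect u}}}$. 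Once these are secured, the remainder is purely algebraic, using only that every element of a cone is $\geq 0$, so that removing a well-defined difference from an element can only make it smaller.
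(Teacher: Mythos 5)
Your proof is correct and follows essentially the same route as the paper's: induction on the tuple length, using the identity $\Fdiff f{u,\Vect u}(x)=\Fdiff f{\Vect u}(x+u)-\Fdiff f{\Vect u}(x)$ from Lemma~\ref{lemma:fdiffs-expr-extended}, dropping the subtracted (nonnegative) term, and applying the inductive hypothesis at $x+u$. The only difference is cosmetic (you peel off the head of $\Vect u$ where the paper prepends an extra coordinate), and your explicit verification of the domain conditions is a welcome elaboration of what the paper leaves implicit.
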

\begin{proof}
  By induction on \(n\).
  The base case \(n=0\) is trivial since then
  \(\Fdiff f{\Vect u}(x)=f(x)\).
  For the inductive case, let \((u,\Vect u)\in\Mcca B^{n+1}\)
  with \(u+\sum_{i=1}^nu_i\in\Cuball B\) and
  \(x\in\Mcca{\Cloc B{u,\Vect u}}\), that is
  \(x+u\in\Mcca{\Cloc B{\Vect u}}\).
  We have
\(
    \Fdiff f{u,\Vect u}(x)
    =\Fdiff f{\Vect u}(x+u)-\Fdiff f{\Vect u}(x)
    \leq \Fdiff f{\Vect u}(x+u)
    \leq f(x+u+\sum_{i=1}^nu_i)
\)
  by inductive hypothesis.
\end{proof}

\begin{lemma}\label{lemma:fdiff-sommes1}
  Let $f:\Cuball{\Mcca B}\to\Mcca C$ be a totally monotonic function.
  Let $n\in\Nat$, $u,v\in\Cuball{\Mcca B}$ and
  $\Vect u\in\Cuball{\Mcca B}^n$, and assume that
  $u+v+\sum_{i=1}^nu_i\in\Cuball{\Mcca B}$.
  Then for each \(x\in\Cuball{\Mcca{\Cloc B{\Vect u}}}\) we have
\begin{align*}
  \Fdiff f{\Vect u}(x+u)&=\Fdiff f{\Vect u}(x)+\Fdiff f{u,\Vect u}(x)\\
  \Fdiff f{u+v,\Vect u}(x)&=\Fdiff f{u,\Vect u}(x)+\Fdiff f{v,\Vect u}(x+u)\,.
\end{align*}
\end{lemma}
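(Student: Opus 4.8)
The plan is to reduce both identities to the single telescoping identity
$\Fdiff f{w,\Vect u}(x)=\Fdiff f{\Vect u}(x+w)-\Fdiff f{\Vect u}(x)$ already
established in Lemma~\ref{lemma:fdiffs-expr-extended} (to be applied with
$w\in\Eset{u,v,u+v}$), together with the monotonicity statement
$\Fdiff f{\Vect u}(y)\leq\Fdiff f{\Vect u}(y+w)$ from the same lemma. Before
doing any algebra I would dispatch the bookkeeping of domains: every argument
occurring below ($x$, $x+u$, $x+u+v$) must lie in the unit ball of the
appropriate local cone $\Cloc B{\Vect w}$, and for the left-hand sides to be
meaningful one needs $u+v+\sum_{i=1}^nu_i+x\in\Cuball{\Mcca B}$. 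Granting this,
downward-closedness of $\Cuball{\Mcca B}$ (a consequence of~\Cnormpr{}) yields at
once, for instance, $x\in\Cuball{\Mcca{\Cloc B{u,\Vect u}}}$ and
$x+u\in\Cuball{\Mcca{\Cloc B{v,\Vect u}}}$, so that all the differences below are
defined.

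For the first identity I would simply take the instance
$\Fdiff f{u,\Vect u}(x)=\Fdiff f{\Vect u}(x+u)-\Fdiff f{\Vect u}(x)$ of
Lemma~\ref{lemma:fdiffs-expr-extended}. Since $\Fdiff f{\Vect u}(x)\leq\Fdiff
f{\Vect u}(x+u)$ by that same lemma, the subtraction is legitimate, and adding
$\Fdiff f{\Vect u}(x)$ to both sides gives
$\Fdiff f{\Vect u}(x+u)=\Fdiff f{\Vect u}(x)+\Fdiff f{u,\Vect u}(x)$, which is the
claim.

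For the second identity I would write out the three relevant differences via the
same lemma:
\begin{align*}
  \Fdiff f{u+v,\Vect u}(x)&=\Fdiff f{\Vect u}(x+u+v)-\Fdiff f{\Vect u}(x),\\
  \Fdiff f{u,\Vect u}(x)&=\Fdiff f{\Vect u}(x+u)-\Fdiff f{\Vect u}(x),\\
  \Fdiff f{v,\Vect u}(x+u)&=\Fdiff f{\Vect u}(x+u+v)-\Fdiff f{\Vect u}(x+u).
\end{align*}
Setting $a=\Fdiff f{\Vect u}(x)$, $b=\Fdiff f{\Vect u}(x+u)$ and
$c=\Fdiff f{\Vect u}(x+u+v)$, monotonicity of $\Fdiff f{\Vect u}$ gives
$a\leq b\leq c$. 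Hence the telescoping law $(b-a)+(c-b)=c-a$ holds in the cone:
indeed $a+\bigl((b-a)+(c-b)\bigr)=b+(c-b)=c$, so $(b-a)+(c-b)=c-a$ by uniqueness
of the difference, i.e.\ by cancellation~\Pcsimplr{}. Adding the last two
displayed equations and applying this law then yields
$\Fdiff f{u,\Vect u}(x)+\Fdiff f{v,\Vect u}(x+u)=c-a=\Fdiff f{u+v,\Vect u}(x)$.

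The computations are entirely routine; the only point requiring genuine care is
precisely the domain bookkeeping—verifying that each argument belongs to the unit
ball of the correct local cone and that every partial subtraction invoked is
actually defined—which is what makes total monotonicity (through
Lemma~\ref{lemma:fdiff-tot-mon} and Lemma~\ref{lemma:fdiffs-expr-extended}) and
downward-closedness of $\Cuball{\Mcca B}$ do the real work. I therefore do not
expect any substantial obstacle beyond this careful tracking of where each
element lives.
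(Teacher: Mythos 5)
Your proof is correct, and for the first identity it coincides with the paper's (both just rewrite $\Fdiff f{u,\Vect u}=\Fdiff{(\Fdiff f{\Vect u})}u$ from Lemma~\ref{lemma:fdiffs-expr-extended} and move $\Fdiff f{\Vect u}(x)$ to the other side). For the second identity you take a genuinely different and arguably cleaner route: where the paper re-expands $\Fdiff f{u,\Vect u}(x)+\Fdiff f{v,\Vect u}(x+u)$ into the four sums over $\Epset\Pl{n+1}$ and $\Epset\Mn{n+1}$ and regroups them term by term until the expansion of $\Fdiff f{u+v,\Vect u}(x)$ appears, you instead apply the one-increment identity $\Fdiff f{w,\Vect u}(y)=\Fdiff f{\Vect u}(y+w)-\Fdiff f{\Vect u}(y)$ three times (with $w=u$ at $x$, $w=v$ at $x+u$, $w=u+v$ at $x$) and conclude by the telescoping law $(b-a)+(c-b)=c-a$, which you correctly justify from \Pcsimplr{} and the chain $a\leq b\leq c$ supplied by the monotonicity clause of Lemma~\ref{lemma:fdiffs-expr-extended}. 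What your version buys is that all partial subtractions are visibly legitimate differences of $\Fdiff f{\Vect u}$-values ordered by monotonicity, whereas the paper's intermediate regroupings of positive and negative sums are a bit more delicate to read as cone subtractions; what the paper's version buys is that it never leaves the combinatorial definition of $\Fdiffs\epsilon f{\Vect w}$. Your domain bookkeeping (that $x$, $x+u$, $x+u+v$ lie in the unit balls of the relevant local cones, all reducing to the single hypothesis $x+u+v+\sum_{i=1}^nu_i\in\Cuball{\Mcca B}$ via downward-closedness) is exactly what is needed and is handled adequately.
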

\begin{proof}
  The first equation results from
  \(\Fdiff f{u,\Vect u}=\Fdiff{(\Fdiff f{\Vect u})}u\), see
  Lemma~\ref{lemma:fdiff-tot-mono}.
  For the second equation take
  \(x\in\Cuball{\Mcca{\Cloc B{u+v,\Vect u}}}\).
  Let \(n\) be the length of \(\Vect u\).
  Setting \(\Vect v=(u,\Vect u)\) and \(\Vect w=(v,\Vect u)\) (both of
  length \(n+1\)), we have
  \begin{align*}
    &\Fdiff f{\Vect v}(x)+\Fdiff f{\Vect w}(x+u)
      =\sum_{I\in\Epset\Pl{n+1}}f(x+\sum_{i\in I}v_i)
      -\sum_{I\in\Epset\Mn{n+1}}f(x+\sum_{i\in I}v_i)\\
    &\Textsep+\sum_{I\in\Epset\Pl{n+1}}f(x+u+\sum_{i\in I}w_i)
      -\sum_{I\in\Epset\Mn{n+1}}f(x+u+\sum_{i\in I}w_i)\\
    &\quad=\sum_{I\in\Epset\Mn n}f(x+\sum_{i\in I}u_i)
      +\sum_{I\in\Epset\Pl n}f(x+u+\sum_{i\in I}u_i)\\
    &\Textsep-(\sum_{I\in\Epset\Pl n}f(x+\sum_{i\in I}u_i)
      +\sum_{I\in\Epset\Mn n}f(x+u+\sum_{i\in I}u_i))\\
    &\Textsep+\sum_{I\in\Epset\Mn n}f(x+u+\sum_{i\in I}u_i)
      +\sum_{I\in\Epset\Pl n}f(x+u+v+\sum_{i\in I}u_i)\\
    &\Textsep-(\sum_{I\in\Epset\Pl n}f(x+u+\sum_{i\in I}u_i)
      +\sum_{I\in\Epset\Mn n}f(x+u+v+\sum_{i\in I}u_i))\\
    &\quad=\sum_{I\in\Epset\Pl n}f(x+u+v+\sum_{i\in I}u_i)
      +\sum_{I\in\Epset\Mn n}f(x+\sum_{i\in I}u_i)\\
    &\Textsep-(\sum_{I\in\Epset\Pl n}f(x+\sum_{i\in I}u_i)
      +\sum_{I\in\Epset\Mn n}f(x+u+v+\sum_{i\in I}u_i))\,.
      \qedhere
  \end{align*}
\end{proof}

\begin{lemma} %
  \label{lemma:fdiff-sommes}
  Let $f:\Cuball{\Mcca B}\to\Mcca C$ be totally monotonic. Let
  $n\in\Nat$, $u\in\Mcca B$ and $\Vect u,\Vect v\in\Mcca B^n$,
  and assume that $u+\sum_{i=1}^n(u_i+v_i)\in\Cuball{\Mcca B}$.
  Then for each \(x\in\Cuball{\Mcca{\Cloc B{u,\Vect u,\Vect v}}}\) we
  have
  \begin{align*}
    \Fdiff f{\Vect u+\Vect v}(x+u)
    &= \Fdiff f{\Vect u}(x)+\Fdiff f{u,\Vect u+\Vect v}(x)\\
    &\quad+\Fdiff f{v_1,u_2+v_2,\dots,u_n+v_n}(x+u_1)\\
    &\quad+\Fdiff f{u_1,v_2,u_3+v_3,\dots,u_n+v_n}(x+u_2)
      +\cdots\\
    &\quad+\Fdiff f{u_1,\dots,u_{n-1},v_n}(x+u_n)\,.
  \end{align*}
\end{lemma}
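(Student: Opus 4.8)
The statement is an identity about the finite-difference operator $\Fdiff f{\cdot}$ applied to the "summed" tuple $\Vect u+\Vect v$. My plan is to prove it by induction on $n$, mirroring the inductive structure already used in Lemmas~\ref{lemma:fdiffs-expr-extended} and~\ref{lemma:fdiff-sommes1}. The right-hand side is a telescoping sum whose $(k+1)$-th term differences out one more coordinate $v_k$ while keeping the earlier coordinates as $u_1,\dots,u_{k-1}$ and the later ones as $u_j+v_j$; this pattern strongly suggests that the key lemma to invoke repeatedly is the second identity of Lemma~\ref{lemma:fdiff-sommes1}, namely $\Fdiff f{u+v,\Vect u}(x)=\Fdiff f{u,\Vect u}(x)+\Fdiff f{v,\Vect u}(x+u)$, which splits a single summed coordinate into two terms.

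\textbf{The induction.} First I would dispose of the base case $n=0$, where $\Vect u$ and $\Vect v$ are empty, so the right-hand side consists only of the first two displayed terms $\Fdiff f{\Seqempty}(x)+\Fdiff f{u}(x)$; by the first identity of Lemma~\ref{lemma:fdiff-sommes1} (with empty $\Vect u$) this equals $\Fdiff f{\Seqempty}(x+u)=f(x+u)$, which matches the left-hand side $\Fdiff f{\Seqempty}(x+u)$. For the inductive step, the natural move is to isolate the first coordinate: write $\Vect u+\Vect v=(u_1+v_1,\Vect u'+\Vect v')$ where $\Vect u',\Vect v'$ are the tails of length $n-1$, and apply the coordinate-splitting identity of Lemma~\ref{lemma:fdiff-sommes1} to the coordinate $u_1+v_1$. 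This peels off $\Fdiff f{u_1,\Vect u'+\Vect v'}(x+u)$ plus $\Fdiff f{v_1,\Vect u'+\Vect v'}(x+u+u_1)$ (with appropriate shifts), and the first of these — after combining with the $\Fdiff f{u,\Vect u+\Vect v}(x)$ bookkeeping term via the first identity of Lemma~\ref{lemma:fdiff-sommes1} — should feed directly into the inductive hypothesis applied to the tails.

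\textbf{Managing the shifts.} The main obstacle will be the careful tracking of the base-point shifts $x+u$, $x+u_1$, $x+u_2$, \dots that decorate each term on the right-hand side, together with ensuring at each stage that the relevant sum of vectors lands in $\Cuball{\Mcca B}$ so that all the differences are well defined. The local-cone machinery of Section~\ref{sec:local-cone} is what guarantees this: the hypothesis $u+\sum_{i=1}^n(u_i+v_i)\in\Cuball{\Mcca B}$ ensures that every partial sum appearing in any intermediate $\Fdiff f{\cdots}$ term is a legitimate argument, since each is dominated by this total. I would state explicitly that $x\in\Cuball{\Mcca{\Cloc B{u,\Vect u,\Vect v}}}$ guarantees $x+u\in\Cuball{\Mcca{\Cloc B{\Vect u,\Vect v}}}$ and that each shifted point $x+u_k$ lies in the appropriate local cone, invoking Lemma~\ref{lemma:fdiffs-expr-extended} to rewrite $\Fdiff f{u,\Vect w}=\Fdiff{(\Fdiff f{\Vect w})}u$ whenever I need to commute a single-coordinate difference past the rest.

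\textbf{Assembling the terms.} Once the coordinate-splitting identity has peeled off the $v_1$ contribution and the inductive hypothesis has been applied to the tail tuples $\Vect u',\Vect v'$ of length $n-1$ (at the shifted base point $x+u_1$ for the term $\Fdiff f{v_1,u_2+v_2,\dots,u_n+v_n}(x+u_1)$, and at $x$ for the leading block), the resulting collection of terms should reorganize exactly into the claimed right-hand side. I expect the bookkeeping to be entirely linear — every step is an application of one of the two identities in Lemma~\ref{lemma:fdiff-sommes1} or a relabelling — so the real content is purely combinatorial, and no continuity, measurability, or integrability hypotheses enter. The verification that the reorganized terms match the stated list is the one place where a reader must check indices by hand, and I would present it by displaying the post-splitting expression and then grouping, rather than grinding through every intermediate equality.
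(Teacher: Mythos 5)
Your proposal is correct and matches the paper's approach: the paper's entire proof is ``Simple computations using Lemma~\ref{lemma:fdiff-sommes1}'', and your induction on $n$ via the two identities of that lemma (plus the symmetry of $\Fdiff f{\Vect u}$ in the entries of $\Vect u$) is exactly the intended computation. One small streamlining: if you first apply the \emph{first} identity to strip off $u$ (producing the $\Fdiff f{u,\Vect u+\Vect v}(x)$ term immediately and reducing everything to base point $x$) and only then telescope over the coordinates $u_k+v_k$ with the \emph{second} identity, you avoid the cross term $\Fdiff f{v_1,\Vect u'+\Vect v'}(x+u+u_1)$ that your ordering creates and must then recombine.
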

\begin{proof}
  Simple computations using Lemma~\ref{lemma:fdiff-sommes1}.
\end{proof}
\noindent 
Let \(\Csum nB\) be the cone defined by %
\(\Mcca{\Csum nB}=\Mcca B^{n+1}\) with operations defined pointwise and
norm defined by %
\(\Norm{(x,\Vect u)}_{\Csum nB}=\Norm{x+\sum_{i=1}^nu_i}_{B}\).
It is easy to check that one actually defines a cone in that way.

\begin{remark}
  This cone is not the \((n+1)\)-fold coproduct of \(\Mcca B\) with itself.
  Take indeed \(B=\With\Sone\Sone\) and \(n=1\), then %
  \(\Norm{((1,0),(0,1)}_{\Csum 1B}=\Norm{(1,1)}_{\Mcca B}=1\) whereas
  \(\Norm{((1,0),(0,1)}_{\Plus{\Mcca B}{\Mcca B}}=2\).
  Neither is it the \((n+1)\)-fold product of \(\Mcca B\); it is
  actually (isomorphic to)
  \(\Limpl{\overbrace{\Sone\IWith\cdots\IWith\Sone}^{n+1}}{B}\).
  This construct is at the origin of coherent
  differentiation~\cite{Ehrhard23}.
\end{remark}

\begin{lemma} %
  \label{lemma:fdiff-glob-increasing}
  If $f:\Cuball{\Mcca B}\to\Mcca C$ is totally monotonic, the map %
  \((x,\Vect u)\to\Fdiff f{\Vect u}(x)\) is increasing %
  \(\Cuball{\Csum nB}\to\Mcca C\).
\end{lemma}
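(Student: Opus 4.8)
The plan is to first unwind the cone order on $\Csum nB$. Since the algebraic operations on $\Mcca{\Csum nB}=\Mcca B^{n+1}$ are defined componentwise, a difference $(x',\Vect u')-(x,\Vect u)$ exists in $\Csum nB$ exactly when it exists componentwise in $\Mcca B$; hence $(x,\Vect u)\leq(x',\Vect u')$ in $\Csum nB$ means precisely $x\leq x'$ and $u_i\leq u_i'$ for all $i$ (the cone order of $\Mcca B$ taken componentwise). So I would fix $(x,\Vect u)\leq(x',\Vect u')$ in $\Cuball{\Csum nB}$ and pass from one to the other in two stages: first raise $x$ to $x'$ keeping $\Vect u$ fixed, then raise each $u_i$ to $u_i'$ one coordinate at a time. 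Throughout, the relevant points remain in $\Cuball{\Mcca B}$: the condition $(x,\Vect u)\in\Cuball{\Csum nB}$ is literally $x+\sum_i u_i\in\Cuball{\Mcca B}$, and since every intermediate tuple is componentwise below $(x',\Vect u')$ its associated sum is $\leq x'+\sum_i u_i'\in\Cuball{\Mcca B}$, hence lies in $\Cuball{\Mcca B}$ by $\omega$-closedness (downward closure) of the unit ball. This guarantees that every finite difference written below is defined, and that the relevant orders computed in the local cones coincide with those of $\Mcca B$.

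For the first stage I would invoke Lemma~\ref{lemma:fdiff-tot-mon}, which says $\Fdiff f{\Vect u}$ is totally monotonic; a totally monotonic function is increasing (the case $n=1$ of Definition~\ref{def:totally-monotonic}), so $x\leq x'$ gives $\Fdiff f{\Vect u}(x)\leq\Fdiff f{\Vect u}(x')$. For the second stage, write $u_i'=u_i+v_i$ with $v_i=u_i'-u_i$ and set $\Vect u^{(k)}=(u_1',\dots,u_k',u_{k+1},\dots,u_n)$, so $\Vect u^{(0)}=\Vect u$ and $\Vect u^{(n)}=\Vect u'$. Using that $\Fdiffs\epsilon f{\Vect w}$, and hence $\Fdiff f{\Vect w}$, is symmetric in the entries of $\Vect w$, I would move the $k$-th coordinate to the front and apply the second identity of Lemma~\ref{lemma:fdiff-sommes1}:
\begin{align*}
  \Fdiff f{\Vect u^{(k)}}(x')
  =\Fdiff f{\Vect u^{(k-1)}}(x')
  +\Fdiff f{v_k,\Vect w}(x'+u_k)\,,
\end{align*}
where $\Vect w$ collects the remaining coordinates $u_1',\dots,u_{k-1}',u_{k+1},\dots,u_n$. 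The extra term $\Fdiff f{v_k,\Vect w}(x'+u_k)=\Fdiffp f{v_k,\Vect w}(x'+u_k)-\Fdiffn f{v_k,\Vect w}(x'+u_k)$ is $\geq0$ by total monotonicity of $f$ (its base point plus the sum of its directions equals $x'$ plus the sum of the entries of $\Vect u^{(k)}$, which is in $\Cuball{\Mcca B}$ as noted). Hence $\Fdiff f{\Vect u^{(k-1)}}(x')\leq\Fdiff f{\Vect u^{(k)}}(x')$, and chaining over $k=1,\dots,n$ yields $\Fdiff f{\Vect u}(x')\leq\Fdiff f{\Vect u'}(x')$. Combining both stages gives $\Fdiff f{\Vect u}(x)\leq\Fdiff f{\Vect u'}(x')$, which is the claim.

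The only delicate points are bookkeeping rather than conceptual. The main thing to get right is that $\Fdiff f{\Vect w}(y)\geq0$ is not an extra hypothesis but exactly the total monotonicity inequality $\Fdiffn f{\Vect w}(y)\leq\Fdiffp f{\Vect w}(y)$, so once the domain conditions are checked no separate argument is needed. The other point requiring care is the symmetry of the finite-difference operator in its direction vector, which lets me bring whichever coordinate I am currently increasing into the leading slot of Lemma~\ref{lemma:fdiff-sommes1}; this is immediate from the defining formula $\Fdiffs\epsilon f{\Vect w}(y)=\sum_{I\in\Epset\epsilon n}f(y+\sum_{i\in I}w_i)$, which depends only on the directions $w_i$ and not on their order. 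I expect that verifying the ball and domain memberships at each intermediate step will be the most tedious part, but it reduces entirely to downward closure of $\Cuball{\Mcca B}$.
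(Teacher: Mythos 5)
Your proof is correct and fills in, with full detail, essentially the argument the paper leaves implicit: the paper derives the lemma in one line from Theorem~\ref{th:induct-total-nonotone} (total monotonicity is equivalent to being $n$-increasing for all $n$), and your two-stage decomposition --- monotonicity in the base point via Lemma~\ref{lemma:fdiff-tot-mono}, then coordinate-by-coordinate monotonicity in the directions via symmetry and the telescoping identity of Lemma~\ref{lemma:fdiff-sommes1} together with positivity of the iterated difference --- is exactly the content of that reduction. The identification of the cone order on $\Csum nB$ as the componentwise order and the domain bookkeeping via downward closure of $\Cuball{\Mcca B}$ are both handled correctly.
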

\begin{proof}
  Follows easily from Theorem~\ref{th:induct-total-nonotone}.
\end{proof}
\noindent 
Now we can state and prove the main lemma which allows to prove that
totally monotonic functions are closed under composition.
Remember that even in the setting of completely and absolutely
monotonic functions (where derivatives instead of differences are
used), the corresponding result is not completely trivial as it
requires the use of the Faà~di~Bruno formula.
\begin{lemma} %
  \label{lemma:fdiff-comp}  
  Let $n\in\Nat$, $f,\List h1n:\Cuball{\Mcca B}\to\Mcca C$
  and $g:\Cuball{\Mcca C}\to\Mcca D$ be totally monotonic functions
  such that
  $\forall x\in\Cuball{\Mcca B}\
  f(x)+\sum_{i=1}^nh_i(x)\in\Cuball{\Mcca C}$.
  Then the function $k:\Cuball{\Mcca B}\to\Mcca D$ defined by
  \(k(x)=\Fdiff g{h_1(x),\dots,h_n(x)}(f(x))\)
  is totally monotonic.
\end{lemma}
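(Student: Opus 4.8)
The plan is to prove a statement stronger than the lemma, quantified over all integrable cones and all arities, by induction on the \emph{order of monotonicity}, and then to conclude by the inductive characterization of total monotonicity of Theorem~\ref{th:induct-total-nonotone}. Precisely, for each $m\in\Nat$ I would establish: for all integrable cones $B,C,D$, all $n\in\Nat$ and all totally monotonic $f,\List h1n:\Cuball{\Mcca B}\to\Mcca C$ and $g:\Cuball{\Mcca C}\to\Mcca D$ such that $f(x)+\sum_{i=1}^nh_i(x)\in\Cuball{\Mcca C}$ for all $x$, the composite $k(x)=\Fdiff g{h_1(x),\dots,h_n(x)}(f(x))$ is $m$-increasing. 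Since $k$ is then $m$-increasing for every $m$, Theorem~\ref{th:induct-total-nonotone} gives that $k$ is totally monotonic, which is the lemma.

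For the base case $m=0$ I must show that $k$ is increasing. I would factor $k$ as the composite of the map $x\mapsto(f(x),h_1(x),\dots,h_n(x))$ with the map $(x,\Vect u)\mapsto\Fdiff g{\Vect u}(x)$. The first map is increasing from $\Cuball{\Mcca B}$ into $\Cuball{\Csum nC}$: it is increasing componentwise because $f$ and the $h_i$ are, and it lands in the unit ball because $\Norm{(f(x),\Vect h(x))}_{\Csum nC}=\Norm{f(x)+\sum_ih_i(x)}_C\le1$ by hypothesis. The second map is increasing by Lemma~\ref{lemma:fdiff-glob-increasing}. A composite of increasing maps is increasing, so $k$ is increasing.

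For the inductive step, assuming the claim at order $m$ (for \emph{all} arities and base cones), I prove it at order $m+1$. As $k$ is already increasing by the base case, it remains to show that for each $u\in\Cuball{\Mcca B}$ the finite difference $\Fdiff ku:\Cuball{\Mcca{\Cloc Bu}}\to\Mcca D$ is $m$-increasing. The key computation is the expansion of $\Fdiff ku(x)=k(x+u)-k(x)$: writing $h_i(x+u)=h_i(x)+\Fdiff{h_i}u(x)$ and $f(x+u)=f(x)+\Fdiff fu(x)$ and applying Lemma~\ref{lemma:fdiff-sommes} to $g$ with base increment $\Fdiff fu(x)$, increment families $\Vect h(x)$ and $(\Fdiff{h_i}u(x))_i$, and base point $f(x)$, the leading term of the expansion is exactly $k(x)$. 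Hence $\Fdiff ku$ is exhibited as a finite sum of terms, each again of the composite form $\Fdiff g{\psi_1(x),\dots,\psi_p(x)}(\chi(x))$ with $p=n$ or $p=n+1$. In every such term the functions $\psi_l$ and $\chi$ are sums and restrictions of the maps $f$, $h_i$, $\Fdiff fu$ and $\Fdiff{h_i}u$, which are totally monotonic on $\Cloc Bu$ by hypothesis, by Lemma~\ref{lemma:fdiff-tot-mono}, and by closure of total monotonicity under sums and under restriction to $\Cuball{\Mcca{\Cloc Bu}}\subseteq\Cuball{\Mcca B}$; moreover the required constraint $\chi(x)+\sum_l\psi_l(x)\in\Cuball{\Mcca C}$ holds because in each term this element is $\le$, in the cone order, $f(x+u)+\sum_ih_i(x+u)\in\Cuball{\Mcca C}$, whence its norm is $\le1$ by \Cnormpr{}. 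Thus each term is a composite of the prescribed form over the base cone $\Cloc Bu$, and by the induction hypothesis at order $m$ each term is $m$-increasing; since the $m$-increasing functions are closed under finite sums (a short auxiliary induction on $m$ using linearity of $v\mapsto\Fdiff vu$), $\Fdiff ku$ is $m$-increasing, and therefore $k$ is $(m+1)$-increasing.

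The main obstacle is the bookkeeping of the inductive step rather than any single hard computation: differentiating $k$ raises the order of the inner finite difference from $n$ to $n+1$, so a naive induction on $n$ fails. Stratifying instead by the monotonicity order $m$ (with the statement quantified over all $n$ and all base cones) is what breaks the apparent circularity, since establishing $(m+1)$-increasingness of $k$ only requires $m$-increasingness of the higher-arity composites appearing in $\Fdiff ku$. The only genuinely computational ingredient is the combinatorial identity of Lemma~\ref{lemma:fdiff-sommes}, which is already available.
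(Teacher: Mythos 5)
Your proposal is correct and follows essentially the same route as the paper: an induction on the monotonicity order $p$, with the statement quantified over all arities $n$ (and implicitly over all base cones, since the inductive step passes to $\Cloc Bu$), using Lemma~\ref{lemma:fdiff-sommes} to expand $\Fdiff ku$ into a finite sum of composites of the same shape to which the inductive hypothesis applies, and concluding via Theorem~\ref{th:induct-total-nonotone}. The one point you make explicit that the paper leaves implicit is the closure of $p$-increasing functions under finite sums, which is indeed needed and unproblematic.
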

\begin{proof}
  With the notations and conventions of the statement, we prove by
  induction on $p$ that, for all $p\in\Nat$, for all $n\in\Nat$, for
  all $f,\List h1n,g$ which are totally monotonic and satisfy
  $\forall x\in\Cuball{\Mcca B}\
  f(x)+\sum_{i=1}^nh_i(x)\in\Cuball{\Mcca C}$, the function $k$ is
  $p$-increasing.

  For $p=0$, the property results from
  Lemma~\ref{lemma:fdiff-glob-increasing}.

  We assume the property for $p$ and prove it for $p+1$. Let
  $u\in\Cuball{\Mcca B}$ we have to prove that the function
  $\Fdiff ku$ is $p$-increasing from $\Cloc Bu$ to $D$.
  Let $x\in\Cuball{\Mcca{\Cloc Bu}}$, we have %
  \begin{align*}
    \Fdiffvar kxu
    &=\Fdiffvar g{f(x+u)}{h_1(x+u),\dots,h_n(x+u)}-
      \Fdiffvar g{f(x)}{h_1(x),\dots,h_n(x)}\\
    &=\Fdiffvar g{f(x)+\Fdiffvar fxu}{h_1(x)+\Fdiffvar{h_1}{x}{u},
      \dots,h_n(x)+\Fdiffvar{h_n}{x}{u}}
    \\
    &\quad-\Fdiffvar g{f(x)}{h_1(x),\dots,h_n(x)}
    \text{\quad by definition of }\Fdiff{h_i}u\\
    &=\Fdiffvar g{f(x)}{\Fdiffvar fxu,h_1(x)+\Fdiffvar{h_1}{x}{u},
      \dots,h_n(x)+\Fdiffvar{h_n}{x}{u}}\\
    &\quad+\Fdiffvar g{f(x)+h_1(x)}{\Fdiffvar{h_1}{x}{u},
      h_2(x)+\Fdiffvar{h_2}xu,\dots,h_n(x)+\Fdiffvar{h_n}xu}\\
    &\quad+\Fdiffvar g{f(x)+h_2(x)}{h_1(x),\Fdiffvar{h_2}xu,
      h_3(x)+\Fdiffvar{h_3}xu,\dots,\\
    &\hspace{18em}h_n(x)+\Fdiffvar{h_n}xu}\\
    &\quad+\cdots+\Fdiffvar g{f(x)+h_n(x)}{h_1(x),
      \dots,h_{n-1}(x),\Fdiffvar{h_n}{x}{u}}
  \end{align*}
  by Lemma~\ref{lemma:fdiff-sommes}, observing that the first term of
  the sum which appears in Lemma~\ref{lemma:fdiff-sommes} is
  annihilated by the subtraction above.

  We can apply the inductive hypothesis to each of the terms of this
  sum.
  Let us consider for instance the first of these expressions:
  \begin{align*}
    \Fdiffvar g{f(x)}{\Fdiffvar fxu,h_1(x)+\Fdiffvar{h_1}{x}{u},
      \dots,h_n(x)+\Fdiffvar{h_n}{x}{u}}\,.
  \end{align*}
  We know that the functions \(h'_1,\dots,h'_{n+1}\) defined by
  \(h'_1(x)=\Fdiffvar fxu\),
  \(h'_2(x)=h_1(x)+\Fdiffvar{h_1}{x}{u}=h_1(x+u)\),\dots,
  \(h'_{n+1}(x)=h_n(x)+\Fdiffvar{h_n}{x}{u}=h_n(x+u)\) are totally
  monotonic from $\Mcca{\Cloc Bu}$ to $\Mcca C$: %
  this results from Lemma~\ref{lemma:fdiff-tot-mon}.
  Moreover we have
  \[
    \forall x\in\Cuball{\Mcca B}\
    f(x)+\sum_{i=1}^{n+1}h'_i(x)=f(x+u)+\sum_{i=1}^nh_i(x+u)\in\Cuball{\Mcca
      C}\,.
  \]
  Therefore the inductive hypothesis applies and we know that the
  function
  \[
    x\mapsto\Fdiffvar g{f(x)}{\Fdiffvar
      fxu,h_1(x)+\Fdiffvar{h_1}{x}{u},
      \dots,h_p(x)+\Fdiffvar{h_p}{x}{u}}
  \] %
  is $p$-increasing.
  The same reasoning applies to all terms and hence the function
  $\Fdiff ku$ is $p$-increasing from $\Cuball{\Mcca{\Cloc Bu}}$ to
  $\Mcca C$, as contended.
\end{proof}

\begin{lemma} %
  \label{lemma:lin-tot-mon-is-tot-mon}
  Let \(f:\Mcca B\times\Cuball{\Mcca C}\to\Mcca D\) be linear in its
  first argument and totally monotonic in its second argument.
  Then, when restricted to %
  \(\Cuball{\Mcca B}\times\Cuball{\Mcca C}\), the function \(f\) is
  totally monotonic.
\end{lemma}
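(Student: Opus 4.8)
The plan is to regard the restriction of $f$ as a map on the unit ball of the product cone $\With{\Mcca B}{\Mcca C}$. By Theorem~\ref{th:cone-product} the product norm is a supremum, so $\Cuball{\With{\Mcca B}{\Mcca C}}=\Cuball{\Mcca B}\times\Cuball{\Mcca C}$, and total monotonicity of $f$ is exactly the inequality~\Eqref{eq:totally-monotone-def} for base points and increments in this product ball. First I would verify that inequality directly, using linearity in the first argument to reduce everything to finitely many instances of the total monotonicity of $f$ in its \emph{second} argument.

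Concretely, fix $n\in\Nat$, a base point $z=(x,y)$ and increments $w_i=(u_i,v_i)$ with $z+\sum_i w_i\in\Cuball{\With{\Mcca B}{\Mcca C}}$, that is $x+\sum_i u_i\in\Cuball{\Mcca B}$ and $y+\sum_i v_i\in\Cuball{\Mcca C}$. Writing $g_0=f(x,\cdot)$ and $g_j=f(u_j,\cdot)$ for $1\le j\le n$ — each totally monotonic on $\Cuball{\Mcca C}$, since $f$ is totally monotonic in its second argument for \emph{every} value of its first argument in $\Mcca B$ — I would expand, for $\epsilon\in\{\Pl,\Mn\}$,
\[
\Fdiffs\epsilon f{\Vect w}(z)=\sum_{I\in\Epset\epsilon n}f\Big(x+\sum_{i\in I}u_i,\ y+\sum_{i\in I}v_i\Big)
\]
by linearity in the first coordinate, rewriting each summand as $f(x,y+\sum_{i\in I}v_i)+\sum_{j\in I}f(u_j,y+\sum_{i\in I}v_i)$. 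The part carrying $g_0$ is exactly $\Fdiffs\epsilon{g_0}{\Vect v}(y)$. For the remaining double sum I would exchange the two summations and, for each fixed $j$, apply the bijection $\Inset j$ of Lemma~\ref{lemma:inset-corresp} between $\Epset\epsilon{n-1}$ and $\{I\in\Epset\epsilon n\St j\in I\}$, which preserves the sign $\epsilon$; pulling out $v_j$ and re-indexing the surviving directions $(v_i)_{i\neq j}$ identifies the inner sum with $\Fdiffs\epsilon{g_j}{(v_i)_{i\neq j}}(y+v_j)$. This gives, simultaneously for both signs, the decomposition
\[
\Fdiffs\epsilon f{\Vect w}(z)=\Fdiffs\epsilon{g_0}{\Vect v}(y)+\sum_{j=1}^n\Fdiffs\epsilon{g_j}{(v_i)_{i\neq j}}(y+v_j).
\]

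To conclude I would check that each configuration occurring here sits in $\Cuball{\Mcca C}$: for $g_0$ the total sum $y+\sum_i v_i$ lies in the ball by hypothesis, and for $g_j$ the point $y+v_j$ lies in the ball by $\omega$-closedness (hence downward-closedness) of $\Cuball{\Mcca C}$, while $(y+v_j)+\sum_{i\neq j}v_i=y+\sum_i v_i$ is again in the ball. Thus total monotonicity of each $g_j$ ($0\le j\le n$) yields the corresponding inequality $\Fdiffs\Mn{g_j}{\cdots}(\cdot)\le\Fdiffs\Pl{g_j}{\cdots}(\cdot)$, and summing these $n+1$ inequalities—valid because addition is monotone (Lemma~\ref{lemma:cones-cone-add-scal})—produces $\Fdiffs\Mn f{\Vect w}(z)\le\Fdiffs\Pl f{\Vect w}(z)$, which is precisely~\Eqref{eq:totally-monotone-def}. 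The main obstacle I anticipate is the combinatorial bookkeeping in the middle step: one must confirm that isolating the subsets of $\Epset\epsilon n$ containing a fixed index $j$ corresponds under $\Inset j$ to $\Epset\epsilon{n-1}$ with the \emph{same} parity $\epsilon$ (so the $+$ and $-$ halves align), and that deleting $v_j$ with the induced re-indexing turns $\sum_{i\in I}v_i$ into $v_j+\sum_k (v_i)_{i\neq j}$-terms. Lemma~\ref{lemma:inset-corresp} supplies exactly this sign-preserving bijection, so once the indices are tracked carefully the two expansions assemble into the displayed decomposition and the result follows by summing inequalities.
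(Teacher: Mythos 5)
Your proof is correct and follows essentially the same route as the paper's: expand by linearity in the first argument, treat the $f(x,\cdot)$ part directly by total monotonicity in the second argument, and for the remaining double sum exchange summations and apply the sign-preserving bijection $\Inset j$ of Lemma~\ref{lemma:inset-corresp} to recognize, for each fixed $j$, an iterated difference of $f(u_j,\cdot)$ at base point $y+v_j$ with directions $(v_i)_{i\neq j}$, then sum the resulting inequalities. Your explicit verification that each configuration lies in $\Cuball{\Mcca C}$ is a welcome detail the paper leaves implicit.
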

\begin{proof}
  Let \(n\in\Nat\),
  \((x,y),(u_1,v_1),\dots,(u_n,v_n)\in\Mcca B\times\Mcca C\) %
  be such that %
  \((x,y)+\sum_{i=1}^n(u_i,v_i)\in\Cuball{\Mcca B}\times\Cuball{\Mcca
    C}\). For \(\epsilon\in\Eset{\Pl,\Mn}\), we have
  \begin{align*}
    \Fdiffs\epsilon f{(u_1,v_1),\dots,(u_n,v_n)}(x,y)
    &=\sum_{I\in\Epset\epsilon n}f(x+\sum_{i\in I}u_i,y+\sum_{i\in I}v_i)\\
    &=\sum_{I\in\Epset\epsilon n}f(x,y+\sum_{i\in I}v_i)+
      \sum_{I\in\Epset\epsilon n}\sum_{j\in I}f(u_j,y+\sum_{i\in I}v_i)
  \end{align*}
  by linearity of \(f\) in its first argument.
  By total monotonicity of \(f\) in its second argument we have
  \begin{align*}
    \sum_{I\in\Ppset n}f(x,y+\sum_{i\in I}v_i)
    \geq
    \sum_{I\in\Npset n}f(x,y+\sum_{i\in I}v_i)\,.
  \end{align*}
  Next, assuming that \(n>0\), we have
  \begin{align*}
    \sum_{I\in\Epset\epsilon n}\sum_{j\in I}f(u_j,y+\sum_{i\in I}v_i)
    &=\sum_{j=1}^n\sum_{\Biind{I\in\Epset\epsilon n}{j\in I}}
      f(u_j,y+\sum_{i\in I}v_i)\\
    &=\sum_{j=1}^n\sum_{I\in\Epset\epsilon{n-1}}
      f(u_j,y+\sum_{i\in\Inset j(I)}v_i)
      \text{ by Lemma~\ref{lemma:inset-corresp}}\\
    &=\sum_{j=1}^n\sum_{I\in\Epset\epsilon{n-1}}
      f(u_j,y+v_j+\sum_{i\in I}v(j)_i)
  \end{align*}
  where \((v(j)_i)_{i=1}^{n-1}\) is defined by
  \begin{align*}
    v(j)_i=
    \begin{cases}
      v_i & \text{if }i<j\\
      v_{i+1} &\text{if }i\geq j\,.
    \end{cases}
  \end{align*}
  By our assumption that \(f\) is totally monotonic in its second
  argument we have, for each \(j=1,\dots,n\),
  \begin{align*}
    \sum_{I\in\Ppset{n-1}}
    f(u_j,y+v_j+\sum_{i\in I}v(j)_i)
    \geq
    \sum_{I\in\Npset{n-1}}
    f(u_j,y+v_j+\sum_{i\in I}v(j)_i)
  \end{align*}
  from which it follows that
  \begin{align*}
    \sum_{I\in\Ppset n}\sum_{j\in I}f(u_j,y+\sum_{i\in I}v_i)
    \geq
    \sum_{I\in\Npset n}\sum_{j\in I}f(u_j,y+\sum_{i\in I}v_i)
  \end{align*}
  and hence
  \begin{align*}
    \Fdiffp f{(u_1,v_1),\dots,(u_n,v_n)}(x,y)
    \geq
    \Fdiffn f{(u_1,v_1),\dots,(u_n,v_n)}(x,y)
  \end{align*}
  for \(n>0\).
  This inequation also holds trivially for \(n=0\).
\end{proof}

\noindent 
An immediate consequence of this lemma is the following observation
which will be useful in Section~\ref{sec:analytic-functions-exp}.
\begin{lemma}
  \label{lemma:multilin-tot-mon}
  Let \(f:\prod_{i=1}^n\Mcca{B_i}\to\Mcca C\) be linear
  in each of its \(n\) arguments.
  Then the restriction of \(f\) to %
  \(\prod_{i=1}^n\Cuball{\Mcca{B_i}}\) is totally monotonic %
  \(\Cuball{\Mcca{\Bwith_{i=1}^nB_i}}\to\Mcca C\).
\end{lemma}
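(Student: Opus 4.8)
The plan is to prove the statement by induction on the arity $n$, peeling off one argument at a time and invoking Lemma~\ref{lemma:lin-tot-mon-is-tot-mon} at each inductive step; the combinatorial heart of the argument is already packaged in that lemma, so what remains is essentially bookkeeping. The first thing I would record is that the domain is the right one: since the norm on $\Bwith_{i=1}^n B_i$ is $\Norm{\Vect x}=\sup_i\Norm{x_i}$ (Theorem~\ref{th:cone-product}), one has $\Cuball{\Mcca{\Bwith_{i=1}^nB_i}}=\prod_{i=1}^n\Cuball{\Mcca{B_i}}$, so that total monotonicity of $f$ as a map $\Cuball{\Mcca{\Bwith_{i=1}^nB_i}}\to\Mcca C$ is exactly total monotonicity of its restriction to the product of the unit balls.

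For the base case $n=1$, a linear map $f:\Mcca{B_1}\to\Mcca C$ is increasing (Definition~\ref{def:monotone-scott}), which is the case $m=1$ of Definition~\ref{def:totally-monotonic}; for every order $m\geq2$ linearity gives $f(x+\sum_{i\in I}u_i)=f(x)+\sum_{i\in I}f(u_i)$, and since $f(x)$ and each $f(u_i)$ occur with the same coefficient in $\sum_{I\in\Ppset m}$ and in $\sum_{I\in\Npset m}$ (both $\Ppset m$ and $\Npset m$ have $2^{m-1}$ elements, and each index lies in $2^{m-2}$ sets of either parity), the two sides of~\Eqref{eq:totally-monotone-def} coincide. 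Hence $f$ is totally monotonic.

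For the inductive step, suppose the result holds for arity $n-1$ and let $f$ be linear in each of its $n$ arguments. I would set $C'=\Bwith_{i=2}^nB_i$ and define $F:\Mcca{B_1}\times\Cuball{\Mcca{C'}}\to\Mcca C$ by $F(x_1,(x_2,\dots,x_n))=f(x_1,\dots,x_n)$. Then $F$ is linear in its first argument because $f$ is; and for each fixed $x_1\in\Mcca{B_1}$ the map $(x_2,\dots,x_n)\mapsto f(x_1,\dots,x_n)$ is linear in each of its $n-1$ arguments, hence totally monotonic on $\prod_{i=2}^n\Cuball{\Mcca{B_i}}=\Cuball{\Mcca{C'}}$ by the induction hypothesis; \Ie~$F$ is totally monotonic in its second argument. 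Lemma~\ref{lemma:lin-tot-mon-is-tot-mon} then yields that the restriction of $F$ to $\Cuball{\Mcca{B_1}}\times\Cuball{\Mcca{C'}}$ is totally monotonic. Since the canonical flattening identifies $B_1\IWith C'$ with $\Bwith_{i=1}^nB_i$ and $\Cuball{\Mcca{B_1}}\times\Cuball{\Mcca{C'}}$ with $\prod_{i=1}^n\Cuball{\Mcca{B_i}}$, this is precisely the assertion for $f$, completing the induction.

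I do not expect any genuine obstacle here. The delicate combinatorics of splitting the iterated difference $\Fdiff f{\Vect u}$ across a sum of arguments is confined to Lemma~\ref{lemma:lin-tot-mon-is-tot-mon}, and the only point demanding a moment's care is the repeated identification of $\Cuball{\Mcca{\Bwith_{i=1}^nB_i}}$ with $\prod_{i=1}^n\Cuball{\Mcca{B_i}}$ (together with the compatibility of the associated local cones), which follows directly from the $\sup$-norm description of the product in Theorem~\ref{th:cone-product}.
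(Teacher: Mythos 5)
Your proof is correct and follows exactly the route the paper takes: the paper's entire proof reads ``Simple induction on \(n\) using Lemma~\ref{lemma:lin-tot-mon-is-tot-mon}'', and your argument is precisely that induction, with the base case (linearity implies total monotonicity, indeed with equality in~\Eqref{eq:totally-monotone-def} for orders \(\geq 2\)) and the identification \(\Cuball{\Mcca{\Bwith_{i=1}^nB_i}}=\prod_{i=1}^n\Cuball{\Mcca{B_i}}\) spelled out. Nothing to add.
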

\begin{proof}
  Simple induction on \(n\) using Lemma~\ref{lemma:lin-tot-mon-is-tot-mon}.
\end{proof}

\begin{remark}
  In this statement, and other similar ones, we restrict \(f\) to the
  unit ball not for deep reason but only because the notion of totally
  monotonic function has been defined on unit balls, see
  Definition~\ref{def:totally-monotonic}.
\end{remark}

\subsection{The cartesian closed category of integrable cones and
  stable and measurable functions}
\label{sec:stable-CCC}
A quite remarkable property of total monotonicity is that it gives
rise to cartesian \emph{closed} categories, as we will see in this
section.
We do not know if this phenomenon has been observed
before~\cite{EhrhardPaganiTasson18}.

Let \(\STAB(B,C)\) be the set of all stable and measurable functions
from \(B\) to \(C\) whose norm is \(\leq 1\).
\begin{theorem}
  If \(f\in\STAB(B,C)\) and \(g\in\STAB(C,D)\) then %
  \(g\Comp f\in\STAB(B,D)\).
\end{theorem}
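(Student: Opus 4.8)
The plan is to verify directly the four conditions that membership in $\STAB(B,D)$ requires: that $g\Comp f$ is well defined on $\Cuball{\Mcca B}$ with norm $\leq 1$, totally monotonic, $\omega$-continuous, and measurable. The first point is a preliminary observation that will be used throughout: since $\Norm f\leq 1$, for every $x\in\Cuball{\Mcca B}$ we have $\Norm{f(x)}\leq\Norm f\leq 1$, so $f(\Cuball{\Mcca B})\subseteq\Cuball{\Mcca C}$; hence $g(f(x))$ is defined for all $x\in\Cuball{\Mcca B}$, and $\Norm{g(f(x))}\leq\Norm g\leq 1$, which gives boundedness and $\Norm{g\Comp f}\leq 1$ at once.

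The heart of the argument is total monotonicity, and this is exactly what the composition machinery of Section~\ref{sec:finite-diffs} was built for. I would invoke Lemma~\ref{lemma:fdiff-comp} in the degenerate case $n=0$: the hypotheses reduce to $f:\Cuball{\Mcca B}\to\Mcca C$ and $g:\Cuball{\Mcca C}\to\Mcca D$ totally monotonic together with the side condition $\forall x\in\Cuball{\Mcca B}\ f(x)\in\Cuball{\Mcca C}$ (empty sum), which we have just checked. Since $\Fdiff g\Emptytuple=g$, the function $k$ produced by that lemma is precisely $k(x)=\Fdiff g\Emptytuple(f(x))=g(f(x))$, so $g\Comp f$ is totally monotonic. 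This is really the only nontrivial step, and the whole difficulty has been absorbed into Lemma~\ref{lemma:fdiff-comp}, whose proof is the delicate induction relying on Lemma~\ref{lemma:fdiff-sommes}; without that lemma, proving that iterated finite differences of a composite behave correctly would be the main obstacle.

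The remaining two conditions are routine. For $\omega$-continuity, note that total monotonicity already gives monotonicity (the case $n=1$); given a bounded increasing sequence $(x_n)_{n\in\Nat}$ in $\Cuball{\Mcca B}$ with lub $x$, continuity of $f$ yields $f(x)=\sup_{n\in\Nat}f(x_n)$, the sequence $(f(x_n))_{n\in\Nat}$ is increasing in $\Cuball{\Mcca C}$, and continuity of $g$ then gives $g(f(x))=\sup_{n\in\Nat}g(f(x_n))$, so $g\Comp f$ is $\omega$-continuous. For measurability, let $X\in\ARCAT$ and $\gamma\in\Cuball{\Mcca{\Cpath XB}}$; then $\Norm\gamma\leq 1$ forces $\gamma(r)\in\Cuball{\Mcca B}$ for each $r$, whence $\Norm{f(\gamma(r))}\leq\Norm f\leq 1$, so $f\Comp\gamma\in\Mcca{\Cpath XC}$ by measurability of $f$ and moreover $f\Comp\gamma\in\Cuball{\Mcca{\Cpath XC}}$. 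Measurability of $g$ then applies to $f\Comp\gamma$ and gives $g\Comp(f\Comp\gamma)=(g\Comp f)\Comp\gamma\in\Mcca{\Cpath XD}$, as required. Collecting the four verifications establishes $g\Comp f\in\STAB(B,D)$.
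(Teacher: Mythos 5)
Your proposal is correct and follows exactly the paper's own route: the paper dispatches boundedness, continuity and measurability as obvious and reduces total monotonicity to Lemma~\ref{lemma:fdiff-comp} applied with \(n=0\), which is precisely your central step. You merely spell out the routine verifications that the paper leaves implicit.
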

\begin{proof}
  The only non-obvious fact is that \(g\Comp f\) is totally monotonic,
  which is obtained by Lemma~\ref{lemma:fdiff-comp} (applied with
  \(n=0\)).
\end{proof}
\noindent 
So we have defined a category \(\STAB\) whose objects are the integrable
cones, and the morphisms are the stable and measurable functions.

\begin{lemma}
  \(\ICONES(B,C)\subseteq\STAB(B,C)\).
\end{lemma}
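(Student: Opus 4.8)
The goal is to prove that every integrable linear and continuous morphism is in particular stable and measurable, i.e.\ that the inclusion $\ICONES(B,C)\subseteq\STAB(B,C)$ holds at the level of underlying functions. Recall that an element of $\ICONES(B,C)$ is a linear, continuous, measurable and integral-preserving $f:\Mcca B\to\Mcca C$ with $\Norm f\leq 1$, whereas membership in $\STAB(B,C)$ requires $f$ (restricted to $\Cuball{\Mcca B}$) to be totally monotonic, bounded, $\omega$-continuous and measurable, again with $\Norm f\leq 1$. Several of these properties are immediate: boundedness and $\Norm f\leq 1$ hold by Lemma~\ref{lemma:line-cont-bounded} and the definition of a morphism, $\omega$-continuity is part of being a morphism in $\ICONES$, and measurability in the stable sense (preservation of measurable paths by post-composition) coincides with the measurability condition of $\MCONES$. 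So the only substantive point to check is total monotonicity.

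The plan is therefore to verify Condition~\Eqref{eq:totally-monotone-def} directly from linearity. First I would fix $n\in\Nat$ and $x,\List u1n\in\Mcca B$ with $x+\sum_{i=1}^nu_i\in\Cuball{\Mcca B}$, and consider the two sums
\begin{align*}
  \sum_{I\in\Npset n}f\Big(x+\sum_{i\in I}u_i\Big)
  \quad\text{and}\quad
  \sum_{I\in\Ppset n}f\Big(x+\sum_{i\in I}u_i\Big)\,.
\end{align*}
By linearity of $f$ we have $f(x+\sum_{i\in I}u_i)=f(x)+\sum_{i\in I}f(u_i)$ for each index set $I$, so both sides expand into linear combinations of $f(x)$ and the $f(u_i)$. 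The cleanest route is then a purely combinatorial identity: for the constant term $f(x)$, the coefficients $\Card{\Ppset n}$ and $\Card{\Npset n}$ on the two sides are equal (each is $2^{n-1}$ for $n>0$), and for each fixed $i$, the term $f(u_i)$ appears with equal multiplicity on both sides, because the number of $I\in\Ppset n$ containing $i$ equals the number of $I\in\Npset n$ containing $i$ (both count subsets of an $(n-1)$-element set with a prescribed parity, and by Lemma~\ref{lemma:inset-corresp} the map $\Inset i$ is a parity-preserving bijection). Hence the two sums are in fact \emph{equal}, and the required inequality holds with equality for $n\geq1$; the case $n=0$ reduces to $f(x)\leq f(x)$.

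The main obstacle — really the only delicate point — is getting the parity bookkeeping exactly right, in particular treating the $n=0$ and $n=1$ boundary cases (where $\Npset n$ may be empty) and justifying that the coefficient of $f(x)$ matches, rather than merely the coefficients of the $f(u_i)$. I would handle this by invoking Lemma~\ref{lemma:inset-corresp}, which already packages the bijection $\Inset i:\Epset\epsilon{n-1}\to\{J\in\Epset\epsilon n:i\in J\}$ that I need, so that no fresh combinatorial computation is required. An alternative, perhaps slicker, presentation would be to note that total monotonicity for a linear $f$ is the special case $n$ arbitrary of Lemma~\ref{lemma:multilin-tot-mon} with a single argument, since a linear map is in particular $1$-linear; applying that lemma (or directly Lemma~\ref{lemma:lin-tot-mon-is-tot-mon} with a trivial first argument) yields total monotonicity of the restriction of $f$ to $\Cuball{\Mcca B}$ at once. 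Either way the conclusion is that $f$ is totally monotonic, whence $f\in\STAB(B,C)$, completing the proof.
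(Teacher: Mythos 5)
Your proposal is correct and matches the paper's approach: the paper's entire proof is the one-line remark that linearity clearly implies total monotonicity, and you have simply spelled out that verification (your alternative route via Lemma~\ref{lemma:multilin-tot-mon} with $n=1$ is also valid and non-circular, since that lemma is established earlier). One small slip: the claim that the two sums are \emph{equal} for all $n\geq1$ fails at $n=1$, where $\#\{I\in\Ppset 1: 1\in I\}=1$ but $\#\{I\in\Npset 1: 1\in I\}=0$ (the counts $\Card{\Epset{+}{0}}=1$ and $\Card{\Epset{-}{0}}=0$ are not equal), so the difference between the two sides is $f(u_1)\geq 0$ and the inequality is generally strict there; this is harmless since only $\leq$ is required, and equality does hold for $n\geq2$.
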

\begin{proof}
  Indeed linearity clearly implies total monotonicity.
\end{proof}
\noindent 
So we have a functor \(\Derfuns:\ICONES\to\STAB\) which acts as the
identity on objects and morphisms.
We can consider this functor as a forgetful functor since it forgets
linearity, whence its name: in \(\LL\) the purpose of the
\emph{dereliction} rules allows to forget the linearity of morphisms.
The functor \(\Derfuns\) is obviously faithful but of course not full:
see Examples~\ref{ex:cones-analytic-functions}
and~\ref{ex:tot-mon-on-bool} which provide nonlinear totally
monotonic functions.

\begin{theorem}
  The category \(\STAB\) has all products and is cartesian closed.
\end{theorem}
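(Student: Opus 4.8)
The plan is to obtain both structures by reusing the product $\Bwith_{i\in I}B_i$ that $\ICONES$ already carries and the integrable cone $\Simpls CD$ of stable and measurable functions constructed above, with $\Simpls CD$ serving as the exponential object.

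First I would treat products. The claim is that the product of a family $(B_i)_{i\in I}$ in $\STAB$ is the same object $\Bwith_{i\in I}B_i$ that is its product in $\ICONES$ (Theorem~\ref{th:cone-product}), with the same projections $\Proj i$, and with $\Stop$ as terminal object when $I=\emptyset$. The projections are linear, hence stable. Given $(f_i)$ with $f_i\in\STAB(D,B_i)$, the candidate pairing is $\Tuple{f_i}_{i\in I}=\Absm{x\in\Cuball{\Mcca D}}{(f_i(x))_{i\in I}}$. Since the cone order, the finite differences, and the $\omega$-lubs of $\Bwith_{i\in I}B_i$ are all computed componentwise (Theorem~\ref{th:cone-product}), total monotonicity and $\omega$-continuity of the pairing reduce to the same properties of each $f_i$; measurability is likewise componentwise through the tests $\Mtinj im$; and $\Norm{\Tuple{f_i}(x)}=\sup_i\Norm{f_i(x)}\leq 1$ for $x\in\Cuball{\Mcca D}$, so $\Tuple{f_i}\in\STAB(D,\Bwith_iB_i)$. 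Uniqueness is forced at the level of underlying functions, exactly as in $\ICONES$.

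Next, for cartesian closedness I would take $\Simpls CD$ as the exponential object. The evaluation map $\Ev\colon\Cuball{\Mcca{\With{(\Simpls CD)}C}}\to\Mcca D$, $\Ev(f,y)=f(y)$, is linear in its first argument (for fixed $y$ the map $f\mapsto f(y)$ is linear and continuous, since the operations and lubs of $\Simpls CD$ are pointwise) and totally monotonic in its second (for fixed $f$, $y\mapsto f(y)$ is stable); hence by Lemma~\ref{lemma:lin-tot-mon-is-tot-mon} it is totally monotonic on $\Cuball{\Mcca{\Simpls CD}}\times\Cuball{\Mcca C}=\Cuball{\Mcca{\With{(\Simpls CD)}C}}$, and its measurability is immediate from the definition of the tests $\Mtfun\gamma m$ on $\Simpls CD$. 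For currying, given $f\in\STAB(\With BC,D)$ I set $\Curry f=\Absm{x\in\Cuball{\Mcca B}}{\Absm{y\in\Cuball{\Mcca C}}{f(x,y)}}$; this is legitimate because $\Cuball{\Mcca{\With BC}}=\Cuball{\Mcca B}\times\Cuball{\Mcca C}$, and each slice $\Curry f(x)$ is stable and measurable in $y$ by restricting total monotonicity of $f$ to increments of the form $(0,v_j)$.

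The substantive step, and the main obstacle, is to show $\Curry f\in\STAB(B,\Simpls CD)$, i.e. that $\Curry f$ is totally monotonic for the \emph{stable order} of $\Simpls CD$, which by Lemma~\ref{lemma:stable-order-charact} is not the pointwise order. Since every operation in $\Simpls CD$ is pointwise, for increments $\Vect u$ in $\Mcca B$ one has $\Fdiffs\epsilon{\Curry f}{\Vect u}(x)=\Absm{y\in\Cuball{\Mcca C}}{\Fdiffs\epsilon f{(u_1,0),\dots,(u_n,0)}(x,y)}$, so the required inequality $\Fdiffn{\Curry f}{\Vect u}(x)\leq_{\Simpls CD}\Fdiffp{\Curry f}{\Vect u}(x)$ amounts exactly to the assertion that the pointwise difference $y\mapsto\Fdiff f{(u_1,0),\dots,(u_n,0)}(x,y)$ is again stable. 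Total monotonicity of this map in $y$ follows from Lemma~\ref{lemma:fdiff-tot-mon} applied to $f$ with the zero-padded increments $\Vect w=((u_1,0),\dots,(u_n,0))$: it gives total monotonicity of $\Fdiff f{\Vect w}$ on $\Cuball{\Mcca{\Cloc{(\With BC)}{\Vect w}}}$, and restricting to the fibre $(x,\cdot)$ with increments $(0,v_j)$ yields total monotonicity in $y$ (the domain condition reduces to $y\in\Cuball{\Mcca C}$ because $x+\sum_iu_i\in\Cuball{\Mcca B}$); $\omega$-continuity comes from Lemma~\ref{lemma:fun-diff-Scott} and measurability from measurability of subtraction. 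Measurability of $\Curry f$ itself I would verify directly: for $\gamma\in\Cuball{\Mcca{\Cpath XB}}$ and a test $\Mtfun\delta m$ on $\Simpls CD$ one has $(\Mtfun\delta m)(s,\Curry f(\gamma(r)))=m(s,f(\gamma(r),\delta(s)))$, measurable because $f$ is measurable and $\Tuple{\gamma,\delta}$ is a measurable path of $\With BC$; and $\Norm{\Curry f}=\Norm f\leq1$. Finally I would assemble the adjunction: uncurrying $g\mapsto\Absm{(x,y)}{g(x)(y)}$ is inverse to currying on underlying functions and lands in $\STAB(\With BC,D)$ since it is the composite of the stable pairing $\Tuple{g\Comp\Proj 1,\Proj 2}\colon\With BC\to\With{(\Simpls CD)}C$ with the stable map $\Ev$; naturality in $B,C,D$ is routine because the bijection is pointwise. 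This exhibits $\STAB(\With BC,D)\cong\STAB(B,\Simpls CD)$ naturally, so $(\Simpls CD,\Ev)$ is the exponential and $\STAB$ is cartesian closed, with only the stable-order currying argument genuinely delicate and everything else a componentwise or pointwise check.
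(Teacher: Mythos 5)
Your proposal is correct and follows the same overall architecture as the paper: products are inherited componentwise from \(\ICONES\), the exponential object is the integrable cone \(\Simpls CD\) of stable and measurable functions, and total monotonicity of \(\Ev\) is obtained from Lemma~\ref{lemma:lin-tot-mon-is-tot-mon} exactly as in the paper. The one place where you genuinely diverge is the delicate step, namely that the curried map is totally monotonic for the \emph{stable} order of \(\Simpls CD\). The paper handles this by hand: it introduces the zero-padded increments \((w'_l,u'_l)\), exhibits the explicit parity bijections \((I,J)\mapsto I\cup(J+n)\) between \((\Ppset n\times\Npset k)\cup(\Npset n\times\Ppset k)\) and \(\Npset{n+k}\) (and dually for the even case), and reads off the required inequality directly from total monotonicity of \(f\) on \(\With BC\). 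You instead observe, via Lemma~\ref{lemma:stable-order-charact}, that the stable-order inequality for \(\Curry f\) at increments \(\Vect u\) is exactly the assertion that \(y\mapsto\Fdiff f{(u_1,0),\dots,(u_n,0)}(x,y)\) is totally monotonic, and then invoke Lemma~\ref{lemma:fdiff-tot-mon} together with a restriction to the fibre with increments \((0,v_j)\). The two computations are literally the same combinatorics (the identity \(\Fdiff f{u,\Vect u}=\Fdiff{(\Fdiff f{\Vect u})}u\) iterated), but your packaging reuses the finite-difference machinery of Section~\ref{sec:finite-diffs} instead of redoing the bijection, which is arguably cleaner and makes the role of Theorem~\ref{th:induct-total-nonotone} visible. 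One small elision: the measurability of \(\Ev\) is not quite ``immediate from the definition of the tests'' --- as in the paper's proof, one must build the parameterized test \(\Mtfun{(\beta\Comp\Proj2)}{(m\Comp\Proj1)}\) on \(Y\times X\) and restrict along the diagonal, precisely because the evaluation point of a test on \(\Simpls CD\) varies with the parameter; this is routine but worth stating.
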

\begin{proof}
  If \((B_i)_{i\in I}\) is a family of integrable cones, we have
  already defined \(B=\Bwith_{i\in I}B_i\) which is the categorical
  product of the \(B_i\)'s in \(\ICONES\) (when equipped with the
  projections \(\Proj i\in\ICONES(B,B_i)\)).
  So \(\Derfuns(\Proj i)\in\STAB(B,B_i)\) for each \(i\in I\). Let
  \((f_i\in\STAB(C,B_i))_{i\in I}\), we define %
  \(f:\Cuball{\Mcca C}\to\Cuball{\Mcca B}\) by %
  \(f(x)=(f_i(x))_{i\in I}\) which is well defined by our assumption
  that %
  \(\forall i\in I\ \Norm{f_i}\leq 1\).
  Then \(f\) is easily seen to be stable because all the operations of
  \(B\), as well as its cone order relation, are defined
  componentwise.
  Measurability of \(f\) is proven as in the proof of
  Theorem~\ref{th:mcones-complete}.
  This shows that \(B\) is the categorical product of the \(B_i\)'s in
  \(\STAB\).

  Let \(B\) and \(C\) be integrable cones.
  We have defined in Section~\ref{sec:icone-stable-fun} the integrable
  cone \(\Simpls BC\) of stable and measurable functions \(B\to C\),
  we show that it is the internal hom of \(B\) and \(C\) in \(\STAB\).
  We define %
  \(\Ev:\Cuball{\Mcca{\Withp{(\Simpls BC)}{B}}}\to\Mcca C\) by %
  \(\Ev(f,x)=f(x)\).
  The total monotonicity of \(\Ev\) results from
  Lemma~\ref{lemma:lin-tot-mon-is-tot-mon}.
  We have
  \begin{align*}
    \Cuball{\Mcca{\Withp{(\Simpls BC)}{B}}}
    =\Cuball{\Mcca{(\Simpls BC)}}\times\Cuball{\Mcca B}
  \end{align*}
  by definition of the norm in the categorical product.
  It follows that \(\Norm\Ev\leq 1\).
  We prove that \(\Ev\) is measurable so let %
  \(X\in\ARCAT\) and %
  \(\delta\in\Cuball{\Mcca{\Cpath X{\Withp{(\Simpls BC)}{B}}}}\) which
  means that \(\delta=\Tuple{\eta,\beta}\) with %
  \(\eta\in\Cuball{\Mcca{\Cpath X{\Simpls BC}}}\) and %
  \(\beta\in\Cuball{\Mcca{\Cpath XB}}\), we must prove that %
  \(\Ev\Comp\delta\in\Mcca{\Cpath XC}\) so let %
  \(Y\in\ARCAT\) and \(m\in\Mcms C_Y\).
  We must prove that %
  \begin{align*}
    \phi=\Absm{(s,r)\in Y\times X}{m(s,\Ev(\delta(r)))}
    &=\Absm{(s,r)\in Y\times X}{m(s,\eta(r)(\beta(r)))}
  \end{align*}
  is measurable.
  We have %
  \(p=\Mtfun{(\beta\Comp\Proj2)}{(m\Comp\Proj1)}\in\Mcms{\Simpls
    BC}_{Y\times X}\) and by our assumption about \(\eta\) we know
  that %
  \begin{align*}
    \Absm{(r,s,r')\in X\times Y\times X}{p(s,r,\eta(r'))}
    =\Absm{(r,s,r')\in X\times Y\times X}{m(s,\eta(r')(\beta(r)))}
  \end{align*}
  is measurable and hence so is \(\phi\) and we have shown that %
  \(\Ev\in\STAB(\With{(\Simpls BC)}{B},C)\).

  We prove that %
  \((\Simpls BC,\Ev)\) is the internal hom of \(B,C\) in the cartesian
  category \(\STAB\).
  So let \(f\in\STAB(\With DB,C)\).
  For each given \(z\in\Cuball{\Mcca D}\) we see easily that %
  \(g=\Absm{x\in\Cuball{\Mcca B}}{f(z,x)}\in\Cuball{\Mcca{\Simpls
      BC}}\), %
  it remains to check that \(g\in\STAB(D,\Simpls BC)\). %
  Let us first check that \(g\) is totally monotonic so let %
  \(n\in\Nat\) and \(z,\List w1n\in\Mcca D\) be such that %
  \(z+\sum_{i=1}^nw_i\in\Cuball{\Mcca D}\).
  We must prove that %
  \begin{align*}
    h^-=\sum_{I\in\Npset n}g(z+\sum_{i\in I}w_i)
    \leq
    \sum_{I\in\Ppset n}g(z+\sum_{i\in I}w_i)=h^+
  \end{align*}
  in \(\Mcca{\Simpls BC}\).
  We use the characterization of the cone order in that cone
  given by Lemma~\ref{lemma:stable-order-charact}.
  So let \(k\in\Nat\) and let %
  \(x,\List u1k\in\Mcca B\) be such that %
  \(x+\sum_{j=1}^ku_j\in\Cuball B\).
  We must prove that %
  \begin{multline*}
    y^-=\sum_{J\in\Npset k}h^+(x+\sum_{i\in J}u_j)+
    \sum_{J\in\Ppset k}h^-(x+\sum_{i\in J}u_j)\\
    \leq
    \sum_{J\in\Ppset k}h^+(x+\sum_{i\in J}u_j)+
    \sum_{J\in\Npset k}h^-(x+\sum_{i\in J}u_j)=y^+
  \end{multline*}
  in \(\Mcca C\).
  We have
  \begin{align*}
    y^-&=\sum_{\Biind{I\in\Ppset n}{J\in\Npset k}}
    g(z+\sum_{i\in I}w_i,x+\sum_{i\in J}u_j)
    +\sum_{\Biind{I\in\Npset n}{J\in\Ppset k}}
    g(z+\sum_{i\in I}w_i,x+\sum_{i\in J}u_j)\\
    y^+&=\sum_{\Biind{I\in\Ppset n}{J\in\Ppset k}}
    g(z+\sum_{i\in I}w_i,x+\sum_{i\in J}u_j)
    +\sum_{\Biind{I\in\Npset n}{J\in\Npset k}}
    g(z+\sum_{i\in I}w_i,x+\sum_{i\in J}u_j)
  \end{align*}
  Notice that %
  \((\Ppset n\times\Npset k)\cap(\Npset n\times\Ppset k)=\emptyset\)
  and that there is a bijection
  \begin{align*}
    (\Ppset n\times\Npset k)\cup(\Npset n\times\Ppset k)
    &\to\Npset{n+k}\\
    (I,J)&\mapsto I\cup(J+n)
  \end{align*}
  and similarly that %
  \((\Ppset n\times\Ppset k)\cap(\Npset n\times\Npset k)=\emptyset\)
  and that there is a bijection
  \begin{align*}
    (\Ppset n\times\Ppset k)\cup(\Npset n\times\Npset k)
    &\to\Ppset{n+k}\\
    (I,J)&\mapsto I\cup(J+n)\,.
  \end{align*}
  We define a sequence %
  \((w'_l,u'_l)_{l=1}^{n+k}\) of elements of %
  \(\Mcca{D}\times\Mcca B\) as follows:
  \begin{align*}
    (w'_l,u'_l)=
    \begin{cases}
      (w_l,0) & \text{if }l\in\Eset{1,\dots,n}\\
      (0,u_{l-n}) & \text{if }l\in\Eset{n+1,\dots,n+k}\,.
    \end{cases}
  \end{align*}
  so that %
  \((z,x)+\sum_{l=1}^{n+k}(w'_l,u'_l)\in\Cuball{\Mcca
    D}\times\Cuball{\Mcca B}\).
  With these notations, we have
  \begin{align*}
    y^-&=\sum_{K\in\Npset{n+k}}g((z,x)+\sum_{l\in K}(w'_l,u'_l))\\
    y^+&=\sum_{K\in\Ppset{n+k}}g((z,x)+\sum_{l\in K}(w'_l,u'_l))
  \end{align*}
  and hence \(y^-\leq y^+\) since \(g\) is totally monotonic.

  The \(\omega\)-continuity of \(g\) results from
  Lemma~\ref{lemma:sup-stable-fns}.
  We prove that \(g\) is measurable
  so let \(X\in\ARCAT\) and %
  \(\delta\in\Cuball{\Mcca{\Cpath XD}}\), we must prove that %
  \(g\Comp\delta\in\Mcca{\Cpath X{\Simpls BC}}\) so let %
  \(Y\in\ARCAT\) and \(p\in\Mcms{\Simpls BC}_Y\).
  Let \(\beta\in\Cuball{\Mcca{\Cpath YB}}\) and %
  \(m\in\Mcms C_Y\) be such that \(p=\Mtfun\beta m\), we have
  \begin{align*}
    \Absm{(s,r)\in Y\times X}{p(s,g(\delta(r)))}
    &=\Absm{(s,r)\in Y\times X}{m(s,g(\delta(r))(\beta(s)))}\\
    &=\Absm{(s,r)\in Y\times X}{m(s,f(\delta(r),\beta(s)))}
  \end{align*}
  and this map is measurable because %
  \(\Tuple{\delta\Comp\Proj2,\beta\Comp\Proj1}
  \in\Cuball{\Mcca{\Cpath{Y\times X}{\With DB}}}\) and by
  measurability of \(f\).
\end{proof}

\begin{remark}
  Note that contrarily to \(\ICONES\) it is very likely that the
  category \(\SCONES\) does not have all equalizers and therefore
  is not complete.
  For instance we have \(f,g:\SCONES(\Sone,\Sone)\) given by
  \(f(x)=x\) and \(g(x)=x^2\), and the set of all
  \(x\in\Cuball{\Mcca\Sone}=\Intercc01\) such that \(f(x)=g(x)\) is
  \(\{0,1\}\) which does not look like a cone.
  It would be interesting to understand if the set of solutions of
  such an equation could be considered as some kind of manifold, with
  a local structure of integrable cone, as in differential geometry.
  The same observation applies to the category \(\ACONES\) studied in
  Section~\ref{sec:analytic-functions-exp}.
\end{remark}

So we have a functor %
\(\Simpls\_\_:\Op\STAB\times\STAB\to\STAB\) mapping \((B,C)\) to
\(\Simpls BC\) and \(f\in\STAB(B',B),g\in\STAB(C,C'))\) to %
\(\Simpls fg\in\STAB(\Simpls BC,\Simpls{B'}{C'})\) which is given by %
\((\Simpls fg)(h)=g\Comp h\Comp f\).
Observe that if \(g\in\ICONES(C,C')\) we have %
\(\Simpls fg\in\ICONES(\Simpls BC,\Simpls{B'}{C'})\) so that in the
sequel we consider only \(\Simpls\_\_\) as a functor
\(\Op\ICONES\times\ICONES\to\ICONES\) (using implicitly a
pre-composition with \(\Derfuns\)).

\begin{theorem}
  \label{th:derfuns-preserves-limits}
  The functor \(\Derfuns:\ICONES\to\STAB\) preserves all limits.
\end{theorem}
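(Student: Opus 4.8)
The plan is to use the standard reduction of limit preservation to preservation of products and equalizers. Since $\ICONES$ is complete (Theorem~\ref{th:mcones-complete}), every small diagram $D:J\to\ICONES$ has a limit, and this limit can be presented as the equalizer of the canonical pair of morphisms
\[
  \Bwith_{j\in J}D(j)\;\rightrightarrows\;\Bwith_{\alpha:j\to k}D(k)
\]
between two products. Consequently, once I know that $\Derfuns$ preserves products and equalizers, the image under $\Derfuns$ of this presentation is again the corresponding presentation of $\lim(\Derfuns\Comp D)$ in $\STAB$, and limit preservation follows. Note that this argument only uses that $\Derfuns$ sends the \emph{specific} equalizer that exists in $\ICONES$ to an equalizer in $\STAB$; it does not require $\STAB$ to have all equalizers, which it very likely does not.

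First I would record that $\Derfuns$ preserves products. Given a family $(B_i)_{i\in I}$, its product in $\ICONES$ is $\Bwith_{i\in I}B_i$ together with the projections $\Proj i$, and the construction of products in $\STAB$ given above shows that the \emph{same} object, equipped with the \emph{same} projections $\Derfuns(\Proj i)$, is the product in $\STAB$: for a family $(f_i\in\STAB(C,B_i))_{i\in I}$ the tuple $x\mapsto(f_i(x))_{i\in I}$ is stable and measurable because all operations, the cone order and the measurability structure of $\Bwith_{i\in I}B_i$ are computed componentwise. Hence $\Derfuns$ carries the $\ICONES$-product cone to the $\STAB$-product cone, as required.

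The heart of the proof, and the step I expect to be the main obstacle, is preservation of equalizers. Let $f,g\in\ICONES(C,D)$ and let $(E,e)$ be their equalizer in $\ICONES$, as built in the proofs of Theorems~\ref{th:alg-cones-equalizers} and~\ref{th:mcones-complete}: here $\Mcca E=\{x\in\Mcca C\St f(x)=g(x)\}$, with algebraic operations, norm, lubs and measurability structure all inherited from $C$, and $e$ the inclusion, which moreover \emph{reflects} the cone order. I would verify that $(E,\Derfuns e)$ is the equalizer of $\Derfuns f$ and $\Derfuns g$ in $\STAB$. So let $h\in\STAB(H,C)$ satisfy $f\Comp h=g\Comp h$ in $\STAB(H,D)$, i.e.\ $f(h(x))=g(h(x))$ for all $x\in\Cuball{\Mcca H}$; by linearity of $f,g$ this forces $h(x)\in\Mcca E$ for every $x$, so $h$ corestricts to a function $h_0:\Cuball{\Mcca H}\to\Mcca E$ with $e\Comp h_0=h$. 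The key remaining point is that $h_0\in\STAB(H,E)$. Because the iterated differences $\Fdiffp{h_0}{\Vect u}$ and $\Fdiffn{h_0}{\Vect u}$ are computed in $\Mcca E$ exactly as those of $h$ in $\Mcca C$, and since $e$ reflects the order, total monotonicity of $h_0$ is equivalent to that of $h$; likewise $\omega$-continuity transfers because lubs in $E$ agree with lubs in $C$, boundedness transfers because the norm is inherited, and measurability transfers because the paths and tests of $E$ are those of $C$. Thus $h_0$ is stable and measurable, and it is the unique such factorization by injectivity of $e$, so $\Derfuns$ preserves equalizers.

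Combining the two preservation results with the reduction of the first paragraph yields that $\Derfuns$ preserves all small limits. The only genuinely delicate verification is the transfer of total monotonicity across the corestriction $h_0$, which relies crucially on the fact established in Theorem~\ref{th:alg-cones-equalizers} that the inclusion $e$ reflects the cone order, so that inequalities between the finite-difference sums in $E$ coincide with the corresponding inequalities in $C$.
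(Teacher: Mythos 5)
Your proposal is correct and follows essentially the same route as the paper: reduce to preservation of products and equalizers (using completeness of \(\ICONES\)), observe that products are constructed identically in both categories, and show that the \(\ICONES\)-equalizer \((E,e)\) remains an equalizer in \(\STAB\) because any stable \(h\) with \(f\Comp h=g\Comp h\) corestricts to \(E\), whose order, lubs, norm and measurability structure are all inherited from the ambient cone. Your explicit remark that total monotonicity transfers because \(e\) reflects the cone order is precisely the point the paper compresses into ``the cone order relation of \(E\) is the restriction of that of \(B\).''
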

\begin{proof}
  By Theorem~\ref{th:Icones-adjoint-functor} it suffices to prove that
  \(\Derfuns\) preserves all products and all equalizers.
  Since products are defined in the same way in both categories, the
  first property is obvious, let us check the second one.

  Let \(B,C\) be objects of \(\ICONES\) and \(f,g\in\ICONES(B,C)\), we
  have already defined the equalizer \((E,e\in\ICONES(E,B))\) of
  \(f,g\) in the proof of Theorem~\ref{th:mcones-complete}.
  We just have to check that \((E,e)\) is the equalizer of \(f,g\) in
  \(\STAB\) as well.
  So let \(H\) be an integrable cone and %
  \(h\in\STAB(H,B)\) be such that \(f\Comp h=g\Comp h\).
  This simply means that
  \(h(\Cuball{\Mcca H})\subseteq\Cuball{\Mcca E}\) from which it
  follows that \(h\in\STAB(H,E)\) because the cone order relation
  of \(E\) is the restriction of that of \(B\) to
  \(\Mcca E\subseteq\Mcca B\) (and similarly for the measurability
  structure).
  Let us call \(h'\) this version of \(h\) ranging in
  \(\Cuball{\Mcca E}\) instead of \(\Cuball{\Mcca B}\), so that
  \(h=e\Comp h'\).
  It is obvious that \(h'\) is the only morphism in \(\STAB\) having
  this property.
\end{proof}
\noindent 
The study of the exponential induced by this cartesian closed
structure on the category \(\ICONES\) is developed in
Section~\ref{sec:lin-nonlin-adj}.

\section{Analytic and integrable functions on cones}
\label{sec:analytic-functions-exp}
Our goal now is to associate with \(\ICONES\) another cartesian closed
category based on a notion of morphisms which are analytic in the sense
that they are limits of polynomial functions.
These analytic functions are actually stable and measurable, but their
definition is based on a notion of multilinear maps%
\footnote{As in~\cite{KerjeanTasson18} but without the support of
  complex analysis.} %
in \(\ICONES\) which preserve integrals so that analytic functions
have an implicit ``integral preservation'' property%
\footnote{A property that we don't really know yet how to express
  simply and directly in terms of the functions; of course it is not
  plain integral preservation which cannot be expected from non-linear
  maps.
  We know that it is a property of the analytic functions themselves
  because the symmetric multilinear functions of their Taylor
  expansion at \(0\) are associated with analytic functions in a
  unambiguous way by means of standard polarization formulas as we shall
  see.
}
that general stable and measurable functions don't have.

\subsection{The cone of multilinear and symmetric functions}
The basic ingredient for defining our analytic functions is the notion
of multilinear morphisms.
More precisely, they will allow first to define homogeneous polynomial
functions (obtained by applying an \(n\)-linear function to
``diagonal'' tuples \((x,\dots,x)\)), and then analytic functions as
converging sums thereof.

\begin{definition}
  Let \(\List B1n,C\) be integrable cones.
  A function %
  \(f:\prod_{i=1}^n\Mcca{B_i}\to\Mcca C\) is said to be \emph{multilinear and continuous}
  if it is linear and continuous, separately, with respect to each of
  its \(n\) arguments.

  Observe that when this holds, \(f\) is bounded (use for instance
  Lemma~\ref{lemma:line-cont-bounded} and monoidal closedness in a
  proof by induction on \(n\)).
  One says that \(f\) is \emph{symmetric} if %
  \(B_1=\cdots=B_n=B\) and, for all \(\sigma\in\Symgrp n\) (the group
  of permutations on \(\Eset{1,\dots,n}\)), one has %
  \begin{align*}
    \forall \List x1n\in\Mcca B\quad %
    f(x_1,\dots,x_n)=f(x_{\sigma(1)},\dots,x_{\sigma(n)})\,.
  \end{align*}
  One says that \(f\) is \emph{measurable} if for all %
  \(X\in\ARCAT\) and \((\beta_i\in\Mcca{\Cpath X{B_i}})_{i=1}^n\), %
  one has %
  \(f\Comp\Tuple{\beta_1,\dots,\beta_n}\in\Mcca{\Cpath XC}\).
  Lastly, one says that \(f\) is \emph{integrable} (or that it \emph{preserves integrals})
  if it is separately integrable with respect to each of its arguments.
\end{definition}

The multilinear, continuous, symmetric, measurable and integrable
functions %
\(\Mcca B^n\to\Mcca C\) are easily seen to form a cone \(\Mlsym nBC\)
with operations defined pointwise and norm defined by
\begin{align*}
  \Norm f=\sup\{\Norm{f(x_1,\dots,x_n)}\St \List x1n\in\Cuball{\Mcca B}\}\,.
\end{align*}
We equip this cone with a measurability structure
\((\Mcms{\Mlsym nBC}_X)_{X\in\ARCAT}\) where \(\Mcms{\Mlsym nBC}_X\)
is the set of all %
\(p=\Mtlfun{\Vect\beta}{m}\) where %
\(\Vect\beta=(\beta_i\in\Mcca{\Cpath XB})_{i=1}^n\) and %
\(m\in\Mcms C_X\), given by %
\(p(f)=\Absm{r\in X}{m(r,f(\beta_1(r),\dots,\beta_n(r)))}\).
The order relation in this cone is the pointwise order as easily
checked.

Notice last that this cone is integrable.
Let indeed \(X\in\ARCAT\), \(\mu\in\Mcca{\Cmeas(X)}\) %
and \(\eta\in\Mcca{\Cpath X{\Mlsym nBC}}\) and let us define a
function \(f:\Mcca B^n\to\Mcca C\) by
\begin{align*}
  f(\List x1n)=\int\eta(r)(\List x1n)\mu(dr)
\end{align*}
then it is easy to check as usual that \(f\) is well defined,
\(f\in\Mcca{\Mlsym nBC}\) and that \(f=\int\eta(r)\mu(dr)\).

\begin{remark}
  This integrable cone is a ``subcone'' of the integrable cone %
  \(\Limpl{B\ITens\cdots\ITens B}{C}\), but we have not developed the
  notion of subcone in the present paper.
\end{remark}

\begin{remark}
  We could define \(\Mlsym nBC\) more abstractly as follows: first,
  generalizing Definition~\ref{def:bilinear}, we define the integrable
  cone of \(n\)-linear morphisms
  \(\Limplp{B_1,\dots,B_n}C=\Limplp{B_1}{\cdots\Limpl{}{\Limpl{B_n}C}}\),
  then we observe that, when \(B_1=\cdots=B_n=B\), for each permutation
  \(\sigma\in\Symgrp n\) there is an automorphism on that cone which
  acts by permutation of the arguments of \(n\)-linear functions, and
  we define \(\Mlsym nBC\) as the equalizer of all these automorphisms
  using the completeness of \(\ICONES\).
  Of course we would have obtained the same object of \(\ICONES\)
  (possibly up to an isomorphism), but the explicit description above
  will be useful.
\end{remark}

\subsection{The cone of homogeneous polynomial functions}
As announced, this is the next basic concept in the definition of
analytic functions.
\begin{definition}
  An \(n\)-homogeneous polynomial function from \(B\) to \(C\) is a
  function %
  \(f:\Mcca B\to\Mcca C\) such that there exists %
  \(h\in\Mcca{\Mlsym nBC}\) satisfying
  \begin{align*}
    \forall x\in\Mcca B\quad f(x)=h(x,\dots,x)\,.
  \end{align*}
  Then \(h\) is called a \emph{linearization} of \(f\).
  We use \(\Monom nBC\) for the set of \(n\)-homogeneous polynomial
  functions from \(B\) to \(C\) and set \(\Mlmon n(h)=f\).
\end{definition}
Notice that we would define exactly the same class of functions
without requiring \(h\) to be symmetric.
We make this choice only to reduce the number of notions at hand.

\begin{lemma} %
  \label{lemma:monomial-tot-mono}
  If \(f\in\Monom nBC\) then the restriction of \(f\) to %
  \(\Cuball{\Mcca B}\) is totally monotonic.
\end{lemma}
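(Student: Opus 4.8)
The plan is to realize $f$ as the composite of a linear map and a multilinear one, and then to invoke the composition lemma for totally monotonic functions. Fix a linearization $h\in\Mcca{\Mlsym nBC}$ of $f$, so that $f(x)=h(\Rep xn)$ for every $x\in\Mcca B$. Introduce the diagonal map $\Delta\colon\Mcca B\to\Mcca{\Bwith_{i=1}^nB}$ defined by $\Delta(x)=\Rep xn$, which is patently linear and continuous, and note that $f=h\Comp\Delta$. The whole argument then reduces to checking that $\Delta$ and $h$ are totally monotonic on the relevant unit balls and that their composition is admissible in the sense required by Lemma~\ref{lemma:fdiff-comp}.

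First I would record that $\Delta$ maps $\Cuball{\Mcca B}$ into $\Cuball{\Mcca{\Bwith_{i=1}^nB}}$: by the description of the product in Theorem~\ref{th:cone-product}, the norm on $\Bwith_{i=1}^nB$ is $\Norm{\Vect x}=\sup_i\Norm{x_i}$, whence $\Norm{\Delta(x)}=\Norm x\leq 1$ whenever $\Norm x\leq 1$. Viewing $\Delta$ as a function linear in its single argument, Lemma~\ref{lemma:multilin-tot-mon} (applied with a single factor) shows that its restriction to $\Cuball{\Mcca B}$ is totally monotonic. Likewise, since $h$ is $n$-linear, Lemma~\ref{lemma:multilin-tot-mon} applies directly and gives that the restriction of $h$ to $\Cuball{\Mcca{\Bwith_{i=1}^nB}}$ is totally monotonic.

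Finally I would combine these two facts through Lemma~\ref{lemma:fdiff-comp} applied with its parameter $n=0$, taking the outer function to be $h$, the inner function to be $\Delta$, and the intermediate family of functions to be empty. In that degenerate case the hypothesis ``$\Delta(x)+\sum_{i}\cdots\in\Cuball{\Mcca{\Bwith_{i=1}^nB}}$ for all $x\in\Cuball{\Mcca B}$'' is exactly the inclusion $\Delta(\Cuball{\Mcca B})\subseteq\Cuball{\Mcca{\Bwith_{i=1}^nB}}$ established above, and the conclusion is that $x\mapsto\Fdiff h\Emptytuple(\Delta(x))$ is totally monotonic. Since $\Fdiff h\Emptytuple=h$, this function equals $h(\Delta(x))=h(\Rep xn)=f(x)$, which is precisely the claim.

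The argument is essentially bookkeeping: the only points needing genuine care are that the diagonal really lands in the unit ball of the product (which hinges on the supremum norm of $\Bwith$) and that the image condition of the composition lemma is met exactly in the case with no intermediate functions. Neither symmetry, measurability, nor integrability of $h$ is used here, so the lemma is purely about the order-theoretic content of multilinearity, and I do not expect a substantive obstacle beyond verifying these inclusions.
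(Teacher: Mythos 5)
Your proof is correct and follows essentially the same route as the paper: write $f=h\Comp d$ with $d$ the diagonal, note that $d$ is linear of norm $\leq 1$ into the product (hence totally monotonic on the unit ball) and that $h$ is totally monotonic on $\Cuball{\Mcca B}^n$ by Lemma~\ref{lemma:multilin-tot-mon}, and conclude by closure of total monotonicity under composition. You merely make explicit the appeal to Lemma~\ref{lemma:fdiff-comp} with $n=0$, which the paper leaves implicit.
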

\begin{proof}
  Let \(h\) be a linearization of \(f\).
  By Lemma~\ref{lemma:multilin-tot-mon} we know that the restriction
  of \(h\) to \(\Cuball{\Mcca B}^n\to\Mcca C\) is totally monotonic,
  and since the diagonal map \(d:\Mcca B\to\Mcca B^n\) is linear of
  norm \(\leq 1\), the map \(f=h\Comp d\) is totally monotonic.
\end{proof}

\begin{lemma} %
  \label{lemma:unique-linearization}
  An \(n\)-homogeneous polynomial function \(f\) has exactly one
  linearization \(\Linhp nf\).
  Moreover
  \begin{align*}
    \Norm{\Linhp nf}\leq\frac{n^n}{\Factor n}\Norm f
  \end{align*}
  where \(\Norm f=\sup_{x\in\Cuball{\Mcca B}}\Norm{f(x)}\).
\end{lemma}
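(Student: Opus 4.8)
The plan is to prove both assertions at once through a polarization identity that expresses the symmetric linearization in terms of the iterated finite differences of $f$ studied in Section~\ref{sec:finite-diffs}. Fix a symmetric linearization $h\in\Mcca{\Mlsym nBC}$ of $f$ (one exists by the definition of $\Monom nBC$). For $\Vect u=(\List u1n)$ with $\sum_{i=1}^n u_i\in\Cuball{\Mcca B}$, I would expand each term $f(\sum_{i\in I}u_i)=h(\sum_{i\in I}u_i,\dots,\sum_{i\in I}u_i)$ by multilinearity and symmetry of $h$, and regroup the resulting sum over all maps $\phi\colon\{1,\dots,n\}\to I$ according to their image $T=\phi(\{1,\dots,n\})$. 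Summing over $I$ with the signs encoded in $\Ppset n$ and $\Npset n$, the elementary inclusion–exclusion identity $\sum_{I\supseteq T}(-1)^{n-|I|}=0$ for $T\neq\{1,\dots,n\}$ annihilates every contribution except that of the bijections $\phi$, each of which yields $h(u_1,\dots,u_n)$ by symmetry. This produces
\begin{align*}
  \Fdiffp f{\Vect u}(0)=\Factor n\,h(u_1,\dots,u_n)+\Fdiffn f{\Vect u}(0),
\end{align*}
an equality between sums of nonnegative terms, and hence, since $f$ restricted to $\Cuball{\Mcca B}$ is totally monotonic by Lemma~\ref{lemma:monomial-tot-mono},
\begin{align*}
  \Fdiff f{\Vect u}(0)=\Factor n\,h(u_1,\dots,u_n).
\end{align*}
(The reindexing of the maps with a prescribed element in their image can be handled through Lemma~\ref{lemma:inset-corresp}, as in the proof of Lemma~\ref{lemma:lin-tot-mon-is-tot-mon}.)

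Uniqueness then follows at once: the right-hand side above is determined by $f$ for every $\Vect u$ with $\sum_i u_i\in\Cuball{\Mcca B}$, so cancelling the positive scalar $\Factor n$ (using the $\Realp$-semimodule structure of $\Mcca C$) shows that any two symmetric linearizations of $f$ agree on all such tuples, and multilinear rescaling extends this to all of $\Mcca B^n$. Thus $h$ is the unique linearization, which we denote $\Linhp nf$.

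For the norm estimate I would exploit the positivity of cones, which makes the argument shorter than its Banach-space analogue. Given $\List x1n\in\Cuball{\Mcca B}$, set $u_i=\frac 1n x_i$; then each $u_i\in\Cuball{\Mcca B}$ and $\Norm{\sum_{i=1}^n u_i}=\frac 1n\Norm{\sum_i x_i}\leq 1$, so $\sum_i u_i\in\Cuball{\Mcca B}$ and the identity applies. Combining it with Lemma~\ref{lemma:fdiff-upper}, which gives $\Fdiff f{\Vect u}(0)\leq f(\sum_i u_i)$, and with the monotonicity of the norm \Cnormpr{}, I obtain
\begin{align*}
  \Factor n\,\Norm{h(u_1,\dots,u_n)}=\Norm{\Fdiff f{\Vect u}(0)}\leq\Norm{f(\textstyle\sum_i u_i)}\leq\Norm f.
\end{align*}
By multilinearity and \Cnormhr{} one has $\Norm{h(u_1,\dots,u_n)}=n^{-n}\Norm{h(x_1,\dots,x_n)}$, whence $\Norm{h(x_1,\dots,x_n)}\leq\frac{n^n}{\Factor n}\Norm f$; taking the supremum over $\List x1n\in\Cuball{\Mcca B}$ gives $\Norm{\Linhp nf}\leq\frac{n^n}{\Factor n}\Norm f$ (the case $n=0$ being trivial with the convention $0^0=1$).

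The main obstacle is the polarization identity itself: unlike in a vector space, the usual signed polarization formula cannot be manipulated freely in a cone, so I must present it as the genuine finite difference $\Fdiff f{\Vect u}(0)$ — whose existence is guaranteed precisely by total monotonicity — rather than as an unconstrained alternating sum. Keeping every intermediate expression as a sum of nonnegative terms until the single final cancellation of $\Factor n$, and carefully verifying the inclusion–exclusion bookkeeping over subsets, is where the care is needed; the scaling argument for the norm bound is then routine.
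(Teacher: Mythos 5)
Your proof is correct and follows essentially the same route as the paper: the polarization identity $\Fdiff f{\Vect u}(0)=\Factor n\,h(u_1,\dots,u_n)$ (which the paper merely invokes as ``the usual proof of the polarization theorem'') for uniqueness, followed by Lemma~\ref{lemma:fdiff-upper} and the homogeneous rescaling $x_i\mapsto\frac1n x_i$ for the norm bound. Your positivity-safe formulation of the polarization identity and your rescaling \emph{before} applying Lemma~\ref{lemma:fdiff-upper} (so that the hypothesis $\sum_i u_i\in\Cuball{\Mcca B}$ is genuinely met) are slightly more careful than the paper's own write-up, but the argument is the same.
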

Notice that we also have \(\Norm f\leq\Norm{\Linhp nf}\) so that we
could interpret this lemma as expressing that the norms of
\(\Mcca{\Mlsym nBC}\) and \(\Monom nBC\) are equivalent and
that these two cones are isomorphic in a weak sense (remember that in
\(\ICONES\), isomorphisms must have norm
\(\leq 1\)).
\begin{proof}
  Let \(f:\Mcca B\to\Mcca C\) be an \(n\)-homogeneous polynomial and let %
  \(h\in\Mcca{\Mlsym nBC}\) be a linearization of \(f\).
  By Lemma~\ref{lemma:monomial-tot-mono} we can define a function %
  \(h':\Mcca B^n\to\Mcca C\) by
  \begin{align*}
    h'(x_1,\dots,x_n)
    =\frac 1{\Factor n}\Big(\sum_{I\in\Ppset n}f(\sum_{i\in I}x_i)
    -\sum_{I\in\Npset n}f(\sum_{i\in I}x_i)\Big)
    =\frac 1{\Factor n}\Fdiff f{\List x1n}(0)
  \end{align*}
  and the usual proof of the polarization theorem shows that
  necessarily \(h'=h\).
  So \(\Linhp nf=h\) is completely determined by \(f\), proving our
  contention.

  Next, given \(\List x1n\in\Cuball{\Mcca B}\),  we have
  \begin{align*}
    \Norm{h(x_1,\dots,x_n)}
    &\leq\frac1{\Factor n}\Norm{f(\sum_{i=1}^nx_i)}
    \text{\quad by Lemma~\ref{lemma:fdiff-upper}}\\
    &=\frac{n^n}{\Factor n}
      \Norm{h(\frac 1n\sum_{i=1}^nx_i,\dots,\frac 1n\sum_{i=1}^nx_i)}\\
    &=\frac{n^n}{\Factor n}
      \Norm{f(\frac 1n\sum_{i=1}^nx_i)}\\
    &\leq\frac{n^n}{\Factor n}\Norm f
  \end{align*}
  since \(\Norm{\frac 1n\sum_{i=1}^nx_i}\leq 1\).
\end{proof}
\noindent 
The set \(\Monom nBC\) is canonically a precone.
Indeed if \(f,g\in\Monom nBC\) then \(f+g\) (defined pointwise)
belongs to \(\Monom nBC\) because clearly
\(f+g=\Mlmon n(\Linhp nf+\Linhp ng)\) and we know that
\(\Linhp nf+\Linhp ng\in\Mlsym nBC\).
Multiplication by a scalar in \(\Realp\) is dealt with similarly.
Notice that this reasoning also shows that the maps %
\(\Linhp n:\Monom nBC\to\Mcca{\Mlsym nBC}\) and %
\(\Mlmon n:\Mcca{\Mlsym nBC}\to\Monom nBC\) are linear.

We define \(\Norm\__{\Monom nBC}\) as usual by %
\(\Norm f_{\Monom nBC}=\sup_{x\in\Cuball{\Mcca B}}\Norm{f(x)}_C\) %
so that clearly %
\(\Norm f_{\Monom nBC}\leq\Norm{\Linhp nf}_{\Mlsym nBC}\).
Equipped with this norm, \(\Monom nBC\) is a cone: the only non obvious
property is completeness, so let %
\((f_k\in\Cuball{\Monom nBC})_{k=1}^\infty\) be an increasing sequence
and let \(f:\Mcca B\to\Mcca C\) be the pointwise lub of this sequence
which is well defined since for each \(x\in\Mcca B\) we have
\begin{align*}
  \Norm{f_k(x)}
  \leq\Norm{\Linhp n{f_k}}\Norm x^n
  \leq\frac{n^n}{\Factor n}\Norm{f_k}\Norm x^n
  \leq\frac{n^n}{\Factor n}\Norm x^n\,.
\end{align*}
The sequence \((\Linhp n{f_k})_{k=1}^\infty\) is increasing in %
\(\frac{n^n}{\Factor n}\Cuball{\Mcca{\Mlsym nBC}}\) and has therefore
a lub \(h\in\Cuball{\Mcca{\Mlsym nBC}}\) and remember that this lub is
defined pointwise on \(\Mcca B^n\).
It follows that \(f=\Mlmon n(h)\in\Monom nBC\) and that we have %
\begin{align*}
  \forall x\in\Mcca B\quad f(x)=\sup_{k}f_k(x)\,.
\end{align*}
Since \(\forall k\ f_k\leq f\) by monotonicity of \(\Mlmon n\) it
follows that \(f=\sup_kf_k\).
Last observe that \(\Norm f\leq 1\) which ends the proof of
\(\omega\)-completeness of \(\Cuball{\Monom nBC}\).

So we have shown that \(\Monom nBC\) is a cone, and also that the
linear maps \(\Linhp n\) and \(\Mlmon n\) are \(\omega\)-continuous. %

\begin{remark}
  It is important to notice that, contrarily to %
  \(\Mlsym nBC\), the cone order relation of \(\Monom nBC\) is
  not the pointwise order.
  As an example take \(n=2\), \(B=\With\Sone\Sone\), \(C=\Sone\), and
  consider \(f,g\in\Monom nBC\) given by \(f(x,y)=2xy\) and
  \(g(x,y)=x^2+y^2\).
  Then %
  \begin{align*}
    \Linhp 2f((x_1,y_1),(x_2,y_2))
    &=\frac12(f(x_1+x_2,y_1+y_2)-f(x_1,y_1)-f(x_2,y_2)))\\
    &=x_1y_2+x_2y_1
  \end{align*}
  and similarly \(\Linhp 2g((x_1,y_1),(x_2,y_2))=x_1x_2+y_1y_2\) and
  therefore we do not have \(\Linhp2f\leq\Linhp2g\) %
  (we have \(\Linhp 2f((1,0),(0,1))=1\) and
  \(\Linhp 2g((1,0),(0,1))=0\)) %
  whereas
  \(\forall(x,y)\in\Mcca{\With\Sone\Sone}=\Realp\times\Realp\
  f(x,y)\leq g(x,y)\).
\end{remark}

Given \(X\in\ARCAT\), \(\beta\in\Mcca{\Cpath XB}\) and %
\(m\in\Mcms C_X\) we define
\(\Mtfun\beta m:X\times\Monom nBC\to\Realp\) as usual by
\((\Mtfun\beta m)(r,f)={m(r,f(\beta(r)))}\) for all
\(f\in\Monom nBC\).
Notice that
\begin{align*}
  \Absm {r\in X}{m(r,f(\beta(r)))}
  =\Absm {r\in X}{m(r,\Linhp nf(\Rep{\beta(r)}n))}
\end{align*}
and this function is measurable because %
\(\Linhp nf\in\Mcca{\Mlsym nBC}\), which implies that %
\(\Linhp nf\Comp\Tuple{\beta,\dots,\beta}\in\Mcca{\Cpath XC}\). 
Then it is easily checked that setting %
\(\cM_X =\Eset{\Mtfun\beta m\St\beta\in\Mcca{\Cpath XB} \text{ and
  }m\in\Mcms C_X}\) we define a measurability structure on %
\(\Monom nBC\) so that \(E=(\Monom nBC,(\cM_X)_{X\in\ARCAT})\) is a
measurable cone that we denote as \(\Monomic nBC\).

\begin{lemma}
  \label{lemma:Linhp-norm}
  \(\Linhp n\in\MCONES(\Monom nBC,\frac{n^n}{\Factor n}\Mlsym nBC)\).
\end{lemma}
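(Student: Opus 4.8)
The plan is to verify the four defining conditions for a morphism of $\MCONES$ for the linear map $\Linhp n$, viewing its codomain as the measurable cone $\frac{n^n}{\Factor n}\Mlsym nBC$, whose measurability structure coincides with that of $\Mlsym nBC$ (only the norm is rescaled, by Definition~\ref{def:mes-cone-homothetie}). Linearity and $\omega$-continuity of $\Linhp n$ have already been established in the discussion following Lemma~\ref{lemma:unique-linearization}. For the norm, Lemma~\ref{lemma:unique-linearization} gives $\Norm{\Linhp n f}\leq\frac{n^n}{\Factor n}\Norm f$, so the norm of $\Linhp n f$ computed in $\frac{n^n}{\Factor n}\Mlsym nBC$ is $\frac{\Factor n}{n^n}\Norm{\Linhp n f}\leq\Norm f$ by Definition~\ref{def:mes-cone-homothetie}, i.e. $\Norm{\Linhp n}\leq 1$. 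The only substantial point is therefore measurability: that $\Linhp n$ sends measurable paths of $\Monomic nBC$ to measurable paths of $\Mlsym nBC$.

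To prove measurability I would fix $X\in\ARCAT$ and $\eta\in\Mcca{\Cpath X{\Monomic nBC}}$ and show $\Linhp n\Comp\eta\in\Mcca{\Cpath X{\Mlsym nBC}}$. Boundedness is immediate from the norm bound: $\Norm{\Linhp n(\eta(r))}\leq\frac{n^n}{\Factor n}\Norm\eta$ for all $r\in X$. For the path condition, fix $Y\in\ARCAT$ and a test $p=\Mtlfun{\Vect\beta}{m}\in\Mcms{\Mlsym nBC}_Y$, with $\Vect\beta=(\beta_i\in\Mcca{\Cpath YB})_{i=1}^n$ and $m\in\Mcms C_Y$; I must check that $\Absm{(s,r)\in Y\times X}{p(s,\Linhp n(\eta(r)))}$ is measurable on $Y\times X$.

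The key step is to rewrite this test value by means of the polarization formula of Lemma~\ref{lemma:unique-linearization}, which expresses $\Linhp n f$ as a finite signed combination of values of $f$. Explicitly, for each $s$ and $r$,
\[
  \Linhp n(\eta(r))(\beta_1(s),\dots,\beta_n(s))
  =\frac 1{\Factor n}\Big(\sum_{I\in\Ppset n}\eta(r)(\gamma_I(s))
  -\sum_{I\in\Npset n}\eta(r)(\gamma_I(s))\Big),
\]
where $\gamma_I=\Absm{s\in Y}{\sum_{i\in I}\beta_i(s)}$ is a finite sum of measurable paths and hence lies in $\Mcca{\Cpath YB}$. Applying $m(s,\cdot)$, which is linear in its second argument by \Msmesr{}, distributes over this signed sum, so that $p(s,\Linhp n(\eta(r)))$ equals $\frac 1{\Factor n}\big(\sum_{I\in\Ppset n}m(s,\eta(r)(\gamma_I(s)))-\sum_{I\in\Npset n}m(s,\eta(r)(\gamma_I(s)))\big)$. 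Now for each $I$ the function $\Mtfun{\gamma_I}{m}$ is precisely a test in $\Mcms{\Monomic nBC}_Y$, so $\Absm{(s,r)}{m(s,\eta(r)(\gamma_I(s)))}=\Absm{(s,r)}{(\Mtfun{\gamma_I}{m})(s,\eta(r))}$ is measurable because $\eta$ is a measurable path of $\Monomic nBC$. A finite $\Real$-linear combination of measurable, finite-valued functions is measurable, which yields the desired conclusion.

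The main obstacle, and the only genuinely non-routine point, is this reduction: recognizing that a test on the cone of symmetric multilinear maps, evaluated along the linearization, becomes—via polarization together with the linearity of $m$—a fixed finite combination of tests on the cone of homogeneous polynomials applied to $\eta$. Once this is seen, each summand is handled directly by the measurability hypothesis on $\eta$, and the boundedness of all paths involved ensures every term is finite so that the signed combination presents no $\infty-\infty$ difficulty. This completes the verification that $\Linhp n$ is a morphism $\Monomic nBC\to\frac{n^n}{\Factor n}\Mlsym nBC$ in $\MCONES$.
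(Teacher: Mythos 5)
Your proof is correct and follows essentially the same route as the paper: the paper also reduces everything to measurability and then invokes the explicit polarization formula from Lemma~\ref{lemma:unique-linearization} together with the linearity of \(m(s,\_)\) and measurability of addition and subtraction in \(\Real\). You merely spell out the signed-sum decomposition into tests \(\Mtfun{\gamma_I}{m}\) in more detail than the paper's one-line justification, which is a faithful elaboration rather than a different argument.
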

\begin{proof}
  In view of what we know about \(\Linhp n\), it suffices to prove
  that \(\Linhp n:\Mcca{\Monomic nBC}\to\Mcca{\Mlsym nBC}\) is
  measurable so let \(X\in\ARCAT\) and
  \(\eta\in\Mcca{\Cpath X{\Monomic nBC}}\), we must check that
  \(\Linhp n\Comp\eta\in\Mcca{\Cpath X{\Mlsym nBC}}\).
  Let \(Y\in\ARCAT\) and \(p\in\Mcms{\Mlsym nBC}_Y\).
  We have %
  \(p=\Mtlfun{\Vect\beta}{m}\) where %
  \(\Vect\beta\in\Mcca{\Cpath YB}^n\) and \(m\in\Mcms B_Y\) so that %
  \begin{align*}
    \Absm{(s,r)\in Y\times X}{p(s,\Linhp n(\eta(r)))}
    =\Absm{(s,r)\in Y\times X}{m(s,\Linhp n(\eta(r))(\Vect\beta(s)))}
  \end{align*}
  which, %
  coming back to the characterization of \(\Linhp n\) explicitly
  provided in the proof of Lemma~\ref{lemma:unique-linearization},
  is measurable by measurability of addition and subtraction in
  \(\Real\) and linearity of \(m(s,\_)\).
\end{proof}

\begin{theorem}
  \label{th:Hpol-icone}
  \(\Monomic nBC\) is an integrable cone and %
  \[\Linhp n\in\ICONES(\Monom nBC,\frac{n^n}{\Factor n}\Mlsym nBC)\,.\]
\end{theorem}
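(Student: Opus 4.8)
The plan is to reduce everything to the already-established facts about the two mutually inverse linear $\omega$-continuous maps $\Linhp n\colon\Monom nBC\to\Mcca{\Mlsym nBC}$ and $\Mlmon n\colon\Mcca{\Mlsym nBC}\to\Monom nBC$, together with the integrability of $\Mlsym nBC$. Recall that $\Mlmon n\Comp\Linhp n=\mathrm{id}$ on $\Monom nBC$ and $\Linhp n\Comp\Mlmon n=\mathrm{id}$ on $\Mcca{\Mlsym nBC}$ (both by uniqueness of the linearization, Lemma~\ref{lemma:unique-linearization}), and that Lemma~\ref{lemma:Linhp-norm} already gives $\Linhp n\in\MCONES(\Monom nBC,\frac{n^n}{\Factor n}\Mlsym nBC)$, so that $\Linhp n$ is linear, continuous, measurable and of norm $\leq1$ into the rescaled codomain. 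Hence the only properties left to prove are that $\Monomic nBC$ is integrable and that $\Linhp n$ preserves integrals; I will obtain both from a single construction of the integral in $\Monomic nBC$ as the $\Mlmon n$-image of the integral computed in $\Mlsym nBC$.

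Fix $X\in\ARCAT$, $\eta\in\Mcca{\Cpath X{\Monomic nBC}}$ and $\mu\in\Mcca{\Cmeas(X)}$. Since $\Linhp n$ is a morphism of $\MCONES$, the path $\Linhp n\Comp\eta$ lies in $\Mcca{\Cpath X{\Mlsym nBC}}$ (the rescaling $\frac{n^n}{\Factor n}\Mlsym nBC$ has the same measurable paths and, by the homothety lemma, the same integrals as $\Mlsym nBC$). As $\Mlsym nBC$ is integrable, I may set
\[
  h=\int_X\Linhp n(\eta(r))\,\mu(dr)\in\Mcca{\Mlsym nBC},
  \qquad f=\Mlmon n(h)\in\Monom nBC,
\]
the latter membership being automatic from the definition of $\Mlmon n$.

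I then claim $f=\int\eta(r)\mu(dr)$ in $\Monomic nBC$. By definition of the integral in a measurable cone it suffices to check, for every $p\in\Mcms{\Monomic nBC}_\Measterm$, that $p(f)=\int p(\eta(r))\mu(dr)$. At arity $\Measterm$ every such $p$ is of the form $\Mtfun xm$ for some $x\in\Mcca B$ and $m\in\Mcms C_\Measterm$, with $(\Mtfun xm)(g)=m(g(x))$. The key step is to match this with the test $\Mtlfun{\Rep xn}m\in\Mcms{\Mlsym nBC}_\Measterm$, which evaluates a symmetric $n$-linear map on the diagonal $(x,\dots,x)$ and post-composes with $m$. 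Using the defining property of the integral $h$ in $\Mlsym nBC$ and then the linearization identity $\Linhp n(\eta(r))(\Rep xn)=\Mlmon n(\Linhp n(\eta(r)))(x)=\eta(r)(x)$ (valid because $\eta(r)\in\Monom nBC$), I compute
\begin{align*}
  (\Mtfun xm)(f)
  &=m(h(\Rep xn))
  =(\Mtlfun{\Rep xn}m)(h)\\
  &=\int (\Mtlfun{\Rep xn}m)(\Linhp n(\eta(r)))\,\mu(dr)
  =\int m(\eta(r)(x))\,\mu(dr)
  =\int (\Mtfun xm)(\eta(r))\,\mu(dr)\,.
\end{align*}
This proves $f=\int\eta(r)\mu(dr)$, so $\Monomic nBC$ is integrable.

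Finally, integral preservation of $\Linhp n$ falls out of the construction: since $\int\eta(r)\mu(dr)=f=\Mlmon n(h)$, applying $\Linhp n$ and using $\Linhp n\Comp\Mlmon n=\mathrm{id}$ gives $\Linhp n\bigl(\int\eta(r)\mu(dr)\bigr)=h=\int\Linhp n(\eta(r))\mu(dr)$, which is exactly the required identity (the integral on the right being the same whether taken in $\Mlsym nBC$ or in its homothety). Together with Lemma~\ref{lemma:Linhp-norm} this yields $\Linhp n\in\ICONES(\Monom nBC,\frac{n^n}{\Factor n}\Mlsym nBC)$. The only delicate point is the test-matching computation above; everything else is bookkeeping with the inverse pair $\Linhp n,\Mlmon n$ and the integrability already proven for $\Mlsym nBC$.
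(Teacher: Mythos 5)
Your proof is correct, and it hinges on the same key computation as the paper's — checking the candidate integral against the tests \(\Mtfun xm\), which on an \(n\)-homogeneous polynomial amount to the diagonal tests \(\Mtlfun{\Rep xn}m\) of \(\Mlsym nBC\) — but it is organized differently. The paper builds the integral from scratch: it defines \(f(x)=\int\eta(r)(x)\mu(dr)\) and \(h(\Vect x)=\int\Linhp n(\eta(r))(\Vect x)\mu(dr)\) by pointwise integration in \(C\), verifies directly that \(h\) is \(n\)-linear, symmetric, continuous, measurable and integrable (citing Lemma~\ref{lemma:int-mesurable} and Theorem~\ref{th:meas-path-equiv}) so that \(f=\Mlmon n(h)\in\Monom nBC\), and then obtains integral preservation of \(\Linhp n\) from the explicit polarization formula together with the commutation of integrals with finite sums and differences. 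You instead transport the whole problem through the inverse pair \((\Linhp n,\Mlmon n)\): you take \(h\) to be the integral of \(\Linhp n\Comp\eta\) in the already-integrable cone \(\Mlsym nBC\), set \(f=\Mlmon n(h)\), and integral preservation of \(\Linhp n\) then falls out in one line from \(\Linhp n\Comp\Mlmon n=\mathrm{id}\). This buys a shorter argument that avoids re-verifying the properties of \(h\) and avoids the polarization formula; it costs nothing essential, since the integrability of \(\Mlsym nBC\) and Lemma~\ref{lemma:Linhp-norm} are both established before the theorem, and the homothety \(\frac{n^n}{\Factor n}\Mlsym nBC\) indeed has the same paths and integrals as \(\Mlsym nBC\). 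The one point worth making explicit in your test-matching step is that constant paths are measurable (Lemma~\ref{lemma:cst-path}), so that \(\Mtlfun{\Rep xn}m\) really is an element of \(\Mcms{\Mlsym nBC}_\Measterm\); you use this implicitly.
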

\begin{proof}
  We prove integrability of %
  \(E=\Monomic nBC\) so let %
  \(X\in\ARCAT\), \(\eta\in\Mcca{\Cpath XE}\) and
  \(\mu\in\Mcca{\Cmeas(X)}\), we define %
  \(f:\Mcca B\to\Mcca C\) by %
  \(f(x)=\int\eta(r)(x)\mu(dr)\) using the fact that \(C\) is an
  integrable cone.
  Since \(\Linhp n\) is measurable we can define
  \(h:\Mcca B^n\to\Mcca C\) by
  \begin{align*}
    h(\Vect x)=\int\Linhp n(\eta(r))(\Vect x)\mu(dr)\,.
  \end{align*}
  which is clearly symmetric.
  It is \(n\)-linear, \(\omega\)-continuous, measurable by
  Lemma~\ref{lemma:int-mesurable}, and integrable by the first
  statement of Theorem~\ref{th:meas-path-equiv}.
  So we have \(h\in\Mcca{\Mlsym nBC}\) and
  \(h(x,\dots,x)=\int\eta(r)(x)\mu(dx)=f(x)\) for all \(x\in\Mcca B\)
  which proves that \(f\in\Mcca{\Monomic nBC}\).
  Last let \(p\in\Mcms{\Monomic nBC}_0\) so that %
  \(p=\Mtfun xm\) for some \(x\in\Mcca B\) and \(m\in\Mcms C_0\).
  We have %
  \(p(f)=m(f(x))=m(\int\eta(r)(x)\mu(dx))=\int m(\eta(r)(x))\mu(dr)\)
  by definition of an integral in \(C\).
  So \(p(f)=\int p(\eta(r))\mu(dr)\) by definition of \(p\), which
  shows that \(f\) is the integral of \(\eta\) in \(\Monomic nBC\) and
  hence that this measurable cone is also integrable.

  The integrability of \(\Linhp n\) results from its definition and
  from the fact that integrals commute with finite sums and
  differences.
\end{proof}

\subsection{The cone of analytic functions} %
\label{sec:analytic-functions}

We can finally define and study our analytic functions.
\begin{definition}
  A function \(f:\Cuball{\Mcca B}\to\Mcca C\) is analytic if it is
  bounded, and there is a sequence \((f_n\in\Monom nBC)_{n\in\Nat}\)
  such that
  \begin{align}\label{eq:ana-homo-sum}
    \forall x\in\Cuball{\Mcca B}\quad f(x)=\sum_{n=0}^\infty f_n(x)\,.
  \end{align}
  Such a sequence \((f_n)_{n\in\Nat}\) is called a \emph{homogeneous
    polynomial decomposition} of \(f\).
\end{definition}

Notice that the precise meaning of~\Eqref{eq:ana-homo-sum} is that,
for all \(x\in\Cuball{\Mcca B}\), the increasing sequence
\((\sum_{n=0}^N f_n(x))_{N\in\Nat}\) is bounded in \(\Mcca C\) (in the
sense of the norm) and has \(f(x)\) as lub.

\begin{lemma} %
  \label{lemma:unique-monomial}
  If \(f:\Cuball{\Mcca B}\to\Mcca C\) is analytic, then \(f\) has
  exactly one homogeneous polynomial decomposition.
\end{lemma}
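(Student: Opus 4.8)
The plan is to reduce the uniqueness assertion to the classical uniqueness of the coefficients of a nonnegative real power series, using the measurability tests of $C$ to pass from $\Mcca C$-valued data to $\Realp$-valued data.

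Suppose $(f_n)_{n\in\Nat}$ and $(g_n)_{n\in\Nat}$ are two homogeneous polynomial decompositions of $f$, fix $x\in\Cuball{\Mcca B}$ and fix a test $m\in\Mcms C_\Measterm$, which we regard as a linear and $\omega$-continuous map $\Mcca C\to\Realp$. Each $f_n\in\Monom nBC$ is $n$-homogeneous, since $f_n(\lambda x)=\Linhp n{f_n}(\lambda x,\dots,\lambda x)=\lambda^n f_n(x)$ by $n$-linearity of its linearization. Hence for $\lambda\in\Intercc01$ we have $\lambda x\in\Cuball{\Mcca B}$ and $f(\lambda x)=\sum_{n=0}^\infty\lambda^n f_n(x)$. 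Because the partial sums form a bounded increasing sequence and $m$ is linear and continuous, $m$ commutes with this lub, so
\[
  m(f(\lambda x))=\sum_{n=0}^\infty m(f_n(x))\,\lambda^n
\]
is a real power series in $\lambda$ with nonnegative coefficients, convergent on $\Intercc01$ and therefore of radius of convergence at least $1$. The same computation applied to $(g_n)$ gives $m(f(\lambda x))=\sum_{n=0}^\infty m(g_n(x))\lambda^n$.

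Thus for each fixed $x$ and $m$ the two series $\sum_n m(f_n(x))\lambda^n$ and $\sum_n m(g_n(x))\lambda^n$ coincide for all $\lambda\in\Intercc01$. I would then recover the coefficients one at a time: evaluating at $\lambda=0$ identifies the constant terms; subtracting them, dividing by $\lambda$, and letting $\lambda\to0^+$ (legitimate because a power series of radius of convergence $\ge1$ is continuous on $\Interco01$) identifies the degree-one terms; and an evident induction yields $m(f_n(x))=m(g_n(x))$ for every $n\in\Nat$. Since this holds for all $m\in\Mcms C_\Measterm$, the separation axiom \Mssepr{} in $C$ forces $f_n(x)=g_n(x)$ for all $x\in\Cuball{\Mcca B}$ and all $n$. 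Finally, using that $f_n$ and $g_n$ are $n$-homogeneous, this equality on the unit ball propagates to all of $\Mcca B$ (write an arbitrary element as $\lambda y$ with $y\in\Cuball{\Mcca B}$ and $\lambda\in\Realp$), so $f_n=g_n$ as elements of $\Monom nBC$, proving uniqueness.

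The only step that requires genuine care is the scalar coefficient-uniqueness, but this is the standard fact that a convergent power series is its own Taylor series; everything else is bookkeeping, namely the homogeneity of the $f_n$, the commutation of a continuous linear test with the defining lub, and the appeal to \Mssepr{}.
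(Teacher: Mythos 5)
Your proof is correct and follows essentially the same route as the paper: both reduce uniqueness to the uniqueness of the coefficients of a nonnegative real power series $\lambda\mapsto m(f(\lambda x))$ obtained by post-composing with a test $m\in\Mcms C_\Measterm$, and then conclude via \Mssepr{}. The only cosmetic difference is that the paper extracts the coefficients as $\frac1{\Factor n}\phi^{(n)}(0)$ while you use the elementary subtract-and-divide induction, and you make explicit the final (harmless) step of propagating $f_n=g_n$ from the unit ball to all of $\Mcca B$ by homogeneity.
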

\begin{proof}
  Let \((f_n)_{n\in\Nat}\) be a homogeneous polynomial decomposition
  of an analytic %
  \(f\) that we can assume without loss of generality to range in
  \(\Cuball{\Mcca C}\) since \(f\) is bounded. %
  Let \(x'\in\Mcca{\Cdual B}\) and \(x\in\Cuball{\Mcca B}\).
  Let
  \begin{align*}
    \phi:\Intercc 01 &\to\Realp\\
    t&\mapsto x'(f(tx))\,.
  \end{align*}
  We have \(\phi(t)=\sum_{n=0}^\infty x'(f_n(x))t^n\) by linearity and
  continuity of \(x'\) and hence
  \begin{align*}
    \forall n\in\Nat\quad x'(f_n(x))
    =\frac 1{\Factor n}\phi^{(n)}(0)
    =\frac 1{\Factor n}\Evreal{\frac{d^n}{dt^n}x'(f(tx))}{t=0}
  \end{align*}
  so that if \((g_n)_{n\in\Nat}\) is another homogeneous polynomial
  decomposition of \(f\) we have \(x'(f_n(x))=x'(g_n(x))\) for all
  \(x\), \(n\) and \(x'\).
  Since this holds in particular for all \(x'=m\in\Mcms B_0\) our
  claim is proven by \Mssepr.
\end{proof}
\noindent 
If \(f:\Cuball{\Mcca B}\to C\) is analytic, we use \(\Hpan n(f)\) for
the \(n\)th component of its unique homogeneous polynomial
decomposition and we set \(\Linan n=\Factor n(\Linhp n\Comp\Hpan n)\)
so that \(\Linan nf\in\Mcca{\Mlsym nBC}\) and we have
\begin{align*}
  f(x)=\sum_{n=0}^\infty\frac 1{\Factor n}\Linan nf(\Rep xn)
\end{align*}
which can be understood as the Taylor expansion of \(f\), motivating
the notation: %
\(\Linan nf\) can be understood as the \(n\)th derivative of \(f\) at
\(0\), which is an \(n\)-linear symmetric function.
As usual we say that \(f\) is measurable if, for all %
\(X\in\ARCAT\) and \(\beta\in\Mcca{\Cpath XB}\), one has %
\(f\Comp\beta\in\Mcca{\Cpath XC}\).

We define now a cone of analytic and measurable functions %
\(\Cuball{\Mcca B}\to\Mcca C\) so let \(P\) be the set of these
functions.
We define the algebraic operations on \(P\) pointwise: if %
\(f,g\in P\) then \(f+g\in P\) since %
\((f+g)(x)=f(x)+g(x)=\sum_{n=0}^\infty(\Hpan nf(x)+\Hpan ng(x))\) by
continuity of addition.
Notice that if \(f,g\in P\) then
\begin{align*}
  f\leq g\Equiv\forall n\in\Nat\quad \Linan nf\leq\Linan ng\,.
\end{align*}
since \(f\leq g\) means that %
\(\forall x\in\Cuball{\Mcca B}\ f(x)\leq g(x)\)
and \(\Absm{x\in\Cuball{\Mcca B}}{(g(x)-f(x))}\in P\).
Each map \(\Linan n:P\to\Mcca{\Mlsym nBC}\) is linear by %
Lemmas~\ref{lemma:unique-linearization}
and~\ref{lemma:unique-monomial}.

We set as usual
\begin{align*}
  \Norm f=\sup\Eset{\Norm{f(x)}\St x\in\Cuball{\Mcca B}}
\end{align*}
and define in that way a cone.
Let indeed \((f^k)_{k\in\Nat}\) be an increasing sequence in
\(\Cuball P\).
For each \(k,n\in\Nat\) we have %
\(\Norm{\Linan n{f^k}}\leq{n^n}\) by Lemma~\ref{lemma:Linhp-norm} and
the sequence %
\((\Linan n{f^k})_{k\in\Nat}\) is increasing and hence has a lub %
\(h_n\in\Mcca{\Mlsym nBC}\) and we have
\begin{align*}
  \forall\List x1n\in\Mcca B\quad h_n(\List x1n)
  =\sup_{k\in\Nat}\Linan n{f^k}(\List x1n)\,.
\end{align*}
In particular we can define the homogeneous polynomial map %
\(f_n=\Mlmon n(\frac 1{\Factor n}h_n)\), which means
\begin{align*}
  f_n(x)=\frac 1{\Factor n}h_n(\Rep xn)
  =\sup_{k\in\Nat}\frac 1{\Factor n}\Linan n{f^k}(\Rep xn)\,.
\end{align*}
Let \(f:\Cuball{\Mcca B}\to\Mcca C\) be defined by
\(f(x)=\sup_{k\in\Nat}f^k(x)\), we have
\begin{align*}
  f(x)
  &=\sup_{k\in\Nat}f^k(x)\\
  &=\sup_{k\in\Nat}\sum_{n=0}\frac 1{\Factor n}\Linan n{f^k}(\Rep xn)\\
  &=\sum_{n=0}\sup_{k\in\Nat}\frac 1{\Factor n}\Linan n{f^k}(\Rep xn)\\
  &=\sum_{n=0}^\infty f_n(x)
\end{align*}
which shows that \(f\) is analytic and is the lub \((f^k)_{k\in\Nat}\)
in \(\Cuball P\) since \(f\) is clearly measurable (as usual by the
monotone convergence theorem).

Then we define a family %
\(\cM=(\cM_X)_{X\in\ARCAT}\) of sets of measurability tests by
stipulating that \(p\in\cM_X\) if \(p=\Mtfun\beta m\) where %
\(\beta\in\Cuball{\Mcca{\Cpath XB}}\) and \(m\in\Mcms C_X\), and, if %
\(f\in P\) and \(r\in X\) then \(p(r,f)=m(r,f(\beta(r)))\).
It is easily checked that \((P,\cM)\) is a measurable cone, that we
denote as \(\Simpla BC\).

We check that \(\Simpla BC\) is integrable so let %
\(\eta\in\Mcca{\Cpath X{\Simpla BC}}\) for some \(X\in\ARCAT\) and
let \(\mu\in\Mcca{\Cmeas(X)}\).
We define a function \(f:\Cuball{\Mcca B}\to\Mcca C\) by
\begin{align*}
  \forall x\in\Cuball{\Mcca B}
  \quad f(x)=\int^C_X \eta(r)(x)\mu(dr)\,.
\end{align*}
This function is well defined because for each given %
\(x\in\Cuball{\Mcca B}\) one has %
\(\Absm{r\in X}{\eta(r)(x)}\in\Mcca{\Cpath XC}\). %
For each \(r\in X\) we can write %
\begin{align*}
  \eta(r)(x)=\sum_{n=0}^\infty
  \frac 1{\Factor n}\Linan n(\eta(r))(\Rep xn)
\end{align*}
and hence
\begin{align*}
  f(x)=\sum_{n=0}^\infty\frac1{\Factor n}
        \int^C_X\Linan n(\eta(r))(\Rep xn)\mu(dr)
      =\sum_{n=0}^\infty\frac1{\Factor n}
        \Big(\int^{\Mlsym nBC}_X\Linan n(\eta(r))\mu(dr)\Big)(\Rep xn)
\end{align*}
by definition of integrals in the integrable cone \(\Mlsym nBC\), and
hence \(f\in\Mcca{\Simpla BC}\).
Let \(p=(\Mtlfun xm)\in\Mcms{\Simpla BC}_\Measterm\) where
\(x\in\Mcca B\) and \(m\in\Mcms C_0\), we have
\begin{align*}
  p(f)
  &=m(f(x))\\
  &=m\Big(
    \sum_{n=0}^\infty \frac1{\Factor n}
    \int\Linan n(\eta(r))(\Rep xn)\mu(dr)\Big)\\
  &=\sum_{n=0}^\infty \frac1{\Factor n}
    m\Big(\int\Linan n(\eta(r))(\Rep xn)\mu(dr)\Big)
    \text{\quad by lin.~and cont.~of }m\\
  &=\sum_{n=0}^\infty\frac1{\Factor n}
    \int m(\Linan n(\eta(r))(\Rep xn))\mu(dr)
    \text{\quad by def.~of integrals in }C\\
  &=\int
    \Big(\sum_{n=0}^\infty \frac1{\Factor n}
    m(\Linan n(\eta(r))(\Rep xn))\Big)\mu(dr)
    \text{\quad by the monotone convergence th.}\\
  &=\int
    m\Big(\sum_{n=0}^\infty \frac1{\Factor n}
    \Linan n(\eta(r))(\Rep xn)\Big)\mu(dr)\\
  &=\int
    m(\eta(r)(x))\mu(dr)=\int p(\eta(r))\mu(dr)
\end{align*}
which shows that \(f=\int\eta(r)\mu(dr)\), and hence the measurable cone %
\(\Simpla BC\) is integrable.

\begin{lemma}
  \label{lemma:Hpan-integrable}
  For each \(n\in\Nat\), the function %
  \(\Hpan n:\Mcca{\Simpla BC}\to\Mcca{\Monomic nBC}\) is linear,
  continuous, measurable, integrable and has norm \(\leq 1\).
\end{lemma}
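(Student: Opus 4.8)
The plan is to check the five claims in turn, treating linearity, continuity and the norm bound as near-immediate consequences of the way $\Simpla BC$ was built, and concentrating the genuine work on measurability and then on integral preservation.

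First I would dispose of the easy properties. Linearity of $\Hpan n$ follows from uniqueness of the homogeneous polynomial decomposition (Lemma~\ref{lemma:unique-monomial}): since $\Hpan nf+\Hpan ng\in\Monom nBC$ and $\sum_\ell(\Hpan\ell f+\Hpan\ell g)$ is a homogeneous decomposition of $f+g$ (by continuity of addition), it must \emph{be} the decomposition of $f+g$, so $\Hpan n(f+g)=\Hpan nf+\Hpan ng$, and scalar multiplication is identical. For the norm bound, fix $f\in\Cuball{\Mcca{\Simpla BC}}$ and $x\in\Cuball{\Mcca B}$; each $\Hpan\ell f(x)$ lies in the cone $\Mcca C$, hence is $\geq0$, so every partial sum $\sum_{\ell=0}^N\Hpan\ell f(x)$ is $\leq f(x)$, giving $\Hpan nf(x)\leq f(x)$ and thus $\Norm{\Hpan nf(x)}\leq\Norm{f(x)}\leq1$ by \Cnormpr{}; taking the supremum over $x$ gives $\Norm{\Hpan nf}\leq1$. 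Continuity is already contained in the completeness argument used to construct $\Simpla BC$: there the $n$th component of the lub of an increasing sequence $(f^k)$ was exhibited as the pointwise lub of the $\Hpan n(f^k)$, which says precisely that $\Hpan n(\sup_kf^k)=\sup_k\Hpan n(f^k)$ in $\Monomic nBC$.

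The crux is measurability. Let $X\in\ARCAT$ and $\eta\in\Mcca{\Cpath X{\Simpla BC}}$; to prove $\Hpan n\Comp\eta\in\Mcca{\Cpath X{\Monomic nBC}}$ I take a test $p=\Mtfun\beta m\in\Mcms{\Monomic nBC}_Y$ and must show that $\Absm{(s,r)\in Y\times X}{m(s,\Hpan n(\eta(r))(\beta(s)))}$ is measurable. Write $\lambda=\Norm\beta$ (the case $\lambda=0$ being trivial). For any scalar $t$ with $t\lambda\leq1$ the path $t\beta$ lies in $\Cuball{\Mcca{\Cpath YB}}$, so $\Mtfun{(t\beta)}m\in\Mcms{\Simpla BC}_Y$, and because $\eta$ is a measurable path the function $\psi_t=\Absm{(s,r)}{m(s,\eta(r)(t\beta(s)))}$ is measurable. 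For fixed $(s,r)$, homogeneity of the $\Hpan\ell(\eta(r))$ together with linearity and continuity of $m(s,\cdot)$ give $\psi_t(s,r)=\sum_{\ell\geq0}a_\ell(s,r)\,t^\ell$ with $a_\ell(s,r)=m(s,\Hpan\ell(\eta(r))(\beta(s)))\geq0$, convergent for $t\in[0,1/\lambda]$. I would then extract the desired coefficient $a_n$ by an $n$th finite difference at $0$: using $\sum_{k=0}^n(-1)^{n-k}\Binom nk k^\ell=\Factor n\,S(\ell,n)$ (Stirling numbers, $0$ for $\ell<n$, $1$ for $\ell=n$), one gets for $h\leq1/(n\lambda)$ that $h^{-n}\sum_{k=0}^n(-1)^{n-k}\Binom nk\psi_{kh}(s,r)=\Factor n\,a_n(s,r)+\Factor n\sum_{\ell>n}a_\ell(s,r)S(\ell,n)h^{\ell-n}$, and the tail tends to $0$ as $h\to0^+$ by dominated convergence (each term vanishes, and using $S(\ell,n)\leq n^\ell$ the summand is dominated, uniformly in $h\le 1/(n\lambda)$, by the $\ell$-summable sequence $n^n\lambda^n a_\ell(s,r)(1/\lambda)^\ell$, whose sum is $n^n\lambda^n\psi_{1/\lambda}(s,r)<\infty$). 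Hence $a_n=\tfrac1{\Factor n}\lim_{j}h_j^{-n}\sum_{k=0}^n(-1)^{n-k}\Binom nk\psi_{kh_j}$ along any $h_j\to0^+$, a pointwise limit of measurable functions, so $a_n$ is measurable, as required. Note this argument uses only finitely many scaled tests $\Mtfun{(kh_j\beta)}m$ and so does not require $\Real\in\ARCAT$.

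Finally, for integral preservation let $\mu\in\Mcca{\Cmeas(X)}$ and $f=\int\eta(r)\mu(dr)$. The construction of integrals in $\Simpla BC$ identifies the linearization of the $n$th component as $\Linhp n(\Hpan nf)=\frac1{\Factor n}\int^{\Mlsym nBC}\Linan n(\eta(r))\mu(dr)$, while the construction of integrals in $\Monomic nBC$ (Theorem~\ref{th:Hpol-icone}) identifies $g:=\int\Hpan n(\eta(r))\mu(dr)$ by $\Linhp ng=\int^{\Mlsym nBC}\Linhp n(\Hpan n(\eta(r)))\mu(dr)$. Since $\Linan n=\Factor n\,\Linhp n\Comp\Hpan n$, these two symmetric multilinear maps coincide, so uniqueness of linearizations (Lemma~\ref{lemma:unique-linearization}) gives $\Hpan nf=g$, that is $\Hpan n(\int\eta(r)\mu(dr))=\int\Hpan n(\eta(r))\mu(dr)$. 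I expect the measurability step to be the only real obstacle: the other four properties are bookkeeping with the already-established structure of $\Simpla BC$ and $\Monomic nBC$, whereas isolating the $n$th Taylor coefficient of a measurably parametrised family of analytic functions genuinely needs the finite-difference-at-zero computation, whose one delicate point is the dominated-convergence control of the higher-degree tail.
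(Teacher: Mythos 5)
Your proposal is correct, and all five properties are established soundly; the interesting point is that your treatment of measurability and of integral preservation follows a genuinely different route from the paper's. For measurability the paper writes the test value as \(\frac 1{\Factor n}\Evreal{\frac{d^n}{dt^n}m(s,\eta(r)(t\beta(s)))}{t=0}\) and invokes the standard fact that a function on \(Z\times\Interco 01\) which is measurable in the first variable and smooth in the second has a measurable partial derivative; you instead isolate the coefficient \(a_n(s,r)=m(s,\Hpan n(\eta(r))(\beta(s)))\) by an \(n\)th finite difference at \(0\), controlling the higher-degree tail via \(S(\ell,n)\le n^\ell\) and the convergent majorant \(n^n\lambda^n\psi_{1/\lambda}(s,r)\), so that \(a_n\) appears as a pointwise limit of countably many functions each of which is measurable directly from the definition of the tests \(\Mtfun{(kh_j\beta)}{m}\) on \(\Simpla BC\). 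Your version is more self-contained (it replaces the cited regularity fact by an explicit dominated-convergence estimate) and is also more careful about the fact that tests on \(\Monomic nBC\) use paths \(\beta\) that need not lie in the unit ball, whereas \(\eta(r)\) is only defined on \(\Cuball{\Mcca B}\) — hence your rescaling by \(\lambda=\Norm\beta\), a point the paper glosses over. For integral preservation the paper again differentiates under the integral sign against a test \(\Mtfun xm\); you avoid any analytic interchange by matching the two linearizations \(\Linhp n(\Hpan nf)\) and \(\Linhp n\bigl(\int\Hpan n(\eta(r))\mu(dr)\bigr)\), both of which the constructions of integrals in \(\Simpla BC\) and in \(\Monomic nBC\) exhibit as \(\int^{\Mlsym nBC}\Linhp n(\Hpan n(\eta(r)))\mu(dr)\), and then conclude by uniqueness of linearizations (Lemma~\ref{lemma:unique-linearization}) — arguably the cleaner argument, since it reuses only identities already proved. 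The remaining claims (linearity via Lemma~\ref{lemma:unique-monomial}, the bound \(\Hpan nf(x)\leq f(x)\) with \Cnormpr{}, and continuity read off from the completeness argument in the construction of \(\Simpla BC\)) coincide with the paper's.
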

\begin{proof}
  Linearity and continuity result straightforwardly from the fact that
  the homogeneous polynomial decomposition %
  \((f_n=\Hpan n(f))_{n\in\Nat}\) of
  \(f\) is uniquely determined by its defining property:
  \begin{align*}
    \forall x\in\Cuball{\Mcca B}
    \quad
    f(x)=\sum_{n\in\Nat}f_n(x)\,.
  \end{align*}
  Let \(\eta\in\Mcca{\Cpath X{\Simpla BC}}\), we must check next
  that %
  \(\Hpan n\Comp\eta\in\Mcca{\Cpath X{\Monomic nBC}}\) so let %
  \(Y\in\ARCAT\),
  \(\beta\in\Mcca{\Cpath YB}\) %
  and \(m\in\Mcms C_Y\), we must prove that %
  \begin{align*}
    \theta
    &=\Absm{(s,r)\in Y\times X}{(\Mtpath\beta m)(s,\Hpan n(\eta(r)))}\\
    &=\Absm{(s,r)\in Y\times X}{m(s,\Hpan n(\eta(r))(\beta(s)))}
  \end{align*}
  is measurable \(Y\times X\to\Realp\).
  This results from the fact that
  \begin{align*}
    \theta(s,r)
    =\frac 1{\Factor n}\Evreal{\frac{d^n}{dt^n}m(s,\eta(r,t\beta(s)))}{t=0}
  \end{align*}
  and from the measurability and smoothness wrt.~\(t\) of the
  map %
  \((s,r,t)\mapsto m(s,\eta(r,t\phi(s)))\).
  Indeed the following is standard: if \(Z\) is a measurable space
  then if a function \(Z\times\Interco 01\to\Realp\) is measurable,
  and smooth in its second argument, then so is its derivative
  wrt.~its second argument.

  Last we check integrability of \(\Hpan n\) so let moreover %
  \(\mu\in\Cmeas(X)\), and let \(p\in\Mcms{\Monomic nBC}_\Measterm\)
  so that %
  \(p=\Mtfun xm\) for some \(x\in\Mcca B\) and
  \(m\in\Mcms C_\Measterm\), we have
  \begin{align*}
    p\Big(\Hpan n\Big(\int^{\Monomic nBC}_Y\eta(s)\mu(ds)\Big)\Big)
    &=\frac 1{\Factor n}
      \Evreal{\frac{d^n}{dt^n}m\Big(\int^C_Y\eta(s,tx)\mu(ds)\Big)}{t=0}\\
    &=\frac 1{\Factor n}
      \Evreal{\Big(\frac{d^n}{dt^n}\int_Y m(\eta(s,tx))\mu(ds)\Big)}{t=0}\\
    &=\frac 1{\Factor n}
      \int_Y\Evreal{\frac{d^n}{dt^n}m(\eta(s,tx))}{t=0}\mu(ds)\\
    &=\int_Y p(\Hpan n(\eta(s)))\mu(ds)
  \end{align*}
  by standard properties of integration.
  The fact that \(\Norm{\Hpan n}\leq 1\) results from the obvious fact
  that \(\Hpan nf(x)\leq f(x)\) for all \(x\in\Cuball{\Mcca B}\).
\end{proof}

\begin{theorem} %
  \label{th:linan-lin-morph}
  For all \(n\in\Nat\) we have %
  \(\Linan n\in\ICONES(\Simpla BC,n^n\Mlsym nBC)\).
\end{theorem}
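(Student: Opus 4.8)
The plan is to read $\Linan n$ as a composite of three $\ICONES$-morphisms and then invoke only that $\ICONES$ is a category. By its very definition $\Linan nf=\Factor n\,\Linhp n(\Hpan n(f))$, so $\Linan n$ is literally the composite
\[
\Simpla BC\xrightarrow{\ \Hpan n\ }\Monomic nBC
\xrightarrow{\ \Linhp n\ }\tfrac{n^n}{\Factor n}\Mlsym nBC
\xrightarrow{\ \Factor n\cdot(\_)\ }n^n\Mlsym nBC\,,
\]
where the last arrow is scalar multiplication by $\Factor n$. It therefore suffices to check that each of the three arrows is a morphism of $\ICONES$.

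First I would recall that the first two arrows are already known: Lemma~\ref{lemma:Hpan-integrable} gives $\Hpan n\in\ICONES(\Simpla BC,\Monomic nBC)$ (it is linear, continuous, measurable, integral preserving and of norm $\leq1$), and Theorem~\ref{th:Hpol-icone} gives $\Linhp n\in\ICONES(\Monom nBC,\tfrac{n^n}{\Factor n}\Mlsym nBC)$, the domain being the integrable cone $\Monomic nBC$.

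Then I would treat the scaling arrow. The three cones $\Mlsym nBC$, $\tfrac{n^n}{\Factor n}\Mlsym nBC$ and $n^n\Mlsym nBC$ share the same underlying integrable cone, since homothety (Definition~\ref{def:mes-cone-homothetie}) only rescales the norm while preserving the measurability structure and the integrals; hence a map between them is a morphism of $\ICONES$ as soon as it is linear, continuous, measurable and integral preserving on this common cone and has norm $\leq1$ for the rescaled norms. Scalar multiplication $h\mapsto\Factor n\,h$ is linear and continuous (Lemma~\ref{lemma:cones-cone-add-scal}), measurable (for a path $\gamma$, $\Factor n\,\gamma$ is again a path) and integral preserving (scaling the path scales its integral by $\Factor n$, by the bilinearity of integration in Lemma~\ref{lemma:int-mesurable}). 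For the norm I would carry out the one computation that matters: for $h\in\Cuball{(\tfrac{n^n}{\Factor n}\Mlsym nBC)}$ one has $\Norm h_{\Mlsym nBC}\leq\tfrac{n^n}{\Factor n}$, whence $\Norm{\Factor n\,h}_{n^n\Mlsym nBC}=\tfrac1{n^n}\Factor n\Norm h_{\Mlsym nBC}\leq\tfrac{\Factor n}{n^n}\cdot\tfrac{n^n}{\Factor n}=1$; thus the scaling arrow lies in $\ICONES(\tfrac{n^n}{\Factor n}\Mlsym nBC,n^n\Mlsym nBC)$.

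Composing the three morphisms yields $\Linan n\in\ICONES(\Simpla BC,n^n\Mlsym nBC)$, as required. The argument is bookkeeping on top of the earlier lemmas; the only point needing care — and the place where the exponent $n^n$ in the statement comes from — is the norm arithmetic above, where the polarization bound $\tfrac{n^n}{\Factor n}$ from Lemma~\ref{lemma:unique-linearization} (carried through Theorem~\ref{th:Hpol-icone}) combines with the factor $\Factor n$ to cancel the factorial and leave exactly $n^n$. The degenerate case $n=0$ is covered by the convention $0^0=1$, for which all three cones coincide with $\Mlsym 0BC$ and the scaling arrow is the identity.
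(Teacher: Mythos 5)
Your proof is correct and is essentially the paper's own argument written out in full: the paper's proof is the one-liner "recall $\Linan n=\Factor n(\Linhp n\Comp\Hpan n)$ and apply Theorem~\ref{th:Hpol-icone} and Lemma~\ref{lemma:Hpan-integrable}", and you have simply made explicit the norm bookkeeping for the scalar factor $\Factor n$ against the homothety $\frac{n^n}{\Factor n}$ that the paper leaves implicit. No gap.
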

\begin{proof}
  Remember that \(\Linan n=\Factor n(\Linhp n\Comp\Hpan n)\) and apply %
  Theorem~\ref{th:Hpol-icone} and Lemma~\ref{lemma:Hpan-integrable}.
\end{proof}

\begin{theorem}
  Each analytic function is stable and measurable.
\end{theorem}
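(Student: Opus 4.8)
The plan is to reduce the statement to its homogeneous components and then pass to the limit. Fix an analytic $f:\Cuball{\Mcca B}\to\Mcca C$ and let $(f_n\in\Monom nBC)_{n\in\Nat}$ be its unique homogeneous polynomial decomposition (Lemma~\ref{lemma:unique-monomial}), so that $f(x)=\sum_{n=0}^\infty f_n(x)$ for all $x\in\Cuball{\Mcca B}$ and each $f_n$ has a linearization $h_n=\Linhp n{f_n}\in\Mcca{\Mlsym nBC}$. The key observation, which makes measurability automatic, is that by the very definition of $\Mlsym nBC$ this $h_n$ is not merely multilinear and continuous but also \emph{measurable and integrable}. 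First I would show that each $f_n$ is itself stable and measurable, then that the partial sums $F_N=\sum_{n=0}^N f_n$ share these properties, and finally that they survive the passage to the pointwise supremum $f=\sup_N F_N$.

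For the component $f_n$: its restriction to $\Cuball{\Mcca B}$ is totally monotonic by Lemma~\ref{lemma:monomial-tot-mono}; it is bounded since $\Norm{f_n(x)}=\Norm{h_n(x,\dots,x)}\leq\Norm{h_n}$ for $x\in\Cuball{\Mcca B}$ (equivalently because $f_n(x)\leq f(x)$ in $\Mcca C$); and it is $\omega$-continuous because $f_n=h_n\Comp d$, where $d:\Mcca B\to\Mcca B^n$ is the diagonal, which is linear and $\omega$-continuous, while $h_n$, being separately continuous, is $\omega$-continuous by (iterated use of) Lemma~\ref{lemma:seprate-cont-implies-cont}. For measurability, given $X\in\ARCAT$ and $\gamma\in\Mcca{\Cpath XB}$ we have $f_n\Comp\gamma=h_n\Comp\Tuple{\gamma,\dots,\gamma}$, which is a measurable path precisely because $h_n$ is a measurable multilinear function. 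Hence each $f_n$ is stable and measurable, and so is every finite sum $F_N$, since finite sums of totally monotonic (resp.\ $\omega$-continuous, resp.\ measurable) functions are again of that kind, using $\omega$-continuity of addition (Lemma~\ref{lemma:cones-cone-add-scal}) and the fact that finite sums of measurable paths are measurable paths.

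It remains to pass to the limit. Since $F_{N+1}-F_N=f_{N+1}$ is totally monotonic, the sequence $(F_N)_{N\in\Nat}$ is increasing for the stable order and $f=\sup_N F_N$ pointwise. Setting $\lambda=\max(1,\Norm f)$ (finite because $f$ is bounded by hypothesis), the functions $F_N/\lambda$ lie in $\Cuball P$, where $P$ is the precone of stable and measurable functions, form an increasing sequence for the stable order, and have pointwise supremum $f/\lambda$. Lemma~\ref{lemma:sup-stable-fns} then yields that $f/\lambda$ is bounded, totally monotonic, $\omega$-continuous and measurable; multiplying by the positive scalar $\lambda$ preserves all four properties, so $f$ is stable and measurable as claimed.

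I expect the only delicate step to be this last passage to the infinite sum, namely checking that total monotonicity, $\omega$-continuity and measurability are all preserved under the increasing pointwise supremum of the partial sums. This is exactly the content of Lemma~\ref{lemma:sup-stable-fns}, whose proof rests on the $\omega$-continuity of addition (so that the defining inequalities of total monotonicity and the relevant interchange of suprema survive the limit) and on the monotone convergence theorem (for measurability); the normalisation by $\lambda$ to fit inside the unit ball is a harmless bookkeeping detail.
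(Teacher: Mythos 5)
Your proof is correct and follows essentially the same route as the paper, which disposes of this theorem in one line as an ``immediate consequence of the definition of analytic functions and of Lemma~\ref{lemma:monomial-tot-mono}'': total monotonicity of each homogeneous component comes from that lemma, measurability comes for free because the linearizations live in $\Mlsym nBC$ and are therefore measurable by definition, and the remaining properties pass to the increasing limit of partial sums exactly as you describe (via Lemma~\ref{lemma:sup-stable-fns}, whose proof rests on $\omega$-continuity of addition and the monotone convergence theorem). Your write-up simply makes explicit the details the paper leaves to the reader; there is no gap.
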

\begin{proof}
  Immediate consequence of the definition of analytic functions and of
  Lemma~\ref{lemma:monomial-tot-mono}.
\end{proof}
\begin{remark}
  The converse is not true, as shown by
  Remark~\ref{rem:discrete-meas-comp}.
  Indeed since the stable and measurable function %
  \(\ContinuousPart\) introduced in
  Remark~\ref{rk:continuous-part-measure} is actually linear, if
  \(\ContinuousPart\) were analytic we would have
  \(\Linan 1\ContinuousPart=\ContinuousPart\) and
  \(\Linan n\ContinuousPart=0\) if \(n\not=1\) which is not possible
  since \(\ContinuousPart\) does not preserve integrals.
\end{remark}


\subsection{The category of integrable cones and analytic functions}
Our goal in this section is to show that integrable cones, together
with analytic functions as morphisms, form a category which is
cartesian closed.

\subsubsection{Composing analytic functions} %
We start with a special case of composition that we can think of as
the restriction of an analytic function to a \emph{local cone} in the
sense of Section~\ref{sec:local-cone}.
\begin{theorem} %
  \label{th:loc-ana-fun}
  Let \(x\in\Cuball{\Mcca B}\) and let \(f\in\Mcca{\Simpla BC}\). %
  Then the function \(g:\Cuball{\Mcca{\Cloc Bx}}\to\Mcca C\) defined
  by \(g(u)=f(x+u)\) is analytic, that is %
  \(g\in\Mcca{\Simpla{\Cloc Bx}C}\).
\end{theorem}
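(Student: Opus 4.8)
The plan is to produce the homogeneous polynomial decomposition of $g$ by expanding the Taylor series of $f$ around the point $x$ and regrouping the resulting terms by their degree in $u$. For $u\in\Cuball{\Mcca{\Cloc Bx}}$ we have $x+u\in\Cuball{\Mcca B}$, so $g(u)=f(x+u)=\sum_{n=0}^\infty\frac1{\Factor n}\Linan nf(\Rep{x+u}n)$. Since each $\Linan nf\in\Mcca{\Mlsym nBC}$ is symmetric and $n$-linear, the multinomial expansion yields
\begin{align*}
  \frac1{\Factor n}\Linan nf(\Rep{x+u}n)
  =\sum_{k=0}^n\frac1{\Factor k\,\Factor{(n-k)}}\Linan nf(\Rep x{n-k},\Rep uk)\,,
\end{align*}
a sum of non-negative cone elements. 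Reindexing by $k$ and $m=n-k$ suggests that the $k$-th component of $g$ should be the $k$-homogeneous polynomial $g_k$ whose linearization is the symmetric $k$-linear form obtained by summing $\frac1{\Factor k\Factor m}\Linan{m+k}f(\Rep xm,\cdot)$ over $m\in\Nat$, with $m$ arguments frozen to $x$.

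To make this rigorous I would construct $g_k$ through finite truncations. Fixing $m$ arguments of $\Linan{m+k}f$ to the constant $x$ gives a symmetric $k$-linear form which is continuous, measurable and integrable over $\Cloc Bx$ (tests, measurable paths and integrals of $\Cloc Bx$ all being inherited from $B$), so each partial sum $g_k^{(M)}=\Mlmon k\bigl(\frac1{\Factor k}\sum_{m=0}^M\frac1{\Factor m}\Linan{m+k}f(\Rep xm,\cdot)\bigr)$ is a genuine element of $\Monom k{\Cloc Bx}C$, and these increase with $M$ in the cone order of $\Monom k{\Cloc Bx}C$ because their linearizations increase in $\Mlsym k{\Cloc Bx}C$. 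The decisive estimate is the cone-order domination
\begin{align*}
  \frac1{\Factor k}\sum_{m=0}^M\frac1{\Factor m}\Linan{m+k}f(\Rep xm,\Rep uk)
  \leq\sum_{n=0}^{M+k}\frac1{\Factor n}\Linan nf(\Rep{x+u}n)\leq f(x+u)\,,
\end{align*}
valid because each summand on the left appears as a non-negative summand in the expansion of $f(x+u)$. It yields $\Norm{g_k^{(M)}}\leq\Norm f$ uniformly in $M$, so by $\omega$-completeness of $\Monom k{\Cloc Bx}C$ (Theorem~\ref{th:Hpol-icone}) the lub $g_k\in\Monom k{\Cloc Bx}C$ exists with $\Norm{g_k}\leq\Norm f$; since lubs in $\Monom k{\Cloc Bx}C$ are pointwise, $g_k(u)=\sup_M g_k^{(M)}(u)$.

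Next I would establish $g(u)=\sum_{k=0}^\infty g_k(u)$ on $\Cuball{\Mcca{\Cloc Bx}}$. Applying a test $m\in\Mcms C_\Measterm$ turns everything into a double series of non-negative reals,
\begin{align*}
  m(g(u))=m(f(x+u))
  =\sum_{n=0}^\infty\sum_{k=0}^n\frac1{\Factor k\,\Factor{(n-k)}}
  m\bigl(\Linan nf(\Rep x{n-k},\Rep uk)\bigr)\,,
\end{align*}
with finite total; Tonelli for non-negative series lets me reindex it as $\sum_k m(g_k(u))$, commuting $m$ with the lubs defining $g_k(u)$ by continuity. As $m\bigl(\sum_{k\leq K}g_k(u)\bigr)\leq m(g(u))$ for every $m$, axiom $\Msnormr$ bounds the norms of the increasing partial sums $\sum_{k\leq K}g_k(u)$ by $\Norm{g(u)}$, so their lub exists and coincides with $g(u)$ on every test, giving $g(u)=\sum_k g_k(u)$ by $\Mssepr$. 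Boundedness of $g$ is immediate since $\Norm{g(u)}=\Norm{f(x+u)}\leq\Norm f$, and measurability follows because each $g_k$ carries measurable paths of $\Cloc Bx$ to measurable paths of $C$ (having linearization in $\Mlsym k{\Cloc Bx}C$) and $g\Comp\gamma$ is the pointwise lub of the finite sums $\sum_{k\leq K}g_k\Comp\gamma$, measurable by the monotone convergence theorem. This exhibits $(g_k)_{k\in\Nat}$ as a homogeneous polynomial decomposition of the bounded measurable function $g$, so $g\in\Mcca{\Simpla{\Cloc Bx}C}$.

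The main obstacle is convergence. One cannot define the off-diagonal components or bound them by the crude estimate $\Norm{\Linan{m+k}f}\leq(m+k)^{m+k}$ coming from Theorem~\ref{th:linan-lin-morph}, because $\sum_m(m+k)^{m+k}/\Factor m$ diverges; the existence of the $g_k$ and the uniform norm bound must instead be extracted from the genuine convergence of the Taylor series of $f$ itself, which is exactly what the cone-order domination by $f(x+u)$ encodes (with the linearizations then controlled through the norm equivalence of Lemma~\ref{lemma:unique-linearization}). Correspondingly, the regrouping of the doubly-indexed series is legitimate only after applying tests, where Tonelli governs the rearrangement of non-negative reals, and $\Mssepr$ together with $\Msnormr$ transfer the identities back to the cone $C$.
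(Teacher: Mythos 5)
Your proposal is correct and follows essentially the same route as the paper: the same binomial expansion of $\Linan nf(\Rep{x+u}n)$, the same regrouping by degree in $u$, and the same key idea that convergence of the degree-$k$ component must be extracted from the domination of its partial sums by $f$ evaluated at a suitable point of $\Cuball{\Mcca B}$ rather than from the crude bound $\Norm{\Linan nf}\leq n^n$. The only cosmetic difference is that the paper bounds the partial sums of the linearization at arbitrary arguments $\Vect u$ directly (via $u_i\leq l\lambda u$ with $u=\frac1{l\lambda}\sum_iu_i$) and takes the lub in $\Mlsym l{\Cloc Bx}C$, whereas you bound the diagonal and invoke the $\omega$-completeness of $\Monom k{\Cloc Bx}C$, which rests on the same polarization norm-equivalence.
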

\begin{proof}
  Given \(u\in\Cuball{\Mcca{\Cloc Bx}}\) we have
  \begin{align*}
    g(u)
    &=f(x+u)\\
    &=\sum_{n=0}^\infty\frac 1{\Factor n}\Linan nf(\Rep{x+u}n)\\
    &=\sum_{n=0}^\infty\frac 1{\Factor n}
      \sum_{k=0}^n\Binom nk\Linan nf(\Rep u{n-k},\Rep xk)\\
    &=\sum_{n=0}^\infty\sum_{k=0}^n
      \frac 1{\Factor k\Factor{(n-k)}}\Linan nf(\Rep u{n-k},\Rep xk)\\
    &=\sum_{k=0}^\infty\sum_{n=k}^\infty
      \frac 1{\Factor k\Factor{(n-k)}}\Linan nf(\Rep u{n-k},\Rep xk)\\
    &=\sum_{k=0}^\infty\sum_{l=0}^\infty
      \frac 1{\Factor k\Factor l}\Linan {l+k}f(\Rep ul,\Rep xk)\\
    &=\sum_{l=0}^\infty\frac 1{\Factor l}
      \sum_{k=0}^\infty\frac 1{\Factor k}\Linan{l+k}f(\Rep ul,\Rep xk)
  \end{align*}
  so it suffices to show that for each \(l\in\Nat\) the function %
  \(g_l:\Cuball{\Mcca{\Cloc Bx}}\to\Mcca C\) defined by
  \begin{align*}
    g_l(u)=\sum_{k=0}^\infty\frac 1{\Factor k}\Linan{l+k}f(\Rep ul,\Rep xk)
  \end{align*}
  satisfies \(g_l(u)=\phi_l(\Rep ul)\) for some
  \(\phi_l\in\Mlsym l{\Cloc Bx}{C}\). We show that we can set
  \begin{align*}
    \phi_l(\Vect u)=\sum_{k=0}^\infty
    \frac 1{\Factor k}\Linan{l+k}f(\Vect u,\Rep xk)
  \end{align*}
  for all \(\Vect u=(\List u1l)\in\Mcca{\Cloc Bx}^l\). %
  So let \(\Vect u=(\List u1l)\in\Mcca{\Cloc Bx}^l\) and let %
  \(\lambda\geq\max_{i=1}^l\Norm{u_i}_{\Mcca{\Cloc Bx}}\) be such that
  \(\lambda>0\) so that for each \(i\) we have
  \(\frac 1\lambda u_i\in\Cuball{\Mcca{\Cloc Bx}}\).

  For \(N\in\Nat\) let %
  \(\phi_l^N(\Vect u)=\sum_{k=0}^N \frac 1{\Factor
    k}\Linan{l+k}f(\Vect u,\Rep xk)\) so that %
  \(\phi_l^N\in\Mcca{\Mlsym l{\Cloc Bx}{C}}\); actually we even have %
  \(\phi_l^N\in\Mcca{\Mlsym l{B}{C}}\). Observe that, setting %
  \(u=\frac 1{l\lambda}\sum_{i=1}^lu_i\in\Cuball{\Mcca{\Cloc Bx}}\) we
  have %
  \(u_i\leq l\lambda u\) for each \(i=1,\dots,l\) so that
  \begin{align*}
    \phi_l^N(\Vect u)
    &\leq \phi_l^N(\Rep{l\lambda u}l)
      =(l\lambda)^l\phi_l^N(\Rep{u}l)\\
    &\leq (l\lambda)^lg(u)=(l\lambda)^lf(x+u)
  \end{align*}
  so that %
  \(\Norm{\phi_l^N(\Vect u)}_C\leq (l\lambda)^l\Norm f\) and
  since neither \(l\) nor \(\lambda\) depend on \(N\) %
  the sequence \((\phi_l^N(\Vect u))_{N\in\Nat}\) is increasing in %
  \((l\lambda)^l\Cuball{\Mcca C}\), it has a lub which is %
  \(\phi_l(\Vect u)\) which is therefore well-defined and belongs to %
  \((l\lambda)^l\Cuball{\Mcca C}\).
  The fact that the map %
  \(\phi_l:\Mcca{\Cloc Bx}\to\Mcca C\) defined in that way is %
  \(l\)-linear symmetric and \(\omega\)-continuous results from the
  \(\omega\)-continuity of addition, scalar multiplication and from the
  basic properties of lubs.
  The measurability and integrability of \(\phi_l\) result as usual
  from the monotone convergence theorem.
\end{proof}

\noindent 
Let \(f\in\Cuball{\Mcca{\Simpla BC}}\) and %
\(g\in\Mcca{\Simpla CD}\), since %
\(g(\Cuball{\Mcca B})\subseteq\Cuball{\Mcca C}\), the function %
\(g\Comp f:\Cuball{\Mcca B}\to\Mcca D\) is well defined and bounded.
We assume first that \(f(0)=0\) so that the first term of the Taylor
expansion of \(f\) vanishes and we have
\begin{align*}
  g(f(x))
  &=\sum_{n=0}^\infty\frac 1{\Factor n}\Linan n g\Big(
    \Rep{\sum_{k=1}^\infty\frac 1{\Factor k}\Linan kf(\Rep xk)}n\Big)\\
  &=\sum_{n=0}^\infty\frac 1{\Factor n}\sum_{\sigma:\Intset n\to\Natnz}
    \frac {\Factor n}{\Factor\sigma}\Linan ng(\Linan{\sigma(1)}f(\Rep x{\sigma(1)}),\dots,
    \Linan{\sigma(n)}f(\Rep x{\sigma(n)}))
\end{align*}
by multilinearity and continuity of the \(\Linan nf\)'s, with the
notation \(\Factor\sigma=\prod_{i=1}^n\Factor{\sigma(i)}\).
If \(n,l\in\Nat\) we define \(\Intfun nl\) as the set of all %
\(\sigma:\Intset n=\Eset{1,\dots,n}\to\Natnz\) such that
\(\sum_{i=1}^n\sigma(i)=l\).
This set is finite and empty as soon as \(n>l\) (it is for obtaining
this effect that we have assumed that \(f(0)=0\)).
We have %
\begin{align*}
  g(f(x))
  &=\sum_{l=0}^\infty\frac 1{\Factor l}\sum_{n=0}^l
    \sum_{\sigma\in\Intfun nl}
    \frac{\Factor l}{\Factor\sigma}
    \Linan nf(\Linan{\sigma(1)}g(\Rep x{\sigma(1)}),\dots,
    \Linan{\sigma(n)}g(\Rep x{\sigma(n)}))\,.
\end{align*}
For each \(l\in\Nat\), the function
\begin{align*}
  h_l:\Mcca B^l&\to\Mcca D\\
  (\List x1l)&\mapsto
               \sum_{n=0}^l\sum_{\sigma\in\Intfun nl}
               \frac{\Factor l}{\Factor\sigma}
               \Linan lf(\Linan{\sigma(1)}g(x_1,\dots,x_{\sigma(1)}),\dots,
    \Linan{\sigma(n)}g(x_{l-\sigma(n)+1},\dots,x_l))
\end{align*}
is \(l\)-linear, measurable and integrable as a finite sum of such
functions, however it is not necessarily symmetric (for instance, for
\(l=4\), this sum contains the expression\\ %
\(\frac{\Factor 4}{(\Factor 2)^2}
\Linan 2f(\Linan 2g(x_1,x_2),\Linan 2g(x_3,x_4))\)), %
but not
\(\frac{\Factor 4}{(\Factor 2)^2}
\Linan 2f(\Linan 2g(x_1,x_3),\Linan 2g(x_2,x_4))\)), %
so we set
\begin{align*}
  k_l(\Vect x)
  =\frac 1{\Factor l}\sum_{\theta\in\Symgrp l}
  h_l(x_{\theta(1)},\dots,x_{\theta(l)})
\end{align*}
and \(k_l\) is again a finite sum of \(l\)-linear, measurable and
integrable functions and hence obviously belongs to
\(\Mcca{\Mlsym lBD}\), and we have
\begin{align*}
  g(f(x))=\sum_{l=0}^\infty\frac 1{\Factor l}
  k_l(\Rep xl)
\end{align*}
for each \(x\in\Cuball{\Mcca B}\) which proves that %
\(g\Comp f\) is analytic since this function is obviously bounded.

Now we don't assume anymore that \(f(0)=0\), and we define an
obviously analytic function
\(f_0\in\Cuball{\Mcca B}\to\Cuball{\Mcca{\Cloc C{f(0)}}}\) by
\(f_0(x)=f(x)-f(0)\). By Lemma~\ref{th:loc-ana-fun} the function %
\(g_0:\Cuball{\Mcca{\Cloc C{f(0)}}}\to\Mcca D\) given by %
\(g_0(v)=g(f(0)+v)\) is analytic and hence %
\(g\Comp f=g_0\Comp f_0\) is analytic since \(f_0(0)=0\).
The measurability of \(g\Comp f\) is obvious so
\(g\Comp f\in\Mcca{\Simpla BD}\).

This shows that we have defined a category %
\(\ACONES\) whose objects are the integrable cones and where a
morphism from \(B\) to \(C\) is a \(f\in\Mcca{\Simpla BC}\) such
that \(\Norm f\leq 1\).
We aim now at proving that this category is cartesian closed.

\begin{lemma}
  For all measurable cones \(B,C\) we have
  \(\ICONES(B,C)\subseteq\ACONES(B,C)\).
\end{lemma}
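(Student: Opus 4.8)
The plan is to show that every linear morphism is, trivially, its own degree-one term in an analytic expansion, exactly as the earlier inclusion $\ICONES(B,C)\subseteq\STAB(B,C)$ was obtained by observing that linearity forces total monotonicity. Concretely, I would fix $f\in\ICONES(B,C)$ and consider its restriction to $\Cuball{\Mcca B}$, claiming that the sequence $(f_n)_{n\in\Nat}$ defined by $f_1=f$ and $f_n=0$ for $n\neq1$ is a homogeneous polynomial decomposition of $f$. The only term that matters is $f_1$, so I first check that $f\in\Monom 1BC$: for $n=1$ the cone $\Mlsym 1BC$ is precisely the cone of linear, continuous, measurable and integrable maps $\Mcca B\to\Mcca C$ (the symmetry condition being vacuous), so $f$ is its own linearization, $f=\Mlmon 1(f)$ with $\Linhp 1f=f$. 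The remaining components $f_0\in\Monom 0BC$ and $f_n\in\Monom nBC$ for $n\geq2$ are taken to be the zero homogeneous polynomials, which are legitimate since their linearizations are the zero (multi)linear maps in the corresponding $\Mlsym nBC$.

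Next I would verify the two clauses in the definition of an analytic function. Boundedness is immediate: for $x\in\Cuball{\Mcca B}$ we have $\Norm{f(x)}\leq\Norm f\Norm x\leq1$, using $\Norm f\leq1$ and the norm inequality following Lemma~\ref{lemma:limpl-cone}. For the decomposition itself, the partial sums $\sum_{n=0}^N f_n(x)$ equal $0$ for $N=0$ and $f(x)$ for $N\geq1$; this is an increasing sequence (since $0\leq f(x)$), bounded in norm by $1$, whose lub is $f(x)$, which is exactly the meaning assigned to $f(x)=\sum_{n=0}^\infty f_n(x)$. Hence $f$ is analytic. Measurability transfers directly: since $f$ is a morphism in $\MCONES$, for each $X\in\ARCAT$ and $\beta\in\Mcca{\Cpath XB}$ one has $f\Comp\beta\in\Mcca{\Cpath XC}$, which is precisely the measurability requirement imposed on analytic functions. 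Therefore $f\in\Mcca{\Simpla BC}$.

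Finally I would match the norms. The norm on $\Simpla BC$ is $\sup_{x\in\Cuball{\Mcca B}}\Norm{f(x)}$, which for a linear $f$ coincides with the norm $\Norm f$ computed in $\ICONES$ by Lemma~\ref{lemma:line-cont-bounded}; since $\Norm f\leq1$, we conclude $f\in\ACONES(B,C)$. This argument is essentially immediate and I do not anticipate any real obstacle. The only point deserving care is the bookkeeping identification of $\Mlsym 1BC$ with the cone of integrable linear maps together with the coincidence of the two norms, ensuring that a linear $f$ genuinely qualifies as a degree-one homogeneous polynomial function of unchanged norm; once this is pinned down the inclusion follows with no further computation.
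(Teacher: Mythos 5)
Your proof is correct and matches the paper's intent exactly: the paper dismisses this lemma as obvious, and the argument it has in mind is precisely yours, namely that a linear integrable map of norm at most~$1$ is its own degree-one homogeneous component (with all other components zero), with boundedness, measurability and the norm bound carrying over directly. No issues.
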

\noindent 
This is obvious and shows that there is a forgetful faithful functor
\(\Derfuna:\ICONES\to\ACONES\) which acts as the identity on objects
and morphisms.

\begin{proposition}
  \label{prop:acones-cartesian}
  The category \(\ACONES\) has all (small) products.
\end{proposition}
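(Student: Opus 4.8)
The plan is to reproduce, in the analytic setting, the argument that gave products in $\STAB$, taking as the underlying object the product $B=\Bwith_{i\in I}B_i$ of the $B_i$'s in $\ICONES$ (Theorem~\ref{th:mcones-complete}), equipped with its projections $\Proj i\in\ICONES(B,B_i)$, which are analytic because $\ICONES(B,B_i)\subseteq\ACONES(B,B_i)$. Given a family $(f_i\in\ACONES(C,B_i))_{i\in I}$, I would define $f:\Cuball{\Mcca C}\to\Mcca B$ by $f(x)=(f_i(x))_{i\in I}$; this lands in $\Mcca B$ and satisfies $\Norm f\leq 1$, since $\Norm{f(x)}=\sup_{i\in I}\Norm{f_i(x)}\leq\sup_{i\in I}\Norm{f_i}\leq 1$. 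Measurability of $f$ and the universal property (that $\Proj i\Comp f=f_i$, with uniqueness) would then be checked exactly as in the proof of Theorem~\ref{th:mcones-complete} and of the corresponding statement for $\STAB$, by testing $f\Comp\gamma$ against the generating tests $\Mtinj i m$ and using that each $f_i$ is measurable.

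The one genuinely new point is that $f$ is \emph{analytic}, and this is the step I expect to be the main obstacle. Here I would assemble a homogeneous polynomial decomposition of $f$ from those of the $f_i$. For each $n$, set $h_i=\Linhp n(\Hpan n(f_i))\in\Mcca{\Mlsym n C{B_i}}$, and observe that, by Lemma~\ref{lemma:unique-linearization} together with $\Norm{\Hpan n(f_i)}\leq\Norm{f_i}\leq 1$, one gets the \emph{uniform} bound $\Norm{h_i}\leq\frac{n^n}{\Factor n}$. Since the family $(\Norm{h_i})_{i\in I}$ is therefore bounded, $h=(h_i)_{i\in I}$ defines a multilinear, symmetric, continuous, measurable and integrable function $\Mcca C^n\to\Mcca B$ (a $\Mcca B$-valued such function being the same thing as a norm-bounded family of $\Mcca{B_i}$-valued ones, all operations and integrals in $B$ being computed componentwise, as in the proof of Theorem~\ref{th:mcones-complete}); that is, $h\in\Mcca{\Mlsym nCB}$. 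Setting $f_n=\Mlmon n(h)$, so that $f_n(x)=h(\Rep xn)=(\Hpan n(f_i)(x))_{i\in I}$, we obtain $f_n\in\Monom nCB$ with $\Linhp n(f_n)=h$. The essential ingredient making this work is precisely the $\frac{n^n}{\Factor n}$ norm estimate of Lemma~\ref{lemma:unique-linearization}, which is uniform in $i$ because each $f_i$ has norm $\leq 1$.

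It then remains to see that $(f_n)_{n\in\Nat}$ is a homogeneous polynomial decomposition of $f$. Because sums and lubs in $B=\Bwith_{i\in I}B_i$ are computed componentwise (Theorem~\ref{th:cone-product}), for each $x\in\Cuball{\Mcca C}$ the partial sums satisfy $\Norm{\sum_{n=0}^N f_n(x)}_B=\sup_{i\in I}\Norm{\sum_{n=0}^N\Hpan n(f_i)(x)}_{B_i}\leq\sup_{i\in I}\Norm{f_i(x)}\leq 1$, so they are norm-bounded, and their componentwise lub is $(\sum_{n=0}^\infty\Hpan n(f_i)(x))_{i\in I}=(f_i(x))_{i\in I}=f(x)$, using that each $f_i$ is analytic. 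Hence $f(x)=\sum_{n=0}^\infty f_n(x)$ in the sense of~\Eqref{eq:ana-homo-sum}, and, $f$ being bounded, $f\in\Mcca{\Simpla CB}$ with $\Norm f\leq 1$, i.e.\ $f\in\ACONES(C,B)$. This exhibits $(B,(\Proj i)_{i\in I})$ as the product of the $B_i$'s in $\ACONES$.
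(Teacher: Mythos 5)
Your proof is correct and follows essentially the same route as the paper: take the $\ICONES$ product $\Bwith_{i\in I}B_i$, define the pairing pointwise, and assemble the homogeneous polynomial decomposition componentwise using the uniform bound $\Norm{\Linhp n(\Hpan n(f_i))}\leq\frac{n^n}{\Factor n}$ (the paper phrases this as $\Norm{\Linan n{f_i}}\leq n^n$ via Theorem~\ref{th:linan-lin-morph}, which is the same estimate). Your version just spells out in more detail the convergence of the partial sums, which the paper leaves implicit.
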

\begin{proof}
  We already know that each family \((B_i)_{i\in I}\) of integrable
  cones has a product \(B=\Bwith_{i\in I}B_i\) in \(\ICONES\) with
  projections \((\Proj i\in\ICONES(B,B_i))_{i\in I}\).
  We show that \(B\) is also the product of the family
  \((B_i)_{i\in I}\) with projections \((\Derfuna(\Proj i))_{i\in I}\) 
  in \(\ACONES\). Remember that an element of \(\Mcca B\) is a family %
  \((x_i\in\Mcca{B_i})_{i\in I}\) such that the family %
  \((\Norm{x_i}_{B_i})_{i\in I}\) is bounded in \(\Realp\).

  So let \((f_i\in\ACONES(C,B_i))_{i\in I}\), it suffices to prove
  that the function %
  \(f:\Cuball{\Mcca C}\to\Mcca B\) given by %
  \(f(y)=(f_i(y))_{i\in I}\) belongs to %
  \(\ACONES(C,B)\).
  The fact that %
  \(\forall y\in\Cuball{\Mcca C}\ f(y)\in\Cuball{\Mcca B}\) results
  from the definition of the norm of \(B\) and from the fact that
  \(\forall i\in I\ \Norm{f_i}\leq 1\).
  The measurability of \(f\) results trivially from its definition and
  from the definition of \(\Mcms B\).
  We know that %
  \(f_i(y)=\sum_{n=0}^\infty\frac 1{\Factor n}\Linan n{f_i}(\Rep
  yn)\). For each \(n\in\Nat\) the map %
  \(\phi_n:\Mcca C^n\to\Mcca B\) defined by %
  \(\phi_n(\Vect y)=(\Linan n{f_i}(\Vect y))_{i\in I}\) belongs to %
  \(\Mcca{\Mlsym nCB}\) since we know that %
  \(\Norm{\Linan nf}\leq n^n\) by
  Theorem~\ref{th:linan-lin-morph}.
  It follows that \(f\) is analytic since %
  \(f(y)=\sum_{n=0}^\infty\frac 1{\Factor n}\phi_n(\Rep yn)\).
\end{proof}

\begin{theorem}
  The category \(\ACONES\) is cartesian closed.
\end{theorem}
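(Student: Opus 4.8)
The plan is to exhibit $\Simpla BC$ as the internal hom of $B$ and $C$, so that together with the products of Proposition~\ref{prop:acones-cartesian} (the terminal object being the empty product $\Stop$) the category $\ACONES$ becomes cartesian closed. Concretely I would show that for each integrable cone $B$ the functor $\With\_ B:\ACONES\to\ACONES$ has $\Simpla B\_$ as right adjoint, by producing an evaluation morphism and a currying operation inverse to uncurrying, and then checking the resulting hom-set bijection is natural.

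First I would treat evaluation $\Ev:\Cuball{\Mcca{\Withp{(\Simpla BC)}B}}\to\Mcca C$, $\Ev(f,x)=f(x)$. Since $\Cuball{\Mcca{\Withp{(\Simpla BC)}B}}=\Cuball{\Mcca{\Simpla BC}}\times\Cuball{\Mcca B}$ and $\Norm{f(x)}\leq\Norm f\leq1$ when $f,x$ lie in the unit balls, $\Ev$ is bounded with $\Norm\Ev\leq1$; measurability is proved exactly as for the stable evaluation in Section~\ref{sec:stable-CCC}. For analyticity I would use the Taylor expansion $f(x)=\sum_{n}\frac1{\Factor n}\Linan nf(\Rep xn)$: each term is linear in $f$ (Theorem~\ref{th:linan-lin-morph}) and $n$-homogeneous in $x$, hence of total degree $n+1$ in the pair $(f,x)$, and its symmetric linearisation on $\Withp{(\Simpla BC)}B$ is
\[
\Psi_{n+1}\bigl((g_0,y_0),\dots,(g_n,y_n)\bigr)=\frac1{(n+1)\Factor n}\sum_{j=0}^n\Linan n{g_j}(y_0,\dots,\widehat{y_j},\dots,y_n),
\]
which is symmetric, multilinear, measurable and integrable because $\Linan n$ is (Theorem~\ref{th:linan-lin-morph}), and whose diagonal value is $\frac1{\Factor n}\Linan nf(\Rep xn)$. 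Thus $\Ev=\sum_{n\geq0}\Mlmon{n+1}(\Psi_{n+1})$ is a homogeneous polynomial decomposition, so $\Ev\in\ACONES(\Withp{(\Simpla BC)}B,C)$.

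Next I would define currying $\Cur(f)(z)=\Absm{x}{f(z,x)}$ for $f\in\ACONES(\With DB,C)$. For fixed $z\in\Cuball{\Mcca D}$ the map $\Absm{x}{(z,x)}$ (degree-$0$ part $(z,0)$, degree-$1$ part $x\mapsto(0,x)$) is analytic of norm $\leq1$, so $\Cur(f)(z)=f\Comp(\Absm{x}{(z,x)})\in\Mcca{\Simpla BC}$ by the closure of analytic functions under composition established just before the definition of $\ACONES$; moreover $\Norm{\Cur(f)}=\sup_{(z,x)\in\Cuball{\Mcca{\With DB}}}\Norm{f(z,x)}=\Norm f\leq1$. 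The substantial point is that $\Cur(f)$ is itself analytic as a map $\Cuball{\Mcca D}\to\Mcca{\Simpla BC}$. Here I would expand $\Linan mf(\Rep{(z,x)}m)=\sum_{k}\Binom mk\Linan mf(\Rep{(z,0)}k,\Rep{(0,x)}{m-k})$ by multilinearity and regroup by the degree $k$ in $z$, exactly as in the proof of Theorem~\ref{th:loc-ana-fun}; for each $k$ this yields a candidate $k$-linear symmetric map valued in $\Simpla BC$,
\[
\phi_k(z_1,\dots,z_k)=\Big[\,x\mapsto\sum_{l=0}^\infty\frac1{\Factor l}\Linan{k+l}f\bigl((z_1,0),\dots,(z_k,0),\Rep{(0,x)}l\bigr)\,\Big],
\]
whose well-definedness, $\omega$-continuity, measurability and integrability follow from the bound $\Norm{\Linan mf}\leq m^m\Norm f$, the integrability of $\Simpla BC$, and the monotone convergence theorem, precisely as in Theorem~\ref{th:loc-ana-fun} one level up. Measurability of $\Cur(f)$ as a morphism then follows from measurability of $f$ through the tests $\Mtfun\beta m$ defining $\Mcms{\Simpla BC}$.

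Finally I would assemble the adjunction. Uncurrying is immediate from the available structure: for $g\in\ACONES(D,\Simpla BC)$ the pairing $\Tuple{g\Comp\Proj1,\Proj2}\in\ACONES(\With DB,\Withp{(\Simpla BC)}B)$ exists by the product structure, and $\Ev\Comp\Tuple{g\Comp\Proj1,\Proj2}\in\ACONES(\With DB,C)$ sends $(z,x)$ to $g(z)(x)$, so it equals $f$ when $g=\Cur(f)$. Conversely $\Cur$ and uncurrying are mutually inverse since both amount to transposing the underlying function $(z,x)\mapsto f(z,x)$, and uniqueness of the factorisation of $f$ through $\Ev$ is immediate from pointwise equality of the underlying functions. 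Naturality in $D$ (and in $B$, $C$) is routine, giving the required natural bijection $\ACONES(\With DB,C)\cong\ACONES(D,\Simpla BC)$ and hence cartesian closedness. I expect the main obstacle to be the analyticity of $\Cur(f)$: constructing the $\Simpla BC$-valued symmetric multilinear components $\phi_k$ and controlling the resulting double series, which carries the genuine analytic content, everything else reducing to the products, the evaluation decomposition, and closure under composition already available.
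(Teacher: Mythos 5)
Your proposal is correct and follows essentially the same route as the paper: the symmetrised $(n{+}1)$-linear components of $\Ev$, currying via composition with the analytic map $x\mapsto(z,x)$ (equivalently Theorem~\ref{th:loc-ana-fun} at $(z,0)$), the binomial regrouping producing the $\Simpla BC$-valued components $\phi_k$ of $\Cur(f)$, and the set-theoretic uniqueness of the transpose. One small precision: the convergence of the series defining $\phi_k(z_1,\dots,z_k)$ does not follow from the bound $\Norm{\Linan mf}\leq m^m\Norm f$ alone (which diverges when summed over $l$) but from the majorization of the partial sums by $\Factor k(k\lambda)^k\,\Cur(f)(\tfrac1{k\lambda}\sum_iz_i)$, i.e.\ exactly the rescaled-diagonal trick of Theorem~\ref{th:loc-ana-fun} that you cite.
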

\begin{proof}
  We already know that \(\Simpla BC\) is an integrable cone and we
  have an obvious function
\begin{align*}
  \Ev:\Mcca{\With{(\Simpla BC)}{B}}={(\Mcca{\Simpla BC})}\times{\Mcca B}
  &\to\Mcca C\\
  (f,x)&\mapsto f(x)
\end{align*}
which satisfies \(\Norm\Ev\leq 1\), we show that it is measurable.
Let
\(\theta\in\Mcca{\Cpath{X}{\With{({\Simpla BC})}{B}}}\) %
for some \(X\in\ARCAT\), so that \(\theta=\Tuple{\eta,\beta}\) %
where %
\(\eta\in\Mcca{\Cpath X{\Simpla BC}}\) and %
\(\beta\in\Mcca{\Cpath XB}\), we must prove that %
\(\Ev\Comp\Tuple{\eta,\beta}\in\Mcca{\Cpath XC}\) so let %
\(m\in\Mcms C_Y\) for some \(Y\in\ARCAT\), %
we must prove that the function
\(
  \phi=\Absm{(s,r)\in Y\times X}{m(s,\eta(r)(\beta(r)))}
  :Y\times X\to\Realp
\)
is measurable.
We build %
\(p=\Mtfun{(\beta\Comp{\Proj 1})}{(m\Comp\Proj 2)}
\in\Mcms{\Simpla BC}_{X\times Y}\)
and since \(\eta\in\Mcca{\Cpath d{\Simpla BC}}\) the map %
\begin{align*}
  \psi&=\Absm{(r_1,s,r_2)\in X\times Y\times X}{p(r_1,s,\eta(r_2))}\\
      &=\Absm{(r_1,s,r_2)\in X\times Y\times X}{m(s,\eta(r_2)(\beta(r_1)))}
\end{align*}
is measurable which shows that %
\(\phi=\Absm{(s,r)\in Y\times X}{\psi(r,s,r)}\)
is measurable.

We prove that \(\Ev\) is analytic.
We have
\begin{align*}
  \Ev(f,x)&=f(x)\\
          &=\sum_{n=0}^\infty\frac 1{\Factor n}\Linan nf(\Rep x n)\\
          &=\sum_{n=0}^\infty\frac 1{\Factor n}\phi_n(\Rep{(f,x)}{n+1})\\
          &=\sum_{n=0}^\infty\frac 1{\Factor{(n+1)}}(n+1)\phi_n(\Rep{(f,x)}{n+1})
\end{align*}
where %
\(\phi_n:(\Mcca{\With{(\Simpla BC)}{B}})^{n+1}\to\Mcca C\) is given by %
\begin{align*}
  \phi_n((f_1,x_1),\dots,(f_{n+1},x_{n+1}))
  =\frac 1{n+1}\sum_{i=1}^{n+1}
  \Linan n{f_i}(x_1,\dots,x_{i-1},x_{i+1},\dots,x_{n+1})
\end{align*}
and therefore belongs to %
\(\Mcca{\Mlsym{n+1}BC}\); the measurability of \(\phi_n\) follows from
that of \(\Linan nf\).
If follows that \(\Ev\) is analytic, with
\begin{align*}
  \Linan 0\Ev()
  &=0\\
  \Linan{n+1}\Ev((f_1,x_1),\dots,(f_{n+1},x_{n+1}))
  &=\sum_{i=1}^{n+1}\Linan n{f_i}(x_1,\dots,x_{i-1},x_{i+1},\dots,x_{n+1})\,.
\end{align*}
\noindent 
Now we deal with the Curry transpose of analytic functions.
So let \(D\) be an integrable cone and let
\(f\in\ACONES(\With DB,C)\).
Given \(z\in\Cuball{\Mcca D}\) let %
\(f_z:\Cuball{\Mcca B}\to\Mcca C\) be given by %
\(f_z(x)=f(z,x)\).
We know that \(f_z\in\Mcca{\Simpla BC}\) by %
Theorem~\ref{th:loc-ana-fun} applied at
\((z,0)\in\Cuball{\Mcca{\With DB}}\) and by precomposing the obtained
``local'' analytic function \(g:\Cloc{\Withp DB}{(z,0)}\to C\) defined
by \(g(w,y)=f(z+w,y)\) with the obviously analytic function
\(x\mapsto (0,x)\): this composition of functions coincides with
\(f_z\).

We are left with proving that the function
\(g:\Cuball{\Mcca D}\to\Mcca{\Simpla BC}\) defined by %
\(g(z)=f_z\) belongs to \(\ACONES(D,\Simpla BC)\).
It is obvious that \(\Norm g\leq 1\) so let us check that \(g\) is
measurable.
Let \(\delta\in\Mcca{\Cpath XD}\), we must prove that %
\(g\Comp\delta\in\Mcca{\Cpath X{\Simpla BC}}\) so let %
\(Y\in\ARCAT\) and \(p\in\Mcms{\Simpla BC}_Y\), we must prove that
the function
\(
  \phi=\Absm{(s,r)\in Y\times X}{p(s,g(\delta(r)))}:Y\times X\to\Realp
\)
is measurable.
We have \(p=\Mtfun\beta m\) where \(\beta\in\Mcca{\Cpath YB}\) and %
\(m\in\Mcms D_Y\) so that
\(
  \phi=\Absm{(s,r)\in Y\times X}{m(s,g(\delta(r))(\beta(s)))}
  =\Absm{(s,r)\in Y\times X}{m(s,f(\delta(r),\beta(s)))}
  \)
is measurable because \(f\) is measurable and %
\(\Absm{(r,s)\in Y\times X}{(\delta(r),\beta(s))}
\in\Mcca{\Cpath{Y\times X}{\With DB}}\).
We are left with proving that \(g\) is analytic.
For \(z\in\Cuball{\Mcca D}\) we have
\begin{align*}
  g(z)
  &=\Absm{x\in\Cuball{\Mcca B}}{f(z,x)}\\
  &=\Absm{x\in\Cuball{\Mcca B}}
    {\sum_{n=0}^\infty\frac 1{\Factor n}\Linan nf(\Rep{(z,x)}n)}\\
  &=\sum_{n=0}^\infty\frac 1{\Factor n}
    \Absm{x\in\Cuball{\Mcca B}}{\Linan nf(\Rep{(z,0)+(0,x)}n)}\\
  &=\sum_{n=0}^\infty\frac 1{\Factor n}
    \Absm{x\in\Cuball{\Mcca B}}
    {\sum_{k=0}^n\Binom{n}{k}\Linan nf(\Rep{(z,0)}k,\Rep{(0,x)}{n-k})}\\
  &=\sum_{n=0}^\infty\sum_{k=0}^n
    \frac 1{\Factor k\Factor{(n-k)}}
    {\Absm{x\in\Cuball{\Mcca B}}{\Linan nf(\Rep{(z,0)}k,\Rep{(0,x)}{n-k})}}\\
  &=\sum_{k=0}^\infty
    \frac 1{\Factor k}\sum_{l=0}^\infty\frac 1{\Factor l}
    {\Absm{x\in\Cuball{\Mcca B}}{\Linan{k+l}f(\Rep{(z,0)}k,\Rep{(0,x)}{l})}}
    =\sum_{k=0}^\infty\frac 1{\Factor k}h_k(\Rep zk)
\end{align*}
where
\begin{align*}
  h_k(\List z1k)=\sum_{l=0}^\infty\frac 1{\Factor l}
  {\Absm{x\in\Cuball{\Mcca B}}
  {\Linan{k+l}f((z_1,0),\dots,(z_k,0),\Rep{(0,x)}{l})}}
\end{align*}
is well defined for all \(\List z1k\in\Mcca D\).
Indeed, as usual it suffices to take some \(\lambda>0\) such that
\(\lambda\geq\Norm{z_i}_D\) for \(i=1,\dots,k\) and observe that for
all \(N\in\Nat\) one has, setting \(z=\sum_{i=1}^kz_i\) so that
\(\frac 1{k\lambda}z\in\Cuball{\Mcca D}\),
\begin{align*}
  \sum_{l=0}^N\frac 1{\Factor l}
  {\Absm{x\in\Cuball{\Mcca B}}
  {\Linan {k+l}f((z_1,0),\dots,(z_k,0),\Rep{(0,x)}{l})}}
  &\leq h_k(\Rep zk)\\
  &=(k\lambda)^kh_k(\frac1{k\lambda}\Rep zk)\\
  &\leq \Factor k(k\lambda)^kg(\frac 1{k\lambda}z)\,.
\end{align*}
The map \(h_k\) is multilinear by \(\omega\)-continuity of the algebraic
operations in each cone, it is obviously symmetric by the symmetry of
the \(\Linan nf\).
Its \(\omega\)-continuity follows from that of the \(\Linan nf\) and from
commutations of lubs. Last, measurability and integrability follow as
usual from the monotone convergence theorem.
So we have \(h_k\in\Mlsym nD{\Simpla BC}\) and this shows that \(g\)
is analytic.

To prove that \(\ACONES\) is cartesian closed it suffices to prove
that \(g\) is the unique morphism in \(\ACONES(D,\Simpla BC)\) such that
\begin{align*}
  \Ev\Comp\Withp{g}{\Id_B}=f
\end{align*}
which results straightforwardly from the fact that \(\Ev\) is defined
exactly as in \(\SET\).
\end{proof}

\begin{theorem}
  \label{th:derfuna-preserves-limits}
  The functor %
  \(\Derfuna:\ICONES\to\ACONES\) preserves all limits.
\end{theorem}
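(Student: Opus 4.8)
The plan is to mirror, step for step, the reduction used for $\Derfuns$ in the proof of Theorem~\ref{th:derfuns-preserves-limits}: as there, it suffices to show that $\Derfuna$ preserves all small products and all equalizers, since every limit is built from these. Preservation of products is already essentially in hand: by Proposition~\ref{prop:acones-cartesian}, for a family $(B_i)_{i\in I}$ of integrable cones the product $\Bwith_{i\in I}B_i$ computed in $\ICONES$, equipped with the projections $\Derfuna(\Proj i)$, is again the product in $\ACONES$, which is exactly the statement that $\Derfuna$ carries the $\ICONES$-product to an $\ACONES$-product. Hence the whole content of the theorem lies in equalizers.

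For the equalizer step, let $f,g\in\ICONES(B,C)$ and let $(E,e\in\ICONES(E,B))$ be their equalizer in $\ICONES$, as constructed in the proof of Theorem~\ref{th:mcones-complete}; recall that $\Mcca E=\{x\in\Mcca B\St f(x)=g(x)\}$, that the algebraic, norm, measurability and integrability structure of $E$ is just that of $B$ restricted to this subset, and that $e$ is the inclusion. I would show that $(E,\Derfuna(e))$ is the equalizer of $\Derfuna(f),\Derfuna(g)$ in $\ACONES$. So let $H$ be an integrable cone and $h\in\ACONES(H,B)$ satisfy $f\Comp h=g\Comp h$ as analytic functions $\Cuball{\Mcca H}\to\Mcca C$. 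Since $\Norm h\leq 1$, it is enough to produce an analytic $h'\in\ACONES(H,E)$ with $e\Comp h'=h$; uniqueness of such an $h'$ is then immediate from injectivity of $e$.

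The crucial point — the one with no counterpart in the stable case — is that the entire homogeneous polynomial decomposition of $h$ must factor through $E$, not merely the pointwise values $h(y)$. Because $f$ is linear and $\omega$-continuous, $f\Comp h$ is analytic with homogeneous decomposition $\bigl(f\Comp\Hpan n(h)\bigr)_{n\in\Nat}$: indeed $f(h(y))=\sup_N f\!\left(\sum_{n\leq N}\Hpan n(h)(y)\right)=\sum_n f(\Hpan n(h)(y))$, and each $f\Comp\Hpan n(h)$ is $n$-homogeneous polynomial with linearization $f\Comp\Linhp n(\Hpan n(h))$, which is symmetric $n$-linear, measurable and integrable since $f$ is a morphism of $\ICONES$; the same holds for $g$. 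By uniqueness of the homogeneous decomposition (Lemma~\ref{lemma:unique-monomial}) and of the linearization (Lemma~\ref{lemma:unique-linearization}), the hypothesis $f\Comp h=g\Comp h$ forces $f\Comp\Linan n(h)=g\Comp\Linan n(h)$ for every $n$. Hence $\Linan n(h)$ takes all its values in $\Mcca E$ and corestricts to a symmetric $n$-linear, continuous, measurable and integrable map into $E$ (these properties being inherited, since the structure of $E$ is that of $B$ restricted). Consequently each $\Hpan n(h)$ corestricts to an $n$-homogeneous polynomial $\Cuball{\Mcca H}\to\Mcca E$, and the analytic function $h'$ with these same components and codomain $E$ is the desired morphism in $\ACONES(H,E)$ with $e\Comp h'=h$.

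The main obstacle is precisely this factoring of $\Linan n(h)$ through the subcone $E$: one must pass from the pointwise identity $f\Comp h=g\Comp h$ to identities between the multilinear symmetric functions of the two Taylor expansions, which is where uniqueness of linearization and the fact that $e$ reflects the order and inherits all operations (Theorem~\ref{th:alg-cones-equalizers}) enter. Once this is established the argument closes exactly as for $\Derfuns$, and combining preservation of products with preservation of equalizers yields preservation of all limits.
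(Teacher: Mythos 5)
Your proposal is correct and follows essentially the same route as the paper: reduce to products (already handled by Proposition~\ref{prop:acones-cartesian}) and equalizers, then show that the homogeneous components of $h$, and hence their linearizations, factor through the equalizer $E$. The paper establishes $f\Comp h_n=g\Comp h_n$ by applying the derivative characterization of the decomposition to the tests $p\Compl f$ and invoking \Mssepr{}, whereas you invoke Lemmas~\ref{lemma:unique-monomial} and~\ref{lemma:unique-linearization} directly; since those lemmas are proved by exactly that characterization, the two arguments coincide in substance.
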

\begin{proof}
  Preservation of categorical products resulting easily from the
  construction of products in \(\ACONES\)
  (Proposition~\ref{prop:acones-cartesian}) and in \(\ICONES\)
  (Theorem~\ref{th:mcones-complete}), let us deal with equalizers.
  So let \(f,g\in\ICONES(B,C)\) and let \((E,e)\) be their equalizer
  in \(\ICONES\): remember that
  \(\Mcca E=\{x\in\Mcca B\St f(x)=g(x)\}\) and that
  \(e\in\ICONES(E,B)\) is the obvious injection.
  Let \(h\in\ACONES(D,B)\) be such that \(f\Comp h=g\Comp h\).
  This means that \(h(\Cuball{\Mcca D})\subseteq\Cuball{\Mcca E}\).
  We know that
  \begin{align*}
    h(z)=\sum_{n=0}^\infty\frac 1{\Factor n}h_n(z)
  \end{align*}
  where \(h_n\in\Monomic nDB\) is fully characterized by
  \begin{align}
    \label{eq:ana-hom-dec-charact}
    \forall x'\in\Mcca{\Cdual B}
    \quad x'(h_n(z))=\Evreal{\frac{d^n}{dt^n}x'(h(tz))}{t=0}\,,
  \end{align}
  see the proof of Lemma~\ref{lemma:unique-monomial}.
  We contend that \(f\Comp h_n=g\Comp h_n\) so let %
  \(z\in\Cuball{\Mcca D}\) and let \(p\in\Mcms C_\Measterm\), we
  have %
  \begin{align*}
    p(f(h_n(z)))=\Evreal{\frac{d^n}{dt^n}p(f(h(tz))))}{t=0}\,,
  \end{align*}
  by Equation~\Eqref{eq:ana-hom-dec-charact} applied with %
  \(x'=p\Compl f\in\Mcca{\Cdual B}\) and hence %
  \(p(f(h_n(z)))=p(g(h_n(z)))\) which proves our contention by
  \Mssepr.
  This shows that %
  \(h_n(\Cuball{\Mcca D})\subseteq\Cuball{\Mcca E}\).
  Therefore since the operator %
  \(\Linhp n\) is defined in terms of addition, subtraction and
  multiplication by non-negative real numbers we have %
  \(\Linhp n{h_n}\in\Mcca{\Mlsym nBE}\) --~measurability and
  integrability follow from the fact that measurability tests and
  integrals in \(E\) are defined as in \(B\).
  Finally this shows that %
  \(\Linan nh\in\Mcca{\Mlsym nBE}\) and hence %
  \(h\in\ACONES(D,E)\) which shows that \((E,e)\) is the equalizer of
  \(f,g\) in \(\ACONES\).
\end{proof}

\section{The linear-non-linear adjunction, in the stable and
  analytic cases} %
\label{sec:lin-nonlin-adj}

From now on we use \(\cC\) to denote one of the two cartesian closed
categories \(\SCONES\) and \(\ACONES\), which are both locally small.
In both cases we use \(\Derfun\) to denote the functor
\(\ICONES\to\cC\) (which was denoted by \(\Derfuns\) when
\(\cC=\SCONES\) and by \(\Derfuna\) when \(\cC=\ACONES\)).

Remember that \(\Derfun\) preserves all limits, see
Theorems~\ref{th:derfuns-preserves-limits}
and~\ref{th:derfuna-preserves-limits}.

\renewcommand\Derfuna{\Derfun}
\renewcommand\Derfuns{\Derfun}
\renewcommand\SCONES{\cC}
\renewcommand\STAB{\cC}
\renewcommand\ACONES{\cC}
\renewcommand\Eana{\mathsf{E}}
\renewcommand\Estab{\mathsf{E}}
\renewcommand\Expadjs{\Theta}
\renewcommand\Expadja{\Theta}
\renewcommand\Excls{\oc}
\renewcommand\Excla{\oc}
\renewcommand\Exclls{\oc\oc}
\renewcommand\Ders{\mathsf{der}}
\renewcommand\Diggs{\mathsf{dig}}
\renewcommand\Unistab{\mathsf{nl}}
\renewcommand\Proms[1]{{#1}^{\oc}}
\renewcommand\Promms[1]{{#1}^{\oc\oc}}
\renewcommand\Proma[1]{{#1}^{\oc}}
\renewcommand\Promma[1]{{#1}^{\oc\oc}}
\renewcommand\Simpls[2]{#1\Rightarrow#2}
\renewcommand\Coalga{\Coalgs}
\noindent 
For that reason the two categories \(\ICONES\) and \(\SCONES\) can be
related by a linear-non-linear adjunction in the sense
of~\cite{Mellies09}, and hence form a categorical model of
Intuitionistic \(\LL\).
We describe directly the associated Seely category.

Let \(\Estab:\STAB\to\ICONES\) be the left adjoint of \(\Derfuns\),
which exists by Theorem~\ref{th:Icones-adjoint-functor}, and let us
introduce the notation %
\(\Expadjs_{B,C}:\ICONES(\Estab B,C)\to\STAB(B,\Derfuns
C)=\STAB(B,C)\) for the associated natural bijection (remember that
\(\Derfuns C=C\)).

\begin{remark}
  Just as for the tensor product (see
  Remark~\ref{rk:tensor-concrete-pres}), we have no concrete
  description of the \(\Estab\) functor for the time being.
\end{remark}

We use %
\((\Excls,\Ders,\Diggs)\) for the induced comonad on \(\ICONES\) whose
Kleisli category is (equivalent to) \(\STAB\) since
\begin{align*}
  \Kleisli{\ICONES}{\Excls}(B,C)
  &=\ICONES(\Excls B,C)\\
  &=\ICONES(\Estab\Compl\Derfuns B,C)\\
  &\Isom\STAB(\Derfuns B,\Derfuns C)\\
  &=\STAB(B,C)\,.
\end{align*}
Notice that actually \(\Excls B=\Estab B\) and similarly for
morphisms.
The notation \(\Der{}\) for the counit of this comonad comes from the
dereliction rule of \(\LL\), and the notation \(\Digg{}\) for the
comultiplication comes from the \(\LL\) digging derived rule.

Let
\(\Unistab_B=\Expadjs_{B,\Estab B}(\Id_{\Estab B})\in\STAB(B,\Excls
B)\) be the unit of the adjunction,
which is the ``universal nonlinear map'' on \(B\)
in the sense that for each %
integrable cone \(C\) and each \(f\in\STAB(B,C)\) one has %
\(f=\phi\Comp\Unistab_B\) for a unique \(\phi\in\ICONES(\Excls B,C)\),
namely \(\phi=\Invp{\Expadjs_{B,C}}(f)\) (dropping the \(\Derfuns\)
symbol since this functor acts as the identity on objects and
morphisms considered as functions).
So that for \(h\in\ICONES(\Excls B,C)\), one has %
\[
  \Expadjs_{B,C}(h)=h\Comp\Unistab_B\,.
\]
For each \(x\in\Cuball{\Mcca B}\) we set %
\(\Proms x=\Unistab_B(x)\in\Cuball{\Mcca{\Estab B}}\) so that, for %
\(f\in\STAB(B,C)\) we have \(f(x)=\Invp{\Expadjs_{B,C}}(f)(\Proms x)\).

The next lemma is similar to Proposition~\ref{prop:fun-ttree-charact}.
\begin{lemma}
  \label{lemma:tens-excl-equal-charact}
  Let \(n\geq 1\), let \(\List B1n,C\) be objects of \(\ICONES\) and %
  \(f\) and \(g\) be elements of
  \(\ICONES(\Excls{B_1}\ITens\cdots\ITens\Excls{B_n},C)\) such that
  \(f(\Proms{x_1}\ITens\cdots\ITens\Proms{x_n})
  =g(\Proms{x_1}\ITens\cdots\ITens\Proms{x_n})\) for all
  \((x_i\in\Cuball{\Mcca{B_i}})_{i=1}^n\).
  Then \(f=g\).
\end{lemma}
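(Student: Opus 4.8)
The plan is to prove the statement by induction on $n$, using at each step the tensor--hom adjunction isomorphism $\Phi$ (Theorem~\ref{th:icones-tens-limpl-isom}) together with a one-variable density fact for the exponential. The crucial auxiliary observation --- call it \emph{density of the $\Proms x$} --- is that for any integrable cones $B,C$, two morphisms $h_1,h_2\in\ICONES(\Excls B,C)$ that agree on every $\Proms x$ (for $x\in\Cuball{\Mcca B}$) must coincide. This is immediate from the adjunction: since $\Unistab_B(x)=\Proms x$, the hypothesis says that the two morphisms $\Expadjs_{B,C}(h_i)=h_i\Comp\Unistab_B$ of $\cC$ are equal as functions, and $\Expadjs_{B,C}$ is a bijection, whence $h_1=h_2$. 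This is exactly the base case $n=1$ of the induction.

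For the inductive step I would fix the right-nested association and write $\Excls{B_1}\ITens\cdots\ITens\Excls{B_n}=\Tens{\Excls{B_1}}{E}$ with $E=\Excls{B_2}\ITens\cdots\ITens\Excls{B_n}$; this is harmless by coherence, since the associators carry $\Proms{x_1}\ITens\cdots\ITens\Proms{x_n}$ to the corresponding decomposable element for any other parenthesization (their action is recorded in~\Eqref{eq:associso-tens-args}), so the hypothesis is preserved. Given $f,g\in\ICONES(\Tens{\Excls{B_1}}{E},C)$, I would apply $\Phi_{\Excls{B_1},E,C}$ to obtain $\Phi(f),\Phi(g)\in\ICONES(\Excls{B_1},\Limpl EC)$; by~\Eqref{eq:tens-adj-bij-comp} and the definition of $\Tensor$ one has $\Phi(f)(\Proms{x_1})(w)=f(\Tens{\Proms{x_1}}{w})$, and likewise for $g$.

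I would then deduce $\Phi(f)=\Phi(g)$ from the density base case applied to $\Excls{B_1}$: it suffices to check $\Phi(f)(\Proms{x_1})=\Phi(g)(\Proms{x_1})$ in $\Limpl EC$ for every $x_1\in\Cuball{\Mcca{B_1}}$. Each side is a morphism $E\to C$ of norm $\le 1$ (since $f,g,\Tensor$ are non-expansive and $\Proms{x_1}\in\Cuball{\Mcca{\Excls{B_1}}}$), namely $f_{x_1}=f(\Tens{\Proms{x_1}}{-})$ and $g_{x_1}=g(\Tens{\Proms{x_1}}{-})$ in $\ICONES(\Excls{B_2}\ITens\cdots\ITens\Excls{B_n},C)$. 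By hypothesis $f_{x_1}(\Proms{x_2}\ITens\cdots\ITens\Proms{x_n})=f(\Tens{\Proms{x_1}}{(\Proms{x_2}\ITens\cdots\ITens\Proms{x_n})})=g(\Tens{\Proms{x_1}}{(\Proms{x_2}\ITens\cdots\ITens\Proms{x_n})})=g_{x_1}(\Proms{x_2}\ITens\cdots\ITens\Proms{x_n})$, so the inductive hypothesis gives $f_{x_1}=g_{x_1}$. Hence $\Phi(f)(\Proms{x_1})=\Phi(g)(\Proms{x_1})$ for all $x_1$, density yields $\Phi(f)=\Phi(g)$, and injectivity of the bijection $\Phi_{\Excls{B_1},E,C}$ gives $f=g$.

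I do not expect a serious obstacle: the whole argument is a routine combination of $\Phi$ with the adjunction unit $\Unistab$, and it runs in exact parallel to Proposition~\ref{prop:fun-ttree-charact}, the only difference being that agreement on \emph{all} tensor factors is here weakened to agreement on the $\Proms{x_i}$, for which the density input compensates. The two points needing a little care are the bookkeeping with the associativity isomorphisms --- handled once and for all by~\Eqref{eq:associso-tens-args} --- and the verification that $\Phi(f)(\Proms{x_1})$ genuinely lies in $\ICONES(E,C)$ (norm $\le 1$), so that the inductive hypothesis applies verbatim; both are straightforward.
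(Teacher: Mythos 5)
Your proof is correct and follows essentially the same route as the paper's: induction on $n$, with the base case given by bijectivity of the adjunction map $\Expadjs_{B_1,C}$ (so that agreement on all $\Proms{x}$ forces equality), and the inductive step given by currying via $\Phi_{\Excls{B_1},E,C}$ (the paper writes this as $\Curlin$) and applying the inductive hypothesis to each $\Phi(f)(\Proms{x_1})$. Your extra remarks on the associator bookkeeping and the norm bound $\Norm{f_{x_1}}\leq 1$ are points the paper leaves implicit, but they do not change the argument.
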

\begin{proof}
  By induction on \(n\).
  For \(n=1\) this comes from the fact that
  \(f(x)=\Inv{(\Expadjs_{B_1,C})}f(\Proms{x})\) for all
  \(x\in\Cuball{\Mcca B}\) so that our assumption entails
  \(f=g\).

  For \(n>1\), we have
  \(\Curlin(f),\Curlin(g)
  \in\ICONES(\Excls{B_1},\Limpl{\Excls{B_2}\ITens\cdots\ITens\Excls{B_n}}C)\).
  For each \(x_1\in\Cuball{\Mcca{B_1}}\), the two functions %
  \(\Curlin(f)(\Proms{x_1}),\Curlin(g)(\Proms{x_1})
  \in\ICONES(\Excls{B_2}\ITens\cdots\ITens\Excls{B_n},C)\) satisfy
  \begin{align*}
    \forall x_2\in\Cuball{\Mcca{B_2}},\dots,x_n\in\Cuball{\Mcca{B_n}}
    \quad
    \Curlin(f)(\Proms{x_1})(\Proms{x_2}\ITens\cdots\ITens\Proms{x_n})
    =
    \Curlin(g)(\Proms{x_1})(\Proms{x_2}\ITens\cdots\ITens\Proms{x_n})
  \end{align*}
  and hence \(\Curlin(f)(\Proms{x_1})=\Curlin(g)(\Proms{x_1})\) by
  inductive hypothesis.
  As in the base case we get \(f=g\).
\end{proof}

The counit \(\Ders_B\in\ICONES(\Excls B,B)\) of the comonad
\(\Excls\_\) is also the counit of the adjunction.
It satisfies therefore %
\begin{align*}
  \forall x\in\Cuball{\Mcca B}\quad \Ders_B(\Proms x)=x\,.
\end{align*}
The comultiplication %
\(\Diggs_B\in\ICONES(\Excls B,\Excls{\Excls B})
=\ICONES(\Estab\Derfun B,\Estab\Derfun\Estab\Derfun B)\) is defined by
\(
\Diggs_B=\Estab(\Unistab_B)
\)
so that we have
\begin{align*}
  \forall x\in\Cuball{\Mcca B}\quad \Diggs_B(\Proms x)=\Promms x\,.
\end{align*}
since by naturality of \(\Unistab_B\) we have %
\(\Derfun(\Estab(\Unistab_B))\Comp\Unistab_B
=\Unistab_{\Derfun(\Estab B)}\Comp\Unistab_B\) in \(\STAB\).

\begin{lemma} %
  \label{lemma:excl-fun-prom}
  Let \(f\in\ICONES(B,C)\) and \(x\in\Cuball{\Mcca B}\).
  We have \((\Excls f)(\Proms x)=\Proms{f(x)}\).
\end{lemma}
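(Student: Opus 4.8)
The plan is to recognize this identity as nothing more than an instance of the naturality of the unit $\Unistab$ of the linear–non-linear adjunction $\Estab\Adj\Derfun$, evaluated at a point, exactly as was already done a few lines above to justify the formula $\Diggs_B(\Proms x)=\Promms x$. First I would recall that $\Unistab$, being the unit of this adjunction, is a natural transformation $\Id_{\cC}\Rightarrow\Derfun\Comp\Estab$. Concretely, this means that for every morphism $g\in\cC(B,C)$ the naturality square commutes, i.e.
\[
  \Derfun(\Estab g)\Comp\Unistab_B=\Unistab_C\Comp g
\]
as morphisms in $\cC$.

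Next I would instantiate this square at a suitable $g$. Since $f\in\ICONES(B,C)$ we have $\Derfun f\in\cC(B,C)$ (the inclusion $\ICONES(B,C)\subseteq\cC(B,C)$ having been recorded earlier, with $\Derfun$ acting as the identity on morphisms), so I would take $g=\Derfun f$. By the very definition of the comonad action on morphisms, $\Excls f=\Estab(\Derfun f)$, hence the instantiated naturality square reads
\[
  \Derfun(\Excls f)\Comp\Unistab_B=\Unistab_C\Comp\Derfun f\,.
\]

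Finally I would read off the desired equation pointwise. Because $\Derfun$ leaves objects and morphisms (viewed as set-theoretic functions) unchanged, the above equality of $\cC$-morphisms is literally an equality of functions from $\Mcca B$ to $\Mcca{\Excls C}$, which I may evaluate at an arbitrary $x\in\Cuball{\Mcca B}$. The left-hand side then becomes $(\Excls f)(\Unistab_B(x))=(\Excls f)(\Proms x)$ and the right-hand side becomes $\Unistab_C(f(x))=\Proms{f(x)}$, using the definition $\Proms y=\Unistab_{(-)}(y)$ on both sides; this is exactly the claimed identity $(\Excls f)(\Proms x)=\Proms{f(x)}$. There is no real obstacle here: the argument is pure bookkeeping with the adjunction data, and the only point that deserves an explicit mention is that $\Derfun$ is the identity on morphisms-as-functions, which is what allows us to pass from the abstract equality of $\cC$-morphisms to the evaluation at $x$ and to treat $\Proms x$ directly as the value $\Unistab_B(x)$.
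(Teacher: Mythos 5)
Your proposal is correct and follows essentially the same route as the paper: the paper likewise unfolds $\Excls f=\Estab(\Derfun f)$, invokes naturality of the unit $\Unistab$ to rewrite $\Derfun(\Estab(\Derfun f))\Comp\Unistab_B$ as $\Unistab_C\Comp\Derfun f$, and evaluates at $x$. Nothing is missing.
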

\begin{proof}
  We have %
  \((\Excls f)(\Proms x)
  =(\Estab(\Derfun f))(\Proms x)
  =((\Derfun(\Estab(\Derfun f)))\Comp\Unistab_{\Derfun B})(x)\)
  where the composition is taken in \(\STAB\).
  By naturality we get %
  \((\Excls f)(\Proms x) =(\Unistab_{\Derfun C}\Comp\Derfun f)(x)
  =\Proms{f(x)}\).
\end{proof}

\noindent 
Consider the two functors %
\(L,R:\Op\ICONES\times\Op\ICONES\times\ICONES\to\ICONES\) defined on
objects by %
\(L(B,C,D)=(\Simpls B{(\Limpl CD))}\) and %
\(R(B,C,D)=(\Limpl C{(\Simpls BD))}\), and similarly on morphisms.

\begin{lemma} %
  \label{lemma:stab-lin-swap}
  Let \(B,C,D\) be integrable cones. There is an isomorphism in
  \(\ICONES\) from %
  \(L(B,C,D)=(\Simpls B{(\Limpl CD)})\) to %
  \(R(B,C,D)=(\Limpl C{(\Simpls BD))}\) %
  which is natural in \(B\), \(C\) and \(D\).
\end{lemma}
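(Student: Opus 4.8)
The plan is to construct the isomorphism explicitly as an argument-swapping map, exactly in the style of Lemmas~\ref{lemma:swap-lin-path} and~\ref{lemma:swap-lin-lin}, and then to verify that it and its reverse swap are morphisms of $\ICONES$. Concretely, for $f\in\Mcca{\Simpls B{(\Limpl CD)}}$ I would set $\Phi(f)(y)=\Absm{x\in\Cuball{\Mcca B}}{f(x)(y)}$ for $y\in\Mcca C$, and symmetrically $\Psi(g)(x)=\Absm{y\in\Mcca C}{g(y)(x)}$ for $g\in\Mcca{\Limpl C{(\Simpls BD)}}$ and $x\in\Cuball{\Mcca B}$. These are manifestly mutually inverse bijections at the level of underlying functions, so the whole content is to check that both land in the correct cones and respect all the structure.

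First I would establish well-definedness of $\Phi$. For fixed $y\in\Mcca C$, the evaluation map $\mathrm{ev}_y:\Limpl CD\to D$, $z\mapsto z(y)$, is (up to the scalar $\Norm y$) a morphism in $\ICONES$, since the operations, order, measurability tests and integrals of $\Limpl CD$ are all computed pointwise; hence $\Phi(f)(y)=\mathrm{ev}_y\Comp f$ is stable (resp.\ analytic) and measurable, being the composite of the $\cC$-morphism $f$ with a linear, continuous and measurable map, and so lies in $\Mcca{\Simpls BD}$. That $\Phi(f)$ is linear, continuous and non-expansive follows from the pointwise description of the operations, order and norm of both $\Simpls BD$ and $\Limpl CD$, giving in fact $\Norm{\Phi(f)}=\Norm f$, so that $\Phi$ is isometric. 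Integral preservation of $\Phi(f)$ is immediate from the pointwise definition of integrals in $\Simpls BD$ together with the integral preservation of each $f(x)\in\Mcca{\Limpl CD}$. The measurability and integral preservation of $\Phi$ itself then reduce, as at the end of the proof of Lemma~\ref{lemma:swap-lin-path}, to the evident bijection between the two test families: a test $\Mtfun\beta{(\Mtlfun\gamma m)}$ of $\Simpls B{(\Limpl CD)}$ corresponds to the test $\Mtlfun\gamma{(\Mtfun\beta m)}$ of $\Limpl C{(\Simpls BD)}$.

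The more delicate direction is checking that $\Psi(g)$ is a non-linear morphism valued in $\Limpl CD$, and this is the step I expect to be the main obstacle, because it must be carried out separately for the two instantiations of $\cC$. For each fixed $x$, $\Psi(g)(x)=\mathrm{ev}_x\Comp g$ belongs to $\ICONES(C,D)\subseteq\Mcca{\Limpl CD}$ by the same ``evaluation is linear'' argument, where now $\mathrm{ev}_x:\Simpls BD\to D$. In the stable case, total monotonicity of $x\mapsto\Psi(g)(x)$ transfers from that of each $g(y)$, because the iterated differences $\Fdiff{\Psi(g)}{\Vect u}$ are computed pointwise in $y$ and the order of $\Limpl CD$ is the pointwise order, so the inequalities of Lemma~\ref{lemma:stable-order-charact} hold coordinatewise; $\omega$-continuity and measurability follow by the usual pointwise and monotone-convergence arguments. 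In the analytic case I would instead invoke Theorem~\ref{th:linan-lin-morph}: since $\Linan n\in\ICONES(\Simpls BD,n^n\Mlsym nBD)$, the map $y\mapsto\Linan n(g(y))$ is linear and integral-preserving, so $H_n(\Vect x)=\Absm{y\in\Mcca C}{\Linan n(g(y))(\Vect x)}$ defines an element of $\Mlsym nB{(\Limpl CD)}$, and $\Psi(g)(x)=\sum_{n=0}^\infty\frac1{\Factor n}H_n(\Rep xn)$ is the required homogeneous polynomial decomposition.

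Finally, having shown $\Phi\in\ICONES(\Simpls B{(\Limpl CD)},\Limpl C{(\Simpls BD)})$ and $\Psi$ its two-sided inverse lying in the opposite hom-set, the map $\Phi$ is an isomorphism in $\ICONES$. Naturality in $B$, $C$ and $D$ is then routine from the explicit pointwise formula together with the functoriality of $\Simpls\_\_$ and $\Limpl\_\_$; alternatively, one may package the relevant chains of natural bijections through Lemma~\ref{lemma:functor-yoneda-iso} to obtain naturality for free.
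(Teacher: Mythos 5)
Your proposal is correct and follows essentially the same route as the paper, whose own proof is only a one-sentence sketch giving exactly your swap formula \(f\mapsto\Absm{y\in\Mcca C}{\Absm{x\in\Cuball{\Mcca B}}{f(x)(y)}}\) and deferring the verification to the "pattern seen many times"; you supply the details the paper omits, including the correct treatment of the two cases \(\cC=\STA\) and \(\cC=\ANA\) and the test correspondence \(\Mtfun\beta{(\Mtlfun\gamma m)}\leftrightarrow\Mtlfun\gamma{(\Mtfun\beta m)}\). The only cosmetic slip is the appeal to Lemma~\ref{lemma:stable-order-charact} for total monotonicity of \(\Psi(g)\): the relevant fact is simply that the cone order on \(\Limpl CD\) is the pointwise order, so the total-monotonicity inequalities for \(\Psi(g)\) reduce to those for each \(g(y)\).
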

\begin{proof}[Proof sketch]
  This needs a separate proof in each case \(\cC=\STA\) and
  \(\cC=\ANA\), which follows a pattern that we have seen many
  times.
  The natural isomorphism maps %
  \(f\in\Mcca{\Simpls B{(\Limpl CD)}}\) to %
  \(\Absm{y\in\Mcca C}{\Absm{x\in\Cuball{\Mcca B}}{f(x,y)}}\).
\end{proof}
\noindent 
Then we have
\begin{align*}
  \ICONES(\Excls(\With{B_1}{B_2}),C)
  &\Isom\STAB(\With{B_1}{B_2},C)\\
  &\Isom\STAB(B_1,\Simpls{B_2}{C})\\
  &\Isom\ICONES(\Excls{B_1},\Simpls{B_2}{C})\\
  &\Isom\ICONES(\Sone,\Limpl{\Excls{B_1}}{(\Simpls{B_2}{C})})\\
  &\Isom\ICONES(\Sone,\Simpls{B_2}{\Limplp{\Excls{B_1}}{C}})
  \text{ by Lemma~\ref{lemma:stab-lin-swap}}\\
  &\Isom\STAB(\Stop,\Simpls{B_2}{\Limplp{\Excls{B_1}}{C}})\\
  &\Isom\STAB(B_2,\Limplp{\Excls{B_1}}{C})\\
  &\Isom\ICONES(\Excls{B_2},\Limplp{\Excls{B_1}}{C})\\
  &\Isom\ICONES(\Tens{\Excls{B_2}}{\Excls{B_1}},C)\\
  &\Isom\ICONES(\Tens{\Excls{B_1}}{\Excls{B_2}},C)
\end{align*}
by a sequence of natural bijections and hence by
Lemma~\ref{lemma:functor-yoneda-iso} we have a natural isomorphism
\(\Seelyt_{B_1,B_2}\) in
\[
  \ICONES(\Tens{\Excls{B_1}}{\Excls{B_2}},\Excls(\With{B_1}{B_2}))
\]
which satisfies
\begin{align*}
  \Seelyt_{B_1,B_2}(\Tens{\Proms{x_1}}{\Proms{x_2}})
=\Proms{\Tuple{x_1,x_2}}\,.
\end{align*}
Notice that this equation fully characterizes \(\Seelyt_{B_1,B_2}\) by
Lemma~\ref{lemma:tens-excl-equal-charact}.

Similarly we define an iso \(\Seelyz\in\ICONES(\Sone,\Excls\Stop)\)
which is such that \(\Seelyz(t)=t\,\Proms 0\) for all \(t\in\Realp\).
Then one can prove using %
Lemma~\ref{lemma:functor-yoneda-iso} again that \(\Excls\) is a strong
monoidal comonad.

\begin{theorem}
  Equipped with the strong monoidal comonad \(\Excls\), the category %
  \(\ICONES\) is a Seely category in the sense of~\cite{Mellies09}.
\end{theorem}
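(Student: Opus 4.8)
The plan is to observe that every structural ingredient of a Seely category is already in place, so that only the coherence axioms remain. By Theorem~\ref{th:icones-smcc} the category $\ICONES$ is symmetric monoidal closed, and by Theorem~\ref{th:mcones-complete} it has all finite products $(\With{\_}{\_},\Stop)$; the triple $(\Excls,\Ders,\Diggs)$ is a comonad; and the preceding discussion supplies the Seely isomorphisms $\Seelyt_{B_1,B_2}\in\ICONES(\Tens{\Excls{B_1}}{\Excls{B_2}},\Excls(\With{B_1}{B_2}))$ and $\Seelyz\in\ICONES(\Sone,\Excls\Stop)$, making $\Excls$ a strong monoidal functor $(\ICONES,\With,\Stop)\to(\ICONES,\ITens,\Sone)$; moreover it has just been recorded that $\Excls$ is a strong monoidal comonad. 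What is left is to verify the coherence diagrams of Melliès' definition: first, that $(\Excls,\Seelyt,\Seelyz)$ satisfies the symmetric-monoidal-functor coherences (associativity, unit and symmetry of the $\Seelyt$'s), and second, the compatibility of the comultiplication $\Diggs$ with $\Seelyt$ and $\Seelyz$.

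The method for each axiom is uniform. Every coherence equation is an equality of two morphisms whose common domain is a tensor $\Excls{B_1}\ITens\cdots\ITens\Excls{B_n}$ (with $n\le 3$, attained by the associator). By Lemma~\ref{lemma:tens-excl-equal-charact}, two such morphisms coincide as soon as they agree on every \emph{pure} element $\Proms{x_1}\ITens\cdots\ITens\Proms{x_n}$ with $x_i\in\Cuball{\Mcca{B_i}}$. On these elements both composites can be evaluated by chaining the explicit identities already at our disposal, namely $\Seelyt_{B_1,B_2}(\Tens{\Proms{x_1}}{\Proms{x_2}})=\Proms{\Tuple{x_1,x_2}}$, $\Seelyz(t)=t\,\Proms 0$, $\Ders_B(\Proms x)=x$, $\Diggs_B(\Proms x)=\Promms x$, together with Lemma~\ref{lemma:excl-fun-prom} in the form $(\Excls f)(\Proms x)=\Proms{f(x)}$. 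In each case both sides collapse to the same pure expression, whence the diagram commutes.

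The monoidal-functor coherences are then routine bookkeeping about how $\Tuple{\cdot,\cdot}$ and the associativity and symmetry isos of $\With$ act on tuples of the $x_i$, once everything is read off on pure elements. The genuinely delicate axiom — the one that historically required care — is the compatibility of $\Diggs$ with $\Seelyt$: an equality of two morphisms $\Tens{\Excls{B_1}}{\Excls{B_2}}\to\Excls\Excls(\With{B_1}{B_2})$, one digging after merging through $\Seelyt$ and the other digging first and then merging through $\Excls\Seelyt$. Evaluating on $\Tens{\Proms{x_1}}{\Proms{x_2}}$, the first composite gives $\Diggs_{\With{B_1}{B_2}}(\Proms{\Tuple{x_1,x_2}})=\Promms{\Tuple{x_1,x_2}}$; the second, using $\Diggs_{B_i}(\Proms{x_i})=\Promms{x_i}=\Proms{\Proms{x_i}}$, the value of $\Seelyt$ on pure elements, and Lemma~\ref{lemma:excl-fun-prom} to push the outer $\Excls$ through, again yields $\Promms{\Tuple{x_1,x_2}}$, and the nullary compatibility with $\Seelyz$ is settled the same way on $\Sone$. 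Thus both composites agree on pure elements and, by Lemma~\ref{lemma:tens-excl-equal-charact}, are equal. The main obstacle is therefore not conceptual but organisational: laying out the coherence diagrams correctly and confirming that, once restricted to pure exponentials, every composite in sight is computed by the same short list of identities — precisely the role the characterization lemmas have played throughout the paper.
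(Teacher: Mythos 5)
Your proposal is correct and follows essentially the same route as the paper: reduce every Seely coherence diagram to an equality of morphisms out of a tensor of exponentials, apply Lemma~\ref{lemma:tens-excl-equal-charact} to test only on pure elements \(\Proms{x_1}\ITens\cdots\ITens\Proms{x_n}\), and evaluate both composites using the explicit identities \(\Seelyt_{B_1,B_2}(\Tens{\Proms{x_1}}{\Proms{x_2}})=\Proms{\Tuple{x_1,x_2}}\), \(\Diggs_B(\Proms x)=\Promms x\), \(\Ders_B(\Proms x)=x\) and \((\Excls f)(\Proms x)=\Proms{f(x)}\). The paper's proof is exactly this, worked out on the same sample diagram (compatibility of \(\Diggs\) with \(\Seelyt\), stated there with codomain \(\Excls(\With{\Excls{B_1}}{\Excls{B_2}})\), an equivalent formulation of the axiom you check).
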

\begin{proof}
  Using
  Lemma~\ref{lemma:tens-excl-equal-charact} %
  it is easy to prove the remaining properties, which regard
  \(\Excls\_\) and its associated morphisms.
  As an example let us prove that the following diagram commutes.
  \begin{center}
    \begin{tikzcd}
      \Tens{\Excls{B_1}}{\Excls{B_2}}
      \ar[r,"\Seelyt_{B_1,B_2}"]
      \ar[dd,swap,"\Tens{\Diggs_{B_1}}{\Diggs_{B_2}}"]
      &[2em]
      \Excls{(\With{B_1}{B_2})}
      \ar[d,"\Diggs_{\With{B_1}{B_2}}"]
      \\
      &
      \Exclls{(\With{B_1}{B_2})}
      \ar[d,"\Excls{\Tuple{\Excls{\Proj1},\Excls{\Proj2}}}"]\\
      \Tens{\Exclls{B_1}}{\Exclls{B_2}}
      \ar[r,"\Seelyt_{\Excls{B_1},\Excls{B_2}}"]
      &
      \Excls{(\With{\Excls{B_1}}{\Excls{B_2}})}
    \end{tikzcd}
  \end{center}
  Given \((x_i\in\Cuball{\Mcca{B_i}})_{i=1,2}\), we have
  \begin{align*}
    \Seelyt_{\Excls{B_1},\Excls{B_2}}
    (\Tensp{\Diggs_{B_1}}{\Diggs_{B_2}}
    (\Tens{\Proms{x_1}}{\Proms{x_2}}))
    &=\Seelyt_{\Excls{B_1},\Excls{B_2}}
      (\Tens{\Promms{x_1}}{\Promms{x_2}})\\
    &=\Proms{\Tuple{\Proms{x_1},\Proms{x_2}}}
  \end{align*}
  and
  \begin{align*}
    \Excls{\Tuple{\Excls{\Proj1},\Excls{\Proj2}}}
    (\Diggs_{\With{B_1}{B_2}}
    (\Seelyt_{B_1,B_2}
    (\Tens{\Proms{x_1}}{\Proms{x_2}})))
    &=\Excls{\Tuple{\Excls{\Proj1},\Excls{\Proj2}}}
    (\Diggs_{\With{B_1}{B_2}}
    (\Proms{\Tuple{x_1,x_2}}))\\
    &=\Excls{\Tuple{\Excls{\Proj1},\Excls{\Proj2}}}
      (\Promms{\Tuple{x_1,x_2}})\\
    &=\Proms{(\Tuple{\Excls{\Proj1},\Excls{\Proj2}}
      (\Proms{\Tuple{x_1,x_2}}))}\\
    &=\Proms{\Tuple{
      (\Excls{\Proj1})(\Proms{\Tuple{x_1,x_2}}),
      (\Excls{\Proj2})(\Proms{\Tuple{x_1,x_2}})}}\\
    &=\Proms{\Tuple{\Proms{x_1},\Proms{x_2}}}
      \qedhere
  \end{align*}
\end{proof}

\begin{remark}
  \newcommand\SCONESV{\mathbf{SCones}}
  \label{rem:discrete-meas-comp}
  Assume that \(\cC=\SCONESV\) and that, as in Remark
  \ref{rmk:ar-main-example} and Section \ref{sec:icones-qbs},
  \(\ARCAT\) is the category whose only objects are \(\Real\) and
  \(\Measterm\) (the one element measurable space), and all measurable
  functions as morphisms.
  Then the underlying set of \(\Mcca{\Cmeas(\Real)}\) is the set of all
  finite measures on \(\Real\).
  
  Consider, as in
  Remark~\ref{rk:continuous-part-measure}, the map
  \(\ContinuousPart:\Mcca{\Cmeas(\Real)}\to\Mcca{\Cmeas(\Real)}\) which
  extracts the continuous part of each measure on \(\Real\).
  In addition to being linear and \(\omega\)-continuous, this map is
  also measurable (because the map
  \(\mu \mapsto \frac{d\mu}{d\lambda}\) is measurable~\cite[Theorem
  1.28]{kallenberg17}).
  However, as noted in %
  Remark~\ref{rk:continuous-part-measure}, this map does not commute
  with integrals and is therefore not a morphism in \(\ICONES\).
  
  Nevertheless, \(\ContinuousPart\) is a morphism in \(\SCONESV\), so
  there exists
  \(f\in\ICONES(\Excls{\Cmeas(\Real)},\allowbreak\Cmeas(\Real))\) such
  that %
  \(\ContinuousPart(\mu)=f(\Proms\mu)\).
  It would be interesting to understand how this function \(f\) works
  to get some insight on the internal structure of the stable
  exponential, which is defined in a rather implicit way (by the
  special adjoint functor theorem).
\end{remark}

\subsection{The coalgebra structure of \(\Cmeas(X)\)}
\label{sec:stable-exp-meas-coalg}
In this section we derive additional consequences of the integrability
condition on cones and linear morphisms.
First, we show that for each \(X\in\ARCAT\), the integrable cone
\(\Cmeas(X)\) has a structure of \(\oc\)-coalgebra.
The definition of this coalgebra structure strongly uses the fact that
\(\Excl{\Cmeas(X)}\) is an integrable cone, that is, all measurable
paths valued in that cone have an integral wrt.~each subprobability
measure on the measurable space (belonging to \(\ARCAT\)) where it is
defined.
This is typically a property which was not available
in~\cite{EhrhardPaganiTasson18}.

Let \(X\in\ARCAT\).
In Section~\ref{sec:cat-prop-integ} we defined
the Dirac path \(\Dirac X\in\Cuball{\Mcca{\Cpath X{\Cmeas(X)}}}\)
which maps \(r\in X\) to \(\Dirac X(r)\), the Dirac measure at
\(r\).
Since morphisms in \(\STAB\) are measurable we have
\begin{align*}
  \Unistab_{\Cmeas(X)}\Comp\Dirac X
  \in\Cuball{\Mcca{\Cpath X{\Excls{\Cmeas(X)}}}}
\end{align*}
and we define
\begin{align*}
  \Coalgs X=\Mcint{\Excls{\Cmeas(X)}}_X(\Unistab_{\Cmeas(X)}\Comp\Dirac X)
  \in\ICONES(\Cmeas(X),\Excls{\Cmeas(X)})
\end{align*}
using Theorem~\ref{th:meas-path-equiv}.
In other words \(\Coalgs X\) is defined by
\begin{align*}
  \Coalgs X(\mu)=\int_{r\in X}\Proms{\Dirac X(r)}\mu(dr)
\end{align*}
and satisfies \(\Coalgs X(\Dirac X(r))=\Proms{\Dirac X(r)}\).

\begin{theorem}
  \label{th:meas-cone-coalgebra-stab}
  Equipped with \(\Coalgs X\), the object \(\Cmeas(X)\) of \(\ICONES\)
  is a coalgebra of the comonad \(\Excls\_\).
  Moreover for each \(\phi\in\ARCAT(X,Y)\), we have
  \[
    \Cmeas(\phi)=\Pushf\phi\in\Em\ICONES(\Cmeas(X),\Cmeas(Y))
  \]
  so that \(\Cmeas\) is a functor \(\ARCAT\to\Em\ICONES\).
\end{theorem}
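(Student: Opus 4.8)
The plan is to reduce every required identity to an equality between two parallel $\ICONES$ morphisms out of $\Cmeas(X)$, and then to verify that equality by testing it only on the Dirac measures $\Dirac X(r)$, invoking the density Theorem~\ref{th:dirac-dense}. This strategy applies because all the maps in sight ($\Coalgs X$, the counit $\Ders{}$, the comultiplication $\Diggs{}$, the functorial action $\Excls\_$, and the push-forward $\Pushf\phi$) are genuine morphisms of $\ICONES$, as are their composites; in particular $\Coalgs X$ is a morphism because it arises from a measurable path through the iso of Theorem~\ref{th:meas-path-equiv}. The only substantive work is then a short list of evaluations at $\Proms{\Dirac X(r)}$, each of which collapses via the three identities recalled just before the statement: $\Ders_B(\Proms x)=x$, $\Diggs_B(\Proms x)=\Promms x$, and (Lemma~\ref{lemma:excl-fun-prom}) $(\Excls f)(\Proms x)=\Proms{f(x)}$, together with the defining property $\Coalgs X(\Dirac X(r))=\Proms{\Dirac X(r)}$.

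First I would check the counit law $\Ders_{\Cmeas(X)}\Comp\Coalgs X=\Id_{\Cmeas(X)}$. Both sides are endomorphisms of $\Cmeas(X)$ in $\ICONES$, and on $\Dirac X(r)$ the left side gives $\Ders_{\Cmeas(X)}(\Proms{\Dirac X(r)})=\Dirac X(r)$, so the two agree on Diracs and hence everywhere by Theorem~\ref{th:dirac-dense}. Next I would check the comultiplication law $\Diggs_{\Cmeas(X)}\Comp\Coalgs X=\Excls(\Coalgs X)\Comp\Coalgs X$, both sides being morphisms $\Cmeas(X)\to\Exclls{\Cmeas(X)}$. Evaluating the left side at $\Dirac X(r)$ yields $\Diggs_{\Cmeas(X)}(\Proms{\Dirac X(r)})=\Promms{\Dirac X(r)}$, while the right side yields $\Excls(\Coalgs X)(\Proms{\Dirac X(r)})=\Proms{\Coalgs X(\Dirac X(r))}=\Promms{\Dirac X(r)}$ by Lemma~\ref{lemma:excl-fun-prom}; density then forces equality.

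For the functoriality statement I would first record the elementary measure-theoretic fact that $\Pushf\phi(\Dirac X(r))=\Dirac Y(\phi(r))$, immediate from the definitions of push-forward and of the Dirac measure. The condition to verify is the coalgebra-morphism square $\Excls(\Pushf\phi)\Comp\Coalgs X=\Coalgs Y\Comp\Pushf\phi$ in $\ICONES(\Cmeas(X),\Excls{\Cmeas(Y)})$. On $\Dirac X(r)$ the left side is $\Excls(\Pushf\phi)(\Proms{\Dirac X(r)})=\Proms{\Pushf\phi(\Dirac X(r))}=\Proms{\Dirac Y(\phi(r))}$ by Lemma~\ref{lemma:excl-fun-prom}, and the right side is $\Coalgs Y(\Dirac Y(\phi(r)))=\Proms{\Dirac Y(\phi(r))}$; density closes the case. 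That $X\mapsto(\Cmeas(X),\Coalgs X)$ is then a functor $\ARCAT\to\Em\ICONES$ follows because $\Cmeas$ is already a functor $\ARCAT\to\ICONES$ (Lemma~\ref{lemma:pushf-functor-icones}) and the Eilenberg--Moore forgetful functor $\Em\ICONES\to\ICONES$ is faithful and identity-on-underlying-morphisms, so preservation of identities and composition is inherited.

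I do not expect a genuine obstacle here: the whole content lies in arranging each comparison as an equality of two $\ICONES$ morphisms so that Theorem~\ref{th:dirac-dense} is applicable, and in keeping careful track of which promotion ($\Proms{\_}$ versus $\Promms{\_}$) appears at each step. The one point needing a little care is recognizing $\Excls(\Coalgs X)$ as an $\ICONES$ morphism, which holds because $\Excls\_$ is a functor on $\ICONES$; this ensures all composites land in the intended hom-sets and that agreement on the Dirac measures genuinely suffices.
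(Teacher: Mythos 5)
Your proposal is correct and follows essentially the same route as the paper's own proof: all three identities (counit law, comultiplication law, coalgebra-morphism square for $\Pushf\phi$) are verified on the Dirac measures using $\Coalgs X(\Dirac X(r))=\Proms{\Dirac X(r)}$ and Lemma~\ref{lemma:excl-fun-prom}, then extended by the density Theorem~\ref{th:dirac-dense}. The only (harmless) addition is your explicit remark on why functoriality into $\Em\ICONES$ is inherited from $\ICONES$, which the paper leaves implicit.
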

\begin{proof}
  We must first prove: %
  \(\Ders_{\Cmeas(X)}\Compl\Coalgs X
  =\Id_{\Cmeas(X)}\in\ICONES(\Cmeas(X),\Cmeas(X))\).
  By Theorem~\ref{th:dirac-dense} this results from the fact that for
  all \(r\in X\) one has %
  \((\Ders_{\Cmeas(X)}\Compl\Coalgs X)(\Dirac X(r))
  =\Ders_{\Cmeas(X)}(\Proms{\Dirac X(r)})=\Dirac X(r)\).

  Next we must prove that
  \(f_1=f_2\in\ICONES(\Cmeas(X),\Excls{\Excls(\Cmeas(X))})\) where %
  \begin{align*}
    f_1=\Diggs_{\Cmeas(X)}\Compl\Coalgs X
    \text{\quad and\quad}
    f_2=\Excls{\Coalgs X}\Compl\Coalgs X\,.
  \end{align*}
  Let \(r\in X\), we have %
  \begin{align*}
    f_1(\Dirac X(r))&=\Diggs_{\Excls{\Cmeas(X)}}(\Proms{\Dirac X(r)})
                       =\Promms{\Dirac X(r)}\\
    f_2(\Dirac X(r))&=\Excls{\Coalgs X}(\Proms{\Dirac X(r)})
  =\Proms{(\Coalgs X(\Dirac X(r)))}
  \end{align*}
  by Lemma~\ref{lemma:excl-fun-prom}.
  And hence \(f_2(\Dirac X(r))=\Promms{\Dirac X(r)}\) so that %
  \(f_1=f_2\) by Theorem~\ref{th:dirac-dense}.

  Let now \(\phi\in\ARCAT(X,Y)\), we must prove that %
  \(f_1=f_2\) where \(f_1=\Coalgs Y\Compl\Pushf\phi\) and
  \(f_2=\Exclp{\Pushf\phi}\Compl\Coalgs X\).
  Let \(r\in X\), we have %
  \(f_1(\Dirac X(r))=\Coalgs Y(\Dirac Y(\phi(r))=\Proms{\Dirac
    Y(\phi(r))}\) and %
  \(f_2(\Dirac X(r))=\Excls{(\Pushf\phi)}(\Proms{\Dirac X(r)})
  =\Proms{(\Pushf\phi(\Dirac X(r)))}=\Proms{\Dirac Y(\phi(r))}\) by
  Lemma~\ref{lemma:excl-fun-prom} and so \(f_1=f_2\) by
  Theorem~\ref{th:dirac-dense}.
  This proves the second part of the theorem.
\end{proof}

\begin{remark} %
  \label{rk:linear-stable-non-integrable}
  One of the main goals in introducing integrable cones was precisely to get
  this additional structure for each cone \(\Cmeas(X)\) (notably because
  this structure is required in order to interpret call-by-value languages).
  It means more specifically that for each %
  \(f\in\STAB(\Cmeas(X),B)=\ICONES(\Excls{\Cmeas(X)},B)\) we can
  define %
  \(g=f\Compl\Coalgs X\in\ICONES(\Cmeas(X),B)\) such that
  \begin{align*}
    \forall\mu\in\Mcca{\Cmeas(X)}\quad
    g(\mu)=\int_{r\in X} f(\Dirac X(r))\mu(dr)
  \end{align*}
  which is a ``linearization'' of \(f\) allowing to interpret the
  sampling operation of probabilistic programming languages: one
  samples a \(r\in X\) (a real number if \(X=\Real\)) according to the
  distribution \(\mu\) and feeds the program \(f\) with the value
  \(r\) represented as the Dirac measure \(\Dirac X(r)\).
  This Dirac measure represents, in our semantics, the real number
  \(r\) considered as a value.
  This observation strongly supports the idea of taking the objects
  or \(\ARCAT\) (such as the real line, or the set of natural numbers)
  as our basic data-types and the measurable functions
  \(\phi\in\ARCAT(X,Y)\) as the basic functions of our programming
  language, through the functor \(\Cmeas:\ARCAT\to\ICONES\): remember
  that
  \(\Cmeas(\phi)(\Dirac X(r))=\Pushf\phi(\Dirac X(r))=\Dirac
  X(\phi(r))\).

  From the viewpoint of \(\LL\) this means that each
  \(X\in\ARCAT\) can be seen as a positive type, that is, a type
  equipped with structural rules allowing to erase and duplicate its
  values, see for
  instance~\cite{Girard91a,LaurentRegnier03,EhrhardTasson19}.
  Another way to understand \(\Coalgs X\) is to see it as a
  \emph{storage operator} in the sense of~\cite{Krivine94b}, that is,
  \(\Cmeas(X)\) is a \emph{data-type}.
  This idea will be confirmed in Section~\ref{sec:arcat-full-subcat-EM}
  where we will see that \(\Cmeas\) is a full and faithful functor
  from \(\ARCAT\) to \(\Em\ICONES\).
\end{remark}

\begin{example}
  \label{ex:sampling-programming}
  A typical probabilistic programming language that we can interpret
  in the model \(\ICONES\) (we assume that \(\Real\in\ARCAT\)) is the
  probabilistic version of PCF presented
  in~\cite{EhrhardPaganiTasson18} (to which we refer for more details
  and examples), which features continuous data-types and can be
  extended in various ways.
  Such a language could feature a type \(\rho\) of real numbers,
  a constant \(\mathsf{unif}\) such that
  \(\Tseq\Gamma{\mathsf{unif}}\rho\) corresponding to the uniform
  probability distribution on the interval \(\Intercc 01\) \Etc{}
  All the types of this language, which are given by the following
  grammar
  \begin{align*}
    \sigma,\tau\dots\Bnfeq\rho\Bnfor\Timpl\sigma\tau\Bnfor\cdots
  \end{align*}
  are interpreted as objects of \(\ICONES\):
  \(\Tsem\rho=\Cmeas(\Real)\),
  \(\Tsem{\Timpl\sigma\tau}=\Simpls{\Tsem\sigma}{\Tsem\tau}\), \Etc{}
  A term \(M\) such that \(\Tseq{\Gamma}M\tau\) where
  \(\Gamma=(x_1:\sigma_1,\dots,x_k:\sigma_k)\) is a typing context,
  will then be interpreted as a stable and measurable morphism, or as
  an analytic morphism
  \(\Psem M\Gamma\in\Kl
  \ICONES(\Tsem{\sigma_1}\IWith\cdots\IWith\Tsem{\sigma_k},\Tsem\tau)\).
  So if \(\Tseq{\Gamma,x:\rho}M\sigma\) has a free variable of type
  \(\rho\) and %
  \(\Tseq\Gamma N\rho\), we should consider that \(N\) represents a
  (sub)probability distribution \(\Psem N\Gamma\) on \(\Real\) and we
  may want to sample a real number \(r\) along this distribution and
  feed \(M\) with the resulting real value that \(M\) will use
  \emph{as many times as it wants}: this value will be represented as
  the Dirac measure \(\Dirac\Real(r)\).
  In our language, the corresponding construct is a simple
  \(\mathsf{let}\) which allows to deal with the ground type \(\rho\)
  in a call-by-value way%
  \footnote{This idea was already central
    in~\cite{EhrhardPaganiTasson14,EhrhardPaganiTasson18b,EhrhardTasson19},
    in the discrete setting of probabilistic coherence spaces.}%
  :
  \begin{align*}
    \Tseq\Gamma{\mathsf{let}(x,N,M)}\tau
  \end{align*}
  and the semantics of this construct is
  \begin{align*}
    \Psem{\mathsf{let}(x,N,M)}\Gamma=\int_{r\in\Real}^{\Tsem\tau}\Psem{M}\Gamma
    (\Dirac\Real(r))\Psem N\Gamma(dr)
  \end{align*}
  which is well defined since \(\Psem M\Gamma\) is stable and
  measurable (or analytic), the function
  \(\Absm{r\in\Real}{\Dirac\Real(r)}\) belongs to
  \(\Cuball{\Mcca{\Cpath\Real{\Cmeas(\Real)}}}\) and the cone
  \(\Tsem\tau\) is integrable.
  The constant \(\mathsf{unif}\) is interpreted as the probability
  measure on \(\Real\) which maps a measurable set \(U\) to the
  Lebesgue measure of \(U\cap\Intercc01\).
  For each \(r\in\Real\), the language has a constant \(\Num r\) of
  type \(\rho\) and
  \(\Psem{\Num r}\Gamma=\Dirac\Real(r)\in\Cuball{\Mcca{\Cmeas(\Real)}}\) (in
  each context \(\Gamma\)).
  Our language will also have constructs \(\mathsf{log}(M)\),
  \(\mathsf{sqrt}(M)\) \Etc{} corresponding to the usual functions
  which are all measurable, and typed for instance by
  \begin{center}
    \begin{prooftree}
      \hypo{\Tseq\Gamma M\rho}
      \infer1{\Tseq\Gamma{\mathsf{log}(M)}\rho}
    \end{prooftree}
  \end{center}
  with semantics given by push-forward:
  \(\Psem{\mathsf{log}(M)}\Gamma=\Pushf\log(\Psem M\Gamma)\).
  So for instance we can define a closed term \(N\) such that
  \(\Tseq {}N\rho\) by
  \begin{align*}
    N=\mathsf{let}(x,\mathsf{unif},
    \mathsf{let}(y,\mathsf{unif},
    \mathsf{mult}(
    \mathsf{sqrt}(\mathsf{mult}(\Num{-2.},\mathsf{log}(x))),
    \mathsf{cos}(\mathsf{mult}(\Num{6.28\cdots},y))
    )
    ))
  \end{align*}
  and then \(\Psem N{}\) is the normal distribution \(\cN(0,1)\)
  defined by the Box Muller method, and we can define a term \(N'\)
  with two free variable \(x\) and \(s\) for \(\cN(x,s)\) as %
  \[
    N'=\mathsf{let}(y,N,\mathsf{plus}(\mathsf{mult}(s,y),x)))
  \]
  such that %
  \(\Tseq{x:\rho,s:\rho}{N'}\rho\).
  Then
  \(\Psem{\Substbis{N'}{\Num{4.2}/x,\Num{0.7}/s}}{}
  =\Psem{N'}{x:\rho,s:\rho}(\Dirac\Real(4.2),\Dirac\Real(0.7))
  \in\Mcca{\Cmeas(\Real)}\)
  is the measure \(\cN(4.2,0.7)\).
\end{example}

\subsubsection{The category of measurable functions as a full subcategory
  of the Eilenberg Moore category}
\label{sec:arcat-full-subcat-EM}
So we have extended the operation \(\Cmeas\) on the measurable spaces
of \(\ARCAT\) into a functor \(\ARCAT\to\Em\ICONES\) which acts on
morphisms by push-forward.
This functor is clearly faithful, we prove that, under very reasonable
assumptions about \(\ARCAT\), it is also full, which is
quite a remarkable fact: the Eilenberg-Moore category of \(\oc\)
contains \(\ARCAT\) as a full subcategory.
Again, integration is an essential ingredient in the proof of this
result.

A Polish space is a complete metric space which has a countable dense
subset.

We will need two lemmas which are folklore in measure theory.
\begin{lemma}
  \label{lemma:polish-01-measure}
  Let \(X\) be a Polish space, equipped with its standard Borel
  \(\sigma\)-algebra \(\Sigalg X\).
  Let \(\mu\) be a probability measure on \(X\) and assume that
  \(\forall U\in\Sigalg X\ \mu(U)\in\Eset{0,1}\).
  Then \(\mu\) is a Dirac measure.
\end{lemma}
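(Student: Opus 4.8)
The plan is to locate a single point $x\in X$ that carries all the mass and then conclude $\mu=\Dirac X(x)$ directly from two-valuedness. The structural fact driving everything is that, because $\mu$ takes only the values $0$ and $1$, the family $\Eset{A\in\Sigalg X\St\mu(A)=1}$ is closed under countable intersections: if $\mu(A_n)=1$ for all $n\in\Nat$ then each complement $A_n^c$ has measure $0$, so by countable subadditivity $\mu(\bigcup_n A_n^c)=0$ and hence $\mu(\bigcap_n A_n)=1$; in particular any such countable intersection is nonempty.

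First I would use separability of $X$ to fix, for each $n\in\Nat$, a countable family of open balls of radius $2^{-n}$ covering $X$ (take balls centred at the points of a countable dense subset). Since this is a countable cover of a set of measure $1$ and $\mu$ is two-valued, countable subadditivity forbids all these balls from having measure $0$; hence at least one of them, call it $B_n$, satisfies $\mu(B_n)=1$. Setting $A=\bigcap_{n\in\Nat}B_n$, the observation above gives $\mu(A)=1$, so $A\neq\emptyset$. On the other hand any two points of $A$ lie in every $B_n$ and are therefore at distance at most $2\cdot 2^{-n}$, so $A$ has diameter $0$ and is a single point $\Eset x$ (a singleton is closed, hence lies in $\Sigalg X$). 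Thus $\mu(\Eset x)=1$, and for an arbitrary $U\in\Sigalg X$ either $x\in U$, forcing $\mu(U)\geq\mu(\Eset x)=1$, or $x\notin U$, so that $U\subseteq\Eset x^{\,c}$ and $\mu(U)=0$; in both cases $\mu(U)=\Dirac X(x)(U)$, that is $\mu=\Dirac X(x)$.

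The argument is essentially routine, and the only point requiring a little care is the verification that at every scale some covering ball has full measure, which is exactly where two-valuedness combines with countable subadditivity. It is worth noting that completeness of $X$ plays no role here: nonemptiness of the limiting intersection $A$ comes for free from $\mu(A)=1$ rather than from a Cauchy/completeness argument, so in fact only separability of the metric space is used.
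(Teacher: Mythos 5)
Your proof is correct. It follows the same basic strategy as the paper's (cover by balls of radius $2^{-n}$ centred at a countable dense set; two-valuedness plus countable subadditivity forces some ball at each scale to have full measure; the shrinking diameters pin down a single point), but the concluding step is genuinely different and cleaner. The paper builds a \emph{nested} sequence of full-measure balls, each centred inside the closure of the previous one, obtains a Cauchy sequence of centres, and invokes completeness of the Polish space to produce the limit point $r$ with $\{r\}=\bigcap_n B(r_n,2^{-n})$. You instead take one full-measure ball $B_n$ per scale with no nesting, observe that the class of full-measure sets is closed under countable intersection, and deduce that $\bigcap_n B_n$ is nonempty because it has measure $1$; its diameter is $0$, so it is a singleton. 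This buys two things: completeness is never used (only separability of the metric, as you note), and you sidestep a small delicacy in the paper's argument, namely that covering a closed set $F$ of full measure requires the balls to be centred at points of $D\cap F$, which is not obviously dense in $F$ — your covers are of all of $X$, where density of $D$ suffices. The only price is that your limiting set $A$ is an intersection of open balls rather than a decreasing chain, but nothing in the argument needs monotonicity.
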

\begin{proof}
  Given \(r\in X\) and \(\epsilon\geq 0\) we use
  \(B(r,\epsilon)\subseteq X\) for the closed ball of radius
  \(\epsilon\) centered at \(r\).
  Let \(D\) be a countable dense subset of \(X\).
  Let \(F\subseteq X\) be closed and such that \(\mu(F)=1\) and let
  \(\epsilon>0\), we have %
  \(F\subseteq\Union_{r\in D\cap F}B(r,\epsilon)\) and hence %
  \(1=\mu(F)\leq\sum_{r\in D\cap F}\mu(B(r,\epsilon))\) and hence %
  \(\exists r\in D\cap F\ \mu(B(r,\epsilon))=1\).
  We define a sequence \((r_n)_{n\in\Nat}\) of elements of \(D\) such
  that \(\forall n\in\Nat\ \mu(B(r_n,2^{-n}))=1\) as follows.
  We obtain \(r_0\) by applying the property above with \(F=X\) and
  \(\epsilon=1\).
  We get \(r_{n+1}\) by applying the property above with
  \(F=B(r_n,2^{-n})\) and \(\epsilon=2^{-(n+1)}\).
  Then the sequence \((r_n)_{n\in\Nat}\) is Cauchy and has therefore a
  limit \(r\) and we have \(\{r\}=\bigcap_{n\in\Nat}B(r_n,2^{-n})\) so
  that \(\mu(\Eset r)=\inf_{n\in\Nat}\mu(B(r_n,2^{-n}))=1\) since
  \(\mu\) is a measure.
  It follows that \(\mu(U)=0\) for each measurable \(U\) such that
  \(r\notin U\) since we must have \(\mu(\Eset r\cup U)=1\) and hence
  \(\mu=\Dirac X(r)\).
\end{proof}

\begin{lemma}
  \label{lemma:Dirac-kernel-pushf}
  If \(X\) and \(Y\) are measurable spaces such that the
  \(\sigma\)-algebra of \(Y\) contains all singletons (this is true in
  particular if \(Y\) is a Polish space), and if \(\kappa\) is a
  kernel from \(X\) to \(Y\) such that for all \(r\in X\) the
  measure \(\kappa(r)\) is a Dirac measure on \(Y\), then there is a
  uniquely defined measurable function \(\phi:X\to Y\) such that
  \(\kappa=\Dirac Y\Comp\phi\).
\end{lemma}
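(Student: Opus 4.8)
The plan is to define $\phi$ pointwise by sending each $r\in X$ to the point at which the Dirac measure $\kappa(r)$ is concentrated, and then to verify that this $\phi$ is measurable using the measurability that is built into the definition of a kernel. The argument splits cleanly into a (routine) well-definedness/uniqueness part and a (slightly more substantive) measurability part.

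First I would check that the point of concentration is well defined. For each $r\in X$, the measure $\kappa(r)$ is a Dirac measure, so $\kappa(r)=\Dirac Y(y)$ for some $y\in Y$. Since the singleton $\Eset y$ belongs to $\Sigalg Y$ by hypothesis, we have $\Dirac Y(y)(\Eset y)=1$ whereas $\Dirac Y(y')(\Eset y)=0$ for $y'\neq y$, so the point $y$ is uniquely determined by $\kappa(r)$. This allows me to set $\phi(r)=y$, which yields $\kappa(r)=\Dirac Y(\phi(r))$ for all $r$, that is, $\kappa=\Dirac Y\Comp\phi$. The same observation gives uniqueness of $\phi$: if $\kappa=\Dirac Y\Comp\psi$, then $\psi(r)$ must be the point of concentration of $\kappa(r)$, hence $\psi=\phi$.

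The one step that carries real content is the measurability of $\phi$. Here I would rely on the equivalence $\phi(r)\in V\iff\kappa(r,V)=1$, valid for each $V\in\Sigalg Y$: indeed $\kappa(r,V)=\Dirac Y(\phi(r))(V)$ equals $1$ when $\phi(r)\in V$ and $0$ otherwise, the two values $\Eset{0,1}$ being the only possibilities since $\kappa(r)$ is a Dirac measure. Consequently $\Inv\phi(V)=\Eset{r\in X\St\kappa(r,V)=1}$ is the preimage of $\Eset 1\subseteq\Realpc$ under the map $\Absm{r\in X}{\kappa(r,V)}$, which is measurable by the very definition of a kernel. Hence $\Inv\phi(V)\in\Sigalg X$ for every $V\in\Sigalg Y$, so $\phi$ is measurable.

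I do not expect a genuine obstacle: the hypothesis on $Y$ is used only to make the point of concentration well defined and unique, and the kernel structure is used only to give measurability of $r\mapsto\kappa(r,V)$; both are immediate. The only care required is to express everything through the measurable sets $V\in\Sigalg Y$ rather than through points of $Y$, since $Y$ carries no topology in general, but this is exactly what the reformulation $\Inv\phi(V)=\Eset{r\in X\St\kappa(r,V)=1}$ achieves.
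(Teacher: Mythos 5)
Your proof is correct and is exactly the argument the paper has in mind — the paper itself dismisses this lemma with ``This is obvious'' and supplies no proof, and your three steps (uniqueness of the concentration point via measurability of singletons, uniqueness of $\phi$, and measurability via $\Inv\phi(V)=\Eset{r\in X\St\kappa(r,V)=1}$ together with the kernel axiom that $r\mapsto\kappa(r,V)$ is measurable) are the standard way to fill it in.
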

This is obvious.

\begin{theorem}
  Let \(X,Y\in\ARCAT\) be such that \(Y\) is a Polish space
  and let \(f\) be a morphism from \(\Cmeas(X)\) to \(\Cmeas(Y)\) in
  \(\ICONES\).
  Then \(f\) is a coalgebra morphism from \((\Cmeas(X),\Coalga X)\) to
  \((\Cmeas(Y),\Coalga Y)\) iff there is a \(\phi\in\ARCAT(X,Y)\)
  such that \(f=\Cmeas(\phi)=\Pushf\phi\).

  As a consequence, if we assume that \(X\) is a Polish space
  for all \(X\in\ARCAT\), then \(\ARCAT\) is a full subcategory of
  the Eilenberg Moore category of the comonad \(\Excla\_\) through the
  \(\Cmeas\) functor.
\end{theorem}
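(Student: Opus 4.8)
The plan is to handle the two implications separately, the forward one being essentially free. If $f=\Cmeas(\phi)=\Pushf\phi$ for some $\phi\in\ARCAT(X,Y)$, then Theorem~\ref{th:meas-cone-coalgebra-stab} already asserts that $\Pushf\phi$ is a morphism of $\oc$-coalgebras, so the ``if'' direction is done. All the work is in the converse.

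So suppose $f\in\ICONES(\Cmeas(X),\Cmeas(Y))$ is a coalgebra morphism, i.e.\ $\Coalga Y\Comp f=\Excls f\Comp\Coalga X$. By Theorem~\ref{th:dirac-dense} it is enough to read this equation on Dirac masses. Writing $\nu_r=f(\Dirac X(r))\in\Cuball{\Mcca{\Cmeas(Y)}}$ and using $\Coalga X(\Dirac X(r))=\Proms{\Dirac X(r)}$ together with Lemma~\ref{lemma:excl-fun-prom}, the condition becomes, for each $r\in X$,
\[
  \Coalga Y(\nu_r)=\int_{s\in Y}\Proms{\Dirac Y(s)}\,\nu_r(ds)=\Proms{\nu_r}\,.
\]
The crux is to show this forces $\nu_r$ to be a Dirac measure. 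I would probe the displayed equation with non-linear scalar morphisms: for $g\in\cC(\Cmeas(Y),\Sone)$ with $\Norm g\le 1$, the adjunction gives $\widehat g=\Invp{\Expadjs_{\Cmeas(Y),\Sone}}(g)\in\ICONES(\Excls{\Cmeas(Y)},\Sone)$ with $\widehat g(\Proms\nu)=g(\nu)$; since $\widehat g$ preserves integrals, applying it to the equation yields
\[
  g(\nu_r)=\int_{s\in Y}g(\Dirac Y(s))\,\nu_r(ds)\,.
\]

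Two test functions then suffice, each legitimate in both $\STA$ and $\ANA$. The constant function $g=\mathbf 1$ (stable, and analytic as its own $0$-homogeneous component) gives $1=\nu_r(Y)$, so $\nu_r$ is a probability measure; in particular this rules out $\nu_r=0$. For $U\in\Sigalg Y$ the $2$-homogeneous polynomial $g(\nu)=\nu(U)^2$ (totally monotonic by Lemma~\ref{lemma:monomial-tot-mono}, hence stable, analytic, and of norm $\le 1$) gives $\nu_r(U)^2=\int_{s}\Charfun U(s)\,\nu_r(ds)=\nu_r(U)$, so $\nu_r(U)\in\Eset{0,1}$ for every measurable $U$. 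As $Y$ is Polish, Lemma~\ref{lemma:polish-01-measure} supplies a unique $\phi(r)\in Y$ with $\nu_r=\Dirac Y(\phi(r))$. Now $\kappa=f\Comp\Dirac X$ is a bounded kernel $X\Kernto Y$ (because $f\in\MCONES$ and $\Dirac X$ is a path), and $\kappa(r)=\Dirac Y(\phi(r))$, so Lemma~\ref{lemma:Dirac-kernel-pushf} makes $\phi$ measurable, that is $\phi\in\ARCAT(X,Y)$. Since $f$ and $\Pushf\phi$ agree on every $\Dirac X(r)$ (as $\Pushf\phi(\Dirac X(r))=\Dirac Y(\phi(r))=\nu_r$), Theorem~\ref{th:dirac-dense} gives $f=\Pushf\phi$. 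For the stated consequence, when all objects of $\ARCAT$ are Polish this equivalence holds for every pair $X,Y\in\ARCAT$ and says precisely that $\Cmeas\colon\ARCAT\to\Em\ICONES$ is full; together with its already noted faithfulness, $\ARCAT$ embeds as a full subcategory of $\Em\ICONES$.

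The main obstacle is exactly the step extracting ``$\nu_r$ is $0/1$-valued of full mass'' from the coalgebra equation, and the decisive point is that integral preservation of $\widehat g$ lets the non-linear test $g$ commute past the integral defining $\Coalga Y$; without integrability (as in $\MCONES$, cf.\ Remark~\ref{rk:continuous-part-measure}) this fails. The one genuinely easy-to-miss subtlety is ruling out $\nu_r=0$: this is what the constant test achieves, since $\Proms 0\neq 0$ (indeed $\widehat{\mathbf 1}(\Proms 0)=1$) while the integral against the zero measure would vanish.
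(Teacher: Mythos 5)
Your proof is correct and follows essentially the same route as the paper's: read the coalgebra equation on Dirac masses via Theorem~\ref{th:dirac-dense}, transport it through the adjunction to non-linear scalar tests (the constant $\mathbf 1$ and $\nu\mapsto\nu(U)^2$) whose linearizations preserve integrals, deduce that each $f(\Dirac X(r))$ is a $\{0,1\}$-valued probability measure, and conclude with Lemmas~\ref{lemma:polish-01-measure} and~\ref{lemma:Dirac-kernel-pushf}. The only differences are cosmetic (order of the two tests, and citing Theorem~\ref{th:meas-cone-coalgebra-stab} for the easy direction where the paper just observes the equation holds trivially for $\Pushf\phi$).
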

Most measurable spaces which appear in probability theory are Polish
spaces: discrete spaces, the real line, countable products and
measurable subspaces of Polish spaces (and hence the Cantor Space and
the Baire Space, the Hilbert Cube \Etc{}) are Polish spaces.
So the restriction to Polish spaces is not a serious one.
\begin{proof}
  Saying that \(f\) is a coalgebra morphism means that the following
  diagram commutes in \(\ICONES\):
  \begin{equation*}
    \begin{tikzcd}
      \Cmeas(X)\ar[r,"f"]\ar[d,swap,"\Coalga X"]
      & \Cmeas(Y)\ar[d,"\Coalga Y"]\\
      \Excla{\Cmeas(X)\ar[r,"\Excla f"]} & \Excla{\Cmeas(Y)}
    \end{tikzcd}
  \end{equation*}
  which, by Theorem~\ref{th:dirac-dense}, is equivalent to
  \begin{align}
    \label{eq:meas-coalg-morph-charact}
    \Proma{(f(\Dirac X(r)))}
    =\Excla f(\Coalga X(\Dirac X(r))
    =\Coalga Y(f(\Dirac X(r))
    =\int_{s\in Y}^{\Excla{\Cmeas(Y)}} \Proma{\Dirac Y(s)}f(\Dirac X(r))(ds)
  \end{align}
  for all \(r\in X\), and this equation trivially holds if
  \(f=\Pushf\phi\).
  Assume conversely that \(f\) satisfies
  \Eqref{eq:meas-coalg-morph-charact}.
  Let \(V\) be a measurable subset of \(Y\) and let
  \(g\in\ACONES(\Cmeas(Y),\Sone)\) be defined by \(g(\nu)=\nu(V)^2\)
  and let
  \(g_0
  =\Invp{\Expadja_{\Cmeas(Y),\Sone}}(g)\in\ICONES(\Excla{\Cmeas(Y)},\Sone)\)
  which is characterized by
  \(\forall\nu\in\Mcca{\Cmeas(Y)}\ g(\nu)=g_0(\Proma\nu)\).
  We have %
  \(g_0(\Proma{f(\Dirac X(r))})=g(f(\Dirac X(r)))=f(\Dirac
  X(r))(V)^2\) and, since \(g_0\) preserves integrals,
  \begin{align*}
    f(\Dirac X(r))(V)^2
    &=g_0\Big(\int^{\Excla{\Cmeas(Y)}}_{s\in Y}
      \Proma{\Dirac Y(s)}f(\Dirac X(r))(ds)\Big)
    \text{\quad by Equation~\Eqref{eq:meas-coalg-morph-charact}}\\
    &=\int_{s\in Y} g_0(\Proma{\Dirac Y(s)})f(\Dirac X(r))(ds)\\
    &=\int_{s\in Y} {\Dirac Y(s)}(V)^2f(\Dirac X(r))(ds)\\
    &=\int_{s\in Y} {\Dirac Y(s)}(V)f(\Dirac X(r))(ds)\\
    &=f(\Dirac X(r))(V)
  \end{align*}
  so we have \(f(\Dirac X(r))(V)\in\Eset{0,1}\) for all
  \(V\in\Sigalg Y\).
  Let \(g\in\ACONES(\Cmeas(Y),\Sone)\) be defined now by \(g(\nu)=1\)
  and let
  \(g_0
  =\Invp{\Expadja_{\Cmeas(Y),\Sone}}(g)\in\ICONES(\Excla{\Cmeas(Y)},\Sone)\),
  we have %
  \(g_0(\Proma{f(\Dirac X(r))})=g(f(\Dirac X(r)))=1\) and, since
  \(g_0\) preserves integrals,
  \begin{align*}
    1=g_0\Big(\int^{\Excla{\Cmeas(Y)}}_{s\in Y}
    \Proma{\Dirac Y(s)}f(\Dirac X(r))(ds)\Big)
    &=\int_{s\in Y} g_0(\Proma{\Dirac Y(s)})f(\Dirac X(r))(ds)\\
    &=\int_{s\in Y} f(\Dirac X(r))(ds)\\
    &=f(\Dirac X(r))(Y)
  \end{align*}
  and hence the measure \(f(\Dirac X(r))\) is a Dirac measure by
  Lemma~\ref{lemma:polish-01-measure} and it follows that
  \(f=\Cmeas(\phi)=\Pushf\phi\) for a uniquely determined
  \(\phi\in\ARCAT(X,Y)\) by Lemma~\ref{lemma:Dirac-kernel-pushf}.
\end{proof}

\subsection{Fixpoint operators in the cartesian closed category} %
\label{sec:ccc-fix}
Remember that \(\cC\) is a CCC having the following property:
\begin{quote}
  The objects of \(\cC\) are integrable cones.
  In particular, for each object \(B\) of \(\cC\), the set
  \(\Cuball{\Mcca B}\) has a structure of \(\omega\)-cpo by the condition
  \Cnormcr{}, with \(0\) as least element.
  And each \(f\in\cC(B,B)\) is in particular an increasing and
  \(\omega\)-continuous function
  \(\Cuball{\Mcca B}\to\Cuball{\Mcca B}\) and therefore has a least
  fixpoint which is \(\sup_{n=0}^\infty f^n(0)\in\Cuball{\Mcca B}\).
\end{quote}
\noindent 
It is completely standard to apply this property to the map %
\(\cZ\in\cC(\Impl{(\Impl BB)}{B},\Impl{(\Impl BB)}{B})\) given by
\begin{align*}
  \cZ(F)(f)=f(F(f))
\end{align*}
which is well-defined and belongs to
\(\cC(\Impl{(\Impl BB)}{B},\Impl{(\Impl BB)}{B})\) by cartesian
closedness of \(\cC\).
The least fixpoint \(\cY\) of \(\cZ\) is an element of
\(\cC(\Impl BB,B)\) which is easily seen to satisfy
\begin{align*}
  \cY(f)=\sup_{n=0}^\infty f^n(0)
\end{align*}
and is therefore a least fixpoint operator that we have proven here to
be a morphism in \(\cC\), that is, a stable and measurable or an
analytic map depending on the considered category \(\cC\). This
morphism \(\cY\) is the key ingredient to interpret recursively
defined functional programs in the CCC \(\cC\).

\begin{example}
  The function \(f:\Intercc01\to\Intercc01\) given by
  \(f(x)=\frac 12+\frac14 x^2\) belongs to \(\cC(\Sone,\Sone)\) so it
  has a least fixpoint \(x\in\Intercc01\) which must satisfy
  \(x^2-4x+2=0\) and is therefore \(2-{\sqrt 2}\).
  Of course, when \(a\in\Realp\) and \(a>0\), the function
  \(x\mapsto x+a\) from \(\Realp\) to \(\Realp\) has no fixpoint, but
  this is not a contradiction because it does not restrict to a
  function \(\Intercc01\to\Intercc01\).
\end{example}

\section{Probabilistic coherence spaces as integrable cones}
\label{sec:pcs-integrable}
So far we have seen several ways of building integrable cones: as
spaces of measures, or of paths, as products and tensor products, as
spaces of analytic maps \Etc{}
As announced in Example~\ref{ex:cones-analytic-functions} we describe
here another source of integrable cones: the probabilistic coherence
spaces.
Intuitively, they form a model of \(\LL\) based on
\emph{discrete} but not necessarily finite probabilities.
So their definition does not require measure theory.

We use \(\Realpc\) for the completed real half-line, that is
\(\Realpc=\Real\cup\Eset\infty\), considered as a semi-ring with
multiplication satisfying \(0\,\infty=0\), which is the only possible
choice since we want multiplication to be \(\omega\)-continuous.

Let \(I\) be a set. If \(i\in I\) we use \(\Base i\) for the element
of \(\Realpto I\) such that \(\Base i_j=\Kronecker ij\).

If \(\cP\subseteq\Realpcto I\) we define
\(\Orth\cP\subseteq\Realpcto I\) by
\begin{align*}
  \Orth\cP=\{x'\in\Realpto I\St\forall x\in\cP\ \sum_{i\in I}x_ix'_i\leq 1\}
\end{align*}
and we use the notation \(\Eval x{x'}=\sum_{i\in I}x_ix'_i\).
As usual we have \(\cP\subseteq\cQ\Implies\Orth\cQ\Implies\Orth\cP\)
and \(\cP\subseteq\Biorth\cP\), and as a consequence
\(\Orth\cP=\Triorth\cP\).
In other words, it is equivalent to say that \(\cP=\Biorth\cP\) or to
say that \(\cP=\Orth\cQ\) for some \(\cQ\).

\begin{theorem}
  Let \(\cP\subseteq\Realpcto I\), one has \(\cP=\Biorth\cP\) iff the
  following conditions hold
  \begin{itemize}
  \item \(\cP\) is convex (that is, if \(x,y\in\cP\) and
    \(\lambda\in\Intcc01\) then \(\lambda x+(1-\lambda)y\in\cP\))
  \item \(\cP\) is down-closed for the product order
  \item and, for each sequence \((x(n))_{n\in\Nat}\) of element of
    \(\cP\) which is increasing for the pointwise order, the pointwise
    lub \(\in\Realpcto I\) of this sequence belongs to \(\cP\).
  \end{itemize}
\end{theorem}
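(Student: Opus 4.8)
The statement is a bipolar theorem, so the plan is to prove the two implications separately, the real content sitting in the direction where the three closure conditions are shown to be sufficient. For the forward implication I would observe that $\cP=\Biorth\cP$ exactly means $\cP=\Orth\cQ$ for $\cQ=\Orth\cP$, so it is enough to check that an arbitrary polar $\Orth\cQ$ enjoys the three properties. Each one is immediate from the behaviour of the pairing $\Eval x{x'}=\sum_{i\in I}x_ix'_i$ in its first variable: it is additive and positively homogeneous, monotone for the product order, and commutes with suprema of increasing sequences by monotone convergence in $\Realpc$. Thus the defining family of inequalities $\Eval x{x'}\le 1$ (for $x'\in\cQ$) is preserved under convex combinations, under decreasing a point, and under increasing sequential lubs, giving convexity, down-closure, and closure under increasing lubs respectively.

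For the converse, assume $\cP$ (nonempty, so that $0\in\cP$) satisfies the three conditions. The inclusion $\cP\subseteq\Biorth\cP$ is automatic, so by contraposition it suffices, given $x\notin\cP$, to produce an $x'\in\Orth\cP$ with $\Eval x{x'}>1$. First I would reduce to finitely many coordinates. For finite $J\subseteq I$ let $x_J$ be the vector agreeing with $x$ on $J$ and vanishing elsewhere. If $x_J\in\cP$ held for every finite $J$, then writing $x$ as the lub of the increasing family $(x_J)_J$ and invoking closure under increasing (sequential) lubs — after exhausting the coordinates by an increasing chain of finite sets — would force $x\in\cP$, a contradiction. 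Hence some $x_J\notin\cP$, equivalently $\pi_J(x)\notin\cP_J$, where $\pi_J$ is restriction to $J$ and $\cP_J=\{\pi_J(y)\St y\in\cP\}\subseteq\Realpcto J$ is again convex, down-closed, and closed under increasing lubs.

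The heart of the argument, and the step I expect to be the main obstacle, is then a finite-dimensional separation of $\pi_J(x)$ from $\cP_J$ yielding a nonnegative, finite-valued, $J$-supported functional $z'$. In $\Realpcto J$ the three conditions make $\cP_J$ topologically closed (a down-closed set closed under increasing limits is closed, argued via coordinatewise $\liminf$), so ordinary Hahn--Banach separation applies on the coordinates where $\cP_J$ is bounded; on a coordinate where $\cP_J$ is unbounded, down-closure plus sequential-lub closure put the whole ray and the value $\infty$ into $\cP_J$, so no admissible functional can charge it and it may be discarded, while an infinite coordinate of $x$ sitting over a bounded coordinate of $\cP_J$ separates directly. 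Down-closure forces the separating functional to have nonnegative coefficients, since each $-\Base i$ lies in the recession cone of $\cP_J$; and since $0\in\cP_J$ the separation threshold is nonnegative and can be normalised to $1$. Extending the resulting $z'$ by zero gives $x'\in\Orth\cP$, because $\Eval y{x'}=\Eval{\pi_J(y)}{z'}\le1$ for every $y\in\cP$, whereas $\Eval x{x'}=\Eval{\pi_J(x)}{z'}>1$, which is exactly what was needed.
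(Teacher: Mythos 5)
The paper itself gives no proof of this theorem --- it only points to external references --- so there is nothing in-paper to compare your argument against; I am judging it on its own. Your overall strategy (forward direction by the elementary behaviour of the pairing $\Eval{x}{x'}$ in its first argument; converse by reducing to finitely many coordinates and separating there) is the right one and most of the individual claims are correct, including the observation that $0\in\cP$ is needed and the $\liminf$ argument showing that down-closure plus sequential sup-closure imply topological closedness in the compact space $\Realpcto J$. Two points, however, need to be made explicit. First, your reduction to finite $J$ exhausts $I$ by an increasing \emph{sequence} of finite sets, which silently assumes $I$ countable. This is not a removable artifact: for uncountable $I$ the set of countably supported elements of $\Realpcto I$ satisfies all three conditions yet has polar $\{0\}$, so the theorem as literally stated fails there. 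Countability is the paper's standing assumption on webs, but your proof should say it is being used.

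The more substantive gap is the phrase ``it may be discarded'' for the coordinates $j\in J_\infty$ on which $\cP_J$ is unbounded. You correctly note that any admissible functional must vanish there, but you must then verify that $\pi_{J\setminus J_\infty}(x)$ still lies \emph{outside} $\pi_{J\setminus J_\infty}(\cP_J)$; otherwise there is nothing left to separate, and this is exactly where the argument could fail without closedness. The missing step is: since $\cP_J$ is closed and contains the whole ray $\Realp\Base j$ (and, by sup-closure, $\infty\Base j$) for each $j\in J_\infty$, it absorbs these recession directions, i.e.\ $w\in\cP_J$ implies $w+t\Base j\in\cP_J$ for all $t$ (take limits of the convex combinations $(1-\epsilon)w+\epsilon\,(t/\epsilon)\Base j$) and hence $w+\sum_{j\in J_\infty}\infty\Base j\in\cP_J$. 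So if some $w\in\cP_J$ agreed with $\pi_J(x)$ off $J_\infty$, this element would dominate $\pi_J(x)$ and down-closure would force $\pi_J(x)\in\cP_J$, a contradiction; only then does the finite-dimensional separation over the bounded coordinates apply. Finally, your justification of nonnegativity of the separating functional is slightly off: $-\Base i$ is not literally a recession direction of $\cP_J$, since coordinates cannot go negative. The clean argument is to truncate the negative coefficients of the separating functional to $0$ and use down-closure to check that the truncated functional still bounds $\cP_J$ by the same constant while only increasing its value at $\pi_J(x)$. With these repairs your proof is complete.
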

\noindent 
A proof is outlined in~\cite{Girard04a} and a complete proof can be
found in~\cite{Ehrhard22a}.

\begin{definition} %
  \label{def:pcs}
  A probabilistic coherence space (PCS) is a pair %
  \(\cX=(\Web\cX,\Pcoh\cX)\) where \(\Web\cX\) is a set which is at most
  countable%
  \footnote{This countability assumption is crucial in the present
    setting, again because of our use of the monotone convergence
    theorem.} %
  and \(\Pcoh\cX\subseteq\Realpto{\Web\cX}\) satisfies
  \begin{itemize}
  \item \(\Pcoh\cX=\Biorth{\Pcoh\cX}\)
  \item for all \(a\in\Web\cX\) there is \(x\in\Pcoh\cX\) such that
    \(x_a>0\)
  \item and for all \(a\in\Web\cX\) the set
    \(\{x_a\St x\in\Pcoh\cX\}\subseteq\Realp\) is bounded.
  \end{itemize}
\end{definition}
\noindent 
The 2nd and 3rd conditions are required to keep the coefficients
finite and are dual of each other.

Given two sets \(I,J\), a vector \(u\in\Realpcto I\) and a matrix
\(w\in\Realpcto{I\times J}\), we define \(\Matappa wu\in\Realpcto J\) by
\begin{align*}
  \Matappa wu=\Big(\sum_{i\in I}w_{i,j}u_i\Big)_{j\in J}
\end{align*}
and then, given \(v\in\Realpcto J\), observe that
\begin{align*}
  \Eval{\Matappa wu}{v}=\Eval w{\Tenspcs uv}
  =\sum_{i\in I,j\in J}w_{i,j}u_iv_j\in\Realpc
\end{align*}
where \(\Tenspcs uv\in\Realpcto{I\times J}\) is defined by
\((\Tenspcs uv)_{i,j}=u_iv_j\) (we use this notation here to avoid
confusions with the tensor operations we have introduced for integrable
cones).

Given \(w_1\in\Realpcto{I_1\times I_2}\) and
\(w_2\in\Realpcto{I_2\times I_3}\), one defines
\(w_2\Compl w_1\in\Realpcto{I_1\times I_3}\) (product of matrices
written in reversed order) by
\begin{align*}
  (w_2\Compl w_1)_{i_1,i_3}
  =\sum_{i_2\in I_2}(w_1)_{i_1,i_2}(w_2)_{i_2,i_3}\,.
\end{align*}

It is easily checked that, given PCSs \(\cX\) and \(\cY\), one defines a 
PCS \(\Limpl \cX\cY\) by %
\(\Web{\Limpl \cX\cY}=\Web \cX\times\Web \cY\) and
\begin{align*}
  \Pcohp{\Limpl \cX\cY}
  =\{t\in\Realpto{\Web{\Limpl \cX\cY}}
  \St\forall x\in\Pcoh \cX\ \Matappa tx\in\Pcoh \cY\}\,.
\end{align*}
Indeed one can check that
\begin{align*}
  \Pcohp{\Limpl \cX\cY}
  =\Orth{\{\Tenspcs x{y'}\St x\in\Pcoh \cX\text{ and }y'\in\Orth{\Pcoh \cX}\}}\,.
\end{align*}
Then given \(s\in\Pcohp{\Limpl \cX\cY}\) and \(t\in\Pcohp{\Limpl \cY\cZ}\) one has
\begin{align*}
  t\Compl s\in\Pcoh{\Limplp \cX\cZ}
\end{align*}
and the diagonal matrix
\(\Id=(\Kronecker{a}{a'})_{(a,a')\in\Web{\Limpl \cX\cX}}\) belongs to
\(\Pcohp{\Limpl \cX\cX}\).
This defines the category \(\PCOH\) of probabilistic coherence spaces.

Let \(t\in\PCOH(\cX,\cY)\). We use
\(\Lfun t:\Pcoh \cX\to\Pcoh \cY\) for the function defined by %
\(\Lfun t(x)=\Matappa tx\).

The orthogonal (or linear negation) \(\Orth \cX\) of a PCS \(\cX\) is
defined by \(\Web{\Orth \cX}=\Web \cX\) and
\(\Pcohp{\Orth \cX}=\Orth{(\Pcoh \cX)}\) so that \(\Biorth \cX=\cX\). We use
\(\Sbot\) for the PCS such that \(\Web{\Sbot}=\{\Sonelem\}\) and
\(\Pcoh\Sbot=\Intercc01\).
Setting \(\Sone=\Orth\Sbot\) we have obviously \(\Sone=\Sbot\) and
\(\Orth \cX\) is trivially isomorphic to \(\Limpl \cX\Sbot\).
Under this iso, the function \(\Lfun{x'}:\Pcoh \cX\to\Intercc01\)
associated with \(x'\in\Pcoh{\Orth \cX}\) is given by
\(\Lfun{x'}(x)=\Eval x{x'}\).

This linear negation is a functor \(\Op\PCOH\to\PCOH\), mapping
\(t\in\PCOH(\cX,\cY)\) to it transpose \(\Orth t\) defined by
\((\Orth t)_{b,a}=t_{a,b}\).

Each PCS \(\cX\) induces a measurable cone \(\Pcohc \cX\) defined by %
\(\Mcca{\Pcohc \cX} =\{\lambda x\St x\in\Pcoh \cX\text{ and
}\lambda\in\Realp\}\) with algebraic operations defined in the obvious
pointwise way.
Notice that if \(t\in\PCOH(\cX,\cY)\) we can extend \(\Lfun t\) to a
function \(\Mcca{\Pcohc \cX}\to\Mcca{\Pcohc \cY}\) by setting
\(\Lfun t(x)=\Inv\lambda\Lfun t(\lambda x)\) for each \(\lambda>1\)
such that \(\lambda x\in\Pcoh \cX\) (the function does not depend on the
choice of \(\lambda\)).

The norm of this cone is defined by
\begin{align*}
  \Norm x_{\Pcohc \cX}
  =\sup_{x'\in\Cuball{\Cdual{\Mcca{\Pcohc\cX}}}}\Eval x{x'}
  =\inf\{\lambda>0\St x\in\lambda\Pcoh \cX\}
\end{align*}
so that \(\Cuball{\Mcca{\Pcohc \cX}}=\Pcoh \cX\).
The measurability structure of \(\Pcohc \cX\) is given by
\(\Mcms{\Pcohc \cX}_\Measterm=\{\Lfun{x'}\St x'\in\Pcoh{\Orth \cX}\}\)
and %
\(\Mcms{\Pcohc \cX}_X\) is the set of all constant functions from
\(X\in\ARCAT\) to \(\Mcms{\Pcohc \cX}_\Measterm\).

\begin{lemma}
  \label{lemma:PCS-norm-predual}
  Let \(\cX\) be a PCS and let \(\cP\subseteq\Realpto{\Web \cX}\) be such
  that \(\Pcoh \cX=\Orth\cP\).
  Then
  \begin{align*}
    \Norm x_{\Pcohc \cX}=\sup_{x'\in\cP}\Eval x{x'}\,.
  \end{align*}
\end{lemma}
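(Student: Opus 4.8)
The plan is to reduce the statement to the concrete description of the norm of $\Pcohc\cX$ recorded just before the lemma, namely
\[
  \Norm x_{\Pcohc\cX}=\inf\{\lambda>0\St x\in\lambda\Pcoh\cX\}\,,
\]
and to exploit the hypothesis $\Pcoh\cX=\Orth\cP$ directly, through the definition of the orthogonal. The whole point is that, once $\Pcoh\cX$ is presented as an orthogonal, its gauge is literally the support function of $\cP$, so no appeal to biorthogonal closure is required.

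First I would fix $x\in\Mcca{\Pcohc\cX}$ and a scalar $\lambda>0$, and note that $x\in\lambda\Pcoh\cX$ is equivalent to $\tfrac1\lambda x\in\Pcoh\cX$. By hypothesis $\Pcoh\cX=\Orth\cP=\{z\St\forall x'\in\cP\ \Eval z{x'}\leq 1\}$, and since all entries of $x$ and of the $x'\in\cP$ are non-negative we have the homogeneity $\Eval{\tfrac1\lambda x}{x'}=\tfrac1\lambda\Eval x{x'}$. Hence $\tfrac1\lambda x\in\Pcoh\cX$ holds iff $\Eval x{x'}\leq\lambda$ for every $x'\in\cP$, that is, iff $\sup_{x'\in\cP}\Eval x{x'}\leq\lambda$. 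Consequently the set $\{\lambda>0\St x\in\lambda\Pcoh\cX\}$ coincides with $\{\lambda>0\St\lambda\geq\sup_{x'\in\cP}\Eval x{x'}\}$, whose infimum is exactly $\sup_{x'\in\cP}\Eval x{x'}$. Combining this with the norm formula yields the claim.

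Two minor points must be checked along the way, neither of which poses a genuine obstacle. If $\sup_{x'\in\cP}\Eval x{x'}=0$, the set of admissible $\lambda$ is all of $(0,\infty)$, whose infimum is $0$, matching the supremum. And because $x\in\Mcca{\Pcohc\cX}=\{\mu y\St y\in\Pcoh\cX,\ \mu\in\Realp\}$, the norm $\Norm x_{\Pcohc\cX}$ is finite, so the supremum on the right is finite as well and the degenerate value $+\infty$ never arises for $x$ in the cone. In short, the argument is a direct definition-chase whose only substantive ingredient is the observation that the gauge of $\Orth\cP$ is the support function of $\cP$; in particular the closure properties of $\Biorth\cP$ (convexity, down-closure, closure under increasing suprema) play no role here, precisely because the hypothesis already exhibits $\Pcoh\cX$ as an orthogonal.
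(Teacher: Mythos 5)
Your argument is correct, and the paper itself gives no proof of this lemma (it only remarks ``The proof is easy''), so your direct computation with the gauge description \(\Norm x_{\Pcohc\cX}=\inf\{\lambda>0\St x\in\lambda\Pcoh\cX\}\) and the hypothesis \(\Pcoh\cX=\Orth\cP\) is precisely the intended easy argument. The two boundary checks you make (the case \(\sup_{x'\in\cP}\Eval x{x'}=0\), and finiteness of the supremum because \(x\) is a scalar multiple of an element of \(\Orth\cP\)) are exactly the points worth verifying, and you handle them correctly.
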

\noindent 
The proof is easy.
Notice that for each \(a\in\Web \cX\) one has
\(\Base a\in\Mcca{\Pcohc \cX}\) by the second condition in the
definition of a PCS, and that \(\Norm{\Base a}\) is not necessarily
equal to \(1\).

\begin{lemma} %
  \label{lemma:pcohc-path-charact}
  Let \(\cX\) be a PCS and \(X\in\ARCAT\).
  A function %
  \(\beta:X\to\Mcca{\Pcohc \cX}\) is a measurable path of the
  measurable cone \(\Pcohc \cX\) iff %
  \(\beta(X)\) is bounded and, for all \(a\in\Web \cX\), the
  function \(\Absm{r\in X}{\beta(r)_a}:X\to\Realp\) is
  measurable.
\end{lemma}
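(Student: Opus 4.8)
The plan is to unfold the definition of a measurable path against the specific measurability structure of $\Pcohc\cX$ and reduce it to ordinary measurability of real-valued functions on $X$. Recall that $\Mcms{\Pcohc\cX}_Y$ consists of the constant functions $Y\to\Mcms{\Pcohc\cX}_\Measterm$, and that $\Mcms{\Pcohc\cX}_\Measterm=\{\Lfun{x'}\St x'\in\Pcoh{\Orth\cX}\}$ with $\Lfun{x'}(x)=\Eval x{x'}=\sum_{a\in\Web\cX}x_ax'_a$. Thus any test $m\in\Mcms{\Pcohc\cX}_Y$ does not depend on its first argument, so the function $\Absm{(s,r)\in Y\times X}{m(s,\beta(r))}$ is of the form $(\Absm{r\in X}{\Eval{\beta(r)}{x'}})\Comp\Proj2$ for some $x'\in\Pcoh{\Orth\cX}$. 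Since every object of $\ARCAT$ is non-empty, composing with $\Proj2$ in one direction and with a slice inclusion $\Absm{r\in X}{(s_0,r)}$ in the other shows that this two-variable function is measurable iff $\Absm{r\in X}{\Eval{\beta(r)}{x'}}$ is measurable $X\to\Realp$. Hence $\beta$ is a measurable path iff $\beta(X)$ is bounded and, for every $x'\in\Pcoh{\Orth\cX}$, the map $\Absm{r\in X}{\Eval{\beta(r)}{x'}}$ is measurable.

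For the forward implication I would extract individual coordinates by testing against scaled basis vectors. For each $a\in\Web\cX$, the set $\{x_a\St x\in\Pcoh\cX\}$ is bounded above by some $M_a<\infty$ by the third condition of Definition~\ref{def:pcs}, and $M_a>0$ by the second. A direct computation from the definition of the orthogonal then gives $\Inv{M_a}\Base a\in\Pcoh{\Orth\cX}$, and $\Eval{\beta(r)}{\Inv{M_a}\Base a}=\Inv{M_a}\beta(r)_a$. Measurability of this map, together with $M_a>0$, yields measurability of $\Absm{r\in X}{\beta(r)_a}$.

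For the converse, assume $\beta(X)$ is bounded, say $\Norm{\beta(r)}\leq K$ for all $r$, and that each coordinate $\Absm{r\in X}{\beta(r)_a}$ is measurable. Fixing $x'\in\Pcoh{\Orth\cX}$, the map $\Eval{\beta(r)}{x'}=\sum_{a\in\Web\cX}\beta(r)_ax'_a$ is a countable sum (as $\Web\cX$ is at most countable) of non-negative measurable functions, hence measurable by the monotone convergence theorem applied to its increasing partial sums. It is moreover finite and $\Realp$-valued because, by Lemma~\ref{lemma:PCS-norm-predual} applied with $\cP=\Pcoh{\Orth\cX}$ (so that $\Orth\cP=\Biorth{\Pcoh\cX}=\Pcoh\cX$), one has $\Eval{\beta(r)}{x'}\leq\Norm{\beta(r)}_{\Pcohc\cX}\leq K$. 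By the reduction above, $\beta$ is a measurable path.

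The only genuinely non-routine point is the membership $\Inv{M_a}\Base a\in\Pcoh{\Orth\cX}$, which is precisely where the two finiteness conditions of a PCS are used: boundedness of $\{x_a\St x\in\Pcoh\cX\}$ makes $M_a<\infty$, so the scaled basis vector is orthogonal to $\Pcoh\cX$, while the existence of $x\in\Pcoh\cX$ with $x_a>0$ makes $M_a>0$, so the scaling factor is legitimate and can be divided out. Everything else is bookkeeping around the monotone convergence theorem, the at-most-countability of $\Web\cX$, and the norm formula of Lemma~\ref{lemma:PCS-norm-predual}.
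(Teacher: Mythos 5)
Your proof is correct and follows essentially the same route as the paper's: the forward direction tests against suitably scaled basis vectors $\lambda\Base a\in\Pcoh{\Orth\cX}$ (your $\Inv{M_a}\Base a$), and the converse uses the monotone convergence theorem on the countable sum $\sum_{a\in\Web\cX}\beta(r)_ax'_a$. You merely spell out two points the paper leaves implicit — the reduction of the two-variable test measurability to one variable (via the constancy of tests in their first argument) and the verification that $M_a$ is finite and positive — which is fine.
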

\begin{proof}
  The \(\Rightarrow\) direction results from the observation that, for
  each \(a\in\Web \cX\) there is a \(\lambda>0\) such that
  \(\lambda\Base a\in\Pcoh{\Orth \cX}\).
  For the \(\Leftarrow\) direction let \(\beta:X\to\Pcoh \cX\)
  be such that the function \(\Absm{r\in X}{\beta(r)_a}\) is
  measurable for all \(a\in\Web \cX\).
  Let \(x'\in\Pcoh{\Orth \cX}\), we must prove that
  \(\phi=\Absm{r\in X}{\Eval{\beta(r)}{x'}}\) is measurable.
  Since \(\Web \cX\) is countable, this results from the monotone
  convergence theorem and from the fact that
  \begin{align*}
    \phi(r)&=\sum_{a\in\Web \cX}\beta(r)_ax'_a\,.
    \qedhere
  \end{align*}
\end{proof}

\begin{theorem}
  For each PCS \(\cX\) the measurable cone \(\Pcohc \cX\) is integrable.
\end{theorem}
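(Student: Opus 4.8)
The plan is to build the integral coordinatewise and then verify the three things required: that the candidate is a well-defined vector of non-negative reals, that it lies in $\Mcca{\Pcohc\cX}$, and that it satisfies the defining equation of an integral against every measurability test. Concretely, given $X\in\ARCAT$, a measurable path $\beta\in\Mcca{\Cpath X{\Pcohc\cX}}$ and a finite measure $\mu\in\Mcca{\Cmeas(X)}$, I would set
\[
  x_a=\int\beta(r)_a\,\mu(dr)\qquad(a\in\Web\cX).
\]
First I would check this makes sense. By Lemma~\ref{lemma:pcohc-path-charact}, the fact that $\beta$ is a measurable path tells me that $\beta(X)$ is bounded, so there is $\lambda>0$ with $\beta(r)\in\lambda\Pcoh\cX$ for all $r$, and that each coordinate map $\Absm{r\in X}{\beta(r)_a}$ is measurable. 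Boundedness of $\beta(X)$ together with the third condition in Definition~\ref{def:pcs} (which bounds $\{x_a\St x\in\Pcoh\cX\}$) shows each coordinate is a bounded measurable function; since $\mu$ is finite, each $x_a$ is a well-defined element of $\Realp$.

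Next I would verify $x\in\Mcca{\Pcohc\cX}$. If $\mu=0$ then $x=0$ and there is nothing to do, so assume $\mu(X)>0$. Since $\Pcoh\cX=\Biorth{\Pcoh\cX}$, it suffices to bound $\Eval x{x'}$ for every $x'\in\Pcoh{\Orth\cX}$. Because $\Web\cX$ is countable and all the terms involved are non-negative, the monotone convergence theorem lets me interchange the sum over $\Web\cX$ with the integral over $X$:
\[
  \Eval x{x'}=\sum_{a\in\Web\cX}\Big(\int\beta(r)_a\,\mu(dr)\Big)x'_a
  =\int\Big(\sum_{a\in\Web\cX}\beta(r)_ax'_a\Big)\mu(dr)
  =\int\Eval{\beta(r)}{x'}\,\mu(dr).
\]
As $\beta(r)\in\lambda\Pcoh\cX$ and $x'\in\Pcoh{\Orth\cX}$ we have $\Eval{\beta(r)}{x'}\le\lambda$ for all $r$, whence $\Eval x{x'}\le\lambda\,\mu(X)$. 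Therefore $x/(\lambda\mu(X))\in\Orth{\Pcoh{\Orth\cX}}=\Pcoh\cX$, so $x\in\Mcca{\Pcohc\cX}$.

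Finally I would check that $x$ is the integral of $\beta$ over $\mu$. Recall that $\Mcms{\Pcohc\cX}_\Measterm=\{\Lfun{x'}\St x'\in\Pcoh{\Orth\cX}\}$ with $\Lfun{x'}(y)=\Eval y{x'}$. The displayed computation above gives, for each such test,
\[
  \Lfun{x'}(x)=\Eval x{x'}=\int\Eval{\beta(r)}{x'}\,\mu(dr)=\int\Lfun{x'}(\beta(r))\,\mu(dr),
\]
which is precisely the defining condition of an integral. As this holds for every test, $x$ is the (necessarily unique, by \Mssepr{}) integral, and $\Pcohc\cX$ is integrable. The only genuinely non-routine ingredient is the interchange of the summation over the web with the integration over $X$; this is exactly where the countability of $\Web\cX$ required in Definition~\ref{def:pcs} is used, so that the monotone convergence theorem applies. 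Everything else is the standard verification that the biorthogonality closure of $\Pcoh\cX$ absorbs the coordinatewise integral.
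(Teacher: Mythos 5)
Your proposal is correct and follows essentially the same route as the paper: define the candidate integral coordinatewise, use the monotone convergence theorem (relying on countability of the web) to interchange the sum over $\Web\cX$ with the integral, bound $\Eval x{x'}$ against every $x'\in\Pcoh{\Orth\cX}$ to place $x$ in $\Mcca{\Pcohc\cX}$, and observe that the same identity verifies the defining equation for every measurability test. The only differences are cosmetic (the paper normalizes $\beta$ to land in $\Pcoh\cX$ rather than carrying an explicit bound $\lambda$).
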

\begin{proof}
  Let \(\beta:X\to\Pcoh \cX\) be a measurable path (by
  Lemma~\ref{lemma:pcohc-path-charact} this is equivalent to saying
  that \(\beta_a=\Absm{r\in X}{\beta(r)_a}\) is measurable
  \(X\to\Realp\) for all \(a\in\Web \cX\) since
  \(\forall r\in X\ \Norm{\beta(r)}\leq 1\)).
  Let \(\mu\in\Mcca{\Cmeas(X)}\).
  We define \(x\in\Realpto{\Web \cX}\) by
  \begin{align*}
    x_a=\int\beta_a(r)\mu(dr)
  \end{align*}
  which is a well defined element of \(\Realp\) since the function
  \(\beta_a\) is bounded by definition of a PCS.
  Let \(x'\in\Pcoh{\Orth \cX}\), we have, applying the monotone
  convergence theorem,
  \begin{align*}
    \Eval x{x'}
    &=\sum_{a\in\Web \cX}\Big(\int\beta_a(r)\mu(dr)\Big)x'_a\\
    &=\sum_{a\in\Web \cX}\int\left(\beta_a(r))x'_a\right)\mu(dr)\\
    &=\int\Eval{\beta(r)}{x'}\mu(dr)\leq\Norm\mu
  \end{align*}
  so if \(\lambda>0\) is such that \(\lambda\Norm\mu\leq 1\) we get %
  \(\Eval{\lambda x}{x'}\leq 1\) for all \(x'\in\Pcoh{\Orth \cX}\) so
  that \(x\in\Mcca{\Pcohc \cX}\).
  The equation \(\Eval x{x'}=\int\Eval{\beta(r)}{x'}\mu(dr)\) which
  holds for all \(x'\in\Pcoh{\Orth \cX}\) shows that \(x\) is the
  integral of \(\beta\) over \(\mu\) by definition of
  \(\Mcms{\Pcohc \cX}\).
\end{proof}

\begin{theorem} %
  \label{th:lfun-pcoh-full-faithful}
  If \(t\in\Pcoh{\Limplp \cX\cY}\) then
  \(\Lfun t\in\ICONES(\Pcohc \cX,\Pcohc \cY)\) and extended to morphisms
  in that way, the operation \(\Pcohcn\) is a full and faithful
  functor \(\PCOH\to\ICONES\).
\end{theorem}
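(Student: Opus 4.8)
The plan is to verify in turn that each $\Lfun t$ is an $\ICONES$-morphism, that $\Pcohcn$ is a functor, and finally that it is faithful and full, with fullness being the only non-routine point. First I would check that $\Lfun t\in\ICONES(\Pcohc\cX,\Pcohc\cY)$ for $t\in\Pcoh{\Limplp\cX\cY}$. Linearity of $\Lfun t$ is immediate from the bilinearity of $\Matappa{\_}{\_}$, and since $\Lfun t(x)=\Matappa tx\in\Pcoh\cY=\Cuball{\Pcohc\cY}$ for every $x\in\Pcoh\cX=\Cuball{\Pcohc\cX}$ by the definition of $\Pcoh{\Limplp\cX\cY}$, we obtain $\Norm{\Lfun t}\leq1$. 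Because lubs and integrals in $\Pcohc\cX$ and $\Pcohc\cY$ are computed coordinatewise (the latter by the integrability proof for $\Pcohc$) and $\Web\cX$ is countable, $\omega$-continuity, measurability and integral preservation all reduce to a single monotone-convergence interchange. For $\omega$-continuity one reads off $\Lfun t(\sup_n x^{(n)})_b=\sum_a t_{a,b}\sup_n x^{(n)}_a=\sup_n\sum_a t_{a,b}x^{(n)}_a$; for measurability I would invoke Lemma~\ref{lemma:pcohc-path-charact}, noting that $\Absm{r\in X}{(\Lfun t\Comp\beta)(r)_b}=\Absm{r\in X}{\sum_a t_{a,b}\beta(r)_a}$ is a countable sum of measurable functions; and for integral preservation, writing $x=\int\beta(r)\mu(dr)$ with $x_a=\int\beta(r)_a\mu(dr)$, one has $\Lfun t(x)_b=\sum_a t_{a,b}\int\beta(r)_a\mu(dr)=\int\sum_a t_{a,b}\beta(r)_a\mu(dr)=\int\Lfun t(\beta(r))_b\mu(dr)$, which is exactly the $b$-coordinate of $\int\Lfun t(\beta(r))\mu(dr)$.

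Functoriality is then the identity $\Matappa{(t\Comp s)}x=\Matappa t{(\Matappa sx)}$ (a reordering of nonnegative double sums) together with $\Lfun{\Id}=\Id$. Faithfulness is short: if $\Lfun s=\Lfun t$ then, using that $\Base a\in\Mcca{\Pcohc\cX}$ for every $a\in\Web\cX$, we get $s_{a,b}=(\Matappa s{\Base a})_b=\Lfun s(\Base a)_b=\Lfun t(\Base a)_b=t_{a,b}$, hence $s=t$.

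The main work is fullness. Given $f\in\ICONES(\Pcohc\cX,\Pcohc\cY)$ I would define a matrix $t\in\Realpto{\Web\cX\times\Web\cY}$ by $t_{a,b}=f(\Base a)_b$ (nonnegative since $f(\Base a)\in\Mcca{\Pcohc\cY}$) and show $f=\Lfun t$; membership $t\in\Pcoh{\Limplp\cX\cY}$ then comes for free, because $\Matappa tx=f(x)\in\Pcoh\cY$ whenever $x\in\Pcoh\cX$ (here $\Norm f\leq1$ is used). To prove $f=\Lfun t$, fix $x\in\Pcoh\cX$, enumerate $\Web\cX=\{a_0,a_1,\dots\}$, and set $x^{(n)}=\sum_{k=0}^n x_{a_k}\Base{a_k}$. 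The crucial point is that $x^{(n)}\leq x$ coordinatewise and $\Pcoh\cX$ is down-closed (by the characterization of $\cP=\Biorth\cP$), so each $x^{(n)}\in\Pcoh\cX$, the sequence is increasing, and $\sup_n x^{(n)}=x$. Since $f$ is linear and $\omega$-continuous, $f(x)=\sup_n f(x^{(n)})=\sup_n\sum_{k=0}^n x_{a_k}f(\Base{a_k})$, and reading off the $b$-coordinate (lubs in $\Pcohc\cY$ being pointwise) yields $f(x)_b=\sum_a x_a t_{a,b}=(\Matappa tx)_b$. Thus $f=\Lfun t$ on $\Pcoh\cX$, and by linearity on all of $\Mcca{\Pcohc\cX}$.

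The one subtlety I expect to be the real obstacle is precisely this last step: $x$ must be exhibited as the lub of an increasing \emph{sequence} rather than of an arbitrary directed family, because cones are only $\omega$-complete; this is exactly where the countability of $\Web\cX$ enters, and it is what makes the canonical decomposition $x=\sup_n x^{(n)}$ legitimate. It is worth emphasizing that the fullness argument uses only linearity, $\omega$-continuity and the norm bound of $f$, and no appeal to its measurability or integral preservation is needed — those properties are automatically carried by $\Lfun t$ via the first part — so the full and faithful functor $\Pcohcn:\PCOH\to\ICONES$ is obtained.
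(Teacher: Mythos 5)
Your proposal is correct and follows essentially the same route as the paper: the first part reduces to the monotone convergence theorem, faithfulness is read off on the basis vectors $\Base a$, and fullness defines $t_{a,b}=f(\Base a)_b$ and recovers $f=\Lfun t$ from linearity, $\omega$-continuity and $\Norm f\leq 1$. The only difference is presentational: where the paper pairs $\Matappa tx$ against $y'\in\Pcoh{\Orth\cY}$ and compresses the key step into ``by linearity and continuity of $f$'', you make that step explicit by writing $x=\sup_n x^{(n)}$ as the lub of the increasing sequence of finite partial sums (legitimate by down-closedness of $\Pcoh\cX$ and countability of $\Web\cX$), which is a worthwhile clarification but not a different argument.
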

\begin{proof}
  The fact that \(\Lfun t\) is linear and continuous is easy (the
  proof can be found in~\cite{DanosEhrhard08} for
  instance).
  Measurability and integral preservation of \(\Lfun t\) boil down
  again to the monotone convergence theorem. Faithfulness results from
  the fact that \(t\) is completely determined by the action of
  \(\Lfun t\) on the elements \(\Base a\) of \(\Mcca{\Pcohc \cX}\) (for
  all \(a\in\Web \cX\); remember that indeed
  \(\forall a\in\Web \cX\ \Base a\in\Mcca{\Pcohc \cX}\)). Last let
  \(f\in\ICONES(\Pcohc \cX,\Pcohc \cY)\).
  We define \(t\in\Realpto{\Web{\Limpl \cX\cY}}\) by
  \(t_{a,b}=f(\Base a)_b\).
  Given \(x\in\Pcoh \cX\) and \(y'\in\Pcoh{\Orth \cY}\) we have
  \begin{align*}
    \Eval{\Matappa tx}{y'}
    &=\sum_{a\in\Web \cX,b\in\Web \cY}t_{a,b}x_ay'_b\\
    &=\sum_{a\in\Web \cX,b\in\Web \cY}f(\Base a)_bx_ay'_b\\
    &=\sum_{b\in\Web \cY}f(x)_by'_b
      \text{\quad by linearity and continuity of }f\\
    &=\Eval{f(x)}{y'}\leq 1
  \end{align*}
  since \(\Norm f\leq 1\), which shows that \(\Matappa tx\in\Pcoh \cY\)
  and hence \(t\in\PCOH(\cX,\cY)\).
  The equation \(\Eval{\Matappa tx}{y'}=\Eval{f(x)}{y'}\) for all
  \(y'\in\Pcoh{\Orth \cY}\) also shows that \(\Matappa tx=f(x)\) and
  hence the functor \(\Pcohcn\) is full.
\end{proof}
\noindent 
We use \(\Tenspcs \cX\cY\) for the tensor product operation in \(\PCOH\),
that is \(\Web{\Tenspcs \cX\cY}=\Web \cX\times\Web \cY\) and
\(\Pcohp{\Tenspcs \cX\cY}=\Biorth{\{\Tenspcs xy\St x\in\Pcoh \cX\text{ and
  }y\in\Pcoh \cY\}}=\Orth{\Limplp{\cX}{\Orth \cY}}\).

\begin{theorem} %
  \label{th:limpl-pcoh-icones-isom}
  If \(\cX,\cY\) are PCSs then \(\Lfunn\) is an iso from the integrable
  cone \(\Pcohc{\Limpl \cX\cY}\) to the integrable cone
  \(\Limpl{\Pcohc \cX}{\Pcohc \cY}\) in \(\ICONES\), and this iso is
  natural in \(\cX\) and \(\cY\).
\end{theorem}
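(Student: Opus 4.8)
The plan is to upgrade the full-and-faithfulness of $\Pcohcn$ from Theorem~\ref{th:lfun-pcoh-full-faithful}, which is a bijection of morphism sets, into an isomorphism of the hom-\emph{cones}. First I would record the two identifications $\PCOH(\cX,\cY)=\Pcoh{\Limpl\cX\cY}=\Cuball{\Mcca{\Pcohc{\Limpl\cX\cY}}}$ and $\ICONES(\Pcohc\cX,\Pcohc\cY)=\Cuball{\Mcca{\Limpl{\Pcohc\cX}{\Pcohc\cY}}}$, so that Theorem~\ref{th:lfun-pcoh-full-faithful} already says $\Lfunn$ restricts to a bijection of the two unit balls. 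Since $t\mapsto\Lfun t$ is linear in the matrix $t$ and every element of $\Mcca{\Pcohc{\Limpl\cX\cY}}$ has the form $\lambda t_0$ with $t_0\in\Pcoh{\Limpl\cX\cY}$, this extends by homogeneity to a linear bijection $\Lfunn\colon\Mcca{\Pcohc{\Limpl\cX\cY}}\to\Mcca{\Limpl{\Pcohc\cX}{\Pcohc\cY}}$. Norm preservation I would get by direct computation: unfolding the linear-hom norm and using Lemma~\ref{lemma:PCS-norm-predual} (with $\cP=\Pcoh{\Orth\cY}$ on one side and with the predual description $\Pcoh{\Limpl\cX\cY}=\Orth{\{\Tenspcs x{y'}\St x\in\Pcoh\cX,\ y'\in\Pcoh{\Orth\cY}\}}$ on the other) both norms equal $\sup_{x\in\Pcoh\cX,\,y'\in\Pcoh{\Orth\cY}}\Eval t{\Tenspcs x{y'}}$. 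Continuity is immediate because lubs are coordinatewise in $\Pcohc{\Limpl\cX\cY}$ and pointwise in $\Limpl{\Pcohc\cX}{\Pcohc\cY}$, and the two agree by the monotone convergence theorem over the countable web; Lemma~\ref{lemma:linear-inverse} then makes $\Inv{\Lfunn}$ linear and continuous as well.

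Next I would establish that $\Lfunn$ is an iso in $\MCONES$, which by Remark~\ref{rk:mcones_iso} means that $\eta$ is a measurable path of $\Pcohc{\Limpl\cX\cY}$ iff $\Lfunn\Comp\eta$ is one of $\Limpl{\Pcohc\cX}{\Pcohc\cY}$. The tests on the linear hom are the $\Mtlfun\gamma{y'}$ with $\gamma\in\Mcca{\Cpath X{\Pcohc\cX}}$ and $y'\in\Pcoh{\Orth\cY}$ (recalling that $\Mcms{\Pcohc\cY}$ consists of the functionals $\Lfun{y'}$), and
\[
  (\Mtlfun\gamma{y'})(s,\Lfun{\eta(r)})
  =\Eval{\Matappa{\eta(r)}{\gamma(s)}}{y'}
  =\sum_{a\in\Web\cX,\,b\in\Web\cY}\eta(r)_{a,b}\,\gamma(s)_a\,y'_b .
\]
For the forward direction each coordinate is measurable (the $\eta(r)_{a,b}$ by Lemma~\ref{lemma:pcohc-path-charact}, the $\gamma(s)_a$ because $\gamma$ is a path), so the countable sum is measurable by monotone convergence, boundedness being preserved by norm preservation. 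For the backward direction I would recover the entries of $\eta$ by feeding in the constant path at $\Base a$ (a path by Lemma~\ref{lemma:cst-path}) together with $y'=\lambda\Base b\in\Pcoh{\Orth\cY}$ for a suitable $\lambda>0$: this yields $\lambda\,\eta(r)_{a,b}$, so measurability of $\Lfunn\Comp\eta$ forces measurability of $r\mapsto\eta(r)_{a,b}$, and Lemma~\ref{lemma:pcohc-path-charact} concludes.

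For integral preservation, using that integrals are computed coordinatewise in $\Pcohc{\Limpl\cX\cY}$ and pointwise in the linear hom, I would check, for $\eta\in\Mcca{\Cpath X{\Pcohc{\Limpl\cX\cY}}}$, $\mu\in\Mcca{\Cmeas(X)}$, $x\in\Pcoh\cX$ and $b\in\Web\cY$,
\begin{align*}
  \Big(\Lfunn\Big(\textstyle\int\eta(r)\mu(dr)\Big)(x)\Big)_b
  &=\sum_a\Big(\textstyle\int\eta(r)_{a,b}\mu(dr)\Big)x_a
  =\int\Big(\sum_a\eta(r)_{a,b}\,x_a\Big)\mu(dr)\\
  &=\Big(\textstyle\int\Lfunn(\eta(r))(x)\,\mu(dr)\Big)_b ,
\end{align*}
the middle step again by monotone convergence over the countable web; by \Mssepr{} this gives $\Lfunn\big(\int\eta\,\mu\big)=\int\Lfunn\Comp\eta\,\mu$. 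Together with the forward measurability this makes $\Lfunn$ a morphism of $\ICONES$; for the inverse, linearity and continuity come from Lemma~\ref{lemma:linear-inverse}, path preservation from the backward case above, and integral preservation of $\Inv{\Lfunn}$ then follows formally by applying $\Lfunn$ to both sides and using its own integral preservation together with injectivity. Hence $\Lfunn$ is an iso in $\ICONES$.

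Naturality in $\cX$ and $\cY$ I expect to be a routine diagram chase: for pre/post-composition by $s\in\PCOH(\cX',\cX)$ and $t\in\PCOH(\cY,\cY')$, both composites in the naturality square are linear and, by faithfulness of $\Pcohcn$, are determined by their values on the $\Base a$, where they agree by the matrix formulas for composition. The hard part will be the backward measurability (the only genuinely non-formal step): the forward checks are direct monotone-convergence computations, but inverting requires reconstructing the individual matrix entries $\eta(r)_{a,b}$ from linear-hom tests, which is exactly what the basis vectors $\Base a$ and the dual functionals $\lambda\Base b$ accomplish, the latter available because the defining boundedness condition of a PCS guarantees $\lambda\Base b\in\Pcoh{\Orth\cY}$ for some $\lambda>0$.
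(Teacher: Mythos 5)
Your proposal is correct and follows essentially the same route as the paper's proof: the algebraic isomorphism comes from Theorem~\ref{th:lfun-pcoh-full-faithful}, forward measurability is the same countable-sum/monotone-convergence computation against tests $\Mtlfun\gamma{y'}$, backward measurability is recovered from the coordinates $\eta(r)(\Base a)_b$ (the paper tests directly against $\Lfun z$ for $z\in\Pcohp{\Tenspcs\cX{\Orth\cY}}$, while you route through Lemma~\ref{lemma:pcohc-path-charact}, which is equivalent), and integral preservation follows from the pointwise/coordinatewise definitions of integration. The only differences are cosmetic (your explicit norm computation via Lemma~\ref{lemma:PCS-norm-predual} and the coordinatewise integral check are details the paper leaves implicit).
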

\begin{proof}[Proof sketch]
  We know by Theorem~\ref{th:lfun-pcoh-full-faithful} that \(\Lfunn\)
  is an iso of cones.
  We need to prove that \(\Lfunn\) and \(\Inv\Lfunn\) are measurable
  and that \(\Lfunn\) preserve integrals (then \(\Inv\Lfunn\) also
  preserves integrals by injectivity of \(\Lfunn\)).

  Let \(X\in\ARCAT\) and
  \(\eta\in\Mcca{\Cpath X{\Pcohc{\Limpl \cX\cY}}}\), we show that
  \[
    \Lfunn\Comp\eta\in\Mcca{\Cpath X{\Limpl{\Pcohc \cX}{\Pcohc \cY}}}
  \]
  so let \(Y\in\ARCAT\), \(\beta\in\Mcca{\Cpath Y{\Pcohc \cX}}\) and
  \(m\in\Mcms{\Pcohc \cY}_Y\) meaning that \(m=\Lfun{y'}\) for some
  \(y'\in\Pcoh{\Orth \cY}\), we have, for all \((s,r)\in Y\times X\),
  \begin{align*}
    (\Mtfun\beta m)(s,\Lfun{\eta(r)})
    &=\Eval{\Lfun{\eta(r)}(\beta(s))}{y'}\\
    &=\Eval{\Matappa{\eta(r)}{\beta(s)}}{y'}\\
    &=\sum_{(a,b)\in\Web \cX\times\Web \cY}\eta(r)_{a,b}\beta(s)_ay'_b
  \end{align*}
  and the function
  \(\Absm{(s,r)\in Y\times X}{(\Mtfun\beta m)(s,\Lfun{\eta(r)})}\) is
  measurable as a countable sum of measurable functions.
  Conversely let now
  \(\eta\in\Mcca{\Cpath X{\Limpl{\Pcohc \cX}{\Pcohc \cY}}}\), we must
  prove that %
  \(\Inv\Lfunn\Comp\eta\in\Mcca{\Cpath X{\Pcohc{\Limpl \cX\cY}}}\) so
  let %
  \(Y\in\ARCAT\) and \(p\in\Mcms{\Pcohc{\Limpl \cX\cY}}_Y\), that is %
  \(p=\Lfun z\) for some
  \(z\in\Pcoh{\Orth{\Limplp \cX\cY}}=\Pcohp{\Tenspcs \cX{\Orth \cY}}\), we
  have
  \begin{align*}
    \Absm{(s,r)\in Y\times X}{p(s,\Inv{\Lfunn}(\eta(r)))}
    &=\Absm{(s,r)\in Y\times X}{\Eval{z}{\Inv{\Lfunn}(\eta(r))}}\\
    &=\Absm{(s,r)\in Y\times X}
      {\sum_{(a,b)\in\Web \cX\times\Web \cY}z_{a,b}\eta(r)(\Base a)_b}
  \end{align*}
  which is measurable as a countable sum of measurable functions since
  we know that for all \(a,b\) the function %
  \(\Absm{r\in X}{\eta(\Base a)_b}\) is measurable by our
  assumption that \(\eta\) is a measurable path.

  The fact that \(\Lfunn\) preserves integrals results from the
  pointwise definition of integration in
  \(\Limpl{\Pcohc \cX}{\Pcohc \cY}\).
\end{proof}

\begin{theorem}
  \label{th:iso-pcohc-tensor}
  There is a natural isomorphism %
  \(\phi_{\cX,\cY}
  \in\ICONES(\Tens{\Pcohc \cX}{\Pcohc \cY},\Pcohc{\Tenspcs \cX\cY})\).
\end{theorem}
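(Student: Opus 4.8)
The plan is to exhibit $\phi_{\cX,\cY}$ explicitly on elementary tensors and then verify that it is a bijective morphism whose inverse is again a morphism. First I would construct the map from the universal property of $\ITens$. Consider $u:\Mcca{\Pcohc\cX}\times\Mcca{\Pcohc\cY}\to\Mcca{\Pcohc{\Tenspcs\cX\cY}}$ given by $u(x,y)=\Tenspcs xy$, that is $u(x,y)_{(a,b)}=x_ay_b$. This map is separately linear and $\omega$-continuous (pointwise operations plus the monotone convergence theorem), measurable (by Lemma~\ref{lemma:pcohc-path-charact}, since for paths $\beta\in\Mcca{\Cpath X{\Pcohc\cX}}$, $\gamma\in\Mcca{\Cpath X{\Pcohc\cY}}$ each coordinate $r\mapsto\beta(r)_a\gamma(r)_b$ is a product of measurable functions), and integral preserving in each argument (again by Lemma~\ref{lemma:pcohc-path-charact}, testing against the $\Lfun{w'}$ for $w'\in\Pcohp{\Orth{(\Tenspcs\cX\cY)}}$ and using monotone convergence together with $\Mssepr$). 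Moreover $\Norm{\Tenspcs xy}\leq\Norm x\Norm y$ because $\Tenspcs xy\in\Pcohp{\Tenspcs\cX\cY}$ whenever $x\in\Pcoh\cX$ and $y\in\Pcoh\cY$. Hence $u$ is an element of $\Cuball{\Mcca{\Limplm{\Pcohc\cX,\Pcohc\cY}{\Pcohc{\Tenspcs\cX\cY}}}}$ and, through $\Tensor_{\Pcohc\cX,\Pcohc\cY}$, it determines $\phi_{\cX,\cY}\in\ICONES(\Tens{\Pcohc\cX}{\Pcohc\cY},\Pcohc{\Tenspcs\cX\cY})$ with $\phi_{\cX,\cY}(\Tens xy)=\Tenspcs xy$.

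The key tool for everything that follows is an explicit PCS description of the dual of the tensor cone. Combining the tensor--hom isomorphism (Theorem~\ref{th:icones-tens-limpl-isom} with $D=\Sone$), the identification $\Cdual{\Pcohc\cY}\cong\Pcohc{\Orth\cY}$ (Theorem~\ref{th:limpl-pcoh-icones-isom} applied to $\cY$ and $\Sbot$, using $\Pcohc\Sbot\cong\Sone$), and Theorem~\ref{th:limpl-pcoh-icones-isom} again, I obtain isomorphisms in $\ICONES$: $\Cdual{(\Tens{\Pcohc\cX}{\Pcohc\cY})}\cong\Limpl{\Pcohc\cX}{\Cdual{\Pcohc\cY}}\cong\Limpl{\Pcohc\cX}{\Pcohc{\Orth\cY}}\cong\Pcohc{\Limplp\cX{\Orth\cY}}=\Pcohc{\Orth{(\Tenspcs\cX\cY)}}$. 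Tracing $\Lfunn$ through this chain, a functional $w'\in\Pcohp{\Orth{(\Tenspcs\cX\cY)}}$ pairs with the tensor cone so that $\Eval{\Tens xy}{w'}=\Eval{\Tenspcs xy}{w'}=\Eval{\phi_{\cX,\cY}(\Tens xy)}{w'}$; thus $\Lfun{w'}\Comp\phi_{\cX,\cY}$ and the test on $\Tens{\Pcohc\cX}{\Pcohc\cY}$ induced by $w'$ agree on all elementary tensors and hence coincide by Proposition~\ref{prop:fun-ttree-charact}. Via Proposition~\ref{th:norm-dual} this gives the concrete norm formula $\Norm z=\sup_{w'}\Eval{\phi_{\cX,\cY}(z)}{w'}$, with $w'$ ranging over $\Pcohp{\Orth{(\Tenspcs\cX\cY)}}$.

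From here injectivity is immediate: if $\phi_{\cX,\cY}(z)=0$ then every test of the form above vanishes on $z$, so $z=0$ by $\Mssepr$. For surjectivity and the inverse I would define $\psi_{\cX,\cY}$ on the ``basis'' by $\psi_{\cX,\cY}(\Base{(a,b)})=\Tens{\Base a}{\Base b}$, that is $\psi_{\cX,\cY}(w)=\sum_{(a,b)}w_{(a,b)}\Tens{\Base a}{\Base b}$. For $w\in\Pcohp{\Tenspcs\cX\cY}$ the finite partial sums are increasing and, by the norm formula above, satisfy $\Eval{\cdot}{w'}\leq\Eval w{w'}\leq 1$, so they are norm-bounded and converge in $\Cuball{\Tens{\Pcohc\cX}{\Pcohc\cY}}$ by $\Cnormcr$. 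Linearity and continuity are clear, and measurability and integrability follow from the monotone convergence theorem together with Lemma~\ref{lemma:pcohc-path-charact} (paths in $\Pcohc{\Tenspcs\cX\cY}$ are controlled coordinatewise) and the countability of $\Web\cX\times\Web\cY$. Then $\psi_{\cX,\cY}\Comp\phi_{\cX,\cY}=\Id$ holds on elementary tensors, hence everywhere by Proposition~\ref{prop:fun-ttree-charact}, while $\phi_{\cX,\cY}\Comp\psi_{\cX,\cY}=\Id$ holds on each $\Base{(a,b)}$ and hence everywhere, because a morphism out of $\Pcohc{\Tenspcs\cX\cY}$ is determined by its values on the $\Base{(a,b)}$ (each $w\in\Pcohp{\Tenspcs\cX\cY}$ is the $\omega$-sup of its finite coordinate truncations, as in the proof of Theorem~\ref{th:lfun-pcoh-full-faithful}). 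Naturality in $\cX$ and $\cY$ is checked on elementary tensors and extended by Proposition~\ref{prop:fun-ttree-charact}.

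The main obstacle I anticipate is showing that $\psi_{\cX,\cY}$, the inverse, is genuinely a morphism of $\ICONES$ — measurable and integral preserving — since it is presented as a countable sum valued in the abstractly defined cone $\Tens{\Pcohc\cX}{\Pcohc\cY}$, whose measurability and integrability conditions are not available by inspection. Everything hinges on replacing those opaque conditions by the concrete pairing with $\Pcohp{\Orth{(\Tenspcs\cX\cY)}}$ supplied by Theorems~\ref{th:icones-tens-limpl-isom} and~\ref{th:limpl-pcoh-icones-isom}; once the dual of the tensor is identified with a PCS cone, the usual monotone convergence argument goes through and controls both the convergence of the series and the verification of the two structural conditions.
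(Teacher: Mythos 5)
Your proposal is correct and follows essentially the same route as the paper: \(\phi_{\cX,\cY}\) is obtained from the bilinear map \((x,y)\mapsto\Tenspcs xy\) via the universal property of \(\ITens\), and the inverse is built as a norm-controlled countable sum \(\sum_{(a,b)}w_{(a,b)}\Tens{\Base a}{\Base b}\), with the norm bound coming from identifying functionals on \(\Tens{\Pcohc \cX}{\Pcohc \cY}\) with elements of \(\Pcohp{\Orth{(\Tenspcs \cX\cY)}}\) via Theorems~\ref{th:icones-tens-limpl-isom} and~\ref{th:limpl-pcoh-icones-isom}. The only difference is organizational: you package that duality as a single global norm formula (which also hands you injectivity for free), whereas the paper runs the same uncurrying argument pointwise, choosing a near-optimal functional for each \(\epsilon\) and each finitely supported \(z\); the underlying content is identical.
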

\begin{proof}[Proof sketch]
  The map %
  \(\Absm{(x,y)\in\Mcca{\Pcohc \cX}\times\Mcca{\Pcohc \cY}}{\Tenspcs xy}\)
  is easily seen to be bilinear, \(\omega\)-continuous, measurable and
  separately integrable so that we have an associated %
  \(\phi_{\cX,\cY} %
  \in\ICONES(\Tens{\Pcohc \cX}{\Pcohc \cY},\Pcohc{\Tenspcs \cX\cY})\)
  characterized by \(\phi_{\cX,\cY}(\Tens xy)=\Tenspcs xy\).
  We define now
  \(\psi_{\cX,\cY}:\Pcohc{\Tenspcs \cX\cY}\to\Tens{\Pcohc \cX}{\Pcohc
    \cY}\).
  First, given \((a,b)\in\Web{\Tenspcs \cX\cY}\) we set
  \(\psi_{\cX,\cY}(\Base{a,b})=\Tens{\Base a}{\Base b}\).
  Next given \(z\in\Mcca{\Pcohc{\Tenspcs \cX\cY}}\) such that
  \(\Supp z=\Eset{(a,b)\in\Web{\Tenspcs \cX\cY}\St z_{(a,b)}\not=0}\) is
  finite we set
  \(\psi_{\cX,\cY}(z) =\sum_{(a,b)\in\Web{\Tenspcs
      \cX\cY}}z_{(a,b)}\Tens{\Base a}{\Base b}\) which is a well defined
  finite sum in the cone \(\Mcca{\Tens{\Pcohc \cX}{\Pcohc \cY}}\).
  We contend that
  \begin{align*}
    \Norm{\psi_{\cX,\cY}(z)}_{\Tens{\Pcohc \cX}{\Pcohc \cY}}
    \leq\Norm{z}_{\Pcohc{\Tenspcs \cX\cY}}
  \end{align*}
  so let \(\epsilon>0\) and assume without loss of generality that
  \(\Norm z\leq 1\).
  By Proposition~\ref{th:norm-dual} there is
  \(g\in\Cuball{\Mcca{\Limpl{\Tens{\Pcohc \cX}{\Pcohc \cY}}{\Sbot}}}\)
  such that
  \(\Norm{\psi_{\cX,\cY}(z)}_{\Tens{\Pcohc \cX}{\Pcohc \cY}}\leq
  g(\psi_{\cX,\cY}(z))+\epsilon\).
  Let %
  \(h\in\Cuball{\Mcca{\Limplp{\Pcohc \cX}{\Limplp{\Pcohc
          \cY}{\Sbot}}}}\) %
  be the the bilinear morphism associated to \(g\) by the iso of
  Theorem~\ref{th:icones-tens-limpl-isom}.
  We have
  \begin{align*}
    g(\psi_{\cX,\cY}(z))
    &=\sum_{(a,b)\in\Web{\Tens \cX\cY}}z_{(a,b)}g(\Tens{\Base a}{\Base b})\\
    &=\sum_{(a,b)\in\Web{\Tens \cX\cY}}z_{(a,b)}h(\Base a,\Base b)
    \leq 1
  \end{align*}
  because
  \((h(\Base a,\Base b))_{(a,b)\in\Web{\Tens \cX\cY}}\in\Pcoh{\Orth{(\Tenspcs
      \cX\cY)}}\) by Theorem~\ref{th:limpl-pcoh-icones-isom} and by our
  assumption that \(\Norm z\leq 1\).
  So we have
  \(\Norm{\psi_{\cX,\cY}(z)}_{\Tens{\Pcohc \cX}{\Pcohc \cY}}\leq 1+\epsilon\)
  and since this holds for all \(\epsilon>0\) our contention is
  proven. Now let \(z\) be any element of
  \(\Mcca{\Pcohc{\Tenspcs \cX\cY}}\) and assume again that
  \(\Norm z\leq 1\). Let \((I_n)_{n\in\Nat}\) be an increasing sequence
  of finite sets such that \(\Union I_n=\Web \cX\times\Web \cY\) and let
  \(z(n)\in\Mcca{\Pcohc{\Tenspcs \cX\cY}}\) be defined by
  \begin{align*}
    z(n)_{(a,b)}=
    \begin{cases}
      z_{(a,b)}&\text{if }(a,b)\in I_n\\
      0&\text{otherwise.}
    \end{cases}
  \end{align*}
  so that the sequence \((z(n))_{n\in\Nat}\) is increasing and has \(z\)
  as lub in \(\Mcca{\Pcohc{\Tenspcs \cX\cY}}\).
  The sequence \((\psi_{\cX,\cY}(z(n)))_{n\in\Nat}\) is increasing and all
  its elements have norm \(\leq 1\) since each \(z(n)\) has finite
  support and norm \(\leq 1\) and hence it has a lub in
  \(\Mcca{\Tens{\Pcohc \cX}{\Pcohc \cY}}\).
  It is easy to check that this lub does not depend on the choice of
  the \(I_n\)'s, so we can set
  \(\psi_{\cX,\cY}(z)=\sup_{n\in\Nat}\psi_{\cX,\cY}(z(n))\) so that actually
  \begin{align*}
    \psi_{\cX,\cY}(z)
    =\sum_{(a,b)\in\Web{\cX}\times\Web \cY}z_{a,b}\Tens{\Base a}{\Base b}\,.
  \end{align*}
  The proof that
  \(\psi_{\cX,\cY}\in\ICONES(\Pcohc{(\Tenspcs \cX\cY)},\Tens{\Pcohc \cX}{\Pcohc
    \cY})\) follows the standard pattern and it is obvious that it is
  the inverse of \(\phi_{\cX,\cY}\).
\end{proof}

\subsection{More constructions}
We outline very briefly the additive and exponential constructions on
PCSs.
The categorical product \(\cX=\Bwith_{i\in I}\cX_i\) of a family
\((\cX_i)_{i\in I}\) of PCSs can be described by %
\(\Web\cX=\Union_{i\in I}\{i\}\times\Web{\cX_i}\) and
\(x\in\Realpto{\Web\cX}\) belongs to \(\Pcoh\cX\) if
\(\forall i\in I\ \Matappa{\Proj i}x\in\Pcoh{\cX_i}\) where %
\(\Proj i\in\Realpto{\Web{\cX}\times\Web{\cX_i}}\) is given by %
\((\Proj i)_{(j,a),b}=\Kronecker ij\Kronecker ab\), so that
\(\Proj i\in\Pcoh(\cX,\cX_i)\) for each \(i\in I\).
Then it is easy to check that \((\cX,(\Proj i)_{i\in I})\) is the
categorical product of the family \((\cX_i)_{i\in I}\) in \(\PCOH\) and
that there is a natural isomorphism from
\(\Bwith_{i\in I}{\Pcohc{\cX_i}}\) to \(\Pcohc{\cX}\).

For the coproduct \(\cY=\Bplus_{i\in I}\cX_i\) we can take %
\(\cY=\Orth{(\Bwith_{i\in I}\Orth{\cX_i})}\) so that %
\(\Web\cY=\Web\cX\), and %
\(\Pcoh\cY=\{x\in\Pcoh\cX\St\sum_{i\in I}\Norm{\Matappa{\Proj i}x}\leq
1\}\), equipped with injections
\(\Inj i=\Orth{\Proj i}\in\PCOH(\cX_i,\cY)\).
So for instance the coproduct \(\Plus\Sone\Sone\) has \(\{0,1\}\) as
web, and \(\Pcohp{\Plus\Sone\Sone}=\{u\in\Realp^2\St u_0+u_1\leq 1\}\).

For the exponential, we introduce the following notations.
Given \(u\in\Realpto I\) we define \(\Prom u\in\Realpto{\Mfin I}\) by
\(\Prom u_m=u^m=\prod_{i\in I}u_i^{m(i)}\).
Given a PCS \(\cX\) we define a PCS \(\Excl\cX\) by %
\(\Web{\Excl\cX}=\Mfin{\Web{\cX}}\) and %
\(\Pcohp{\Excl\cX}=\Biorth{\{\Prom x\St x\in\Pcoh\cX \}}\) so that %
\(t\in\PCOH(\Excl\cX,\cY)\) means exactly that %
\(t\in\Realpto{\Mfin{\Web\cX}\times\Web\cY}\) and %
\begin{align*}
  \Fun t(x)=\Big(\sum_{m\in\Mfin{\Web\cX}}
  t_{m,b}x^m\Big)_{b\in\Web\cY}\in\Pcoh\cY
\end{align*}
from which it is not hard to derive that the integrable cones %
\(\Pcohc{\Limpl{\Excl\cX}{\cY}}\) and %
\(\Simpla{\Pcohc\cX}{\Pcohc\cY}\) are isomorphic.

\begin{remark}
  \label{rk:pushf-EM}
  There is a morphism %
  \(\phi_\cX\in\ICONES(\Exclana{\Pcohc \cX},\Pcohc{\Excl \cX})\) for
  all PCSs \(\cX\) such that \(\phi_\cX(\Promana x)=\Prom x\) for all
  \(x\in\Pcoh \cX\). This morphism is similar to the one of
  Theorem~\ref{th:iso-pcohc-tensor}, we conjecture that it is an iso.
\end{remark}

\subsection{Example: the Cantor Space as an equalizer of \(\PCOH\) morphisms}
Since \(\ICONES\) is a complete category, the equalizer of two
parallel \(\PCOH\) morphisms is an integrable cone.
As we shall see now, this cone needs not be a PCS which means that,
contrarily to the larger category \(\ICONES\), the category \(\PCOH\)
is not complete.
This example also shows that interesting ``non discrete'' cones arise
as limits of diagrams in \(\PCOH\).

Consider the PCS \(\cS\) such that \(\Web\cS=\Eset{0,1}^{<\omega}\)
is the set of finite sequences of \(0\)'s and \(1\)'s and where
\(x\in\Realpto{\Web\cS}\) belongs to \(\Pcoh\cS\) if, for each
\(u\subseteq\Web \cS\) which is an antichain (meaning that if
\(s,s'\in u\) then \(s\leq s'\Implies s=s'\) where \(\leq\) is the
prefix order), one has \(\sum_{s\in u}x_s\leq 1\).
Since \(\Pcoh\cS=\Orth A\) where \(A\) is the set of all characteristic
functions of antichains, \(\cS\) is a PCS (the second and third
conditions of Definition~\ref{def:pcs} result from the observation
that each singleton is an antichain).
Notice that
\begin{align}
  \label{eq:norm-eq-cantor}
  \Norm x_\cS=\sup_{x'\in A}\Eval{x}{x'}
\end{align}
by Lemma~\ref{lemma:PCS-norm-predual}.

The PCS \(\cS\) is the ``least solution'' (in the sense explained
in~\cite{DanosEhrhard08,EhrhardTasson19}) of the equation
\(\cS=\With\Sone{(\Plus \cS\cS)}\).

There is a morphism \(\theta\in\PCOH(\cS,\cS)\) which is given by
\begin{align*}
  \theta_{s,t}=
  \begin{cases}
    1 & \text{if }s=ta\text{ for some }a\in\Eset{0,1}\\
    0 & \text{otherwise}
  \end{cases}
\end{align*}
where we use simple juxtaposition for concatenation.
Indeed given an antichain \(u\) and \(x\in\Pcoh\cS\) we have
\begin{align*}
  \sum_{t\in u}(\Matappa\theta x)_t=\sum_{s\in v}x_s\leq 1
\end{align*}
where \(v=\Eset{sa\St s\in u\text{ and }a\in\Eset{0,1}}\) is an
antichain since \(u\) is an antichain. Let \(C\) be the integrable
cone which is the equalizer of \(\theta\) and \(\Id_S\), considered as
morphisms of \(\ICONES\) through the full and faithful functor
\(\Pcohcn\).

\begin{theorem}
  The integrable cone \(C\) is isomorphic to \(\Cmeas(\Cantor)\) where
  \(\Cantor\) is the Cantor Space equipped with the Borel sets of its
  usual topology (the product topology of \(\Eset{0,1}^\omega\) where
  \(\Eset{0,1}\) has the discrete topology).
\end{theorem}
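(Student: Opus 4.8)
The plan is to make the underlying set of the equalizer $C$ explicit and to recognize it as the set of nonnegative, finitely additive tree-indexed data on the infinite binary tree, which by a standard extension argument corresponds bijectively to finite Borel measures on $\Cantor$. By the construction of equalizers in the proof of Theorem~\ref{th:mcones-complete}, the cone $\Mcca C$ is $\{x\in\Mcca{\Pcohc\cS}\St\Matappa\theta x=x\}$, with norm, algebraic operations, measurability tests and integrals inherited from $\Pcohc\cS$. Since $(\Matappa\theta x)_t=x_{t0}+x_{t1}$, the equalizer condition reads $x_t=x_{t0}+x_{t1}$ for every $t\in\Web\cS$; together with nonnegativity and boundedness this is exactly the consistency condition defining a finitely additive set function on the cylinders $[t]=\{w\in\Cantor\St t\text{ is a prefix of }w\}$, via $x_t=\mu([t])$.

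First I would define $\Psi:\Mcca{\Cmeas(\Cantor)}\to\Mcca C$ by $\Psi(\mu)=(\mu([t]))_{t\in\Web\cS}$. Since $[t]=[t0]\sqcup[t1]$, this lands in $\Mcca C$; and because any antichain $u\subseteq\Web\cS$ yields pairwise disjoint cylinders contained in $\Cantor$, we get $\sum_{s\in u}\mu([s])\le\mu(\Cantor)$, with equality for $u=\{\epsilon\}$. Hence by Lemma~\ref{lemma:PCS-norm-predual} and \Eqref{eq:norm-eq-cantor}, $\Norm{\Psi(\mu)}_C=\mu(\Cantor)=\Norm\mu$, so $\Psi$ is norm-preserving, and it is linear and $\omega$-continuous as all structure is computed pointwise on cylinders. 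It is measurable: for a measurable path $\gamma$ of $\Cmeas(\Cantor)$, that is a bounded kernel into $\Cantor$, the maps $r\mapsto\Psi(\gamma(r))_t=\gamma(r)([t])$ are measurable, so by Lemma~\ref{lemma:pcohc-path-charact} $\Psi\Comp\gamma$ is a measurable path of $C$. Finally $\Psi$ preserves integrals because both integrals are computed coordinate- and set-wise and agree on each cylinder: $\Psi(\int\gamma(r)\nu(dr))_t=\int\gamma(r)([t])\nu(dr)=\int\Psi(\gamma(r))_t\nu(dr)$. Thus $\Psi\in\ICONES(\Cmeas(\Cantor),C)$.

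The main work is to prove $\Psi$ is an iso. Injectivity is immediate, since a finite Borel measure on $\Cantor$ is determined by its values on the cylinders, which form a $\pi$-system generating $\Sigalg\Cantor$. Surjectivity is the measure-theoretic heart: given $x\in\Mcca C$, the assignment $[t]\mapsto x_t$ is a nonnegative finitely additive set function on the semiring of cylinders, and the crucial point is that compactness of $\Cantor$ forces any countable cylinder cover of a cylinder to admit a finite subcover, so finite additivity already entails $\sigma$-additivity on the semiring; the Carath\'eodory--Hahn extension theorem then yields a unique finite Borel measure $\mu$ with $\mu([t])=x_t$, i.e.\ $\Psi(\mu)=x$. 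With $\Psi$ a linear continuous bijection, Lemma~\ref{lemma:linear-inverse} gives that $\Inv\Psi$ is linear and continuous. To see $\Inv\Psi$ is measurable I would run a Dynkin/monotone-class argument: for a path $\gamma$ of $C$, the class of Borel $U$ for which $r\mapsto(\Inv\Psi(\gamma(r)))(U)$ is measurable is a $\lambda$-system, closed under the relevant monotone limits thanks to boundedness and the monotone convergence theorem, and it contains the cylinders, hence equals $\Sigalg\Cantor$; this shows $\Inv\Psi\Comp\gamma$ is a bounded kernel, that is a measurable path of $\Cmeas(\Cantor)$. Integral preservation of $\Inv\Psi$ then follows formally from that of $\Psi$ together with its injectivity. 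Hence $\Psi$ is an isomorphism in $\ICONES$ and $C\Isom\Cmeas(\Cantor)$.

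The step I expect to be the main obstacle is surjectivity, namely the passage from the purely combinatorial, finitely additive data $(x_t)$ on the tree to a genuine $\sigma$-additive Borel measure on $\Cantor$: this is precisely where the topology of the Cantor space (compactness of the cylinders) must be invoked, and it is the conceptual bridge linking the discrete probabilistic coherence space $\cS$ to the continuous object $\Cmeas(\Cantor)$. The secondary technical point is the monotone-class argument for measurability of $\Inv\Psi$, which is needed exactly because the measurability tests of $\Pcohc\cS$ probe only cylinder values, whereas measurable paths into $\Cmeas(\Cantor)$ are kernels tested against all Borel sets.
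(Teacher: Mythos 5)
Your proof is correct and follows essentially the same route as the paper's: both identify the equalizer with finitely additive data on the cylinders, extend it to a finite Borel measure using the topology of the Cantor space, and transfer the cone, measurability and integration structure along the resulting bijection. The only differences are presentational — the paper constructs the measure by summing over the antichain of minimal cylinders covering an open set and asserts the extension, whereas you invoke compactness of cylinders plus the Carath\'eodory--Hahn theorem for \(\sigma\)-additivity on the semiring and obtain continuity of the inverse from Lemma~\ref{lemma:linear-inverse} rather than checking both directions by hand — and both variants are sound.
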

\begin{proof}
  We have
  \begin{align*}
    \Mcca{C}=\{x\in\Pcohc\cS\St \Matappa\theta x=x\}\,,
  \end{align*}
  that is, an element of \(\Mcca{C}\) is an \(x\in\Realpto{\Web\cS}\) %
  such that \(x\in\Pcoh\cS\) and
  \begin{align*}
    \forall s\in\Web\cS\quad x_s=x_{s0}+x_{s1}\,.
  \end{align*}
  Given \(s\in\Web\cS\) we set
  \(\Cantbase s=\{\alpha\in\Cantor\St s<\alpha\}\subseteq\Cantor\),
  which is a clopen of \(\Cantor\).
  Let \(U\) be an open subset of \(\Cantor\), the set
  \(\Cantopenbase U\) of all \(s\in\Web\cS\) which are minimal (for
  the prefix order) such that \(\Cantbase s\subseteq U\) is an
  antichain, and we have
  \begin{align} %
    \label{eq:cantor-open-clopens}
    U=\Union\{\Cantbase s\St s\in\Cantopenbase U\}
  \end{align}
  by definition of the topology of \(\Cantor\).
  Given \(x\in\Mcca{\Pcohc\cS}\) we define a function
  \(\Cantormeas(x):\Opens\Cantor\to\Realp\) on the open sets of
  \(\Cantor\) by
  \begin{align*}
    \Cantormeas(x)(U)=\sum_{s\in\Cantopenbase U}x_s
  \end{align*}
  and we have \(\Cantormeas(x)(U)\leq\Norm x\) by our assumption that
  \(x\in\Mcca{\Pcohc \cS}\).
  This function \(\Cantormeas(x)\) is additive (that is
  \(\Cantormeas(x)(\Union_{i\in I}U_i)=\sum_{i\in
    I}\Cantormeas(x)(U_i)\) for each countable family
  \((U_i)_{i\in I}\) of pairwise disjoint open subsets of
  \(\Cantor\)).
  And so \(\Cantormeas(x)\) extends to a uniquely defined finite
  measure on the Borel sets of the Cantor Space, that is to an element
  of \(\Mcca{\Cmeas(\Cantor)}\).
  Notice that \(\Cantormeas:\Mcca{\Pcohc\cS}\to\Cmeas(\Cantor)\) is
  linear and satisfies
  \(\Norm{\Cantormeas(x)}=x_{\Seqempty}\leq\Norm x\) where
  \(\Seqempty\in\Web{\cS}\) is the empty sequence.

  Let \(\mu\in\Mcca{\Cmeas(\Cantor)}\), we define
  \(\Cantorvect(\mu)\in\Realpto{\Web\cS}\) by
  \(\Cantorvect(\mu)_s=\mu(\Cantbase s)\). Given an antichain
  \(u\subseteq\Web\cS\) notice that the clopens
  \((\Cantbase s)_{s\in u}\) are pairwise disjoint and that
  \(U=\Union_{s\in u}\Cantbase s\) is open and hence measurable, so,
  since \(\mu\) is a measure, we have
  \begin{align*}
    \sum_{s\in u}\Cantorvect(\mu)_s
    =\sum_{s\in u}\mu(\Cantbase s)
    =\mu\big(\Union_{s\in u}\Cantbase s\big)
    =\mu(U)\leq\mu(\Cantor)\,.
  \end{align*}
  Since this holds for each antichain \(u\) we have shown that
  \(\Cantorvect(\mu)\in\Pcoh\cS\).
  Notice that for each \(s\in\Web S\) we have
  \(\Cantbase s=\Cantbase{s0}\cup\Cantbase{s1}\) and that this union
  is disjoint, so that
  \(\mu(\Cantbase s)=\mu(\Cantbase s0)+\mu(\Cantbase s1)\) since
  \(\mu\) is a measure, that is
  \(\Cantorvect(\mu)\in\Mcca{C}\).
  Notice also that the function \(\Cantorvect\) is linear and
  satisfies \(\Cantorvect(\mu)\leq\Norm\mu\) since for each antichain
  \(u\subseteq\Web\cS\) one has
  \(\sum_{s\in u}\Cantorvect(\mu)_s=\sum_{s\in u}\mu(\Cantbase
  s)=\mu(\Union_{s\in u}\Cantbase s)\leq\mu(\Cantor)=\Norm\mu\).

  We prove that the functions \(\Cantormeas\) and \(\Cantorvect\) are
  inverse of each other. Let first \(x\in\Mcca{C}\), we have, for all
  \(s\in\Web\cS\),
  \begin{align*}
    \Cantorvect(\Cantormeas(x))_s
    =\Cantormeas(x)(\Cantbase s)
    =x_s
  \end{align*}
  since \(\Cantopenbase{\Cantbase s}=\Eset s\). Let now
  \(\mu\in\Cmeas(\Cantor)\) and let \(U\in\Opens\Cantor\) we have
  \begin{align*}
    \Cantormeas(\Cantorvect(\mu))(U)
    =\sum_{s\in\Cantopenbase U}\Cantorvect(\mu)_s
    =\sum_{s\in\Cantopenbase U}\mu(\Cantbase s)
    =\mu(U)
  \end{align*}
  by Formula~\Eqref{eq:cantor-open-clopens}. It follows that
  \(\Cantormeas(\Cantorvect(\mu))=\mu\).

  It follows that \(\Cantormeas\) and \(\Cantorvect\) define an order
  isomorphism between \(\Mcca{\Cmeas(\Cantor)}\) and \(\Mcca C\) and
  therefore are \(\omega\)-continuous (this uses also the fact that
  \(\Norm{\Cantormeas(x)}=\Norm x\) since
  \(\Norm x=\Norm{\Cantorvect(\Cantormeas(x))}\leq\Norm{\Cantormeas(x)}\)
  and similarly \(\Norm{\Cantorvect(\mu)}=\Norm\mu\)).
  
  The fact that the map
  \(\Cantormeas:\Mcca{\Pcohc\cS}\to\Mcca{\Cmeas(\Cantor)}\) is linear
  is measurable and integrable results as usual from the monotone
  convergence theorem.
  So we have \(\Cantormeas\in\ICONES(\Pcohc\cS,\Cmeas(\Cantor))\) and
  hence by restriction
  \(\Cantormeas\in\ICONES(\Mcca C,\Cmeas(\Cantor))\) since
  \(\Norm{\Cantormeas(x)}=\Cantormeas(x)(\Cantor)\leq\Norm x\) for all
  \(x\in\Mcca C\).
  Again, checking that \(\Cantorvect\) is
  measurable and integrable is routine; as an example let us prove the
  last property so let \(X\in\ARCAT\) and let
  \(\kappa\in\Mcca{\Cpath X{\Cmeas(\Cantor)}}\).
  Let \(m\in\Mcms C_\Measterm\), that is \(m=\Lfun{x'}\) for some
  \(x'\in\Pcoh{\Orth\cS}\).
  We have
  \begin{align*}
    m\Big(\int^C\Cantorvect(\kappa(r))\mu(dr)\Big)
    &=\sum_{s\in\Web\cS}x'_s
      \Big(\int^{\Pcohc S}\Cantorvect(\kappa(r))\mu(dr)\Big)_s\\
    &=\sum_{s\in\Web\cS}x'_s\int\Cantorvect(\kappa(r))_s\mu(dr)\\
    &=\int\big(\sum_{s\in\Web\cS}x'_s\kappa(r)(\Cantbase s)\big)\mu(dr)\\
    &=\int m(\Cantorvect(\kappa(r)))\mu(dr)\,.
  \end{align*}
  By Formula~\Eqref{eq:norm-eq-cantor} we have
  \(\Norm{\Cantorvect}\leq 1\) and hence
  \(\Cantorvect\in\ICONES(\Cmeas(\Cantor),C)\).
\end{proof}

\section*{Conclusion}
Elaborating on earlier work by the first author (together with
Michele~Pagani and Christine~Tasson) on a denotational semantics based
on measurable cones and by the second author on a notion of
convex QBS where integration is the fundamental algebraic
operation~\cite{Geoffroy22},
we have developed a theory of integration for measurable cones,
introducing the category of \emph{integrable cones} and of linear
morphisms preserving integrals.
We have shown that this category is a model of Intuitionistic \(\LL\)
featuring two exponential comonads; for defining the tensor
product and the exponentials we have used the special adjoint functor
theorem which avoids providing explicit combinatorial constructions of
these objects.

The construction is parameterized by a small full subcategory
\(\ARCAT\) of the category of measurable spaces and measurable
functions.
The model obtained in that way has many pleasant properties.
\begin{itemize}
\item It contains the category of probabilistic coherence
  spaces as a full subcategory.
\item It contains the category whose objects are those of \(\ARCAT\)
  and whose morphisms are the substochastic kernels as a full
  subcategory.
\item For both exponentials, the associated Eilenberg Moore category
  contains \(\ARCAT\) as a full subcategory, if we assume that all the
  objects of \(\ARCAT\) are standard Borel spaces which btw.~is a very
  natural and harmless requirement.
\end{itemize}
The two latter properties strongly rely on the fact that the morphisms
of the underlying linear category preserve integrals.
The last one means that \(\ARCAT\) can be considered as a category of
basic data-types (the objects of \(\ARCAT\)) and basic operations on
them (the morphisms of \(\ARCAT\)).

In future work we will explain how this model can be used for
interpreting call-by-value or even call-by-push-value probabilistic
functional programming languages with continuous data-types
(interpreted as the aforementioned coalgebras) as well as recursive
types.

\section*{Acknowledgments}
This work has been partly funded by the ANR PRC project Probabilistic
Programming Semantics (PPS), ANR-19-CE48-0014.

We would like to thank warmly the reviewers for their extremely
careful reading of the paper and many useful comments and
suggestions. The paper owes in particular very much to one of the two
reviewers who suggested many mathematical improvements and
simplifications, as well as the counter-example of
Remark~\ref{rk:cone-non-separe}. We could implement most of them and
it is clear that if the article has reached a reasonable level of
readability, this is mainly thanks to their crucial contributions.


\bibliographystyle{alphaurl}
\bibliography{newbiblio.bib}

\end{document}